\newtheorem{theorem}{Theorem}[chapter]
\newtheorem{proposition}{Proposition}[chapter]
\newtheorem{lemma}{Lemma}[chapter]
\newtheorem{corollary}{Corollary}[chapter]
\newtheorem{definition}{Definition}[chapter]
\newtheorem{proof}{Proof.}
\begin{document}


\title{Entropy, Divergence, and Majorization in Classical and Quantum Thermodynamics}

\author{Takahiro Sagawa
\footnote{\scriptsize Department of Applied Physics and Quantum-Phase Electronics Center (QPEC), The University of Tokyo, Tokyo 113-8656, Japan}
}


\date{}

\maketitle


\section*{Preface}

\
\

In these decades, it has been revealed that there is rich information-theoretic structure in thermodynamics of out-of-equilibrium systems in both the classical and quantum regimes.
This has led to  the fruitful interplay among statistical physics, quantum information theory, and mathematical theories including matrix analysis and asymptotic probability theory.

The main purpose of this book is to clarify how information theory works behind  thermodynamics and to shed modern light on it.
We focus  on both of purely information-theoretic concepts and their physical implications:
We  present self-contained and rigorous proofs of several fundamental properties of entropies, divergences, and majorization.
We also discuss the modern formulations of thermodynamics, especially from the perspectives of stochastic thermodynamics and resource theory of thermodynamics.
Resource theory is a recently-developed field as a branch of quantum information theory in order to quantify (energetically or information-theoretically) ``useful resources.''  We show that resource theory has an intrinsic connection to various fundamental ideas of mathematics and information theory.

This book is not intended to be a comprehensive review of the field, but would serve as a concise introduction to several important ingredients of  the information-theoretic formulation of thermodynamics.
We hope that the readers would grasp a compact overview on physics and  mathematics of entropy-like quantities from the modern point of view.

\vspace{\baselineskip}
\begin{flushright}\noindent
Tokyo, July 2020\hfill {\it Takahiro Sagawa}\\
\end{flushright}

\newpage

\section*{Acknowledgements}

\

This book is partially based on lectures by the author at Kyoto University in June 2018 and at Osaka University in January 2020.
The author is grateful to the hosts of these lectures: Hideo Suganuma and Hidenori Fukaya.

The author is grateful to Yosuke Mitsuhashi and Ryuji Takagi for careful reading of the draft, and to Hiroshi Nagaoka, Frederik vom Ende, Hiroyasu Tajima, Naoto Shiraishi,  Kosuke Ito, James P. Crutchfield,
and Nicole Yunger Halpern for valuable comments.
The author is also grateful to the students, Kosuke Kumasaki, Taro Sawada, Hiroshi Kubota, Thanaporn Sichanugrist, and Takahiro Uto, for reading the draft.

The author is grateful to the editors of Springer, Chino Hasebe and  Masayuki Nakamura, for their patience for the long delay of my manuscript.

Finally, the author is most grateful to Tohru Eguchi, who passed away in 2019, for inviting me to this book project of Springer Briefs already several years ago.

\
\

This work is supported by JSPS KAKENHI Grant Numbers JP16H02211 and JP19H05796.


\tableofcontents


\chapter{Introduction}
\label{chap:introduction}

Thermodynamics was originally  established  as a phenomenological theory of thermal equilibrium~\cite{Callen}.
Entropy is at the core of  thermodynamics, as it provides a complete characterization of state convertibility through the second law of thermodynamics; 
In the absence of a heat bath,
a state transformation between equilibrium states is possible \textit{if and only if} the entropy does not decrease.
This remarkable feature of entropy has been formalized by a  rigorous axiomatic theory at the purely phenomenological level by Lieb and Yngvason~\cite{Lieb1999}.

Once one goes beyond conventional equilibrium situations, however, the notion of entropy becomes more subtle.
In fact,  a proper definition of entropy of nonequilibrium states has been a long-standing problem in  statistical physics.
In light of progress in theories of modern thermodynamics, it has been revealed that information-theoretic entropies play  significant roles in thermodynamics of out-of-equilibrium and even quantum systems.

There is a long history of researches of such entropy-like quantities in classical and quantum information theories~\cite{Cover_Thomas,Nielsen,Ohya,Hayashi} as well as matrix analysis~\cite{Bhatia,Hiai2010}, where it has been shown that several information-theoretic quantities exhibit universal features that resemble the second law of thermodynamics.
A particularly important concept is \textit{divergence} as well as information-theoretic \textit{entropy}.  Divergence is also referred to as relative entropy and corresponds to free energy of thermodynamics.

In this book, we will discuss the fundamental properties of entropies and divergences both in the classical and quantum regimes, including the Shannon entropy~\cite{Shannon}, the von Neumann entropy~\cite{Neumann}, the Kullback-Leibler (KL) divergence~\cite{Kullback} and its quantum generalization~\cite{Umegaki}, the R\'enyi entropy and divergence~\cite{Renyi},  the $f$-divergence~\cite{Ali,Csiszar}, and general quantum divergence-like quantities called the Petz's quasi-entropies~\cite{Petz1985,Petz1986}.
In addition, we will briefly discuss the classical and quantum Fisher information~\cite{Fisher,Petz1996,Amari2000}, which has a close connection to divergences.

We will see that the monotonicity properties of these divergences under stochastic or quantum dynamics bring us to the information-theoretic foundation of the second law of thermodynamics.
An advantage of such an approach to the second law lies in the fact that 
information-theoretic entropies and divergences can be defined for arbitrary probability distributions and quantum states including out-of-equilibrium ones,
and therefore informational quantities can take the place of the Boltzmann entropy defined only for the equilibrium ensembles.

From the physics side, recent progress of experimental technologies of manipulating small-scale systems has led us to a fruitful playground of modern thermodynamics.
Small-scale heat engines have been experimentally realized with various systems both in the classical and quantum regimes, such as colloidal particles~\cite{Toyabe2010,Berut2012}, biomolecules~\cite{Crivellari2019}, single electrons~\cite{Koski2014}, superconducting qubits~\cite{Masuyama2018}, and NMR~\cite{Camati2016},
 where the connection between the second law and information theory has been investigated in real laboratories.

In contrast to the case of conventional macroscopic systems, thermodynamic quantities  of small-scale systems 
become random  variables, because dynamics of small systems exhibit intrinsic stochasticity induced by thermal fluctuations of heat baths. 
This is a fundamental reason why modern thermodynamics is relevant to information theory based on probability theory.

There are two main complementary streams that deal with theoretical formulations of small-scale thermodynamics in the above-mentioned spirit.
One is  \textit{stochastic thermodynamics}, which has been developed in the field of nonequilibrium statistical mechanics~\cite{Evans1993,Jarzynski1997,Crooks1999,Seifert2005} (see also review articles~\cite{Jarzynski2011,Seifert2012,Esposito2009,Sagawa2012,Funo2018}).
Stochastic thermodynamics has led to modern understanding of thermodynamics of information~\cite{Sagawa2012a,Parrondo2015,Sagawa2019r}, which sheds new light on ``Maxwell's demon''~\cite{Demon}.

The other stream is \textit{resource theory of thermodynamics}, which has been  developed more recently as a branch of quantum information theory~\cite{Brandao2013,Horodecki2013,Aberg2013,Brandao2015,Weilenmann2016,Faist2018} (see also review articles~\cite{Gour,Goold,Lostaglio2019}).
From this perspective, the second law of thermodynamics quantifies how much ``resource'' such as work is required for a desired thermodynamic task.
One of the key ideas 
 is that work is supposed to be a deterministic, not random quantity, in order to implement the idea that work is a purely ``mechanical'' quantity without any entropic contribution.
This is referred to as \textit{single-shot} (or one-shot) thermodynamics~\cite{Horodecki2013,Aberg2013} and is contrastive to the setup of stochastic thermodynamics that allows work fluctuations.
We note that there are attempts to connect resource-theoretic results with experimentally relevant setups (e.g., Refs.~\cite{Alhambra2019,Halpern2020}).

In general, resource theory is an information-theoretic framework to quantify ``useful resources'' (see also a review article~\cite{Chitambar2019}).
For example, resource theory of entanglement is an earliest resource theory in quantum information theory~\cite{Nielsen,Nielsen1999}, and has a similar mathematical structure to resource theory of thermodynamics at infinite temperature.
To formulate a resource theory,  we need to identify \textit{free states} and \textit{free operations}, which can be prepared and performed without any cost.
It is also important to consider a \textit{monotone}, which is a quantity that monotonically changes (does not increase or does not decrease) under free operations~\cite{Takagi2019}.
In particular, there is a concept called a \textit{complete monotone}, which provides a sufficient, not only necessary, condition that an operation is possible.
We remark that the concept of single-shot  is  significant in various resource theories~\cite{Liu2019}.

In the case of thermodynamics, work and nonequilibrium states are regarded as resources, because work can be extracted only from nonequilibrium states and nonequilibrium states can be created only by investing work.
On the other hand, Gibbs states are free states and relaxation processes (called Gibbs-preserving maps or thermal operations) are free operations.

Conventional thermodynamics can be regarded as a prototypical resource theory, where entropy is a complete monotone that provides a necessary and sufficient condition of transitions between equilibrium states~\cite{Lieb1999}.
In modern nonequilibrium thermodynamics, the divergences (including the KL divergence and the R\'enyi divergence) are monotones, implying that these divergences serve as thermodynamic potentials (or free energies).
However, except for equilibrium transitions, such divergences are not complete monotones in the single-shot scenario.
This brings us to a mathematical concept called \textit{majorization}~\cite{Bhatia,Marshall}, which enables a complete  characterization of thermodynamic state transformations.
Majorization plays a central role in resource theory of entanglement, while a generalized concept called \textit{thermo-majorization}~\cite{Horodecki2013} (and \textit{d-majorization}~\cite{Blackwell,Torgersen,Ruch}) is crucial for resource theory of thermodynamics at finite temperature.

It is also interesting to take the asymptotic limit in resource theories, where many copies of the system are  available.
In  the asymptotic limit, structure of state convertibility often becomes  simpler than standard majorization and can be characterized by a single complete monotone.
This is a main focus of the theory of \textit{information spectrum}~\cite{Han1993,Han2003,Nagaoka2007,Bowen_Datta,Bowen_Datta2,Datta_Renner,Datta2009}, where a quantum version of \textit{asymptotic equipartition properties (AEP)}~\cite{Cover_Thomas} plays a significant role.
To take the asymptotic limit, it is useful to utilize quantities called \textit{smooth} entropy and divergence~\cite{Renner2005,Renner2004}.
From the physics point of view, the asymptotic limit represents the thermodynamic limit of many-body systems, and a complete monotone is regarded as a complete macroscopic thermodynamic potential like the one discussed by Lieb and Yngvason~\cite{Lieb1999}.
We will emphasize that the asymptotic theory of information spectrum can be applied  to interacting many-body systems beyond the independent and identically-distributed (i.i.d.) situations.

\

The organization of this book is as follows.

In Chapter \ref{chap:classical_entropy}, we discuss the  properties of classical information-theoretic entropies and divergences: especially,  the Shannon entropy, the KL divergence, and 
 the R\'enyi $\alpha$-entropies and divergences.
We also briefly mention the $f$-divergence and the Fisher information.

In Chapter \ref{chap:classical_majorization}, we review majorization and thermo-majorization (and d-majorization) for classical stochastic systems.
We introduce the Lorenz curve to visualize majorization and show that the R\'enyi $0$- and $\infty$-divergences provide a useful characterization of state convertibility.

In Chapter \ref{chap:classical_thermodynamics}, we apply the foregoing information-theoretic argument (Chapter \ref{chap:classical_entropy} and Chapter \ref{chap:classical_majorization}) to classical thermodynamics and show that the second law immediately follows from general information-theoretic inequalities.
Specifically, the KL divergence is relevant to the setup of stochastic thermodynamics with fluctuating work, while the R\'enyi $0$- and $\infty$-divergences are relevant to resource theory of thermodynamics in the single-shot scenario.

In Chapter \ref{chap:quantum_entropy}, we move to the quantum case.  After a brief overview of quantum states and quantum dynamics, we discuss quantum information-theoretic entropies and divergences.
In particular, we will focus on the von Neumann entropy, the quantum KL divergence, and the quantum R\'enyi $0$- and $\infty$-divergences.

In Chapter \ref{chap:quantum_majorization}, we consider the quantum counterpart of majorization. 
The mathematical structure of ordinary majorization of the quantum case is similar to that of the classical case.
The quantum version of thermo-majorization (and d-majorization) is a more subtle concept, while  the quantum R\'enyi $0$- and $\infty$-divergences can still characterize state convertibility.

In Chapter \ref{chap:approximate_asymptotic}, we consider approximate and asymptotic state conversion.
We consider the smooth R\'enyi $0$- and $\infty$-divergences, and as their asymptotic limit, introduce a concept called information spectrum (or the spectral divergence rate). 
We then discuss a quantum version of the AEP under certain assumptions including ergodicity, and show that the AEP implies the existence of a single complete monotone, which is nothing but the KL divergence rate. 

In Chapter \ref{chap:quantum_thermodynamics}, we apply the foregoing  argument (Chapter \ref{chap:quantum_entropy}, Chapter \ref{chap:quantum_majorization}, and Chapter \ref{chap:approximate_asymptotic}) to quantum thermodynamics.
We generally formulate Gibbs-preserving maps and thermal operations and derive the second law of quantum thermodynamics.
We discuss both of the fluctuating-work formulation and the single-shot scenario.
We also consider the asymptotic limit, where a complete macroscopic thermodynamic potential emerges.

In Appendix~\ref{apx:general_monotonicity}, we provide a proof of the monotonicity properties of general quantum divergences, by invoking  mathematical techniques of matrix analysis.
In particular, we show that operator monotone and operator convex play crucial roles.
We also discuss the quantum Fisher information and prove its monotonicity property.

In Appendix~\ref{appx:hypothesis_testing}, we briefly overview quantum hypothesis testing, which has a fundamental connection to the asymptotic theory discussed in Chapter \ref{chap:approximate_asymptotic}.
We discuss that quantum hypothesis testing provides essentially the same information as the smooth R\'enyi $0$- and $\infty$-divergences. 
We also discuss semidefinite programming as a useful tool and focus on the quantum Stein's lemma as  another representation of the quantum AEP. 

In Appendix~\ref{app_classical}, we discuss the classical AEP by considering classical ergodic processes, and provide a proof of the classical Stein's lemma.
This is regarded as the classical counterpart of Chapter \ref{chap:approximate_asymptotic} and Appendix~\ref{appx:hypothesis_testing}.

\

Finally, we remark on a notation that is used throughout this book.
$\beta \geq 0$ represents the inverse temperature of the environment (i.e., the heat bath), that is, $\beta := (k_{\rm B}T)^{-1}$ with $0< T \leq \infty$ being the corresponding temperature and $k_{\rm B}$ being the Boltzmann constant.
Also, we set the Planck constant to unity.

We will basically  restrict ourselves to finite-dimensional systems for both of the classical and quantum cases, unless stated otherwise.
Only for a few topics such as continuous-variable majorization and the AEP, we will adopt infinite-dimensional setups.



\chapter{Classical entropy and divergence}
\label{chap:classical_entropy}

In this chapter, we consider the basic concepts of classical information theory.
In Section~\ref{sec:state_dynamics},  we formulate classical states and dynamics as probability distributions and stochastic matrices.
We then introduce the Shannon entropy and the Kullback-Leibler (KL) divergence (relative entropy) in Section~\ref{sec:Shannon_KL} and the R\'enyi $\alpha$-entropy and $\alpha$-divergence in Section~\ref{sec:Renyi}.
In Section~\ref{sec:classical_general_divergence}, we consider more general divergence-like quantities including the $f$-divergence, and as special cases, provide proofs of the properties of the KL and the R\'enyi divergences.
In Section~\ref{sec:classical_Fisher}, we briefly discuss the classical Fisher information, which is a lower-order expansion of the KL divergence.

\section{Classical state and dynamics}
\label{sec:state_dynamics}

As a preliminary, we fix our terminologies and notations for classical probability theory. 
The concept of \textit{state} of a classical system is represented by a probability distribution $p := (p_1, p_2, \cdots, p_d)^{\rm T}$ with $\sum_{i=1}^dp_i=1$ and $p_i \geq 0$.
Here, $p$ is regarded as a $d$-dimensional column vector and $\rm T$ represents the transpose.
In the following, we do not distinguish \textit{state} and  \textit{distribution}.
On the other hand, the classical phase-space point corresponds to index $i$, which is also referred to as a ``state'' in the physics convention, while we do not use this terminology in this book.

Let $\mathcal P_d$ be the set of $d$-dimensional probability distributions. 
We denote the uniform distribution as $u := (1/d, 1/d, \cdots, 1/d)^{\rm T}$.
With two independent distributions $p \in \mathcal P_d$ and $q \in \mathcal P_{d'}$, we write their joint distribution as $p \otimes q \in \mathcal P_{dd'}$ whose components are given in the form $p_iq_j$.  This ``tensor product'' notation is consistent with the quantum case, where a classical probability distribution is regarded as the diagonal elements of a density matrix.
If there are $n$ samples that are independent and identically distributed (i.i.d.), we write the joint distribution as $p^{\otimes n} \in \mathcal P_{d^n}$.

We note that the \textit{support} of $p$, denoted as ${\rm supp}[ p]$, is defined as the set of indexes $\{ i : p_i > 0\}$. 
The \textit{rank} of $p$ is the number of elements of ${\rm supp}[ p]$, i.e., ${\rm rank}[p] := | {\rm supp}[ p] |$.
If $p \in \mathcal P_d$ does not have a zero component, then $p$ is called to have full rank.

A time evolution of classical probability distributions is represented by a \textit{stochastic matrix}, which maps the input state at the initial time into the output state at the final time.
For the sake of simplicity, we assume that the input and the output spaces have the same dimensions in Chapters \ref{chap:classical_entropy}, \ref{chap:classical_majorization}, and \ref{chap:classical_thermodynamics} (i.e., for the classical case), which means that any stochastic matrix is a square matrix.
The components of a stochastic matrix $T$ must satisfy $\sum_{i=1}^dT_{ij} = 1$ and $T_{ij} \geq 0$.  
The time evolution of  a probability vector is then given by $p' = Tp$, or equivalently, $p'_i = \sum_{j=1}^d T_{ij}p_j$.
Such a map on probability distributions is called a stochastic map or a Markov map, for which we use the same notation $T$.

If a stochastic matrix $T$ further satisfies $\sum_{j=1}^d T_{ij} = 1$, it is called a \textit{doubly stochastic matrix}.
By definition, a stochastic matrix $T$ is doubly stochastic if and only if the uniform distribution is its fixed point: $u=Tu$.

Meanwhile, we note that for distributions $p,q$, the \textit{trace distance} is defined as
\begin{equation}
D(p,q) := \frac{1}{2} \| p - q \|_1 := \frac{1}{2}\sum_{i=1}^d | p_i - q_i |,
\end{equation}
where $\| \cdot \|_1$ is called the trace norm.
The trace distance does not increase under any stochastic map $T$, which is called the monotonicity (or the data processing inequality):
\begin{equation}
D(p,q)  \geq  D(Tp,  Tq ).
\label{monotonicity_trace_norm}
\end{equation}
The proof of this is easy, but we will postpone it to  Section~\ref{sec:classical_general_divergence} where we give a more general perspective on the monotonicity.

\section{Shannon entropy and the KL divergence}
\label{sec:Shannon_KL}

A most basic concept in information theory is the \textit{Shannon entropy}, which is defined for a classical probability distribution $p \in \mathcal P_d$ as
\begin{equation}
S_1(p) := -\sum_{i=1}^d p_i \ln p_i.
\end{equation}
Here, we added the subscript ``$1$'' because the Shannon entropy is the R\'enyi $1$-entropy as discussed later.
Obviously, $S_1 (p) \geq 0$.

We next consider two distributions $p, q \in \mathcal P_d$.
The \textit{Kullback-Leibler (KL) divergence} (or the \textit{relative entropy}) is defined as
\begin{equation}
S_1 (p \| q) := \sum_{i=1}^d p_i \ln \frac{p_i}{q_i}.
\end{equation}
If the support of $p$ is not included in that of $q$, we define $S_1(p \| q ) := + \infty$.
In order to avoid too much complexity, however, we always assume that the support of  $p$ is included in that of $q$ throughout this book, whenever we consider divergence-like quantities.

The KL divergence is regarded as an asymmetric ``distance'' between two distributions.
It is non-negative:
\begin{equation}
S_1 (p \| q) \geq 0,
\end{equation} 
where the equality $S_1 (p \| q) = 0$ holds if and only if $p=q$.
A simplest way to see this is as follows: Since $\ln (x^{-1}) \geq 1-x$ holds for $x > 0$, where the equality holds if and only if $x=1$,  we have $\sum_i p_i \ln ( p_i / q_i) \geq \sum_i p_i  (1 -  q_i / p_i)  = 0$ and the equality condition.

The entropy and the divergence are related as
\begin{equation}
S_1 (p) = \ln d - S_1 (p \| u),
\label{Shannon_KL}
\end{equation}
where $u$ is the uniform distribution.
From this and the non-negativity of the divergence, we obtain $S_1(p) \leq \ln d$.

We note that if $\Delta p := p - q$ is small, i.e., $\varepsilon := \| \Delta p \|_1 \ll 1$, we can expand the KL divergence up to the second order of $\varepsilon$ as
\begin{equation}
S_1 (p  \| p-\Delta p ) =\frac{1}{2} \sum_i \frac{( \Delta p_i)^2}{p_i} + O(\varepsilon^3 ),
\label{KL_expand}
\end{equation}
where we used $\sum_i \Delta p_i = 0$ to drop the term of $O(\varepsilon)$.
The right-hand side of (\ref{KL_expand}) is related to the Fisher information~\cite{Fisher,Amari2000}.

The KL divergence also satisfies the \textit{monotonicity} (or the \textit{data processing inequality}).  Suppose that $p,q,p',q' \in \mathcal P_d$ satisfy  $p' = Tp$ and $q' = Tq$  for a stochastic matrix $T$.  Then, 
\begin{equation}
S_1 (p \|q) \geq S_1 (p' \| q' ),
\label{monotone_classical}
\end{equation}
which represents that two distributions become harder to be distinguished, if they are  ``coarse-grained'' by a stochastic map.
In terms of resource theory, this implies that the KL divergence is a \textit{monotone}, where in general a monotone is a quantity that does not increase (or decrease) under free operations.  
Here, all stochastic maps are (rather formally) regarded to be free, while in the thermodynamic setup, only Gibbs-preserving maps are supposed to be free (see also Section~\ref{sec:d_majorization} and Chapter \ref{chap:classical_thermodynamics}). 

A straightforward way to show inequality (\ref{monotone_classical}) is that
\begin{eqnarray}
&{}& S_1 (p \|q) - S_1 (p'  \| q' )  =  \sum_i p_i \ln \frac{p_i}{q_i} -  \sum_j p_j' \ln \frac{p_j'}{q_j'} \\
&=&  \sum_{ij} T_{ji} p_i \ln \frac{T_{ji}p_i}{T_{ji}q_i} -  \sum_j p_j' \ln \frac{p_j'}{q_j'} = \sum_j p_j' S_1 (\tilde p^{(j)} \| \tilde q^{(j)})  \geq 0,
\end{eqnarray}
where $S_1 (\tilde p^{(j)} \| \tilde q^{(j)}):=  \sum_i \frac{T_{ji}p_i}{p_j'}\ln \frac{T_{ji}p_i / p_j'}{T_{ji}q_i / q_j'} $ is the KL divergence between the conditional distributions $\tilde p^{(j)}_i := T_{ji}p_i / p_j'$ and $\tilde q^{(j)}_i  := T_{ji}q_i / q_j'$.
Another proof of (\ref{monotone_classical}) will be shown in Section~\ref{sec:classical_general_divergence} as a special case of the monotonicity of more general divergence-like quantities.
We will discuss the significance of the monotonicity in thermodynamics of classical systems in Section~\ref{sec:classical_second_law}.

We note that the converse of the monotonicity~(\ref{monotone_classical}) is not true in general:
There are pairs of states $(p,q)$ and $(p',q')$ satisfying inequality~(\ref{monotone_classical}) such that  any stochastic map $T$  satisfying $p' = Tp$ and $q' = Tq$ does not exist.
In other words, the monotonicity of the KL divergence is only a necessary condition, but not a sufficient condition for the convertibility of pairs of states.
In the resource theory terminology, this implies that the KL divergence  is a monotone but not a \textit{complete} monotone under stochastic maps, because a complete monotone must provide a sufficient condition for state convertibility~\cite{Takagi2019}.
This observation brings us to the concept of majorization (more generally, d-majorization), which will be discussed  in the next chapter.
In particular, we will show that there exists a \textit{complete set} of monotones, which provides a sufficient condition with a collection of infinitely many divergence-like quantities.

We next consider the case of doubly stochastic matrices.
By noting Eq.~(\ref{Shannon_KL}) and the fact that a doubly stochastic matrix $T$ does not change $u$, the monotonicity (\ref{monotone_classical}) of the KL divergence implies that 
\begin{equation}
S_1 (p) \leq S_1 (Tp) ,
\label{monotone_DSM}
\end{equation}
which represents that a doubly stochastic matrix makes distributions more ``random.''
In terms of resource theory, this implies that the Shannon entropy is a monotone under doubly stochastic maps (but again, is not a complete monotone).

The Shannon entropy and the KL divergence also have other fundamental properties such as the subadditivity, the strong subadditivity, the concavity of the Shannon entropy, and the joint concavity of the KL divergence. 
However, we will postpone to discuss them to Chapter~\ref{chap:quantum_entropy} for the quantum setup, from which the classical counterpart immediately follows.

We here briefly remark on a quantity called mutual information.
We consider two systems A and B.
Let $p_{\rm AB}$ be a distribution of AB, and $p_{\rm A}$ and $p_{\rm B}$ be its marginal distributions of A and B, respectively.  Then, the mutual information between A and B in distribution $p_{\rm AB}$ is defined as
\begin{equation}
I_1 (p_{\rm AB} )_{\rm A: B} := S_1 (p_{\rm A} ) + S_1 ( p_{\rm B} ) - S_1 (p_{\rm AB} ) = S_1 (p_{\rm AB} \| p_{\rm A}  \otimes p_{\rm B}  ) \geq 0.
\end{equation}
This quantifies the correlation between A and B, and $I_1 (p_{\rm AB} )_{\rm A: B}  = 0$ holds if and only if A and B are statistically independent, i.e., $p_{\rm AB} = p_{\rm A}  \otimes p_{\rm B} $.
The monotonicity (\ref{monotone_classical}) of the KL divergence implies the data processing inequality of the mutual information: $I_1 (p_{\rm AB} )_{\rm A: B} \geq I_1 (T_{\rm A} \otimes T_{\rm B} p_{\rm AB} )_{\rm A: B}$, where $T_{\rm A} \otimes T_{\rm B}$ represents a stochastic map independently acting on A and B.

\section{R\'enyi entropy and divergence}
\label{sec:Renyi}

We next  discuss generalized entropies:  the \textit{R\'enyi} $\alpha$\textit{-entropies} for $0 \leq \alpha \leq \infty$.  For a  distribution $p \in \mathcal P_d$, the R\'enyi $\alpha$-entropy is defined as 
\begin{equation}
S_\alpha (p) := \frac{1}{1-\alpha} \ln \left( \sum_{i=1}^d p_i^\alpha \right).
\end{equation}
Here, $S_\alpha (p)$ for $\alpha = 0, 1, \infty$ is defined by taking the limit: $S_1 (p )$ is indeed the Shannon entropy, and
\begin{equation}
S_0 (p) := \ln \left( {\rm rank} [ p] \right),
\end{equation}
\begin{equation}
S_\infty (p) := -\ln \left( \max_i \{ p_i \} \right).
\end{equation}
We also denote these quantities by  $S_{\rm min} (p) :=S_\infty (p)$ and $S_{\rm max} (p) := S_0(p)$, which are referred to as the min and the max entropies, respectively.
It is obvious that 
\begin{equation}
S_\alpha (p) \geq 0.
\end{equation}

We next consider the \textit{R\'enyi} $\alpha$\textit{-divergence} for distributions $p,q \in \mathcal P_d$   with $0 \leq \alpha \leq \infty$~\cite{Erven_Harremoes2010,Erven_Harremoes}.  It is defined as
\begin{equation}
S_\alpha (p \| q ) := \frac{1}{\alpha -1} \ln \left( \sum_{i=1}^d \frac{p_i^\alpha}{q_i^{\alpha -1}} \right).
\end{equation}
For $\alpha = 0, 1, \infty$, we again take the limit: $S_1 (p \| q )$ is the KL divergence, and 
\begin{equation}
S_0 (p \| q ) := - \ln \left( \sum_{i: p_i > 0} q_i \right),
\end{equation}
\begin{equation}
S_\infty (p \| q ) := \ln \left( \max_i \left\{ \frac{p_i}{q_i} \right\} \right).
\end{equation}
We also denote these quantities by $S_{\rm min}(p\| q) :=S_0 (p \| q)$ and $S_{\rm max} (p \| q) := S_\infty (p \| q)$, where on the contrary to the entropy case,  the min and the max divergences correspond to $\alpha = 0$ and $\infty$, respectively.
We again note that, in the definition of these divergence-like quantities, we always assume that the support of  $p$ is included in that of $q$.

An important property of the R\'enyi divergence is non-negativity, while we  postpone the proof to Section~\ref{sec:classical_general_divergence}.

\begin{proposition}[Non-negativity of the R\'enyi divergence]
\begin{equation}
S_\alpha (p \| q) \geq 0.
\end{equation}
For $0 < \alpha \leq \infty$,  the equality $S_\alpha (p \| q) = 0$ holds if and only if $p=q$.
For $\alpha = 0$, the equality holds if and only if the supports of $p$ and $q$ are the same.
\label{prop:classical_Renyi_nonnegative}
\end{proposition}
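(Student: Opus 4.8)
The plan is to reduce everything to a single application of Jensen's inequality by working with the unnormalized ``Rényi quantity''
$Q_\alpha(p\|q) := \sum_{i=1}^d p_i^\alpha q_i^{1-\alpha}$, so that $S_\alpha(p\|q) = \tfrac{1}{\alpha-1}\ln Q_\alpha(p\|q)$ for $\alpha\neq 0,1,\infty$. Under the standing assumption ${\rm supp}[p]\subseteq{\rm supp}[q]$ every term with $q_i=0$ drops out, and we may write $Q_\alpha(p\|q) = \sum_{i\in{\rm supp}[q]} q_i\,(p_i/q_i)^\alpha$. The crucial observation is that $\{q_i\}_{i\in{\rm supp}[q]}$ is a probability weight with $\sum_{i\in{\rm supp}[q]} q_i\,(p_i/q_i) = \sum_{i\in{\rm supp}[q]} p_i = 1$, i.e.\ $Q_\alpha(p\|q)$ is the expectation of $t\mapsto t^\alpha$ at the point $t_i = p_i/q_i$ whose mean equals $1$.

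I would then split into cases. For $1<\alpha<\infty$ the map $t\mapsto t^\alpha$ is strictly convex on $[0,\infty)$, so Jensen gives $Q_\alpha(p\|q)\geq 1^\alpha = 1$; since $\tfrac{1}{\alpha-1}>0$ this yields $S_\alpha(p\|q)\geq 0$. For $0<\alpha<1$ the same map is strictly concave, so $Q_\alpha(p\|q)\leq 1$, and since now $\tfrac{1}{\alpha-1}<0$ we again get $S_\alpha(p\|q)\geq 0$. The case $\alpha=1$ is the KL divergence, whose non-negativity (with equality iff $p=q$) was already established in Section~\ref{sec:Shannon_KL}. For the two endpoints I would argue directly from the explicit formulas: $S_\infty(p\|q)=\ln\max_i\{p_i/q_i\}\geq 0$ because $1=\sum_i p_i = \sum_{i\in{\rm supp}[q]} q_i\,(p_i/q_i)\leq \max_i\{p_i/q_i\}$, and $S_0(p\|q)=-\ln\bigl(\sum_{i:p_i>0} q_i\bigr)\geq 0$ because $\sum_{i:p_i>0} q_i\leq\sum_i q_i = 1$.

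For the equality conditions, strict convexity (resp.\ concavity) of $t\mapsto t^\alpha$ for $\alpha\in(0,1)\cup(1,\infty)$ implies that equality in Jensen holds iff $p_i/q_i$ is constant over $i\in{\rm supp}[q]$; calling this constant $c$, normalization forces $c = c\sum_{i\in{\rm supp}[q]} q_i = \sum_{i\in{\rm supp}[q]} p_i = 1$, hence $p_i=q_i$ on ${\rm supp}[q]$ and $p_i=0$ otherwise, i.e.\ $p=q$. For $\alpha=\infty$, equality means $\max_i\{p_i/q_i\}=1$, so $p_i\leq q_i$ for $i\in{\rm supp}[q]$ and $p_i=0$ elsewhere; since both sides sum to $1$ this forces $p=q$. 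For $\alpha=0$, equality means $\sum_{i:p_i>0} q_i = 1$, i.e.\ $q_i=0$ whenever $p_i=0$, i.e.\ ${\rm supp}[q]\subseteq{\rm supp}[p]$; combined with the standing assumption ${\rm supp}[p]\subseteq{\rm supp}[q]$ this is precisely ${\rm supp}[p]={\rm supp}[q]$, which is why the $\alpha=0$ equality case genuinely differs from the others.

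I do not expect a serious obstacle here: the mathematical content is one invocation of Jensen's inequality. The only points needing care are the sign flip of $\tfrac{1}{\alpha-1}$ between $\alpha<1$ and $\alpha>1$ (which is exactly what makes both regimes yield the same conclusion), the treatment of the limiting values $\alpha\in\{0,1,\infty\}$, and the support bookkeeping that makes the $\alpha=0$ equality condition come out as ``equal supports'' rather than $p=q$.
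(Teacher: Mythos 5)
Your proof is correct and follows essentially the same route as the paper: Jensen's inequality applied to $f_\alpha(x)=x^\alpha$ with weights $q_i$ at the points $p_i/q_i$, using convexity for $\alpha>1$ and concavity for $0<\alpha<1$ together with the sign of $\alpha-1$, with the endpoint cases $\alpha=0,1,\infty$ and the equality conditions (including the ``equal supports'' condition at $\alpha=0$ under the standing support assumption) handled just as in the paper. No gaps.
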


The R\'enyi  $\alpha$-divergence satisfies the monotonicity: it does not increase (and thus is a monotone) under stochastic maps.  
(We again postpone the proof to Section~\ref{sec:classical_general_divergence}.)

\begin{proposition}[Monotonicity of R\'enyi divergence, Theorem 1 of ~\cite{Erven_Harremoes}]
For any stochastic matrix $T$, the R\'enyi $\alpha$-divergence with $0 \leq \alpha \leq \infty$ satisfies
\begin{equation}
S_\alpha (p \| q ) \geq S_\alpha (T p \| Tq ). 
\label{Renyi_divergence_monotonicity}
\end{equation}
\label{Renyi_divergence_monotonicity_lemma}
\end{proposition}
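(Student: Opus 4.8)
The plan is to split the argument according to the value of $\alpha$. The cases $\alpha=0,1,\infty$ are handled by elementary arguments, and the remaining range $0<\alpha<\infty$, $\alpha\neq1$, by a single application of Jensen's inequality that mirrors the conditional-distribution proof of (\ref{monotone_classical}) given above.

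For $\alpha=1$ there is nothing to do: this is exactly (\ref{monotone_classical}). For $\alpha=\infty$, put $M:=\max_k p_k/q_k$, so that $p_k\le Mq_k$ for all $k$ (legitimate under the standing assumption ${\rm supp}[p]\subseteq{\rm supp}[q]$); then $(Tp)_j=\sum_k T_{jk}p_k\le M\sum_k T_{jk}q_k=M(Tq)_j$, hence $\max_j (Tp)_j/(Tq)_j\le M$, which is $S_\infty(Tp\|Tq)\le S_\infty(p\|q)$. For $\alpha=0$, one shows $\sum_{k\in{\rm supp}[p]}q_k\le\sum_{j\in{\rm supp}[Tp]}(Tq)_j$ and takes $-\ln$: if $T_{jk}>0$ and $k\in{\rm supp}[p]$ then $j\in{\rm supp}[Tp]$, so the index set $J:=\{j:T_{jk}>0\text{ for some }k\in{\rm supp}[p]\}$ is contained in ${\rm supp}[Tp]$ and satisfies $\sum_{j\in J}(Tq)_j\ge\sum_{k\in{\rm supp}[p]}q_k\sum_{j\in J}T_{jk}=\sum_{k\in{\rm supp}[p]}q_k$, using the column-stochasticity $\sum_j T_{jk}=1$.

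For $0<\alpha<\infty$ with $\alpha\neq1$, I would write $S_\alpha(p\|q)=\frac{1}{\alpha-1}\ln Q_\alpha(p\|q)$ with $Q_\alpha(p\|q):=\sum_k q_k(p_k/q_k)^\alpha$, and set $p'=Tp$, $q'=Tq$. For each $j$ with $q'_j>0$ the numbers $\tilde q^{(j)}_k:=T_{jk}q_k/q'_j$ form a probability distribution in $k$ (since $\sum_k T_{jk}q_k=q'_j$), and $p'_j/q'_j=\sum_k\tilde q^{(j)}_k(p_k/q_k)$. Applying Jensen's inequality to $x\mapsto x^\alpha$ --- convex if $\alpha>1$, concave if $0<\alpha<1$ --- and then summing over $j$ using $\sum_j T_{jk}=1$ gives $Q_\alpha(p'\|q')\le Q_\alpha(p\|q)$ for $\alpha>1$ and $Q_\alpha(p'\|q')\ge Q_\alpha(p\|q)$ for $0<\alpha<1$; in each case the sign of $1/(\alpha-1)$ is precisely what is needed to conclude $S_\alpha(p'\|q')\le S_\alpha(p\|q)$. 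The endpoint cases $\alpha=0,\infty$ can alternatively be recovered from this by continuity as $\alpha\to0^+$ and $\alpha\to\infty$, but the direct arguments above avoid discussing that limit; and the whole range $\alpha\neq0$ is also subsumed by the monotonicity of $f$-divergences proved in Section~\ref{sec:classical_general_divergence}, taking $f(x)=x^\alpha$ for $\alpha>1$ and $f(x)=-x^\alpha$ for $0<\alpha<1$.

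I expect the only genuinely delicate point to be $\alpha=0$: there the quantity depends on $p,q$ only through their supports, so no convexity inequality applies and one must reason directly about how a stochastic map transports supports --- the bookkeeping with the set $J$ is the one place where a small amount of care is required. Everything else is a routine repackaging of the KL-divergence argument with $x\ln x$ replaced by $x^\alpha$.
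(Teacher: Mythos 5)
Your proof is correct, and for the main range $0<\alpha<\infty$, $\alpha\neq 1$, it is the same argument as the paper's: the paper simply invokes Lemma~\ref{f_monotonicity_lemma} with $f_\alpha(x)=x^\alpha$ and takes the logarithm keeping track of the sign of $\alpha-1$, which is exactly your conditional-distribution Jensen computation written out inline (your closing remark that the case $\alpha\neq 0$ is subsumed by the $f$-divergence monotonicity of Section~\ref{sec:classical_general_divergence} is precisely the paper's route). The only genuine difference is at the endpoints: the paper disposes of $\alpha=0,1,\infty$ by taking limits in $\alpha$ (noting that $\alpha=1$ can also be done directly with $f(x)=x\ln x$), whereas you give direct arguments --- the bound $p_k\leq Mq_k\Rightarrow (Tp)_j\leq M(Tq)_j$ for $\alpha=\infty$, and the support-transport bookkeeping with the set $J$ and column-stochasticity for $\alpha=0$. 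Both endpoint treatments are valid; the limit route is shorter but implicitly relies on $S_\alpha\to S_0$ and $S_\alpha\to S_\infty$ (stated but not proved in the text), while your direct arguments are self-contained and make explicit why $\alpha=0$, which depends only on supports and admits no convexity argument, still obeys the inequality. Your $\alpha=0$ computation is sound: $J\subseteq\mathrm{supp}[Tp]$ and $\sum_{j\in J}T_{jk}=1$ for every $k\in\mathrm{supp}[p]$ give $\sum_{j:(Tp)_j>0}(Tq)_j\geq\sum_{k:p_k>0}q_k$, which is the required inequality after taking $-\ln$.
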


With  the uniform distribution $u$,  the $\alpha$-divergence and the $\alpha$-entropy are related as
\begin{equation}
S_\alpha (p) = \ln d - S_\alpha (p \| u),
\end{equation}
from which we have 
\begin{equation}
S_\alpha (p) \leq \ln d.
\end{equation}
From the monotonicity of the  $\alpha$-divergence, we have, for any doubly stochastic matrix $T$,
\begin{equation}
S_\alpha (p  ) \leq S_\alpha (T p). 
\label{Renyi_entropy_monotonicity}
\end{equation}

We also note the following property.
(Again, we will prove it in Section~\ref{sec:classical_general_divergence}.)

\begin{proposition}[Theorem 3 of \cite{Erven_Harremoes2010}]
\begin{equation}
S_\alpha (p \| q ) \leq S_{\alpha'} (p \| q ) \ \ \rm{for} \ \ \alpha \leq \alpha',
\end{equation}
and thus
\begin{equation}
S_\alpha (p ) \geq S_{\alpha'} (p  ) \ \ \rm{for} \ \ \alpha \leq \alpha'.
\end{equation}
\label{prop:classical_alpha_monotone}
\end{proposition}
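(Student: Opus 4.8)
The plan is to reduce both inequalities to a single fact: the map that sends a real parameter to $\ln\bigl(\sum_i p_i^{\alpha}q_i^{1-\alpha}\bigr)$ is convex, and chords of a convex function drawn from a fixed point have nondecreasing slope. Concretely, I would set
\[
g(\alpha) := \ln\Bigl(\sum_{i\in{\rm supp}[p]} p_i^{\alpha}q_i^{1-\alpha}\Bigr),\qquad \alpha\in\mathbb{R},
\]
where restricting the sum to ${\rm supp}[p]$ is harmless because ${\rm supp}[p]\subseteq{\rm supp}[q]$, so every term is finite and strictly positive. Then $S_\alpha(p\|q)=g(\alpha)/(\alpha-1)$ for $\alpha\neq 1$, and a short check shows that the text's definitions at $\alpha=0,1,\infty$ are precisely $\lim_{\alpha\to 0}g(\alpha)/(\alpha-1)$, $\lim_{\alpha\to 1}g(\alpha)/(\alpha-1)=g'(1)$, and $\lim_{\alpha\to\infty}g(\alpha)/(\alpha-1)$ (using $p_i^{0}=1$ on ${\rm supp}[p]$, and $\bigl(\sum_i p_i^{\alpha}q_i^{1-\alpha}\bigr)^{1/\alpha}\to\max_i p_i/q_i$). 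The key normalization is $g(1)=\ln\sum_i p_i = 0$.

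Next I would establish convexity of $g$. For $\lambda\in[0,1]$ and arbitrary $\alpha,\beta$, Hölder's inequality with exponents $1/\lambda$ and $1/(1-\lambda)$ gives
\[
\sum_i p_i^{\lambda\alpha+(1-\lambda)\beta}q_i^{1-\lambda\alpha-(1-\lambda)\beta}
=\sum_i\bigl(p_i^{\alpha}q_i^{1-\alpha}\bigr)^{\lambda}\bigl(p_i^{\beta}q_i^{1-\beta}\bigr)^{1-\lambda}
\le\Bigl(\sum_i p_i^{\alpha}q_i^{1-\alpha}\Bigr)^{\lambda}\Bigl(\sum_i p_i^{\beta}q_i^{1-\beta}\Bigr)^{1-\lambda},
\]
and taking logarithms yields $g(\lambda\alpha+(1-\lambda)\beta)\le\lambda g(\alpha)+(1-\lambda)g(\beta)$. (Equivalently, $g$ is the cumulant generating function of $\ln(p_i/q_i)$ under $q$, hence convex.) With $g$ convex and $g(1)=0$, the standard ``three-chords'' property applies: $\alpha\mapsto\frac{g(\alpha)-g(1)}{\alpha-1}$ is nondecreasing on $\mathbb{R}\setminus\{1\}$ — if $1<\alpha_1<\alpha_2$ one writes $\alpha_1$ as a convex combination of $1$ and $\alpha_2$ and applies convexity, and the cases $\alpha_1<\alpha_2<1$ and $\alpha_1<1<\alpha_2$ are analogous. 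Since this quotient equals $S_\alpha(p\|q)$ for $\alpha\neq 1$, extends continuously through $\alpha=1$ to $g'(1)=S_1(p\|q)$, and its boundary values at $\alpha=0$ and $\alpha=\infty$ are the one-sided limits of a monotone function, we conclude $S_\alpha(p\|q)\le S_{\alpha'}(p\|q)$ for $0\le\alpha\le\alpha'\le\infty$. The entropy statement then follows at once from the identity $S_\alpha(p)=\ln d-S_\alpha(p\|u)$ recorded just above the proposition: monotonicity of $S_\alpha(p\|u)$ in $\alpha$ reverses sign, giving $S_\alpha(p)\ge S_{\alpha'}(p)$ for $\alpha\le\alpha'$.

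I expect the main obstacle to be not the convexity argument but the bookkeeping at the three distinguished values $\alpha\in\{0,1,\infty\}$: one must verify that the formulas adopted in the text genuinely coincide with the limits of $g(\alpha)/(\alpha-1)$, and that monotonicity is preserved in passing to those limits (including acknowledging, and ruling out under the standing assumption ${\rm supp}[p]\subseteq{\rm supp}[q]$, any degeneration to $\pm\infty$). Everything else is routine.
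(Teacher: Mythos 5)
Your proof is correct, but it takes a genuinely different route from the paper. The paper compares $S_\alpha$ and $S_{\alpha'}$ pairwise: it writes $S_\alpha(p\|q)=\frac{1}{\alpha-1}\ln\sum_i p_i\bigl((p_i/q_i)^{\alpha'-1}\bigr)^{(\alpha-1)/(\alpha'-1)}$ and applies Jensen's inequality under the distribution $p$ to the power function $x^{(\alpha-1)/(\alpha'-1)}$, with a case split on whether this function is convex or concave and on the sign of $\alpha-1$; the endpoint values $\alpha,\alpha'\in\{0,1,\infty\}$ are then handled by "taking the limit." You instead establish the global structural fact that $g(\alpha)=\ln\sum_{i\in{\rm supp}[p]}p_i^\alpha q_i^{1-\alpha}$ is convex (via H\"older), note $g(1)=0$, and read off monotonicity of $S_\alpha=g(\alpha)/(\alpha-1)$ as monotonicity of chord slopes anchored at $\alpha=1$. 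What your route buys: no sign bookkeeping, a uniform treatment of all cases including $\alpha=1$ as an interior point (identified with $g'(1)$), and the stronger byproduct that $(\alpha-1)S_\alpha(p\|q)$ is convex in $\alpha$ — the cumulant-generating-function picture. What the paper's route buys: a shorter, purely pairwise computation that needs only scalar Jensen and mirrors the cited reference. Two small remarks on your write-up: the parenthetical "$g$ is the CGF of $\ln(p_i/q_i)$ under $q$" is only literally true when ${\rm supp}[p]={\rm supp}[q]$ (with the sum restricted to ${\rm supp}[p]$ it is the log of a possibly defective moment sum), but your H\"older argument carries the convexity on its own, so nothing is lost; and your identification of the $\alpha=0,1,\infty$ values with the limits of $g(\alpha)/(\alpha-1)$, which the paper leaves implicit, is exactly the bookkeeping one has to do and you do it correctly under the standing support assumption.
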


In addition, we note that a property called the joint convexity of the R\'enyi $\alpha$-divergence is true for $0 \leq \alpha \leq 1$~\cite{Erven_Harremoes}.  See also Corollary~\ref{cor:joint_convexity_Renyi} in Appendix~\ref{apx:general_monotonicity} for the quantum case.

We finally remark on the R\'enyi $\alpha$-divergence for  negative $\alpha$.
It can be defined for  $- \infty \leq \alpha < 0$ by 
\begin{equation}
S_\alpha ( p \| q ) := \frac{\rm{sgn} (\alpha ) }{\alpha - 1} \ln \left( \sum_{i=1}^d \frac{p_i^\alpha}{q_i^{\alpha -1}} \right),
\label{negative_alpha_divergence}
\end{equation}
 where $\rm{sgn} (\alpha ) := 1$ for $\alpha > 0$ and $\rm{sgn} (\alpha ) := -1$ for $\alpha < 0$.
It is straightforward to check that $S_\alpha ( p \| q ) = \frac{\alpha}{\alpha - 1}  S_{1-\alpha} ( q \| p )$ for $\alpha < 0$.
Correspondingly, the R\'enyi $\alpha$-entropy for negative $\alpha$ can be defined as
\begin{equation}
S_\alpha (p) := \frac{\rm{sgn} (\alpha ) }{1- \alpha} \ln \left( \sum_{i=1}^d  p_i^\alpha \right) = {\rm sgn} (\alpha ) \ln d - S_\alpha (p \| u).
\label{negative_alpha_entropy}
\end{equation}
In the following, however, we only consider positive $\alpha$ unless stated otherwise.

\section{General classical divergences}
\label{sec:classical_general_divergence}

We now give the proofs of the properties of the R\'enyi $\alpha$-divergence  discussed  in Section~\ref{sec:Renyi}.
Moreover, in order to provide a more general perspective, here we discuss  general divergence-like quantities including the $f$-divergence.
See Appendix~\ref{apx:general_monotonicity} for the quantum counterpart.
We start with the formal definition of convexity/concavity.

\begin{definition}[Convexity and concavity]
Let $I \subset \mathbb R$ be an interval.  A function $ f : I \to \mathbb R$ is convex, if for any $x,y \in I$ and any $\lambda \in [0,1]$,
\begin{equation}
  f(\lambda x + (1-\lambda )y) \leq \lambda f(x) + (1-\lambda )f(y).
\end{equation}
Moreover, $f$ is strictly convex at $z \in I$, if for any $x,y \in I$ with $x \neq y$ and any $\lambda \in (0,1)$ such that  $\lambda x + (1-\lambda)y = z$,
\begin{equation}
  f(\lambda x + (1-\lambda )y) < \lambda f(x) + (1-\lambda )f(y).
  \end{equation}
  If $\leq$ and $<$ above are  replaced by $\geq$ and $>$, $f$ is concave and strictly concave, respectively.
\end{definition}

We now state the main lemma of this section, which holds true for vectors in $\mathbb R^d$ that are not necessarily probability distributions.

\begin{lemma}
Let $f$ be a convex function and let $p,q, p', q' \in \mathbb R^d$.
Suppose that all the components of $q,q'$ are positive.
If $p' = Tp$ and $q' = Tq$ hold for a stochastic matrix $T$, then
\begin{equation}
\sum_{i=1}^d q_i' f\left( \frac{p'_i}{q_i'} \right) \leq \sum_{i=1}^d q_i f \left( \frac{p_i}{q_i} \right).
\label{f_divergence_monotone}
\end{equation}
If $f$ is concave, we have the opposite inequality.
\label{f_monotonicity_lemma}
\end{lemma}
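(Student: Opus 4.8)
The plan is to reduce the inequality to a single application of Jensen's inequality, using the fact that $T$ is a stochastic matrix so that each output ratio $p_i'/q_i'$ is a genuine convex combination of the input ratios $p_j/q_j$, weighted by coefficients built from $T$ and $q$.

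First I would fix an index $i$ and, using $p_i' = \sum_j T_{ij} p_j$ and $q_i' = \sum_j T_{ij} q_j > 0$, rewrite
\[
\frac{p_i'}{q_i'} = \sum_{j=1}^d \frac{T_{ij} q_j}{q_i'}\,\frac{p_j}{q_j} = \sum_{j=1}^d \lambda_{ij}\,\frac{p_j}{q_j}, \qquad \lambda_{ij} := \frac{T_{ij} q_j}{q_i'}.
\]
Here I note $\lambda_{ij} \geq 0$ and $\sum_j \lambda_{ij} = q_i'/q_i' = 1$, so $(\lambda_{ij})_j$ is a probability vector for each $i$ (terms with $T_{ij}=0$ simply drop out, and $p_j/q_j$ is well defined since $q_j>0$). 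Convexity of $f$ then gives $f(p_i'/q_i') \leq \sum_j \lambda_{ij}\, f(p_j/q_j)$.

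Next I would multiply through by $q_i' > 0$ to get $q_i'\, f(p_i'/q_i') \leq \sum_j T_{ij} q_j\, f(p_j/q_j)$, and sum over $i$. The right-hand side collapses to $\sum_j \big(\sum_i T_{ij}\big) q_j\, f(p_j/q_j) = \sum_j q_j\, f(p_j/q_j)$ by the column-normalization $\sum_i T_{ij} = 1$ of a stochastic matrix, which is precisely inequality~(\ref{f_divergence_monotone}). The concave case follows at once by applying the convex case to $-f$.

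I do not expect a genuine obstacle here; the only points needing care are that $q_i'>0$ (exactly the hypothesis on $q,q'$) so the weights $\lambda_{ij}$ and the divisions make sense, and that the ratios $p_j/q_j$ and $p_i'/q_i'$ all lie in the interval on which $f$ is convex — but this is automatic, since $p_i'/q_i'$ is a convex combination of the $p_j/q_j$ and hence lies in their convex hull. I would also remark that if $f$ is strictly convex at the relevant points and the ratios $p_j/q_j$ over the support of each row of $T$ are not all equal, then the inequality is strict; this refinement is what later produces the equality conditions for the KL and R\'enyi divergences.
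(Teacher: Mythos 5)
Your proof is correct and is essentially the same as the paper's: you express each output ratio $p_i'/q_i'$ as a convex combination of the input ratios with weights $T_{ij}q_j/q_i'$, apply Jensen's inequality, and sum using the normalization $\sum_i T_{ij}=1$, exactly as in the proof of Lemma~\ref{f_monotonicity_lemma}. The remarks on well-definedness and on strictness are fine but not needed for the statement as given.
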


\begin{proof}
Let $f$ be convex. By noting that 
\begin{equation}
\frac{p'_j}{q'_j} = \sum_{i=1}^d \frac{T_{ji}q_i}{q'_j} \frac{p_i}{q_i}, \ \ \ \sum_{i=1}^d \frac{T_{ji}q_i}{q'_j} = 1,
\end{equation}
we have from the Jensen inequality
\begin{equation}
\sum_{j=1}^d q'_j  f\left( \frac{p'_j}{q'_j} \right) \leq \sum_{j=1}^d \sum_{i=1}^d q'_j \frac{T_{ji}q_i}{q'_j}  f \left( \frac{p_i}{q_i} \right) = \sum_{i=1}^d q_i  f \left( \frac{p_i}{q_i} \right). 
\end{equation}
We apply the same proof for the concave case.
$\Box$
\end{proof}

Let $p,q \in \mathcal P_d$.
If we take $f(x) :=  x\ln x$ in the above lemma, we have $S_1 (p \| q) = \sum_{i=1}^d q_i f (p_i / q_i )$.
Thus inequality (\ref{f_divergence_monotone}) implies the monotonicity of the KL divergence (\ref{monotone_classical}).
If we take  $f(x) = |x - 1|/2$, the trace distance is written as $D(p ,   q ) = \sum_{i=1}^d q_i f (p_i / q_i )$, from which we obtain the monotonicity of the trace distance~(\ref{monotonicity_trace_norm}).

\begin{corollary}
Let $f$ be a convex function and let $p, p' \in \mathbb R^d$.
If $p' = Tp$ holds for a doubly stochastic matrix $T$,
\begin{equation}
\sum_{i=1}^d f(p'_i) \leq \sum_{i=1}^d f( p_i ).
\label{f_monotone}
\end{equation}
If $f$ is concave, we have the opposite inequality.
\label{f_monotonicity_corollary}
\end{corollary}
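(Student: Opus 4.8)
The plan is to deduce Corollary~\ref{f_monotonicity_corollary} as a direct special case of Lemma~\ref{f_monotonicity_lemma} by choosing the reference vectors $q, q'$ to be uniform-like vectors. Concretely, I would take $q := (1,1,\dots,1)^{\rm T} \in \mathbb R^d$, the all-ones vector. All its components are positive, as required by the hypothesis of the lemma. Since $T$ is doubly stochastic, we have $\sum_{j} T_{ij} = 1$ for every $i$, which means precisely that $Tq = q$; hence setting $q' := Tq = q$ gives a vector whose components are all $1$ (and in particular positive), so the hypotheses of Lemma~\ref{f_monotonicity_lemma} are satisfied with this choice of $p, q, p', q'$.

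With these substitutions, the left-hand side of \eqref{f_divergence_monotone} becomes $\sum_{i=1}^d 1 \cdot f(p'_i/1) = \sum_{i=1}^d f(p'_i)$, and the right-hand side becomes $\sum_{i=1}^d 1 \cdot f(p_i/1) = \sum_{i=1}^d f(p_i)$. Thus inequality \eqref{f_divergence_monotone} reads exactly $\sum_{i=1}^d f(p'_i) \leq \sum_{i=1}^d f(p_i)$, which is \eqref{f_monotone}. For the concave case, Lemma~\ref{f_monotonicity_lemma} already supplies the reversed inequality under the same substitution, giving the opposite inequality in the corollary. So the entire proof is: specialize the lemma.

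There is essentially no obstacle here; the only thing to be careful about is verifying that $q' = Tq$ is consistent with $q' = (1,\dots,1)^{\rm T}$, which is exactly the defining property of a doubly stochastic matrix (each row of $T$ sums to $1$ is the condition $\sum_j T_{ij}=1$ — note that in the paper's convention the time evolution is $p'_i = \sum_j T_{ij} p_j$, and "doubly stochastic" is defined by $\sum_j T_{ij}=1$ in addition to $\sum_i T_{ij}=1$). One could also phrase the same argument using $q = u = (1/d,\dots,1/d)^{\rm T}$, since $Tu = u$ for doubly stochastic $T$; then each summand is $(1/d) f(d p'_i)$ versus $(1/d) f(d p_i)$, which after multiplying by $d$ and relabeling $f(d\,\cdot)$ gives the claim, but the all-ones choice is cleaner because it avoids the rescaling. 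I will present the all-ones version.

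\begin{proof}
Apply Lemma~\ref{f_monotonicity_lemma} with $q := (1,1,\dots,1)^{\rm T} \in \mathbb R^d$. All components of $q$ are positive. Since $T$ is doubly stochastic, $\sum_{j=1}^d T_{ij} = 1$ for each $i$, so $q' := Tq = q$, whose components are also all positive. The hypotheses of Lemma~\ref{f_monotonicity_lemma} are therefore satisfied for the vectors $p, q, p', q'$ with $p' = Tp$. For convex $f$, inequality \eqref{f_divergence_monotone} becomes
\begin{equation}
\sum_{i=1}^d f(p'_i) = \sum_{i=1}^d q'_i f\!\left( \frac{p'_i}{q'_i} \right) \leq \sum_{i=1}^d q_i f\!\left( \frac{p_i}{q_i} \right) = \sum_{i=1}^d f(p_i),
\end{equation}
which is \eqref{f_monotone}. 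If $f$ is concave, Lemma~\ref{f_monotonicity_lemma} gives the reversed inequality with the same substitution, hence the opposite inequality in \eqref{f_monotone}.
$\Box$
\end{proof}
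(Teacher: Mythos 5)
Your proof is correct and is exactly the paper's argument: the paper also proves this corollary by setting $q_i = 1$ in Lemma~\ref{f_monotonicity_lemma}, using that a doubly stochastic $T$ fixes the all-ones vector. Your verification that $Tq=q$ follows from $\sum_j T_{ij}=1$ is the only detail the paper leaves implicit, and you handled it correctly.
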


\begin{proof}
Set $q_i = 1$ in Lemma~\ref{f_monotonicity_lemma}.
$\Box$
\end{proof}

We now prove the properties of the R\'enyi $\alpha$-divergence, which are stated in Section~\ref{sec:Renyi}.

\

\noindent\textbf{Proof of Proposition~\ref{prop:classical_Renyi_nonnegative}.}
Let $f_\alpha (x) := x^\alpha$.
For $1 < \alpha < \infty$, $f_\alpha$ is convex.  From the Jensen inequality, we have
\begin{equation}
\sum_{i=1}^d q_i f \left( \frac{p_i}{q_i} \right) \geq  f \left(\sum_{i=1}^d q_i \frac{p_i}{q_i} \right)  = f(1) = 1,
\label{classical_f_positive}
\end{equation}
and then take the logarithm of this.
The equaity holds if and only if  $p_i / q_i =1$ for all $i$, because $f_\alpha$ is strictly convex at $x= 1$.

For $0 < \alpha < 1$, $f_\alpha$ is concave, and thus we have the opposite inequality to the above; then take the logarithm of it, by noting the sign of $\alpha - 1$. 

For $\alpha = 0,1, \infty$, we can take the limit to show the non-negativity, but can also easily show it directly.
The equality condition can be confirmed directly for these cases.
In particular, for the case of  $\alpha = 0$, the equality holds if and only if the support of $p$ includes that of $q$;
but under our assumption that the support of $q$ always includes that of $p$,  this condition implies that the supports of them are the same.
$\Box$

\

\noindent\textbf{Proof of Proposition~\ref{Renyi_divergence_monotonicity_lemma}.}
For $0 < \alpha < 1$ and $1 < \alpha < \infty$, we apply Lemma \ref{f_monotonicity_lemma} to  $f_\alpha (x) = x^\alpha$ and take the logarithm of it, by noting the sign of $\alpha -1$.   For $\alpha = 0, 1, \infty$, we can take the limit.
For $\alpha = 1$, we can also directly take $f(x) = x \ln x$ as mentioned before. 
$\Box$

\

\noindent\textbf{Proof of Proposition~\ref{prop:classical_alpha_monotone}.}
Let $\alpha < \alpha '$.
$f(x) := x^{(\alpha - 1) / (\alpha' - 1)}$ is concave for $1 < \alpha < \alpha' < \infty$, while is convex for $0 < \alpha < \alpha' < 1$ and $0 < \alpha < 1 < \alpha'$.
From the Jensen's inequality, and noting the sign of $\alpha - 1$, we obtain
\begin{eqnarray}
S_\alpha ( p \| q ) 
&=& \frac{1}{\alpha -1} \ln \left( \sum_i p_i \left( \frac{p_i}{q_i} \right)^{\alpha-1} \right) \\
&=& \frac{1}{\alpha -1} \ln \left( \sum_i p_i \left( \frac{p_i}{q_i} \right)^{(\alpha ' - 1 )\frac{\alpha - 1}{\alpha'-1}} \right) \\
&\leq& \frac{1}{\alpha' -1} \ln \left( \sum_i p_i \left( \frac{p_i}{q_i} \right)^{\alpha'-1} \right) \\
&=& S_{\alpha'} ( p \| q ).
\end{eqnarray}
For $\alpha, \alpha' = 0,1, \infty$, we take the limit. $\Box$

\

We next  discuss  the concept called the $f$-divergence introduced in Refs.~\cite{Ali,Csiszar} (see also Ref.~\cite{Liese2006}).
Let $f: (0,\infty) \to \mathbb R$ be a convex function.
Suppose that  $f(x)$ is strictly convex at $x=1$ and $f(1) = 0$.
Then,  the $f$-divergence is defined for $p,q \in \mathcal P_d$ by
\begin{equation}
D_f (p\| q) := \sum_{i=1}^d q_i f \left( \frac{p_i}{q_i} \right).
\label{f-divergence}
\end{equation}
The $f$-divergence is non-negative:
\begin{equation}
D_f (p\| q) \geq 0,
\end{equation}
where the equality $D_f (p\| q) = 0$ holds if and only if $p=q$.
This follows from the Jensen inequality (\ref{classical_f_positive}) along with $f(1) = 0$,
where the equality condition follows from the assumption that $f(x)$ is strictly convex at $x=1$.
The $f$-divergence also satisfies the monotonicity for $0 \leq \alpha \leq \infty$, which is nothing but Lemma \ref{f_monotonicity_lemma}.

The KL divergence is the $f$-divergence with $f(x) = x \ln x$.
On the other hand, the R\'enyi $\alpha$-divergence with $\alpha \neq 1$ is not in the form of $f$-divergence.
We note that there is another concept also called $\alpha$-divergence~\cite{Amari2000}, which is the $f$-divergence with $f_\alpha (x) := \frac{1}{\alpha ( \alpha - 1 )} (x^{\alpha} -1)$ for $\alpha \neq 0,1, \infty$, that is,
\begin{equation}
D_{f_\alpha} (p \| q ) := \frac{1}{\alpha ( \alpha - 1 )} \left(  \sum_i  \frac{p_i^\alpha}{q_i^{\alpha - 1}} -1  \right). 
\label{classical_alt_divergence}
\end{equation}
It can be shown that $\lim_{\alpha \to 1}D_{f_\alpha} (p \| q ) = S_1 (p \| q )$.
In this book, however, whenever we simply mention ``the $\alpha$-divergence,'' it indicates the R\'enyi $\alpha$-divergence in the sense of Section~\ref{sec:Renyi}.

The trace distance is  the $f$-divergence with $f(x) := |x-1| / 2$.
The $f$-divergence is also related to the fidelity defined as $F(p,q) := \sum_i \sqrt{p_iq_i}$.  
To see this, let $f(x) := 1 - \sqrt{x}$, which is equivalent to above-mentioned $f_{\alpha}$ with $\alpha = 1/2$ up to normalization. 
Then we have $D_f ( p \| q ) = 1 - F(p,q)$, from which the monotonicity of the fidelity follows, i.e., $F(p,q) \leq F(Tp, Tq)$.  It can be also rewritten as $D_f ( p \| q ) = \frac{1}{2} \sum_i (\sqrt{p_i} - \sqrt{q_i})^2$, and $\sqrt{D_f ( p \| q )}$ is called the  Hellinger distance.

We finally note that, in general, a \textit{divergence} $D(p \| q)$ is defined as a quantity satisfying $D(p \| q) \geq 0$ where the equality holds if and only if $p=q$.
The monotonicity under stochastic maps is also often required for the definition of divergence.
The $f$-divergence and the  R\'enyi $\alpha$-divergence with $0 < \alpha \leq \infty$  are both divergences in this sense, while the R\'enyi $0$-divergence does not satisfy the above equality condition.

\section{Fisher information}
\label{sec:classical_Fisher}

We here briefly discuss the Fisher information~\cite{Fisher,Amari2000} (see also Ref.~\cite{Cover_Thomas}), which has a fundamental connection to divergences.
The quantum counterpart will be discussed in Appendix~\ref{sec:quantum_Fisher}.

A main practical application of the Fisher information is found in the theory of parameter estimation, where one wants to estimate unknown parameters of probability distributions from observed data of random events.
Here, the Fisher information gives a fundamental bound of the accuracy of parameter estimation, which is known as the Cramer-Rao bound.
In this section, instead of going into details of estimation theory, we will focus on the general mathematical properties of the Fisher information.

We consider smooth parametrization of probability distributions, written as $p(\theta ) \in \mathcal P_d$ with parameters $\theta := (\theta^1, \theta^2, \cdots, \theta^m) \in \mathbb R^m$, where  the domain of $\theta$ is an open subset of $\mathbb R^m$.
We suppose that $p(\theta)$ has the full rank, i.e., $p_i (\theta ) > 0$ for any $i$ and $\theta$.
We denote $\partial_k := \partial / \partial \theta^k$.
We note that  the parametrization must satisfy $m \leq d-1$.

\begin{definition}[Fisher information]
Let $p(\theta) \in \mathcal P_d$ have full rank and $\theta \in \mathbb R^m$ be the parameter.
The (classical) Fisher information matrix is an $m \times m$ matrix, whose $(k,l)$-component is defined as
\begin{equation}
J_{p( \theta), kl} := \sum_{i=1}^d p_i (\theta ) \partial_k [\ln p_i(\theta)] \partial_l [\ln p_i(\theta)] = \sum_{i=1}^d \frac{\partial_k p_i (\theta ) \partial_l p_i (\theta )}{p_i (\theta )}.
\end{equation}
\end{definition}

The Fisher information is obtained as the infinitesimal limit of the $f$-divergence.
Suppose that $f''(1) > 0$ exists and $f$ is sufficiently smooth around $1$.
Then it is easy to check that
\begin{equation}
D_f ( p(\theta ) \| p (\theta - \Delta \theta ) ) = \frac{f''(1)}{2} \sum_{kl} \Delta \theta^k J_{p(\theta), kl} \Delta \theta^l + O(\varepsilon^3),
\label{f_divergence_expand}
\end{equation}
where $\varepsilon := \| \Delta \theta \|$.
We note that the term of $O(\varepsilon)$ vanishes because of $\sum_i \partial_k p_i (\theta ) = 0$.
Eq.~(\ref{f_divergence_expand}) is a generalization of Eq.~(\ref{KL_expand}); remarkably, for all the $f$-divergences, we obtain the same Fisher information up to normalization.

The Fisher information satisfies the monotonicity under stochastic maps.  While this may be regarded as a trivial consequence of the monotonicity of the $f$-divergence, we provide a direct proof as follows.

\begin{proposition}[Monotonicity of the Fisher information]
For any stochastic map $T$ that is independent of $\theta$,
\begin{equation}
J_{p( \theta)} \geq J_{Tp( \theta)}.
\end{equation}
\label{thm:classical_Fisher_monotone}
\end{proposition}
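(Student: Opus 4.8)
The plan is to prove the matrix inequality $J_{p(\theta)} \geq J_{Tp(\theta)}$ by reducing it to a scalar statement. Since both sides are $m \times m$ symmetric positive semidefinite matrices, the inequality $J_{p(\theta)} - J_{Tp(\theta)} \geq 0$ means exactly that $v^{\rm T} J_{p(\theta)} v \geq v^{\rm T} J_{Tp(\theta)} v$ for every fixed $v = (v^1, \dots, v^m)^{\rm T} \in \mathbb R^m$. So I would fix such a $v$ and define the scalar quantities $a_i := \sum_k v^k \partial_k p_i(\theta)$ for the input distribution and $a'_j := \sum_k v^k \partial_k p'_j(\theta)$ for the output $p' = Tp$. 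Because $T$ is independent of $\theta$, differentiation commutes with applying $T$: $\partial_k p'_j = \sum_i T_{ji} \partial_k p_i$, hence $a'_j = \sum_i T_{ji} a_i$. The claim then becomes the single inequality
\begin{equation}
\sum_{j=1}^d \frac{(a'_j)^2}{p'_j(\theta)} \leq \sum_{i=1}^d \frac{a_i^2}{p_i(\theta)}.
\end{equation}

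Next I would recognize the left-hand side as built from the convex function $g(x,y) := x^2/y$ on $\mathbb R \times (0,\infty)$, which is jointly convex (its Hessian is positive semidefinite). The key algebraic step is that $p'_j = \sum_i T_{ji} p_i$ and $a'_j = \sum_i T_{ji} a_i$ express the pair $(a'_j, p'_j)$ as a convex combination (weighted by the $T_{ji}$, which sum to one over $j$... here one must be careful: for fixed $j$, $\sum_i T_{ji}$ need not equal $1$). To handle this cleanly I would instead write, for each $j$ with $p'_j > 0$,
\begin{equation}
\frac{(a'_j)^2}{p'_j} = \frac{\left( \sum_i T_{ji} p_i \cdot (a_i/p_i) \right)^2}{\sum_i T_{ji} p_i} \leq \sum_i T_{ji} p_i \left( \frac{a_i}{p_i} \right)^2 = \sum_i \frac{T_{ji} a_i^2}{p_i},
\end{equation}
where the inequality is the Cauchy--Schwarz (or Jensen) inequality applied with the sub-probability weights $T_{ji} p_i / p'_j$ that sum to one over $i$. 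Summing over $j$ and using $\sum_j T_{ji} = 1$ gives $\sum_j (a'_j)^2/p'_j \leq \sum_i a_i^2/p_i$, which is what was needed. This is essentially the same mechanism as the proof of Lemma~\ref{f_monotonicity_lemma}, applied to the convex function $x \mapsto x^2$.

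I expect the main obstacle to be bookkeeping rather than conceptual: one must make sure full rank is propagated, i.e. that $p'_j(\theta) = (Tp(\theta))_j$ can vanish for some $j$ even though $p(\theta)$ has full rank, so the sum defining $J_{Tp(\theta)}$ must be interpreted as running only over $j$ with $p'_j > 0$, and one should check that $a'_j = 0$ whenever $p'_j = 0$ (which holds since $T_{ji} \geq 0$ and $p_i > 0$ force $T_{ji} = 0$ for all $i$, hence $a'_j = 0$) so those terms contribute nothing and the inequality is unaffected. A secondary point worth stating explicitly is the reduction from the matrix inequality to the scalar one via arbitrary $v$, together with the observation that $\theta$-independence of $T$ is exactly what lets $\partial_k$ pass through $T$; without it the identity $a'_j = \sum_i T_{ji} a_i$ fails and the proposition is false in general.
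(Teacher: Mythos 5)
Your proof is correct and follows essentially the same route as the paper: contract with an arbitrary vector to reduce the matrix inequality to a scalar one, use the $\theta$-independence of $T$ to commute the directional derivative with $T$, and then apply the Jensen-type estimate with weights $T_{ji}p_i/p'_j$, which is exactly inequality~(\ref{f_divergence_monotone}) of Lemma~\ref{f_monotonicity_lemma} with $f(x)=x^2$ (the paper simply cites that lemma where you re-derive it inline). Your remark about possible zero components of $Tp(\theta)$ is a harmless refinement of the paper's argument, which tacitly keeps the full-rank assumption for the output distribution.
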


\begin{proof}
We omit the argument $\theta$ for simplicity of notations.  Let $p' := Tp$.  Let $c = (c^1, \cdots, c^m ) \in \mathbb R^m$ be a column vector and define $\partial := \sum_k c^k \partial_k$.  Then,
\begin{equation}
c^{\rm T} J_{p} c = \sum_i p_i \left( \frac{\partial p_i}{p_i} \right)^2,  \ \ \ c^{\rm T} J_{p'} c = \sum_i p'_i \left( \frac{ \partial p'_i}{p'_i} \right)^2.
\end{equation}
Then, we apply inequality~(\ref{f_divergence_monotone}) with $f(x) := x^2$.
$\Box$
\end{proof}

An operational meaning of the Fisher information is highlighted by  the Cramer-Rao bound, which states that the accuracy of any unbiased estimation of $\theta$ is bounded by the Fisher information.
Consider a task that one estimates unknown $\theta$ from observed data $i$.  Let $\theta_{\rm est} (i)$ be an estimator and suppose that it satisfies the unbiasedness condition: $\sum_i p_i (\theta ) \theta_{\rm est} (i) =  \theta$ for all $\theta$.  The accuracy of such unbiased estimation can be characterized by the covariance matrix ${\rm Cov}_\theta (\theta_{\rm est}) $, whose $(k,l)$-component is given by
\begin{equation}
{\rm Cov}_\theta^{kl} (\theta_{\rm est} ) := \sum_{i} p_i (\theta) (\theta_{\rm est}^k (i)  - \theta^k )(\theta_{\rm est}^l (i)  - \theta^l ).
\end{equation}
Then, the Cramer-Rao bound states that
\begin{equation}
{\rm Cov}_\theta (\theta_{\rm est} ) \geq J_{p(\theta )}^{-1}.
\label{Cramer_Rao}
\end{equation}
We omit the proof of this, which is not difficult (e.g., Theorem 11.10.1 of Ref.~\cite{Cover_Thomas}).

As an example, we consider a family of probability distributions called the exponential family.
For simplicity, we consider a single parameter $\theta \in \mathbb R$ and the parameterized distribution of the form
\begin{equation}
p_i (\theta ) := h_i \exp( \theta T_i - A(\theta ) ),
\end{equation}
where $A(\theta )$ is a smooth function of $\theta$.
By straightforward computation, we have 
\begin{equation}
\sum_i T_i p_i( \theta) = A' (\theta ), \ \ \sum_i T_i^2 p_i( \theta) = A'' (\theta ) + A'(\theta )^2,
\label{exponential_family_eq}
\end{equation}
where $A' (\theta ) := dA(\theta ) / d \theta$.
Thus, the Fisher information is given by $J_{p (\theta )} = A'' (\theta )$.

In terms of thermodynamics, we can interpret that $p_i (\theta )$ is a Gibbs state, where $T_i$ is the energy, $-\theta$ is the inverse temperature $\beta$, and $\theta^{-1} A(\theta )$ is the free energy $F(\beta )$ (we set $h_i = 1$).
With this correspondence, Eqs.~(\ref{exponential_family_eq}) are  well-known formulas in equilibrium statistical mechanics.

We note that Eqs.~(\ref{exponential_family_eq}) imply that $T_i$ is regarded as an unbiased estimator of $A'(\theta)$ with the variance $A'' (\theta )$.
Then,  we replace the parameter $\theta$ with $\theta' := A' (\theta)$.
The corresponding Fisher information is given by  $J_{p (\theta' )} = A'' (\theta )^{-1}$ because of $d/d\theta  ' = (A''(\theta ) )^{-1} d / d \theta$.
Therefore, $T_i$ as an unbiased estimator of $\theta'$ attains the Cramer-Rao bound~(\ref{Cramer_Rao}).

\

We finally remark that the Fisher information can be regarded as a \textit{metric} on the space of probability distributions (or on the parameter space), which is the perspective of \textit{information geometry}~\cite{Amari2000}
(see Ref.~\cite{Ito2020} for its relation to stochastic thermodynamics).
Here, we introduce the concept called \textit{monotone metric}.

\begin{definition}[Monotone metric]
Suppose that $G_{p} : \mathbb R^d  \times \mathbb R^d \to \mathbb R$ is defined for  distributions $p \in \mathcal P_d$  with full support.
We call $G_{p}$ a monotone metric, if it satisfies the following.
\begin{itemize}
\item $G_{p}$ is bilinear.
\item  $G_{p} (a, a) \geq 0$ holds for any $p$, where the equality is achieved if and only if $a = 0$.
\item $p \mapsto G_p ( a, a) $ is continuous for any $a$.
\item The monotonicity 
\begin{equation}
G_{p} (a,a) \geq G_{Tp}(Ta, Ta)
\label{classical_metric_monotone}
\end{equation} 
holds for any stochastic matrix $T$ and for any $p$, $a$.
\end{itemize}
\end{definition}

In particular, the \textit{Fisher information metric} is defined by
\begin{equation}
G_{p} (a,b) := \sum_{i=1}^d \frac{a_i b_i}{p_i},
\label{classical_Fisher_metric}
\end{equation}
where $a = (a_1, \cdots, a_d )^{\rm T}$ and  $b = (b_1, \cdots, b_d )^{\rm T}$.
This metric is related to the Fisher information matrix as
\begin{equation}
J_{p(\theta ), kl} = G_{p(\theta )} (\partial_k p(\theta ), \partial_l p (\theta )).
\end{equation}
In completely the same manner as the proof of Proposition~\ref{thm:classical_Fisher_monotone}, we can show that the Fisher information metric~(\ref{classical_Fisher_metric}) satisfies the monotonicity~(\ref{classical_metric_monotone}).
Thus, the Fisher information metric is a monotone metric.
Conversely, any monotone metric is the Fisher information metric; in this sense, the Fisher information is unique.
This is known as the Chentsov's theorem.

\begin{theorem}[Chentsov's Theorem~\cite{Chentsov1983}]
Any monotone metric is the Fisher information metric up to normalization.
\label{thm:Chentsov}
\end{theorem}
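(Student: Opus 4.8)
The plan is to prove Chentsov's theorem by reducing the problem to a representation-theoretic rigidity statement: a monotone metric on probability simplices must be invariant under a large symmetry group acting transitively enough on tangent vectors that only one invariant bilinear form (up to scale) survives, and that form is the Fisher metric. First I would fix the dimension $d$ and consider the simplex $\mathcal P_d$. The monotonicity requirement~(\ref{classical_metric_monotone}) applied to invertible stochastic matrices (permutations and, more importantly, their inverses where those are stochastic) forces $G_{Tp}(Ta,Ta) = G_p(a,a)$ whenever both $T$ and $T^{-1}$ are stochastic, since applying monotonicity in both directions squeezes the inequality to an equality. The relevant invertible stochastic maps include all permutations of the $d$ outcomes, which already forces $G_p$, evaluated at the barycenter $u$, to be a symmetric bilinear form on the tangent space $\{a : \sum_i a_i = 0\}$ that commutes with the permutation representation $S_d$.

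Next I would exploit the structure of the $S_d$-action on the tangent space at $u$. The tangent space $\{a\in\mathbb R^d : \sum_i a_i = 0\}$ is exactly the standard irreducible representation of $S_d$. By Schur's lemma, any $S_d$-invariant symmetric bilinear form on an irreducible real representation is a scalar multiple of the (essentially unique) invariant inner product; for the standard representation this invariant inner product is $\sum_i a_i b_i$, which coincides with the Fisher metric $G_u(a,b) = \sum_i a_i b_i / (1/d) = d\sum_i a_i b_i$ up to the constant $d$. So at the single point $u$ the metric is pinned down up to one overall normalization constant $c$. The remaining task is to propagate this to an arbitrary full-support $p$.

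To move off the barycenter, I would use non-square (embedding/splitting) stochastic maps together with the continuity hypothesis. Given a point $p$ with rational coordinates $p_i = n_i / N$, there is a stochastic map from $\mathcal P_N$ to $\mathcal P_d$ that lumps the first $n_1$ coordinates into outcome $1$, the next $n_2$ into outcome $2$, and so on; the uniform distribution $u^{(N)} \in \mathcal P_N$ maps to $p$, and this coarse-graining map has a stochastic right inverse (the "splitting" map sending mass on outcome $i$ uniformly across its $n_i$ preimages) precisely on the relevant subspace — so monotonicity in both directions again yields equality of the metric values for tangent vectors that are "flat" within each block. A short computation then shows $G_p(a,a)$ for such vectors equals $c\sum_i a_i^2/p_i$. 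Finally, varying $N$ gives all rational $p$, the continuity hypothesis extends this to all full-support $p$, and a standard polarization/bilinearity argument together with the flexibility to choose the blocking shows the identity holds for all tangent vectors $a$, not merely block-flat ones. The conclusion is $G_p(a,b) = c\sum_i a_i b_i/p_i$, i.e. the Fisher information metric up to normalization.

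The main obstacle is the off-barycenter step: making rigorous the claim that the coarse-graining map and its section together force equality, and in particular checking that every tangent vector at $p$ (not just those constant on each block) is reached — this requires combining several different blockings of varying fineness and invoking bilinearity and continuity carefully, rather than a single clean symmetry argument. A secondary subtlety is handling the non-square stochastic maps cleanly, since the earlier parts of the excerpt restrict attention to square stochastic matrices for the classical chapters; here one genuinely needs the rectangular case, so I would state explicitly that monotonicity~(\ref{classical_metric_monotone}) is being used in the general (input and output dimensions differing) form, which is the natural setting for the definition anyway.
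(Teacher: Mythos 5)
The paper does not actually prove this theorem: it states it and cites Chentsov's monograph, so there is no in-text argument to compare against. Judged on its own merits, your proposal is essentially the standard Chentsov--Campbell proof via Markov (congruent) embeddings, and its skeleton is sound: two-sided use of the monotonicity~(\ref{classical_metric_monotone}) for maps with stochastic inverses gives permutation invariance; Schur's lemma on the standard representation of $S_d$ pins $G_u$ on the tangent space $\{a:\sum_i a_i=0\}$ up to a scalar; and the lumping map $T:\mathcal P_N\to\mathcal P_d$ together with its splitting section $S$ (with $Sp=u^{(N)}$, $TS=\mathrm{id}$) forces $G_p(a,a)=G_{u^{(N)}}(Sa,Sa)$ for rational $p$, after which continuity and polarization finish the argument. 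You are also right that dimension-changing stochastic maps are indispensable here, so the monotonicity must be taken in the rectangular form even though the classical chapters of this book restrict to square matrices for convenience.

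Two refinements. First, the obstacle you single out --- ``checking that every tangent vector at $p$ is reached'' --- is not actually an obstacle: the block structure lives in $\mathcal P_N$, not in $\mathcal P_d$. For an \emph{arbitrary} tangent vector $a$ at $p$, the section $S$ produces a block-flat vector $Sa$ at $u^{(N)}$, and the Schur step already determines $G_{u^{(N)}}$ on \emph{all} vectors, so $G_p(a,a)=\gamma_N\sum_i n_i(a_i/n_i)^2=(\gamma_N/N)\sum_i a_i^2/p_i$ holds for every $a$ with no need to combine several blockings. What you do need, and do not state, is that the constant is dimension-independent: the Schur step only gives $G_{u^{(N)}}(b,b)=\gamma_N\sum_j b_j^2$ with $\gamma_N$ a priori depending on $N$, and the universal constant $c=\gamma_N/N$ is obtained by running the same two-sided embedding argument between $u^{(d)}$ and $u^{(N)}$ for $d\mid N$ (and common multiples for general pairs). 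Second, the uniqueness holds only on tangent vectors $\sum_i a_i=0$: on all of $\mathbb R^d$ the form $\sum_i a_ib_i/p_i+\lambda\bigl(\sum_i a_i\bigr)\bigl(\sum_i b_i\bigr)$ with $\lambda>0$ is also bilinear, positive definite, continuous, and monotone (the second term is invariant under every stochastic $T$), so the theorem must be read, as you implicitly do, for the metric restricted to the tangent space of the simplex.
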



\chapter{Classical majorization}
\label{chap:classical_majorization}

Majorization is a useful tool in various fields of information theory~\cite{Bhatia,Marshall}.  
In the context of thermodynamics, a generalized majorization plays a crucial role for characterizing a necessary and sufficient condition for state conversion by thermodynamically feasible transformations  described by Gibbs-preserving maps at finite temperature.

In Section~\ref{sec:majorization}, we start with a simplest case: (ordinary) majorization  related to infinite-temperature thermodynamics.
In Secion~\ref{sec:d_majorization}, we consider a generalized version of majorization,  called thermo-majorization, or more generally, d-majorization (relative majorization).
In Section~\ref{sec:catalytic_majorization}, we briefly mention majorization in the presence of ``catalyst.''
In Section~\ref{sec:continuous_majorization}, we roughly discuss the continuous-variable majorization, without going into mathematical details.  
In Section~\ref{sec:majorization_proof}, we provide rigorous proofs of main theorems of this chapter.

\section{Majorization}
\label{sec:majorization}

We first consider \textit{majorization} for classical probability distributions,
which characterizes state convertibility in a simplest thermodynamic setup where the Gibbs state is just the uniform distribution (i.e.,  all the energy levels are degenerate or the temperature is infinite, $\beta = 0$).

Let us illustrate a motivation to introduce majorization (see also Section~\ref{sec:Shannon_KL}).
Remember that if a classical distribution $p$ is mapped to $p'$ by a doubly stochastic map, the Shannon entropy increases (or does not change), $S_1(p) \leq S_1(p')$.
Then, the central problem of this section is related to the converse: If the Shannon entropy increases, does there always exist a doubly stochastic map that converts  $p$ to $p'$? The answer is negative;
The Shannon entropy does not give a sufficient condition for such state conversion. 
In other words, the Shannon entropy is a monotone but not a complete monotone.
Instead, the full characterization of state convertibility in a necessary and sufficient way is given by majorization.

Let $p$ and $p'$ be classical probability distributions.   We define $p^\downarrow$ by rearranging the components of $p= (p_1, p_2, \cdots, p_d)^{\rm T}$ in the decreasing order: $p_1^\downarrow \geq p_2^\downarrow \geq \cdots \geq p_d^\downarrow$.  We also define $p_i^{\prime \downarrow}$ in the same manner.  Majorization is then defined as follows.

\begin{definition}[Majorization]
Let $p, p' \in \mathcal P_d$.
We say that $p$ majorizes $p'$, written as $p' \prec p$,  if for all $k=1,2, \cdots,  d$,
\begin{equation}
\sum_{i=1}^k p^{\prime \downarrow}_i \leq \sum_{i=1}^kp^{\downarrow}_i.
\label{majorization}
\end{equation}
\label{majorization_definition}
\end{definition}

This definition implies that $p'$ is ``more random'' or ``more uniformly distributed'' than $p$.
To visualize this, we can rephrase Definition~\ref{majorization_definition} by the \textit{Lorenz curve} as follows.
As shown in Fig.~\ref{fig:majorization1},
we plot $1/d, 2/d , \cdots, 1$ (at equal spaces) on the horizontal axis, and plot $p_1^\downarrow, p_1^\downarrow+p_2^\downarrow, \cdots,  p_1^\downarrow + \cdots + p_d^\downarrow (=1)$ on the vertical axis.  
We obtain a concave polyline by connecting these points, which is called the Lorenz curve of $p$.
It is obvious from the definition that $p' \prec p$ holds if and only if  the Lorenz curve of $p$ lies above that of $p'$.

\begin{figure}[t]
\begin{center}
\includegraphics[width=7cm]{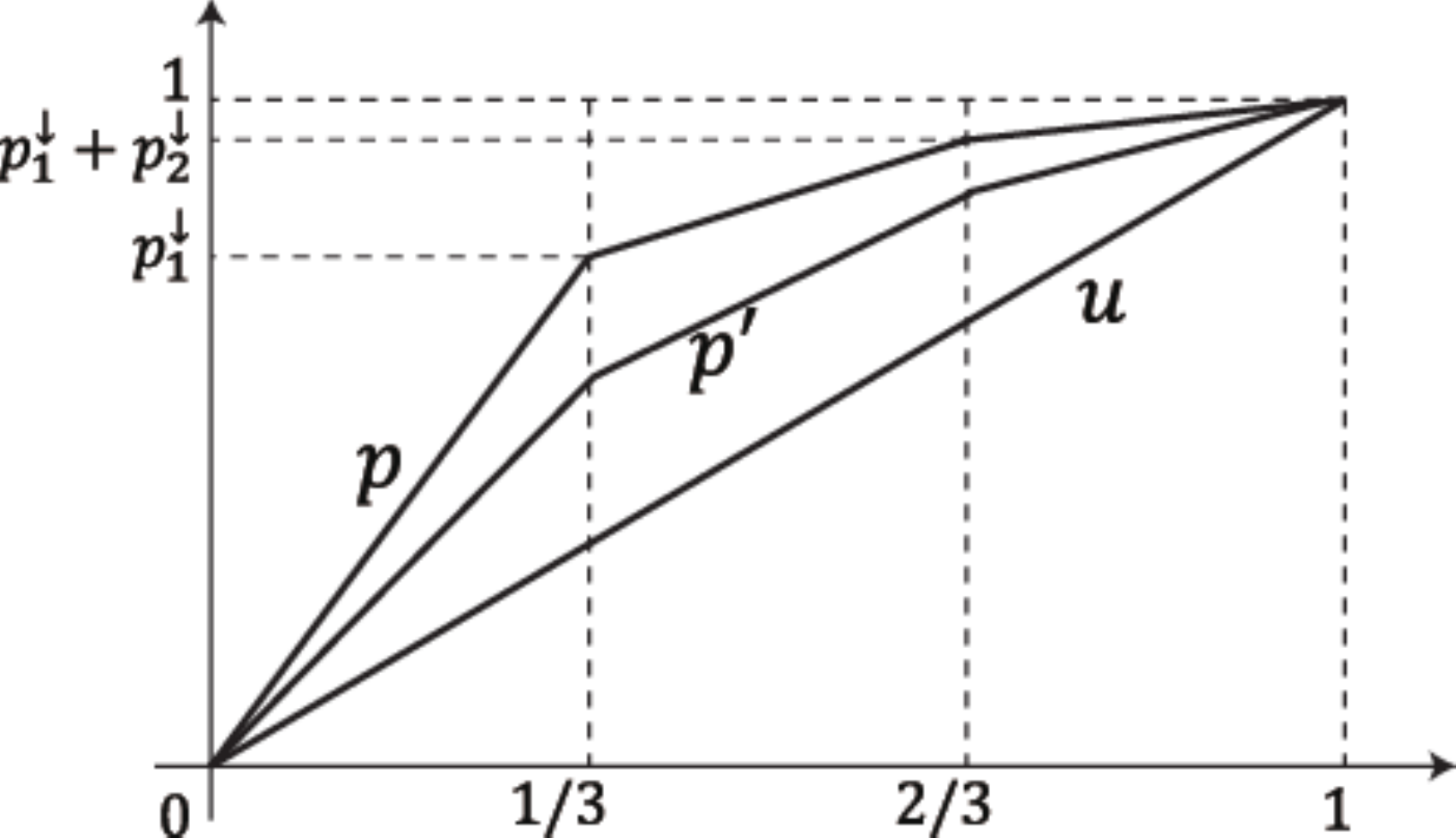}
 \end{center}
 \caption{An example of the Lorenz curve ($d=3$), where $p' \prec p$. The uniform distribution $u$ is represented by the diagonal line, which is majorized by any other distributions.} 
\label{fig:majorization1}
\end{figure}

The Lorenz curve of the uniform distribution $u$ is the diagonal line between $(0,0)$ and $(1,1)$.  Thus, $u \prec p$ holds for all $p$.  
We remark that $\prec$ is not a total order but a preorder. 
In fact,  $p \prec p$ holds for any $p$, and if $p'' \prec p'$ and $p' \prec p$ then $p'' \prec p$ holds.  However, there is a pair of distributions $p,p'$ such that neither $p' \prec p$ nor $p \prec p'$ holds; this happens when the Lorentz curves of $p$ and $p'$ cross with each other.

The definition of majorization can be equivalently rephrased in several ways as follows (see also, e.g., Ref.~\cite{Bhatia}):

\begin{theorem}
Let $p,p' \in \mathcal P_d$. The following are equivalent.
\begin{enumerate}
\item $p' \prec p$.
\item For all $t \in \mathbb R$,
\begin{equation}
\sum_{i=1}^d | p'_i -t | \leq \sum_{i=1}^d | p_i -t |.
\end{equation}
\item For all convex functions $f$,
\begin{equation}
\sum_{i=1}^d f(p'_i) \leq \sum_{i=1}^d f( p_i ).
\label{majorization_monotone}
\end{equation}
\item There exists a doubly stochastic matrix $T$ such that $p' = Tp$.
\end{enumerate}
\label{thm:majorization}
\end{theorem}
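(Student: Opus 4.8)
The plan is to establish the cycle of implications $(1) \Rightarrow (4) \Rightarrow (3) \Rightarrow (2) \Rightarrow (1)$, which is the most economical route given the tools already available in the excerpt. The implication $(3) \Rightarrow (2)$ is immediate, since $x \mapsto |x - t|$ is convex for each fixed $t$. The implication $(4) \Rightarrow (3)$ is exactly Corollary~\ref{f_monotonicity_corollary}: if $p' = Tp$ for a doubly stochastic $T$, then $\sum_i f(p'_i) \leq \sum_i f(p_i)$ for every convex $f$. So the two substantive steps are $(1) \Rightarrow (4)$ and $(2) \Rightarrow (1)$.

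For $(2) \Rightarrow (1)$, I would fix $k$ and choose $t := p^{\downarrow}_k$, so that $t$ sits at the ``knee'' between the large and small components. Then $\sum_i |p^{\downarrow}_i - t| = \sum_{i=1}^k (p^{\downarrow}_i - t) + \sum_{i=k+1}^d (t - p^{\downarrow}_i)$, and using $\sum_i p_i = \sum_i p'_i = 1$ one can rearrange $\sum_i |p_i - t|$ into the form $2\big(\sum_{i=1}^k p^{\downarrow}_i - kt\big) + (\text{something depending only on } t)$; a short computation shows the $t$-dependent remainder is the same for $p$ and $p'$ up to a term that only helps. Comparing the two sides of the hypothesis then isolates $\sum_{i=1}^k p'^{\downarrow}_i \leq \sum_{i=1}^k p^{\downarrow}_i$. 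One must be slightly careful that the rearrangement inequality $\sum_{i=1}^k p'^{\downarrow}_i$ (top $k$ of the sorted $p'$) is what appears, rather than the sum of the $k$ values of $p'$ at the indices where $p$ is largest — but this is handled by noting that the sorted partial sums are the maxima over $k$-subsets.

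The main obstacle is $(1) \Rightarrow (4)$: constructing an explicit doubly stochastic $T$ with $p' = Tp$ from the Lorenz-curve inequalities. The standard approach is to show that $p'$ lies in the convex hull of the permutations of $p$ (equivalently, to invoke that majorization is generated by ``Robin Hood'' transfers). Concretely, I would argue that if $p' \prec p$ and $p' \neq p^\downarrow$ up to permutation, there is a $T$-transform — a doubly stochastic matrix acting as a $2\times 2$ averaging $\lambda I + (1-\lambda)(\text{transposition})$ on a single pair of coordinates — that moves $p$ strictly ``toward'' $p'$ while preserving $p' \prec (\text{new vector})$; iterating finitely many such steps reaches $p'$, and the composition of doubly stochastic matrices is doubly stochastic. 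Alternatively, and perhaps more cleanly for a self-contained treatment, one invokes Birkhoff's theorem together with Hall's marriage / Hardy–Littlewood–Pólya: $p' \prec p$ iff $p' = Dp$ for some doubly stochastic $D$, the classical HLP theorem. Since the excerpt has not proved Birkhoff's theorem, I expect the author to give the direct $T$-transform induction, so that is the route I would write out in detail, tracking that each transfer decreases (weakly) the number of coordinates on which the current vector and $p'$ disagree, or decreases some potential like $\sum_i (p_i - p'_i)^2$, to guarantee termination.

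\begin{proof}
We prove $(1) \Rightarrow (4) \Rightarrow (3) \Rightarrow (2) \Rightarrow (1)$.

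$(4) \Rightarrow (3)$: This is Corollary~\ref{f_monotonicity_corollary}.

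$(3) \Rightarrow (2)$: For fixed $t$, the function $x \mapsto |x - t|$ is convex.

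$(2) \Rightarrow (1)$: Fix $k$ and set $t := p^\downarrow_k$. Writing $a^+ := \max\{a,0\}$ and using $|a| = 2a^+ - a$ together with $\sum_i p_i = \sum_i p'_i$, one checks
\begin{equation}
\sum_{i=1}^d |p_i - t| - \sum_{i=1}^d |p'_i - t| = 2\left( \sum_{i=1}^d (p_i - t)^+ - \sum_{i=1}^d (p'_i - t)^+ \right).
\end{equation}
Now $\sum_{i=1}^d (p_i - t)^+ = \sum_{i=1}^k (p^\downarrow_i - t)$ by the choice of $t$, while for any vector and any threshold $\sum_{i=1}^d (p'_i - t)^+ \geq \sum_{i=1}^k (p'^\downarrow_i - t)$ fails in the wrong direction; instead use $\sum_{i=1}^d (p'_i - t)^+ \geq \sum_{i=1}^k (p'^\downarrow_i - t)^+ \geq \sum_{i=1}^k p'^\downarrow_i - kt$. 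Hence the hypothesis $(2)$ gives $\sum_{i=1}^k p^\downarrow_i - kt \geq \sum_{i=1}^k p'^\downarrow_i - kt$, i.e., $\sum_{i=1}^k p'^\downarrow_i \leq \sum_{i=1}^k p^\downarrow_i$. Since $k$ was arbitrary, $p' \prec p$.

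$(1) \Rightarrow (4)$: Assume $p' \prec p$. We may assume the components of $p$ and $p'$ are already in decreasing order (permutation matrices are doubly stochastic). If $p = p'$ we are done. Otherwise, let $j$ be the smallest index with $p_j \neq p'_j$; then $p_j > p'_j$ by the majorization inequality at $k = j$, and there is some $\ell > j$ with $p_\ell < p'_\ell$ (as the total sums agree). Pick the smallest such $\ell$. Let $\varepsilon := \min\{ p_j - p'_j,\ p'_\ell - p_\ell,\ p_j - p_\ell \} > 0$ up to a harmless adjustment, and define $\tilde p$ by replacing $p_j \mapsto p_j - \varepsilon$ and $p_\ell \mapsto p_\ell + \varepsilon$, leaving all other components fixed. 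This $\tilde p = T_0 p$ where $T_0$ is the $T$-transform $T_0 = (1-\mu)I + \mu P_{j\ell}$ acting on coordinates $j,\ell$ (with $\mu$ chosen so that the above substitution holds), which is doubly stochastic. One verifies directly that $p' \prec \tilde p$ still holds: the partial sums $\sum_{i=1}^k \tilde p^\downarrow_i$ are unchanged for $k < j$ and $k \geq \ell$, and for $j \leq k < \ell$ they decreased by $\varepsilon$ while remaining $\geq \sum_{i=1}^k p'^\downarrow_i$ by the choice of $\varepsilon$ (pushing no further than $p'$ allows). Moreover $\sum_i (\tilde p_i - p'_i)^2 < \sum_i (p_i - p'_i)^2$ strictly, so iterating this procedure terminates after finitely many steps at $p'$. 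The composition of all the $T$-transforms (and the initial sorting permutations) is doubly stochastic, giving $p' = Tp$.
\end{proof}
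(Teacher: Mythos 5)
Your cycle $(1)\Rightarrow(4)\Rightarrow(3)\Rightarrow(2)\Rightarrow(1)$ is sound in structure, and three of the four arrows essentially coincide with the paper: $(4)\Rightarrow(3)$ is Corollary~\ref{f_monotonicity_corollary}, $(3)\Rightarrow(2)$ is the convexity of $|x-t|$, and your $(2)\Rightarrow(1)$ (choosing $t=p_k^\downarrow$ and comparing positive parts) is the same computation the paper does for (i)$\Leftrightarrow$(ii). Where you genuinely depart from the paper is $(1)\Rightarrow(4)$: the paper does \emph{not} use $T$-transforms. It argues by contradiction with the hyperplane separation theorem: if $p'\notin\mathcal D(p):=\{Tp : T \text{ doubly stochastic}\}$, a separating functional $r$ (taken with non-negative entries) would give $(Tp,r)<(p',r)$ for all doubly stochastic $T$, and choosing $T$ a suitable permutation contradicts the rearrangement-type inequality $(v^\downarrow,w^\downarrow)\geq(v',w^\downarrow)$ valid whenever $v'\prec v$. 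That route is non-constructive but short, and it is exactly the argument that survives in the continuous-variable setting (Hahn--Banach in place of finite-dimensional separation), which the paper exploits later; your Robin-Hood route is constructive and elementary, but it is also where your proposal has a real gap.

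The gap is termination of the $T$-transform iteration. The inference ``$\sum_i(\tilde p_i-p'_i)^2<\sum_i(p_i-p'_i)^2$ strictly, so the procedure terminates after finitely many steps'' is invalid: a strictly decreasing sequence of non-negative reals need not reach zero in finitely many steps, so you need an integer-valued potential. The natural candidate (the number of coordinates where the current vector disagrees with $p'$) does drop in the \emph{unsorted} vector, because your $\varepsilon=\min\{p_j-p'_j,\,p'_\ell-p_\ell\}$ forces $\tilde p_j=p'_j$ or $\tilde p_\ell=p'_\ell$; but your selection rule for $j,\ell$ and your verification of $p'\prec\tilde p$ both use sortedness, and re-sorting can destroy the gain. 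Concretely, take $p=(0.5,\,0.45,\,0.05)$ and $p'=(0.4,\,0.35,\,0.25)$ (both sorted, $p'\prec p$): your rule gives $j=1$, $\ell=3$, $\varepsilon=0.1$, hence $\tilde p=(0.4,\,0.45,\,0.15)$, whose sorted version $(0.45,\,0.4,\,0.15)$ again disagrees with $p'$ in all three coordinates; the same example shows the sorted partial sum at $k=1$ drops by $0.05$, not by $\varepsilon$, so your ``decreased by $\varepsilon$'' claim is not literally correct either (harmlessly so: what you need is $\sum_{i\le k}\tilde p^\downarrow_i\geq\sum_{i\le k}\tilde p_i\geq\sum_{i\le k}p_i-\varepsilon\geq\sum_{i\le k}p'_i$, using that sorted partial sums dominate unsorted ones, as you yourself note elsewhere). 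To repair the induction you should either avoid re-sorting (and then justify the choice of $j,\ell$ and the preservation of majorization for a not-necessarily-sorted current vector), or use the standard dimension-reduction version: since $y_d\leq x_1\leq y_1$ one $T$-transform on a pair $(1,k)$ with $y_k\leq x_1\leq y_{k-1}$ makes the top coordinate exactly $x_1$, one checks the truncated vectors still majorize, and induction on $d$ gives at most $d-1$ transforms. Alternatively, adopt the paper's separation argument, which sidesteps the bookkeeping entirely.
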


We will present the proof of this theorem in Section~\ref{sec:majorization_proof}.
Here we only remark on the following point.
Given (iv) above,  (iii) is just the monotonicity  in the form of Corollary~\ref{f_monotonicity_corollary}.
In particular, by taking  $f(x) :=  x\ln x$, inequality (\ref{majorization_monotone}) implies that the Shannon entropy does not decrease under  doubly stochastic maps.
Thus, Theorem~\ref{thm:majorization} implies that  the Shannon entropy $S_1 (p)$ does not provide a sufficient  condition for state convertibility under doubly stochastic maps; we need to take into account all convex functions $f$ to obtain a sufficient condition.
In terms of resource theory, 
 $\sum_{i=1}^d f( p_i )$ with all convex functions constitute a complete set of monotones for doubly stochastic maps.

We note an explicit example that $S_1(p) \leq S_1(p')$ is not sufficient for the existence of a doubly stochastic matrix $T$ such that $p' = Tp$~\cite{Gour}. 
Let $p=(2/3,1/6,1/6)^{\rm T}$ and $p' = (1/2,1/2,0)^{\rm T}$. 
In this case, it is easy to check that $S_1(p) < S_1(p' )$ holds, while $p' \prec  p$ does not.

We note that the uniform distribution $u$ is the fixed point of doubly stochastic matrices: $u =Tu$.
In terms of thermodynamics, $u$ is regarded as the Gibbs state of a system with all the energy levels being degenerate or at infinite temperature $\beta = 0$;
a doubly stochastic map represents  a thermodynamic process that does not change such a special Gibbs state $u$.

We remark the following theorem as a fundamental characterization of doubly stochastic matrices in the classical case.

\begin{theorem}[Birkhoff's theorem]
Every extreme point of the set of doubly stochastic matrices is a permutation matrix.  That is, the following are equivalent.
\begin{enumerate}
\item $T$ is a doubly stochastic matrix.
\item $T$ can be written as a convex combination of permutation matrices: There exist permutation matrices $P_k$ and coefficients $r_k \geq 0$ with $\sum_k r_k = 1$ such that $T=\sum_k r_k P_k$.
\end{enumerate}
\label{thm:Birkhoff}
\end{theorem}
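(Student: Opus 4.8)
The plan is to establish the stated equivalence in two directions, the implication (ii)\,$\Rightarrow$\,(i) being immediate and (i)\,$\Rightarrow$\,(ii) carrying all the substance, and to route the latter through the extreme-point description in the headline sentence. For (ii)\,$\Rightarrow$\,(i): every permutation matrix is visibly doubly stochastic, and the set $\mathcal{D}_d$ of $d\times d$ doubly stochastic matrices is the solution set in $\mathbb{R}^{d^2}$ of the linear equalities $\sum_i T_{ij}=1$, $\sum_j T_{ij}=1$ and the inequalities $T_{ij}\geq 0$, hence convex; so any convex combination of permutation matrices lies in $\mathcal{D}_d$.

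For (i)\,$\Rightarrow$\,(ii) I would argue via extreme points. Note first that $\mathcal{D}_d$ is compact (closed and bounded in $\mathbb{R}^{d^2}$) and convex, so by the Minkowski--Krein--Milman theorem it is the convex hull of its extreme points, and in finite dimension Carath\'eodory's theorem makes the representing combinations finite; hence it suffices to prove that every extreme point of $\mathcal{D}_d$ is a permutation matrix. (One should also record the easy converse inclusion, namely that each permutation matrix $P$ is itself extreme: if $P=\lambda A+(1-\lambda)B$ with $A,B\in\mathcal{D}_d$ and $\lambda\in(0,1)$, then nonnegativity forces $A_{ij}=B_{ij}=0$ wherever $P_{ij}=0$, and the row-sum constraint then forces $A=B=P$.)

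Now take $T\in\mathcal{D}_d$ that is \emph{not} a permutation matrix; the goal is to write $T$ as a nontrivial midpoint of two other elements of $\mathcal{D}_d$. Form the bipartite graph $G$ on the vertex set $\{\text{rows}\}\sqcup\{\text{columns}\}$ with an edge $\{i,j\}$ whenever $0<T_{ij}<1$. The key combinatorial observation is that every non-isolated vertex of $G$ has degree at least $2$: a row containing a fractional entry cannot have all its other entries in $\{0,1\}$, since the row sum $1$ would then fail to be attained; likewise for columns. A bipartite graph whose non-isolated vertices all have degree $\geq 2$ contains a cycle, necessarily of even length, say $i_1-j_1-i_2-j_2-\cdots-i_\ell-j_\ell-i_1$. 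Let $C\in\mathbb{R}^{d\times d}$ be supported on the edges of this cycle with alternating entries $\pm 1$; then each row and each column of $C$ contains exactly one $+1$ and one $-1$, so all row and column sums of $C$ vanish. Since $T$ lies strictly between $0$ and $1$ on the support of $C$, for all sufficiently small $\varepsilon>0$ both $T+\varepsilon C$ and $T-\varepsilon C$ have entries in $[0,1]$ while retaining unit row and column sums, hence lie in $\mathcal{D}_d$; and $T=\tfrac12(T+\varepsilon C)+\tfrac12(T-\varepsilon C)$ with $T+\varepsilon C\neq T-\varepsilon C$, so $T$ is not extreme. Combined with the Krein--Milman step this yields (i)\,$\Rightarrow$\,(ii).

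The step needing the most care is the cycle extraction: verifying the degree-$\geq 2$ property at every non-isolated vertex, extracting an even cycle, and then arranging the alternating $\pm 1$ pattern so that $C$ has vanishing row and column sums. An alternative that avoids Krein--Milman altogether is a direct induction on the number of nonzero entries of $T$, peeling off one permutation matrix at a time — the inductive step locating a permutation inside the support of $T$ by Hall's marriage theorem, whose hypothesis is checked by a mass-counting argument (rows in a set $S$ carry total mass $|S|$, distributed among the columns they meet, each of mass $1$, forcing at least $|S|$ such columns). I would present the extreme-point argument as the primary route, since it matches the way the theorem is phrased, and mention the inductive route as a remark.
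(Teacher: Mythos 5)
Your proof is correct, and it actually goes further than the paper does: for the nontrivial direction (i)\,$\Rightarrow$\,(ii) the paper offers no argument at all, simply recording that (ii)\,$\Rightarrow$\,(i) is trivial and deferring (i)\,$\Rightarrow$\,(ii) to Theorem II.2.3 of Bhatia with the remark that the proof ``is not very easy.'' What you supply is essentially that standard proof, self-contained: compactness and convexity of the Birkhoff polytope, Minkowski/Krein--Milman plus Carath\'eodory to reduce to showing extreme points are permutations, and then the cycle-perturbation argument --- the parity observation that a row (or column) containing an entry in $(0,1)$ must contain a second such entry, since the remaining entries would otherwise sum to an integer while the row sum forces them to sum to $1-T_{ij}\in(0,1)$; the even cycle in the resulting bipartite graph; and the alternating $\pm1$ matrix $C$ with vanishing row and column sums giving $T=\tfrac12(T+\varepsilon C)+\tfrac12(T-\varepsilon C)$. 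All the steps check out (including the easy facts that permutation matrices are themselves extreme and that a matrix with all entries in $\{0,1\}$ and unit row and column sums is a permutation). Your alternative sketch --- induction on the support, peeling off a permutation found via Hall's marriage theorem with the mass-counting verification of Hall's condition --- is also sound and is the other classical route; it avoids convex-geometry machinery at the cost of invoking Hall. Either version would serve as a legitimate replacement for the citation the paper relies on.
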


\begin{proof}
(ii) $\Rightarrow$ (i) is trivial.  (i) $\Rightarrow$ (ii) is Theorem II.2.3 of Ref.~\cite{Bhatia}, whose proof is not very easy. $\Box$
\end{proof}

So far, we have considered majorization of probability distributions in $\mathcal P_d$. 
More generally, we can define majorization of vectors in $\mathbb R^d$:
For $p,p' \in \mathbb R^d$, we write $p' \prec p$ if inequality~(\ref{majorization})  and $\sum_{i=1}^d p_i = \sum_{i=1}^d p_i'$ are satisfied.
Theorem~\ref{thm:majorization} still holds under this definition, where the proof goes in completely the same manner.
From this viewpoint, we remark the following proposition for characterization of doubly stochastic matrices:

\begin{proposition}[Theorem II.1.9 of \cite{Bhatia}]
A matrix $T$ is doubly stochastic if and only if $Tp \prec p$ for all $p \in \mathbb R^d$.
\label{prop:characterize_DSM}
\end{proposition}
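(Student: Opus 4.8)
The plan is to prove the two implications separately: the ``only if'' direction will follow immediately from the vector version of Theorem~\ref{thm:majorization}, while the converse will be obtained by feeding a few well-chosen test vectors into the hypothesis. For the ``only if'' direction, suppose $T$ is doubly stochastic. For any $p \in \mathbb{R}^d$ the column-sum condition gives $\sum_i (Tp)_i = \sum_j p_j \sum_i T_{ij} = \sum_j p_j$, so $Tp$ and $p$ have the same total; then $Tp \prec p$ is exactly the implication (iv) $\Rightarrow$ (i) of Theorem~\ref{thm:majorization} (applied to vectors in $\mathbb{R}^d$, as the excerpt notes is permissible), with $p' := Tp$.

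For the converse, assume $Tp \prec p$ for every $p \in \mathbb{R}^d$. First I would probe with the standard basis vectors $e_j$, so that $Te_j$ is the $j$-th column of $T$ and $Te_j \prec e_j$. The equal-total clause in the definition of vector majorization yields $\sum_i T_{ij} = \sum_i (e_j)_i = 1$, i.e. every column sums to one. To get nonnegativity I would use inequality~(\ref{majorization}) at $k = d-1$: since $\sum_{i=1}^{d-1} v_i^{\downarrow} = \bigl(\sum_i v_i\bigr) - \min_i v_i$ for any $v$, and $\sum_{i=1}^{d-1}(e_j)_i^{\downarrow} = 1$, this reads $1 - \min_i T_{ij} \le 1$, hence $T_{ij} \ge 0$ for all $i,j$. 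Thus $T$ is a stochastic matrix.

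It remains to show the rows of $T$ also sum to one, and for this I would probe with the constant vector $\mathbf 1 := (1,1,\dots,1)^{\rm T}$: the hypothesis gives $T\mathbf 1 \prec \mathbf 1$. The point is that a constant vector is majorized only by itself among vectors of equal total — indeed, if $v \prec \mathbf 1$ then $k=1$ in~(\ref{majorization}) forces $\max_i v_i \le 1$, and combined with $\sum_i v_i = d$ this forces $v = \mathbf 1$. Hence $T\mathbf 1 = \mathbf 1$, i.e. $\sum_j T_{ij} = 1$ for every $i$, and together with the previous paragraph $T$ is doubly stochastic. The argument is elementary; the only thing that requires a moment's thought is the choice of probes and majorization indices (basis vectors with $k=d-1$ for column nonnegativity, the constant vector with $k=1$ for the row sums), and a minor point worth flagging is that the easy direction leans on Theorem~\ref{thm:majorization}, whose own proof is deferred to Section~\ref{sec:majorization_proof}.
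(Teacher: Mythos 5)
Your proof is correct and follows essentially the same route as the paper: the ``only if'' direction is (iv) $\Rightarrow$ (i) of Theorem~\ref{thm:majorization}, and the ``if'' direction is obtained by probing with the standard basis vectors and the constant (uniform-type) vector, exactly the test vectors the paper indicates. You simply spell out the bookkeeping (the $k=d-1$ and $k=1$ majorization inequalities) that the paper leaves implicit.
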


\begin{proof}
The ``only if'' part is (iv) $\Rightarrow$ (i) of Theorem~\ref{majorization_monotone}. We can prove the ``if'' part by choosing $p$ to be the uniform distribution $u$ and the distributions of the form $(0, \cdots, 0, 1, 0, \cdots, 0)^{\rm T}$.
$\Box$
\end{proof}

We finally remark on the concept called \textit{Schur-convexity}.
A function $F : \mathbb R^d \to \mathbb R$ is called a Schur-convex function, if $p' \prec p$ implies $F(p') \leq F(p)$ for any $p,p' \in \mathbb R^d$.
In other words, a Schur-convex function is a monotone of majorization. 
For example, $F(p) := \sum_{i=1}^d f(p_i)$ with $f$ being a convex function is Schur-convex from Theorem~\ref{thm:majorization} (iii).
The following proposition is known as a characterization of Schur-convex functions.

\begin{proposition}[Theorem II.3.14 of~\cite{Bhatia}]
A differentiable function $F: \mathbb R^d \to \mathbb R$ is Schur-convex, if and only if $F$ is permutation invariant (i.e., $F(Pp) = F(p)$ for all permutation $P$) and for all $p \in \mathbb R^d$ and $i,j$,
\begin{equation}
( p_i - p_j ) \left( \frac{\partial F}{\partial p_i} - \frac{\partial F}{\partial p_j} \right) \geq 0.
\end{equation}
\end{proposition}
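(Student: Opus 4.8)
The plan is to prove the two implications separately. Throughout, write $P_{ij}$ for the permutation matrix transposing the $i$-th and $j$-th coordinates, and recall from Theorem~\ref{thm:majorization} that $q=Dp$ for a doubly stochastic $D$ implies $q\prec p$; since the doubly stochastic matrices form a convex set, the set $\{q\in\mathbb R^d:q\prec p\}$ is convex, a fact I will use repeatedly. Once permutation invariance is available I will also freely pass to decreasing rearrangements, using $F(Pp)=F(p)$ and $Pp'\prec Pp\iff p'\prec p$ for every permutation $P$ (majorization of $\mathbb R^d$-vectors depends only on the sorted vectors and their common coordinate sum).

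First assume $F$ is Schur-convex. For any permutation $P$ one has both $Pp\prec p$ and $p\prec Pp$ (the partial sums of the sorted vectors agree), so Schur-convexity gives $F(Pp)\le F(p)$ and $F(p)\le F(Pp)$, i.e.\ $F$ is permutation invariant. For the derivative inequality, fix $p$ and $i\neq j$; the expression $(p_i-p_j)(\partial_iF-\partial_jF)$ is symmetric in $i,j$ and vanishes when $p_i=p_j$, so I may assume $p_i>p_j$. For $t\in[0,\,p_i-p_j]$ set $q(t):=p-t\,\mathbf e_i+t\,\mathbf e_j$, where $\mathbf e_i$ denotes a standard basis vector; then $q(t)=\big((1-\lambda)I+\lambda P_{ij}\big)p$ with $\lambda=t/(p_i-p_j)\in[0,1]$, and the coefficient matrix is doubly stochastic, so $q(t)\prec p$. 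Hence $h(t):=F(q(t))$ satisfies $h(t)\le h(0)=F(p)$ on $[0,\,p_i-p_j]$, so its (two-sided, by differentiability of $F$) derivative at the left endpoint obeys $h'(0)\le 0$; since $h'(0)=\partial_jF(p)-\partial_iF(p)$, multiplying by $p_i-p_j>0$ yields the claim.

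Conversely, assume $F$ is permutation invariant and satisfies the derivative inequality, and let $p'\prec p$. Replacing $p,p'$ by their decreasing rearrangements (harmless by permutation invariance), assume $p_1\ge\cdots\ge p_d$, $p'_1\ge\cdots\ge p'_d$, so that $V_k:=\sum_{i=1}^k(p_i-p'_i)\ge 0$ for all $k$, with $V_0=V_d=0$ (the sum condition in $p'\prec p$). Consider the affine path $\gamma(s):=(1-s)p'+sp$ for $s\in[0,1]$; as a convex combination of two decreasingly ordered vectors it is itself decreasingly ordered, i.e.\ $\gamma(s)_k\ge\gamma(s)_{k+1}$. By the chain rule $\tfrac{d}{ds}F(\gamma(s))=\sum_{i=1}^d(p_i-p'_i)\,\partial_iF(\gamma(s))$, and Abel summation rewrites this as $\sum_{k=1}^{d-1}V_k\big(\partial_kF(\gamma(s))-\partial_{k+1}F(\gamma(s))\big)$. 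Each $V_k\ge 0$, and each bracket is $\ge 0$: if $\gamma(s)_k>\gamma(s)_{k+1}$ the derivative inequality at $\gamma(s)$ with indices $k,k+1$ gives it directly; if $\gamma(s)_k=\gamma(s)_{k+1}$, then $\gamma(s)$ is a fixed point of $P_{k,k+1}$, and differentiating the identity $F=F\circ P_{k,k+1}$ at that point shows $\nabla F(\gamma(s))$ is fixed by $P_{k,k+1}$, so $\partial_kF(\gamma(s))=\partial_{k+1}F(\gamma(s))$. Therefore $\tfrac{d}{ds}F(\gamma(s))\ge 0$ on $[0,1]$, whence $F(p')=F(\gamma(0))\le F(\gamma(1))=F(p)$ by the mean value theorem.

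The Abel-summation identity and the chain-rule computations are routine. The one genuinely delicate point is the equal-coordinate case $\gamma(s)_k=\gamma(s)_{k+1}$ in the converse: there the derivative inequality alone says only $0\ge 0$, and the missing sign must be extracted from permutation invariance of $F$ at the fixed point — which is precisely why both hypotheses are indispensable and is the heart of the argument. A minor but worthwhile remark is that this route never invokes the ``finitely many $T$-transforms'' decomposition of majorization; it uses only convexity of $\{q:q\prec p\}$ together with Theorem~\ref{thm:majorization}, keeping the proof self-contained relative to the material already in hand.
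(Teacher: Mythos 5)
Your argument is correct. Note that the paper itself does not prove this proposition at all — it is quoted from Bhatia (Theorem II.3.14) without proof — so there is no in-text argument to compare against; the relevant comparison is with the standard textbook route. The usual proof of the ``if'' direction reduces to the two-coordinate case by decomposing $p' \prec p$ into finitely many T-transforms (or, equivalently, by working through Theorem~\ref{thm:Birkhoff}), whereas you integrate along the single affine path $\gamma(s)=(1-s)p'+sp$ between the decreasing rearrangements, use Abel summation with the nonnegative partial-sum gaps $V_k$, and handle the tied-coordinate case $\gamma(s)_k=\gamma(s)_{k+1}$ by differentiating the identity $F=F\circ P_{k,k+1}$ at a fixed point of the transposition to conclude $\partial_k F=\partial_{k+1}F$ there. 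That last step is exactly the point where the Schur--Ostrowski inequality alone is vacuous, and you are right to flag it; your treatment of it is sound (it does require interpreting ``differentiable'' as total differentiability so that the chain rule and the mean value theorem apply along the path, which is the standard reading). The ``only if'' direction via $q(t)=\bigl((1-\lambda)I+\lambda P_{ij}\bigr)p \prec p$ and the one-sided derivative at $t=0$ is the same idea one finds in the literature. What your route buys is self-containedness: it needs only Theorem~\ref{thm:majorization} (iv) $\Rightarrow$ (i) and elementary calculus, and avoids the T-transform decomposition lemma, which the paper never states; the cost is the slightly more delicate bookkeeping at tied coordinates, which the T-transform reduction sidesteps by construction.
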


\section{d-majorization and thermo-majorization}
\label{sec:d_majorization}

We next consider a generalization of majorization, called \textit{d-majorization} (or \textit{relative majorization}) for classical distributions (see also, e.g., Refs.~\cite{Brandao2015,Marshall,Renes2015}).
This concept was originally introduced as a generalization of majorization in the context of statistical comparison~\cite{Blackwell,Torgersen,Ruch}.
While ordinary majorization concerns convertibility from a  distribution to another distribution,  d-majorization concerns convertibility between pairs of  distributions.
A special class of d-majorization is called \textit{thermo-majorization}~\cite{Horodecki2013}, which characterizes a necessary and sufficient condition for state convertibility by thermodynamic processes at finite temperature $\beta > 0$.

We first define d-majorization by using the (generalized or relative) Lorenz curve.
We consider classical probability distributions $p=(p_1, \cdots, p_d)^{\rm T}$ and $q=(q_1, \cdots, q_d)^{\rm T}$.
We define $p^\ast$ and $q^\ast$ by rearranging their components such that $p_1^\ast / q_1^\ast \geq p_2^\ast / q_2^\ast \geq \cdots \geq p_d^\ast / q_d^\ast$ holds, where the ways of rearranging the components are the same for $p$ and $q$.
As shown in Fig.~\ref{fig:majorization2} (a), we plot $q_1^\ast, q_1^\ast+q_2^\ast, \cdots, q_1^\ast + \cdots + q_d^\ast (=1)$ on the horizontal axis and plot $p_1^\ast, p_1^\ast+p_2^\ast, \cdots, p_1^\ast + \cdots + p_d^\ast (=1)$ on the vertical axis.
By connecting these points, we obtain a concave polyline, which is the (generalized) Lorenz curve of  the pair $(p,q)$.

\begin{definition}
Let $p,q,p',q' \in \mathcal P_d$.
We say that $(p,q)$ d-majorizes $(p', q')$, written as $(p' , q' ) \prec (p, q)$, if the Lorenz curve of $(p,q)$ lies above that of $(p', q')$.
\label{def:d_majorization}
\end{definition}

Figure \ref{fig:majorization2} (b) shows an example of the Lorenz curves with $(p', q') \prec (p,q)$.  The diagonal line describes the Lorenz curve of $(q'',q'')$ for any $q''$, which is d-majorized by all other $(p,q)$.

\begin{figure}[t]
\begin{center}
\includegraphics[width=12cm]{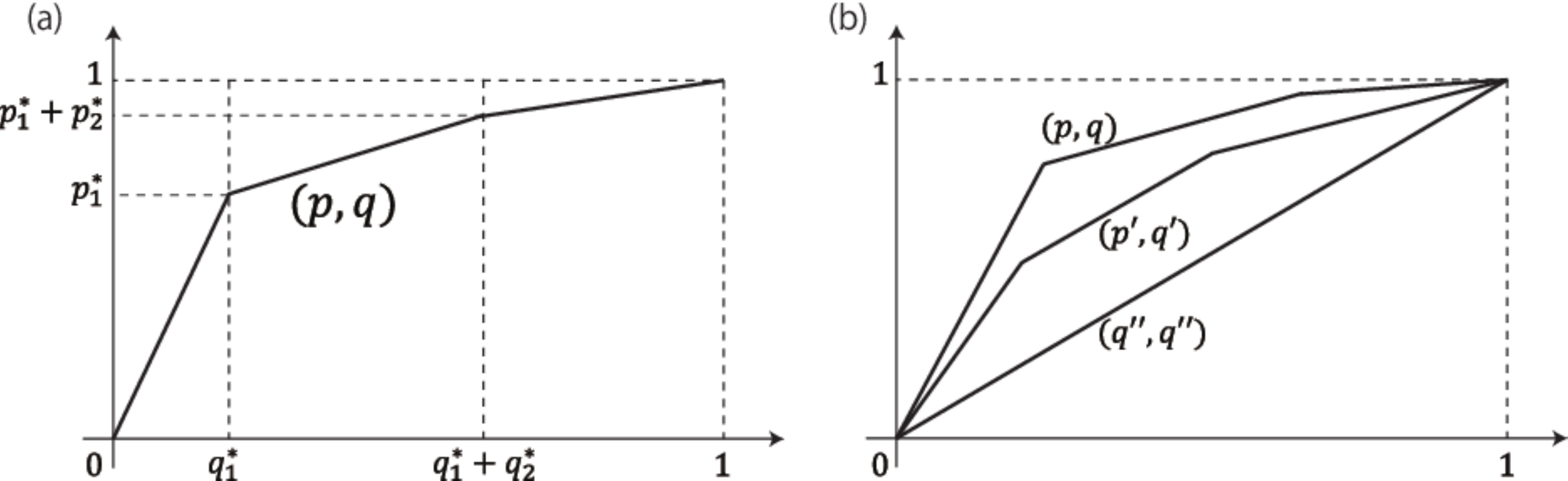}
 \end{center}
 \caption{Lorenz curves for d-majorization ($d=3$).  (a) An example of the Lorenz curve of $(p,q)$.  (b)  An example of a situation where $(p', q') \prec (p,q)$.  The diagonal line describes $(q'',q'')$.} 
\label{fig:majorization2}
\end{figure}

As is the case for ordinary majorization, the definition of d-majorization can be equivalently rephrased in several ways as follows, which is referred to as the Blackwell's theorem~\cite{Blackwell} (see also Ref.~\cite{Torgersen}):

\begin{theorem}[Blackwell's theorem]
Let $p,q,p',q' \in \mathcal P_d$ and suppose that $q,q'$ have full rank.
Then, the following are equivalent.
\begin{enumerate}
\item $(p', q' ) \prec (p,q)$.
\item For all $t \in \mathbb R$,
\begin{equation}
\sum_{i=1}^d | p'_i -t q_i'| \leq \sum_{i=1}^d | p_i -t q_i|.
\label{d_majorization_inequality}
\end{equation}
\item For all convex functions $f$,
\begin{equation}
\sum_{i=1}^d q_i' f\left( \frac{p'_i}{q_i'} \right) \leq \sum_{i=1}^d q_i f \left( \frac{p_i}{q_i} \right).
\label{d_majorization_monotone}
\end{equation}
\item There exists a stochastic matrix $T$ such that $p' = Tp$ and $q' = Tq$.
\end{enumerate}
\label{thm:d_majorization}
\end{theorem}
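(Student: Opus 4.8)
The plan is to establish the cycle of implications (i) $\Rightarrow$ (iii) $\Rightarrow$ (ii) $\Rightarrow$ (i) and, separately, (iv) $\Rightarrow$ (iii) together with (i) $\Rightarrow$ (iv), so that all four statements are linked. The implication (iv) $\Rightarrow$ (iii) is immediate: it is exactly Lemma~\ref{f_monotonicity_lemma} applied to the stochastic matrix $T$ guaranteed by (iv). Likewise (iii) $\Rightarrow$ (ii) is trivial, since for each fixed $t \in \mathbb R$ the function $f(x) := |x - t|$ is convex, and substituting it into (\ref{d_majorization_monotone}) gives $\sum_i q_i' |p_i'/q_i' - t| \leq \sum_i q_i |p_i/q_i - t|$, which is (\ref{d_majorization_inequality}) after multiplying through by the (positive) $q_i', q_i$ inside each absolute value. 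So the substantive work is in (i) $\Rightarrow$ (iii), (ii) $\Rightarrow$ (i), and (i) $\Rightarrow$ (iv).

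For (i) $\Rightarrow$ (iii), I would work directly with the Lorenz curve. After reordering so that $p_i^\ast / q_i^\ast$ is nonincreasing, the Lorenz curve of $(p,q)$ is the concave polyline through the points $\bigl(\sum_{i\le k} q_i^\ast, \sum_{i\le k} p_i^\ast\bigr)$; its slope on the $k$-th segment is exactly the likelihood ratio $p_k^\ast/q_k^\ast$. The key observation is that for a convex $f$, the quantity $\sum_i q_i f(p_i/q_i)$ can be read off the Lorenz curve: it equals $\int_0^1 f'_{\mathrm{slope}}$-type data, or more cleanly, it is a functional that is monotone under the pointwise ordering of concave curves with the same endpoints. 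Concretely, I would use the fact that any concave polyline from $(0,0)$ to $(1,1)$ lying below another can be obtained from it by a sequence of "smoothing" or averaging operations on adjacent segments, and each such operation can only decrease $\sum_i q_i f(p_i/q_i)$ by Jensen's inequality (this is the continuous analogue of the elementary $T$-transform argument in ordinary majorization). Alternatively, and perhaps more transparently, I would prove (ii) $\Rightarrow$ (i) first and then close the loop via (iii) $\Rightarrow$ (ii) already shown: to get (ii) $\Rightarrow$ (i), note that $\sum_i |p_i - t q_i| = \sum_i q_i |p_i/q_i - t|$ and observe that as a function of $t$ this is piecewise linear with breakpoints at the likelihood ratios; evaluating it at $t = p_k^\ast/q_k^\ast$ and comparing the resulting expressions recovers precisely the partial-sum inequalities defining the Lorenz-curve ordering. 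This still leaves (i) $\Rightarrow$ (iii) or (i) $\Rightarrow$ (iv) to be proven to complete the cycle.

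The main obstacle is (i) $\Rightarrow$ (iv): constructing an explicit stochastic matrix $T$ with $p' = Tp$ and $q' = Tq$ from the geometric hypothesis that one Lorenz curve dominates the other. The plan here is a reduction-plus-interpolation argument. First, reduce to the case $q = q'$ by absorbing the "change of reference" into a fixed stochastic map (or by a limiting/refinement argument on the partition); this is the step that makes d-majorization genuinely harder than ordinary majorization, where $q = q' = u$ automatically. Then, for the equal-reference case, mimic the classical Hardy--Littlewood--P\'olya construction: since the Lorenz curve of $(p', q)$ lies below that of $(p,q)$, one can write $p'$ as a finite composition of elementary "$d$-Robin-Hood" transfers, each of which moves mass between two coordinates $i,j$ in a way that preserves $q$ and is implemented by a stochastic matrix fixing $q$; composing them yields the desired $T$. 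The bookkeeping — verifying that each elementary transfer is a bona fide stochastic matrix with the right fixed point, and that finitely many of them suffice — is the technical heart, and I expect it to be where most of the real work lies. Since the theorem is attributed to Blackwell and Torgersen, I would also be content to cite \cite{Blackwell,Torgersen} for the hardest direction and give full details only for the easy implications, in keeping with the expository style of this book; the detailed construction can be deferred to Section~\ref{sec:majorization_proof} alongside the proof of Theorem~\ref{thm:majorization}.
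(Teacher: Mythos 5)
Your treatment of the easy implications matches the paper: (iv) $\Rightarrow$ (iii) is Lemma~\ref{f_monotonicity_lemma}, (iii) $\Rightarrow$ (ii) is the choice $f(x)=|x-t|$, and (ii) $\Leftrightarrow$ (i) by evaluating the piecewise-linear function of $t$ at the breakpoints is essentially what the paper does (though note the paper first passes to a common refinement $\tilde q$ of the two partitions, since the horizontal breakpoints of the Lorenz curves of $(p,q)$ and $(p',q')$ generally differ; your sketch glosses over this, but it is a repairable detail).

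The genuine gap is the hard direction (ii) $\Rightarrow$ (iv) (equivalently (i) $\Rightarrow$ (iv)), which you do not actually prove. Your plan has two problems. First, the proposed reduction "to the case $q=q'$ by absorbing the change of reference into a fixed stochastic map" is unsupported and looks circular: producing a stochastic map that carries $q$ to $q'$ while keeping the right relation between $p$ and $p'$ is precisely the kind of existence statement the theorem asserts, and no construction or argument is given for why such a reduction is available. Second, even granting the reduction, the Hardy--Littlewood--P\'olya-style ``$d$-Robin-Hood'' decomposition for a general fixed point $q$ is exactly the technical heart you acknowledge but do not carry out, and you then explicitly propose to cite \cite{Blackwell,Torgersen} for it — which means the statement is not proved. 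The paper's route is different and self-contained: assuming first that all components of $q,q'$ are rational, $q_i=m_i/M$, $q_i'=m_i'/M$, it lifts the problem to an $M$-dimensional \emph{ordinary} majorization problem by splitting each $p_i$ uniformly over $m_i$ slots (so that condition (ii) becomes exactly condition (ii) of Theorem~\ref{thm:majorization} for the lifted vectors), applies Theorem~\ref{thm:majorization} to obtain an $M\times M$ doubly stochastic $\bar T$, coarse-grains $\bar T$ block-wise and renormalizes columns to obtain a $d\times d$ stochastic $T$ with $p'=Tp$ and $q'=Tq$, and finally removes the rationality assumption by approximating $q,q'$ and extracting a convergent subsequence of the resulting stochastic matrices. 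If you want a complete proof along your own lines, you would need either to supply this dilation argument or to give the full elementary-transfer construction for thermo-majorization (as in Ref.~\cite{Shiraishi2020}) together with a genuine, non-circular reduction from $q\neq q'$ to $q=q'$.
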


Clearly,  Theorem~\ref{thm:majorization} for majorization is regarded as a  special case of Theorem~\ref{thm:d_majorization} for d-majorization, by letting $q=q' = u$. 
While we will postpone the proof of Theorem~\ref{thm:d_majorization} to Section~\ref{sec:majorization_proof}, we here make some remarks on the proof.

The most nontrivial part of the above theorem is (i) $\Rightarrow$ (iv).
A way to obtain an intuition about it is to consider majorization for continuous variables as discussed  in Ref.~\cite{Ruch}.
In fact,  by applying a variable transformation to the Lorentz curve of a continuous variable, we can see that  d-majorization is a special case of continuous (ordinary) majorization.
We will discuss this idea in Section~\ref{sec:continuous_majorization} in detail, by which our rigorous proof in  Section~\ref{sec:majorization_proof} is inspired.
We also note that there is an alternative  direct and ``graphical'' proof of  (i) $\Rightarrow$ (iv) of Theorem~\ref{thm:d_majorization}, where one does not even need to invoke  (i) $\Rightarrow$ (iv) of Theorem~\ref{thm:majorization}, as shown in Ref.~\cite{Shiraishi2020}.

Given (iv) above,  (iii)  is just the monotonicity in the form of Lemma~\ref{f_monotonicity_lemma}.
The KL divergence with $f(x) = x \ln x$ does not provide a sufficient condition for convertibility of distributions; instead, we need all convex functions $f$.
In terms of resource theory, 
$\sum_{i=1}^d q_i f( p_i / q_i )$ with all convex functions constitute  a complete set of monotones  for  stochastic maps.
We note that $\sum_{i=1}^d q_i f(p_i / q_i)$  is the $f$-divergence defined in Eq.~(\ref{f-divergence}), if  $f(1)=0$ is satisfied  (this is always possible by adding a constant) and $f(x)$ is strictly convex at $x=1$.

If $q=q'$, d-majorization is called thermo-majorization, where the condition (iv) of Theorem~\ref{thm:d_majorization} reduces to $p' = Tp$ and $q = Tq$, that is, $q$ is a fixed point of $T$.

\begin{definition}[Thermo-majorization]
Let $p,q,p' \in \mathcal P_d$.
We say that $p$ thermo-majorizes $p'$ with respect to $q$, if $(p' , q) \prec (p, q)$.
\end{definition}

Because the Lorenz curve of $(q,q)$ is the diagonal line,  $q$ is thermo-majorized by any other distributions with respect to $q$ itself: $(q,q) \prec (p,q)$ for all $p$.
Therefore, the thermo-majorization relation characterizes how ``close'' a distribution $p$ is to  $q$.

By the naming of thermo-majorization, we have in mind that $q$ is a Gibbs state  of a Hamiltonian, written as $q=p^{\rm G}$.
We note that any distribution $q$ of full rank is regarded as a Gibbs state of some Hamiltonian.
In this case, $q = Tq$ implies that $T$ does not change the Gibbs state, which is called a Gibbs-preserving map (see also Chapter~\ref{chap:classical_thermodynamics}).
In terms of resource theory of thermodynamics, Gibbs-preserving maps are specified as free operations (and Gibbs states are free states).
Note that at infinite temperature, the Gibbs state is $u$ and a Gibbs-preserving map is a doubly-stochastic map.

\

We next remark that the min and the max divergences, $S_0(p \| q)$ and $S_\infty (p \| q)$, can be visualized by the Lorenz curve of $(p,q)$, as illustrated in Fig.~\ref{fig:majorization3}.  
The following theorem straightforwardly follows from this graphical representation.

\begin{figure}[t]
\begin{center}
\includegraphics[width=7cm]{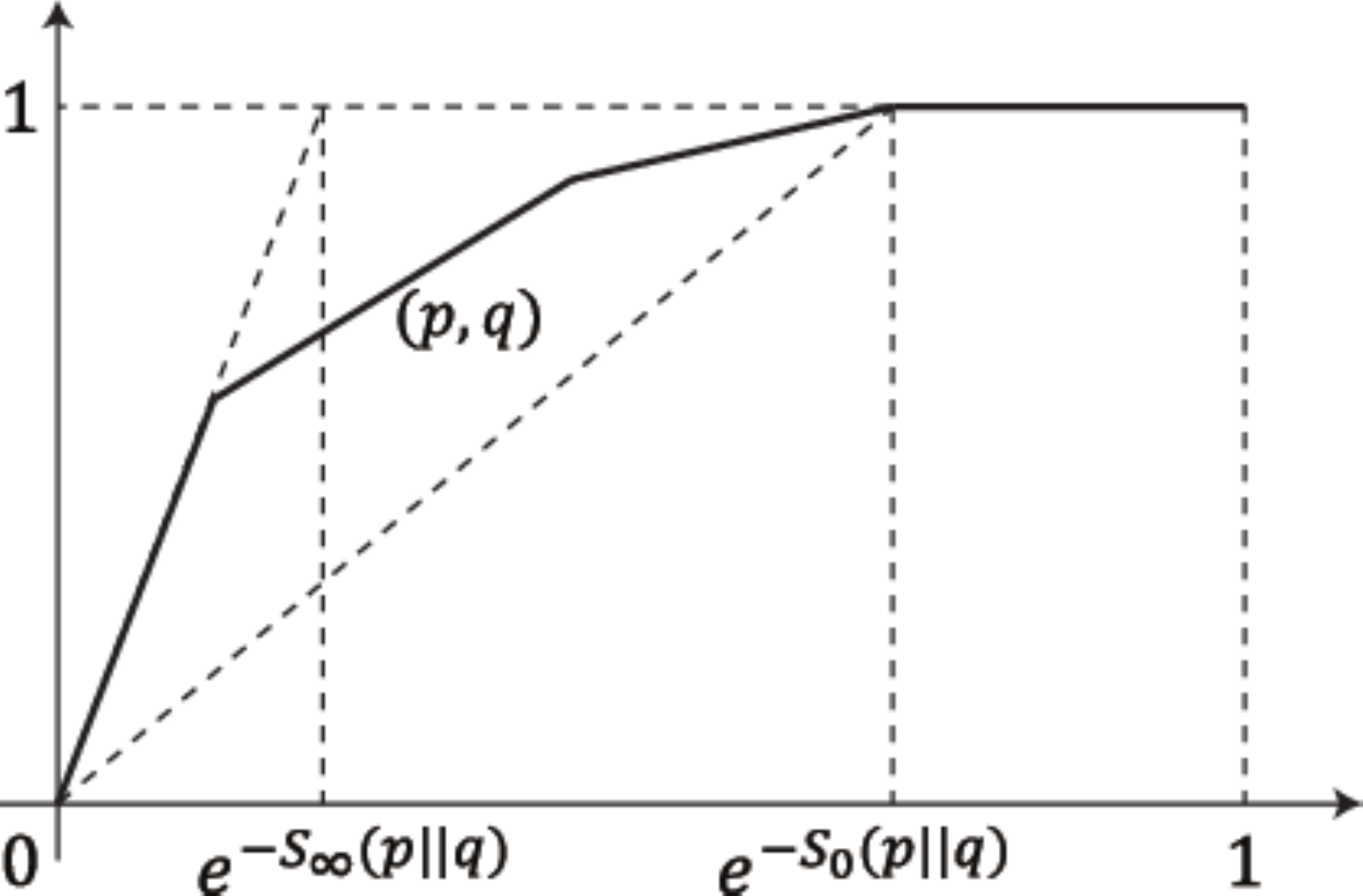}
 \end{center}
 \caption{Graphical representation of the relationship between the Lorenz curve of $(p,q)$ and the divergences $S_0(p \| q)$ and $S_\infty (p \| q)$, 
where  $e^{-S_0(p \| q)} = \sum_{i: p_i > 0} q_i$ and $e^{-S_\infty(p \| q)} =\left( \max_i \left\{ p_i / q_i \right\} \right)^{-1}$.} 
\label{fig:majorization3}
\end{figure}

\begin{theorem}[Conditions for state conversion]
\
\begin{description}
\item[(a) Necessary conditions:] If $(p', q') \prec (p,q)$, then
\begin{equation}
S_0(p \| q ) \geq S_0 (p' \| q' ), \ \ \ S_\infty(p \| q ) \geq S_\infty (p' \| q' ).
\end{equation}
\item[(b) Sufficient condition:] $(p', q' ) \prec (p,q)$ holds, if (but not only if)
\begin{equation}
S_\infty (p' \| q' ) \leq S_0 (p \| q ).
\end{equation}
\end{description}
\label{thm:asymp0}
\end{theorem}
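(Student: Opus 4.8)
The plan is to read both statements directly off the geometry of the (generalized) Lorenz curves, using the identifications recorded in Figure~\ref{fig:majorization3}. Write $L_{p,q}$ for the Lorenz curve of $(p,q)$, regarded as a function on $[0,1]$: it is a concave, nondecreasing polyline from $(0,0)$ to $(1,1)$ whose successive segment slopes are the sorted ratios $p^\ast_1/q^\ast_1 \ge p^\ast_2/q^\ast_2 \ge \cdots$; hence its maximal slope is the slope of its first segment, namely $\max_i p_i/q_i = e^{S_\infty(p\|q)}$, and it first attains the value $1$ at the abscissa $\sum_{i:\,p_i>0} q_i = e^{-S_0(p\|q)} =: s_{p,q}$, remaining equal to $1$ on $[s_{p,q},1]$. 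By Definition~\ref{def:d_majorization}, $(p',q')\prec(p,q)$ means exactly $L_{p',q'}\le L_{p,q}$ pointwise on $[0,1]$, so both parts become comparisons of these polylines.

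For part (a), I would assume $L_{p',q'}\le L_{p,q}$. Since both curves pass through the origin, dividing by $x$ and letting $x\to 0^+$ shows that the first-segment slope of $L_{p,q}$ is at least that of $L_{p',q'}$, i.e. $e^{S_\infty(p\|q)}\ge e^{S_\infty(p'\|q')}$, giving the first inequality. For the second, evaluate both curves at $x=s_{p',q'}=e^{-S_0(p'\|q')}$: there $L_{p',q'}=1$, so $1 = L_{p',q'}(s_{p',q'})\le L_{p,q}(s_{p',q'})\le 1$, forcing $L_{p,q}(s_{p',q'})=1$; as $L_{p,q}$ is nondecreasing and first reaches $1$ at $s_{p,q}$, this yields $s_{p,q}\le s_{p',q'}$, i.e. $S_0(p\|q)\ge S_0(p'\|q')$. (Alternatively one may invoke Theorem~\ref{thm:d_majorization}, (i)$\Rightarrow$(iv), to produce a stochastic matrix $T$ with $p'=Tp$, $q'=Tq$, and then apply the monotonicity of the R\'enyi divergence, Proposition~\ref{Renyi_divergence_monotonicity_lemma}, at $\alpha=0$ and $\alpha=\infty$; I would mention this as the quicker route but present the graphical one as primary, since it needs nothing postponed.)

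For part (b), set $s:=s_{p,q}=e^{-S_0(p\|q)}$ and suppose $S_\infty(p'\|q')\le S_0(p\|q)$, that is, $\max_i p'_i/q'_i \le 1/s$. Then every segment of $L_{p',q'}$ has slope at most $1/s$, so, being piecewise linear through the origin and bounded above by $1$, $L_{p',q'}(x)\le \min(x/s,\,1)$ on $[0,1]$. On the other side, $L_{p,q}$ equals $1$ at $x=s$, so by concavity on $[0,s]$ it dominates the chord from $(0,0)$ to $(s,1)$, namely $x/s$, and it equals $1$ on $[s,1]$; hence $L_{p,q}(x)\ge\min(x/s,1)$ on $[0,1]$. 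Chaining the two bounds gives $L_{p',q'}\le L_{p,q}$ pointwise, which is precisely $(p',q')\prec(p,q)$.

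I do not anticipate a genuine obstacle: the content is entirely in making the elementary convex-geometry claims precise — that the maximal slope of a concave polyline through the origin is the slope of its first segment, that pointwise domination near $0$ forces domination of initial slopes, and the two chord/slope comparisons against the kinked line $x\mapsto\min(x/s,1)$. The only points that need a little care are the standing assumptions under which all the ratios and Lorenz curves are well defined (${\rm supp}[p]\subseteq{\rm supp}[q]$, and $q,q'$ of full rank as in Theorem~\ref{thm:d_majorization}), and the degenerate case $s=1$ (i.e. $p$ of full rank), where $\min(x/s,1)$ is simply the diagonal and the whole argument reduces to the ordinary-majorization picture behind Theorem~\ref{thm:majorization}.
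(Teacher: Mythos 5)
Your proof is correct and follows essentially the same route as the paper, which simply reads both parts off the Lorenz curve of Fig.~\ref{fig:majorization3} (initial slope $=e^{S_\infty}$, first abscissa where the curve reaches $1$ gives $e^{-S_0}$, and the comparison against the kinked line $\min(x/s,1)$ for the sufficiency), and likewise notes that (a) is just the monotonicity of $S_0$ and $S_\infty$ given Theorem~\ref{thm:d_majorization}. You have merely made the "obvious from the figure" steps explicit, which is fine.
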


\begin{proof}
These are obvious from Fig.~\ref{fig:majorization3}.  We note that (a) is nothing but the monotonicity of  $S_0(p \| q )$ and $S_\infty(p \| q )$, given Theorem~\ref{thm:d_majorization}. $\Box$
\end{proof}

In the above theorem, the necessary condition and the sufficient condition are distinct, suggesting that there is no single necessary and sufficient condition for d-majorization in terms of the R\'enyi divergences (see also Fig.~\ref{fig:quantum_d_majorization} in  Section~\ref{sec:quantum_d_majorization}).
However, if we take the asymptotic limit, the R\'enyi $0$- and $\infty$-divergences can collapse to a single value for a broad class of distributions, giving a complete monotone for d-majorization.  This is the main topic of Section~\ref{sec:asymptotic}.


Finally, we note that, as is the case for ordinary majorization, the concept of d-majorization can be extended to general $p,q,p',q' \in \mathbb R^d$, where $q_i > 0$, $q'_i > 0$ for all $i$ and  $\sum_{i=1}^d q_i = \sum_{i=1}^d q'_i$ are always assumed.
Then, we write $(p', q') \prec (p, q)$ if the Lorenz curve of $(p,q)$ lies above that of $(p',q')$ and $\sum_{i=1}^d p_i = \sum_{i=1}^d p'_i$ is satisfied.
Under this definition, Theorem~\ref{thm:d_majorization} still holds.

\section{Catalytic majorization}
\label{sec:catalytic_majorization}

It is useful to introduce a \textit{catalyst} for state transformations, where a catalyst means an auxiliary system whose states are the same before and after the transformation. 
Here we briefly introduce some important results about majorization in the presence of catalyst, without going into the proofs.
This section is logically independent of the subsequent sections.

We first define catalytic majorization, where, as mentioned later, the condition that the initial and final states of the catalyst are exactly the same is crucial.

\begin{definition}[Catalytic majorization]
Let $p, p' \in \mathcal P_d$.
We say that $p$ majorizes $p'$ with the aid of a catalyst, or $p$ can be \textit{trumped} into $p'$, if there exists a finite-dimensional distribution $r \in \mathcal P_N$ such that $p' \otimes r \prec p \otimes r$. 
\label{def:catalytic_majorization}
\end{definition}

An example that $p' \prec p$ does not hold but $p$ can be trumped into $p'$ is given by $p=(4/10,4/10,1/10,1/10)^{\rm T}$ and $p'=(1/2,1/4,1/4,0)^{\rm T}$, along with $r =(6/10,4/10)^{\rm T}$~\cite{Jonathan1999}.

A necessary and sufficient condition for catalytic majorization was proved in Refs.~\cite{Turgut2007,Klimesh2007}, which is stated as follows.

\begin{theorem}[Theorem 1 of \cite{Turgut2007}; Theorem 2 of \cite{Klimesh2007}]
Let $p, p' \in \mathcal P_d$ have full rank and suppose  that $p^\downarrow \neq p'^\downarrow$.  Then, $p$ can be trumped into $p'$, if and only if $f_\alpha (p') < f_\alpha (p)$ for all $\alpha \in (-\infty, \infty)$, where
\begin{equation}
f_\alpha (p) := \left\{
\begin{array}{cc}
\ln \sum_i p_i^\alpha &  (\alpha > 1), \\
\sum_i p_i \ln p_i & (\alpha = 1), \\
-\ln \sum_i p_i^\alpha & (0<\alpha < 1), \\
-\sum_i \ln p_i & (\alpha = 0), \\
\ln \sum_i p_i^\alpha & (\alpha < 0).
\end{array}
\right.
\end{equation}
\label{catalytic_theorem1}
\end{theorem}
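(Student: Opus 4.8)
The plan is to prove Theorem~\ref{catalytic_theorem1} by reducing catalytic majorization $p' \otimes r \prec p \otimes r$ to the family of inequalities $f_\alpha(p') < f_\alpha(p)$, exploiting the additivity of each $f_\alpha$ under tensor products. The key structural fact is that $f_\alpha$ is a \emph{additive} functional: $f_\alpha(p \otimes r) = f_\alpha(p) + f_\alpha(r)$ for every $\alpha \in (-\infty,\infty)$ (this holds for all five cases in the definition, since $\sum_{ij}(p_ir_j)^\alpha = (\sum_i p_i^\alpha)(\sum_j r_j^\alpha)$ and similarly for the $\alpha=0,1$ logarithmic-type expressions). Consequently, $f_\alpha(p'\otimes r) < f_\alpha(p \otimes r)$ is equivalent to $f_\alpha(p') < f_\alpha(p)$, so the catalyst drops out of the functional conditions entirely.

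First I would handle the ``only if'' direction: suppose $p$ can be trumped into $p'$ via catalyst $r$. Then $p' \otimes r \prec p \otimes r$, and since each $f_\alpha$ for $\alpha \geq 1$ (resp.\ suitably signed versions for $\alpha < 1$) is a Schur-convex monotone of majorization — more precisely, each $f_\alpha$ has the form $\pm\sum_i g(x_i)$ or $\pm\ln\sum_i x_i^\alpha$ with the appropriate convexity/sign so that $x' \prec x$ implies $f_\alpha(x') \leq f_\alpha(x)$ by Theorem~\ref{thm:majorization}(iii) composed with monotonicity of $\ln$ — we get $f_\alpha(p'\otimes r) \leq f_\alpha(p\otimes r)$, hence $f_\alpha(p') \leq f_\alpha(p)$ by additivity. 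Upgrading $\leq$ to strict $<$ under the hypothesis $p^\downarrow \neq p'^\downarrow$ requires the equality-case analysis: since the generators are strictly convex/concave away from the degenerate cases, equality $f_\alpha(p')=f_\alpha(p)$ for even one $\alpha$ in an open interval would force $p'^\downarrow = p^\downarrow$ (one must check the $\alpha=0$ case $-\sum_i\ln p_i$ separately, using full rank). This gives strictness.

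The substantial direction is ``if'': assuming $f_\alpha(p') < f_\alpha(p)$ for all real $\alpha$, construct a catalyst $r$ with $p'\otimes r \prec p \otimes r$. Here I would follow the strategy of Klimesh and Turgut: the conditions $f_\alpha(p') < f_\alpha(p)$ for all $\alpha$ translate, after writing $a_i = -\ln p_i$ and $b_i = -\ln p'_i$, into a statement that a certain family of ``moment generating''–type functions of $p'$ is dominated by that of $p$, which is precisely the condition needed for the partial-sum (Lorenz-curve) inequalities to hold \emph{after} tensoring with enough copies of an auxiliary distribution. The construction of $r$ typically proceeds by taking $r$ to be (a perturbation of) a distribution supported near-uniformly on many points, or by an explicit limiting/compactness argument showing that the set of catalytic-majorization pairs is characterized by exactly these additive monotones; one shows that if $p'$ is \emph{not} catalytically majorized by $p$, then some $f_\alpha$ must fail, by a separating-hyperplane / convex-duality argument in the space of log-probability vectors. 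The main obstacle will be this construction of the catalyst (equivalently, proving completeness of the family $\{f_\alpha\}_{\alpha\in\mathbb{R}}$ as monotones for trumping): the argument is genuinely delicate, involves controlling the asymptotics of $\sum_i p_i^\alpha$ as $\alpha \to \pm\infty$ (which recover the extreme components $\max_i p_i$ and $\min_i p_i$), and is the technical heart of Refs.~\cite{Turgut2007,Klimesh2007}. Accordingly I would, consistent with the stated scope of this section, present the additivity reduction and the ``only if'' direction in full, and cite \cite{Turgut2007,Klimesh2007} for the catalyst construction rather than reproducing it.
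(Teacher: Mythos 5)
You should first be aware that the paper itself gives no proof of Theorem~\ref{catalytic_theorem1}: Section~\ref{sec:catalytic_majorization} explicitly presents the catalytic results ``without going into the proofs'' and simply quotes the statement from Refs.~\cite{Turgut2007,Klimesh2007}. So there is no internal argument to compare against, and your plan of proving the necessity direction yourself while citing those references for the catalyst construction is fully consistent with the paper's scope; if anything it supplies more than the text does. Your necessity argument is sound in outline: additivity $f_\alpha(p\otimes r)=f_\alpha(p)+f_\alpha(r)$ plus Schur-convexity (Theorem~\ref{thm:majorization}(iii), with the appropriate sign for each range of $\alpha$, and full rank to make the $\alpha\le 0$ cases finite) gives the non-strict inequalities. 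One caveat: strictness does not follow from equality ``for even one $\alpha$'' by itself. The clean route is: if $f_\alpha(p')=f_\alpha(p)$ for some single $\alpha$ and $p'\otimes r\prec p\otimes r$, then strict Schur-convexity of the generator at the level of the $dN$-dimensional vectors forces $(p'\otimes r)^{\downarrow}=(p\otimes r)^{\downarrow}$, i.e.\ equality of the multisets $\{p'_ir_j\}$ and $\{p_ir_j\}$; one then still has to cancel the catalyst, e.g.\ by comparing power sums $\bigl(\sum_i p_i'^{\,s}\bigr)\bigl(\sum_j r_j^{\,s}\bigr)=\bigl(\sum_i p_i^{\,s}\bigr)\bigl(\sum_j r_j^{\,s}\bigr)$ for all $s$, to conclude $p'^{\downarrow}=p^{\downarrow}$ and contradict the hypothesis. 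The catalyst does not simply ``drop out'' at this step, so spell it out.

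On the sufficiency direction, your honest deferral to \cite{Turgut2007,Klimesh2007} is acceptable and matches the paper, but the heuristic you sketch (a separating-hyperplane/convex-duality argument, or a near-uniform catalyst) is not how those proofs go and would not work as stated: the trumping relation is not closed, and the set of attainable targets is not cut out by any finite-dimensional closed convex cone, which is exactly why the characterization requires strict inequalities and why the paper's own remark after the theorem (a pair with $f_1(p')=f_1(p)$ but all other inequalities strict cannot be trumped) signals delicate boundary behavior. The actual arguments are explicit analytic constructions of catalysts with careful asymptotics. Since you do not rely on the sketch, this is a caution rather than a gap, but it should not be presented as an outline of the cited proofs.
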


It is worth noting that there is a pair of $(p,p')$ that satisfies $f_\alpha (p') < f_\alpha (p)$ for all $\alpha \in (-\infty, 1)$ and $\alpha \in (1, \infty)$ but satisfies $f_1 (p') = f_1 (p)$ (see Ref.~\cite{Turgut2007} for an explicit example).  In this case, $p$ cannot be trumped into $p'$.

In  Theorem~\ref{catalytic_theorem1}, $f_\alpha$ is the same as the R\'enyi $\alpha$-entropy (\ref{negative_alpha_entropy}) (including negative $\alpha$) up to a negative coefficient, except for $\alpha = 0$.
By allowing an  infinitesimal error in the final state, the $\alpha = 0$ case turns into the R\'enyi $0$-entropy  as follows.

\begin{lemma}[Proposition 4 of \cite{Brandao2015}]
Let $p, p' \in \mathcal P_d$.  The following are equivalent.
\begin{enumerate}
\item  For any $\varepsilon > 0$, there exists a distribution $p_\varepsilon'$ such that $p$ can be trumped into $p_\varepsilon'$ and $D(p', p_\varepsilon' ) \leq \varepsilon$. 
\item $S_\alpha (p) \leq S_\alpha (p')$ holds for all $\alpha \in (-\infty, \infty)$.
\end{enumerate}
\label{Brandao_lemma}
\end{lemma}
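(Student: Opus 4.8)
The plan is to derive both implications from the exact catalytic criterion, Theorem~\ref{catalytic_theorem1}; recall (Definition~\ref{def:catalytic_majorization}) that ``$p$ is trumped into $q$'' means $q\otimes r\prec p\otimes r$ for some catalyst $r$. The only non-elementary ingredient is the relation between the ``Burg entropy'' $f_0(p)=-\sum_i\ln p_i$ that occurs in that criterion and the R\'enyi entropies $S_\alpha$: for every $\alpha\neq 0$ one has $f_\alpha=c_\alpha S_\alpha$ with a strictly negative constant $c_\alpha$ (so an inequality between two values of $f_\alpha$ is the reversed inequality between the corresponding $S_\alpha$'s), and each $f_\alpha$ with $\alpha\neq 0$ is strictly Schur-convex on the full-rank simplex; whereas at $\alpha=0$ the R\'enyi entropy records only the rank, $S_0(p)=\ln{\rm rank}[p]$, and is insensitive to $f_0$. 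We may assume $p$ is of full rank, the degenerate case being routine (if $p_j=0$ then $S_\alpha(p)=-\infty$ for all $\alpha<0$ and the corresponding inequalities in (ii) hold automatically).

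For (i)$\Rightarrow$(ii), fix $\alpha$. From $p_\varepsilon'\otimes r\prec p\otimes r$, Schur-concavity of the R\'enyi entropy (apply the convex/concave-function characterisation of majorisation, Theorem~\ref{thm:majorization}, to $x\mapsto x^\alpha$) together with its additivity gives, after cancelling the catalyst, $S_\alpha(p)\le S_\alpha(p_\varepsilon')$. Since $D(p',p_\varepsilon')\le\varepsilon$ forces $p_\varepsilon'\to p'$, and $q\mapsto S_\alpha(q)$ is continuous on $\mathcal P_d$ for $\alpha>0$, letting $\varepsilon\to 0$ yields $S_\alpha(p)\le S_\alpha(p')$ for all $\alpha>0$; letting further $\alpha\to 0^+$ and using $S_\alpha(q)\to\ln{\rm rank}[q]$ covers $\alpha=0$. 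For $\alpha<0$ I would first observe that, as $p$ has full rank, Schur-convexity and additivity of $f_0$ force $f_0(p_\varepsilon')\le f_0(p)<\infty$; hence if $p'$ had a zero component $p'_j$ then $p'_{\varepsilon,j}\le 2\varepsilon$ and $f_0(p_\varepsilon')\ge-\ln(2\varepsilon)\to\infty$, a contradiction, so $p'$ is of full rank and $S_\alpha$ is continuous at $p'$, closing the case by the same limiting argument.

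For (ii)$\Rightarrow$(i): if $p^\downarrow=p^{\prime\downarrow}$ (in particular if $p'=u$, which by (ii) already forces $p=u$) then $p$ is trumped into $p'$ directly and $p_\varepsilon':=p'$ works; so assume $p^\downarrow\neq p^{\prime\downarrow}$ and $p'\neq u$. Take $p_\varepsilon':=(1-\delta)p'+\delta u$ with $\delta\in(0,\varepsilon]$. Then $D(p',p_\varepsilon')=\delta D(p',u)\le\varepsilon$; moreover $p_\varepsilon'\prec p'$ with $p_\varepsilon^{\prime\downarrow}\neq p^{\prime\downarrow}$ (since $p'\neq u$), so strict Schur-convexity of $f_\alpha$ gives $f_\alpha(p_\varepsilon')<f_\alpha(p')\le f_\alpha(p)$ for every $\alpha\neq 0$, the last inequality being the translation of (ii). It remains to secure $f_0(p_\varepsilon')<f_0(p)$. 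Although (ii) says nothing directly about $f_0$, it controls it through a derivative: on the full-rank simplex $\alpha\mapsto S_\alpha$ is smooth from the right at $\alpha=0$ with right-derivative $\ln d-\tfrac1d f_0$, so (ii) for small $\alpha>0$ passes to that one-sided derivative and gives $f_0(p')\le f_0(p)$; and since the derivative of $\delta\mapsto f_0((1-\delta)p'+\delta u)$ at $\delta=0$ equals $d-\tfrac1d\sum_i(p'_i)^{-1}<0$ (strict because $p'\neq u$), for $\delta$ small enough one has $f_0(p_\varepsilon')<f_0(p')\le f_0(p)$. Now $p$ and $p_\varepsilon'$ are of full rank, $p^\downarrow\neq p_\varepsilon^{\prime\downarrow}$ (because $S_\alpha(p_\varepsilon')>S_\alpha(p)$ for $\alpha\neq 0$), and $f_\alpha(p_\varepsilon')<f_\alpha(p)$ for all $\alpha\in(-\infty,\infty)$, so Theorem~\ref{catalytic_theorem1} yields that $p$ is trumped into $p_\varepsilon'$.

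The step I expect to be the real obstacle is exactly the $\alpha=0$ slot. In Theorem~\ref{catalytic_theorem1} the quantity appearing there is the Burg entropy $-\sum_i\ln p_i$, which is not a limit of the $S_\alpha$'s, so it is not a priori controlled by (ii); the whole point of allowing an arbitrarily small error in the target state is that it permits perturbing $p'$ (by mixing in the uniform distribution) so as to decrease $f_0$ strictly while increasing every $S_\alpha$ with $\alpha\neq 0$ strictly, and the fact that the one-sided $\alpha$-derivative of $S_\alpha$ at $0$ is an affine function of $f_0$ is what guarantees that (ii) already places $f_0(p')$ on the correct side of $f_0(p)$. Everything else is continuity, additivity of the R\'enyi entropy, and the Schur-(con)vexity already recorded in Theorem~\ref{thm:majorization}.
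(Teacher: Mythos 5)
The paper itself does not prove this lemma (it is quoted from Ref.~\cite{Brandao2015} without proof), so your argument can only be judged on its own terms. Its core is sound and is the natural route: for (i)$\Rightarrow$(ii) you correctly combine Schur-concavity and additivity of $S_\alpha$ with continuity (and the $\alpha\to 0^+$ limit for $\alpha=0$, and the $f_0$-based argument showing $p'$ must have full rank before invoking continuity of $S_\alpha$ with $\alpha<0$); for (ii)$\Rightarrow$(i) the construction $p'_\varepsilon=(1-\delta)p'+\delta u$, the translation $f_\alpha=c_\alpha S_\alpha$ with $c_\alpha<0$ for $\alpha\neq0$, and above all the observation that the right-derivative of $\alpha\mapsto S_\alpha$ at $\alpha=0$ on full-rank vectors equals $\ln d-\tfrac1d f_0$, so that (ii) already forces $f_0(p')\le f_0(p)$, are exactly the right ingredients, and the verification that the hypotheses of Theorem~\ref{catalytic_theorem1} (full rank, $p^\downarrow\neq p_\varepsilon'^{\downarrow}$, strict inequalities for all real $\alpha$) are met is correct. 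Two small things you leave implicit but should state: the catalyst may be taken of full rank so that its $S_\alpha$ (and $f_0$) contribution is finite and cancels, and in (ii)$\Rightarrow$(i) the hypothesis for $\alpha<0$ forces $p'$ to have full rank whenever $p$ does, which your derivative formulas presuppose.

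There are, however, two genuine defects. First, the parenthetical ``$p'=u$, which by (ii) already forces $p=u$'' is false: for negative $\alpha$ one has $S_\alpha(q)\le S_\alpha(u)=-\ln d$ for \emph{every} full-rank $q$ (and $S_\alpha(q)\le S_\alpha(u)$ for all $\alpha$ in general), so (ii) with $p'=u$ holds for arbitrary $p$. The case $p'=u$, $p\neq u$ is therefore not subsumed in your ``$p^\downarrow=p'^\downarrow$'' case; it is still harmless, but for the trivial reason $u\prec p$ for all $p$ (take $p'_\varepsilon=u$ with no catalyst), which is the fix you should write instead. Second, and more seriously, the reduction ``we may assume $p$ is of full rank, the degenerate case being routine'' does not cover the direction (ii)$\Rightarrow$(i): your parenthetical only explains why (ii) imposes no constraints for $\alpha<0$ when $p$ has a zero component, but it gives no construction of the trumping, and Theorem~\ref{catalytic_theorem1} \emph{as stated in the paper} requires both distributions to have full rank, so it cannot be invoked for a rank-deficient source. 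To close this case you need an additional argument, e.g.\ replace $p$ by a full-rank minorant $q_\delta\prec p$ (fill the zeros with mass $\delta$), use transitivity of ``trumping after a doubly stochastic map,'' and verify the strict conditions of Theorem~\ref{catalytic_theorem1} for the pair $(q_\delta,p'_\varepsilon)$ — which again hinges on an $f_0$-controlled analysis near $\alpha\to 0^\pm$ since $S_0(q_\delta)=\ln d>\ln\mathrm{rank}[p]$, so the limits $\delta\to0$ are not uniform in $\alpha$ — or else appeal to the versions of the Turgut--Klimesh criterion that admit zero entries in the source via the convention $f_\alpha=+\infty$ for $\alpha\le0$. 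Either way it is a real step, not a one-line remark, and as written your proof is incomplete there.
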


A slightly different definition of approximate trumping from the above has been considered in Ref.~\cite{Aubrun2008}, where the necessary and sufficient condition is given by  $S_\alpha (p) \leq S_\alpha (p')$ only with $\alpha \in (1, \infty)$ (Theorem 1 of Ref.~\cite{Aubrun2008}; see also discussion in Ref.~\cite{Brandao2015}).

We next consider the catalytic d-majorization.

\begin{definition}[Catalytic d-majorization]
Let $p, q, p', q' \in \mathcal P_d$.
We say that $(p,q)$ d-majorizes $(p', q')$ with the aid of catalysts, or $(p,q)$ can be \textit{d-trumped} into $(p', q')$, if there exist finite-dimensional distributions $r, s \in \mathcal P_N$ such that $( p' \otimes r, q' \otimes s )    \prec  ( p \otimes r, q \otimes s)$.
\label{def:catalytic_d}
\end{definition}

In the above definition, it is often possible to take $s$ as the uniform distribution $u$.  In the case of thermo-majorization, this implies that the Hamiltonian of the catalyst can be taken trivial.
In fact, Lemma~\ref{Brandao_lemma} can be further generalized to the following form.

\begin{theorem}[Theorem 17 of \cite{Brandao2015}]
Let $p, q, p', q' \in \mathcal P_d$ have full rank.  Then, the following are equivalent.
\begin{enumerate}
\item For any $\varepsilon > 0$, there exists a distribution  $p_\varepsilon' \in \mathcal P_d$ 
such that
$(p,q)$ can be d-trumped into $(p_\varepsilon', q')$ and 
$D(p',p_\varepsilon') \leq \varepsilon$.
\item $S_\alpha ( p \| q ) \geq S_\alpha (p' \| q' )$ for all $\alpha \in (-\infty, \infty)$.
\end{enumerate}
Moreover,  the catalyst distribution $s$ of Definition~\ref{def:catalytic_d} for the d-trumping of  (i) can be taken as the uniform distribution $u$.
\end{theorem}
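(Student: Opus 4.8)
I would deduce the theorem from the analogous statement for ordinary catalytic majorization (trumping, Definition~\ref{def:catalytic_majorization}), namely Lemma~\ref{Brandao_lemma}, by the fine‑graining (embedding) trick that converts a full‑rank reference distribution into a uniform one. The direction (i)$\Rightarrow$(ii) needs no such machinery: if $(p'_\varepsilon\otimes r,\,q'\otimes s)\prec(p\otimes r,\,q\otimes s)$, then Blackwell's theorem (Theorem~\ref{thm:d_majorization}) gives a stochastic matrix $T$ with $p'_\varepsilon\otimes r=T(p\otimes r)$ and $q'\otimes s=T(q\otimes s)$; monotonicity of the R\'enyi divergence (Proposition~\ref{Renyi_divergence_monotonicity_lemma}, extended to $\alpha<0$ through $S_\alpha(p\|q)=\tfrac{\alpha}{\alpha-1}S_{1-\alpha}(q\|p)$) together with additivity over tensor products gives $S_\alpha(p\|q)+S_\alpha(r\|s)\ge S_\alpha(p'_\varepsilon\|q')+S_\alpha(r\|s)$, so the catalyst contributions cancel; finiteness of the left side (all of $p,q,q'$ are full rank) forces $p'_\varepsilon$ to be full rank, and letting $\varepsilon\to0$ so that $p'_\varepsilon\to p'$, and using continuity of $p\mapsto S_\alpha(p\|q')$ at the full‑rank point $p'$, yields (ii).

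\textbf{The embedding.} For (ii)$\Rightarrow$(i) I would first assume $q,q'$ have rational entries with a common denominator $N$, $q_i=m_i/N$, $q'_i=m'_i/N$. Let $\Gamma:\mathcal P_d\to\mathcal P_N$ split index $i$ into $m_i$ cells and spread $p_i$ uniformly over them (cell $(i,k)$ receives $p_i/m_i$), and define $\Gamma'$ from the $m'_i$ analogously. Then $\Gamma(q)=\Gamma'(q')=u$, the $N$‑dimensional uniform distribution, $\Gamma(p)$ is full rank iff $p$ is, and a short computation gives, for all $\alpha$,
\[
S_\alpha(\Gamma(p))=\ln N-S_\alpha(p\|q),\qquad S_\alpha(\Gamma'(p'))=\ln N-S_\alpha(p'\|q').
\]
Hence (ii) says precisely that $S_\alpha(\Gamma(p))\le S_\alpha(\Gamma'(p'))$ for all $\alpha\in(-\infty,\infty)$, which is hypothesis (ii) of Lemma~\ref{Brandao_lemma} for the pair $(\Gamma(p),\Gamma'(p'))$ (if their decreasing rearrangements coincide the desired conclusion is trivial, so assume not).

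\textbf{Pulling the trumping back.} Lemma~\ref{Brandao_lemma} then provides, for each $\varepsilon>0$, a finite catalyst $r\in\mathcal P_M$ and a $\sigma_\varepsilon\in\mathcal P_N$ with $D(\Gamma'(p'),\sigma_\varepsilon)\le\varepsilon$ and $\sigma_\varepsilon\otimes r\prec\Gamma(p)\otimes r$. Since the latter is ordinary majorization on $\mathbb R^{NM}$, whose reference is the uniform $u\otimes u$, Theorem~\ref{thm:majorization}(iv) yields a doubly stochastic $S$ on $\mathcal P_{NM}$ with $S(\Gamma(p)\otimes r)=\sigma_\varepsilon\otimes r$, and automatically $S(u\otimes u)=u\otimes u$. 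I would then set $T:=(\Lambda'\otimes\mathrm{id})\circ S\circ(\Gamma\otimes\mathrm{id})$, where $\Lambda':\mathcal P_N\to\mathcal P_d$ is the coarse‑graining that sums up $\Gamma'$'s cells, so that $\Lambda'\circ\Gamma'=\mathrm{id}$ and $\Lambda'(u)=q'$. This $T$ is stochastic and satisfies $T(p\otimes r)=(\Lambda'\otimes\mathrm{id})(\sigma_\varepsilon\otimes r)=p'_\varepsilon\otimes r$ with $p'_\varepsilon:=\Lambda'(\sigma_\varepsilon)$ (so $D(p',p'_\varepsilon)\le\varepsilon$ by monotonicity of the trace distance), and $T(q\otimes u)=(\Lambda'\otimes\mathrm{id})(S(u\otimes u))=(\Lambda'\otimes\mathrm{id})(u\otimes u)=q'\otimes u$. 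By Blackwell's theorem again, $(p'_\varepsilon\otimes r,\,q'\otimes u)\prec(p\otimes r,\,q\otimes u)$; that is, $(p,q)$ is d‑trumped into $(p'_\varepsilon,q')$ with catalysts $r$ and $s=u$, which also establishes the ``moreover'' clause.

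\textbf{The main obstacle.} The remaining step, removing the rationality of $q,q'$, is the delicate one. Replacing $q,q'$ by rational approximations $\tilde q,\tilde q'$ changes the d‑majorization relation itself, and sharper approximations inflate $N$ and hence the catalyst dimension, so one cannot pass to a limit with a fixed catalyst. The way around it is to exploit the tolerance $\varepsilon$ already allowed in statement (i): for a prescribed $\varepsilon$, choose $\tilde q,\tilde q'$ rational and close enough to $q,q'$, and $\delta$ small enough, that the errors accumulated from $\sigma_\delta\approx\Gamma'(\tilde q')$, $\tilde q\approx q$ and $\tilde q'\approx q'$ together stay below $\varepsilon$, the resulting finite catalyst being permitted to depend on $\varepsilon$. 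Making this quantitative requires a perturbation estimate for d‑majorization — that a d‑trumping whose references are off by $\eta$ can be adjusted to one with the exact references $q,q'$ at the cost of an $O(\eta)$ change in the final state — which follows from the (uniformly controlled, thanks to the full‑rank lower bounds on $q,q'$) continuity of the generalized Lorenz curve in all four arguments; alternatively, one may avoid rationality entirely by running the whole argument through continuous‑variable majorization in the spirit of Section~\ref{sec:continuous_majorization}, which is the route closest to the proofs of Section~\ref{sec:majorization_proof}.
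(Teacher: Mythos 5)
First, note that the paper never proves this statement: it appears in Section~\ref{sec:catalytic_majorization}, which the author explicitly introduces ``without going into the proofs,'' so your attempt can only be measured against the cited reference's strategy, not a proof in the text. With that said, your core construction is sound and is essentially the standard reduction: for rational $q,q'$ with common denominator $N$, the fine-graining $\Gamma$ (the same device used in the paper's proof of Theorem~\ref{thm:d_majorization}) converts hypothesis (ii) into the entropy conditions $S_\alpha(\Gamma(p))\le S_\alpha(\Gamma'(p'))$ for all $\alpha$, Lemma~\ref{Brandao_lemma} supplies the approximate trumping, and your pull-back $T=(\Lambda'\otimes\mathrm{id})\circ S\circ(\Gamma\otimes\mathrm{id})$ correctly yields $T(p\otimes r)=p'_\varepsilon\otimes r$ and $T(q\otimes u)=q'\otimes u$, hence the d-trumping with $s=u$ and the ``moreover'' clause. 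The (i)$\Rightarrow$(ii) direction is also the right idea, with two small patches needed: for $\alpha<0$ the paper's convention gives $S_\alpha(\Gamma(p))=\mathrm{sgn}(\alpha)\ln N-S_\alpha(p\|q)$ (harmless, since the constant is common to both sides), and cancelling $S_\alpha(r\|s)$ presupposes that this quantity is finite, i.e.\ compatible supports of the catalysts, which you should either assume or argue separately.

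The genuine gap is the irrational case, and your proposed fix does not close it. Two transfers are required and neither is established. (a) Hypothesis (ii) is an infinite family of \emph{non-strict} inequalities, and you do not show it survives the replacement $q\to\tilde q$, $q'\to\tilde q'$: wherever the original inequality is tight it can be violated by an arbitrarily small perturbation, and near $\alpha=0$ both sides behave as $\alpha S_1(q\|p)$ and $\alpha S_1(q'\|p')$ for full-rank arguments, so the condition degenerates to $S_1(q\|p)\ge S_1(q'\|p')$, which may itself be an equality; without (ii) for the perturbed tuple, Lemma~\ref{Brandao_lemma} cannot be invoked for the embedded states. One can try to create slack by targeting $p'_\lambda=(1-\lambda)p'+\lambda q'$, but then one must prove that the resulting gap dominates the approximation error \emph{uniformly in $\alpha$}, including the regimes $\alpha\to0$ and $\alpha\to\pm\infty$ — a quantitative estimate you have not supplied. (b) Your ``perturbation estimate'' — that a d-trumping with references $(\tilde q,\tilde q')$ can be adjusted to the exact references $(q,q')$ at $O(\eta)$ cost — is asserted, not proved, and it is not a routine continuity statement: Definition~\ref{def:catalytic_d} demands that $q\otimes s$ be mapped to $q'\otimes s$ \emph{exactly} and that the output be an \emph{exact} product with the returned catalyst, and any Lorenz-curve comparison for the composite $(p\otimes r,\,q\otimes u_M)$ carries a factor of the catalyst dimension $M$. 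Since $M$ is delivered by Lemma~\ref{Brandao_lemma} only after $\tilde q,\tilde q'$ (hence $N$) have been fixed, the accuracy required of the rational approximation depends on $M$, which depends on the approximation — the very circularity you flagged for the naive limiting argument reappears inside your fix. The alternative you mention, running everything through continuous-variable majorization, would require a catalytic analogue of Theorem~\ref{thm:majorization_continuous}, which is not available in the paper; so as it stands the theorem is proved by your argument only for rational $q,q'$, and the irrational case needs a genuinely new ingredient (as in the quantitative treatment of Ref.~\cite{Brandao2015} itself).
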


The above theorem implies that the R\'enyi $\alpha$-divergences with $-\infty < \alpha < \infty$ constitute a complete set of monotones in the presence of catalyst.

\

Once we remove the requirement that the catalyst state exactly goes back to the initial state without any correlation with the system, the characterization of state convertibility becomes drastically different.
In fact, any state conversion becomes possible in the absence of this requirement~\cite{vanDam2003},
 even if the error in the final state is arbitrarily small, as long as it is independent of the sizes of the system and the catalyst.  
In other words, the majorization structure becomes trivial in the presence of such non-exact catalyst,
 which  is called  the embezzling phenomenon.
  In the context of thermodynamics~\cite{Brandao2015}, this implies that 
one can extract more work than the standard thermodynamic bound by ``embezzling'' an auxiliary system whose operation is not exactly cyclic.

\begin{theorem}[Main result of ~\cite{vanDam2003}]
For any $\varepsilon > 0$ and for any  $p, p' \in \mathcal P_d$, there exist a distribution $r \in \mathcal P_N$ of a catalyst  and a distribution $r' \in \mathcal P_{dN}$ such that $r' \prec p \otimes r$  and $\| r' - p' \otimes r\|_1 < \varepsilon$.
\end{theorem}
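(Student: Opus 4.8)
The plan is to exploit the fact that majorization places no a priori bound on the dimension $N$ of the catalyst, so we may take $r$ to be an ``almost flat'' distribution on a huge number of levels (the embezzling state). Concretely, I would choose $r = r_N$ to be a truncated/normalized version of the harmonic-type state whose entries are proportional to $1/k$ for $k = 1, \dots, N$, since such a state has the property that $r_N \otimes \sigma$ is very close (in trace distance) to $r_N$ for any fixed distribution $\sigma$ on a small alphabet, with the closeness improving as $N \to \infty$. The key point is that tensoring with $r_N$ "washes out" the difference between any two small distributions.

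First I would reduce to the case $d$ fixed and set up the embezzling state: define $r_N$ and record the estimate that for any $\sigma, \sigma' \in \mathcal P_d$ one has $\|r_N \otimes \sigma - r_N \otimes \sigma'\|_1 \to 0$ as $N \to \infty$, uniformly over $\sigma, \sigma'$ (this is the standard van Dam–Hayden computation). Second, using Theorem~\ref{thm:majorization}, I would observe that $p' \otimes r_N \prec p \otimes r_N$ fails in general, so instead I look for $r'$ with $r' \prec p \otimes r_N$ that is nevertheless close to $p' \otimes r_N$. The natural candidate: among all distributions majorized by $p \otimes r_N$, take the one ``closest in Lorenz-curve shape'' to $p' \otimes r_N$; since the Lorenz curve of $p \otimes r_N$ is nearly the diagonal (because $r_N$ is nearly uniform on many levels and $p$ has bounded rank), almost any sufficiently flat target — in particular something built from $p' \otimes r_N$ by a small rearrangement of mass — will be majorized by it. Third, I would quantify: the deviation of the Lorenz curve of $p \otimes r_N$ from the diagonal is $O(1/\log N)$ or similar, so one can find $r'$ with $r' \prec p\otimes r_N$ and $\|r' - p'\otimes r_N\|_1 < \varepsilon$ once $N$ is large enough, using the monotonicity of trace distance (inequality~(\ref{monotonicity_trace_norm})) and the equivalence (i) $\Leftrightarrow$ (iii) of Theorem~\ref{thm:majorization} to control things via $\sum_i f(p_i')$.

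The main obstacle I anticipate is making the third step quantitative: one must simultaneously (a) ensure $r'$ is genuinely majorized by $p \otimes r_N$ (a global constraint on all partial sums of the sorted vector) and (b) keep $r'$ within $\varepsilon$ in trace distance of the target $p' \otimes r_N$. These pull in opposite directions — forcing majorization tends to "flatten" $r'$, hence push it away from $p' \otimes r_N$ unless $p' \otimes r_N$ is itself already nearly flat. The resolution is precisely that $p' \otimes r_N$ \emph{is} nearly flat for large $N$ (its Lorenz curve hugs the diagonal), so the flattening needed is small; the delicate estimate is to bound the $\ell_1$ cost of the flattening by the gap between the Lorenz curve of $p' \otimes r_N$ and the diagonal, which tends to $0$. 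Once that estimate is in hand, choosing $N$ large enough finishes the proof. I expect the argument can also be phrased by directly writing down the explicit doubly-stochastic-type map on $p \otimes r_N$ that produces $r'$, which may be cleaner than the Lorenz-curve bookkeeping.
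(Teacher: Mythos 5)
The paper states this theorem without proof (it only cites van Dam and Hayden), so I am judging your outline on its own merits. You have picked the right catalyst — the harmonic-type distribution $r_j \propto 1/j$ — but the two estimates you lean on are both false as stated, and they are the load-bearing ones. First, the ``washing out'' lemma: for every $N$ and all $\sigma,\sigma'\in\mathcal P_d$ one has \emph{exactly} $\| r_N\otimes\sigma - r_N\otimes\sigma'\|_1 = \|\sigma-\sigma'\|_1$, so tensoring with the catalyst does not shrink trace distance at all. The correct van Dam--Hayden-type statement is about the \emph{sorted} vectors (equivalently, Lorenz curves): with $H_N:=\sum_{j\le N}1/j$, the $k$-th partial sum of $(\sigma\otimes r_N)^{\downarrow}$ is wedged between $H_{\lfloor k/d\rfloor}/H_N$ and $H_{\min(k,N)}/H_N$ for every $\sigma\in\mathcal P_d$ (use $u_d\otimes r_N \prec \sigma\otimes r_N \prec e\otimes r_N$ with $e$ a point mass), so the Lorenz curves of $p\otimes r_N$ and $p'\otimes r_N$ differ uniformly by at most about $(\ln d+1)/\ln N$.

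Second, your geometric picture is inverted. The harmonic state is not ``almost flat'': the Lorenz curve of $r_N$ at fraction $x$ of the entries has height $\approx 1-\ln(1/x)/\ln N$, i.e.\ it hugs the upper-left corner, as far from the diagonal as possible, and the same holds for $p\otimes r_N$ and $p'\otimes r_N$; the deviation from the diagonal is $\Theta(1)$, not $O(1/\log N)$. Indeed a genuinely flat catalyst provably fails: with $r=u_N$ and $p=u_d$, the only distribution majorized by $p\otimes r = u_{dN}$ is $u_{dN}$ itself, which has $\|u_{dN}-p'\otimes u_N\|_1 = 2(1-1/d)$ when $p'$ is a point mass. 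So bounding the ``$\ell_1$ cost of flattening'' by the gap between the Lorenz curve of $p'\otimes r_N$ and the diagonal estimates the wrong quantity. What actually closes the argument is: (i) the uniform bound above, which says the Lorenz curve of the target $p'\otimes r_N$ exceeds that of the source $p\otimes r_N$ by at most $\delta_N = O(\ln d/\ln N)$ at every $k$; and (ii) the standard approximate-majorization lemma stating that such a one-sided violation of size $\delta$ guarantees some $s\prec p\otimes r_N$ with $\tfrac12\| s^{\downarrow}-(p'\otimes r_N)^{\downarrow}\|_1\le\delta$; permuting $s$ to align with $p'\otimes r_N$ (permutations preserve majorization) then gives $r'$ with $\| r'-p'\otimes r_N\|_1<\varepsilon$ once $N$ is large. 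With these two corrections your skeleton becomes the standard proof; as written, the argument does not go through.
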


On the other hand, it is known that there are several ways to make the non-exact majorization structure nontrivial by considering slightly different setups from the case of the embezzling phenomenon; We now suppose that the change of the catalyst state is ``modestly'' non-exact. 
Remarkably, the KL divergence often serves as a single ``complete monotone'' for such setups, as described below.
Therefore, in the presence of such modestly non-exact catalyst, thermodynamic transformation can be completely characterized by the standard nonequilbrium free energy (\ref{1_free_energy_classical}) introduced in Section~\ref{sec:classical_second_law} based on the KL divergence.

First, in Ref.~\cite{Brandao2015}, it has been shown that  if the approximation error scales modestly in the dimension of the catalyst,
then the majorization condition is only given by the Shannon entropy up to small correction terms. 

\begin{theorem}[Theorem 23 of ~\cite{Brandao2015}]
Let $p, p' \in \mathcal P_d$. 
\begin{description}
\item[(a)] Let $\varepsilon \geq 0$.  If there exist a distribution $r \in \mathcal P_N$ of a catalyst  and a distribution $r' \in \mathcal P_{dN}$ such that $r' \prec p \otimes r$ and $\| r' - p' \otimes r \|_1 \leq \varepsilon / \ln N$, then
\begin{equation}
S_1 (p) \leq S_1 (p') - \varepsilon - \varepsilon \frac{\ln d}{\ln N} - h \left(  \frac{\varepsilon}{\ln N} \right),
\end{equation}
where $h(x) := - x \ln x - (1-x ) \ln (1-x)$.
\item[(b)]  If $S_1 (p ) < S_1 (p')$,
then for any sufficiently large $N$, there exist a distribution $r \in \mathcal P_N$ of a catalyst and a distribution $r' \in \mathcal  P_{dN}$ such that $r' \prec p \otimes r$ and $\| r' - p' \otimes r \|_1 \leq \exp ( - c \sqrt{\ln N} )$ for some constant $c>0$.  
\end{description}
\end{theorem}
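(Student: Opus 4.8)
The plan is to combine the monotonicity of the Shannon entropy under majorization with a continuity estimate, the one ingredient beyond what is recalled above being a classical Fannes-type bound (which has a short standard proof). From $r'\prec p\otimes r$ and Theorem~\ref{thm:majorization}(iii) applied to the convex function $f(x)=x\ln x$, together with the elementary additivity $S_1(p\otimes r)=S_1(p)+S_1(r)$, one gets $S_1(r')\ge S_1(p)+S_1(r)$. Next, the Fannes-type inequality $|S_1(\mu)-S_1(\nu)|\le \|\mu-\nu\|_1\,\ln D+h(\|\mu-\nu\|_1)$ for distributions on an alphabet of size $D$ with $\|\mu-\nu\|_1\le 1$, applied with $\mu=r'$, $\nu=p'\otimes r$, $D=dN$ and $\|\mu-\nu\|_1\le\varepsilon/\ln N$, together with $S_1(p'\otimes r)=S_1(p')+S_1(r)$, gives $S_1(r')\le S_1(p')+S_1(r)+\tfrac{\varepsilon}{\ln N}\ln(dN)+h\!\left(\tfrac{\varepsilon}{\ln N}\right)$. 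Chaining this with the lower bound on $S_1(r')$, the catalyst term $S_1(r)$ cancels — which is exactly why a catalyst is ``free'' at the level of Shannon entropy — and, since $\tfrac{\varepsilon}{\ln N}\ln(dN)=\varepsilon+\varepsilon\,\tfrac{\ln d}{\ln N}$, one is left with the bound of (a).

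\textbf{Part (b).} This is the substantive, constructive direction; I would argue it by exhibiting the catalyst explicitly and comparing Lorenz curves. Since, by Theorem~\ref{thm:majorization}, $r'\prec p\otimes r$ is equivalent to the Lorenz curve of $p\otimes r$ lying above that of $r'$, it suffices to (i) choose $r\in\mathcal P_N$ so that the Lorenz curve of $p\otimes r$ lies above that of $p'\otimes r$ except for a small dip of controlled total size, and then (ii) ``flatten'' $p'\otimes r$ locally on that region into a distribution $r'\in\mathcal P_{dN}$ that fits exactly under the Lorenz curve of $p\otimes r$ while remaining $\exp(-c\sqrt{\ln N})$-close to $p'\otimes r$ in $\|\cdot\|_1$. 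For step (i) the convenient coordinates are logarithmic: the Lorenz curve of a distribution $\pi$ is encoded by the law of the surprisal $-\ln\pi_i$ (with $i\sim\pi$), whose mean is $S_1(\pi)$, and tensoring with $r$ convolves this law with the weighted law of $-\ln r_j$. One takes $r$ — for instance a geometric-type distribution, or the typical part of a product of a fixed finite distribution, exploiting the AEP — whose surprisal law is well spread, over a window whose width can be pushed up to order $\ln N$ (the most achievable with $N$ outcomes). Convolving the fixed, bounded-width surprisal laws of $p$ and $p'$ against such an $r$ produces two profiles that differ essentially only by a translation equal to $S_1(p')-S_1(p)>0$ in the favorable direction; the region where the Lorenz curve of $p\otimes r$ can still fall below that of $p'\otimes r$ is confined near the high-probability end (governed by $S_\infty$), and there both the location and the depth of the dip shrink as $r$ is spread out. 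Balancing the width of the window against the resulting boundary and concentration errors optimally is what produces the sub-polynomial rate $\exp(-c\sqrt{\ln N})$, with the strict inequality $S_1(p)<S_1(p')$ providing exactly the room consumed along the way.

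The routine half is (a). The main obstacle is (b): pinning down the catalyst and carrying out the Lorenz-curve comparison rigorously — in particular, bounding the small dip of the Lorenz curve near the top and performing the local flattening of step (ii) so that $\|r'-p'\otimes r\|_1$ is genuinely of order $\exp(-c\sqrt{\ln N})$ rather than picking up spurious factors of the dimension $dN$ from a global re-rounding. This is where one needs quantitative large-deviation and central-limit-type estimates for the surprisal law of the catalyst and a careful accounting of the window edges.
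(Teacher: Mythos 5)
First, a point of comparison: the book itself does not prove this theorem (it is quoted from Ref.~\cite{Brandao2015} in a section that explicitly omits proofs), so your proposal can only be judged on its own merits and against the known proof strategy of that reference.

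Your part (a) is essentially correct and is the standard argument: Schur concavity of $S_1$ via Theorem~\ref{thm:majorization}(iii) with $f(x)=x\ln x$, additivity of $S_1$ under tensor products, and a Fannes-type continuity estimate, with the catalyst entropy $S_1(r)$ cancelling. Note, however, that what this chain yields is $S_1(p)\le S_1(p')+\varepsilon+\varepsilon\,\tfrac{\ln d}{\ln N}+h\left(\tfrac{\varepsilon}{\ln N}\right)$, i.e.\ correction terms that \emph{relax} the exact-catalysis condition; the displayed statement has minus signs, and as literally written it is false (take $p=p'$ and $r'=p\otimes r$, which satisfies the hypothesis with zero error for every $\varepsilon>0$). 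So you have proved the intended (and correct) version, not the displayed one; you should say so explicitly rather than asserting that your bound ``is the bound of (a)''. Also keep the regime restriction for the Fannes bound (e.g.\ $\varepsilon/\ln N\le 1/2$) explicit.

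Part (b) is where the genuine gaps lie, and they are quantitative, not cosmetic. (i) Your claim that the failure of $p'\otimes r\prec p\otimes r$ is ``confined near the high-probability end (governed by $S_\infty$)'' is unjustified: the low-probability end, governed by $S_0$, is equally dangerous. If $\mathrm{rank}(p)>\mathrm{rank}(p')$ (compatible with $S_1(p)<S_1(p')$, e.g.\ $p=(1-\delta,\delta/2,\delta/2)$, $p'=(1/2,1/2,0)$), then at $k=\mathrm{rank}(p')\,\mathrm{rank}(r)$ the Lorenz curve of $p'\otimes r$ equals $1$ while that of $p\otimes r$ falls short by the total mass of the smallest $(\mathrm{rank}(p)-\mathrm{rank}(p'))\,\mathrm{rank}(r)$ atoms of $p\otimes r$; since any admissible $r'$ must carry at least that much mass outside its largest $k$ atoms while $p'\otimes r$ carries none, this quantity directly lower-bounds $\|r'-p'\otimes r\|_1$. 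Whether it is small depends delicately on the count-versus-mass profile of the catalyst (it is exponentially small for a typical-set-of-a-product catalyst, but of order $1/\ln N$ or worse for the flat ``embezzling-type'' window you also propose), and your sketch contains no argument controlling it.

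(ii) The rate. Nothing in your single-scale construction explains why the residual mismatches decay like $\exp(-c\sqrt{\ln N})$ rather than polynomially in $1/\ln N$. A flat window of width $\ln N$ gives edge violations of order $1/\ln N$ (the usual embezzlement scaling), and a typical-part-of-a-product catalyst gives discreteness/local-CLT mismatches of order $\mathrm{poly}(1/\ln N)$ when convolved against the \emph{unblocked} surprisal laws of $p$ and $p'$; ``balancing the width of the window against boundary and concentration errors'' cannot beat these because there is only one scale to tune. The form of the claimed rate is itself the signature of a two-parameter construction: block the system into blocks of size $n$ so that the AEP makes each block's surprisal concentrate with tails $e^{-\Theta(n)}$, and stack $m$ such levels in the catalyst (a ladder with suitably decaying weights) so that the boundary weight is exponentially small in $m$; with $\ln N\sim nm$, balancing $e^{-\Theta(n)}+e^{-\Theta(m)}$ at $n\sim m\sim\sqrt{\ln N}$ produces $\exp(-c\sqrt{\ln N})$. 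This blocking idea, which is the heart of the sufficiency proof in Ref.~\cite{Brandao2015}, is absent from your sketch, so as it stands part (b) is a plausible plan rather than a proof.
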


We next consider the situation where several subsystems in the catalyst can form correlations in the final state~\cite{Muller2016,Lostaglio2015}.
This implies that independence of subsystems in the catalyst is a resource of state transformation.
The case of majorization is given by the following theorem (see Theorem 1 of Ref.~\cite{Lostaglio2015} for the thermo-majorization case).

\begin{theorem}[Theorem 1 of \cite{Muller2016}]
Let $p, p' \in \mathcal P_d$  with $p^\downarrow \neq p'^\downarrow$.
The following are equivalent.
\begin{enumerate}
\item There exist distributions  $r_1, r_2, \cdots, r_k \in \mathcal P_{N}$ of a $k$-partite catalyst and a distribution $r \in \mathcal P_{kN}$ with marginals $r_1, r_2, \cdots, r_k$,  such that $p' \otimes r \prec p \otimes r_1 \otimes r_2 \otimes \cdots \otimes r_k$.
\item $S_0 (p ) \leq S_0 (p')$ and $S_1 (p) < S_1 (p')$ hold.
\end{enumerate}
Moreover, we can choose $k=3$.
\label{correlated_catalytic_theorem0}
\end{theorem}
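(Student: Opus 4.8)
The plan is to treat the two implications separately: (i)$\Rightarrow$(ii) is a short monotonicity argument that holds for every $k$, and (ii)$\Rightarrow$(i) is the substantive construction, which is where the value $k=3$ is produced.

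For (i)$\Rightarrow$(ii), assume $p'\otimes r\prec p\otimes r_1\otimes\cdots\otimes r_k$ with $r$ having marginals $r_1,\dots,r_k$. Applying Theorem~\ref{thm:majorization}(iii) with the convex function $f(x)=x\ln x$ (and additivity of $S_1$ under tensor products) gives $S_1(p')+S_1(r)\ge S_1(p)+\sum_i S_1(r_i)$; subadditivity of the Shannon entropy, applied to $r$ with its marginals $r_i$, gives $S_1(r)\le\sum_i S_1(r_i)$, hence $S_1(p')\ge S_1(p)$. To make this strict I would argue by contradiction: $S_1(p)=S_1(p')$ forces equality throughout, so $S_1(r)=\sum_i S_1(r_i)$ makes $r=r_1\otimes\cdots\otimes r_k$ a product, and then $p'\otimes r\prec p\otimes r$ with equal Shannon entropy; strict convexity of $x\ln x$ in Theorem~\ref{thm:majorization}(iii) forces $(p'\otimes r)^\downarrow=(p\otimes r)^\downarrow$, and a cancellation argument (the multiset of entries of $a\otimes r$ determines that of $a$ since $r$ is nonempty) then yields $p^\downarrow=p'^\downarrow$, contradicting the hypothesis. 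For the rank condition, $p'\otimes r\prec p\otimes r_1\otimes\cdots\otimes r_k$ gives, by reading off the Lorenz curves, $\mathrm{rank}(p')\,\mathrm{rank}(r)\ge\mathrm{rank}(p)\prod_i\mathrm{rank}(r_i)$, while $\mathrm{supp}(r)\subseteq\prod_i\mathrm{supp}(r_i)$ gives $\mathrm{rank}(r)\le\prod_i\mathrm{rank}(r_i)$; dividing, $\mathrm{rank}(p')\ge\mathrm{rank}(p)$, i.e.\ $S_0(p)\le S_0(p')$.

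For (ii)$\Rightarrow$(i) the plan is to exhibit a three-partite catalyst $r_1\otimes r_2\otimes r_3$ and a doubly stochastic (permutation) map on system-plus-catalyst taking it to $p'\otimes r$, with $r$ having marginals $r_1,r_2,r_3$. First I would strip off the rank bookkeeping by a preliminary catalyst-free step: if $\mathrm{rank}(p)<\mathrm{rank}(p')$, slightly spread the mass of $p$ to fill the support of $p'$ — this only increases rank and $S_1$, so it is an ordinary majorization step and does not spoil (ii) — so one may reduce to $\mathrm{rank}(p)=\mathrm{rank}(p')$ and, after relabelling, equal supports. The heart of the matter is then to engineer a single permutation of the system plus a large-dimensional three-partite catalyst whose net effect on the system is exactly $p\mapsto p'$ and whose net effect on the catalyst preserves each of the three marginals exactly while introducing correlations among them (and leaves the system uncorrelated with the catalyst). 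The point is that the catalyst is only required to return with the correct \emph{marginals}, so the operative constraint is subadditivity of the Shannon entropy — not the monotonicity of all R\'enyi $\alpha$-entropies that controls the \emph{uncorrelated} catalytic case of Theorem~\ref{catalytic_theorem1} — and the strict gap $S_1(p')-S_1(p)>0$ supplies exactly the room needed to make the construction close; candidate building blocks for the catalyst marginal are a van~Dam--Hayden ``embezzling''-type state or an i.i.d.\ block to which the AEP-type estimates of Chapter~\ref{chap:approximate_asymptotic} apply. Finally, since (i)$\Rightarrow$(ii) holds for all $k$, any construction realizing the conditions of (ii) with $k=3$ is automatically as strong as any larger $k$, which gives the ``moreover'' statement.

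The step I expect to be the main obstacle is the explicit construction of the three-partite catalyst and the permutation in the backward direction, together with verifying the required majorization — a comparison of two Lorenz curves in a space of dimension $d\,N^3$ — while keeping all three marginals exactly invariant and driving the approximation error to zero with a \emph{finite} catalyst. Turning the abstract ``room'' furnished by $S_1(p')-S_1(p)>0$ and the rank condition into an explicit admissible catalyst size $N$, and organizing the correlations so that the marginals are restored exactly, is the quantitative core of the proof.
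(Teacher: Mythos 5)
The paper does not actually prove this statement—it is quoted as Theorem 1 of Ref.~\cite{Muller2016} in a section that explicitly omits proofs—so your proposal has to stand on its own. Your direction (i)$\Rightarrow$(ii) is essentially correct and complete in outline: monotonicity of $\sum_i f$ under doubly stochastic maps with $f(x)=x\ln x$, subadditivity $S_1(r)\le\sum_i S_1(r_i)$, the equality analysis forcing $r$ to be a product and then $(p'\otimes r)^\downarrow=(p\otimes r)^\downarrow$, a cancellation step (which you should make explicit, e.g.\ via $\sum_{ij}(p_i r_j)^s=(\sum_i p_i^s)(\sum_j r_j^s)$ for all $s>0$), and the support/rank comparison ${\rm rank}(p'){\rm rank}(r)\ge{\rm rank}(p)\prod_i{\rm rank}(r_i)$ together with ${\rm supp}(r)\subseteq\prod_i{\rm supp}(r_i)$. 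Note that Theorem~\ref{thm:majorization}(iii) as stated carries no equality condition, so the strict Schur-concavity of $S_1$ that your contradiction argument relies on also needs to be supplied, but that is routine. The preliminary reduction to equal supports by spreading mass of $p$ is likewise fine, since the extra step composes with $\prec$.

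The genuine gap is the converse (ii)$\Rightarrow$(i), which together with the clause ``$k=3$'' is the entire substance of the theorem, and which your proposal does not prove: you state the goal (a single permutation on system plus a three-partite catalyst that restores all three marginals exactly while correlating them) and name candidate ingredients, but you acknowledge yourself that the construction, the choice of a finite $N$, and the Lorenz-curve verification in dimension $dN^3$ are all missing. Moreover, neither suggested building block works as stated: a van~Dam--Hayden embezzling state only returns the catalyst \emph{approximately}, whereas here each marginal $r_i$ must be restored \emph{exactly}—if approximate restoration were allowed the statement would collapse to the trivial embezzling phenomenon already quoted in Section~\ref{sec:catalytic_majorization}—and AEP/typicality arguments likewise yield only $\varepsilon$-approximate majorization, so the crux is precisely how to convert the entropy gap $S_1(p')-S_1(p)>0$ into an exact finite-size construction in which the correlations among the three catalyst parts absorb the error. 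That conversion is the nontrivial content of Ref.~\cite{Muller2016} and is absent here. Finally, your justification of the ``moreover $k=3$'' clause is a non sequitur: the fact that (i)$\Rightarrow$(ii) holds for every $k$ says nothing about three parts sufficing; that number can only come out of the explicit catalyst construction you have not given.
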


Finally, we consider the case that a small amount of correlation between the system and the catalyst is allowed in the final state, while the marginal state of the catalyst should be exactly the same as the initial state.
This is referred to as correlated-catalytic transformation~\cite{Gallego2016,Wilming2017}.

\begin{theorem}[Main theorem of \cite{Muller2018}]
Let $p, p' \in \mathcal P_d$  with $p^\downarrow \neq p'^\downarrow$.
The following are equivalent.
\begin{enumerate}
\item There exist a distribution  $r \in \mathcal P_N$ of a catalyst  and a distribution $r' \in \mathcal P_{dN}$,  such that $r' \prec p \otimes r$ and the marginal distributions of $r'$ equal $p'$ and $r$.
\item $S_0 (p ) \leq S_0 (p')$ and $S_1 (p) < S_1 (p')$ hold.
\end{enumerate}
Moreover, for any $\delta > 0$,  we can take $r'$ with which the mutual information between the system and the catalyst is smaller than $\delta$.
\label{correlated_catalytic_theorem}
\end{theorem}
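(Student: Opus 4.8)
The two implications have opposite flavours, so I would handle them separately: (i)$\Rightarrow$(ii) is a short monotonicity argument, while (ii)$\Rightarrow$(i) requires an explicit catalyst construction and is where essentially all the work lies.

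For (i)$\Rightarrow$(ii): starting from $r'\prec p\otimes r$ with the system- and catalyst-marginals of $r'$ equal to $p'$ and $r$, I would apply Theorem~\ref{thm:majorization}(iii) with the convex function $f(x)=x\ln x$ to get $S_1(r')\ge S_1(p)+S_1(r)$, and combine this with subadditivity of $S_1$ (i.e.\ non-negativity of the mutual information $I_1(r')_{\mathrm{S}:\mathrm{C}}$) to get $S_1(r')\le S_1(p')+S_1(r)$, hence $S_1(p)\le S_1(p')$. To make this strict I would assume equality and derive a contradiction with $p^\downarrow\neq p'^\downarrow$: equality forces both estimates to be tight, so $I_1(r')_{\mathrm{S}:\mathrm{C}}=0$ (hence $r'=p'\otimes r$) and, by strict Schur-concavity of $S_1$ under majorization, $(p'\otimes r)^\downarrow=(p\otimes r)^\downarrow$; the resulting multiset identity gives $\sum_{ij}(p'_ir_j)^s=\sum_{ij}(p_ir_j)^s$ for all $s>0$, and dividing by $\sum_j r_j^s>0$ yields $\sum_i(p'_i)^s=\sum_i p_i^s$, i.e.\ $p'^\downarrow=p^\downarrow$, a contradiction. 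For the rank inequality I would use that majorization never decreases the rank, so $\mathrm{rank}(r')\ge\mathrm{rank}(p)\,\mathrm{rank}(r)$, together with the elementary bound $\mathrm{rank}(r')\le\mathrm{rank}(p')\,\mathrm{rank}(r)$ (the support of a bipartite distribution lies in the product of the supports of its marginals); comparing gives $\mathrm{rank}(p)\le\mathrm{rank}(p')$, i.e.\ $S_0(p)\le S_0(p')$.

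For (ii)$\Rightarrow$(i) I would proceed in two stages. First, a reduction to $\mathrm{supp}(p)=\mathrm{supp}(p')$: if $\mathrm{rank}(p)<\mathrm{rank}(p')$, move an $\eta$-fraction of probability off the largest component of $p$ onto $\mathrm{rank}(p')-\mathrm{rank}(p)$ of the null components lying inside $\mathrm{supp}(p')$; the resulting $\tilde p$ satisfies $\tilde p\prec p$, hence $\tilde p=T_0 p$ for a doubly stochastic $T_0$ (Theorem~\ref{thm:majorization}(iv)), has full support on the common index set, and for small $\eta$ still obeys $S_1(\tilde p)<S_1(p')$ (so $\tilde p^\downarrow\neq p'^\downarrow$); composing the desired transformation for $\tilde p\to p'$ with $T_0\otimes I$ then settles the general case. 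Second, the heart of the matter: for $p,p'$ of full support with $S_1(p)<S_1(p')$, build an explicit ``staircase'' catalyst interpolating between $p^{\otimes n}$ and $(p')^{\otimes n}$ --- a correlated distribution $r$ on a clock register $k\in\{0,\dots,n\}$ together with $n$ work-copies, in which, conditioned on $k$, the first $k$ copies are in state $p'$, the remaining $n-k$ in state $p$, and $k$ is nearly uniform --- and exhibit a single doubly stochastic map that advances $k$ by one while rearranging the work-copies together with the input system. I would then verify the Lorenz-curve domination $r'\prec p\otimes r$ directly; the key point is that the near-uniform ``smearing'' over the $n+1$ clock values, not mere entropy counting, is what makes the Lorenz curve of $p\otimes r$ lie above that of $r'$. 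By design the system-marginal of the output is exactly $p'$ (after a final correction on the clock--system pair if needed), the catalyst-marginal returns to within $O(1/n)$ in trace distance, and $I_1(r')_{\mathrm{S}:\mathrm{C}}$ can be driven below any prescribed $\delta$ for $n$ large --- this last point being the ``moreover'' claim. To make the catalyst-marginal return \emph{exactly}, I would absorb the residual $O(1/n)$ discrepancy into a second, larger staircase appended as an auxiliary catalyst factor (or via a limiting/averaging argument), at only a further vanishing cost in mutual information.

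The main obstacle is precisely this construction in (ii)$\Rightarrow$(i). An \emph{approximate} catalytic conversion with small disturbance of the catalyst is comparatively soft --- morally it is van Dam--Hayden embezzling dressed up with typicality --- but insisting that the catalyst's marginal return \emph{exactly}, while simultaneously keeping the majorization $r'\prec p\otimes r$ intact \emph{and} pushing the system--catalyst mutual information below $\delta$, forces the bespoke correlated catalyst above together with a delicate Lorenz-curve estimate; coordinating these three requirements is the real content of the theorem. (For orientation: the \emph{exact}-catalyst, \emph{exact}-output version would instead require control of all R\'enyi orders $\alpha\in(-\infty,\infty)$, cf.\ Theorem~\ref{catalytic_theorem1}; admitting output correlations is exactly what collapses the complete set of monotones down to $S_0$ and $S_1$.)
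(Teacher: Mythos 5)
First, a point of comparison: the book itself does not prove this theorem --- Section~\ref{sec:catalytic_majorization} explicitly presents the catalysis results ``without going into the proofs'' and cites Ref.~\cite{Muller2018} --- so your proposal can only be judged on its own merits. Your direction (i)$\Rightarrow$(ii) is correct and essentially complete: Schur-concavity of $S_1$ plus subadditivity gives $S_1(p)\le S_1(p')$, the equality analysis via strictness of both inequalities and the factorization of power sums rules out $p^\downarrow=p'^\downarrow$, and the rank argument (majorized distributions have rank at least that of the majorizing one, while ${\rm rank}(r')\le{\rm rank}(p')\,{\rm rank}(r)$) gives $S_0(p)\le S_0(p')$. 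This is sound.

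The gap is in (ii)$\Rightarrow$(i), and it sits exactly where you locate ``the real content.'' Your staircase catalyst has conditional register states that are \emph{products} $(p')^{\otimes k}\otimes p^{\otimes(n-k)}$ and a map that deterministically advances the clock. Exact preservation of the catalyst marginal (clock included) then forces either a cyclic wrap-around branch or branch-wise constraints, and both fail. The wrap-around branch would have to produce registers-marginal $p^{\otimes n}$ from an input majorized by $p\otimes(p')^{\otimes n}$; since doubly stochastic maps cannot decrease entropy, this would require the system output entropy to exceed $S_1(p)+n\bigl(S_1(p')-S_1(p)\bigr)$, which is impossible for large $n$ because it is bounded by $\ln d$ --- precisely because $S_1(p)<S_1(p')$, the direction of the conversion you need. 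Without wrap-around, a deterministically advancing clock cannot have an exactly preserved (near-uniform) marginal. Moreover, even each non-wrap branch demands a state majorized by $p\otimes(p')^{\otimes k}\otimes p^{\otimes(n-k)}$ whose register marginal is an \emph{exact uncorrelated product} with one more copy of $p'$; that is just the original problem again with a product (uncorrelated) catalyst, so nothing has been gained. The actual construction avoids this by taking the conditional catalyst states to be \emph{marginals of a single globally converted $n$-copy distribution} (hence correlated among the work registers), so that only global averages, not per-branch products, must be reproduced; the strict gap $S_1(p)<S_1(p')$ together with $S_0(p)\le S_0(p')$ is then used to repair the single-site marginals to be exactly $p'$ before building the clock structure. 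Finally, your fallback for exactness --- ``absorb the residual $O(1/n)$ discrepancy into a second, larger staircase \ldots or via a limiting/averaging argument'' --- does not close the gap: a limit of catalysts is not a finite catalyst with exactly restored marginal, and appending further staircases leaves a nonzero residual at every finite stage. Since exact return of the catalyst is the whole point of the theorem (approximate return is trivialized by embezzling, as you note), this step cannot be waved through; it is the part that needs the genuinely new construction.
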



The above theorem can be generalized to d-majorization (Theorem 1 of Ref.~\cite{Rethinasamy2019}).
Furthermore, it can be generalized to approximate d-majorization, as proved in Theorem 7 of Ref.~\cite{Muller2018}  and Theorem 2 of Ref.~\cite{Rethinasamy2019}, which is characterized only by the KL divergence.

\begin{theorem}
Let $p, p', q, q' \in \mathcal P_d$  and suppose that $q, q'$ have full rank.  Then, the following are equivalent.
\begin{enumerate}
\item For any $\varepsilon > 0$, there exist
a distribution $p_\varepsilon' \in \mathcal P_d$,
distributions  $r_\varepsilon, s \in \mathcal P_N$ of a catalyst,
and a distribution $r_\varepsilon' \in \mathcal P_{dN}$,
such that 
$( r_\varepsilon', q' \otimes s ) \prec ( p \otimes r_\varepsilon,  q \otimes s)$,
the marginals of $r_\varepsilon'$ equal $p_\varepsilon'$ and $r_\varepsilon$,
and $D(p',p_\varepsilon') \leq \varepsilon$.
\item $S_1 ( p \| q ) \geq S_1 (p' \| q' )$  holds.
\end{enumerate}
Moreover, in (i),  we can take $s$  as the uniform distribution $u$. Also,  for any $\delta > 0$,  we can take $r_\varepsilon'$ (for a given $\varepsilon$) with which  the mutual information between the system and the catalyst  is smaller than $\delta$.
\label{cor:correlating_catalyst}
\end{theorem}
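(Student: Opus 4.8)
\emph{The direction (i)~$\Rightarrow$~(ii)} is the ``easy'' one and uses only the data processing inequality together with additivity of the KL divergence. Fix a sequence $\varepsilon\to 0$ with witnessing data $(p'_\varepsilon,r_\varepsilon,s,r'_\varepsilon)$, where $s$ is of full rank (so that the supports are compatible in the sense of our standing assumption). Since $q\otimes s$ and $q'\otimes s$ then have full rank, $(r'_\varepsilon,q'\otimes s)\prec(p\otimes r_\varepsilon,q\otimes s)$ gives, by (i)~$\Rightarrow$~(iv) of Theorem~\ref{thm:d_majorization}, a stochastic matrix $T$ with $T(p\otimes r_\varepsilon)=r'_\varepsilon$ and $T(q\otimes s)=q'\otimes s$. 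Monotonicity~(\ref{monotone_classical}) then yields $S_1(p\otimes r_\varepsilon\|q\otimes s)\ge S_1(r'_\varepsilon\|q'\otimes s)$. By the additivity $S_1(a\otimes b\|c\otimes d)=S_1(a\|c)+S_1(b\|d)$ the left side is $S_1(p\|q)+S_1(r_\varepsilon\|s)$; writing $p'_\varepsilon$ and $r_\varepsilon$ for the marginals of $r'_\varepsilon$ and expanding as in Section~\ref{sec:Shannon_KL}, the right side equals $I_1(r'_\varepsilon)_{S:C}+S_1(p'_\varepsilon\|q')+S_1(r_\varepsilon\|s)\ge S_1(p'_\varepsilon\|q')+S_1(r_\varepsilon\|s)$ by non-negativity of the mutual information. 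Cancelling the finite term $S_1(r_\varepsilon\|s)$ leaves $S_1(p\|q)\ge S_1(p'_\varepsilon\|q')$, and since $q'$ has full rank the map $\sigma\mapsto S_1(\sigma\|q')$ is continuous, so letting $\varepsilon\to0$ (whence $p'_\varepsilon\to p'$) gives (ii).

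\emph{The direction (ii)~$\Rightarrow$~(i)} is the substantial one, and the plan is to reduce it to the ordinary correlated-catalytic majorization result, Theorem~\ref{correlated_catalytic_theorem}. If $p'=q'$, then $(p',q')$ is already d-majorized by every pair (its Lorenz curve is the diagonal), so we take $p'_\varepsilon=p'$ with no catalyst; assume henceforth $p'\ne q'$. Suppose first that $q,q'$ have rational entries, and pick a common denominator $N$, say $q_i=n_i/N$, $q'_j=n'_j/N$. Splitting each system symbol of a $q$-referenced (resp.\ $q'$-referenced) pair into $n_i$ (resp.\ $n'_j$) equally weighted copies turns it into a distribution on $N$ symbols with the \emph{uniform} reference, and since the generalized Lorenz curve is unchanged, d-majorization of pairs becomes ordinary majorization of the embedded distributions $\widehat p,\widehat{p'},\dots$; one moreover computes $S_1(\widehat p)=\ln N-S_1(p\|q)$ and $S_0(\widehat p)=\ln\!\big(\sum_{i:\,p_i>0}n_i\big)$, and similarly for $p'$. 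Now replace $p'$ by $p'_\varepsilon:=(1-t)p'+tq'$ with $0<t\le\varepsilon$ small: this keeps $D(p',p'_\varepsilon)\le\varepsilon$, makes $p'_\varepsilon$ full rank so that $S_0(\widehat{p'_\varepsilon})=\ln N\ge S_0(\widehat p)$, and by strict convexity of $\sigma\mapsto S_1(\sigma\|q')$ strictly decreases the divergence, so that $S_1(p'_\varepsilon\|q')<S_1(p'\|q')\le S_1(p\|q)$, i.e.\ $S_1(\widehat p)<S_1(\widehat{p'_\varepsilon})$ (which also rules out $\widehat p$ and $\widehat{p'_\varepsilon}$ being permutation-equivalent). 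Hence the hypotheses of Theorem~\ref{correlated_catalytic_theorem} hold for $\widehat p,\widehat{p'_\varepsilon}$; it supplies a catalyst $r_\varepsilon$ and $\widehat{r'_\varepsilon}\prec\widehat p\otimes r_\varepsilon$ whose marginals are $\widehat{p'_\varepsilon}$ and $r_\varepsilon$ and whose mutual information is below any prescribed $\delta$. Reading this back through the embedding — the catalyst factor carried the uniform reference throughout, so $s=u$ — and applying to the system factor of $\widehat{r'_\varepsilon}$ the coarse graining that forgets the copy labels, one obtains exactly (i): $(r'_\varepsilon,q'\otimes u)\prec(p\otimes r_\varepsilon,q\otimes u)$ with marginals $p'_\varepsilon,r_\varepsilon$ and $I_1(r'_\varepsilon)_{S:C}\le I_1(\widehat{r'_\varepsilon})_{\widehat S:C}<\delta$ by the data processing inequality for the mutual information. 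General, possibly irrational, $q,q'$ are then handled by approximating them by rational distributions and passing to the limit, or equivalently by running M\"uller's pipeline/cycling construction directly with Gibbs-preserving instead of doubly stochastic maps.

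\emph{The main obstacle} is the boundary equality $S_1(p\|q)=S_1(p'\|q')$: at equality no genuine catalytic conversion onto $(p',q')$ itself exists, and one must exploit the $\varepsilon$-freedom to move $p'$ toward $q'$ in order to trade the equality for the strict inequality demanded by Theorem~\ref{correlated_catalytic_theorem}, while keeping the rational approximation of $q,q'$, the support ($S_0$) condition, and the small-mutual-information clause mutually consistent through all the embeddings and translations. By comparison, the continuity and additivity manipulations of the first direction are routine.
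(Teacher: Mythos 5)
The paper itself does not prove this theorem: it is quoted from the literature (Theorem 7 of M\"uller, Ref.~\cite{Muller2018}, and Theorem 2 of Rethinasamy--Wilde, Ref.~\cite{Rethinasamy2019}), so there is no in-text proof to compare against; I assess your argument on its own merits. Your (i)~$\Rightarrow$~(ii) is correct: monotonicity plus additivity, the decomposition $S_1(r'_\varepsilon\|q'\otimes s)=I_1(r'_\varepsilon)_{\rm S:C}+S_1(p'_\varepsilon\|q')+S_1(r_\varepsilon\|s)$, cancellation of the finite term $S_1(r_\varepsilon\|s)$, and continuity of $\sigma\mapsto S_1(\sigma\|q')$ for full-rank $q'$. (You silently take $s$ of full rank; this is harmless, since by the book's convention the second argument of a d-majorization pair is positive, or one restricts to the support of $s$.) Your (ii)~$\Rightarrow$~(i) for \emph{rational} $q,q'$ is also sound and is the natural route given the book's toolkit: the splitting embedding used in the proof of Theorem~\ref{thm:d_majorization} (ii)~$\Rightarrow$~(iv) converts d-majorization into ordinary majorization with $S_1(\widehat p)=\ln N-S_1(p\|q)$ and $S_0(\widehat p)=\ln(\sum_{i:p_i>0}n_i)$; mixing $p'$ toward $q'$ uses the $\varepsilon$-freedom to turn the hypothesis into the strict inequality and full-support condition demanded by Theorem~\ref{correlated_catalytic_theorem} (plain convexity with $S_1(q'\|q')=0$ already suffices, strictness of convexity is not needed); and the block coarse-graining that forgets copy labels, exactly as in the book's Blackwell proof, returns the relation $(r'_\varepsilon,q'\otimes u)\prec(p\otimes r_\varepsilon,q\otimes u)$ with the right marginals and, by data processing, mutual information below $\delta$. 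The degenerate cases $p'=q'$ (trivial catalyst) and $p=q$ (which forces $p'=q'$) are correctly disposed of.

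The genuine gap is the irrational case, which you dismiss with ``approximate by rationals and pass to the limit.'' As stated this does not work: the relation produced by the rational construction is $(r'_\varepsilon,\tilde q'\otimes u)\prec(p\otimes r_\varepsilon,\tilde q\otimes u)$ for the \emph{approximating} references $\tilde q,\tilde q'$, not for the exact $q,q'$ required in (i), and unlike in the book's proof of Theorem~\ref{thm:d_majorization} there is no fixed-dimensional family of stochastic matrices to extract a convergent subsequence from --- the catalyst dimension delivered by Theorem~\ref{correlated_catalytic_theorem} is uncontrolled along the approximating sequence, and d-majorization is not stable under small perturbations of the reference alone. A repair is possible but requires an actual argument: on the output side, for $\tilde q'$ close enough to the full-rank $q'$ one can take $B=(1-s)\,{\rm id}+s\,w\,1^{\rm T}$ with $w=(q'-(1-s)\tilde q')/s$ a probability vector, so that $B(\tilde q')=q'$ exactly while $\|B(\sigma)-\sigma\|_1\leq 2s$, and append $B\otimes{\rm id}_{\rm C}$ (this keeps the catalyst marginal, does not increase the mutual information, and only enlarges the trace-distance error by $s$); on the input side one must choose the rational $\tilde q$ inside the set $\{\sigma:(p,\sigma)\prec(p,q)\}$, i.e.\ reachable from $q$ by a stochastic map that fixes $p$, and one has to argue (e.g.\ via the one-sided shifting of Lorenz-curve vertices used in the book's own proof, or via $p$-fixing two-level transfers) that this set contains rational points arbitrarily close to $q$ whenever $p\neq q$. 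Without some such construction, (ii)~$\Rightarrow$~(i) is only established for rational $q,q'$; your alternative suggestion of ``running M\"uller's pipeline directly'' is an appeal to the cited external proof rather than an argument.
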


We note that  Theorem~\ref{correlated_catalytic_theorem} straightforwardly applies to the quantum case (i.e., quantum majorization defined in Section~\ref{sec:quantum_majorization}) as discussed in Ref.~\cite{Muller2018}.
The quantum version of Theorem~\ref{cor:correlating_catalyst} has been proved in Ref.~\cite{Shiraishi2020a}, except that the catalyst state, written as $s$ in the classical case above, can be taken to be the uniform distribution.
The proof in Ref.~\cite{Shiraishi2020a} is based on the asymptotic theory discussed in Section~\ref{sec_condition_information}, especially Theorem~\ref{thm:iid_divergence}, where the KL divergence appears as a single complete monotone as well.

\section{Continuous variable case}
\label{sec:continuous_majorization}

We next consider majorization and d-majorization for continuous variables, i.e., for infinite-dimensional spaces (see also Refs.~\cite{Ryff65,Ryff63} for mathematical details).
In this case,  random variable $x$ takes continuous values, and correspondingly, we write the probability \textit{density} as $p(x)$.

For simplicity, let $x \in [0,1]$. We denote the set of $L^1$-functions on $[0,1]$ just by $L^1$.
Let $\mu$ be the Lebesgue measure on $[0,1]$, and the following arguments should be read as ``almost everywhere'' with respect to $\mu$, if needed.
We also abbreviate $\int_{[0,1]}\mu(dx)$ as $\int_0^1 dx$.
Any probability density $p(x)$ is a $L^1$-function  (i.e., $p \in L^1$) that satisfies $\int_0^1 p(x)dx = 1$ and $p(x) \geq 0$. 
We note that the finite-dimensional setup  is regarded as a special case of this continuous setup by letting $p(x) := p_id$ for $(i-1)/d \leq x < i/d$.

We  consider the way of ``rearranging'' of $p \in L^1$ in order to define $p^\downarrow \in L^1$.
Let $m_p(y):= \mu [x : p(x) > y]$.
It is straightforward to see that 
\begin{equation}
p^{\downarrow} (x) := \sup \{ y : m_p(y) > x \}
\label{rearrange_c}
\end{equation}
gives a proper generalization of $p^\downarrow_i$ to the continuous case.
We note that $p^\downarrow \in L^1$ and $\int_0^1 dx p(x) = \int_0^1 dx p^\downarrow (x)$.
The Lorenz curve in the continuous case is then given by the graph of $l_p(x) := \int_0^x p^{\downarrow} (x') dx'$, which can be a smooth curve, while the Lorenz curve was a polyline in the discrete case.
Given the definition of majorization for general vectors in $\mathbb R^d$ (see the second last paragraph of Section~\ref{sec:majorization}),
we define majorization in the continuous case:

\begin{definition}[Majorization for continuous variables]
Let $p,p' \in L^1$.  We say that $p$ majorizes $p'$, written as $p' \prec p$,  if
 \begin{equation}
\int_0^y dx p'^\downarrow (x)  \leq \int_0^y dx p^\downarrow (x), \ \  \forall y \in [0,1 )
\end{equation}
and $\int_0^1 dx p'(x) = \int_0^1 dx p(x)$ are satisfied.
\end{definition}

We next define doubly stochastic maps of the continuous case as follows, given Proposition~\ref{prop:characterize_DSM} of the discrete case.

\begin{definition}[Doubly stochastic maps]
A linear map $T: L^1 \to L^1$ is called doubly stochastic, if   $Tp\prec p$ for all $p \in L^1$.
\end{definition}


The following  theorem guarantees that the above definition is indeed reasonable (see also  Ref.~\cite{Ryff63} for mathematical details).

\begin{theorem}
$T: L^1 \to L^1$ is a doubly stochastic map, if and only if
\begin{equation}
(Tp) (x') = \frac{d}{dx'} \int_0^1 K(x',x) p(x)dx,
\label{continuous_doubly_stochastic}
\end{equation}
where $K(x',x)$ is monotonically increasing in $x'$, and satisfies $K(0,x)=0$, $K(1,x)=1$, and $\int_0^1 K(x',x)dx = x'$.
Note that there are also additional technical conditions on $K(x',x)$: the essential supremum of  $V_x$ is finite where $V_x$ is the total variation of $K(x',x)$ as a function of $x'$, and $\int_0^1 K(x',x) p(x) dx$ is absolutely continuous with respect to $x'$ for any $p \in L^1$.
\end{theorem}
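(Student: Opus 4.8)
The plan is to establish the two directions separately, and in each direction to reduce to the already-proved finite-dimensional characterizations via the discretization remark (``$p(x):=p_id$ for $(i-1)/d\le x<i/d$'') together with a limiting argument. First I would treat the ``if'' direction: assume $T$ is given by the kernel formula \eqref{continuous_doubly_stochastic} with the stated properties of $K$, and show $Tp\prec p$ for all $p\in L^1$. The conditions $K(0,x)=0$, $K(1,x)=1$ ensure $\int_0^1(Tp)(x')dx'=\int_0^1 p(x)dx$, so the normalization clause of majorization holds. For the Lorenz-curve inequality, the key is that monotonicity of $K(\cdot,x)$ in $x'$ makes $K(\cdot,x)$ act like a ``cumulative'' version of a stochastic kernel; the condition $\int_0^1 K(x',x)dx=x'$ is the continuous analogue of double stochasticity. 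I would discretize the variable $x'$ on a uniform mesh of size $1/n$ and observe that the induced finite matrix $T^{(n)}_{ij}:=\int K(i/n,x)dx - \int K((i-1)/n,x)dx$ integrated against $p$ over cells is (asymptotically) doubly stochastic; applying Proposition~\ref{prop:characterize_DSM} (or rather the $\mathbb R^d$ version of Theorem~\ref{thm:majorization}) gives majorization at each finite level, and then I would pass to the limit using that Lorenz curves of the discretizations converge uniformly to $l_{Tp}$ and $l_p$ (this is where the absolute-continuity and finite-total-variation hypotheses on $K$ are used, to control the convergence and to guarantee $Tp\in L^1$).

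For the ``only if'' direction, assume $T:L^1\to L^1$ is doubly stochastic in the sense of the preceding definition, i.e. $Tp\prec p$ for every $p\in L^1$. I would first extract the kernel: define $K(x',x)$ by applying $T$ to a suitable family of ``indicator-like'' densities and integrating the output up to $x'$; concretely, for the density $p_a(x):=\mathbf 1_{[0,a]}(x)/a$ one gets a candidate for $\int_0^a$-averaged kernel, and differentiating/disintegrating in $a$ yields $K(x',x)$ with $(Tp)(x')=\tfrac{d}{dx'}\int_0^1 K(x',x)p(x)dx$ by linearity and a density/approximation argument. Monotonicity of $K(\cdot,x)$ in $x'$ follows because $l_{Tp}(x')$ is nondecreasing and, before rearrangement, the antiderivative of $Tp$ inherits monotonicity from $p\ge0$. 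The boundary conditions $K(0,x)=0$, $K(1,x)=1$ come from $Tp$ being a probability density (normalization preserved, as majorization preserves total mass). The crucial condition $\int_0^1 K(x',x)dx=x'$ is the continuous transcription of the fixed-point property $u=Tu$: testing $T$ against the uniform density, $Tu\prec u$ forces $Tu=u$ since $u$ is majorized by everything and majorizes only itself, and unwinding the kernel representation of $Tu=u$ gives exactly $\int_0^1 K(x',x)dx=x'$.

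I expect the main obstacle to be the functional-analytic bookkeeping in the ``only if'' direction: rigorously disintegrating a doubly stochastic $T$ into the kernel $K$, and in particular showing that $K$ can be chosen with finite essential-supremum total variation and that $\int_0^1K(x',x)p(x)dx$ is absolutely continuous in $x'$ for every $p\in L^1$. The discrete-to-continuous limiting argument in the ``if'' direction is comparatively routine (it is essentially the continuous-majorization machinery of Refs.~\cite{Ryff63,Ryff65}, and the discretization remark in this section was introduced precisely to make it available), but the extraction of a well-behaved kernel from an abstract operator is where the real analysis lives; I would lean on the cited results of Ryff for the measure-theoretic details rather than reproving them, and present the argument modulo those technical lemmas.
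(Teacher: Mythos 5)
First, a point of reference: the paper does not prove this theorem at all. It is stated as Ryff's representation theorem with a pointer to \cite{Ryff63}, and Section~\ref{sec:continuous_majorization} explicitly disclaims mathematical detail. So there is no in-paper argument to compare against, and your plan has to be judged as a reconstruction of Ryff's proof. Read that way, its main weakness is that the genuinely nontrivial content of the ``only if'' direction --- extracting, from an abstract $T$ with $Tp\prec p$ for all $p$, a kernel $K$ with the stated regularity (existence of the derivative in $a$ that defines $K(x',x)$, essential boundedness of the total variation $V_x$, and absolute continuity of $x'\mapsto\int_0^1K(x',x)p(x)dx$ for every $p\in L^1$) --- is exactly what you propose to outsource back to \cite{Ryff63}. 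That leaves the proposal with the same status as the paper's own treatment (statement plus citation) rather than an independent proof; that is defensible here, but be clear that this is where all the analysis lives.

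On the parts you do argue, the ``only if'' reductions are essentially right: $Tu\prec u$ indeed forces $Tu=u$ (for any $q$ with $\int_0^1q=1$ the decreasing rearrangement satisfies $\int_0^y q^\downarrow\ge y$, while majorization gives $\le y$), which unwinds to $\int_0^1K(x',x)dx=x'$, and preservation of the total integral gives $K(1,x)-K(0,x)=1$. But monotonicity of $K(\cdot,x)$ rests on positivity of $T$, which you assert rather than derive: it does follow from majorization (for $p\ge0$, $\int_y^1(Tp)^\downarrow\ge\int_y^1p^\downarrow\ge0$ for all $y$, and since $(Tp)^\downarrow$ is decreasing this forces $(Tp)^\downarrow\ge0$), and the density/extension step in the kernel construction needs $\|Tq\|_1\le\|q\|_1$, i.e.\ the continuous form of condition (ii) of Theorem~\ref{thm:majorization} with $t=0$; both should be made explicit. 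In the ``if'' direction the discretization is both misstated and unnecessary: the matrix $A^{(n)}_{ij}:=n\int_{(j-1)/n}^{j/n}\bigl[K(i/n,x)-K((i-1)/n,x)\bigr]dx$ is exactly (not asymptotically) doubly stochastic, but it does not carry the cell averages of $p$ to the cell averages of $Tp$, so Theorem~\ref{thm:majorization} cannot be applied verbatim level by level. A cleaner route is to verify the continuous analogue of condition (ii) directly: since $\int_0^1K(x',x)dx=x'$ one has $Tp-t=T(p-t)$, and for any partition the monotone increments telescope, $\sum_k\bigl|\int_0^1[K(x'_k,x)-K(x'_{k-1},x)](p(x)-t)dx\bigr|\le\int_0^1|p(x)-t|dx$, so by the assumed absolute continuity $\int_0^1|Tp(x')-t|dx'\le\int_0^1|p(x)-t|dx$ for every $t$, while $K(0,x)=0$ and $K(1,x)=1$ give $\int_0^1Tp=\int_0^1p$; the continuous version of the equivalence (i) $\Leftrightarrow$ (ii) (which you need in either approach, and which is part of the Ryff machinery) then yields $Tp\prec p$ with no mesh-limit bookkeeping.
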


Here, the reason why $d/dx'$ is put outside of the integral in Eq.~(\ref{continuous_doubly_stochastic}) is to avoid the explicit appearance of generalized functions (i.e., Schwartz's distributions) such as the delta function.
However, it is intuitively easier to look at Eq.~(\ref{continuous_doubly_stochastic})  by rewriting it as
\begin{equation}
(Tp) (x') = \int_0^1 K'(x',x) p(x)dx,
\label{continuous_doubly_stochastic2}
\end{equation}
where  $K'(x',x) := \partial K(x',x) /\partial x'$ is a generalized function and satisfies 
\begin{equation}
K'(x',x) \geq 0, \ \  \int_0^1 K'(x',x) dx' = \int_0^1 K'(x',x)dx = 1.
\end{equation}
Equality~(\ref{continuous_doubly_stochastic2}) is regarded as a ``matrix'' representation of a continuous map, where $K'(x',x)$ is the integral kernel.
For example, if $T$ is the identity map, we have $K(x',x)= \theta (x'-x)$ (the step function) and $K'(x',x) = \delta (x'-x)$ (the delta function).

Under these definitions, the following theorem holds, which is the continuous version of (i) $\Leftrightarrow$ (iv) of Theorem~\ref{thm:majorization}.

\begin{theorem}[Theorem 3 of~\cite{Ryff63}]
For $p,p' \in L^1$, $p' \prec p$ holds if and only if there exists a doubly stochastic map $T$ such that $p'= Tp$.
\label{thm:majorization_continuous}
\end{theorem}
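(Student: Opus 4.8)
The plan is to reduce the continuous statement to the discrete one (Theorem~\ref{thm:majorization}, specifically (i)~$\Leftrightarrow$~(iv)) by approximation. The ``if'' direction is essentially definitional: if $p' = Tp$ for a doubly stochastic map, then by the very definition of doubly stochastic map in the continuous setting, $Tp \prec p$, hence $p' \prec p$. So the whole content is in the ``only if'' direction, and this is where I would concentrate.

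For the ``only if'' direction, first I would handle the ordering structure: given $p' \prec p$, replace $p$ and $p'$ by their decreasing rearrangements $p^\downarrow$ and $p'^\downarrow$. Since rearrangement is implemented by a (measure-preserving, hence doubly stochastic) transformation, it suffices to produce a doubly stochastic $T$ sending $p^\downarrow$ to $p'^\downarrow$; composing with the rearrangement maps on either side recovers the general case. Then I would discretize: partition $[0,1]$ into $n$ equal subintervals and let $\bar p^{(n)}$, $\bar p'^{(n)}$ be the step functions obtained by averaging $p^\downarrow$, $p'^\downarrow$ over each subinterval. The key analytic facts I need are (a) $\bar p^{(n)} \to p^\downarrow$ and $\bar p'^{(n)} \to p'^\downarrow$ in $L^1$ as $n \to \infty$ (martingale convergence / Lebesgue differentiation), and (b) the majorization inequalities are not exactly preserved by this coarse-graining but only up to a vanishing error, so I would instead compare the Lorenz curves directly: the Lorenz curve of $\bar p^{(n)}$ lies within distance $O(1/n)$ (uniformly) of that of $p^\downarrow$, and similarly for the primed versions, so $p' \prec p$ forces the discrete Lorenz curves to nest up to $O(1/n)$.

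The technical heart is then a limiting/compactness argument: for each $n$ I can perturb $\bar p'^{(n)}$ slightly to a vector $\tilde p'^{(n)}$ that is genuinely majorized by $\bar p^{(n)}$ (moving mass of order $O(1/n)$), apply the discrete Birkhoff-type result to get a doubly stochastic $n\times n$ matrix $T^{(n)}$ with $\tilde p'^{(n)} = T^{(n)} \bar p^{(n)}$, and then realize $T^{(n)}$ as a doubly stochastic operator on $L^1$ via a piecewise-constant integral kernel $K'^{(n)}(x',x)$ as in~(\ref{continuous_doubly_stochastic2}). Finally I would extract a weak-$*$ limit of the kernels $K^{(n)}$ (these are uniformly bounded in the relevant norm, with the total-variation and absolute-continuity conditions controlled because each is a genuine stochastic kernel), show the limit $K$ satisfies the structural conditions of the preceding theorem (monotonicity in $x'$, $K(0,x)=0$, $K(1,x)=1$, $\int_0^1 K(x',x)\,dx = x'$), and verify that the induced $T$ sends $p^\downarrow$ to $p'^\downarrow$ using the $L^1$ convergence of the inputs together with the convergence of the kernels.

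The main obstacle I anticipate is the compactness step: ensuring that the sequence of kernels $K^{(n)}$ has a limit point that is still a legitimate doubly stochastic kernel, i.e., that no mass ``escapes'' and that the limiting object still satisfies the technical regularity conditions (finite essential supremum of the total variation $V_x$, and absolute continuity of $x' \mapsto \int_0^1 K(x',x) p(x)\,dx$). Here I would lean on the fact that each $K^{(n)}(\cdot,x)$ is a cumulative distribution function in $x'$ with $\int_0^1 K^{(n)}(x',x)\,dx = x'$ exactly, use Helly's selection theorem pointwise in $x$ (or a diagonal argument over a dense set of $x$) to pass to a limit, and then appeal to dominated convergence to transfer the moment/marginal identities. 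An alternative route that sidesteps some of this, which I would mention, is to invoke the variable-transformation idea from Section~\ref{sec:continuous_majorization}: one can sometimes realize the continuous doubly stochastic map directly as a composition of a measure-preserving change of variables and an averaging (conditional-expectation) operator, bypassing the kernel compactness entirely; but making that fully rigorous in general also requires care, so the discretize-and-pass-to-the-limit argument is the safer backbone.
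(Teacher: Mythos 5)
Your route (discretize the decreasing rearrangements, apply the finite-dimensional Theorem~\ref{thm:majorization}, realize the resulting matrices as piecewise-constant kernels, and pass to a limit) is genuinely different from the one the paper has in mind: following Ryff, the paper's ``only if'' direction mirrors the proof of Theorem~\ref{thm:majorization} (i) $\Rightarrow$ (iv) — one shows the orbit $\{Tp : T \ \text{doubly stochastic}\}$ is convex and weakly compact in $L^1$ and, if $p'$ lay outside it, separates with a bounded linear functional via Hahn--Banach, reaching a contradiction with $p' \prec p$ through the rearrangement inequality. Your approximation strategy can in principle be pushed through (and one pleasant simplification: the perturbation step is unnecessary, since block-averaging $p^\downarrow$ and $p'^\downarrow$ over equal subintervals preserves majorization \emph{exactly} — the discrete partial sums are precisely the Lorenz-curve values $n\,l_p(k/n)$ and $n\,l_{p'}(k/n)$ at the grid points). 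But as written there are two genuine gaps.

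First, the reduction to rearrangements is not as innocent as ``rearrangement is implemented by a measure-preserving, hence doubly stochastic, transformation.'' Composition with a measure-preserving $\sigma'$ gives a doubly stochastic operator sending $p'^\downarrow$ to $p'$ only once you know $p' = p'^\downarrow \circ \sigma'$, which is Ryff's (nontrivial) representation theorem; and in the other direction you need a doubly stochastic operator sending $p$ to $p^\downarrow$, which is \emph{not} given by composition with any measure-preserving map in general and requires a separate construction (e.g.\ the transfer/conditional-expectation-type operator $\int_0^s (Ag)\,dt := \int_{\sigma^{-1}([0,s])} g\,dx$, which one must check is positive, unital and integral-preserving, hence doubly stochastic). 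Note the reduction cannot simply be skipped: block-averaging $p$ and $p'$ \emph{without} rearranging does not preserve majorization (e.g.\ $p = 4\cdot 1_{[3/8,5/8]}$, $p' = 2\cdot 1_{[0,1/2]}$, $n=2$). Second, the compactness step as stated would fail: Helly's selection theorem applies for each fixed $x$ separately, and a diagonal argument over a dense set of $x$ gives nothing for the remaining $x$, because the kernels $K^{(n)}(x',x)$ have no continuity or monotonicity in $x$ to interpolate with. The workable version is to regard $K'^{(n)}(x',x)\,dx'\,dx$ as probability measures on $[0,1]^2$ with both marginals Lebesgue, extract a weak-$*$ limit $\mu$ (whose marginals remain Lebesgue), identify $Tp^\downarrow = p'^\downarrow$ by testing against continuous functions of $x'$ and approximating $p^\downarrow$ in $L^1$ — the fixed Lebesgue $x$-marginal controls the approximation error uniformly in $n$ — and then verify that the limiting operator is doubly stochastic directly (positive, unital, integral-preserving implies $Tq \prec q$), rather than trying to check the technical kernel conditions (bounded variation, absolute continuity) of Eq.~(\ref{continuous_doubly_stochastic2}) for the limit. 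With those two repairs your argument becomes a legitimate alternative to the separation-theorem proof, at the cost of importing Ryff's representation theorem, which is comparable in depth to what it replaces.
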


The ``if'' part is now trivial by definition of doubly stochastic maps.
The essential idea of the proof of the ``only if'' part is the same as the proof of Theorem~\ref{thm:majorization} (i) $\Rightarrow$ (iv) presented in Section~\ref{sec:majorization_proof}, while we now need functional analysis for the rigorous proof; the hyperplane separation theorem should be replaced by the Hahn-Banach separation theorem.

\

We next consider d-majorization for continuous variables in line with Ref.~\cite{Ruch} in a non-rigorous manner.
As mentioned before, we can intuitively derive  Theorem~\ref{thm:d_majorization}  for d-majorization from  Theorem~\ref{thm:majorization}  for majorization, by going through majorization of continuous variables and using a variable transformation.
Especially, we focus on (i) $\Leftrightarrow$ (iv) of Theorem~\ref{thm:d_majorization} in the following.

Let $p, q \in L^1$ be probability densities.
For simplicity, we assume that $q(x) > 0$ for all $x \in [0,1]$, while the generalization to the case with $q(x)=0$ for some intervals of $x$ is straightforward.
We consider the following variable transformation.
Let $dz:=q(x)dx$, or equivalently,
\begin{equation}
z(x) := \int_0^x q(x'')dx''.
\label{q_z}
\end{equation}
We then define $x(z)$ as the inverse function of $z(x)$.  Since $\int_0^1 q(x'')dx'' = 1$, we have $z \in [0,1]$.  With this new variable $z$, we define a probability density
\begin{equation}
p_q(z) := \frac{p(x(z))}{q(x(z))},
\label{p_q}
\end{equation}
which satisfies $\int_0^1p_q(z) dz = \int_0^1 p(x)dx = 1$.
Figure~\ref{fig:majorization_c} illustrates  this transformation.

\begin{figure}[t]
\begin{center}
\includegraphics[width=10cm]{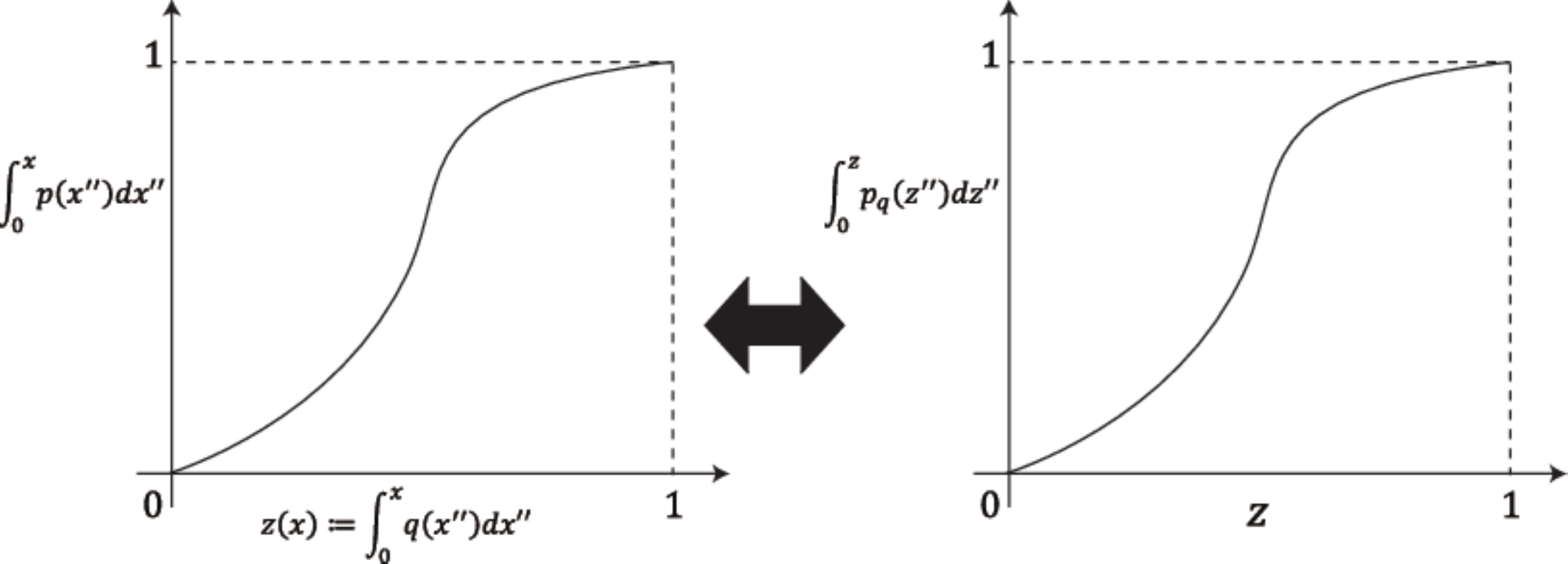}
 \end{center}
 \caption{Schematic of the variable transformation (\ref{q_z}) and (\ref{p_q}).  
We obtain the concave Lorenz curve by $l_{(p,q)}(z) := \int_0^zp_q^{\downarrow}(z'')dz''$ with Eq.~(\ref{rearrange_c}).} 
\label{fig:majorization_c}
\end{figure}

The Lorenz curve of  continuous d-majorization is obtained by rearranging $p(x)/q(x)$ in the decreasing order.
That is, we can define $p_q^{\downarrow}(z)$ with Eq.~(\ref{rearrange_c}), and obtain the Lorenz curve $l_{(p,q)}(z) := \int_0^zp_q^{\downarrow}(z'')dz''$.
In other words, d-majorization $(p',q') \prec (p,q)$ is equivalent to majorization $p'_{q'} \prec p_q$, which, from Theorem \ref{thm:majorization_continuous}, is further equivalent to that there exists a doubly stochastic map $\bar T$ such that $p'_{q'} =\bar Tp_q$.
This doubly stochastic map can be written as, with the notation of Eq.~(\ref{continuous_doubly_stochastic2}),
\begin{equation}
p'_{q'}(z') = \int_0^1 \bar K' (z',z) p_q(z)dz,
\end{equation}
where $\int_0^1 \bar K' (z',z) dz = \int_0^1 \bar K' (z',z) dz' = 1$.  We then restore the variable to the original one by defining 
\begin{equation}
K'(x',x) := q'(x') \bar K' (z'(x'),z(x)),
\end{equation}
where $z'(x')$ is defined by Eq.~(\ref{q_z}) with $q'$.
By noting that $dz' = q'(x')dx'$ and $dz=q(x)dx$, we obtain $\int_0^1 K'(x',x)dx' =1$,
$
p'(x') = \int_0^1 K'(x',x) p(x) dx,
$
and
$
q'(x') = \int_0^1 K'(x',x) q(x) dx.
$
We finally obtain the desired  stochastic map $T$, which is defined by the form of Eq.~(\ref{continuous_doubly_stochastic2}) with the integral kernel $K'(x',x)$ constructed above.

To summarize, $(p',q') \prec (p,q)$ holds if and only if there exists a stochastic map $T$ such that $p'=Tp$ and $q'=Tq$.  As a special case of this argument, we obtain (i) $\Leftrightarrow$ (iv) of Theorem \ref{thm:d_majorization} of the finite-dimensional case.

The above (non-rigorous) argument for continuous variables suggests a rigorous proof for the finite-dimensional case.
The benefit of the above argument is that we can perform the variable transformation of the horizontal axis of the Lorenz curve, which is in general impossible if the variable is discrete.    However,  if all of the components of $q$ and  $q'$ are rational numbers, we can make the same argument for the finite-dimensional case in a rigorous manner without referring to the continuous variable case.  We can then take a limit  if there are irrational components.  We will prove Theorem~\ref{thm:d_majorization} in this line in Section~\ref{sec:majorization_proof}.

\section{Proofs}
\label{sec:majorization_proof}

We here present the proofs of the two main theorems of this chapter (Theorem~\ref{thm:majorization} and Theorem~\ref{thm:d_majorization}).

\

\noindent\textbf{Proof of Theorem~\ref{thm:majorization}.}
We prove this theorem for general $p, p' \in \mathbb R^d$.

(iv) $\Rightarrow$ (iii) follows from Corollary~\ref{f_monotonicity_corollary}.

(iii) $\Rightarrow$ (ii) is trivial because $| x|$ is convex.

(i) $\Leftrightarrow$ (ii) is Theorem II.1.3 of Ref.~\cite{Bhatia}. 
 Suppose (i).  Let  $p_{k+1}'{}^\downarrow \leq t \leq p_k'{}^\downarrow$.
 By noting that  $\sum_{i=1}^d p_i' = \sum_{i=1}^d p_i$,
\begin{equation}
\sum_{i=1}^d | p'_i - t| = \sum_{i=1}^k  ( p_i'{}^\downarrow - t ) - \sum_{i=k+1}^d ( p_i'{}^\downarrow - t )  \leq \sum_{i=1}^k ( p_i^\downarrow - t ) - \sum_{i=k+1}^d ( p_i^\downarrow  - t ) \leq   \sum_{i=1}^d | p_i - t|,
\end{equation}
which implies (ii). 
Next, suppose (ii).
By choosing $t$ to be sufficiently large and sufficiently small, we obtain $\sum_{i=1}^d p_i' = \sum_{i=1}^d p_i =: C$.
Let $t=p_k^\downarrow$.  Then,
\begin{equation}
\sum_{i=1}^d | p_i - t | =  \sum_{i=1}^k ( p_i^\downarrow - t) - \sum_{i=k+1}^d ( p_i^\downarrow - t) = 2\sum_{i=1}^k  p_i^\downarrow  + (d-2k)t - C.
\end{equation}
On the other hand,
\begin{equation}
\sum_{i=1}^d  | p'_i - t| \geq \sum_{i=1}^k ( p_i'{}^\downarrow - t) - \sum_{i=k+1}^d ( p_i'{}^\downarrow - t) = 2\sum_{i=1}^k  p_i'{}^\downarrow  + (d-2k)t - C.
\end{equation}
We thus obtain $\sum_{i=1}^k p'{}_i^\downarrow \leq \sum_{i=1}^k p_i^\downarrow$, which is nothing but (i).

(i) $\Rightarrow$ (iv) has several proofs.  Here, we prove it by using the  hyperplane separation theorem in line with Ref.~\cite{Ryff65}.
Let $\mathcal D$ be the set of doubly stochastic matrices and define $\mathcal D(p) := \{ Tp: T \in \mathcal D\}$.
We consider a proof by contradiction; Suppose that $p' \prec p$ but $p' \not\in \mathcal D (p)$.
Since $\mathcal D (p)$ is a convex set, if $p' \not\in \mathcal D (p)$, then $p'$ and $\mathcal D (p)$ can be separated by a hyperplane.
That is, there exists a vector $r \in \mathbb R^d$ such that for all $T \in \mathcal D$,
\begin{equation}
(Tp, r) < (p',r)
\label{plane_product}
\end{equation}
holds, where $(\cdot, \cdot )$ describes the ordinary inner product of $\mathbb R^d$.
Since $\sum_{i=1}^d(Tp)_i  = \sum_{i=1}^dp_i'$, inequality (\ref{plane_product}) does not change by adding a constant vector $(c, c, \cdots, c)^{\rm T}$ to $r$.
Therefore, without loss of generality  we can assume that  the components of $r$  are all non-negative.
Let $P,R$ be permutation matrices such that $p = P p^\downarrow$, $r = Rr^\downarrow$.
Then we have $(p^\downarrow, r^\downarrow ) = (R P^\dagger p, r )$.
Let $T := R P^\dagger$, which is also a permutation matrix, and thus $T \in \mathcal D$.
From inequality (\ref{plane_product}), we have
\begin{equation}
(p^\downarrow, r^\downarrow ) = (T p, r ) < (p', r) = (R^\dagger p', r^\downarrow).
\label{plane_product2}
\end{equation}
We note that $R^\dagger p' \prec p$.  On the other hand, in general, if  $v',v \in \mathbb R^d$ satisfy $v' \prec v$,  we have for any  $w \in \mathbb R^d$ with non-negative components
\begin{equation}
(v^\downarrow , w^\downarrow ) \geq (v', w^\downarrow ). 
\label{plane_product3}
\end{equation}
In fact, $0 \leq \sum_{k=1}^d (w_k^\downarrow - w_{k+1}^\downarrow) \sum_{i=1}^k (v_i^\downarrow - v'_i) = \sum_{k=1}^d w_k^\downarrow (v_k^\downarrow - v'_k )$ with $w_{d+1}^\downarrow := 0$.
Then, inequalities (\ref{plane_product3}) and (\ref{plane_product2}) contradict.  Therefore, $p' \in \mathcal D (p)$.
$\Box$

\

\noindent\textbf{Proof of Theorem~\ref{thm:d_majorization}.}

The following proof is parallel to that of Theorem~\ref{thm:majorization}, except that (ii) $\Rightarrow$ (iv) is only a nontrivial part.
For simplicity, we here assume that $p,q,p',q' \in \mathcal P_d$, while the generalization to the more general case is straightforward.

(iv) $\Rightarrow$ (iii) follows from Lemma~\ref{f_monotonicity_lemma}.

(iii) $\Rightarrow$ (ii) is trivial.

(i) $\Leftrightarrow$ (ii) is also proved in essentially the same manner as (i) $\Leftrightarrow$ (ii) of Theorem~\ref{thm:majorization}.
In general, $[0,q_1^\ast), [q_1^\ast, q_1^\ast+q_2^\ast ), \cdots, [q_1^\ast+ \cdots + q_{d-1}^\ast, 1]$ and $[0,q'{}_1^\ast), [q'{}_1^\ast, q'{}_1^\ast+q'{}_2^\ast ), \cdots, [q'{}_1^\ast+\cdots + q'{}_{d-1}^\ast, 1]$ give different partitions of the interval $[0,1]$.
Then, take $\tilde q = (\tilde q_1 , \tilde q_2, \cdots, \tilde q_{d'})$ such that $[0,\tilde q_1^\ast), [\tilde q_1^\ast, \tilde q_1^\ast + \tilde q_2^\ast ), \cdots, [\tilde q_1^\ast + \cdots + \tilde q_{d'-1}^\ast, 1]$ gives a refinement of both of the partitions.
Correspondingly, we can take $\tilde p$ and $\tilde p'$ such that the Lorenz curves of $(p,q)$ and $(p',q')$ are respectively the same as those of $(\tilde p, \tilde q)$ and $(\tilde p', \tilde q)$, where the ways of rearranging $\tilde q$ to $\tilde q^\ast$ are the same for these Lorenz curves.
Therefore, it is sufficient to prove (i) $\Leftrightarrow$ (ii) for $(\tilde p, \tilde q)$ and $(\tilde p', \tilde q)$, but this can be proved in the same manner as (i) $\Leftrightarrow$ (ii) of Theorem~\ref{thm:majorization}.

We now prove the nontrivial part, (ii) $\Rightarrow$ (iv).
We first suppose that all the components of $q,q'$ are rational numbers, and let $q_i = m_i / M$, $q'_i = m'_i / M$ ($m_i,m'_i, M \in \mathbb N$) with $\sum_i m_i = \sum_i m'_i = M$.  We denote the $m_i$-dimensional uniform distribution by $u_{m_i} := (1/m_i, 1/m_i, \cdots, 1/m_i)$.
We then define the following $M$-dimensional probability vectors:
\begin{equation}
p_q := ( p_1 u_{m_1}, p_2 u_{m_2}, \cdots,  p_d u_{m_d} )^{\rm T},  \ p'_{q'} := ( p'_1  u_{m'_1},  p'_2 u_{m'_2}, \cdots, p'_d  u_{m'_d} )^{\rm T}.
\end{equation}
Then, by replacing $tM$ by $t$, inequality (\ref{d_majorization_inequality}) is written as 
\begin{equation}
\sum_{i=1}^M | p'_{q',i} -t | \leq \sum_{i=1}^M | p_{q,i} -t |.
\label{d_majorization_inequality2}
\end{equation}
Therefore, from Theorem~\ref{thm:majorization}, there exists an $M \times M$ doubly stochastic matrix $\bar T$ such that $p'_{q'} = \bar T p_q$.
We partition  $\bar T$ into several blocks, where each block is an $m'_i \times m_j$ matrix.
We sum up all the elements of each block, and obtain a $d \times d$ matrix whose elements are those sums.  We denote this $d \times d$ matrix by $\bar T'$.
We then define a matrix $T$, which is obtained by dividing all the elements of the $j$th column of $\bar T'$ by $m_j$. 
Since the sum of the $j$th column of $\bar T'$ is $m_j$, $T$ is a stochastic matrix.
Furthermore, since $\bar T$ is a doubly stochastic matrix, the sum of all the elements of the $i$th row is $m'_i$.
Therefore, we obtain
\begin{equation}
\left( 
\begin{array}{c}
m'_1 \\
m'_2 \\
\vdots \\
m'_d
\end{array}
\right) = T
\left( 
\begin{array}{c}
m_1 \\
m_2 \\
\vdots \\
m_d
\end{array}
\right),
\end{equation}
or equivalently, $q'=Tq$.  Also, $p'=Tp$ holds by construction.  Therefore, (ii) $\Rightarrow$ (iv) is proved for the case that all the components of $q,q'$ are rational.

Finally, we prove (ii) $\Rightarrow$ (iv) for the case where there are irrational components in $q,q'$ by approximating them.
Let $\{ q^{(n)} \}_{n \in \mathbb N}, \{ q'{}^{(n)} \}_{n \in \mathbb N}$ be sequences of probability vectors that converges to $q,q'$, respectively.
We suppose that all the components of these sequences are rational and that $q_i^\ast{}^{(n)} \leq q_i^\ast$C$q'{}_i^\ast{}^{(n)} \geq  q'{}_i^\ast$ for $i=1,\cdots, d-1$ and $q_d^\ast{}^{(n)} \geq q_d^\ast$C$q'{}_d^\ast{}^{(n)} \leq  q'{}_d^\ast$.
Here, $q^{(n)}$, $q'{}^{(n)} $ should be taken sufficiently close to $q$, $q'$ such that the orders of rearranging to define the Lorenz curves do not change;
If there is $i$ such that $p'{}_i^\ast / q'_i{}^\ast = p_{i+1}'{}^\ast / q'{}_{i+1}^\ast$, we need to approximate $p'$ as well in order to keep the order of rearranging.

Then, 
the Lorenz curve of $(p,q)$ is shifted to the left and that of $(p',q')$ is shifted to the right, while fixing $(0,0)$ and $(1,1)$.
Correspondingly, we can show that
\begin{equation}
\sum_{i=1}^d | p'{}_i -t q'{}_i^{(n)}| \leq \sum_{i=1}^d | p'_i -t q'_i| \leq \sum_{i=1}^d | p_i -t q_i| \leq \sum_{i=1}^d | p_i -t q_i^{(n)} |.
\end{equation}
Therefore, $(p',q') \prec (p,q)$ implies $(p',q'^{(n)}) \prec (p,q^{(n)})$.
Therefore, there exists a  stochastic matrix $T^{(n)}$ such that $p' = T^{(n)}p$C$q'{}^{(n)}=T^{(n)}q^{(n)}$.
Since the set of stochastic matrices is sequential compact, $\{ T^{(n)} \}_{n \in \mathbb N}$ has a convergent subsequence.  Let $T$ be its limit.
Then, we have $p' =Tp$ and $q'=Tq$, which proves (ii) $\Rightarrow$ (iv). $\Box$

\

As discussed in Ref.~\cite{Horodecki2013}, the above proof of (ii) $\Rightarrow$ (iv)  has a clear thermodynamic interpretation in the case of Gibbs-preserving maps.
We here briefly discuss this point of view (without caring about rational vs irrational etc.).
First, in addition to the present system (named S), we consider a heat bath B.
Suppose that for any energy level $E_i$ of S, there exists an energy level $E_i^{\rm B} := E - E_i$ of B, where $E$ is a constant.
Also suppose that the energy level $E_i^{\rm B}$ of B is $d_i$-fold degenerate with $d_i$ being  proportional to $e^{-\beta E_i}$.
Then, as in the standard argument in statistical mechanics, the Gibbs state of S is regarded as the marginal distribution of the microcanonical distribution of SB with the total energy $E$.
Thus, a Gibbs-preserving map of S corresponds to a doubly stochastic map of SB (that preserves the microcanonical distribution of SB), which is equivalent to the construction of $\bar T$ in the above proof.


\chapter{Classical thermodynamics}
\label{chap:classical_thermodynamics}
We  discuss the thermodynamic implications of classical information theory  developed in Chapter~\ref{chap:classical_entropy} and Chapter~\ref{chap:classical_majorization}.
(See also  Chapter~\ref{chap:quantum_thermodynamics} for a more comprehensive formulation of the quantum case.)

In  Section~\ref{sec:classical_second_law}, we show that the second law of thermodynamics is derived as a straightforward consequence of the monotonicity (\ref{monotone_classical}) of the KL divergence.
Here, work is a fluctuating quantity and satisfies the second law  only at the level of the ensemble average, which is  a perspective of stochastic thermodynamics.

In Section~\ref{sec:classical_work_bound}, we focus on the resource-theory approach to thermodynamics, where the  work does not fluctuate even when dynamics of the system (and thus heat) are stochastic.
This is referred to as single-shot (or one-shot) thermodynamics~\cite{Horodecki2013,Aberg2013}. 
In this setup, any entropic contribution from the work storage can be excluded, and thus work can be regarded as a purely ``mechanical'' quantity.
Specifically, we show that the work bounds are given by the R\'enyi $0$- and $\infty$-divergences in the single-shot situations on the basis of thermo-majorization.

\section{Second law and the KL divergence}
\label{sec:classical_second_law}

First, we formulate thermodynamic processes in terms of classical probability theory, and  discuss the relationship between thermodynamics and the KL divergence in the spirit of stochastic thermodynamics.
In particular, we will show that the second law of thermodynamics at the level of the ensemble average is  an immediate consequence of the monotonicity (\ref{monotone_classical}) of the KL divergence.

Let $E_i$ be the energy of level $i$ of the system; the classical Hamiltonian $H$ can be identified with $( E_1, E_2, \cdots, E_d ) \in \mathbb R^d$. 
The Gibbs state $p^{\rm G} \in \mathcal P_d$ is defined as $p^{\rm G}_i := e^{-\beta E_i}/Z$, where $Z:= \sum_i e^{-\beta E_i}$ is the partition function.  The equilibrium free energy is defined as $F := -\beta^{-1} \ln Z$.

A stochastic map $T$ is called a \textit{Gibbs-preserving map}, if it does not change the Gibbs state, i.e., 
\begin{equation}
Tp^{\rm G} = p^{\rm G}.
\label{classical_GMP}
\end{equation}
We note that if the detailed balance condition
\begin{equation}
T_{ji}e^{-\beta E_i} = T_{ij}e^{-\beta E_j}
\label{classical_detailed_balance}
\end{equation}
is satisfied, then $T$ is a Gibbs-preserving map.  (The converse is not true in general.)
We  emphasize that $\beta \geq 0$ is interpreted as the inverse temperature of the heat bath.

By substituting  $q = p^{\rm G}$ to the monotonicity (\ref{monotone_classical}) of the KL divergence, we obtain for any Gibbs-preserving map $T$ and for any $p \in \mathcal P_d$
\begin{equation}
S_1 ( p \| p^{\rm G}) \geq S_1 (Tp \|  p^{\rm G}).
\label{classlcal_Gibbs_second0} 
\end{equation}
This can be rewritten as
$S_1 (Tp) - S_1 (p) \geq \beta \left( \sum_i E_i (Tp)_i - \sum_i E_i p_i  \right)$, where the right-hand side represents the average energy change in the system by $T$.
Because we do not consider the time-dependence of the Hamiltonian  (and we do not explicitly consider the work storage) at present, no external work is performed on or extracted from the system, and thus the energy change equals the heat absorption from the heat bath.
Thus,  we can identify  $Q := \sum_i E_i (Tp)_i - \sum_i E_i p_i$ to the average  heat absorption.
We note that  the stochastic heat is given by  $Q_{ji} :=E_j - E_i $  during transition from $i$ to $j$, whose ensemble average  equals $Q$, i.e., $Q= \sum_{ij}T_{ji}p_i Q_{ji}$.

Let $\Delta S_1 :=  S_1 (Tp) - S_1 (p)$ be the change in the Shannon entropy of the system.
Inequality~(\ref{classlcal_Gibbs_second0}) is now rewritten as
\begin{equation}
\Delta S_1 \geq \beta Q,
\label{classlcal_Gibbs_second1} 
\end{equation}
which is the second law of thermodynamics in the present setup~\cite{Crooks1999,Seifert2005,Esposito2011}.
This is in the same form as the conventional Clausius inequality of equilibrium thermodynamics~\cite{Callen},
while  inequality~(\ref{classlcal_Gibbs_second1}) includes an informational entropy (i.e., the Shannon entropy) instead of the Boltzmann entropy defined only for equilibrium states. 
We note that we will derive the same inequality for the quantum case in Section~\ref{sec:quantum_entropy} (see also Section~\ref{sec:fluctuating_work}).

In stochastic thermodynamics, 
\begin{equation}
\Sigma := S_1 ( p \| p^{\rm G}) - S_1 (Tp \|  p^{\rm G}) = \Delta S_1 - \beta Q \geq 0
\label{def:entropy_production}
\end{equation}
is often called the (average) \textit{entropy production}~\cite{Seifert2012,Sagawa2012}. 
Here, $ - \beta Q $ is interpreted as the entropy increase in the heat bath~\cite{Prigogine}, which can be justified if the state of the bath is sufficiently close to the Gibbs state because of Eq.~(\ref{KL_expand}).
We note that if $T$ does not preserve the Gibbs state but preserves a nonequilibrium state $q$ (i.e., a nonequilibrium steady state), $S_1 ( p \| q ) - S_1 (Tp \| q)$ is called the excess (or non-adiabatic) entropy production~\cite{Hatano2001,Esposito2010} (see also Ref.~\cite{Parrondo2013,Horowitz2014} for the quantum extension).

Inequality~(\ref{classlcal_Gibbs_second1}) is also a starting point of thermodynamics of information~\cite{Parrondo2015}.
For example, inequality~(\ref{classlcal_Gibbs_second1}) can be regarded as a generalized Landauer's principle~\cite{Landauer}.
The original form of the Landauer's principle states that the erasure of one bit ($=\ln 2$) of information must be accompanied by $\beta^{-1} \ln 2$ of heat emission to the environment.
Correspondingly, if $\Delta S_1 = -\ln 2$ on the left-hand side of ~(\ref{classlcal_Gibbs_second1}), the heat emission is bounded as $-Q \geq \beta^{-1} \ln 2$.
Inequality~(\ref{classlcal_Gibbs_second1}) also sets a fundamental upper bound of the work extraction by utilizing information, as is the case for the thought experiments of the Szilard engine and Maxwell's demon~\cite{Sagawa2012a,Parrondo2015,Sagawa2019r,Sagawa2012l,Sagawa2013}.

We note that a doubly stochastic  map is regarded as a Gibbs-preserving map (with any Hamiltonian) at infinite temperature or a Gibbs-preserving map  (at any temperature) of the trivial Hamiltonian  whose energy levels are all degenerate; in these cases, the Gibbs state is given by the uniform distribution  $p^{\rm G} = u$, 
where the Shannon entropy does not decrease as represented by inequality (\ref{monotone_DSM}).
This may be related to the second law for adiabatic processes, stating that the entropy does not decrease if the system does not exchange the heat with the environment.

\

We next consider the second law in terms of the work and the free energy.
We now suppose that the Hamiltonian of the system is time-dependent, and  the work is performed on the system  through the time-dependence of the Hamiltonian.
Here, we do not explicitly take into account the work storage and suppose that the Hamiltonian of the system  is driven by an external agent.
This is contrastive to another formulation discussed  in Section~\ref{sec:classical_work_bound} and Chapter~\ref{chap:quantum_thermodynamics}, where the entire system is supposed to be autonomous by including the work storage and the ``clock'' degrees of freedom as a part of the total system~\cite{Horodecki2013}.

First, as a simplest situation, we suppose that the work is induced by  a quench (i.e., an instantaneous change) of the Hamiltonian from $H$ to $H' = (E_1', \cdots, E_d')$.
Let $E := \sum_i E_i p_i$ and $E' := \sum_i E_i' p_i$ be the average energies immediately before and after the quench, respectively.
Because the quench is supposed to be very quick and the heat exchange between the system and the bath is ignored during the quench, the average  work performed on the system by the quench is given by
\begin{equation}
W := E' - E.
\end{equation}
We emphasize that, in this setup, the work is fluctuating;
If the system is in $i$ immediately before the quench, the stochastic work is given by $w_i := E_i' - E_i$, which is a random variable.  Then, the average work $W$ is given by the ensemble average of $w_i$, that is, $W = \sum_i w_i p_i$.

We next consider  more general thermodynamic  processes.
Suppose that the entire process consists of multiple quench steps and relaxation processes 
between  the quenches (see  Fig.~\ref{fig:quench_and_relax} for a schematic), which we refer to as a ``quench-and-relax'' process.
The Hamiltonian is fixed during the period between the quenches.
We assume that each relaxation process is stochastically independent of other relaxation processes  and is described by a single Gibbs-preserving  map with respect to the Hamiltonian of that period.
This assumption means that stochastic dynamics is essentially Markovian and no memory effect is present.

We denote the initial and final distributions of the entire process by $p$, $p'$.
Let $H$, $H'$ be the initial and final Hamiltonians,
$p^{\rm G}$, $p^{\rm G}{}'$ be the corresponding Gibbs states, and $F$, $F'$ be the corresponding equilibrium free energies.
We consider the average work $W$ and the average heat  $Q$ during the above-mentioned entire process.
Let $\Delta E := E' - E$ be the change in the average energy, where $E := \sum_i E_i p_i$ and $E' := \sum_i E_i' p_i'$.
These energetic quantities satisfy the first law of thermodynamics, i.e., the energy conservation:
\begin{equation}
\Delta E = W + Q.
\label{classical_first_law}
\end{equation}

\begin{figure}[t]
\begin{center}
\includegraphics[width=7cm]{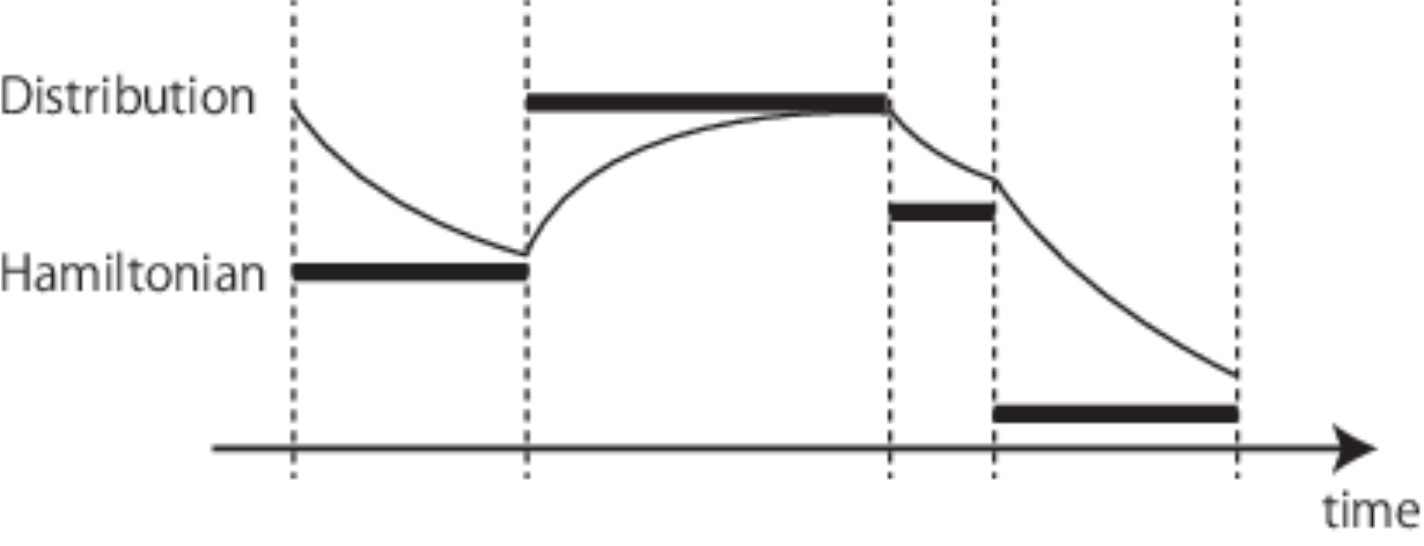}
 \end{center}
 \caption{A schematic of the quench-and-relax protocol.  The bold horizontal lines symbolically represent the Hamiltonians, and the curves symbolically represent the time evolution of the probability distribution.  The Hamiltonian is quenched at the times of the dashed vertical lines.} 
\label{fig:quench_and_relax}
\end{figure}

Again because a quench is instantaneous, the probability distribution does not change and the heat exchange is zero during it.
Thus, the second law of the form (\ref{classlcal_Gibbs_second1}) is unaffected, i.e., $\Delta S_1 \geq \beta Q$ holds with $\Delta S_1 := S_1 (p') - S_1(p)$, even in the presence of the quenches.
From the first law~(\ref{classical_first_law}), we can rewrite the second law as
\begin{equation}
W \geq \Delta E - \beta^{-1} \Delta S_1.
\label{classical_work_SQ}
\end{equation}

Now we define the nonequilibrium (R\'enyi $1$-) free energy by 
\begin{equation}
F_1 (p; H) := E - \beta^{-1} S_1 (p) = \beta^{-1} S_1 (p \| p^{\rm G}) + F.
\label{1_free_energy_classical}
\end{equation}
This quantity equals the equilibrium free energy, i,e,  $F_1 (p;H) = F$ holds, if and only if $p=p^{\rm G}$.
See also Section~\ref{sec:thermodynamic_entropy} for the $\alpha$-free energy of the quantum case.
By using the nonequilibrium free energy,  the second law~(\ref{classical_work_SQ}) is rewritten as
\begin{equation}
W \geq F_1 (p', H') - F_1 (p; H),
\label{classical_work_bound_F1}
\end{equation}
which is the work bound for the case where we allow work fluctuations.

We consider the following  special cases of the above work bound.
First, we consider the necessary work for the formation of a nonequilibrium state $p'$ starting from the Gibbs state $p^{\rm G}$ by fixing the Hamiltonian  $H = H'$.
Then,  inequality~(\ref{classical_work_bound_F1}) reduces to $ W \geq F_1(p'; H) - F = \beta^{-1} S_1 (p' \| p^{\rm G})$.

Second, we consider the work extraction from a nonequilibrium state  $p$ again  by fixing the Hamiltonian  $H = H'$.
Suppose that $p' = p^{\rm G}{}' (= p^{\rm G})$.
Then, inequality~(\ref{classical_work_bound_F1}) reduces to
$- W \leq F_1(p; H) - F = \beta^{-1} S_1 (p \| p^{\rm G})$,
where $-W$ is the extracted work.

Finally, for a transition between equilibrium states, i.e., if $p= p^{\rm G}$ and $p' = p^{\rm G}{}'$ (and in general $H \neq H'$), inequality~(\ref{classical_work_bound_F1}) reduces to the conventional work bound:
\begin{equation}
W \geq \Delta F,
\label{work_classical_equilibrium}
\end{equation}
where $\Delta F:= F' - F$ is the change in the equilibrium free energy.

We note that in the single-shot setup where the work does not fluctuate,
the above inequalities are replaced by 
inequalities (\ref{single_shot_work_classical_infty}), (\ref{single_shot_work_classical_0}), (\ref{single_shot_work_classical_equilibrium})
in Section~\ref{sec:classical_work_bound},
respectively.

\

We can construct quasi-static processes by taking the limit of the foregoing quench-and-relax processes, which achieves the equality of~(\ref{classical_work_bound_F1}).
Let $H$ and $H'$ be  the initial and final Hamiltonians, respectively.
We divide the total process into $N$ steps of quench-and-relax with the same time intervals.
Let $H^{(n)} = ( E_1^{(n)}, \cdots, E_d^{(n)} )$ be the Hamiltonian after the $n$th quench ($n= 0, 1, 2, \cdots, N$), where $H^{(0)} = H$ and $H^{(N)} = H'$.
We suppose that the energy change in each quench is of the order of $\varepsilon := 1/N$, i.e., 
$| E_i^{(n)} - E_i^{(n+1)} | = O (\varepsilon )$ for all $i$ and $n$.
Let $p^{{\rm G},(n)}$ be the Gibbs state of $H^{(n)}$.
Suppose that the initial state is the Gibbs state $p^{{\rm G},(0)}$, and that each relaxation process is long enough so that the system reaches the Gibbs state $p^{{\rm G},(n)}$ before the $(n+1)$th quench.
The quasi-static limit is then given by $N \to \infty$ or equivalently $\varepsilon \to 0$.

In this setup, $\Delta S_1 - \beta Q$ in the $n$th relaxation process ($n=1,2, \cdots, N$) is given by $S_1 ( p^{{\rm G},(n)} \| p^{{\rm G},(n+1)} )$.
Because $\| p^{{\rm G},(n)} - p^{{\rm G},(n+1)} \|_1= O (\varepsilon )$, we have  $S_1 ( p^{{\rm G},(n)} \| p^{{\rm G},(n+1)} ) = O (\varepsilon^2)$ from Eq.~(\ref{KL_expand}).
By summing up these terms over $n=1,2, \cdots, N$, we obtain for the total process
\begin{equation}
\Delta S_1 - \beta Q = O(N\varepsilon ^ 2) = O(\varepsilon).
\end{equation}
Thus, the equality in the second law (\ref{classical_work_bound_F1}) is achieved, or equivalently the average entropy production becomes zero, in the quasi-static limit $\varepsilon \to 0$.
From the law of large numbers, we can also show that 
the fluctuation of the work during the total process vanishes,
which is consistent with the single-shot scenario  discussed in Section~\ref{sec:classical_work_bound}.

The quasi-static process discussed above requires that the system is always close to the Gibbs state of the Hamiltonian at that moment.
If the initial and final states are not the Gibbs states in general, the equality in the second law (\ref{classical_work_bound_F1}) can be achieved by the following protocol (see  Fig.~\ref{fig:equality_protocol_PHS})~\cite{Parrondo2015,Sagawa2019r,Aberg2013}.
Let $p$ and $p'$ be the initial and final states, and again let $H$ and $H'$ be the initial and final Hamiltonians.
Suppose that $p$ and $p'$ have full rank.
\begin{enumerate}
\item At the initial time, we instantaneously quench the Hamiltonian from $H$ to  $\tilde H = ( \tilde E_1, \cdots, \tilde E_d )$ such that $p$ is the Gibbs state of $\tilde H$.  Explicitly, we take $\tilde E_i := - \beta^{-1} \ln p_i$ up to constant.
\item We next change the Hamiltonian from $\tilde H$ to $\tilde H'$ quasi-statically.  Here,  $\tilde H' = ( \tilde E_1', \cdots, \tilde E_d' )$ is the Hamiltonian such that $p'$ is the Gibbs state of $\tilde H'$, i.e., $\tilde E_i' := - \beta^{-1} \ln p_i'$ up to constant.
\item Finally, we instantaneously quench the Hamiltonian $\tilde H'$ to the final one $H'$.
\end{enumerate}
In (i) and (iii), the state does not change and thus the energy change (work) equals the free energy change, leading to the equality of (\ref{classical_work_bound_F1}).
In (ii), the equality is achieved because the process is quasi-static.
Therefore, the equality of   (\ref{classical_work_bound_F1}) is  achieved in the entire process. 
We  note that this protocol can be straightforwardly generalized to the quantum case~\cite{Jacobs2009}. 

The fact that the equality of  (\ref{classical_work_bound_F1}) is achievable implies that the KL divergence provides a necessary and sufficient condition of state conversion in the present setup.
This might seem different from the argument in Section~\ref{sec:d_majorization}, where we emphasized that the KL divergence does not provide a necessary and sufficient condition of state conversion.
However, of course, this is not a contradiction.
A crucial point here is that the work inevitably fluctuates in the quench steps (in contrast to the single-shot case).
This would be   reminiscent  of the case of ``modestly non-exact'' catalytic majorization  discussed in Section~\ref{sec:catalytic_majorization};
To elaborate this point of view, we need to  consider the work storage as a ``catalyst,'' which we did not explicitly take into account in this section.
See Section~\ref{sec:fluctuating_work} for a related argument.

\begin{figure}[t]
\begin{center}
\includegraphics[width=12cm]{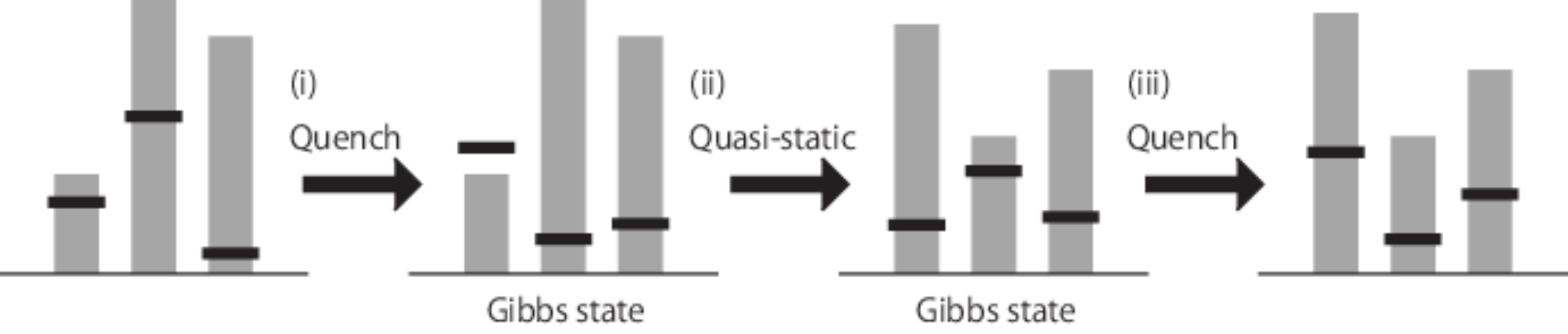}
 \end{center}
 \caption{A schematic of the protocol that achieves the equality of the second law (\ref{classical_work_bound_F1}). The bold horizontal lines represent the energy levels, and the shaded vertical bars represent the probability distributions.} 
\label{fig:equality_protocol_PHS}
\end{figure}

\section{Single-shot work bound}
\label{sec:classical_work_bound}

We next consider the single-shot situations where work does not fluctuate, and derive the work bounds as an important application of thermo-majorization.
Especially, we focus  on  the necessary work for creating a nonequilibrium state and the extractable  work  from a given nonequilibrium state~\cite{Horodecki2013,Aberg2013}.
In this section, we graphically derive the work bounds for these setups by using the Lorenz curves.

A key idea of the present setup, in contrast to the setup of the previous section,  is that we include a work storage (or a mechanical ``weight'') as a part of the entire system, and also include a ``clock''~\cite{Horodecki2013}
 that can effectively simulate the time-dependence of  the Hamiltonian of the system.
By doing so, the Hamiltonian of the entire system becomes time-independent, and thus the Gibbs-preserving map on the entire system becomes relevant, even when the effective Hamiltonian of the system is time-dependent.
The entire system now consists of the work storage W, the clock C,  and the system S, and obeys a Gibbs-preserving map of the total Hamiltonian of SCW.
See also Section~\ref{sec:clock_work} for more details of the clock and the work storage in the quantum case.

Let $p^{\rm G}$ be the Gibbs state of S with $p^{\rm G}_i := e^{-\beta E_i}/Z$, where $E_i$ is the energy of level $i$ and $Z$ is the partition function, as in Section~\ref{sec:classical_second_law}.
Let $p$ be an arbitrary distribution of S.
Suppose that W has only two energy levels, $0$ and $w$.
 We write the Gibbs state of W as $r^{\rm G} := (1/(1+e^{-\beta w}), e^{-\beta w}/(1+e^{-\beta w}))$ and write an arbitrary distribution of W as $r = (r_0, r_w)^{\rm T}$ with $r_0 + r_w = 1$.
We assume that the initial and final energies of W are  given by $0$ or $w$,  and the energy change of W is always given by $w$ or $-w$ (depending on the situation that we are considering) with unit probability, which is the characteristic of the single-shot scenario.

\

\noindent\textit{State formation.}
First, we consider the minimum work that is needed for creating nonequilibrium distribution $p$ of S  from the Gibbs state $p^{\rm G}$.
We here suppose that the initial and final Hamiltonians of S are the same, and do not consider  C explicitly.
Let $w> 0$.

The initial distribution of S is  $p^{\rm G}$ and that of W is  $r^{\rm up}:=(0,1)^{\rm T}$, 
and  the initial distribution of SW is given by $p^{\rm G} \otimes r^{\rm up}$, whose Lorenz curve consists of a straight line and a horizontal line (see Fig.~\ref{fig:majorization4}).
Then, the final distribution of S is $p$ and that of W is $r^{\rm down}:= (1,0)^{\rm T}$.
The final distribution of SW is given by $p \otimes r^{\rm down}$, where  $w > 0$ is the work performed on S by W.

Figure~\ref{fig:majorization4} shows the Lorenz curves of the initial and the final distributions of SW, along with the relevant $\infty$-divergences.
Then we see that the state conversion is possible, if and only if
\begin{equation}
e^{-S_\infty (p^{\rm G} \otimes r^{\rm up} \| p^{\rm G}\otimes r^{\rm G})} \leq  e^{- S_\infty (p \otimes r^{\rm down} \| p^{\rm G}\otimes r^{\rm G})},
\end{equation}
which reduces to
\begin{equation}
\frac{e^{-\beta w}}{1+e^{-\beta w}} \leq \frac{1}{1+e^{-\beta w}}e^{-S_\infty(p\| p^{\rm G})}. 
\end{equation}
Thus, we obtain the necessary and sufficient condition for the state conversion as 
\begin{equation}
w \geq \beta^{-1} S_\infty (p \| p^{\rm G}),
\label{single_shot_work_classical_infty}
\end{equation}
which gives the lower bound of the necessary work~\cite{Horodecki2013}.

\begin{figure}[t]
\begin{center}
\includegraphics[width=7cm]{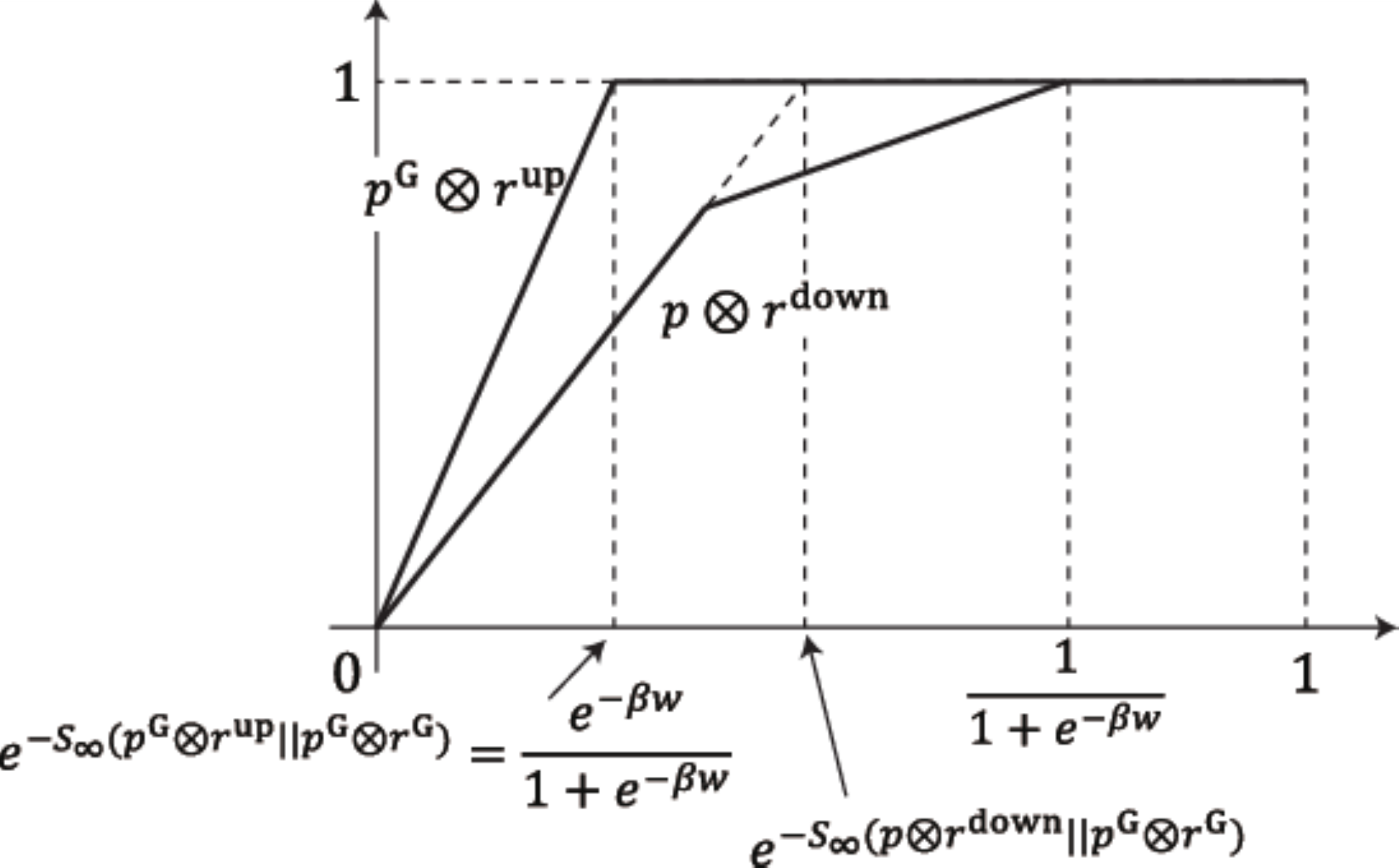}
 \end{center}
 \caption{Lorenz curves of the state conversion from the Gibbs state $p^{\rm G}$ of S to a  nonequilibirum state $p$ with work cost $w > 0$.  The state conversion is possible, if and only if $p^{\rm G} \otimes r^{\rm up}$ thermo-majorizes $p \otimes r^{\rm down}$.} 
\label{fig:majorization4}
\end{figure}

\

\noindent\textit{Work extraction.}
We next consider the maximum work that is extractable from an initial nonequilibrium distribution $p$ of S.
We here again suppose that the initial and final Hamiltonians  of S are the same, and do not consider C explicitly.
Let $w< 0$.

The initial state of W is $r^{\rm down}:=(0,1)^{\rm T}$  and that of SW is $p \otimes r^{\rm down}$, where ``up'' and ``down'' are exchanged from the previous setup.
The final distribution of S is $p^{\rm G}$ and that of W is $r^{\rm up}:= (1,0)^{\rm T}$, and the final distribution of SW is $p^{\rm G} \otimes r^{\rm up}$.
In this setup, $-w>0$ is the work that is extracted from S and finally stored in W.

Figure~\ref{fig:majorization5} shows the Lorenz curves of the initial and the final distributions of SW, along with the relevant $0$-divergences.
From this, we see that the state conversion is possible, if and only if 
\begin{equation}
e^{-S_0 (p \otimes r^{\rm down} \| p^{\rm G}\otimes r^{\rm G})} \leq  e^{- S_0 (p^{\rm G} \otimes r^{\rm up} \| p^{\rm G}\otimes r^{\rm G})},
\end{equation}
which reduces to
\begin{equation}
\frac{e^{-\beta w}}{1+e^{-\beta w}}e^{-S_0(p\| p^{\rm G})} \leq \frac{1}{1+e^{-\beta w}}.
\end{equation}
Therefore, we obtain the necessary and sufficient condition for the state conversion
\begin{equation}
-w \leq \beta^{-1} S_0 (p \| p^{\rm G}),
\label{single_shot_work_classical_0}
\end{equation}
which gives the upper bound of the extracted work $-w$~\cite{Horodecki2013,Aberg2013}.

\begin{figure}[t]
\begin{center}
\includegraphics[width=7.5cm]{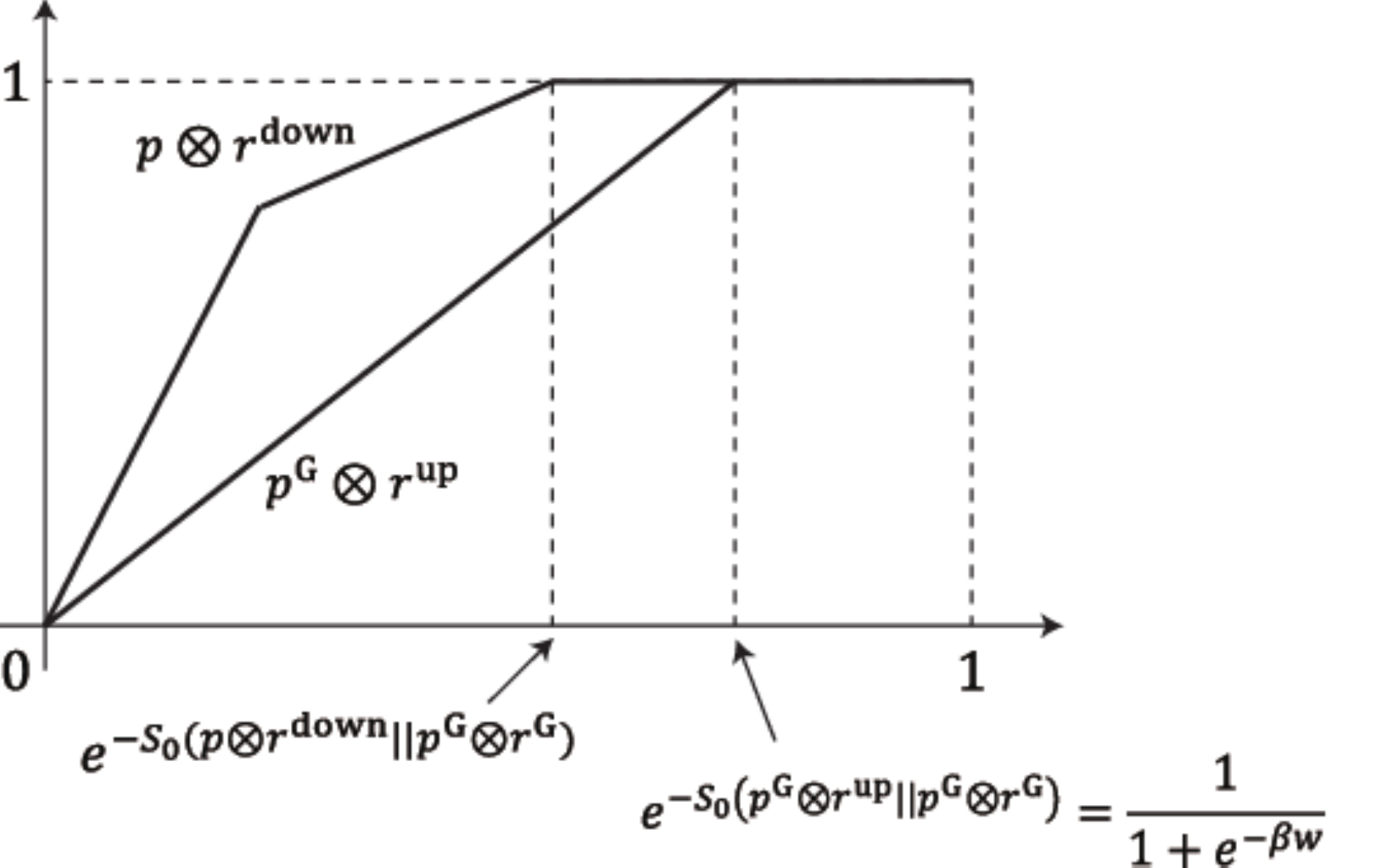}
 \end{center}
 \caption{Lorenz curves of the state conversion from an arbitrary initial state $p$ of S to its Gibbs state $p^{\rm G}$ with work extraction $-w>0$.
 The state conversion is possible, if and only if  $p \otimes r^{\rm down}$ thermo-majorizes $p^{\rm G} \otimes r^{\rm up}$.} 
\label{fig:majorization5}
\end{figure}

The equality in (\ref{single_shot_work_classical_0}) can be achieved by the following protocol proposed in Ref.~\cite{Aberg2013}, which is illustrated in Fig.~\ref{fig:equality_protocol_Renyi0}.
Here, we do not take W and C into account explicitly.

\begin{enumerate}
\item We instantaneously push the energy levels of $i$ with $p_i=0$ up to $+ \infty$, by keeping the other energy levels  fixed.
During this quench, no work is performed on S.
\item After waiting for the relaxation of S, we quasi-statically restore the pushed-up energy levels to the original values.
\end{enumerate}
The work extraction during the quasi-static process equals  the change in the equilibrium free energies, which is given by
\begin{equation}
 -w = - \beta^{-1} \ln \frac{Z_0}{Z} = -\beta^{-1} \ln \left( \sum_{i: p_i > 0} p_i^{\rm G} \right) = \beta^{-1} S_0 (p \| p^{\rm G}),
\end{equation}
where $Z_0 := \sum_{i: p_i > 0}e^{-\beta E_i}$.
The work does not fluctuate in the entire process, and thus it is a single-shot protocol.

\begin{figure}[t]
\begin{center}
\includegraphics[width=12cm]{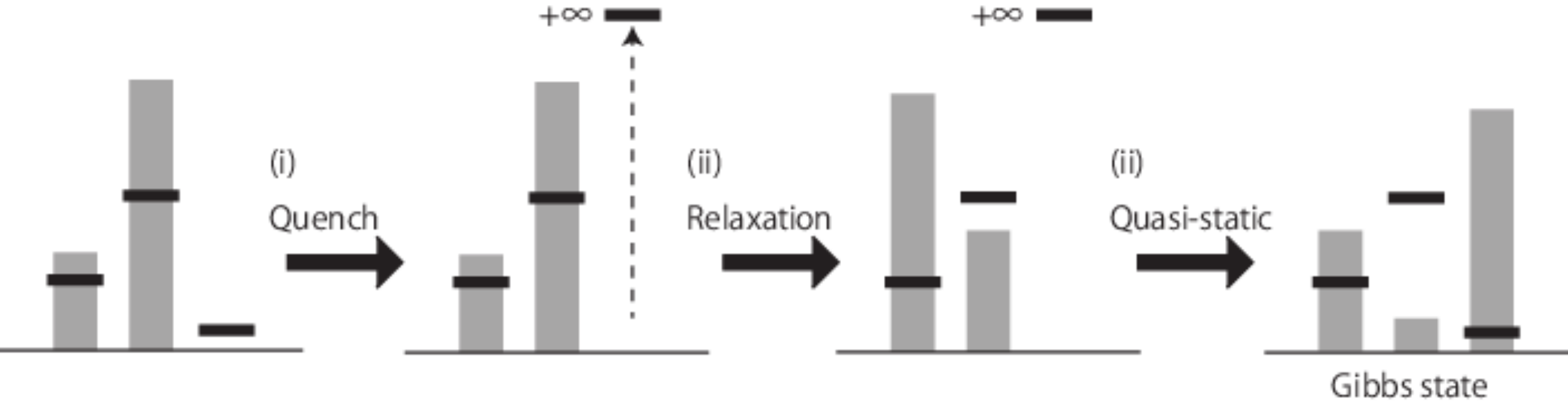}
 \end{center}
 \caption{A schematic of the protocol that achieves the equality in  (\ref{single_shot_work_classical_0}). The bold horizontal lines represent the energy levels, and the shaded vertical bars represent the probability distributions.} 
\label{fig:equality_protocol_Renyi0}
\end{figure}

\

\noindent\textit{Equilibrium transition.}
Finally, we consider transitions between Gibbs states of S, where  the Hamiltonian of S is changed from the initial one to the final one by using the clock C. 
We will see that this setup reproduces the conventional form of the second law (\ref{work_classical_equilibrium}), which is characterized by the equilibrium free energies.

 We suppose that  Hamiltonian of S is given by the energy levels $E_i$ if the clock indicates ``$0$'', and by $E_i'$ if ``$1$''.
The initial and the final distributions of S, written as $p$ and $p'$, are given by the Gibbs states of the initial and final Hamiltonians: $p_i := e^{-\beta E_i}/Z$ and $p_i' := e^{-\beta E_i'}/Z'$ with $Z$, $Z'$ being the partition functions (here we dropped the superscript ``G'' for simplicity).
The corresponding free energies are given by $F:= -\beta^{-1} \ln Z$ and $F' := -\beta^{-1} \ln Z'$.
 
For simplicity, we suppose  that C only has these two states $0$ and $1$.
Let $(c_0,c_1)^{\rm T}$ be the probability distribution of C in general.
Specifically, we suppose that the initial distribution of C is  $c=(1,0)^{\rm T}$ and the final one is  $c'=(0,1)^{\rm T}$.
By noting that C is coupled to S such that C induces the change of the Hamiltonian of S,
the total Gibbs state of SC should be given by
\begin{equation}
p_{\rm SC}^{\rm G} = \frac{Z}{Z+Z'} p \otimes  c + \frac{Z'}{Z+Z'} p' \otimes  c'.
\end{equation}
See  Eq.~(\ref{Hamiltonian_SCW}) and Eq.~(\ref{Gibbs_SCW}) of Section~\ref{sec:work_bound_single} for a more explicit argument.

Suppose that the initial and final states of W are respectively given by  $r^{\rm down}:=(0,1)^{\rm T}$ and $r^{\rm up}=(1,0)^{\rm T}$.
Here, we supposed that $- w> 0$ (i.e., the case of work extraction), while this is not actually necessary for the following argument.

\begin{figure}[t]
\begin{center}
\includegraphics[width=7cm]{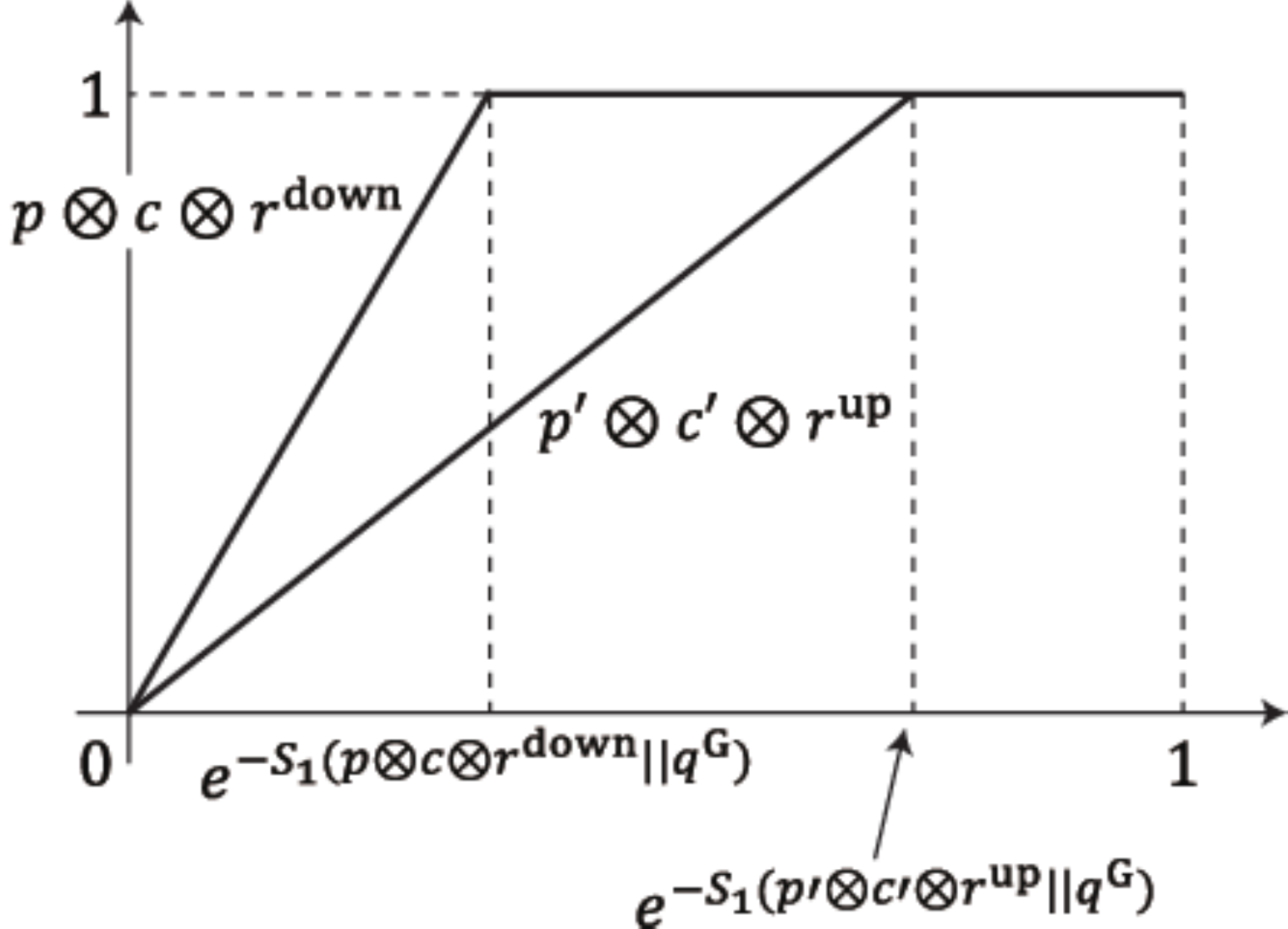}
 \end{center}
 \caption{Lorenz curves for a transition between Gibbs states of S.  The state conversion is possible, if and only if $p \otimes c \otimes r^{\rm down}$ thermo-majorizes $p' \otimes c' \otimes r^{\rm up}$.
} 
\label{fig:majorization6}
\end{figure}

Then, the Gibbs state of SCW is given by $q^{\rm G} := p_{\rm SC}^{\rm G} \otimes r^{\rm G}$.
The entire system SCW obeys a Gibbs-preserving map with respect to $q^{\rm G}$.
In this setup, each of the initial and the final Lorenz curves consists of a straight line and a horizontal line (see Fig.~\ref{fig:majorization6}).
Thus, $S_{\alpha} (p \otimes c \otimes r^{\rm down} \| q^{\rm G} )$ collapses to a single value for all $0 \leq \alpha \leq \infty$. In particular, the KL divergence ($\alpha = 1$) coincides with all the other R\'enyi divergences.
Thus, from the Lorenz curves shown in Fig.~\ref{fig:majorization6}, the state conversion is possible if and only if
\begin{equation}
e^{- S_{1} (p \otimes c \otimes r^{\rm down} \| q^{\rm G} ) } \leq e^{- S_{1} (p' \otimes c' \otimes r^{\rm up} \|  q^{\rm G})}.
\end{equation}
We also have
$e^{- S_{1} (p \otimes c \otimes r^{\rm down} \| q^{\rm G} ) } =  Z / ( Z+Z') \cdot e^{-\beta w} / (1+e^{-\beta w})$,
$e^{- S_{1} (p' \otimes c' \otimes r^{\rm up} \|  q^{\rm G})} =Z'/ (Z+Z' ) \cdot  1/ ( 1+e^{-\beta w})$.
Thus, we obtain the necessary and sufficient condition for the state conversion as
\begin{equation}
- w \leq -( F'-F),
\label{single_shot_work_classical_equilibrium}
\end{equation}
which is nothing but the second law  (\ref{work_classical_equilibrium}) in the conventional form.
This is also consistent with the observation in Section~\ref{sec:classical_second_law} that the work fluctuation vanishes in the quasi-static limit, which implies that equilibrium transitions can be realized by single-shot protocols.


\chapter{Quantum entropy and divergence}
\label{chap:quantum_entropy}

In this chapter, we consider the basic concepts of quantum information theory.
In Section~\ref{sec:quantum_state}, we briefly summarize the basic concepts of quantum states and quantum dynamics.
We then consider the von Neumann entropy and the quantum KL divergence (quantum relative entropy) in Section~\ref{sec:quantum_entropy}, and the quantum R\'enyi $0$- and $\infty$-divergences (min and max divergences) in Section~\ref{sec:quantum_Renyi}.
We postpone the proof of the monotonicity of these divergences to Appendix~\ref{apx:general_monotonicity}.

\section{Quantum state and dynamics}
\label{sec:quantum_state}
 
As a preliminary, we overview the basic concepts of quantum states and quantum dynamics very briefly.  See also, for example, Ref.~\cite{Nielsen} for details.

A quantum system is described by a Hilbert space $\mathcal H$.
Throughout this book, we assume that the Hilbert space is finite-dimensional, unless stated otherwise.
We write the identity on any Hilbert space by $\hat I$.
We denote the set of linear operators acting on $\mathcal H$ by $\mathcal L (\mathcal H)$.

Let $d$ be the dimension of the Hilbert space $\mathcal H$.
A pure state of the system is represented by a normalized vector $| \varphi \rangle \in \mathcal H$ with $\langle \varphi | \varphi \rangle = 1$.
In general, a \textit{quantum state} is represented by a density operator $\hat \rho$ acting on  $\mathcal H$, which is normalized, ${\rm tr}[\hat \rho] = 1$, and  is positive, $\hat \rho \geq 0$.  The trace is defined as ${\rm tr}[\hat \rho] := \sum_{i=1}^d \langle \varphi_i | \hat \rho | \varphi_i \rangle$ with $\{ | \varphi_i \rangle \}_{i=1}^d$ being an orthonormal basis of $\mathcal H$.   

Here, an operator $\hat \rho \in \mathcal L (\mathcal H)$ is \textit{positive} (or more precisely, positive semidefinite), if  $\langle \varphi | \hat \rho | \varphi \rangle \geq 0$ for all $| \varphi \rangle \in \mathcal H$.
In addition, an operator $\hat \rho$ is \textit{positive definite}, if  $\langle \varphi | \hat \rho | \varphi \rangle > 0$ for all $| \varphi \rangle \in \mathcal H$ with $| \varphi \rangle \neq 0$.
We denote $\hat \rho \geq 0$ and $\hat \rho > 0$ if $\hat \rho$ is positive and positive definite, respectively.
Also, $\hat A \geq \hat B$ and $\hat A > \hat B$ represent $\hat A - \hat B \geq 0$ and $\hat A - \hat B > 0$, respectively.

We denote the set of (normalized) quantum states on $\mathcal H$ by $\mathcal S (\mathcal H)$.
This set is convex; a state is pure if and only if it is  an extreme point of $\mathcal S (\mathcal H)$.  
(In this book, we do not consider a superselection rule that makes the definition of pure states more subtle.) 
We note that if an operator  $\hat \rho$ is  positive and satisfies ${\rm tr}[\hat \rho] \leq 1$, it is referred to as a \textit{subnormalized state}.
An operator that is only positive is referred to an \textit{unnormalized state}.
In this book, states are always normalized unless otherwise stated.

Since any positive operator is Hermitian, a quantum state $\hat \rho$ has the spectrum decomposition of the form
\begin{equation}
\hat \rho = \sum_{i=1}^d p_i | \varphi_i \rangle \langle \varphi_i |,
\end{equation}
where $| \varphi_i \rangle \langle \varphi_i |$ is a projection.
We note that $p_i \geq 0$ and $\sum_{i=1}^d p_i = 1$.
Here, $p := (p_1, p_2, \cdots, p_d)^{\rm T}$ is regarded as a classical probability distribution, which is referred to as the \textit{diagonal distribution} of $\hat \rho$.
If a state is pure, the density operator is given by $\hat \rho = | \varphi \rangle \langle \varphi |$.

The \textit{support} of $\hat \rho$ is the subspace spanned by the eigenvectors with nonzero eigenvalues of $\hat \rho$, which is denoted as ${\rm supp}[\hat \rho] \subset \mathcal H$.
The rank of $\hat \rho$, denoted as ${\rm rank}[\hat \rho]$, is the dimension of ${\rm supp}[\hat \rho]$.
The support of any positive-definite operator equals the entire space $\mathcal H$ (i.e., has full rank).

We next consider a composite system AB that consists of subsystems A and B.
Let $\mathcal H_{\rm A}$ and $\mathcal H_{\rm B}$ be the Hilbert spaces of subsystems A and B with dimensions $d$ and $d'$, respectively.
Then, the Hilbert space of AB is given by their tensor product  $\mathcal H_{\rm A} \otimes \mathcal H_{\rm B}$ with dimension $dd'$.

If the partial states of A and B are independent and written as $\hat \rho_{\rm A} \in \mathcal S (\mathcal H_{\rm A})$ and $\hat \rho_{\rm B} \in \mathcal S (\mathcal H_{\rm B})$,
the corresponding state of AB is given by $\hat \rho_{\rm A} \otimes \hat \rho_{\rm B} \in \mathcal S (\mathcal H_{\rm A} \otimes \mathcal H_{\rm B})$, which is called a product state.
If the partial states are both pure and written as $| \varphi_{\rm A} \rangle \in \mathcal H_{\rm A}$, $| \varphi_{\rm B} \rangle \in \mathcal H_{\rm B}$, the state vector of AB is given by $| \varphi_{\rm A} \rangle \otimes | \varphi_{\rm B} \rangle  \in \mathcal H_{\rm A} \otimes \mathcal H_{\rm B}$.
In the following, we abbreviate $| \varphi_{\rm A} \rangle \otimes | \varphi_{\rm B} \rangle$ as   $| \varphi_{\rm A} \rangle | \varphi_{\rm B} \rangle$.
In general, a state of AB is a density operator $\hat \rho \in  \mathcal S (\mathcal H_{\rm A} \otimes \mathcal H_{\rm B})$, whose partial states are given by the partial trace; for example, the partial state of A is given by
\begin{equation}
\hat \rho_{\rm A} = {\rm tr}_{\rm B} [ \hat \rho ] := \sum_{i=1}^{d'} \langle \varphi_{{\rm B}, i} | \hat \rho  | \varphi_{{\rm B}, i} \rangle \in \mathcal S (\mathcal H_{\rm A}),
\end{equation}
where $\{ | \varphi_{{\rm B}, i} \rangle \}_{i=1}^{d'}$ is an orthonormal basis of $\mathcal H_{\rm B}$.

We note that a pure state of AB is called \textit{entangled}, if it is not in the form of  a product state $| \varphi_{\rm A} \rangle | \varphi_{\rm B} \rangle$.
A mixed state $\hat \rho_{\rm AB}$ is called a \textit{product} state if it is written as $\hat \rho_{\rm AB} = \hat \rho_{\rm A} \otimes \hat \rho_{\rm B}$, and it is called a \textit{separable} state if it can be written as $\hat \rho_{\rm AB} = \sum_k r_k \hat \rho_{\rm A}^{(k)} \otimes \hat \rho_{\rm B}^{(k)}$ with some classical probability distribution $r = (r_1, r_2, \cdots, r_K )^{\rm T}$ with $K<\infty$.  Any separable state has only classical correlations between A and B.  If $\hat \rho_{\rm AB}$ is not separable,  it is called entangled.


\

We next consider quantum dynamics, which is represented by a linear map  $\mathcal E : \ \mathcal L(\mathcal H)  \to \mathcal L(\mathcal H')$.
Such a linear map acting on an operator space $\mathcal L (\mathcal H)$ is sometimes called a superoperator.
In general,  we allow different Hilbert spaces $\mathcal H$, $\mathcal H'$ for the input and output systems.

The simplest case is unitary dynamics that is given by, with a unitary operator $\hat U \in \mathcal L (\mathcal H )$,
\begin{equation}
\mathcal E(\hat \rho) = \hat U \hat \rho \hat U^\dagger,
\label{E_unitary}
\end{equation}
which maps a pure state to a pure state, i.e., $| \varphi \rangle \mapsto \hat U | \varphi \rangle$.
In general, any physically-realizable dynamics must be completely-positive (CP) as defined as follows:

\begin{definition}[Completely positive maps]
Let  $\mathcal E : \mathcal L (\mathcal H) \to \mathcal L (\mathcal H')$ be a linear map. 
\begin{itemize}
\item $\mathcal E$ is positive, if any positive operator is mapped to  a positive operator, that is, $\hat X \geq 0$ implies $\mathcal E (\hat X) \geq 0$.
\item $\mathcal E$ is $n$-positive if $\mathcal E \otimes \mathcal I_{n}$ is positive,
where  $\mathcal I_n$ is the identity superoperator on  $\mathcal L ( \mathbb C^n)$.
\item $\mathcal E$ is completely positive (CP) if it is $n$-positive for all $n \in \mathbb N$.
\end{itemize}
\end{definition}

An example of a map that is positive but not CP is the transpose of matrices with a given basis.
For example, if we transpose the density matrix of a partial state of an entangled state, the total density operator can have negative eigenvalues. Based on this property, the transpose can be used for quantifying entanglement~\cite{Peres96,Plenio2005}.

We also define the concept of trace-preserving (TP), which represents the conservation of probability; If quantum dynamics is TP, the output state is normalized for any normalized input state.

\begin{definition}[Trace-preserving]
A linear map $\mathcal E$ is called trace-preserving (TP), if it satisfies
\begin{equation}
{\rm tr}[\mathcal E(\hat X) ] = {\rm tr}[\hat X]
\end{equation}
for all $\hat X$.
If ${\rm tr}[\mathcal E(\hat X) ] \leq {\rm tr}[\hat X]$ holds for all $\hat X \geq 0$, $\mathcal E$ is called trace-nonincreasing.
\end{definition}

To summarize, any physically realizable map that conserves  probability must be completely-positive and trace-preserving (CPTP).
It is  known that a superoperator$\mathcal E$  is CP if and only if it is written as 
\begin{equation}
\mathcal E(\hat \rho) = \sum_{k} \hat M_k \hat \rho \hat M_k^\dagger,
\label{CP_Kraus}
\end{equation}
which is called the Kraus representation ($\hat M_k$'s are called the Kraus operators).
We omit the proof of this (see, e.g., Theorem 8.1 of \cite{Nielsen}, Theorem 1 of \cite{Choi75}).
If $\mathcal E$ is CPTP, the Kraus operators satisfy $\sum_k \hat M_k^\dagger \hat M_k = \hat I$,
which guarantees the conservation of the trace.
If $\mathcal E$ is CP and trace-nonincreasing, we have $\sum_k \hat M_k^\dagger \hat M_k \leq \hat I$.

We further remark that any CPTP map $\mathcal E$ on $\mathcal H$ can be represented as a unitary map on an extended space: If $\mathcal E$ is CPTP, there exists a quantum state $\hat \sigma \in \mathcal S (\mathcal H_{\rm A})$ of an auxiliary system A  and a unitary operator on $\mathcal H \otimes \mathcal H_{\rm A}$ such that
\begin{equation}
\mathcal E (\hat \rho ) = {\rm tr}_{\rm A'} [\hat U \hat \rho \otimes \hat \sigma \hat U^\dagger],
\label{Naimark}
\end{equation}
where ${\rm tr}_{\rm A'}$ is  the partial trace over an output auxiliary system $\rm A'$ such that $\mathcal H \otimes \mathcal H_{\rm A} \simeq \mathcal H' \otimes \mathcal H_{\rm A'}$.
Eq.~(\ref{Naimark}) is called the Naimark extension (see Ref.~\cite{Nielsen} for details).
Conversely, any map of the form (\ref{Naimark}) is CPTP.
This implies that any quantum dynamics can be regarded as unitary dynamics of a larger system including the environment.

We here remark some examples of CP maps.

First, unitary dynamics (\ref{E_unitary}) is CPTP with  a single Kraus operator $\hat U$.

The partial trace itself is CPTP:
for $\mathcal H \simeq \mathcal H' \otimes \mathcal H_{\rm A'}$, 
 $\mathcal E (\hat \rho ) := {\rm tr}_{\rm A'} [\hat \rho]$ is CPTP.
 In particular, the trace itself is CPTP, where the output space is the one-dimensional Hilbert space $\mathbb C$.

A quantum measurement with a particular (post-selected) outcome is an example of CP and trace-nonincreasing maps.
Let $k$ be a measurement outcome.
For the measured state $\hat \rho$, the post-measurement state with outcome $k$ is given by $\mathcal E_k (\hat \rho)$, where the probability of outcome $k$ is given by $p_k = {\rm tr}[\mathcal E_k (\hat \rho)] $.
$\mathcal E_k$ must be CP, while it does not preserve the trace if $p_k < 1$.
The normalized post-measurement state is given by $\mathcal E_k (\hat \rho)/ p_k$.
If we average the post-measurement states over all outcomes, the averaged state is given by $\mathcal E (\hat \rho) := \sum_k \mathcal E_k (\hat \rho)$.  Now $\mathcal E$ is CPTP, because of $\sum_k p_k = 1$. 
We note that the map $\hat \rho \mapsto p_k$ is also  CP where the output Hilbert space is one dimensional.
Correspondingly, we can embed $\{ p_k \}$ to the diagonal elements of a density operator with some  fixed basis, as $\hat p := \sum_k p_k | k\rangle \langle k |$. Then, $\hat \rho \mapsto \hat p$ is a CPTP map.

\

A map $\mathcal E : \mathcal L (\mathcal H ) \to \mathcal L (\mathcal H)$ is called \textit{unital}, if it preserves the identity of $\mathcal H$, i.e., $\mathcal{E}( \hat I ) = \hat I$.
Unital should not be confused with unitary, but any unitary map is unital.
The concept of CPTP unital is a quantum analogue of doubly stochastic, as seen in Section~\ref{sec:quantum_majorization}.

We next consider an inner product of the operator space, called  the Hilbert-Schmidt inner product defined as
\begin{equation}
\langle \hat Y, \hat X \rangle_{\rm HS} := {\rm tr}[\hat Y^\dagger \hat X].
\end{equation}
We can define the Hermitian conjugate of $\mathcal E : \mathcal L (\mathcal H) \to \mathcal L (\mathcal H')$ with respect to the Hilbert-Schmidt inner product, which is a superoperator  $\mathcal E^\dagger : \mathcal L (\mathcal H') \to \mathcal L (\mathcal H)$ satisfying 
\begin{equation}
\langle \hat Y, \mathcal E (\hat X ) \rangle_{\rm HS} = \langle \mathcal E^\dagger ( \hat Y ), \hat X \rangle_{\rm HS}
\end{equation}
for all $\hat X$, $\hat Y$.
Clearly, $\mathcal E^{\dagger \dagger} = \mathcal E$.

$\mathcal E$ is TP if and only if $\mathcal E^\dagger$ is unital.  In fact, by taking $\hat Y = \hat I$ in the above relation, we have 
${\rm tr}[\mathcal E (\hat X )] = \langle \hat I, \mathcal E (\hat X ) \rangle_{\rm HS} = \langle \mathcal E^\dagger ( \hat I ), \hat X \rangle_{\rm HS} = {\rm tr} [\mathcal E^\dagger ( \hat I ) \hat X ]$.
Thus,  ${\rm tr}[\mathcal E (\hat X )] = {\rm tr} [\hat X ]$ holds for all $\hat X$, if and only if $\mathcal E^\dagger ( \hat I ) = \hat I$.

We note that $\mathcal E$ is positive if and only if $\mathcal E^\dagger$ is positive.
To see this, let $\hat Y$ be positive and take $\hat X = | \varphi \rangle \langle \varphi |$.
Suppose that $\mathcal E$ is positive.  Then
$\langle \varphi | \mathcal E^\dagger (\hat Y ) | \varphi \rangle  = {\rm tr}[\hat Y \mathcal E ( \hat X)] =  {\rm tr}[\mathcal E ( \hat X)^{1/2} \hat Y \mathcal E ( \hat X)^{1/2}] \geq 0$.
Moreover, $\mathcal E$ is $n$-positive (resp. CP) if and only if $\mathcal E^\dagger$ is $n$-positive (resp. CP).
In fact, $\mathcal E \otimes  \mathcal I_n$ is positive if and only if $\mathcal E^\dagger \otimes  \mathcal I_n$ is positive, because $\mathcal I_n^\dagger = \mathcal I_n$.

\

We finally remark on norms of operators. Let $\hat X$  be an arbitrary operator acting on $\mathcal H$.
First, the trace norm is defined as 
\begin{equation}
\| \hat X \|_1 := {\rm tr}[| \hat X |],
\end{equation}
where $| \hat X | := \sqrt{\hat X^\dagger \hat X}$.
This is indeed a norm in the mathematics sense, because it satisfies the triangle inequality $\| \hat X + \hat X' \|_1 \leq \| \hat X \|_1 + \| \hat X' \|_1$ and $\| \hat X \|_1 = 0$ holds if and only if $\hat X = 0$. 
The trace distance is then defined as
\begin{equation}
D (\hat \rho , \hat \sigma ) := \frac{1}{2} \| \hat \rho - \hat \sigma \|_1,
\end{equation}
which is commonly used for measuring a distance between quantum states $\hat \rho, \hat \sigma$, and satisfies the monotonicity under CPTP map $\mathcal E$ (Theorem 9.2 of \cite{Nielsen}):
\begin{equation}
D (\hat \rho , \hat \sigma ) \geq D (\mathcal E (\hat \rho ), \mathcal E(\hat \sigma) ).
\end{equation}
Another important norm is the operator norm defined by
\begin{equation}
\| \hat X \|_\infty := \max_{\| | \varphi \rangle \| = 1} \sqrt{\langle \varphi | \hat X^\dagger \hat X | \varphi \rangle},
\end{equation}
which is nothing but the largest singular value of $\hat X$.

\section{von Neumann entropy and the quantum KL divergence}
\label{sec:quantum_entropy}

The quantum analogue of the Shannon entropy is the \textit{von Neumann entropy}, which is defined for a density operator $\hat \rho$ of dimension $d$ by
\begin{equation}
S_1(\hat \rho) := -{\rm tr} [\hat \rho \ln \hat \rho].
\end{equation}
Let $p$ be the diagonal distribution of $\hat \rho$. Then $S_1 (p) = S_1 (\hat \rho)$ holds, implying that the von Neumann entropy characterizes ``classical randomness'' of quantum states.
We thus have
\begin{equation}
0 \leq S_1 (\hat \rho ) \leq \ln d.
\end{equation}
It is  obvious that the von Neumann entropy is invariant under unitary transformation: $S_1 (\hat U \hat \rho \hat U^\dagger ) = S_1 (\hat \rho )$ holds for any unitary $\hat U$.

The KL divergence is also generalized to the quantum case, which is referred to as the quantum KL divergence or the quantum relative entropy~\cite{Nielsen}. It is defined for two quantum states $\hat \rho$ and $\hat \sigma$ of dimension $d$ as
\begin{equation}
S_1(\hat \rho \| \hat \sigma ) := {\rm tr}[\hat \rho \ln \hat \rho - \hat \rho \ln \hat \sigma].
\end{equation}
If the support of $\hat \rho$ is not included in that of $\hat \sigma$, we define $S_1(\hat \rho \| \hat \sigma ) = + \infty$.
As in the classical case, however, we assume that the support of  $\hat \rho$ is included in that of $\hat \sigma$ throughout this book, whenever we consider quantum divergence-like quantities (including those in Appendix~\ref{apx:general_monotonicity}).
We note that 
\begin{equation}
 S_1(\hat \rho)=   \ln d - S_1(\hat \rho \| \hat I /d),
\label{vonNeumann_KL}
\end{equation}
where $\hat I / d$ is the maximally mixed state (i.e., the uniform distribution).
If $\hat \rho$ and $\hat \sigma$ are simultaneously diagonalizable such that $\hat \rho = \sum_{i=1}^d p_i | \varphi_i \rangle \langle \varphi_i |$ and $\hat \sigma = \sum_{i=1}^d q_i | \varphi_i \rangle \langle \varphi_i |$ with the same basis, $S_1 (\hat \rho \| \hat \sigma)$ reduces to the classical KL divergence of the diagonal distributions of $\hat \rho$ and $\hat \sigma$, i.e., $S_1(p \| q) = S_1 (\hat \rho \| \hat \sigma)$.
We note that $S_1(\hat U \hat \rho \hat U^\dagger \| \hat U \hat \sigma \hat U^\dagger ) = S_1(\hat \rho \| \hat \sigma )$ holds for any unitary $\hat U$.

As in the classical case, the quantum KL divergence is an asymmetric ``distance'' between two quantum states.
In fact, the quantum KL divergence is non-negative:
\begin{equation}
S_1(\hat \rho \| \hat \sigma )  \geq 0,
\label{quantum_KL_positive}
\end{equation} 
where the equality $S_1(\hat \rho \| \hat \sigma )  = 0$ holds if and only if $\hat \rho = \hat \sigma$.
The proof of this is straightforward (e.g., Theorem 11.7 of Ref~\cite{Nielsen}),  but  we will prove it as Corollary \ref{cor:quantum_KL_positive} in Appendix~\ref{apx:general_monotonicity} in a special case of a more general statement. 
We here note that if two states are unnormalized, the non-negativity~(\ref{quantum_KL_positive}) is generalized as
\begin{equation}
S_1(\hat \rho \| \hat \sigma )  \geq {\rm tr}[\hat \rho - \hat \sigma],
\end{equation}
which is called the Klein's inequality.

Another fundamental property of the quantum KL divergence is the monotonicity (or the data processing inequality):  

\begin{theorem}[Monotonicity]
For any CPTP map $\mathcal E$, 
\begin{equation}
S_1(\hat \rho \| \hat \sigma ) \geq S_1(\mathcal{E}( \hat \rho ) \| \mathcal{E}( \hat \sigma )). 
\label{monotone}
\end{equation}
\label{thm:monotone}
\end{theorem}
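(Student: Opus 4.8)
The plan is to decompose an arbitrary CPTP map into elementary steps via the Naimark extension~(\ref{Naimark}), reduce the inequality to monotonicity under a partial trace, and then derive that from the \emph{joint convexity} of $S_1(\cdot\,\|\,\cdot)$, which is the real content.

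First I would record two invariances of the quantum KL divergence. Unitary invariance, $S_1(\hat U\hat\rho\hat U^\dagger\,\|\,\hat U\hat\sigma\hat U^\dagger)=S_1(\hat\rho\,\|\,\hat\sigma)$, is already noted above. Additivity under tensoring,
\begin{equation}
S_1(\hat\rho_1\otimes\hat\rho_2\,\|\,\hat\sigma_1\otimes\hat\sigma_2)=S_1(\hat\rho_1\,\|\,\hat\sigma_1)+S_1(\hat\rho_2\,\|\,\hat\sigma_2),
\end{equation}
follows at once from $\ln(\hat A\otimes\hat B)=\ln\hat A\otimes\hat I+\hat I\otimes\ln\hat B$; in particular $S_1(\hat\rho\otimes\hat\tau\,\|\,\hat\sigma\otimes\hat\tau)=S_1(\hat\rho\,\|\,\hat\sigma)$. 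Writing $\mathcal E$ via its Naimark extension $\mathcal E(\hat\rho)={\rm tr}_{{\rm A}'}[\hat U(\hat\rho\otimes\hat\sigma_0)\hat U^\dagger]$ exhibits it as the composition of (a) appending a fixed ancilla in the state $\hat\sigma_0$, (b) conjugation by the unitary $\hat U$, and (c) a partial trace over ${\rm A}'$. Steps (a) and (b) leave $S_1$ unchanged by the two invariances, so it suffices to prove monotonicity under a partial trace:
\begin{equation}
S_1(\hat\rho_{\rm AB}\,\|\,\hat\sigma_{\rm AB})\ \ge\ S_1(\hat\rho_{\rm A}\,\|\,\hat\sigma_{\rm A}).
\end{equation}
Support containment is preserved along the way, since $\hat\rho\le c\hat\sigma$ for some $c>0$ is equivalent to ${\rm supp}[\hat\rho]\subseteq{\rm supp}[\hat\sigma]$ and is stable under CP maps, tensoring, and partial trace.

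Granting joint convexity of $(\hat\rho,\hat\sigma)\mapsto S_1(\hat\rho\,\|\,\hat\sigma)$, partial-trace monotonicity follows by a twirling argument: averaging over the $d_{\rm B}^2$ Heisenberg--Weyl operators $\hat P_k$ on ${\rm B}$ gives $\frac{1}{d_{\rm B}^2}\sum_k(\hat I_{\rm A}\otimes\hat P_k)\hat X_{\rm AB}(\hat I_{\rm A}\otimes\hat P_k)^\dagger=\hat X_{\rm A}\otimes\hat I_{\rm B}/d_{\rm B}$, so applying joint convexity to this finite average, and then unitary invariance and additivity, yields $S_1(\hat\rho_{\rm A}\otimes\hat I_{\rm B}/d_{\rm B}\,\|\,\hat\sigma_{\rm A}\otimes\hat I_{\rm B}/d_{\rm B})=S_1(\hat\rho_{\rm A}\,\|\,\hat\sigma_{\rm A})$ on the one side and $\le S_1(\hat\rho_{\rm AB}\,\|\,\hat\sigma_{\rm AB})$ on the other. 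It remains to prove joint convexity, which I would deduce from Lieb's concavity theorem: for $s\in[0,1]$ the map $(\hat\rho,\hat\sigma)\mapsto{\rm tr}[\hat\rho^{1-s}\hat\sigma^{s}]$ is jointly concave on positive operators, and on the common support one has
\begin{equation}
S_1(\hat\rho\,\|\,\hat\sigma)=\lim_{s\to 0^+}\frac{1}{s}\Bigl(1-{\rm tr}[\hat\rho^{1-s}\hat\sigma^{s}]\Bigr);
\end{equation}
each function $s^{-1}(1-{\rm tr}[\hat\rho^{1-s}\hat\sigma^{s}])$ is then jointly convex, and joint convexity is preserved under the pointwise limit $s\to 0^+$, so $S_1(\hat\rho\,\|\,\hat\sigma)$ is jointly convex.

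The step I expect to be the genuine obstacle is Lieb's concavity theorem itself, whose proof is nontrivial (it can be obtained through complex interpolation, or via the operator-means/operator-concavity machinery); in the write-up I would either invoke it as a black box or prove it separately. An alternative route --- essentially the one the text defers to Appendix~\ref{apx:general_monotonicity} --- bypasses Lieb's theorem: one recognizes $S_1(\hat\rho\,\|\,\hat\sigma)$ as the Petz quasi-entropy associated with the operator convex function $f(x)=x\ln x$ and obtains the monotonicity as a special case of the general data-processing inequality for quantum $f$-divergences, which rests instead on operator convexity and the operator Jensen inequality.
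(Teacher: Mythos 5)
Your argument is correct, but it follows a genuinely different route from the one the paper takes. The paper defers the proof to Appendix~\ref{apx:general_monotonicity}, where Theorem~\ref{thm:monotone} is obtained as the special case $f(x)=x\ln x$ of the data-processing inequality for the Petz quasi-entropies $D_f(\hat\rho\|\hat\sigma)$: the key ingredients there are the operator Jensen inequality for the contraction $\mathcal V=\mathcal R_{\hat\sigma}^{1/2}\mathcal E^\dagger\mathcal R_{\mathcal E(\hat\sigma)}^{-1/2}$, the Schwarz operator inequality, and the L\"owner integral representation of operator convex functions; this is exactly the ``alternative route'' you name in your last sentence. You instead follow the historical Lieb--Ruskai/Uhlmann path: Stinespring/Naimark decomposition (Eq.~(\ref{Naimark})) plus unitary invariance and additivity reduce everything to the partial trace, partial-trace monotonicity follows from joint convexity by Heisenberg--Weyl twirling, and joint convexity is deduced from Lieb's concavity of $(\hat\rho,\hat\sigma)\mapsto{\rm tr}[\hat\rho^{1-s}\hat\sigma^{s}]$ via the limit $S_1(\hat\rho\|\hat\sigma)=\lim_{s\to0^+}s^{-1}(1-{\rm tr}[\hat\rho^{1-s}\hat\sigma^{s}])$; all of these steps check out, including the support bookkeeping. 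Two remarks on the comparison. First, be aware that in the paper's logical order joint convexity (Theorem~\ref{thm:joint_convexity_KL}) is itself \emph{derived from} monotonicity, so you cannot cite that theorem --- you must, as you do, obtain joint convexity independently from Lieb's theorem, which then becomes the single nontrivial black box of your proof (comparable in depth to the operator-convexity machinery the paper develops). Second, what each approach buys: yours is modular and makes transparent the equivalence between monotonicity, joint convexity, and strong subadditivity, at the cost of importing Lieb's concavity theorem; the paper's Petz-style proof is self-contained within its Appendix, establishes monotonicity for the entire family of quasi-entropies at once (so the R\'enyi cases in Corollary~\ref{cor:Renyi_monotonicity} come for free), and shows that $2$-positivity of $\mathcal E$, rather than complete positivity, already suffices --- information your Stinespring-based reduction does not provide.
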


This implies that, as in the classical case discussed in Section~\ref{sec:Shannon_KL}, the KL divergence is a monotone under CPTP maps.  Note that, however,  the KL divergence is not a complete monotone, as it does not provide a sufficient condition for state convertibility, again as in the classical case discussed in Chapter~\ref{chap:classical_majorization}.
 
The proof of the monotonicity in the quantum case is nontrivial, in contrast to the classical counterpart (\ref{monotone_classical}) whose proof  was easy.
Historically, Lieb and Ruskai~\cite{Lieb2,Lieb3} proved the strong subadditivity of the von Neumann entropy (inequality (\ref{strong_subadditivity})) on the basis of the Lieb's theorem~\cite{Lieb1} (see Ref.~\cite{Ruskai} for a self-contained review), which was then rephrased as the monotonicity of the quantum KL divergence~\cite{Lindblad2,Lindblad3,Uhlmann1977}.
Later, Petz gave a simple proof of the monotonicity~\cite{Petz1986} (see also Refs.~\cite{Petz3,Nielsen2005,Hiai2011f}), on which we focus in this book.
In Appendix~\ref{apx:general_monotonicity}, we will prove the monotonicity of general divergence-like quantities called the Petz's quasi-entropies~\cite{Petz1985,Petz1986};
Theorem~\ref{thm:monotone} above is an immediate consequence of  Corollary~\ref{cor:KL_monotonicity}.

We consider the change in the von Neumann entropy by CPTP unital maps. Let  $\mathcal E$ be CPTP unital. By noting Eq.~(\ref{vonNeumann_KL}) and the monotonicity (\ref{monotone}), we have
\begin{equation}
S_1(\hat \rho) \leq S_1(\mathcal E (\hat \rho )),
\label{unital_entropy}
\end{equation}
which implies that a CPTP unital map makes quantum states more ``random,'' as in the classical counterpart (\ref{monotone_DSM}) with doubly-stochastic maps.
In other words, the von Neumann entropy is a monotone under CPTP unital maps.

\

We discuss some important properties of the von Neumann entropy and the quantum KL divergence.
We consider two systems A and B and their  composite system AB.
Let $\hat \rho_{\rm AB}$ be a density operator of AB and $\hat \rho_{\rm A}$, $\hat \rho_{\rm B}$ be its partial states.
Then, the \textit{subadditivity} of  the von Neumann entropy is expressed as
\begin{equation}
S_1(\hat \rho_{\rm AB}) \leq S_1(\hat \rho_{\rm A}) + S_1(\hat \rho_{\rm B}),
\label{subadditivity}
\end{equation}
where the equality holds  if and only if $\hat \rho_{\rm AB} = \hat \rho_{\rm A}\otimes \hat \rho_{\rm B}$.
This is a straightforward consequence of the non-negativity of the quantum KL divergence:
\begin{equation}
S_1(\hat \rho_{\rm A}) + S_1(\hat \rho_{\rm B}) - S_1 (\hat \rho_{\rm AB}) = S_1(\hat \rho_{\rm AB} \| \hat \rho_{\rm A} \otimes \hat \rho_{\rm B} ) \geq 0.
\end{equation}
The left-hand side above is called the mutual information between A and B, written as
\begin{equation}
I_1 (\hat \rho_{\rm AB})_{\rm A: B} :=S_1(\hat \rho_{\rm A}) + S_1(\hat \rho_{\rm B}) - S_1 (\hat \rho_{\rm AB}) \geq 0.
\label{mutual_information_1}
\end{equation}
Note that for any CPTP map of the form $\mathcal E_{\rm A} \otimes \mathcal E_{\rm B}$ acting independently on A and B, the monotonicity of the KL divergence (\ref{monotone}) implies the data processing inequality of mutual information: $I_1 (\hat \rho_{\rm AB})_{\rm A: B} \geq I_1 (\mathcal E_{\rm A} \otimes \mathcal E_{\rm B} ( \hat \rho_{\rm AB}))_{\rm A: B}$.

A much stronger property than the subadditivity is the strong subadditivity.
We consider three subsystems A, B, C. Then,
\begin{equation}
S_1(\hat \rho_{\rm ABC}) + S_1(\hat \rho_{\rm B}) \leq S_1(\hat \rho_{\rm AB}) + S_1(\hat \rho_{\rm BC}).
\label{strong_subadditivity}
\end{equation}
This can be proved from the monotonicity~(\ref{monotone}) of the quantum KL divergence.
Let $\hat \sigma_{\rm A} := \hat I_{\rm A} / d_{\rm A}$, where $\hat I_{\rm A}$ is the identity of $\mathcal H_{\rm A}$ and $d_{\rm A}$  is its dimension.  We then have $ [ S_1(\hat \rho_{\rm AB}) + S_1(\hat \rho_{\rm BC})] - [ S_1(\hat \rho_{\rm ABC}) + S_1(\hat \rho_{\rm B}) ] = [ S_1(\hat \rho_{\rm BC}) -  S_1(\hat \rho_{\rm ABC})] - [ S_1(\hat \rho_{\rm B}) - S_1(\hat \rho_{\rm AB})] = S_1(\hat \rho_{\rm ABC} \| \hat \sigma_{\rm A} \otimes \hat \rho_{\rm BC}) - S_1(\hat \rho_{\rm AB} \| \hat \sigma_{\rm A} \otimes \hat \rho_{\rm B}) \geq 0$,
where we used the monotonicity~(\ref{monotone}) for the partial trace
$\mathcal E (\hat \rho_{\rm ABC}) := {\rm tr}_{\rm C}[\hat \rho_{\rm ABC}] =  \hat \rho_{\rm AB}$ that  is CPTP as mentioned before.

As seen in the above proof, the strong subadditivity~(\ref{strong_subadditivity}) is  equivalent to the monotonicity of the partial trace.
Conversely, the monotonicity of any CPTP map follows from the monotonicity of the partial trace  because of 
the Naimark extension (\ref{Naimark}).
Thus, the strong subadditivity~(\ref{strong_subadditivity}) is essentially equivalent to the monotonicity~(\ref{monotone})  of CPTP maps.

\

We next discuss the convex/concave properties.
The quantum KL divergence satisfies the following property called the joint convexity.

\begin{theorem}[Joint convexity]
Let $\hat \rho = \sum_k p_k \hat \rho_k$ and $\hat \sigma = \sum_k p_k \hat \sigma_k$, where $\hat \rho_k$ and $\hat \sigma_k$ are quantum states and $p_k$'s represent a classical distribution with $p_k > 0$.
Then,
\begin{equation}
S_1 (\hat \rho \| \hat \sigma ) \leq \sum_k p_k S_1 (\hat \rho_k \| \hat \sigma_k) .
\label{joint_convexity_KL}
\end{equation}
The equality holds if $\mathcal P_k$'s are orthogonal to each other, where $\mathcal P_k$ is the subspace spanned by the supports of $\hat \rho_k$ and $\hat \sigma_k$. 
\label{thm:joint_convexity_KL}
\end{theorem}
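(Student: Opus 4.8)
The plan is to reduce the joint convexity to the monotonicity of the quantum KL divergence under the partial trace, which is already available as Theorem~\ref{thm:monotone}. The idea is the standard ``flag'' construction: attach an auxiliary system that records the index $k$, so that the mixtures $\hat\rho,\hat\sigma$ become partial traces of block-diagonal states for which the divergence is exactly additive.

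Concretely, I would introduce an auxiliary system B with a Hilbert space $\mathcal H_{\rm B}$ of dimension equal to the number of indices $k$, fixing an orthonormal basis $\{|k\rangle\}$, and form on the composite system AB the classical-quantum states
\[
\hat\rho_{\rm AB} := \sum_k p_k\,\hat\rho_k\otimes|k\rangle\langle k|,\qquad
\hat\sigma_{\rm AB} := \sum_k p_k\,\hat\sigma_k\otimes|k\rangle\langle k|,
\]
whose partial traces over B are precisely $\hat\rho$ and $\hat\sigma$. Since these operators are block-diagonal with respect to $\mathcal H_{\rm A}\otimes\mathcal H_{\rm B}=\bigoplus_k\mathcal H_{\rm A}\otimes\mathbb C|k\rangle$, their logarithms act blockwise, and a direct computation gives
\[
S_1(\hat\rho_{\rm AB}\|\hat\sigma_{\rm AB})=\sum_k p_k\,{\rm tr}\big[\hat\rho_k(\ln\hat\rho_k-\ln\hat\sigma_k)\big]=\sum_k p_k\,S_1(\hat\rho_k\|\hat\sigma_k),
\]
where the terms proportional to $\ln p_k$ cancel because $\ln(p_k\hat\rho_k)$ and $\ln(p_k\hat\sigma_k)$ contribute the same $(\ln p_k)$-multiple of the block projector (here I use $p_k>0$). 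The standing support hypothesis ${\rm supp}[\hat\rho_k]\subseteq{\rm supp}[\hat\sigma_k]$ guarantees that every term, hence $S_1(\hat\rho\|\hat\sigma)$ itself, is finite.

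Next I would apply Theorem~\ref{thm:monotone} to the CPTP map $\mathcal E:={\rm tr}_{\rm B}$, which is CPTP as noted in Section~\ref{sec:quantum_state}, obtaining $S_1(\hat\rho_{\rm AB}\|\hat\sigma_{\rm AB})\geq S_1(\mathcal E(\hat\rho_{\rm AB})\|\mathcal E(\hat\sigma_{\rm AB}))=S_1(\hat\rho\|\hat\sigma)$. Combining this with the identity of the previous step yields inequality~(\ref{joint_convexity_KL}). For the equality case, when the subspaces $\mathcal P_k$ are mutually orthogonal the states $\hat\rho=\sum_k p_k\hat\rho_k$ and $\hat\sigma=\sum_k p_k\hat\sigma_k$ are themselves block-diagonal with blocks supported on the $\mathcal P_k$, so the same blockwise computation of the logarithm applies directly to $\hat\rho,\hat\sigma$ and gives $S_1(\hat\rho\|\hat\sigma)=\sum_k p_k S_1(\hat\rho_k\|\hat\sigma_k)$. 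I do not expect a genuine obstacle: the only delicate points are the bookkeeping of logarithms of block-diagonal operators and checking finiteness of all divergences under the support assumption; the substantive input — monotonicity under CPTP maps — is already established (via Petz's argument in Appendix~\ref{apx:general_monotonicity}).
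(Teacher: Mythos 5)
Your proposal is correct and follows essentially the same route as the paper: the flag (classical-quantum) construction $\sum_k p_k \hat\rho_k \otimes |k\rangle\langle k|$, an exact blockwise computation showing the divergence of the flagged states equals $\sum_k p_k S_1(\hat\rho_k\|\hat\sigma_k)$, and then monotonicity of $S_1$ under the partial trace (Theorem~\ref{thm:monotone}), with the orthogonal-support case handled by the same block-diagonal calculation. The only cosmetic difference is ordering — the paper proves the equality case first and reuses it for the flagged states, whereas you compute the flagged divergence directly with the $\ln p_k$ cancellation — but the argument is the same.
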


\begin{proof}
We start with the case of the equality. By assumption, $\hat \rho$ and $\hat \sigma$ are simultaneously block-diagonal with subspaces $\mathcal P_k$'s, and thus we have 
\begin{eqnarray}
S_1 (\hat \rho \| \hat \sigma ) 
&=& {\rm tr} \left[ \left( \sum_k p_k \hat \rho_k \right) \ln \left( \sum_k p_k \hat \rho_k \right)   - \left( \sum_k p_k \hat \rho_k \right) \ln \left( \sum_k p_k \hat \sigma_k \right)  \right] \\
&=& {\rm tr} \left[ \sum_k p_k \hat \rho_k \ln  ( p_k \hat \rho_k )    - \sum_k p_k \hat \rho_k  \ln  (  p_k \hat \sigma_k )   \right] \\
&=&\sum_k p_k S_1 (\hat \rho_k \| \hat \sigma_k).
\end{eqnarray}
In general, let $\{ | k \rangle \}$ be an orthonormal basis of an auxiliary system A, and define
\begin{equation}
\hat \rho' := \sum_k p_k \hat \rho_k \otimes | k \rangle \langle k |, \ \ \hat \sigma' := \sum_k p_k \hat \sigma_k \otimes | k \rangle \langle k |.
\end{equation}
From the equality case, we have
\begin{equation}
S_1 (\hat \rho' \| \hat \sigma') = \sum_k p_k S_1 (\hat \rho_k  \otimes | k \rangle \langle k | \| \hat \sigma_k \otimes | k \rangle \langle k |) = \sum_k p_k S_1 (\hat \rho_k \| \hat \sigma_k).
\end{equation}
Note hat $\hat \rho = {\rm tr}_{\rm A} [\hat \rho']$, $\hat \sigma = {\rm tr}_{\rm A} [\hat \sigma']$.
Because the partial trace is CPTP, the monotonicity of the quantum KL divergence implies that 
\begin{equation}
S_1 (\hat \rho \| \hat \sigma ) \leq S_1 (\hat \rho' \| \hat \sigma') =  \sum_k p_k S_1 (\hat \rho_k \| \hat \sigma_k). 
\end{equation}
$\Box$
\end{proof}

By letting $\hat \sigma_k = \hat I / d$ for all $k$ in the above theorem,  we obtain the concavity of the von Neumann entropy:
\begin{equation}
\sum_k p_k S_1 (\hat \rho_k ) \leq S_1 (\hat \rho ).
\label{von_Neumann_concave}
\end{equation}
The equality is achieved if we have a single $k$, i.e., $p_1 = 1$ and $\hat \rho = \hat \rho_1$.
We note that this itself is easily provable, without invoking the joint convexity of the quantum KL divergence: 
Consider $\hat \rho' := \sum_k p_k \hat \rho_k \otimes | k \rangle \langle k |$ and apply the subadditivity~(\ref{subadditivity}):
\begin{equation}
S_1 (\hat \rho ) + S_1 (p) \leq  S_1 (\hat \rho' ) = \sum_k p_k S_1 (\hat \rho_k ) + S_1 (p), 
\end{equation} 
where, to obtain the right equality, we used that $\hat \rho_k \otimes | k \rangle \langle k |$'s are mutually orthogonal and $S_1 (\hat \rho_k \otimes | k \rangle \langle k | ) =  S_1 (\hat \rho_k )$  holds.

Related to this, an upper bound of the von Neumann entropy is also known:
\begin{equation}
S_1 (\hat \rho ) \leq \sum_k p_k S_1 (\hat \rho_k ) + S_1 (p),
\label{concave_upper_bound}
\end{equation}
where the equality   holds if and only if the supports of $\hat \rho_k$'s are mutually orthogonal.
We omit the proof of this (see, e.g., Theorem 11.10 of \cite{Nielsen}).

\

The relation between the quantum KL divergence and the second law of thermodynamics is completely parallel to the classical case discussed in Section~\ref{sec:classical_second_law}.
We here briefly discuss the second law in the quantum case, while a more detailed argument will be provided in Chapter~\ref{chap:quantum_thermodynamics}.

Let $\hat H$ be the Hamiltonian of the system.
The corresponding Gibbs state is defined as $\hat \rho^{\rm G} := e^{-\beta \hat H}/ Z$, where $Z := {\rm tr}[e^{-\beta \hat H}]$ is the partition function.
The equilibrium free energy is  given by $F:=-\beta^{-1} \ln Z$.
As in the classical case~(\ref{classical_GMP}), a CPTP map $\mathcal E$ is called a \textit{Gibbs-preserving map}, if it satisfies 
\begin{equation}
\mathcal E (\hat \rho^{\rm G} ) = \hat \rho^{\rm G}.
\label{quantum_GMP0}
\end{equation}
From the monotonicity~(\ref{monotone}) of the quantum KL divergence, we have for any Gibbs-preserving map $\mathcal E$
\begin{equation}
S(\hat \rho \| \hat \rho^{\rm G}) \geq S( \mathcal{E}(\hat \rho )\| \hat \rho^{\rm G} ).
\end{equation}
In parallel to inequality~(\ref{classlcal_Gibbs_second1}) of  the classical case, we can rewrite the above inequality as
\begin{equation}
\Delta S_1 \geq \beta Q,
\label{quantum_Gibbs_second1}
\end{equation}
where $\Delta S_1 := S_1 (\mathcal E (\hat \rho)) - S_1 (\hat \rho)$ is the change in the von Neumann entropy and $Q := {\rm tr} [(\mathcal{E}(\hat \rho) - \hat \rho ) \hat H]$ is the heat absorption.
This is a Clausius-type representation of the second law or a generalized Landauer principle in the quantum case~\cite{Sagawa2012}.
If the Gibbs state is given by $\hat \rho^{\rm G} = \hat I/d$ (in the case of $\beta = 0$ or $\hat H \propto \hat I$), the Gibbs-preserving map is unital and  the second law~(\ref{quantum_Gibbs_second1})  reduces to  inequality~(\ref{unital_entropy}).

In the quench-and-relax processes  discussed in Section~\ref{sec:classical_second_law} where the Hamiltonian is driven by an external agent, we can also derive the work bound in the same form as the classical counterpart~(\ref{classical_work_bound_F1}). 
In Section~\ref{sec:fluctuating_work}, we will derive the second law of this form,  where we  adopt the setup where the total Hamiltonian is time-independent by including the ``clock'' degrees of freedom.

\section{Quantum R\'enyi entropy and divergence}
\label{sec:quantum_Renyi}

We next consider quantum analogues of the R\'enyi entropies and divergences.
The R\'enyi $\alpha$-entropy can be straightforwardly defined  in the same manner as the classical case:
\begin{equation}
S_\alpha (\hat \rho) := \frac{1}{1-\alpha} \ln \left( {\rm tr}[\hat \rho^\alpha ] \right),
\end{equation}
where $\hat \rho$ is a density operator of dimension $d$.
In particular, $S_1 (\hat \rho)$ is the von Neumann entropy, and 
\begin{equation}
S_0 (\hat \rho ) :=  \ln ({\rm rank}[\hat \rho]),
\end{equation}
\begin{equation}
S_\infty (\hat \rho) := - \ln \| \hat \rho \|_\infty
\end{equation}
are the max and the min R\'enyi entropies.  

The quantum R\'enyi entropy equals the classical one of the diagonal distribution: $S_\alpha (\hat \rho) = S_\alpha ( p)$ with $p$ being the diagonal distribution of $\hat \rho$.
We thus have 
\begin{equation}
0 \leq S_\alpha  (\hat \rho ) \leq \ln d
\end{equation}
and
\begin{equation}
S_\alpha ( \hat \rho ) \geq S_{\alpha'} (\hat \rho) \ \ \rm{for} \ \ \alpha \leq \alpha'.
\end{equation}
It is also obvious that $S_\alpha (\hat U \hat \rho \hat U^\dagger) = S_\alpha (\hat \rho )$ holds for any unitary $\hat U$.

On the other hand, the quantum versions of the quantum R\'enyi divergence are not unique due to the non-commutability of density operators $\hat \rho, \hat \sigma$~\cite{Hiai2011f,Tomamichel,Wilde2014,Lennert2013,Frank2013,Beigi2013} (see also Appendix~\ref{apx:general_monotonicity}).  Here, we mainly focus  on the following two limiting cases with $\alpha = 0,\infty$~\cite{Datta2009}:
\begin{equation}
S_0 (\hat \rho \| \hat \sigma ) := -\ln \left(  {\rm tr} [ \hat P_{\hat \rho} \hat \sigma ] \right),
\end{equation} 
where $\hat P_{\hat \rho}$ is the projection onto the support of  $\hat \rho$, and 
 \begin{equation}
 S_\infty (\hat \rho \| \hat \sigma ) := \ln \left( \min \{ \lambda : \hat \rho \leq \lambda \hat \sigma \} \right) = \ln \| \hat \sigma^{-1/2} \hat \rho \hat \sigma^{-1/2} \|_\infty.
 \label{def_Renyi_infty}
\end{equation}
We again note that we always assume that the support of $\hat \rho$ is included in that of $\hat \sigma$ for these divergence quantities.
We remark on some properties:
$S_{\alpha } (\hat \rho ) = \ln d - S_{\alpha} (\hat \rho \| \hat I / d)$,
$S_\alpha(\hat U \hat \rho \hat U^\dagger \| \hat U \hat \sigma \hat U^\dagger ) = S_\alpha (\hat \rho \| \hat \sigma )$  for any unitary $\hat U$, and
\begin{equation}
S_\alpha (\hat \rho \otimes \hat \rho' \| \hat \sigma \otimes \hat \sigma' ) = S_\alpha (\hat \rho \| \hat \sigma ) + S_\alpha (\hat \rho' \| \hat \sigma' ).
\label{quantum_Renyi_additive}
\end{equation}
Also, these divergences are non-negative and satisfy the following: 

\begin{theorem}[Lemma 5 and Lemma 6 of \cite{Datta2009}]
\begin{equation}
0 \leq S_0 (\hat \rho \| \hat \sigma )  \leq S_\infty (\hat \rho \| \hat \sigma ).
\end{equation}
Here,  $S_0 (\hat \rho \| \hat \sigma ) = 0$ holds if and only if the supports of $\hat \rho$ and $\hat \sigma$ are the same, and $S_\infty (\hat \rho \| \hat \sigma ) = 0$ holds if and only if $\hat \rho = \hat \sigma$.
\end{theorem}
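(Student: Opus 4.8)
The plan is to treat the two inequalities separately, working straight from the definitions, and to read off the equality conditions as we go. Throughout I use the standing assumption ${\rm supp}[\hat\rho] \subseteq {\rm supp}[\hat\sigma]$, under which $\hat\sigma^{-1/2}$ (the generalized inverse on ${\rm supp}[\hat\sigma]$) is well defined and $\hat P_{\hat\rho}\hat\rho\hat P_{\hat\rho} = \hat\rho$.

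For the non-negativity of $S_0$: since $0 \leq \hat P_{\hat\rho} \leq \hat I$ and $\hat\sigma \geq 0$, we get $0 \leq {\rm tr}[\hat P_{\hat\rho}\hat\sigma] \leq {\rm tr}[\hat\sigma] = 1$, hence $S_0(\hat\rho \| \hat\sigma) = -\ln {\rm tr}[\hat P_{\hat\rho}\hat\sigma] \geq 0$. Equality means ${\rm tr}[(\hat I - \hat P_{\hat\rho})\hat\sigma] = 0$; because $(\hat I - \hat P_{\hat\rho})\hat\sigma(\hat I - \hat P_{\hat\rho}) \geq 0$ has the same trace and a positive operator of zero trace vanishes, this forces $(\hat I - \hat P_{\hat\rho})\hat\sigma = 0$, i.e.\ ${\rm supp}[\hat\sigma] \subseteq {\rm supp}[\hat\rho]$; together with the standing assumption this says exactly that the supports coincide.

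For $S_0 \leq S_\infty$: set $\lambda := \min\{\mu : \hat\rho \leq \mu\hat\sigma\} = e^{S_\infty(\hat\rho\|\hat\sigma)}$, so $\lambda\hat\sigma - \hat\rho \geq 0$. Compressing by $\hat P_{\hat\rho}$ preserves positivity, so ${\rm tr}[\hat P_{\hat\rho}(\lambda\hat\sigma - \hat\rho)\hat P_{\hat\rho}] \geq 0$; using $\hat P_{\hat\rho}\hat\rho\hat P_{\hat\rho} = \hat\rho$ and cyclicity of the trace this becomes $\lambda\,{\rm tr}[\hat P_{\hat\rho}\hat\sigma] \geq {\rm tr}[\hat\rho] = 1$, i.e.\ ${\rm tr}[\hat P_{\hat\rho}\hat\sigma] \geq 1/\lambda$. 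Applying $-\ln$ gives $S_0(\hat\rho\|\hat\sigma) \leq \ln\lambda = S_\infty(\hat\rho\|\hat\sigma)$.

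For the equality case $S_\infty = 0$: if $\hat\rho = \hat\sigma$ then $\hat\rho \leq 1\cdot\hat\sigma$ and no $\mu < 1$ can satisfy $\hat\rho \leq \mu\hat\rho$ (else $(1-\mu)\hat\rho \leq 0$ with $\hat\rho \neq 0$), so $\lambda = 1$ and $S_\infty = 0$. Conversely $S_\infty = 0$ gives $\lambda = 1$, hence $\hat\sigma - \hat\rho \geq 0$; since ${\rm tr}[\hat\sigma - \hat\rho] = 0$, again a positive operator with vanishing trace is zero, so $\hat\rho = \hat\sigma$. All the computations are elementary; the only points requiring attention are the legitimacy of $\hat\sigma^{-1/2}$ and of $\hat P_{\hat\rho}\hat\rho\hat P_{\hat\rho} = \hat\rho$ under the support convention, and the repeated use of ``positive operator with zero trace is zero.'' I do not expect a genuine obstacle here; the subtlest point is phrasing the $S_0 = 0$ condition so that it meshes correctly with the standing convention ${\rm supp}[\hat\rho] \subseteq {\rm supp}[\hat\sigma]$.
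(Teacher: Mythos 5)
Your proposal is correct and follows essentially the same route as the paper: non-negativity of $S_0$ directly from the definition, the inequality $S_0 \leq S_\infty$ by tracing $\lambda\hat\sigma - \hat\rho \geq 0$ against $\hat P_{\hat\rho}$, and both equality conditions via the fact that a positive operator with zero trace vanishes (the paper phrases the $S_\infty = 0$ case as a normalization contradiction, which is the same argument). Your extra details on the $S_0 = 0$ case and the trivial direction of $S_\infty = 0$ are fine and only make explicit what the paper leaves to the reader.
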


\begin{proof}
$0 \leq S_0 (\hat \rho \| \hat \sigma )$ is obvious from the definition.
 $S_0 (\hat \rho \| \hat \sigma ) \leq S_\infty (\hat \rho \| \hat \sigma )$ can be shown as follows.
Let $\lambda := e^{S_\infty (\hat \rho \| \hat \sigma )}$.  
Then by definition, $0 \leq \lambda \hat \sigma - \hat \rho$ holds.
Thus $0 \leq {\rm tr}[( \lambda \hat \sigma - \hat \rho ) \hat P_{\hat \rho}] = \lambda e^{-S_0 (\hat \rho \| \hat \sigma )} -1$, which implies $S_0 (\hat \rho \| \hat \sigma ) \leq S_\infty (\hat \rho \| \hat \sigma )$.

The equality $S_0 (\hat \rho \| \hat \sigma ) = 0$ holds if and only if the support of $\hat \rho$ includes that of $\hat \sigma$, but in our setup, the latter always includes the former.

Suppose that $S_\infty (\hat \rho \| \hat \sigma ) = 0$ holds.  By definition,  $\hat \Delta := \hat \sigma - \hat \rho$ is positive.
If $\hat \Delta \neq 0$, there exists a positive eigenvalue and any other eigenvalues are non-negative, and thus ${\rm tr}[\hat \Delta] > 0$.
But  this contradicts  the normalization of $\hat \rho$ and $\hat \sigma$.
$\Box$
\end{proof}

The following relation is also known.

\begin{proposition}
\begin{equation}
S_0 (\hat \rho \| \hat \sigma ) \leq  S_1 (\hat \rho \| \hat \sigma ) \leq S_\infty (\hat \rho \| \hat \sigma ).
\label{quantum_0_1_infty}
\end{equation}
\end{proposition}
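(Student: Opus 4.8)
The plan is to establish the two inequalities separately: $S_1 \le S_\infty$ from operator monotonicity of the logarithm, and $S_0 \le S_1$ from a pinching channel combined with the monotonicity (Theorem~\ref{thm:monotone}) and non-negativity~(\ref{quantum_KL_positive}) of the quantum KL divergence.

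\textbf{Upper bound $S_1 \le S_\infty$.} Put $\lambda := e^{S_\infty(\hat\rho\|\hat\sigma)} = \min\{\mu : \hat\rho \le \mu\hat\sigma\}$, so that $\hat\rho \le \lambda\hat\sigma$. Since ${\rm tr}[\hat\rho\ln(\lambda\hat\sigma)] = \ln\lambda + {\rm tr}[\hat\rho\ln\hat\sigma]$ (using ${\rm tr}[\hat\rho]=1$), one has
\[
S_\infty(\hat\rho\|\hat\sigma) - S_1(\hat\rho\|\hat\sigma) = {\rm tr}\!\left[\hat\rho\bigl(\ln(\lambda\hat\sigma) - \ln\hat\rho\bigr)\right] = {\rm tr}\!\left[\hat\rho^{1/2}\bigl(\ln(\lambda\hat\sigma) - \ln\hat\rho\bigr)\hat\rho^{1/2}\right],
\]
which is nonnegative once $\ln\hat\rho \le \ln(\lambda\hat\sigma)$ (as operators, after compressing to ${\rm supp}[\hat\rho]$). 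That operator inequality follows from $\hat\rho \le \lambda\hat\sigma$ by operator monotonicity of $\ln$: restrict to ${\rm supp}[\hat\sigma] \supseteq {\rm supp}[\hat\rho]$, apply the usual $\delta\hat I$-regularization (replace $\hat\rho,\lambda\hat\sigma$ by $\hat\rho+\delta\hat I,\lambda\hat\sigma+\delta\hat I$ to make both positive definite, apply operator monotonicity, pair with $\hat\rho$, and let $\delta\downarrow 0$; the limit is clean because $0\ln 0=0$).

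\textbf{Lower bound $S_0 \le S_1$.} Let $\hat P:=\hat P_{\hat\rho}$, $\hat Q:=\hat I-\hat P$, and consider the CPTP unital pinching $\mathcal E(\hat X):=\hat P\hat X\hat P+\hat Q\hat X\hat Q$. Then $\mathcal E(\hat\rho)=\hat\rho$, while $\mathcal E(\hat\sigma)=\hat P\hat\sigma\hat P\oplus\hat Q\hat\sigma\hat Q$ is block diagonal, so the monotonicity gives $S_1(\hat\rho\|\hat\sigma)\ge S_1(\hat\rho\|\mathcal E(\hat\sigma))$. Since $\hat\rho$ is supported in the range of $\hat P$, $S_1(\hat\rho\|\mathcal E(\hat\sigma))={\rm tr}[\hat\rho\ln\hat\rho]-{\rm tr}[\hat\rho\ln(\hat P\hat\sigma\hat P)]$, where $\hat P\hat\sigma\hat P$ is positive definite on ${\rm supp}[\hat\rho]$ because every unit vector there lies in ${\rm supp}[\hat\sigma]$ and hence has positive $\hat\sigma$-expectation. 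Writing $\hat\sigma_P:=\hat P\hat\sigma\hat P/{\rm tr}[\hat P\hat\sigma]$ for its normalization on ${\rm supp}[\hat\rho]$, and using ${\rm tr}[\hat P\hat\sigma\hat P]={\rm tr}[\hat P\hat\sigma]$, one finds $S_1(\hat\rho\|\mathcal E(\hat\sigma))=S_1(\hat\rho\|\hat\sigma_P)-\ln{\rm tr}[\hat P\hat\sigma]\ge -\ln{\rm tr}[\hat P\hat\sigma]=S_0(\hat\rho\|\hat\sigma)$, where the inequality is the non-negativity~(\ref{quantum_KL_positive}) applied to the normalized states $\hat\rho,\hat\sigma_P$.

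\textbf{Main obstacle.} The only delicate point is the consistent treatment of rank deficiency: $\ln\hat\rho$, $\ln(\lambda\hat\sigma)$ and $\ln(\hat P\hat\sigma\hat P)$ are defined only on the respective supports, and the operator-monotonicity step needs the regularization-and-limit argument above (compressing the limiting inequality to ${\rm supp}[\hat\rho]$). Once this is set up, the remainder is trace bookkeeping together with results already established.
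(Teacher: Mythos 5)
Your proof is correct. The upper bound $S_1\leq S_\infty$ follows the same route as the paper (pair $\hat\rho\leq e^{S_\infty}\hat\sigma$ with operator monotonicity of $\ln$ and trace against $\hat\rho$); your added $\delta\hat I$-regularization just makes explicit the support bookkeeping the paper leaves implicit. The lower bound $S_0\leq S_1$, however, is a genuinely different argument from the paper's. The paper (following the same reference it cites) constructs an auxiliary state $\hat\tau$ diagonal in the eigenbasis of $\hat\rho$ with weights proportional to $e^{\langle\varphi_i|\ln\hat\sigma|\varphi_i\rangle}$, bounds its normalization $Z\leq{\rm tr}[\hat P_{\hat\rho}\hat\sigma]$ by the scalar Jensen inequality, and then only invokes non-negativity of the KL divergence, $S_1(\hat\rho\|\hat\tau)\geq 0$; so it needs nothing beyond Corollary~\ref{cor:quantum_KL_positive}. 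You instead pinch with $\{\hat P_{\hat\rho},\hat I-\hat P_{\hat\rho}\}$, invoke the data-processing inequality (Theorem~\ref{thm:monotone}) to pass to the block-diagonal $\mathcal E(\hat\sigma)$, and finish with non-negativity applied to the normalized compression $\hat\sigma_P$; your support checks (positive definiteness of $\hat P\hat\sigma\hat P$ on ${\rm supp}[\hat\rho]$) are the right ones, and there is no circularity since the monotonicity is proved independently in Appendix~\ref{apx:general_monotonicity}. The trade-off: your pinching argument is conceptually cleaner and is the standard trick that generalizes (it identifies $S_0$ as the KL divergence against the renormalized compressed state up to the $\ln{\rm tr}[\hat P_{\hat\rho}\hat\sigma]$ shift), but it leans on the full monotonicity theorem, a much heavier tool than the paper's elementary Jensen-plus-non-negativity route.
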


\begin{proof}
The following proof is based on \cite{Mitsuhashi2020p}.

We first prove the right inequality.  By definition, $e^{S_\infty (\hat \rho \| \hat \sigma )}\hat \sigma \geq \hat \rho$ holds, and thus $S_\infty (\hat \rho \| \hat \sigma ) - S_1 ( \hat \rho \| \hat \sigma ) = {\rm tr} [  \hat \rho ( \ln (e^{S_\infty (\hat \rho \| \hat \sigma )}\hat \sigma ) - \ln \hat \rho  )] \geq 0$.  Here, to obtain the last inequality, we used the fact that $\ln x$ is operator monotone (see Appendix~\ref{sec:operator_convex}).

To prove the left inequality,
let $\hat \rho = \sum_i p_i | \varphi_i \rangle \langle \varphi_i |$ and $\hat \sigma = \sum_j q_j | \psi_j \rangle \langle \psi_j |$.
Define $\hat \tau := \sum_{i : p_i > 0} e^{\langle \varphi_i | \ln \hat \sigma | \varphi_i \rangle} | \varphi_i \rangle \langle \varphi_i | / Z$, where $Z := \sum_{i : p_i > 0} e^{\langle \varphi_i | \ln \hat \sigma | \varphi_i \rangle}$.  Note that $Z = \sum_{i : p_i > 0} e^{\sum_j |  \langle \varphi_i | \psi_j \rangle |^2 \ln q_j} \leq \sum_{i : p_i > 0} \sum_j | \langle \varphi_i | \psi_j \rangle |^2 q_j = {\rm tr}[\hat P_{\rm \hat \rho} \hat \sigma]$.
We then have $0 \leq S_1 (\hat \rho \| \hat \tau) = S_1 (\hat \rho \| \hat \sigma ) + \ln Z \leq S_1 (\hat \rho \| \hat \sigma )  + \ln {\rm tr}[\hat P_{\rm \hat \rho} \hat \sigma] =S_1 (\hat \rho \| \hat \sigma ) - S_0 (\hat \rho \| \hat \sigma )$.
$\Box$
\end{proof}

An alternative way to understand  (\ref{quantum_0_1_infty}) is to consider two kinds of R\'enyi $\alpha$-divergences, both of which are defined for $0 < \alpha < 1$ and $1 < \alpha < \infty$ (see Appendix~\ref{apx:general_monotonicity} for details).
One is the simple R\'enyi  divergence  $\tilde S_\alpha (\hat \rho \| \hat \sigma)$, defined in Eq.~(\ref{simple_Renyi_divergence})~\cite{Hiai2011f,Tomamichel}.
The other is the sandwiched R\'enyi divergence $S_\alpha (\hat \rho \| \hat \sigma)$, defined in Eq.~(\ref{def_sandwiched})~\cite{Wilde2014,Lennert2013,Frank2013,Beigi2013}.
Both are non-decreasing functions in $\alpha$ as in the classical case (Proposition~\ref{prop:classical_alpha_monotone}), and satisfy $\lim_{\alpha \to 1} \tilde S_\alpha (\hat \rho \| \hat \sigma) = \lim_{\alpha \to 1} S_\alpha (\hat \rho \| \hat \sigma) = S_1   (\hat \rho \| \hat \sigma)$.
It is also known that  $\lim_{\alpha \to +0} \tilde S_\alpha (\hat \rho \| \hat \sigma)  = S_0   (\hat \rho \| \hat \sigma)$ and $\lim_{\alpha \to \infty}  S_\alpha (\hat \rho \| \hat \sigma)  = S_\infty   (\hat \rho \| \hat \sigma)$.  Thus we have inequalities~(\ref{quantum_0_1_infty}).

\

 It is also known that these divergences satisfy the monotonicity (and thus are monotones) under CPTP maps, as in the case for $\alpha = 1$:

\begin{theorem}[Monotonicity of the R\'enyi divergence, Lemma 7 of \cite{Datta2009}]
For  $\alpha = 0, \infty$, if $\mathcal E$ is CPTP, 
\begin{equation}
S_\alpha (\hat \rho \| \hat \sigma) \geq S_\alpha (\mathcal E (\hat \rho ) \|  \mathcal E ( \hat \sigma ) ).
\end{equation}
\label{thm:quantum_Renyi_monotonicity}
\end{theorem}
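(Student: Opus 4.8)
The plan is to treat the two values $\alpha = 0$ and $\alpha = \infty$ separately, relying only on elementary operator inequalities together with the facts recalled in Section~\ref{sec:quantum_state}: a CPTP map $\mathcal E$ is positive and trace-preserving, and therefore its Hilbert--Schmidt adjoint $\mathcal E^\dagger$ is positive and unital.

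The case $\alpha = \infty$ is immediate and I would dispatch it first. Put $\lambda := e^{S_\infty (\hat \rho \| \hat \sigma )} = \min \{ \mu : \hat \rho \le \mu \hat \sigma \}$, so that $\lambda \hat \sigma - \hat \rho \ge 0$. Applying the positive map $\mathcal E$ gives $\lambda \mathcal E(\hat \sigma ) - \mathcal E(\hat \rho ) \ge 0$, i.e.\ $\mathcal E(\hat \rho ) \le \lambda \mathcal E(\hat \sigma )$, hence $\min \{ \mu : \mathcal E(\hat \rho ) \le \mu \mathcal E(\hat \sigma ) \} \le \lambda$; taking logarithms yields $S_\infty (\mathcal E(\hat \rho ) \| \mathcal E(\hat \sigma )) \le S_\infty (\hat \rho \| \hat \sigma )$.

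For $\alpha = 0$, since $S_0(\hat \rho \| \hat \sigma ) = - \ln {\rm tr}[\hat P_{\hat \rho} \hat \sigma]$, it suffices to prove the reverse inequality ${\rm tr}[\hat P_{\hat \rho} \hat \sigma] \le {\rm tr}[\hat P_{\mathcal E(\hat \rho)} \mathcal E(\hat \sigma)]$. I would pass to the adjoint and write ${\rm tr}[\hat P_{\mathcal E(\hat \rho)} \mathcal E(\hat \sigma)] = {\rm tr}[\mathcal E^\dagger(\hat P_{\mathcal E(\hat \rho)}) \hat \sigma] = {\rm tr}[(\hat I - \hat A)\hat \sigma]$, where $\hat A := \mathcal E^\dagger(\hat I - \hat P_{\mathcal E(\hat \rho)})$ and where I used $\mathcal E^\dagger(\hat I) = \hat I$ (unitality of $\mathcal E^\dagger$, i.e.\ trace-preservation of $\mathcal E$). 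The key claim is then $\hat A \le \hat I - \hat P_{\hat \rho}$, since this, together with $\hat \sigma \ge 0$, gives ${\rm tr}[(\hat I - \hat A)\hat \sigma] \ge {\rm tr}[\hat P_{\hat \rho}\hat \sigma]$, and taking $-\ln$ finishes the proof. To establish the claim I would check two things: first $0 \le \hat A \le \hat I$, which follows by applying the positive unital map $\mathcal E^\dagger$ to $0 \le \hat I - \hat P_{\mathcal E(\hat \rho)} \le \hat I$; second that $\hat A$ annihilates ${\rm supp}[\hat \rho]$, which follows because ${\rm tr}[\hat A \hat \rho] = {\rm tr}[(\hat I - \hat P_{\mathcal E(\hat \rho)})\mathcal E(\hat \rho)] = 0$ (as $\hat P_{\mathcal E(\hat \rho)}$ projects onto the support of $\mathcal E(\hat \rho)$), and a positive operator $\hat A$ with ${\rm tr}[\hat A \hat \rho]=0$ satisfies $\hat A \hat P_{\hat \rho} = 0$. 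Hence the range of $\hat A$ lies in $({\rm supp}[\hat \rho])^\perp$, which combined with $\hat A \le \hat I$ gives $\hat A \le \hat I - \hat P_{\hat \rho}$.

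The only genuinely nontrivial case is $\alpha = 0$, and within it the delicate point is the inequality $\hat A \le \hat I - \hat P_{\hat \rho}$: it uses trace-preservation (via $\mathcal E^\dagger$ unital, both to write $\mathcal E^\dagger(\hat P_{\mathcal E(\hat \rho)}) = \hat I - \hat A$ and to bound $\hat A \le \hat I$) as well as positivity (to get $\hat A \ge 0$ and hence $\hat A\hat P_{\hat\rho}=0$ from the vanishing trace). As a remark I would note an alternative route via the Naimark extension~(\ref{Naimark}): monotonicity under adjoining an uncorrelated ancilla and under unitaries is trivial because $S_0$ is additive (Eq.~(\ref{quantum_Renyi_additive})) and unitarily invariant, so the statement reduces to the partial-trace case, which is handled by essentially the same adjoint-map computation with $\mathcal E = {\rm tr}_{\rm B}$.
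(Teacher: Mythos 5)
Your proposal is correct, and your $\alpha=\infty$ argument coincides with the paper's (positivity alone carries $\lambda\hat\sigma-\hat\rho\geq 0$ through $\mathcal E$). For $\alpha=0$, however, you take a genuinely different route. The paper invokes the dilation~(\ref{Naimark}) to write $\mathcal E$ as a unitary with an ancilla followed by a partial trace, uses additivity~(\ref{quantum_Renyi_additive}) and unitary invariance to reduce to the partial-trace case, and then argues via the support inclusion ${\rm supp}[{\rm tr}_{\rm A}[\hat\rho']\otimes\hat I]\supset{\rm supp}[\hat\rho']$; you instead work directly with the Hilbert--Schmidt adjoint $\mathcal E^\dagger$, which is positive and unital, and show in effect that $\mathcal E^\dagger(\hat P_{\mathcal E(\hat\rho)})\geq\hat P_{\hat\rho}$ by proving $\hat A:=\mathcal E^\dagger(\hat I-\hat P_{\mathcal E(\hat\rho)})$ satisfies $0\leq\hat A\leq\hat I$ and $\hat A\hat P_{\hat\rho}=0$, whence $\hat A=(\hat I-\hat P_{\hat\rho})\hat A(\hat I-\hat P_{\hat\rho})\leq\hat I-\hat P_{\hat\rho}$. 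Each step checks out: ${\rm tr}[\hat A\hat\rho]=0$ with $\hat A\geq 0$ indeed forces $\hat A$ to annihilate ${\rm supp}[\hat\rho]$, and conjugation by $\hat I-\hat P_{\hat\rho}$ preserves the order. What your route buys is generality and economy: it never uses complete positivity, so it establishes monotonicity of $S_0$ under all positive trace-preserving maps, matching the remark the paper makes only for $S_\infty$; it also avoids the dilation machinery altogether. What the paper's route buys is brevity given tools already in place (the Naimark extension, additivity, unitary invariance) and consistency with the reduce-to-partial-trace strategy used elsewhere in the book, e.g.\ for the strong subadditivity. Your closing remark correctly identifies the paper's dilation argument as the alternative path.
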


\begin{proof}
We first prove the case of $\alpha = \infty$~\cite{Datta2009}.
From the positivity of $\mathcal E$, $\lambda \hat \sigma - \hat \rho \geq 0$ implies $\lambda \mathcal E (\hat \sigma) - \mathcal E (\hat \rho) \geq 0$, and therefore $\{ \lambda : \lambda \hat \sigma - \hat \rho \geq 0 \} \subset \{ \lambda :  \lambda \mathcal E (\hat \sigma) - \mathcal E (\hat \rho) \geq 0 \}$.
We thus have $\min \{ \lambda : \lambda \hat \sigma - \hat \rho \geq 0 \} \geq \min \{ \lambda :  \lambda \mathcal E (\hat \sigma) - \mathcal E (\hat \rho) \geq 0 \}$, which implies $S_\infty (\hat \rho \| \hat \sigma ) \geq S_\infty (\mathcal E (\hat \rho) \| \mathcal E (\hat \sigma))$.
As seen from the proof, the positivity of $\mathcal E$ is enough to show the monotonicity.

We next prove the case of  $\alpha = 0$~\cite{Mitsuhashi2020p}.
Because any CPTP $\mathcal E$ can be written as $\mathcal E ( \hat \rho ) = {\rm tr}_{\rm A} [\hat U \hat \rho \otimes \hat \gamma \hat U^\dagger]$ with a state $\hat \gamma$ of an auxiliary system A and a unitary $\hat U$.
From Eq.~(\ref{quantum_Renyi_additive}) and the unitary invariance, it is enough to prove the monotonicity under the partial trace.  
Let $\hat \rho' := \hat U \hat \rho \otimes \hat \gamma \hat U^\dagger$ and $\hat \sigma' := \hat U \hat \sigma \otimes \hat \gamma \hat U^\dagger$.
We have $ {\rm tr}[\hat P_{{\rm tr}_{\rm A}[\hat \rho']}{\rm tr}_{\rm A}[\hat \sigma']] =  {\rm tr}[\hat P_{{\rm tr}_{\rm A}[\hat \rho'] \otimes \hat I}\hat \sigma']  \geq {\rm tr}[\hat P_{\hat \rho'} \hat \sigma']$,
where the right inequality follows from ${\rm supp}[{\rm tr}_{\rm A}[\hat \rho'] \otimes \hat I] \supset  {\rm supp}[\hat \rho']$.
In fact, any $| \varphi \rangle$ with $\langle \varphi |{\rm tr}_{\rm A}[\hat \rho'] | \varphi \rangle = 0$ satisfies $\langle \varphi | \hat \rho' | \varphi \rangle = 0$.
$\Box$
\end{proof}

\begin{theorem}[Joint convexity of R\'enyi 0-divergence]
We use the same notation as Theorem~\ref{thm:joint_convexity_KL} with $p_k > 0$.
Then,
\begin{equation}
S_0  (\hat \rho \| \hat \sigma ) \leq \sum_k p_k S_0 (\hat \rho_k \| \hat \sigma_k) .
\label{joint_convexity_0}
\end{equation}
The equality holds if the same equality condition as in Theorem~\ref{thm:joint_convexity_KL} is satisfied and $S_0 (\hat \rho_k \| \hat \sigma_k)$'s are the same for all $k$.
\label{thm:joint_convexity_0}
\end{theorem}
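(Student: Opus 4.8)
The plan is to follow the architecture of the proof of Theorem~\ref{thm:joint_convexity_KL}, but with one extra scalar ingredient: a Jensen inequality for $-\ln$. This is needed because, unlike the relative entropy, we have $S_0(\hat\rho\|\hat\sigma) = -\ln {\rm tr}[\hat P_{\hat\rho}\hat\sigma]$, i.e.\ $-\ln$ of a quantity that is \emph{linear} in $\hat\sigma$, so the naive ``block-diagonal'' identity that worked for the KL divergence only gives $-\ln$ of a convex combination.

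First I would record the elementary operator fact underlying everything. Since $p_k>0$ and $\hat\rho_k\geq 0$, the kernel of $\hat\rho=\sum_k p_k\hat\rho_k$ equals $\bigcap_k \ker\hat\rho_k$, so ${\rm supp}[\hat\rho_k]\subseteq{\rm supp}[\hat\rho]$ and hence $\hat P_{\hat\rho}\geq\hat P_{\hat\rho_k}$ for every $k$. Multiplying by $\hat\sigma_k\geq 0$ and taking the trace gives ${\rm tr}[\hat P_{\hat\rho}\hat\sigma_k]\geq{\rm tr}[\hat P_{\hat\rho_k}\hat\sigma_k]$; summing against $p_k$ and using $\hat\sigma=\sum_k p_k\hat\sigma_k$ yields
\[
{\rm tr}[\hat P_{\hat\rho}\hat\sigma]\ \geq\ \sum_k p_k\,{\rm tr}[\hat P_{\hat\rho_k}\hat\sigma_k]\ =\ \sum_k p_k\, e^{-S_0(\hat\rho_k\|\hat\sigma_k)}.
\]
Then I would take $-\ln$ of both sides (it is decreasing, so the inequality flips) and apply the convexity of $t\mapsto -\ln t$ with weights $p_k$:
\[
S_0(\hat\rho\|\hat\sigma)\ =\ -\ln{\rm tr}[\hat P_{\hat\rho}\hat\sigma]\ \leq\ -\ln\!\Big(\sum_k p_k e^{-S_0(\hat\rho_k\|\hat\sigma_k)}\Big)\ \leq\ \sum_k p_k S_0(\hat\rho_k\|\hat\sigma_k),
\]
which is the claim. (An alternative route that parallels Theorem~\ref{thm:joint_convexity_KL} more literally is to set $\hat\rho':=\sum_k p_k\hat\rho_k\otimes|k\rangle\langle k|$ and $\hat\sigma':=\sum_k p_k\hat\sigma_k\otimes|k\rangle\langle k|$, note $\hat P_{\hat\rho'}=\sum_k\hat P_{\hat\rho_k}\otimes|k\rangle\langle k|$ so that $S_0(\hat\rho'\|\hat\sigma')=-\ln\sum_k p_k e^{-S_0(\hat\rho_k\|\hat\sigma_k)}$, invoke the monotonicity of $S_0$ under the partial trace ${\rm tr}_{\rm A}$ (Theorem~\ref{thm:quantum_Renyi_monotonicity}) to get $S_0(\hat\rho\|\hat\sigma)\leq S_0(\hat\rho'\|\hat\sigma')$, and close with the same Jensen step.)

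For the equality case I would trace back the two inequalities separately. By strict convexity of $-\ln$ together with $p_k>0$, the Jensen step is an equality exactly when all the numbers $S_0(\hat\rho_k\|\hat\sigma_k)$ coincide. The first step is an equality iff ${\rm tr}[(\hat P_{\hat\rho}-\hat P_{\hat\rho_k})\hat\sigma_k]=0$ for all $k$; this holds in particular when the subspaces $\mathcal P_k$ (spanned by the supports of $\hat\rho_k$ and $\hat\sigma_k$) are mutually orthogonal, since then $\hat P_{\hat\rho}=\sum_k\hat P_{\hat\rho_k}$ with mutually orthogonal ranges and $\hat P_{\hat\rho_j}\hat\sigma_k=0$ for $j\neq k$. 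Hence under the two stated conditions both inequalities are saturated.

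I do not expect a serious obstacle. The one point to be careful about is precisely the reason the equality condition differs from that of Theorem~\ref{thm:joint_convexity_KL}: the block-diagonal state $\hat\rho'$ satisfies $S_0(\hat\rho'\|\hat\sigma')=-\ln\sum_k p_k e^{-S_0(\hat\rho_k\|\hat\sigma_k)}$ rather than $\sum_k p_k S_0(\hat\rho_k\|\hat\sigma_k)$, so the extra Jensen step — and the corresponding ``all $S_0(\hat\rho_k\|\hat\sigma_k)$ equal'' clause — is genuinely required. I would also keep the book's blanket assumption ${\rm supp}[\hat\rho_k]\subseteq{\rm supp}[\hat\sigma_k]$ in force so that each $S_0(\hat\rho_k\|\hat\sigma_k)$ is finite and the manipulations above are legitimate.
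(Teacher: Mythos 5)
Your proof is correct and follows essentially the same route as the paper's: the operator inequality $\hat P_{\hat \rho} \geq \hat P_{\hat \rho_k}$ combined with the linearity of $\hat \sigma = \sum_k p_k \hat \sigma_k$, followed by Jensen's inequality for the convex function $-\ln x$. Your treatment of the equality case (strict convexity forcing equal $S_0(\hat\rho_k\|\hat\sigma_k)$, orthogonality of the $\mathcal P_k$ saturating the first step) spells out what the paper leaves implicit, but the argument is the same.
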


\begin{proof}
From $\hat P_{\hat \rho} \geq \hat P_{\hat \rho_k}$, we have 
\begin{equation}
 -\ln \left(  {\rm tr} [ \hat P_{\hat \rho} \hat \sigma ] \right)  
=   -\ln \left(  {\rm tr} \left[ \hat P_{\hat \rho} \sum_k p_k \hat \sigma_k \right] \right)   
\leq  -\ln \left(  {\rm tr} \left[ \sum_k p_k  \hat P_{\hat \rho_k} \hat \sigma_k \right] \right).   
\end{equation}
By noting the convexity of $-\ln x$, 
\begin{equation}
 -\ln \left(  {\rm tr} \left[ \sum_k p_k  \hat P_{\hat \rho_k} \hat \sigma_k \right] \right) \leq  - \sum_k p_k  \ln \left(  {\rm tr} [ \hat P_{\hat \rho_k} \hat \sigma_k ] \right),
\end{equation}
and thus we obtain inequality~(\ref{joint_convexity_0}).
The equality  in (\ref{joint_convexity_0}) holds if the equality conditions for the above two inequalities are satisfied. 
$\Box$
\end{proof}

The above proof is essentially the same as the proof of  Theorem 11 of Ref.~\cite{Erven_Harremoes} for the classical case.
We remark that the joint convexity does not hold for $\alpha = \infty$  (see also Corollary~\ref{cor:joint_convexity_Renyi} in Appendix~\ref{apx:general_monotonicity}).

\

For $\alpha = 0,1, \infty$, 
we can define $S_\alpha (\hat \rho \| \hat \sigma)$  for unnormalized $\hat \sigma$,
which is useful to define the nonequilibrium free energy (see also Chapter~\ref{chap:quantum_thermodynamics}).
For $Z>0$, it is easy to check the scaling property
\begin{equation}
S_\alpha (\hat \rho \| \hat \sigma / Z ) = S_\alpha (\hat \rho \| \hat \sigma ) + \ln Z.
\label{unnormalized_sigma_scaling}
\end{equation}
In general, if $\hat \sigma \leq \hat \sigma'$, then
\begin{equation}
 S_\alpha (\hat \rho \| \hat \sigma' ) \leq  S_\alpha (\hat \rho \| \hat \sigma ).
 \label{unnormalized_sigma_inequality}
\end{equation}
Again it is easy to check this for $\alpha = 0, \infty$.
For $\alpha = 1$, we use the fact that $\ln x$ is operator monotone (see Appendix~\ref{apx:general_monotonicity}).
The monotonicity of the $\alpha$-divergence with  $\alpha = 0,1, \infty$ under CPTP maps applies to unnormalized states (see Theorem~\ref{quantum_f_monotonicity} that is true for unnormalized states).



\chapter{Quantum majorization}
\label{chap:quantum_majorization}

In this chapter, we consider what majorization means in quantum systems.
The ordinary majorization can be directly generalized to the quantum case, as discussed in Section~\ref{sec:quantum_majorization}.
In Section~\ref{sec:noisy_operation}, we discuss a subclass of CPTP unital maps, called noisy operations.
On the other hand, the quantum analogue of d-majorization is not very straightforward, while we still have a simple characterization of state conversion by the R\'enyi $0$- and $\infty$-divergences as discussed in Section~\ref{sec:quantum_d_majorization}.
In Section~\ref{sec:entanglement}, we briefly remark on resource theory of entanglement and discuss its relation to infinite-temperature thermodynamics.

In this chapter, Chapter~\ref{chap:approximate_asymptotic}, and Chapter~\ref{chap:quantum_thermodynamics}, we assume that the dimensions of the input and output Hilbert spaces of CPTP maps are the same, i.e., $\mathcal E : \mathcal L (\mathcal H ) \to \mathcal L ( \mathcal H )$, unless stated otherwise.

\section{Quantum majorization}
\label{sec:quantum_majorization}

We first consider the quantum version of ordinary majorization, which  can be formulated in a parallel manner to the classical case (see also Ref.~\cite{Gour}).
Let $\hat \rho$ and $\hat \rho'$ be quantum states of dimension $d$, whose spectral decompositions are given by
\begin{equation}
\hat \rho = \sum_{i=1}^d p_i | \varphi_i \rangle \langle \varphi_i |, \ \hat \rho' = \sum_{i=1}^d p_i' | \varphi_i' \rangle \langle \varphi_i' |,
\label{diagonal_majorization}
\end{equation}
where their bases are not necessarily the same.  $p:= (p_1 ,\cdots, p_d)^{\rm T}$ and $p' :=(p_1', \cdots,  p_d')^{\rm T}$ are their diagonal distributions.  Then, we define quantum majorization as follows.

\begin{definition}[Quantum majorization]
We say that a quantum state $\hat \rho$ majorizes another state $\hat \rho'$, written as  $\hat \rho' \prec \hat \rho$, if their diagonal distributions satisfy  $p' \prec p$.
\end{definition}

Based on this definition, we have the following theorem.

\begin{theorem}
Let  $\hat \rho$, $\hat \rho'$ be quantum states.  The following are equivalent.
\begin{enumerate}
\item $\hat \rho' \prec \hat \rho$.
\item There exists a CPTP unital map $\mathcal{E}$ such that $\hat \rho' = \mathcal{E} (\hat \rho)$.
\item There exists a CPTP map that is a mixture of unitaries $\mathcal{E}$ such that $\hat \rho' = \mathcal{E} (\hat \rho)$.  Here, a mixture of unitaries is given in the form 
\begin{equation}
\mathcal E(\hat \rho) = \sum_k r_k  \hat U_k \hat \rho \hat U_k^\dagger,
\label{mixture_unitaries}
\end{equation} where $\hat U_k$ is unitary and $\sum_k r_k = 1$, $r_k \geq 0$.
\end{enumerate}
\label{thm:majorization_q}
\end{theorem}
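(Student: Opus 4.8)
The plan is to establish the cycle (iii) $\Rightarrow$ (ii) $\Rightarrow$ (i) $\Rightarrow$ (iii). The first implication is immediate: a mixture of unitaries is manifestly CPTP, and it is unital because $\mathcal E(\hat I) = \sum_k r_k \hat U_k \hat I \hat U_k^\dagger = \sum_k r_k \hat I = \hat I$. The implication (ii) $\Rightarrow$ (i) is also quick given the earlier machinery: if $\hat \rho' = \mathcal E(\hat \rho)$ for a CPTP unital map, then by inequality (\ref{unital_entropy}) and more generally we need that the diagonal distributions satisfy $p' \prec p$. The cleanest route is to pass to the classical setting via a dephasing argument — letting $\mathcal D$ be the completely dephasing channel in the eigenbasis of $\hat \rho'$, the composition $\mathcal D \circ \mathcal E$ restricted to the diagonal of $\hat \rho$ in its own eigenbasis is a doubly stochastic matrix carrying $p$ to $p'$ (doubly stochastic because $\mathcal D \circ \mathcal E$ is CPTP unital, hence preserves $\hat I$, which forces the column and row sums of the associated transition matrix to equal one). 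Then Theorem~\ref{thm:majorization} (iv) $\Rightarrow$ (i) gives $p' \prec p$, i.e.\ $\hat \rho' \prec \hat \rho$.

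The substantive step is (i) $\Rightarrow$ (iii). Here I would proceed in two stages. First, the classical content: since $p' \prec p$, Theorem~\ref{thm:majorization} (i) $\Rightarrow$ (iv) gives a doubly stochastic matrix $T$ with $p' = Tp$, and then Birkhoff's theorem (Theorem~\ref{thm:Birkhoff}) writes $T = \sum_k r_k P_k$ as a convex combination of permutation matrices. Second, the quantum lift: write $\hat \rho = \sum_i p_i |\varphi_i\rangle\langle\varphi_i|$ and $\hat \rho' = \sum_i p_i' |\varphi_i'\rangle\langle\varphi_i'|$. For each permutation $P_k$ define the unitary $\hat V_k$ by $\hat V_k |\varphi_i\rangle = |\varphi_{P_k(i)}'\rangle$ (extended arbitrarily to a unitary if dimensions of supports differ — but since both live in a $d$-dimensional space this is just a relabelling of orthonormal bases). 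Then $\hat V_k \hat\rho \hat V_k^\dagger = \sum_i p_i |\varphi_{P_k(i)}'\rangle\langle\varphi_{P_k(i)}'| = \sum_j (P_k p)_j |\varphi_j'\rangle\langle\varphi_j'|$, and averaging, $\sum_k r_k \hat V_k \hat\rho \hat V_k^\dagger = \sum_j (Tp)_j |\varphi_j'\rangle\langle\varphi_j'| = \sum_j p_j' |\varphi_j'\rangle\langle\varphi_j'| = \hat\rho'$. This exhibits $\hat\rho'$ as the image of $\hat\rho$ under a mixture of unitaries, completing the cycle.

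The main obstacle is bookkeeping rather than conceptual: one must be careful that the eigenbases of $\hat\rho$ and $\hat\rho'$ may not be related in any canonical way, so the unitaries $\hat V_k$ must be chosen to map the one basis to the other composed with the permutation, and one should check that the same convex weights $r_k$ from Birkhoff's decomposition of $T$ do the job simultaneously for all components — which they do, because the map $p \mapsto \sum_k r_k P_k p$ is exactly $T$ by construction. A secondary subtlety in (ii) $\Rightarrow$ (i) is verifying that the dephasing-composed channel really induces a genuinely doubly stochastic (not merely stochastic) matrix; this hinges on $\mathcal E$ being unital, and it is worth spelling out that $\langle \varphi_j' | \mathcal E(|\varphi_i\rangle\langle\varphi_i|) | \varphi_j' \rangle$ has both row sums (over $j$, by trace preservation) and column sums (over $i$, by unitality evaluated against $|\varphi_j'\rangle\langle\varphi_j'|$) equal to one. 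No deep analysis is needed; all the heavy lifting is delegated to Theorem~\ref{thm:majorization} and Theorem~\ref{thm:Birkhoff}.
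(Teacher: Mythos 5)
Your proposal is correct and follows essentially the same route as the paper: the doubly stochastic matrix $T_{ji} = \langle \varphi'_j | \mathcal E(|\varphi_i\rangle\langle\varphi_i|)|\varphi'_j\rangle$ (with row/column sums fixed by trace preservation and unitality) for (ii) $\Rightarrow$ (i), and the classical majorization theorem plus Birkhoff's decomposition with basis-change unitaries $\hat V_k$ (the paper's $\hat V \hat P_k$) for (i) $\Rightarrow$ (iii). The only difference is organizational — you close the cycle (iii) $\Rightarrow$ (ii) $\Rightarrow$ (i) $\Rightarrow$ (iii) instead of also giving the explicit Kraus construction for (i) $\Rightarrow$ (ii), which is logically equivalent.
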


\begin{proof}
In the following proof, we use the notations of Eq.~(\ref{diagonal_majorization}).

We first show (i) $\Rightarrow$ (ii).  Since $\hat \rho' \prec \hat \rho$ implies $p' \prec p$ by definition, Theorem~\ref{thm:majorization} for classical majorization implies that there exists a doubly stochastic matrix $T$ such that $p' = Tp$. We then define a map
\begin{equation}
\mathcal E(\hat \rho ) := \sum_{ji} T_{ji} | \varphi'_j \rangle \langle \varphi_i | \hat \rho | \varphi_i \rangle  \langle \varphi'_j |.
\end{equation}
Since $\sum_{ji} \hat M_{ji}^\dagger \hat M_{ji} = \hat I$ with $\hat M_{ji}:= \sqrt{T_{ji}} | \varphi_j' \rangle \langle \varphi_i |$, $\mathcal E$ is CPTP.
Also, since $T$ is doubly stochastic, $\mathcal E(\hat I) = \hat I$ holds and thus  $\mathcal E$ is unital. Finally, it is obvious that $\hat \rho' = \mathcal E(\hat \rho)$.

We next show (ii) $\Rightarrow$ (i).
Suppose that $\mathcal E$ is CPTP unital, and  define
\begin{equation}
T_{ji} := \langle \varphi'_j | \mathcal E( | \varphi_i \rangle \langle \varphi_i |) | \varphi'_j \rangle. 
\end{equation}
Then, since $\sum_{i}T_{ji} = \sum_{j}T_{ji} = 1$ and $T_{ji} \geq 0$, $T_{ji}$ is doubly stochastic.
Also, it is obvious that $p'_j = \sum_{i}T_{ji}p_i$.
Therefore, $p' \prec p$ holds, and thus $\hat \rho' \prec \hat \rho$.

(i) $\Leftrightarrow$ (iii) (or (ii) $\Leftrightarrow$ (iii)) is called the  Uhlmann's theorem (Theorem~12.13 of Ref.~\cite{Nielsen}).  Since (iii)  $\Rightarrow$ (ii) is obvious, here we only show (i) $\Rightarrow$ (iii).
Suppose that $\hat \rho' \prec \hat \rho$.  Then, there exists a doubly stochastic matrix $T$ such that $p' = Tp$ for the diagonal distributions. From the Birkhoff's theorem (Theorem~\ref{thm:Birkhoff}), $T$ can be written as $T=\sum_k r_k P_k$ with  permutation matrices $P_k$.
Let $\hat P_k$ be the operator that acts as $P_k$ in the basis $\{ | \varphi_k \rangle \}$, and let $\hat V$ be the unitary operator that converts the basis $\{ | \varphi_k \rangle \}$ into $\{ | \varphi'_k \rangle \}$.
Then, we have $\hat \rho' = \sum_k r_k \hat V \hat P_k \hat \rho \hat P_k^\dagger \hat V^\dagger$.
By noting that $\hat U_k := \hat V \hat P_k$  is unitary, we have $\hat \rho' = \sum_k r_k  \hat U_k \hat \rho \hat U_k^\dagger$.
$\Box$
\end{proof}


While (ii) $\Leftrightarrow$ (iii) is true in Theorem~\ref{thm:majorization_q}, the following fact is known: The Birkhoff's theorem (Theorem~\ref{thm:Birkhoff}) is not true in the quantum case.

\begin{theorem}
For $d \geq 3$ with $d$ being the dimension of the system,  the set of mixtures of unitaries is a strict subset of CPTP unital maps.  That is, there are CPTP unital maps that cannot be written as any mixtures of unitaries (\ref{mixture_unitaries}). In other words, an extreme point of CPTP unital maps is not necessarily unitary.
\end{theorem}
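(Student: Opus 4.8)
The plan is to exhibit an explicit CPTP unital map on a $d$-dimensional system with $d \geq 3$ that is an extreme point of the convex set of CPTP unital maps but is not unitary; since any nontrivial convex combination of unitaries either is unitary (if all the unitaries coincide up to phase) or fails to be an extreme point, producing such an extreme non-unitary map suffices. The cleanest route is through the structure theory of extreme points of unital CPTP maps: a CPTP map $\mathcal{E}(\hat\rho) = \sum_k \hat M_k \hat\rho \hat M_k^\dagger$ is extreme in the unital-CPTP set if and only if the collection of operators $\{\hat M_j^\dagger \hat M_k\} \cup \{\hat M_j \hat M_k^\dagger\}$ is linearly independent (a theorem of Landau--Streater / Tregub, the unital analogue of Choi's extremality criterion). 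I would quote this criterion, then check it against a well-chosen example.

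First I would recall the distinction explicitly: a mixture of unitaries that is a single unitary is a rather special extreme point, and for $d = 2$ the Birkhoff property does hold (every unital CPTP qubit channel is a mixture of unitaries), so the dimension hypothesis $d \geq 3$ is essential. For the counterexample in $d = 3$ I would use the Landau--Streater channel, $\mathcal{E}(\hat\rho) = \tfrac{1}{2}(\hat L_x \hat\rho \hat L_x + \hat L_y \hat\rho \hat L_y + \hat L_z \hat\rho \hat L_z)$ where $\hat L_x, \hat L_y, \hat L_z$ are the spin-$1$ angular momentum operators (normalized so that $\hat L_x^2 + \hat L_y^2 + \hat L_z^2 = 2\hat I$, giving the Kraus condition $\sum_k \hat M_k^\dagger \hat M_k = \hat I$ after absorbing the factor). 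This map is manifestly unital since each $\hat L_\alpha$ is Hermitian and $\sum_\alpha \hat L_\alpha^2 \propto \hat I$. One then computes the products $\hat L_\alpha \hat L_\beta$ for $\alpha,\beta \in \{x,y,z\}$ and verifies, using the commutation relations $[\hat L_\alpha,\hat L_\beta] = i\epsilon_{\alpha\beta\gamma}\hat L_\gamma$ together with the anticommutators, that the nine operators $\hat L_\alpha \hat L_\beta$ are linearly independent in $\mathcal{L}(\mathbb{C}^3)$ (which is $9$-dimensional, so they in fact form a basis). This gives extremality via the criterion. Finally one observes the map is not unitary — e.g. it is not pure, as it sends a pure state to a mixed state, or more simply its Choi rank is $3 > 1$ — so it cannot be any mixture of unitaries, since a mixture of unitaries with more than one distinct unitary is not extreme and a single unitary is pure.

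For general $d \geq 3$ I would then note that the $d = 3$ example embeds: taking $\mathcal{E}' = \mathcal{E} \oplus \mathcal{U}$ on $\mathbb{C}^3 \oplus \mathbb{C}^{d-3}$ with $\mathcal{U}$ any unitary channel (or more carefully, a direct-sum construction that remains extreme) produces unital CPTP maps in every dimension $d \geq 3$ that are not mixtures of unitaries; alternatively one cites that extreme points of the unital CPTP set in dimension $d$ can have Kraus rank up to $d$ while unitaries have Kraus rank $1$, so for $d \geq 2$ such maps of intermediate rank exist, but the sharper statement that they are non-mixtures-of-unitaries needs $d \geq 3$ and the Landau--Streater-type obstruction. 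The main obstacle I expect is the linear-independence verification: one has to be careful that the relevant set of operators in the extremality criterion is $\{\hat M_j^\dagger \hat M_k\}$ (or the appropriate Hermitian span), and actually carry out the rank computation for the nine $\hat L_\alpha \hat L_\beta$ matrices — this is a concrete but slightly tedious linear-algebra check, and getting the normalization of the $\hat L_\alpha$ and the Kraus condition exactly right is where sign and factor errors creep in. If I wanted to avoid the explicit matrices entirely, the alternative is to invoke a known reference (Landau--Streater 1993, or Tregub) that the Landau--Streater channel is extreme and indivisible into unitaries, and simply cite it, which is likely what the author does given the terse "the following fact is known" framing.
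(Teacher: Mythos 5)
Your proposal is correct and takes essentially the same route as the paper: the Landau--Streater spin channel, shown to be unital, extreme via the Choi-type linear-independence criterion for the Kraus products, and non-unitary, hence not expressible as any mixture of unitaries. The only difference is that the paper works directly with the spin-$l$ operators for all $l \geq 1$ (so every $d = 2l+1 \geq 3$ is covered at once), whereas you treat $d=3$ explicitly and handle larger $d$ by a direct-sum embedding.
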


\begin{proof}
We show an example of an operator that is CPTP unital but not a mixture of unitaries~\cite{Landau}.   Let $\hat S_k$ ($k=x,y,z$) be the spin-$l$ operator (that is a representation of ${\mathfrak su}(2)$) with $l=1/2, 1, 3/2, \cdots$. 
We define a CPTP map
\begin{equation}
\mathcal E(\hat \rho) := \frac{1}{l(l+1)}\sum_{k=x,y,z}\hat S_k \hat \rho \hat S_k, 
\label{unital_extreme}
\end{equation}
which is shown to be unital.
For $l \geq 1$ (or equivalently $d \geq 3$ because of $d=2l+1$), $\hat S_k$ is not unitary.
On the other hand, $\mathcal E$ is an extreme point of CPTP unital maps, because of the Choi's theorem (Theorem~5 of Ref.~\cite{Choi75}).
Therefore, $\mathcal E $ in Eq.~(\ref{unital_extreme}) cannot be written as any mixture of unitaries.

We note that Choi's theorem is stated as follows.
Let $\mathcal E$ be  a CPTP unital map that has a Kraus representation $\mathcal E(\hat \rho) := \sum_k \hat M_k \hat \rho \hat M_k^\dagger$ with $\sum_k \hat M_k \hat M_k^\dagger = \sum_k \hat M_k^\dagger \hat M_k = \hat I$.
Then, $\mathcal E$ is an extreme point of CPTP unital maps, if and only if $\hat M_k \hat M_l^\dagger$'s are linearly independent in the operator space. $\Box$
\end{proof}

\section{Noisy operations}
\label{sec:noisy_operation}

We next consider a subclass of CPTP unital maps, called \textit{noisy operations}.  
A noisy operation is also a special case of thermal operations introduced in Section~\ref{sec:thermodynamic_operations}.

\begin{definition}[Noisy operations]
A CPTP map $\mathcal E$ is an exact noisy operation, if there exists an auxiliary system B with Hilbert space dimension $d_{\rm B} < \infty$ and exists a unitary operator $\hat U$ acting on the composite system, such that
\begin{equation}
\mathcal E (\hat \rho ) = {\rm tr}_{\rm B} [ \hat U \hat \rho \otimes ( \hat I_{\rm B} / d_{\rm B} ) \hat U^\dagger ],
\label{noisy_operation}
\end{equation}
where $\hat I_{\rm B}$ is the identity operator of B.
Furthermore, a CPTP map $\mathcal E$ is a noisy operation, if there exists a sequence of exact noisy operations $\{ \mathcal E_n \}_{n=1}^\infty$ such that $\mathcal E_n$ converges to $\mathcal E$ in $n \to \infty$.
\label{def:noisy_operation}
\end{definition}

In Ref.~\cite{Shor}, exact noisy operations are called exactly factorizable maps,  and noisy operations are called strongly factorizable maps.
There is also a concept called factorizable maps~\cite{Haagerup}, for which  infinite-dimensional auxiliary systems are allowed with the use of von Neumann algebras.
We note that the set of factorizable maps coincide with the set of strongly factorizable maps if and only if  Connes' embedding conjecture is true~\cite{Shor} (see  Ref.~\cite{Ji2020} for a recent work on the conjecture). 

We next discuss the relationship between unital maps and noisy operations.

\begin{proposition}[Lemma 5 of \cite{Gour}]
Any noisy operation is a unital map.   Also, any mixture of unitaries is a noisy operation.
\label{noisy_unital}
\end{proposition}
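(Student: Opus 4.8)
The plan is to treat the two assertions separately, and in each case to reduce matters to a direct computation with the defining formula~(\ref{noisy_operation}), supplemented by a routine limiting argument. For the first assertion, I would begin with an exact noisy operation $\mathcal E$ of the form~(\ref{noisy_operation}) and evaluate it on the identity: since $\hat I\otimes(\hat I_{\rm B}/d_{\rm B})=\hat I/d_{\rm B}$ is proportional to the identity of the composite space and hence commutes with $\hat U$, one gets
\begin{equation}
\mathcal E(\hat I)=\frac{1}{d_{\rm B}}\,{\rm tr}_{\rm B}\!\left[\hat U\hat U^\dagger\right]=\frac{1}{d_{\rm B}}\,{\rm tr}_{\rm B}\!\left[\hat I\right]=\hat I ,
\end{equation}
so every exact noisy operation is unital. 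For a general noisy operation, write $\mathcal E=\lim_{n\to\infty}\mathcal E_n$ with each $\mathcal E_n$ exact; since all superoperators live in the finite-dimensional space $\mathcal L(\mathcal L(\mathcal H))$, convergence $\mathcal E_n\to\mathcal E$ entails $\mathcal E_n(\hat I)\to\mathcal E(\hat I)$, and as $\mathcal E_n(\hat I)=\hat I$ for every $n$ we conclude $\mathcal E(\hat I)=\hat I$. Hence $\mathcal E$ is unital.

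For the second assertion, let $\mathcal E(\hat\rho)=\sum_{k=1}^{K}r_k\hat U_k\hat\rho\hat U_k^\dagger$ be a mixture of unitaries as in~(\ref{mixture_unitaries}); by a Carath\'eodory argument in the finite-dimensional space of superoperators I may assume $K<\infty$. First suppose the weights are rational, $r_k=n_k/N$ with $n_k\in\mathbb N$ and $\sum_k n_k=N$. I would take an auxiliary system B of dimension $N$ with orthonormal basis $\{|j\rangle\}_{j=1}^{N}$, partition $\{1,\dots,N\}$ into blocks of sizes $n_1,\dots,n_K$, let $k(j)$ denote the index of the block containing $j$, and define the controlled unitary $\hat U:=\sum_{j=1}^{N}\hat U_{k(j)}\otimes|j\rangle\langle j|$. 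This is unitary because $\hat U^\dagger\hat U=\sum_j\hat U_{k(j)}^\dagger\hat U_{k(j)}\otimes|j\rangle\langle j|=\hat I$. A short computation then gives ${\rm tr}_{\rm B}[\hat U(\hat\rho\otimes\hat I_{\rm B}/N)\hat U^\dagger]=\tfrac1N\sum_j\hat U_{k(j)}\hat\rho\hat U_{k(j)}^\dagger=\sum_k r_k\hat U_k\hat\rho\hat U_k^\dagger=\mathcal E(\hat\rho)$, so $\mathcal E$ is an exact noisy operation. For general weights, pick rational weights $r_k^{(n)}\to r_k$ with $\sum_k r_k^{(n)}=1$; the corresponding exact noisy operations $\mathcal E_n(\hat\rho):=\sum_k r_k^{(n)}\hat U_k\hat\rho\hat U_k^\dagger$ converge to $\mathcal E$, so $\mathcal E$ is a noisy operation by Definition~\ref{def:noisy_operation}.

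I do not expect a genuine obstacle in this proposition. The only points requiring any care are that convergence of superoperators in finite dimension is strong enough to pass the identity $\mathcal E_n(\hat I)=\hat I$ to the limit, and the bookkeeping in the rational-weight construction---namely verifying that the controlled $\hat U$ is unitary and that the partial trace over B reproduces precisely the weights $r_k=n_k/N$. Both are routine, so the argument is essentially a packaging of these two elementary computations together with the standard rational approximation used to define noisy operations.
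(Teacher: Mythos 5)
Your proof is correct and follows essentially the same route as the paper: the unitality of an exact noisy operation by direct evaluation on the identity (the paper calls this ``obvious''), and the rational-weight controlled-unitary construction with a subsequent limit for irrational weights, exactly as in the paper's proof. The extra details you supply (passing unitality through the limit in finite dimension, and verifying the controlled unitary and the partial-trace bookkeeping) are fine and merely make explicit what the paper leaves to the reader.
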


\begin{proof}
The former is obvious from Eq.~(\ref{noisy_operation}).  To show the latter, let $\mathcal E(\hat \rho) = \sum_k r_k \hat U_k \hat \rho \hat U_k^\dagger$ be the mixture of unitaries with $r_k > 0$.
We first consider the case that all of $r_k$'s are rational numbers written as $l_k/m$ with positive  integers $l_k$ and $m$.
Then we take an auxiliary Hilbert space in dimension $d_{\rm B} := m$, and divide it into subspaces labeled by $k$ with dimensions $l_k$.  Then we can take $\hat U$ in Eq.~(\ref{noisy_operation}) as a controlled unitary, which acts on the system as $\hat U_k$ if the auxiliary state is in the $k$th subspace.
Finally, if some of $r_k$'s are irrational, we can approximate them by rational numbers and then take the limit of $m \to \infty$.
$\Box$
\end{proof}

The above inclusions are strict.

\begin{proposition}
There are CPTP unital maps that cannot be written as any noisy operations.  Also, there are noisy operations that cannot be written as any mixtures of unitaries.
\label{strict_inclusions_noisy}
\end{proposition}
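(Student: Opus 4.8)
The plan is to establish the two strict inclusions separately, in each case by exhibiting an explicit example (or invoking one from the literature) and then arguing that it lies on the correct side of the boundary.

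For the first claim---that there are CPTP unital maps that are not noisy operations---the natural strategy is to produce a unital map that fails to be even a mixture of unitaries would \emph{not} suffice, since there could in principle be a non-mixture-of-unitaries noisy operation; but in fact one can show more. First I would recall that a noisy operation, being a limit of maps of the form $\mathcal E_n(\hat\rho) = {\rm tr}_{\rm B}[\hat U_n(\hat\rho\otimes \hat I_{\rm B}/d_{{\rm B},n})\hat U_n^\dagger]$, satisfies a structural constraint: each $\mathcal E_n$ is a mixture of unitaries (apply the argument of Proposition~\ref{noisy_unital} in reverse---tracing out a maximally mixed ancilla after a global unitary gives $\sum_k (1/d_{\rm B})\,\hat U^{(k)}\hat\rho\,\hat U^{(k)\dagger}$ where $\hat U^{(k)} := \langle k|\hat U|k\rangle$ are \emph{not} in general unitary, so this needs care). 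The cleaner route is to cite the known fact (Mendl--Wolf, or as discussed in Ref.~\cite{Gour}) that the closure of the set of mixtures of unitaries is strictly contained in the set of unital channels for $d\geq 3$, and that the map $\mathcal E$ of Eq.~(\ref{unital_extreme}) built from spin operators---or the Werner--Holevo channel---is a concrete unital channel lying outside this closure. Since every noisy operation is in the closure of the mixtures of unitaries (by Proposition~\ref{noisy_unital} together with the fact that the noisy operations are by definition limits of exact ones, and exact noisy operations are mixtures of unitaries), such a map is a CPTP unital map that is not a noisy operation.

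For the second claim---that there are noisy operations that are not mixtures of unitaries---the idea is the opposite: we need a map constructed \emph{via} an explicit coupling to a maximally mixed ancilla that nevertheless cannot be decomposed as $\sum_k r_k\hat U_k\,\cdot\,\hat U_k^\dagger$. Here I would invoke the example analysed by Haagerup--Musat (and referenced through Refs.~\cite{Shor,Haagerup}): there exist factorizable maps, indeed exactly factorizable (hence noisy) maps in finite dimension, that are not in the convex hull of unitary conjugations. Concretely, one can point to the Haagerup--Musat channel in dimension $d$ arising from a suitable group-theoretic unitary on $\mathcal H\otimes \mathbb C^{d_{\rm B}}$; verifying it is a noisy operation is immediate from its construction as an exact noisy operation, while the non-membership in the mixtures of unitaries follows from their analysis (e.g.\ via the trace-preserving property on a clever test operator, or via the structure of the extreme points).

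The main obstacle is the second inclusion: exhibiting a \emph{finite-dimensional exact} noisy operation that is provably not a mixture of unitaries is genuinely subtle---the first such examples are due to Haagerup and Musat and the proof that they escape the convex hull of unitaries is nontrivial. I would therefore not attempt a self-contained argument but cite Refs.~\cite{Shor,Haagerup} for the existence of such a map, noting that it is a noisy operation by construction (Definition~\ref{def:noisy_operation}) and not a mixture of unitaries by the cited result. The first inclusion is comparatively routine once one accepts that noisy operations lie in the closed convex hull of unitary conjugations, which is itself strictly contained in the unital channels for $d\geq 3$---a fact that can be quoted from Ref.~\cite{Gour} or established using the spin-operator example \eqref{unital_extreme} already in hand.
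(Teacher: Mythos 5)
Your argument for the second claim is in the same spirit as the paper's: both simply point to the literature for a noisy operation lying outside the set of mixtures of unitaries. Note, though, that the paper's source for this half is Ref.~\cite{Shor} based on Ref.~\cite{Mendl2009}; attributing the example to Haagerup--Musat is shaky, since Example 3.1 of Ref.~\cite{Haagerup} is a \emph{non}-factorizable map and is in fact the ingredient needed for the \emph{first} half, not the second.

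The first claim is where your proposal has a genuine gap. Your conclusion rests on the inclusion ``noisy operations $\subset$ closure of the mixtures of unitaries,'' justified by the assertion that exact noisy operations are mixtures of unitaries. That assertion is false, and you yourself recorded the reason: the blocks $\langle k | \hat U | k \rangle$ obtained by tracing out a maximally mixed ancilla are not unitary in general. The inclusion cannot be repaired, because in finite dimension the set of mixtures of unitaries is the convex hull of a compact set and hence already closed; so your inclusion would say that every noisy operation is a mixture of unitaries, contradicting exactly the second half of the proposition (and the Mendl--Wolf examples). Consequently, knowing that the map of Eq.~(\ref{unital_extreme}) lies outside the (closed) set of mixtures of unitaries does not show that it fails to be a noisy operation. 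The paper's proof uses a strictly stronger input: Example 3.1 of Ref.~\cite{Haagerup} shows that this very map (with $l=1$) is not factorizable at all, and since every noisy operation is factorizable (indeed strongly factorizable), it cannot be a noisy operation. ``Not a mixture of unitaries'' is weaker than what is needed here, so your first argument does not go through as written.
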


\begin{proof}
An example of the former is shown in Example 3.1 of Ref.~\cite{Haagerup}, which is the same as Eq.~(\ref{unital_extreme}) (with $l=1$).  An example of the latter is discussed in Ref.~\cite{Shor} based on Ref.~\cite{Mendl2009}.
$\Box$
\end{proof}

Although the above inclusions are strict, we have the following because any mixture of unitaries is a noisy operation.

\begin{corollary}
The conditions of Theorem~\ref{thm:majorization_q} is further equivalent to:\\
(iv) There exists a noisy operation $\mathcal E$ such that $\hat \rho' = \mathcal E (\hat \rho)$.
\label{noisy_corollary}
\end{corollary}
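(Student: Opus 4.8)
The plan is to show that condition (iv) — existence of a noisy operation $\mathcal E$ with $\hat\rho' = \mathcal E(\hat\rho)$ — slots into the equivalence chain of Theorem~\ref{thm:majorization_q} by proving the two implications (iii) $\Rightarrow$ (iv) and (iv) $\Rightarrow$ (ii). Since Theorem~\ref{thm:majorization_q} already establishes (i) $\Leftrightarrow$ (ii) $\Leftrightarrow$ (iii), these two new implications are enough to close the loop and make (iv) equivalent to all of them.

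First I would handle (iii) $\Rightarrow$ (iv). By Proposition~\ref{noisy_unital}, any mixture of unitaries is a noisy operation. So if $\hat\rho' = \mathcal E(\hat\rho)$ for some mixture of unitaries $\mathcal E$, that very same $\mathcal E$ is a noisy operation, and (iv) holds immediately. This direction is essentially a one-line citation of Proposition~\ref{noisy_unital}.

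Next I would handle (iv) $\Rightarrow$ (ii). By Proposition~\ref{noisy_unital} (its first assertion), any noisy operation is a CPTP unital map. Hence if $\hat\rho' = \mathcal E(\hat\rho)$ for a noisy operation $\mathcal E$, then $\mathcal E$ is in particular CPTP unital, so (ii) holds. Again this is a direct appeal to Proposition~\ref{noisy_unital}. One small point of care: Definition~\ref{def:noisy_operation} defines a noisy operation as a limit of exact noisy operations, so strictly one should note that the CPTP and unital properties are preserved under the relevant limit — but this is routine, since both the trace-preserving condition ${\rm tr}[\mathcal E_n(\hat X)] = {\rm tr}[\hat X]$ and the unital condition $\mathcal E_n(\hat I) = \hat I$ are closed conditions, and complete positivity is likewise preserved in the limit (e.g. via the Choi matrices). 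In fact Proposition~\ref{noisy_unital} already asserts ``any noisy operation is a unital map'' for exactly this reason, so one can simply invoke it.

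I do not anticipate a serious obstacle here: the statement is a packaging corollary whose entire content is that the two strict inclusions of Propositions~\ref{noisy_unital} and~\ref{strict_inclusions_noisy} sandwich the class ``noisy operations'' between ``mixtures of unitaries'' and ``CPTP unital maps,'' both of which already appear in Theorem~\ref{thm:majorization_q}. The only thing to be mildly careful about is to state the sandwiching explicitly and cite Proposition~\ref{noisy_unital} for both inclusions, rather than re-deriving anything. Concretely the write-up is: (iii) $\Rightarrow$ (iv) is Proposition~\ref{noisy_unital} (mixtures of unitaries are noisy operations); (iv) $\Rightarrow$ (ii) is Proposition~\ref{noisy_unital} (noisy operations are CPTP unital); combined with Theorem~\ref{thm:majorization_q} this yields (i) $\Leftrightarrow$ (ii) $\Leftrightarrow$ (iii) $\Leftrightarrow$ (iv). $\Box$
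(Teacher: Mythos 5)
Your proposal is correct and matches the paper's own (one-line) argument: the paper derives the corollary precisely from Proposition~\ref{noisy_unital}, using that any mixture of unitaries is a noisy operation (giving (iii) $\Rightarrow$ (iv)) and that any noisy operation is a CPTP unital map (giving (iv) $\Rightarrow$ (ii)), which closes the equivalence chain of Theorem~\ref{thm:majorization_q}. Your extra remark about limits is fine but unnecessary, since Definition~\ref{def:noisy_operation} builds CPTP into the definition and Proposition~\ref{noisy_unital} already covers unitality.
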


We finally remark on the classical case, where the set of noisy operations and the set of doubly stochastic matrices are equivalent.

\begin{definition}[Classical noisy operations]
A stochastic matrix $T$ is an exact classical noisy operation, if there exists an auxiliary system B  in dimension $d_{\rm B} < \infty$ and exists a permutation matrix $P$ acting on the composite system such that
\begin{equation}
(T p)_i= \sum_{jkl} P_{ij; kl} p_k u_l,
\label{noisy_operation_c}
\end{equation}
where $u:=(1/d_{\rm B}, \cdots, 1/d_{\rm B})^{\rm T}$ is the uniform distribution of B. 
Furthermore, a stochastic matrix $T$ is a classical noisy operation, if there exists a sequence of exact classical noisy operations $\{ T_n \}_{n=1}^\infty$ such that $T_n$ converges to $T$ in $n \to \infty$.
\label{def:noisy_operation_c}
\end{definition}

\begin{proposition}[Lemma 6 of \cite{Gour}]
For any classical  stochastic map $T$, the following are equivalent. \\
(i) $T$ is a doubly stochastic map.\\
(ii) $T$ is a noisy operation.
\end{proposition}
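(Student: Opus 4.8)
The plan is to prove the two implications separately, closely mirroring the proof of Proposition~\ref{noisy_unital} together with the classical Birkhoff theorem (Theorem~\ref{thm:Birkhoff}).

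First I would dispose of the easy direction (ii) $\Rightarrow$ (i). From Eq.~(\ref{noisy_operation_c}) one reads off the matrix elements of an exact classical noisy operation as $T_{ik}=\sum_{jl}P_{ij;kl}u_l=\frac{1}{d_{\rm B}}\sum_{jl}P_{ij;kl}$. Since $P$ is a permutation matrix on the composite index set, $\sum_{ij}P_{ij;kl}=1$ for every $(k,l)$ and $\sum_{kl}P_{ij;kl}=1$ for every $(i,j)$. Summing $T_{ik}$ over $i$ gives $\frac{1}{d_{\rm B}}\sum_l\sum_{ij}P_{ij;kl}=\frac{1}{d_{\rm B}}\cdot d_{\rm B}=1$, and summing over $k$ gives $\frac{1}{d_{\rm B}}\sum_j\sum_{kl}P_{ij;kl}=\frac{1}{d_{\rm B}}\cdot d_{\rm B}=1$, where I used that both auxiliary indices run over a set of size $d_{\rm B}$. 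Hence $T$ is doubly stochastic. Since the set of doubly stochastic matrices is closed, any limit of exact classical noisy operations is again doubly stochastic, which settles (ii) $\Rightarrow$ (i).

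For (i) $\Rightarrow$ (ii), I would start from Theorem~\ref{thm:Birkhoff}: a doubly stochastic $T$ can be written as $T=\sum_k r_k P_k$ with permutation matrices $P_k$ and $r_k\ge 0$, $\sum_k r_k=1$. First assume all $r_k$ are rational, $r_k=l_k/m$ with $l_k,m\in\mathbb N$. Take the auxiliary system B with $d_{\rm B}=m$ and partition its index set $\{1,\dots,m\}$ into disjoint blocks, the $k$th having $l_k$ elements. Define the composite permutation $P$ so as to fix the auxiliary index $l$ and to permute the system index by $P_k$ whenever $l$ lies in the $k$th block; for each fixed $l$ this is a single permutation of the system index, so $P$ is a genuine permutation matrix on the product space. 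A direct computation then shows that the right-hand side of Eq.~(\ref{noisy_operation_c}) with this $P$ equals $\sum_k p_k\,\frac{1}{m}\sum_{k'}l_{k'}(P_{k'})_{ik}=\sum_k p_k\sum_{k'}r_{k'}(P_{k'})_{ik}=(Tp)_i$, so $T$ is an exact classical noisy operation. Finally, if some $r_k$ are irrational, approximate them by rationals to obtain a sequence $\{T_n\}_{n=1}^\infty$ of exact classical noisy operations with $T_n\to T$; by Definition~\ref{def:noisy_operation_c} this makes $T$ a classical noisy operation.

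The main obstacle is the bookkeeping in the step (i) $\Rightarrow$ (ii): one must set up the index conventions of the composite permutation $P_{ij;kl}$ (distinguishing output from input and system from auxiliary) carefully enough that $P$ is genuinely a permutation matrix on the product space and that summing against the uniform auxiliary distribution reproduces exactly the convex weights $r_{k'}$. Everything else is routine, being essentially the classical transcription of the argument in Proposition~\ref{noisy_unital}.
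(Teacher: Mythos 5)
Your proposal is correct and follows essentially the same route as the paper: the (ii) $\Rightarrow$ (i) direction is the "obvious" verification (which you simply spell out explicitly, together with closedness under limits), and the (i) $\Rightarrow$ (ii) direction uses the Birkhoff decomposition $T=\sum_k r_k P_k$, a controlled permutation on an auxiliary uniform register partitioned into blocks proportional to rational $r_k$, and a limiting argument for irrational weights, exactly as in the paper's proof (which itself mirrors Proposition~\ref{noisy_unital}).
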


\begin{proof}
Because (ii) $\Rightarrow$ (i) is obvious, we only show (i) $\Rightarrow$ (ii).
The proof is parallel to the latter part of Proposition~\ref{noisy_unital}.
Let  $T=\sum_k r_k P_k$ be the decomposition of the Birkhoff's theorem (Theorem~\ref{thm:Birkhoff}).
If all of $r_k$'s are rational, we divide the sample space (i.e., the set of labels of components of probability vectors) into subspaces labeled by $k$ with dimensions proportional to $r_k$.  Then, apply the controlled permutation acting on the system as $P_k$ if the auxiliary state is in the $k$th subspace.  For general $r_k$'s, take the limit.
$\Box$
\end{proof}

\section{Quantum d-majorization}
\label{sec:quantum_d_majorization}

We now consider quantum d-majorization (or quantum relative majorization).
It is not very straightforward to formulate the quantum counterpart of the classical case discussed in Section~\ref{sec:d_majorization}, 
especially the  ``Lorenz curve-like'' characterization of the necessary and sufficient condition for the existence of a CPTP map $\mathcal E$ satisfying $\hat \rho'=\mathcal{E} (\hat \rho)$ and $\hat \sigma' = \mathcal{E} (\hat \sigma)$ for given pairs of quantum states $(\hat \rho, \hat \sigma)$, $(\hat \rho', \hat \sigma')$
(i.e.,  the full quantum version of Theorem.~\ref{thm:d_majorization} or the d-majorization version of Theorem~\ref{thm:majorization_q}).
A difficulty lies in the fact that $\hat \rho$ and $\hat \sigma$ are not necessarily commutable (and thus not simultaneously diagonalizable) in the quantum case.

There are several interesting approaches to formulate full quantum d-majorization on the basis of,
e.g., a  generalized Lorenz curve~\cite{Buscemi2017} and matrix majorization~\cite{Gour2017}.
We will mention the former later in this section.
Here, however, we adopt a most naive approach to quantum d-majorization: 
we simply \textit{define} quantum d-majorization in the following form, given (iv) of Theorem~\ref{thm:d_majorization} of the classical case.

\begin{definition}[Quantum d-majorization]
Let $\hat \rho, \hat \sigma, \hat \rho', \hat \sigma'$ be quantum states.
We say that a pair $(\hat \rho, \hat \sigma)$ d-majorizes another pair $(\hat \rho', \hat \sigma') $, written as
$(\hat \rho', \hat \sigma') \prec (\hat \rho, \hat \sigma)$, if  there exists a CPTP map $\mathcal E$ such that 
\begin{equation}
\hat \rho' = \mathcal E (\hat \rho), \ \ \hat \sigma' = \mathcal E(\hat \sigma).
\end{equation}
In particular, we say that $\hat \rho$ thermo-majorizes $\hat \rho'$ with respect to $\hat \sigma$, if $(\hat \rho', \hat \sigma) \prec (\hat \rho, \hat \sigma)$.
\label{def:quantum_d_majorization}
\end{definition}

Under this  definition, the characterization of state convertibility by the R\'enyi $0$- and $\infty$-divergences still works, as described by the following theorem.
This is the quantum counterpart of Theorem~\ref{thm:asymp0} of the classical case.
While Theorem~\ref{thm:asymp0} was obvious if one looks at the classical Lorenz curves, the following theorem is proved without invoking such graphical representation.

\begin{theorem}[Conditions for state conversion] 
\
\begin{description}
\item[(a) Necessary conditions:] If $( \hat \rho', \hat \sigma' ) \prec (\hat \rho, \hat \sigma)$ holds, then 
\begin{equation}
S_0 (\hat \rho' \| \hat \sigma') \leq S_0 (\hat \rho \| \hat \sigma ), \ \ \ S_\infty (\hat \rho' \| \hat \sigma') \leq S_\infty (\hat \rho \| \hat \sigma ).
\end{equation} 
\item[(b) Sufficient condition:]  $( \hat \rho', \hat \sigma' ) \prec (\hat \rho, \hat \sigma)$ holds, if (but not only if)
\begin{equation}
S_\infty (\hat \rho' \| \hat \sigma') \leq S_0 (\hat \rho \| \hat \sigma ).
\end{equation}
\end{description}
\label{thm:asymp0_q}
\end{theorem}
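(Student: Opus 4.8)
The plan is to mimic the classical graphical argument of Theorem~\ref{thm:asymp0}, but to replace the Lorenz-curve pictures by the two operational facts we already have about $S_0$ and $S_\infty$: their monotonicity under CPTP maps (Theorem~\ref{thm:quantum_Renyi_monotonicity}) and the explicit operator-theoretic definitions
\[
S_0(\hat\rho\|\hat\sigma) = -\ln {\rm tr}[\hat P_{\hat\rho}\hat\sigma], \qquad
S_\infty(\hat\rho\|\hat\sigma) = \ln\min\{\lambda : \hat\rho \le \lambda\hat\sigma\}.
\]
Part~(a) is immediate: if $(\hat\rho',\hat\sigma')\prec(\hat\rho,\hat\sigma)$ then by Definition~\ref{def:quantum_d_majorization} there is a CPTP map $\mathcal E$ with $\hat\rho'=\mathcal E(\hat\rho)$ and $\hat\sigma'=\mathcal E(\hat\sigma)$, and Theorem~\ref{thm:quantum_Renyi_monotonicity} applied to this $\mathcal E$ gives $S_0(\hat\rho\|\hat\sigma)\ge S_0(\hat\rho'\|\hat\sigma')$ and likewise for $S_\infty$. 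So (a) needs essentially no new work.

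The substance is part~(b): I must \emph{construct} a CPTP map $\mathcal E$ sending $\hat\rho\mapsto\hat\rho'$ and $\hat\sigma\mapsto\hat\sigma'$ under the hypothesis $S_\infty(\hat\rho'\|\hat\sigma')\le S_0(\hat\rho\|\hat\sigma)$. The natural candidate is a ``measure-and-prepare'' (entanglement-breaking) channel of the form $\mathcal E(\hat X) = {\rm tr}[\hat M\hat X]\,\hat\rho' + {\rm tr}[(\hat I-\hat M)\hat X]\,\hat\tau$ for a suitable effect operator $0\le\hat M\le\hat I$ and a suitable auxiliary state $\hat\tau$; more generally one may need a small number of outcomes. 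The design constraints are $\mathcal E(\hat\rho)=\hat\rho'$ and $\mathcal E(\hat\sigma)=\hat\sigma'$. A clean way to meet them: choose $\hat M$ so that ${\rm tr}[\hat M\hat\rho]=1$, i.e.\ $\hat M\ge\hat P_{\hat\rho}$ on ${\rm supp}[\hat\rho]$, which forces the $\hat\rho$ input to be mapped entirely to $\hat\rho'$; then the remaining freedom in $\hat M$ off ${\rm supp}[\hat\rho]$ and the choice of $\hat\tau$ must be used to land $\hat\sigma$ on $\hat\sigma'$. Writing $a := {\rm tr}[\hat M\hat\sigma]$, we need $a\,\hat\rho' + (1-a)\,\hat\tau = \hat\sigma'$, which is solvable for a genuine state $\hat\tau$ precisely when $a\,\hat\rho'\le\hat\sigma'$, i.e.\ $a\le e^{-S_\infty(\hat\rho'\|\hat\sigma')}$. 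On the other hand, since $\hat M\ge\hat P_{\hat\rho}$ (and we may as well take $\hat M=\hat P_{\hat\rho}$ to make $a$ as small as possible), $a = {\rm tr}[\hat P_{\hat\rho}\hat\sigma] = e^{-S_0(\hat\rho\|\hat\sigma)}$. Hence the construction goes through exactly when $e^{-S_0(\hat\rho\|\hat\sigma)}\le e^{-S_\infty(\hat\rho'\|\hat\sigma')}$, which is the hypothesis. One then checks that $\mathcal E$ so defined is CP (it is a measure-and-prepare map, manifestly CP) and TP (because $\hat M + (\hat I-\hat M)=\hat I$), completing (b).

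I expect the main obstacle to be bookkeeping around supports and normalization: one must verify that $\hat\tau := (1-a)^{-1}(\hat\sigma' - a\hat\rho')$ is indeed positive (this is exactly $a\hat\rho'\le\hat\sigma'$, i.e.\ $a\le e^{-S_\infty(\hat\rho'\|\hat\sigma')}$) and has unit trace (automatic since ${\rm tr}\hat\sigma'={\rm tr}\hat\rho'=1$), and one must handle the degenerate case $a=1$ (then $\hat\rho'=\hat\sigma'$ and $\mathcal E(\hat X):={\rm tr}[\hat X]\hat\rho'$ works) and the edge case where ${\rm supp}[\hat\rho]$ is not contained in ${\rm supp}[\hat\sigma]$ — but our standing assumption rules that out, and likewise we assume ${\rm supp}[\hat\rho']\subseteq{\rm supp}[\hat\sigma']$ so that $S_\infty(\hat\rho'\|\hat\sigma')$ is finite. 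A secondary point worth stating explicitly is that the map built this way also demonstrates that the sufficient condition is strictly stronger than necessary in general (the ``but not only if''), which can be seen by exhibiting any classical example already separating the two bounds in Theorem~\ref{thm:asymp0}, embedded diagonally.
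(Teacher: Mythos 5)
Your proposal is correct and follows essentially the same route as the paper: part (a) via the monotonicity of $S_0$ and $S_\infty$ under CPTP maps, and part (b) via the measure-and-prepare channel with the projection $\{\hat P_{\hat\rho}, \hat I - \hat P_{\hat\rho}\}$, preparing $\hat\rho'$ and $(\hat\sigma' - c\hat\rho')/(1-c)$ with $c = e^{-S_0(\hat\rho\|\hat\sigma)}$, including the same treatment of the degenerate case $c=1$.
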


\begin{proof}
(a) This is the monotonicity of  $S_0 (\hat \rho \| \hat \sigma )$ and $S_\infty (\hat \rho \| \hat \sigma )$ (Theorem~\ref{thm:quantum_Renyi_monotonicity}).

(b) The following proof is based on Refs.~\cite{Faist2018,Mitsuhashi2020p}.
We explicitly construct a  CPTP map $\mathcal E$ that transforms  $( \hat \rho , \hat  \sigma )$ to $(\hat \rho', \hat \sigma')$ by a ``measure-and-prepare'' method.

First, we perform a projection measurement 
$\{ \hat P_{\hat \rho}, \hat I - \hat P_{\hat \rho} \}$, where  $\hat P_{\hat \rho}$ is the projection onto the support of $\hat \rho$.
We label the measurement outcome corresponding to $ \hat P_{\hat \rho}$ and $\hat I - \hat P_{\hat \rho} $ by ``0'' and ``1'', respectively.  
Then, the conditional probabilities of getting these outcomes are given by 
\begin{equation}
P( 0 | \hat \rho) = 1, \ P(1 | \hat \rho ) = 0, \ P(0 | \hat \sigma )= c, \ P(1 | \hat \sigma ) = 1 - c,
\label{SH_S0_eq}
\end{equation}
where $c:=e^{-S_0 (\hat \rho \| \hat \sigma )  }$.
We note that this measurement corresponds to the hypothesis testing with $\eta = 1$ (see Appendix \ref{appx:hypothesis_testing}),
and Eq.~(\ref{SH_S0_eq}) implies that  $S_0 (\hat \rho \| \hat \sigma )$ can be identified with the hypothesis testing divergence  $S_{\rm H}^\eta  (\hat \rho \| \hat \sigma )$ with $\eta = 1$ (see also Eq.~(\ref{SH_S0_1})).

Next, we prepare a state depending on the outcome.
Suppose that $c \neq 1$.
When the outcome is ``0'', we prepare $\hat \rho'$.
When the outcome is ``1'', we prepare 
\begin{equation}
\hat \sigma'' := \frac{\hat \sigma' - c \hat \rho'}{1-c}.
\end{equation}
Here, if  $S_\infty (\hat \rho' \| \hat \sigma') \leq S_0 (\hat \rho \| \hat \sigma )$, then
$\hat \sigma'' \geq  ( \hat \sigma' - e^{-S_\infty (\hat \rho' \| \hat \sigma')} \hat \rho' ) / ( 1-c ) \geq 0$,
and thus $\hat \sigma''$ is a (normalized) quantum state.
By this state preparation, we obtain the final state $c \hat \rho' + (1-c) \hat \sigma'' = \hat \sigma'$ from the initial state $\hat \sigma$.
If $c=1$, $S_\infty (\hat \rho' \| \hat \sigma') = 0$ or equivalently $\hat \rho' = \hat \sigma'$ holds, and thus the state preparation is trivial.

In summary, the constructed CPTP map is given by (for $c \neq 1$)
\begin{equation}
\mathcal E (\hat \tau ) := {\rm tr}[\hat P_{\hat \rho} \hat \tau] \hat \rho' + \left( 1- {\rm tr}[\hat P_{\hat \rho} \hat \tau] \right) \frac{\hat \sigma' - c \hat \rho'}{1-c}.
\end{equation}
$\Box$
\end{proof}

Figure~\ref{fig:quantum_d_majorization} schematically illustrates the sufficient condition of Theorem~\ref{thm:asymp0_q}.
In the case of Fig.~\ref{fig:quantum_d_majorization} (a), the $0$- and $\infty$-divergences of $(\hat \rho, \hat \sigma)$ and those of $(\hat \rho', \hat \sigma')$ are completely separated, and the sufficient condition of Theorem~\ref{thm:asymp0_q} is satisfied.
On the other hand, in the case of Fig.~\ref{fig:quantum_d_majorization} (b), we cannot judge whether state transformation is possible from $(\hat \rho, \hat \sigma)$ to $(\hat \rho', \hat \sigma')$, because their divergences are not separated.
This illustrates that a necessary and sufficient condition cannot be given only by the $0$- and $\infty$-divergences in general.
In the next chapter, however, we will see that the $0$- and $\infty$-divergences can approximately collapse to a single value under certain conditions, if we take the asymptotic limit; In such a case, a complete monotone emerges and provides a necessary and sufficient condition for state conversion by d-majorization.

\begin{figure}[htbp]
 \begin{center}
  \includegraphics[width=70mm]{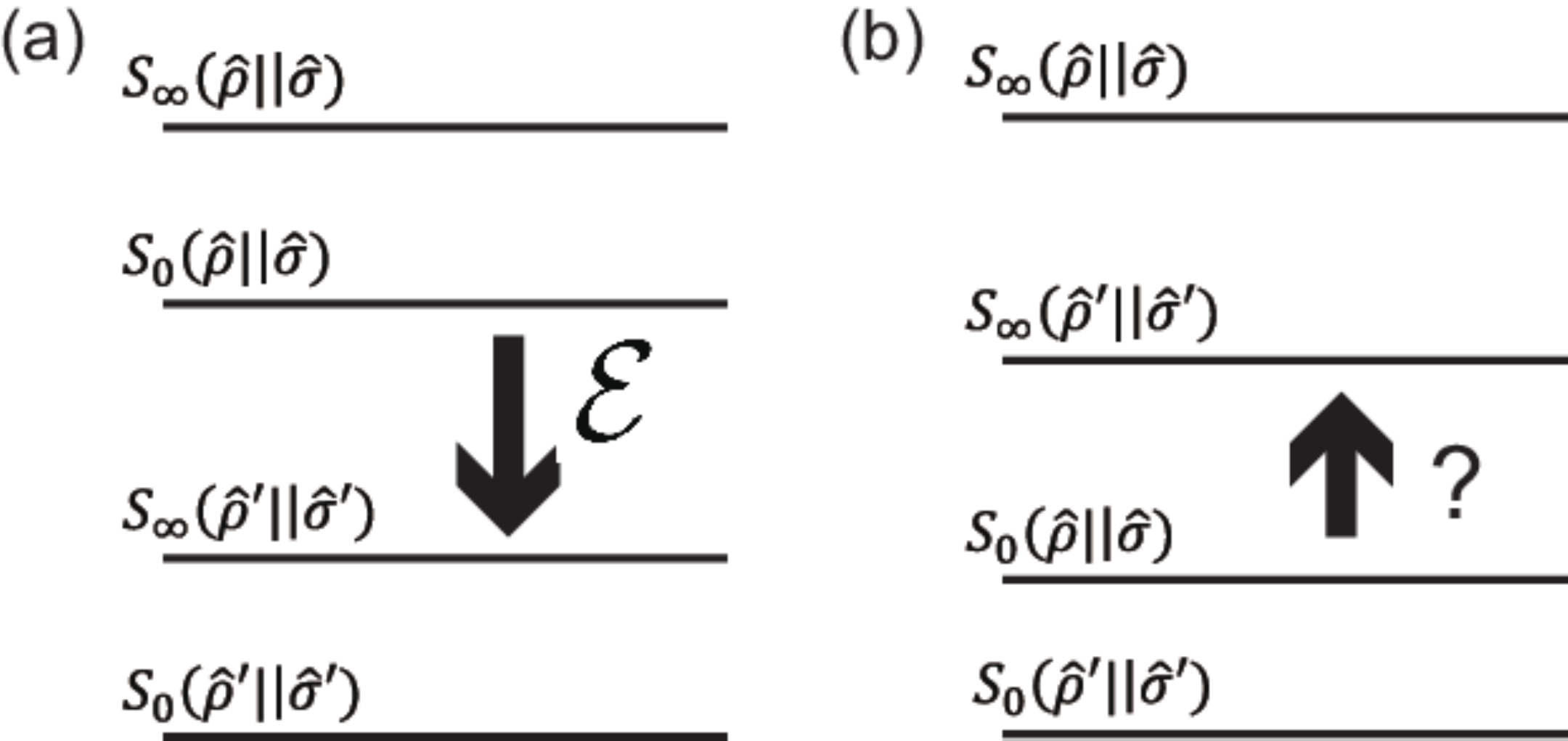}
 \end{center}
 \caption{Schematic of situations where (a) the sufficient condition $S_0 (\hat \rho \| \hat \sigma ) \geq S_\infty (\hat \rho' \| \hat \sigma') $ works and (b) does not work.} 
\label{fig:quantum_d_majorization}
\end{figure}

\

We  next briefly discuss a quantum analogue of the Lorenz curve in line with Ref.~\cite{Buscemi2017}, which, however, cannot characterize state convertibility in general.

\begin{definition}[Quantum Lorenz curve~\cite{Buscemi2017}]
Let $\hat \rho, \hat \sigma$ be quantum states.  The quantum Lorenz curve of $( \hat \rho, \hat \sigma)$ is defined as the upper boundary of the set $\{ (x,y) \ : \ x={\rm tr}[\hat Q \hat \rho],  y= {\rm tr}[\hat Q \hat \sigma],  0 \leq \hat Q \leq \hat I \} \subset \mathbb R^2$.
Then, we write  $(\hat \rho', \hat \sigma' ) \prec_{\rm q} (\hat \rho, \hat \sigma)$, if the  quantum Lorenz curve of $(\hat \rho, \hat \sigma)$ lies above that of $(\hat \rho', \hat \sigma')$.  
\end{definition}

If $[ \hat \rho, \hat \sigma ] = 0$ and $[ \hat \rho', \hat \sigma' ] = 0$, the above definition reduces to the classical case.
In the quantum case, however,  $(\hat \rho', \hat \sigma' ) \prec_{\rm q} (\hat \rho, \hat \sigma)$ is \textit{not} equivalent to $(\hat \rho', \hat \sigma' ) \prec (\hat \rho, \hat \sigma)$ of Definition~\ref{def:quantum_d_majorization}~\cite{Heinosaari2012,Matsumoto2014}.
Instead, it is worth noting the following theorem.

\begin{theorem}[Theorem 2 of~\cite{Buscemi2017}]
Let  $\hat \rho$, $\hat \rho'$, $\hat \sigma$, $\hat \sigma'$ be quantum states.  The following are equivalent.
\begin{enumerate}
\item $(\hat \rho', \hat \sigma') \prec_{\rm q} (\hat \rho, \hat \sigma)$.
\item For all $t \in \mathbb R$, $\| \hat \rho' - t \hat \sigma' \|_1 \leq \| \hat \rho - t \hat \sigma \|_1$. 
\item For all $0 \leq \eta \leq 1$, $Q^\eta_{\rm H} ( \hat \rho \|  \hat \sigma ) \leq Q^\eta_{\rm H} (\hat \rho' \| \hat \sigma')$ holds, where $Q^\eta_{\rm H} (\hat \rho \| \hat \sigma ) := \min_{0 \leq \hat Q \leq \hat I, {\rm tr}[\hat \rho \hat Q] \geq \eta } {\rm tr}[\hat \sigma \hat Q ] $.
\end{enumerate}
\end{theorem}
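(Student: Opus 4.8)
The plan is to reduce all three statements to a single geometric condition about a planar convex region. For quantum states $\hat\tau,\hat\omega$ on $\mathcal H$, introduce the \emph{testing region}
\begin{equation}
\mathcal R(\hat\tau,\hat\omega):=\left\{\,\big({\rm tr}[\hat Q\hat\tau],\,{\rm tr}[\hat Q\hat\omega]\big)\ :\ 0\le\hat Q\le\hat I\,\right\}\subset[0,1]^2 .
\end{equation}
First I would record its elementary features: it is the image of the convex compact set $\{\hat Q:0\le\hat Q\le\hat I\}$ under a linear map, hence convex and compact; replacing $\hat Q$ by $\hat I-\hat Q$ shows it is invariant under the point reflection $(x,y)\mapsto(1-x,1-y)$; taking $\hat Q=x\hat I$ shows it contains the whole diagonal, so its projections onto both axes equal $[0,1]$, and in particular $(0,0),(1,1)\in\mathcal R(\hat\tau,\hat\omega)$. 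The quantum Lorenz curve of $(\hat\tau,\hat\omega)$ is the graph of the concave function $f_{\hat\tau,\hat\omega}(x):=\max\{y:(x,y)\in\mathcal R(\hat\tau,\hat\omega)\}$, and by the reflection symmetry the lower boundary of $\mathcal R(\hat\tau,\hat\omega)$ is $g_{\hat\tau,\hat\omega}(x)=1-f_{\hat\tau,\hat\omega}(1-x)$, a convex function. Consequently $(\hat\rho',\hat\sigma')\prec_{\rm q}(\hat\rho,\hat\sigma)$ means $f_{\hat\rho',\hat\sigma'}\le f_{\hat\rho,\hat\sigma}$ pointwise on $[0,1]$; combined with the symmetry this is the same as the set inclusion $\mathcal R(\hat\rho',\hat\sigma')\subseteq\mathcal R(\hat\rho,\hat\sigma)$, since two such regions with full projection onto the $x$-axis are nested precisely when their upper boundaries are (and then automatically their lower boundaries are too).

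For (i)$\Leftrightarrow$(ii) I would use the variational formula for the trace norm: for Hermitian $\hat A$,
\begin{equation}
\|\hat A\|_1=2\max_{0\le\hat Q\le\hat I}{\rm tr}[\hat Q\hat A]-{\rm tr}[\hat A],
\end{equation}
with the maximum attained by the projection onto the positive part of $\hat A$. Applying this to $\hat A=\hat\rho-t\hat\sigma$ gives $\|\hat\rho-t\hat\sigma\|_1=2\,h_{\mathcal R(\hat\rho,\hat\sigma)}(1,-t)-(1-t)$, where $h_{\mathcal R}(v):=\max_{r\in\mathcal R}\langle v,r\rangle$ is the support function; applying it instead to $t\hat\sigma-\hat\rho$ expresses the same norm through $h_{\mathcal R(\hat\rho,\hat\sigma)}(-1,t)$. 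Hence (ii) is equivalent to $h_{\mathcal R(\hat\rho',\hat\sigma')}\le h_{\mathcal R(\hat\rho,\hat\sigma)}$ in every direction of the form $(\pm1,\mp t)$ with $t\in\mathbb R$; the only remaining directions $(0,\pm1)$ give the values $1$ and $0$ for both regions, since each contains $(0,0)$ and $(1,1)$ and lies in $[0,1]^2$. As a compact convex set is the intersection of its supporting half-planes, $h_{\mathcal R'}\le h_{\mathcal R}$ in all directions is equivalent to $\mathcal R(\hat\rho',\hat\sigma')\subseteq\mathcal R(\hat\rho,\hat\sigma)$, i.e.\ to (i).

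For (i)$\Leftrightarrow$(iii) I would identify $Q^\eta_{\rm H}$ with the lower boundary. By definition $Q^\eta_{\rm H}(\hat\rho\|\hat\sigma)=\min\{y:(x,y)\in\mathcal R(\hat\rho,\hat\sigma),\ x\ge\eta\}=\min_{x\in[\eta,1]}g_{\hat\rho,\hat\sigma}(x)$, and $g_{\hat\rho,\hat\sigma}$ is nondecreasing: for $x_1<x_2$ the segment from $(0,0)\in\mathcal R(\hat\rho,\hat\sigma)$ to $(x_2,g_{\hat\rho,\hat\sigma}(x_2))\in\mathcal R(\hat\rho,\hat\sigma)$ meets the line $x=x_1$ at height $(x_1/x_2)g_{\hat\rho,\hat\sigma}(x_2)\le g_{\hat\rho,\hat\sigma}(x_2)$ (using $g_{\hat\rho,\hat\sigma}\ge0$), forcing $g_{\hat\rho,\hat\sigma}(x_1)\le g_{\hat\rho,\hat\sigma}(x_2)$. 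Therefore $Q^\eta_{\rm H}(\hat\rho\|\hat\sigma)=g_{\hat\rho,\hat\sigma}(\eta)=1-f_{\hat\rho,\hat\sigma}(1-\eta)$, so (iii) — that $g_{\hat\rho,\hat\sigma}(\eta)\le g_{\hat\rho',\hat\sigma'}(\eta)$ for all $\eta\in[0,1]$ — rewrites as $f_{\hat\rho',\hat\sigma'}(1-\eta)\le f_{\hat\rho,\hat\sigma}(1-\eta)$ for all $\eta$, which is exactly (i). This closes the cycle.

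The step I expect to be the main obstacle is the bookkeeping in the first two paragraphs: faithfully turning the ``upper curve lies above'' definition of $\prec_{\rm q}$ into the set inclusion $\mathcal R(\hat\rho',\hat\sigma')\subseteq\mathcal R(\hat\rho,\hat\sigma)$ — this is where the point-reflection symmetry and the ``full projection onto the $x$-axis'' property are essential, since otherwise the lower boundaries are uncontrolled — and checking that the family of directions $(\pm1,\mp t)$, $t\in\mathbb R$, together with the two vertical directions, exhausts all supporting half-planes of a planar convex body. Everything else, namely the trace-norm variational identity and the monotonicity of $g_{\hat\rho,\hat\sigma}$, is routine.
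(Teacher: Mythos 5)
The paper itself does not prove this statement—it is quoted from Buscemi and Gour [Buscemi2017] without proof—so there is no in-paper argument to compare against; judged on its own, your proof is correct and complete. Your reduction of all three conditions to properties of the testing region $\mathcal R(\hat\tau,\hat\omega)$ is exactly the mechanism underlying the cited reference: (ii) is the support-function description of $\mathcal R$ in the directions $(\pm 1,\mp t)$ via the identity $\|\hat A\|_1=2\max_{0\le\hat Q\le\hat I}{\rm tr}[\hat Q\hat A]-{\rm tr}[\hat A]$ (the cancellation of the ${\rm tr}[\hat A]$ terms correctly uses that all four operators are normalized states), and (iii) is the lower boundary $g=1-f(1-\cdot)$, with your convexity argument showing $g$ is nondecreasing so that $Q^\eta_{\rm H}=g(\eta)$. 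The two points you flagged as delicate are handled properly: the point-reflection symmetry $\hat Q\mapsto\hat I-\hat Q$ plus the full $x$-projection $[0,1]$ (from $\hat Q=x\hat I$) do turn ``upper curve above upper curve'' into the set inclusion $\mathcal R(\hat\rho',\hat\sigma')\subseteq\mathcal R(\hat\rho,\hat\sigma)$, since each region is the band between its lower and upper boundary functions; and the directions $(\pm1,\mp t)$ together with $(0,\pm1)$ do exhaust all support directions up to positive scaling, with the vertical ones giving the common values $1$ and $0$. So the proposal supplies a valid self-contained proof of a result the paper only cites.
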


Here,  the classical counterpart of  (i) $\Leftrightarrow$ (ii)  is that of Theorem~\ref{thm:d_majorization}.
In (iii) above, $Q^\eta_{\rm H} (\hat \rho \| \hat \sigma ) = \eta e^{- S_{\rm H}^\eta  (\hat \rho \| \hat \sigma ) }$ holds for $0 < \eta < 1$, where  $S_{\rm H}^\eta  (\hat \rho \| \hat \sigma )$ is the hypothesis testing divergence defined by Eq.~(\ref{def_SH}) of Appendix~\ref{appx:hypothesis_testing}.

\section{Remark on resource theory of entanglement}
\label{sec:entanglement}

Let us very briefly mention the relationship between majorization and entanglement.
(See, e.g., Ref.~\cite{Nielsen} for details. We also note Refs.~\cite{Oppenheim2002,Horodecki2002} as seminal works that made an analogy between entanglement and thermodynamics.)

Suppose that a bipartite system AB is shared by Alice and Bob.
They are allowed to perform any quantum operations on their own systems and to communicate by classical channels, which is called local operations and classical communications (LOCC).
For simplicity, we suppose that the state of AB is a pure state $| \Psi \rangle$.
In this case, product states are free states and LOCC are free operations.

By the Schmidt decomposition, the pure state can be always written as $| \Psi \rangle = \sum_i \sqrt{p_i} | \varphi_{{\rm A},i} \rangle | \varphi_{{\rm B},i} \rangle$ with a classical probability distribution $p_i$ and orthonormal bases  $\{ | \varphi_{{\rm A},i} \rangle \}$, $\{ | \varphi_{{\rm B},i} \rangle \}$ of A, B.
Then, the reduced density operator of A and B are respectively given by $\hat \rho_{\rm A} =  \sum_i p_i | \varphi_{{\rm A},i} \rangle \langle \varphi_{{\rm A},i} |$ and $\hat \rho_{\rm B} =  \sum_i p_i | \varphi_{{\rm B},i} \rangle \langle \varphi_{{\rm B},i} |$.  Since these reduced density operators are the same up to a unitary, we will only  focus on  $\hat \rho_{\rm A}$, and write it as $\hat \rho_\Psi$ to explicitly show its dependency on the global state $| \Psi \rangle$.

Entanglement of pure states can be quantified by the von Neumann entropy $S_1 (\hat \rho_\Psi)$, which is called entanglement entropy.  While the von Neumann entropy is a monotone under LOCC, it is not a complete monotone.
On the other hand, majorization can completely characterize state convertibility under LOCC, which is shown in the following theorem.

\begin{theorem}[Theorem 1 of \cite{Nielsen1999}]
$| \Psi \rangle$ can be converted into $| \Psi' \rangle$ by LOCC, if and only if $\hat \rho_\Psi \prec \hat \rho_{\Psi'}$.
\end{theorem}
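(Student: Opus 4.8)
The plan is to reduce the statement to a majorization relation between the Schmidt-coefficient vectors and then invoke the characterizations already established. Write the Schmidt decompositions $|\Psi\rangle = \sum_i \sqrt{p_i}\,|\varphi_{{\rm A},i}\rangle|\varphi_{{\rm B},i}\rangle$ and $|\Psi'\rangle = \sum_i \sqrt{p'_i}\,|\varphi'_{{\rm A},i}\rangle|\varphi'_{{\rm B},i}\rangle$, so that $\hat\rho_\Psi$ and $\hat\rho_{\Psi'}$ have diagonal distributions $p$ and $p'$. By the definition of quantum majorization, $\hat\rho_\Psi \prec \hat\rho_{\Psi'}$ is precisely $p \prec p'$, so the task is to show $|\Psi\rangle \to |\Psi'\rangle$ by LOCC if and only if $p \prec p'$. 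Since local unitaries are LOCC and leave the Schmidt coefficients unchanged, I may fix convenient Schmidt bases on both sides throughout.

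For sufficiency ($p\prec p' \Rightarrow$ convertibility), Theorem~\ref{thm:majorization} (i)$\Rightarrow$(iv) supplies a doubly stochastic $T$ with $p = Tp'$, and Birkhoff's theorem (Theorem~\ref{thm:Birkhoff}) writes $T = \sum_k r_k P_k$ with permutation matrices $P_k$ and $r_k\ge 0$, $\sum_k r_k = 1$. I would then build a one-way protocol: Alice performs the generalized measurement with Kraus operators $A_k := \sqrt{r_k}\,\sqrt{P'}\,P_k\,\sqrt{P}^{-1}$ on ${\rm supp}[\hat\rho_\Psi]$, where $P := {\rm diag}(p)$ and $P' := {\rm diag}(p')$. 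A short computation using that $P_k^\dagger P' P_k$ is diagonal together with $p = Tp'$ gives $\sum_k A_k^\dagger A_k = \hat I$, so this is a legitimate measurement, with outcome $k$ occurring with probability $r_k$. Identifying bipartite vectors with operators via $(A\otimes B)|\Phi\rangle \leftrightarrow A\,C\,B^{\rm T}$, the post-measurement state for outcome $k$ is proportional to $\sqrt{P'}P_k$, i.e.\ a permutation acting on Bob's half of $|\Psi'\rangle$. Alice communicates $k$ classically and Bob applies the inverse permutation unitary, deterministically restoring $|\Psi'\rangle$ in every branch; this is manifestly LOCC.

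For necessity (convertibility $\Rightarrow p \prec p'$), I would first reduce a general deterministic pure-to-pure LOCC protocol to a one-way protocol in which Alice applies a measurement $\{M_k\}$ with $\sum_k M_k^\dagger M_k = \hat I$ and Bob applies a unitary $U_k$ conditioned on the outcome $k$. Because the output is the fixed pure state $|\Psi'\rangle$ in every branch and $U_k$ does not affect Alice's marginal, each branch satisfies $M_k \hat\rho_\Psi M_k^\dagger = q_k\,\hat\rho_{\Psi'}$, where $q_k$ is the outcome probability. Taking a polar decomposition of $M_k\sqrt{\hat\rho_\Psi}$, whose positive part is $\sqrt{q_k}\,\sqrt{\hat\rho_{\Psi'}}$, yields (partial) isometries $W_k$ with $M_k\sqrt{\hat\rho_\Psi} = \sqrt{q_k}\,\sqrt{\hat\rho_{\Psi'}}\,W_k$. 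Substituting these into $\hat\rho_\Psi = \sqrt{\hat\rho_\Psi}\big(\sum_k M_k^\dagger M_k\big)\sqrt{\hat\rho_\Psi}$ gives $\hat\rho_\Psi = \sum_k q_k\, W_k^\dagger \hat\rho_{\Psi'} W_k$, exhibiting $\hat\rho_\Psi$ as a mixture of unitaries applied to $\hat\rho_{\Psi'}$. By Theorem~\ref{thm:majorization_q} (iii)$\Rightarrow$(i) this is exactly $\hat\rho_\Psi \prec \hat\rho_{\Psi'}$, i.e.\ $p \prec p'$.

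The delicate point, which I expect to be the main obstacle, is the reduction of an arbitrary multi-round LOCC protocol (with both parties measuring and adaptively communicating) to the one-way ``Alice-measures, Bob-corrects'' form used above; I would invoke the Lo--Popescu reduction for deterministic pure-state transformations and flag it as the step deserving the most care. Secondary technicalities are the treatment of non-full-rank $\hat\rho_\Psi$, so that $\sqrt{P}^{-1}$ and the polar decomposition are taken on supports, and the completion of each $W_k$ to a genuine unitary so that Theorem~\ref{thm:majorization_q} (iii) applies verbatim.
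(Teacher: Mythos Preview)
The paper does not give its own proof of this theorem; it merely quotes the result from Nielsen (Ref.~\cite{Nielsen1999}) and comments that it is reminiscent of Theorem~\ref{thm:majorization_q}. There is therefore no ``paper's proof'' to compare against.

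Your proposal is essentially Nielsen's original argument and is sound. Two small remarks. First, in the sufficiency direction your explicit Kraus operators $A_k=\sqrt{r_k}\,\sqrt{P'}\,P_k\,\sqrt{P}^{-1}$ give $\sum_k A_k^\dagger A_k=\sqrt{P}^{-1}\big(\sum_k r_k\,P_k^\dagger P' P_k\big)\sqrt{P}^{-1}$, whose diagonal is $T^{\rm T}p'$ rather than $Tp'=p$; replace $P_k$ by $P_k^\dagger$ (equivalently, decompose $T$ as $\sum_k r_k P_k^\dagger$, which Birkhoff permits) and the completeness relation goes through. Second, you correctly identify the Lo--Popescu reduction as the nontrivial step in the necessity direction; once that reduction is granted, your polar-decomposition argument yielding $\hat\rho_\Psi=\sum_k q_k\,W_k^\dagger\hat\rho_{\Psi'}W_k$ and the appeal to Theorem~\ref{thm:majorization_q}~(iii)$\Rightarrow$(i) are exactly right.
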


This is reminiscent of Theorem~\ref{thm:majorization_q} related to infinite-temperature thermodynamics.
However, there is a crucial difference between  entanglement and thermodynamics.
In fact, the direction of state conversion is opposite to that of majorization in the case of entanglement, in contrast to the case of thermodynamics at infinite temperature.
This is because the completely mixed state (the uniform distribution) corresponds to the maximally entangled state that has the maximum resource of entanglement, while in thermodynamics at infinite temperature, the completely mixed state is just a free state.


\chapter{Approximate and asymptotic majorization and divergence}
\label{chap:approximate_asymptotic}

In this chapter, we consider  an approximate (or ``smoothed'') version of the R\'enyi $0$- and $\infty$-divergences, called the smooth divergences.
We then consider  the asymptotic limit of these smooth divergences, leading to the concept called \textit{information spectrum} (i.e.,   the upper and lower spectral divergence rates corresponding to the  $0$- and $\infty$-divergences).
This formalism naturally leads to  characterization of asymptotic state convertibility of many-body systems even beyond  independent and identically distributed (i.i.d.) situations.

In Section~\ref{sec:smooth_divergence}, we introduce the smooth $0$- and $\infty$-divergences and the corresponding approximate majorization.
In Section~\ref{sec:asymptotic}, we consider the asymptotic limit and introduce the spectral divergence rates, and show their relation to asymptotic d-majorization.
In Section~\ref{sec_condition_information}, we investigate the condition with which the upper and lower spectral divergence rates collapse to a single value.
This is the quantum version of the asymptotic equipartition property (AEP).
In such a case, asymptotic state convertibility by d-majorization can be characterized by a single complete  monotone.
In Section~\ref{sec:misc}, we make some technical remarks on the smoothing of the divergences.

Throughout this chapter, we consider the general setup: quantum d-majorization and the corresponding quantum divergences.
The classical counterpart is immediately obtained as a special case of the quantum formulation.

\

The central concepts introduced in this chapter are summarized as follows.
First, for $\varepsilon > 0$, we introduce smooth divergences $S_0^\varepsilon (\hat \rho \| \hat \sigma)$ and $S_\infty^\varepsilon (\hat \rho \| \hat \sigma)$, which are $\varepsilon$-approximations of the $0$- and $\infty$-divergences, respectively.
We next consider the asymptotic limit, where $\hat \rho_n$ and $\hat \sigma_n$ are regarded as states of a large system of size $n$.
A simplest example is the i.i.d. case, where $\hat \rho_n := \hat \rho^{\otimes n}$ and $\hat \sigma_n := \hat \sigma^{\otimes n}$.
We emphasize, however, that our formulation here is very general and not restricted to the i.i.d. case.
Then, we take the asymptotic limit of $S_0^\varepsilon (\hat \rho_n \| \hat \sigma_n)$ and $S_\infty^\varepsilon (\hat \rho_n \| \hat \sigma_n)$ by taking $n \to \infty$ first and then $\varepsilon \to + 0$, and we obtain  the lower and the upper  spectral divergence rates, respectively.
We show that the spectral divergence rates give a characterization of state convertibility in the asymptotic regime.
Under certain conditions including ergodicity, we show that $S_0^\varepsilon (\hat \rho_n \| \hat \sigma_n) \simeq S_\infty^\varepsilon (\hat \rho_n \| \hat \sigma_n) \simeq S_1 (\hat \rho_n \| \hat \sigma_n) $ holds with  $n \to \infty$ and $\varepsilon \to + 0$, which is a consequence of the quantum AEP and implies that the KL divergence rate serves as a complete monotone.

We note that quantum hypothesis testing is a useful tool in this chapter, which gives a divergence-like quantity called the hypothesis testing divergence $S_{\rm H}^\eta (\hat \rho \| \hat \sigma)$.
While we postpone the details of hypothesis testing to Appendix \ref{appx:hypothesis_testing}, here we only remark that, roughly speaking,
$S_{\rm H}^{\eta \simeq 1} (\hat \rho \| \hat \sigma) \simeq S_0^{\varepsilon \simeq 0} (\hat \rho \| \hat \sigma)$
and
$S_{\rm H}^{\eta \simeq 0} (\hat \rho \| \hat \sigma) \simeq S_\infty^{\varepsilon \simeq 0} (\hat \rho \| \hat \sigma)$ hold (see Proposition~\ref{Faist_prop} for a rigorous formulation).
Thus, the hypothesis testing divergence has essentially the same information as the smooth $0$- and $\infty$-divergences; this is particularly the case when we consider the asymptotic limit, because the correction terms are independent of $n$.


\section{Smooth divergence and approximate majorization}
\label{sec:smooth_divergence}

We consider  approximate state conversion and smooth divergences in line with the framework developed in Refs.~\cite{Renner2005,Renner2004}. 
Specifically, we consider approximate quantum d-majorization, and correspondingly, the smooth quantum R\'enyi $0$- and $\infty$-divergences introduced in Ref.~\cite{Datta2009}.

Let $\varepsilon \geq 0$.
We define the $\varepsilon$-neighborhood of a normalized state $\hat \rho$ by  
\begin{equation}
B^\varepsilon (\hat \rho ) := \{ \hat \tau  : \  D(  \hat \tau , \hat \rho ) \leq \varepsilon,  {\rm tr}[\hat \tau] = 1, \hat \tau \geq 0 \}.
\label{normalized_B}
\end{equation}
Here we adopted the trace distance as a norm for smoothing, while this choice is not essential and other norms can be adopted equivalently.
Also, only normalized states are allowed in the above definition, while this is again not essential.
See also Section~\ref{sec:misc} for alternative ways of smoothing.

We now define approximate d-majorization and smooth divergences.

\begin{definition}[Approximate d-majorization]
Let $\varepsilon \geq 0$.  We say that a pair $(\hat \rho, \hat \sigma )$ $\varepsilon$-approximately d-majorizes another pair $(\hat \rho', \hat \sigma')$, written as $(\hat \rho', \hat \sigma') \prec^\varepsilon (\hat \rho , \hat \sigma)$, if  there exists $\hat \tau' \in B^\varepsilon (\hat \rho' )$ such that $(\hat \tau', \hat \sigma') \prec (\hat \rho ,\hat \sigma )$.
\label{approx_majorization}
\end{definition}

\begin{definition}[Smooth R\'enyi $0/\infty$-divergence]
\begin{equation}
S_\infty^\varepsilon (\hat \rho \| \hat \sigma ) := \min_{\hat \tau  \in B^\varepsilon (\hat \rho )} S_\infty ( \hat \tau \|  \hat \sigma),
\end{equation}
\begin{equation}
S_0^\varepsilon(\hat \rho \| \hat \sigma ) := \max_{\hat \tau  \in B^\varepsilon (\hat \rho )} S_0 ( \hat \tau \|  \hat \sigma).
\end{equation}
\end{definition}

The smooth divergences satisfy the monotonicity, which is  the smoothed version of Theorem~\ref{thm:quantum_Renyi_monotonicity} or Theorem~\ref{thm:asymp0_q} (a).

\begin{theorem}[Monotonicity of the smooth divergences]
$ (\hat \rho' , \hat \sigma' ) \prec  (\hat \rho , \hat \sigma )$ implies that for any $\varepsilon > 0$,
\begin{description}
\item[(a)]  $
S_\infty^\varepsilon (\hat \rho \| \hat \sigma)  \geq  S_\infty^\varepsilon (\hat \rho'  \| \hat \sigma' ) 
$.
\item[(b)] 
$
S_0^\varepsilon (\hat \rho \| \hat \sigma) \geq  S_0^{\varepsilon^2 /6} ( \hat \rho ' \| \hat \sigma' ) + \ln (\varepsilon^2 / 6) 
$.
\end{description}
\label{Smooth_monotone}
\end{theorem}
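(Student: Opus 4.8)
\textbf{Proof plan for Theorem~\ref{Smooth_monotone}.}

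The plan is to leverage Theorem~\ref{thm:asymp0_q}(a), i.e.\ the monotonicity of the \emph{exact} $0$- and $\infty$-divergences under d-majorization, together with a careful bookkeeping of how the smoothing balls $B^\varepsilon$ behave under CPTP maps. Recall that $(\hat \rho', \hat \sigma') \prec (\hat \rho, \hat \sigma)$ means there is a CPTP map $\mathcal E$ with $\hat \rho' = \mathcal E(\hat \rho)$ and $\hat \sigma' = \mathcal E(\hat \sigma)$.

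For part (a): let $\hat \tau$ achieve the minimum in $S_\infty^\varepsilon(\hat \rho \| \hat \sigma)$, so $D(\hat \tau, \hat \rho) \le \varepsilon$ and $S_\infty(\hat \tau \| \hat \sigma) = S_\infty^\varepsilon(\hat \rho \| \hat \sigma)$. Then $\mathcal E(\hat \tau)$ is a normalized state, and by the monotonicity of the trace distance under CPTP maps (Theorem~9.2 of \cite{Nielsen}) we get $D(\mathcal E(\hat \tau), \hat \rho') = D(\mathcal E(\hat \tau), \mathcal E(\hat \rho)) \le D(\hat \tau, \hat \rho) \le \varepsilon$, so $\mathcal E(\hat \tau) \in B^\varepsilon(\hat \rho')$. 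By the exact monotonicity of $S_\infty$ (Theorem~\ref{thm:quantum_Renyi_monotonicity}), $S_\infty(\mathcal E(\hat \tau) \| \mathcal E(\hat \sigma)) \le S_\infty(\hat \tau \| \hat \sigma)$, and since $\mathcal E(\hat \sigma) = \hat \sigma'$, the left side is at least $S_\infty^\varepsilon(\hat \rho' \| \hat \sigma')$ by definition of the infimum. Chaining these inequalities gives $S_\infty^\varepsilon(\hat \rho' \| \hat \sigma') \le S_\infty(\hat \tau \| \hat \sigma) = S_\infty^\varepsilon(\hat \rho \| \hat \sigma)$, which is (a).

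Part (b) is the substantive one: here the naive argument fails because $S_0$ is a \emph{max} over the ball, and pushing forward the optimizer for $(\hat \rho, \hat \sigma)$ through $\mathcal E$ need not land on the optimizer for $(\hat \rho', \hat \sigma')$; moreover $S_0$ is not operator-monotone in a way that plays nicely with smoothing, and the support-projection in the definition of $S_0$ is discontinuous. The strategy I would follow is to route through the hypothesis testing divergence $S_{\rm H}^\eta$ (Appendix~\ref{appx:hypothesis_testing}): using the relation $S_{\rm H}^{\eta \simeq 1}(\hat \rho \| \hat \sigma) \simeq S_0^{\varepsilon \simeq 0}(\hat \rho \| \hat \sigma)$ (Proposition~\ref{Faist_prop}) one can replace the smooth $0$-divergence by a hypothesis-testing quantity, which \emph{is} manifestly monotone under CPTP maps (it is an optimization over $0 \le \hat Q \le \hat I$, and precomposing $\hat Q$ with $\mathcal E^\dagger$ is feasible since $\mathcal E^\dagger$ is unital and positive). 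The passage back and forth between $S_0^\varepsilon$ and $S_{\rm H}^\eta$ incurs the loss from $\varepsilon$ to $\varepsilon^2/6$ and the additive $\ln(\varepsilon^2/6)$ term; this is exactly the shape of the correction in the statement. Concretely: fix $\hat \tau'$ optimal for $S_0^\varepsilon(\hat \rho \| \hat \sigma)$, relate $S_0(\hat \tau' \| \hat \sigma)$ to a hypothesis test for $(\hat \rho, \hat \sigma)$ with parameter $\eta$ near $1$, pull that test back through $\mathcal E^\dagger$ to obtain a test for $(\hat \rho', \hat \sigma')$ with a slightly degraded $\eta$, and finally convert that test back into a bound on $S_0^{\varepsilon^2/6}(\hat \rho' \| \hat \sigma')$ using the quantitative version of Proposition~\ref{Faist_prop}, collecting the $\ln(\varepsilon^2/6)$ slack along the way.

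The main obstacle is getting the constants right in the two-way translation between the smooth $0$-divergence and the hypothesis testing divergence — this is where the factor $1/6$ and the logarithmic correction come from, and it requires the quantitative form of the $S_0 \leftrightarrow S_{\rm H}$ correspondence (essentially a Gentle Measurement / operator Markov inequality argument) rather than the asymptotic statement $\simeq$. I would isolate that quantitative lemma (it is the content behind Proposition~\ref{Faist_prop} and appears in \cite{Datta2009,Faist2018}) and cite it, then the monotonicity of $S_{\rm H}^\eta$ under CPTP maps — which follows immediately from feasibility of $\mathcal E^\dagger(\hat Q)$ as noted above — does the rest with only elementary estimates.
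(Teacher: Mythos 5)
Your proposal follows essentially the same route as the paper: part (a) is the paper's argument verbatim (push the smoothing ball, or its optimizer, through $\mathcal E$, use monotonicity of the trace distance to stay in $B^\varepsilon(\hat\rho')$, then the exact monotonicity of $S_\infty$), and part (b) is exactly the paper's detour through the hypothesis testing divergence, i.e.\ the two-sided quantitative relation of Proposition~\ref{Faist_prop} combined with the monotonicity of $S_{\rm H}^{\eta}$ (Proposition~\ref{prop:H_monotone}), which is where the $\varepsilon^2/6$ and $\ln(\varepsilon^2/6)$ corrections arise. One cosmetic remark: in the paper the monotonicity step is applied at the fixed parameter $\eta = 1-\varepsilon^2/6$ (the test for $(\hat\rho',\hat\sigma')$ is pulled back via $\mathcal E^\dagger$ to a test for $(\hat\rho,\hat\sigma)$ with no degradation of $\eta$), so the parameter loss you attribute to the pullback actually lives entirely in the $S_0^\varepsilon \leftrightarrow S_{\rm H}^\eta$ translation, as you correctly state elsewhere.
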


\begin{proof}
Let $\mathcal E$ be a CPTP map such that $\hat \rho' = \mathcal E (\hat \rho)$ and $\hat \sigma' = \mathcal E  ( \hat \sigma)$.

(a)
The proof is parallel to that of Lemma 59 of Ref.~\cite{Gour} for ordinary majorization:
\begin{eqnarray}
S_\infty^\varepsilon (\hat \rho'  \| \hat \sigma') 
&:=& \min_{\hat \tau'  \in B^\varepsilon (\hat \rho' )} S_\infty ( \hat \tau' \|  \hat \sigma') \\
&\leq& \min_{\hat \tau  \in B^\varepsilon (\hat \rho )} S_\infty ( \mathcal E( \hat \tau) \|  \hat \sigma') \\
&\leq& \min_{\hat \tau  \in B^\varepsilon (\hat \rho )} S_\infty ( \hat \tau \|  \hat \sigma) \\
&=:& S_\infty^\varepsilon (\hat \rho  \| \hat \sigma).
\end{eqnarray}
Here, to obtain the second line, we used that  $\{ \mathcal E (\hat \tau ) :  \hat \tau \in B^\varepsilon (\hat \rho) \} \subset  B^\varepsilon (\hat \rho')$ holds from the monotonicity of the trace distance.
We also used the monotonicity of $S_\infty (\hat \rho \| \hat \sigma ) $ 
 to obtain the third line.

(b)
From the monotonicity of the hypothesis testing divergence $S_{\rm H}^\eta (\hat \rho \| \hat \sigma)$ (Proposition~\ref{prop:H_monotone}) and Proposition \ref{Faist_prop} in Appendix~\ref{appx:hypothesis_testing}, we have 
\begin{eqnarray}
S_0^\varepsilon (\hat \rho \| \hat \sigma) 
&\geq&   S_{\rm H}^{1-\varepsilon'}(\hat \rho \| \hat \sigma ) - \ln \frac{1-\varepsilon'}{\varepsilon'} \\
&\geq& S_{\rm H}^{1-\varepsilon'}(\hat \rho' \| \hat \sigma') - \ln \frac{1-\varepsilon'}{\varepsilon'} \\
&\geq& S_0^{\varepsilon'} (\hat \rho'  \| \hat \sigma') + \ln \varepsilon',
\end{eqnarray}
where $\varepsilon' := \varepsilon^2 / 6$. 
$\Box$
\end{proof}

The above theorem can be generalized to the case where state conversion is not exact.

\begin{theorem}[Necessary conditions for approximate state conversion]
For any $\varepsilon \geq 0$, $(\hat \rho', \hat \sigma')  \prec^\varepsilon  (\hat \rho , \hat \sigma)$ implies that:
\begin{description}
\item[(a)]
 For any $\delta \geq 0$,
 \begin{equation}
 S_\infty^\delta  (\hat \rho \| \hat \sigma) \geq S_\infty^{\varepsilon + \delta}(\hat \rho' \| \hat \sigma')  .
 \end{equation}
\item[(b)]  For any $\delta$ satisfying $\delta^2/6 > \varepsilon$,
\begin{equation}
S_0^\delta (\hat \rho \| \hat \sigma )
\geq 
S_0^{\delta^2/6 - \varepsilon} (\hat \rho' \| \hat \sigma' )
+ \ln ( \delta^2/6 ) .
\end{equation}
\end{description}
\label{thm:asymp1_q_a}
\end{theorem}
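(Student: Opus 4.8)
The plan is to reduce the approximate-conversion statement to the exact-conversion result of Theorem~\ref{Smooth_monotone} by unwrapping the definition of $\prec^\varepsilon$. By Definition~\ref{approx_majorization}, $(\hat\rho',\hat\sigma')\prec^\varepsilon(\hat\rho,\hat\sigma)$ means there is a state $\hat\tau'\in B^\varepsilon(\hat\rho')$ with $(\hat\tau',\hat\sigma')\prec(\hat\rho,\hat\sigma)$, i.e.\ an exact CPTP map $\mathcal E$ with $\hat\tau'=\mathcal E(\hat\rho)$ and $\hat\sigma'=\mathcal E(\hat\sigma)$. So Theorem~\ref{Smooth_monotone} applies verbatim to the pair $(\hat\tau',\hat\sigma')$ in place of $(\hat\rho',\hat\sigma')$. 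The remaining work is purely a ``smoothing-ball arithmetic'' step: relate $S_{\infty/0}^{\bullet}(\hat\tau'\|\hat\sigma')$ to $S_{\infty/0}^{\bullet}(\hat\rho'\|\hat\sigma')$ using $D(\hat\tau',\hat\rho')\le\varepsilon$ and the triangle inequality for the trace distance, which gives the inclusion $B^{\delta'}(\hat\tau')\subset B^{\varepsilon+\delta'}(\hat\rho')$ for any $\delta'\ge0$ (and, dually, $B^{\delta'}(\hat\rho')\subset B^{\varepsilon+\delta'}(\hat\tau')$).

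For part (a): apply Theorem~\ref{Smooth_monotone}(a) with smoothing parameter $\delta$ to the exact relation $(\hat\tau',\hat\sigma')\prec(\hat\rho,\hat\sigma)$, obtaining $S_\infty^\delta(\hat\rho\|\hat\sigma)\ge S_\infty^\delta(\hat\tau'\|\hat\sigma')$. Then bound $S_\infty^\delta(\hat\tau'\|\hat\sigma')$ below by $S_\infty^{\varepsilon+\delta}(\hat\rho'\|\hat\sigma')$: since $S_\infty^\delta$ is a minimization over $B^\delta(\hat\tau')$ and $B^\delta(\hat\tau')\subset B^{\varepsilon+\delta}(\hat\rho')$, the minimum over the larger set is no larger, i.e.\ $S_\infty^\delta(\hat\tau'\|\hat\sigma')\ge S_\infty^{\varepsilon+\delta}(\hat\rho'\|\hat\sigma')$. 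Chaining the two inequalities gives the claim. For part (b): apply Theorem~\ref{Smooth_monotone}(b) with smoothing parameter $\delta$ to $(\hat\tau',\hat\sigma')\prec(\hat\rho,\hat\sigma)$, yielding $S_0^\delta(\hat\rho\|\hat\sigma)\ge S_0^{\delta^2/6}(\hat\tau'\|\hat\sigma')+\ln(\delta^2/6)$. Since $S_0^{\delta^2/6}$ is a maximization over $B^{\delta^2/6}(\hat\tau')$, and the hypothesis $\delta^2/6>\varepsilon$ guarantees $\delta^2/6-\varepsilon\ge0$ so that $B^{\delta^2/6-\varepsilon}(\hat\rho')\subset B^{\delta^2/6}(\hat\tau')$, we get $S_0^{\delta^2/6}(\hat\tau'\|\hat\sigma')\ge S_0^{\delta^2/6-\varepsilon}(\hat\rho'\|\hat\sigma')$. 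Substituting gives the stated bound.

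The only genuinely nontrivial input is Theorem~\ref{Smooth_monotone}(b), whose proof goes through the hypothesis-testing divergence and Proposition~\ref{Faist_prop}; here I would simply invoke it as a black box. The step that deserves a little care — and which I expect to be the main (minor) obstacle — is the monotonic behaviour of the smoothing balls under both shrinking the radius and recentering: one must check that $S_\infty^\delta$ (a min over the ball) is \emph{non-increasing} in the radius while $S_0^\delta$ (a max over the ball) is \emph{non-decreasing}, and that recentering by $\le\varepsilon$ only enlarges the effective ball. Both follow immediately from the set inclusions above together with the fact that all the balls are taken over normalized states, so no subnormalization subtleties arise. One should also note that the edge cases $\varepsilon=0$ (where $\hat\tau'=\hat\rho'$ and the statement collapses to Theorem~\ref{Smooth_monotone}) and $\delta=0$ in part (a) are covered without modification.
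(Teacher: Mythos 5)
Your proposal is correct and follows essentially the same route as the paper's proof: unwrap Definition~\ref{approx_majorization} to get the CPTP map $\mathcal E$ with $\hat\tau' := \mathcal E(\hat\rho)$, apply Theorem~\ref{Smooth_monotone} to the exact pair $(\hat\tau',\hat\sigma')$, and then transfer between $\hat\tau'$ and $\hat\rho'$ via the triangle inequality for the trace distance (the paper phrases this by exhibiting the optimizer $\hat\tau''$ as a candidate in the enlarged/shrunk ball, which is exactly your ball-inclusion argument). No gaps.
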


\begin{proof}
If $(\hat \rho', \hat \sigma')  \prec^\varepsilon  (\hat \rho , \hat \sigma)$, there exists a CPTP map $\mathcal E$ such that $\hat \sigma' = \mathcal E(\hat \sigma)$ and  $D (\hat \rho', \hat \tau') \leq \varepsilon$ with $\hat \tau' := \mathcal E (\hat \rho)$.

(a) 
The proof of this case is parallel to that of Lemma 64 of Ref.~\cite{Gour}.
Theorem~\ref{Smooth_monotone} (a) implies $S_\infty^\delta (\hat \rho \| \hat \sigma) \geq S_\infty^\delta (\hat \tau' \| \hat \sigma')$. We can take $\hat \tau''$ such that $S_\infty (\hat \tau''  \| \hat \sigma' ) = S_\infty^\delta ( \hat \tau' \| \hat \sigma')$, and then 
$D (\hat \tau'',  \hat \rho' ) \leq D( \hat \tau'' , \hat \tau' ) + D( \hat \tau' , \hat \rho' ) \leq  \delta+ \varepsilon$.
Therefore, $ S_\infty^\delta (\hat \tau' \| \hat \sigma') = S_\infty (\hat  \tau'' \| \hat \sigma' )  \geq S_\infty^{\varepsilon + \delta} (\hat \rho' \| \hat \sigma ' )$.

(b) 
The following proof is based on Ref.~\cite{Mitsuhashi2020p}.
Let $\delta' := \delta^2 /6$.
From Theorem~\ref{Smooth_monotone} (b), we have $S_0^\delta (\hat \rho \| \hat \sigma ) \geq S_0^{\delta'} (\hat \tau' \| \hat \sigma' ) + \ln (\delta' )$.
Meanwhile, let $\hat \tau''$ be the optimal choice such that   $S_0^{\delta' - \varepsilon} (\hat \rho' \| \hat \sigma' ) = S_0 (\hat \tau'' \| \hat \sigma' )$ with $D( \hat \tau'', \hat \rho' ) \leq \delta' - \varepsilon$.
From $D(\hat \tau', \hat \tau'') \leq D(\hat \tau', \hat \rho') + D(\hat \rho', \hat \tau'') \leq \delta'$, $\hat \tau''$ is a candidate for maximization in $S_0^{\delta'} (\hat \tau' \| \hat \sigma' )$.
Thus, $S_0^{\delta'} (\hat \tau' \| \hat \sigma' ) \geq S_0^{\delta' - \varepsilon} (\hat \rho' \| \hat \sigma' )$.
$\Box$
\end{proof}

We next consider the approximate version of Theorem~\ref{thm:asymp0_q} (b).

\begin{theorem}[Sufficient condition for approximate state conversion]
For $\varepsilon \geq 0$,  $(\hat \rho', \hat \sigma')  \prec^\varepsilon  (\hat \rho , \hat \sigma)$ holds if (but not only if)
\begin{equation}
S_\infty^{\varepsilon /2} (\hat \rho' \| \hat \sigma') \leq S_0^{\varepsilon /2} (\hat \rho \| \hat \sigma).
\end{equation}
\label{thm:asymp1_q_b}
\end{theorem}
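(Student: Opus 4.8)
The plan is to mimic the construction in the proof of Theorem~\ref{thm:asymp0_q} (b), but applied to smoothed states rather than the exact states. The hypothesis $S_\infty^{\varepsilon/2}(\hat\rho'\|\hat\sigma')\leq S_0^{\varepsilon/2}(\hat\rho\|\hat\sigma)$ means there exist $\hat\tau\in B^{\varepsilon/2}(\hat\rho)$ achieving $S_0(\hat\tau\|\hat\sigma)=S_0^{\varepsilon/2}(\hat\rho\|\hat\sigma)$ and $\hat\tau'\in B^{\varepsilon/2}(\hat\rho')$ achieving $S_\infty(\hat\tau'\|\hat\sigma')=S_\infty^{\varepsilon/2}(\hat\rho'\|\hat\sigma')$, so that $S_\infty(\hat\tau'\|\hat\sigma')\leq S_0(\hat\tau\|\hat\sigma)$. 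However, $\hat\tau$ and $\hat\tau'$ are in general subnormalized perturbations, and more importantly the exact-conversion result Theorem~\ref{thm:asymp0_q} (b) requires a CPTP map sending $\hat\tau\mapsto\hat\tau'$ and $\hat\sigma\mapsto\hat\sigma'$ — but we only get to choose the map's action on $\hat\rho$, not on $\hat\tau$. So the first step is to realize that we should instead directly build a measure-and-prepare channel $\mathcal E$ acting on the \emph{unsmoothed} pair, exactly as in the proof of Theorem~\ref{thm:asymp0_q} (b), using the projection $\hat P_{\hat\tau}$ onto the support of $\hat\tau$.

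Concretely, I would set $c := e^{-S_0(\hat\tau\|\hat\sigma)} = \mathrm{tr}[\hat P_{\hat\tau}\hat\sigma]$ and define
\begin{equation}
\mathcal E(\hat\xi) := \mathrm{tr}[\hat P_{\hat\tau}\hat\xi]\,\hat\tau' + \bigl(1-\mathrm{tr}[\hat P_{\hat\tau}\hat\xi]\bigr)\,\hat\sigma'',
\quad
\hat\sigma'' := \frac{\hat\sigma' - c\hat\tau'}{1-c}
\end{equation}
(and $\mathcal E(\hat\xi):=\hat\tau'$ with $\hat\rho'=\hat\sigma'$ if $c=1$). The condition $S_\infty(\hat\tau'\|\hat\sigma')\leq S_0(\hat\tau\|\hat\sigma)$, i.e.\ $e^{-S_\infty(\hat\tau'\|\hat\sigma')}\geq c$, gives $\hat\sigma' \geq e^{-S_\infty(\hat\tau'\|\hat\sigma')}\hat\tau' \geq c\hat\tau'$, so $\hat\sigma''\geq 0$ and $\mathcal E$ is a legitimate CPTP map, by the same argument as in Theorem~\ref{thm:asymp0_q} (b). The key observation is now that the projection $\hat P_{\hat\tau}$ is chosen to detect $\hat\tau$ perfectly, but when applied to $\hat\rho$ it need not give $1$; rather $1 - \mathrm{tr}[\hat P_{\hat\tau}\hat\rho] \leq D(\hat\tau,\hat\rho)\cdot(\text{something bounded})$. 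More carefully, $|\mathrm{tr}[\hat P_{\hat\tau}(\hat\rho-\hat\tau)]| \leq \|\hat\rho-\hat\tau\|_1/2 = D(\hat\rho,\hat\tau) \leq \varepsilon/2$ since $\hat P_{\hat\tau}$ has operator norm $\leq 1$ and $\mathrm{tr}[\hat P_{\hat\tau}\hat\tau] = \mathrm{tr}[\hat\tau]=1$.

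Then I would compute $\mathcal E(\hat\rho)$ and compare it with $\hat\rho'$ in trace distance. Writing $x := \mathrm{tr}[\hat P_{\hat\tau}\hat\rho]$, we have $\mathcal E(\hat\rho) = x\hat\tau' + (1-x)\hat\sigma''$, whereas $\mathcal E(\hat\tau) = \hat\tau'$. So $D(\mathcal E(\hat\rho),\hat\tau') = (1-x)\,D(\hat\tau',\hat\sigma'') \leq 1-x \leq \varepsilon/2$; combined with $D(\hat\tau',\hat\rho')\leq\varepsilon/2$ and the triangle inequality, $D(\mathcal E(\hat\rho),\hat\rho')\leq\varepsilon$. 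Meanwhile $\mathcal E(\hat\sigma) = \mathrm{tr}[\hat P_{\hat\tau}\hat\sigma]\hat\tau' + (1-\mathrm{tr}[\hat P_{\hat\tau}\hat\sigma])\hat\sigma'' = c\hat\tau' + (1-c)\hat\sigma'' = \hat\sigma'$ exactly. Hence with $\hat\tau_* := \mathcal E(\hat\rho) \in B^\varepsilon(\hat\rho')$ we have $(\hat\tau_*,\hat\sigma')\prec(\hat\rho,\hat\sigma)$, which by Definition~\ref{approx_majorization} is precisely $(\hat\rho',\hat\sigma')\prec^\varepsilon(\hat\rho,\hat\sigma)$. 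I expect the main obstacle to be purely bookkeeping: handling the subnormalization of the smoothed states $\hat\tau,\hat\tau'$ (the $B^\varepsilon$ in the excerpt uses normalized states, so this may actually be a non-issue here), the degenerate case $c=1$, and verifying that $\hat P_{\hat\tau}$-based hypothesis testing indeed realizes $S_0(\hat\tau\|\hat\sigma)$ — all of which are already dealt with in the proof of Theorem~\ref{thm:asymp0_q} and can be cited.
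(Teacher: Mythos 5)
Your proof is correct and is essentially the paper's own argument: the paper likewise takes optimal $\hat\tau\in B^{\varepsilon/2}(\hat\rho)$ and $\hat\tau'\in B^{\varepsilon/2}(\hat\rho')$, obtains from Theorem~\ref{thm:asymp0_q}~(b) a CPTP map $\mathcal E$ with $\mathcal E(\hat\tau)=\hat\tau'$ and $\mathcal E(\hat\sigma)=\hat\sigma'$, applies it to $\hat\rho$, and concludes by the triangle inequality; your explicit measure-and-prepare computation merely unrolls that same construction. Your initial worry that the black-box route fails because the map must act on $\hat\rho$ rather than $\hat\tau$ is unfounded: the paper gets $D(\mathcal E(\hat\rho),\hat\tau')\leq D(\hat\rho,\hat\tau)\leq \varepsilon/2$ directly from the monotonicity of the trace distance, which is exactly what your hand computation $1-{\rm tr}[\hat P_{\hat\tau}\hat\rho]\leq\varepsilon/2$ reproduces.
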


\begin{proof}
The proof is parallel to that of Lemma 63 of Ref.~\cite{Gour}.
We can take $\hat \tau'$, $\hat \tau$ such that 
$S_\infty^{\varepsilon /2} (\hat \rho' \| \hat \sigma' ) = S_\infty( \hat \tau' \| \hat \sigma')$, $S_0^{\varepsilon /2} (\hat \rho \| \hat \sigma) = S_0 (\hat \tau  \| \hat \sigma )$, and 
$D( \hat \tau' , \hat \rho' ) \leq \varepsilon /2$, $D( \hat \tau , \hat \rho )   \leq \varepsilon /2$. From Theorem~\ref{thm:asymp0_q} (b), we have $(\hat \tau', \hat \sigma') \prec (\hat \tau, \hat \sigma)$, that is, there exists $\mathcal E$ such that $\hat \tau'= \mathcal E( \hat \tau)$C$\hat \sigma' = \mathcal E (\hat \sigma)$.  By defining $\hat \tau'' := \mathcal E(\hat \rho)$, we have
\begin{equation}
D( \hat \tau'' , \hat \rho' ) \leq D(  \hat \tau'' , \hat \tau' ) + D( \hat \tau' , \hat \rho' ) \leq  D( \hat \rho , \hat \tau ) + D( \hat \tau' , \hat \rho' ) \leq  \varepsilon / 2 + \varepsilon / 2 = \varepsilon,
\end{equation}
which implies $(\hat \rho', \hat \sigma')  \prec^\varepsilon  (\hat \rho , \hat \sigma)$. $\Box$
\end{proof}

\section{Information spectrum and asymptotic majorization}
\label{sec:asymptotic}

We next consider information spectrum and the asymptotic limit of majorization.
The concept of information spectrum has been introduced by Han and Verd\'u~\cite{Han1993,Han2003} and generalized to the quantum regime by Nagaoka and Hayashi~\cite{Nagaoka2007}.
It has been rewritten in terms of the smooth entropies and divergences by Datta and Renner~\cite{Datta_Renner,Datta2009}, on which the argument of this section is based.

To take the asymptotic limit of d-majorization, we consider a sequence of quantum states,   $\widehat{P}:= \{ \hat \rho_n \}_{n \in \mathbb N}$, where $\hat \rho_n \in \mathcal S (\mathcal H^{\otimes n})$.
This can be an arbitrary sequence, and is not restricted to an i.i.d. sequence.
We note that, in the  case of i.i.d., we can write $\hat \rho_n := \hat \rho^{\otimes n}$ with $\hat \rho \in \mathcal S ( \mathcal  H)$.
First, we define the asymptotic limit of the von Neumann entropy rate and the KL divergence rate:

\begin{definition}
Let $\widehat{P} := \{ \hat \rho_n \}$, $\widehat{\Sigma} := \{ \hat \sigma_n \}$ be sequences of quantum states.  The von Neumann entropy rate is defined as
\begin{equation}
S_1 (\widehat{P}) := \lim_{n \to \infty} \frac{1}{n} S_1 (\hat \rho_n ),
\label{entropy_rate_q}
\end{equation}
and the KL divergence rate is defined as
\begin{equation}
S_1(\widehat{P} \|\widehat{\Sigma} ) := \lim_{n \to \infty} \frac{1}{n} S_1(\hat \rho_n \| \hat \sigma_n).
\label{divergence_rate_q}
\end{equation}
We note that these limits do not necessarily exist.
\end{definition}

We next define asymptotic state conversion for sequences of quantum states  in terms of d-majorization.

\begin{definition}[Asymptotic quantum d-majorization]
Let $\widehat{P} := \{ \hat \rho_n \}$, $\widehat{\Sigma} := \{ \hat \sigma_n \}$, $\widehat{P}' := \{ \hat \rho'_n\}$, $\widehat{\Sigma}' := \{ \hat \sigma'_n \}$ be sequences of quantum states.
Then, $(\widehat{P}, \widehat{\Sigma})$ asymptotically d-majorizes $(\widehat{P}', \widehat{\Sigma}')$, written as $(\widehat{P}', \widehat{\Sigma}') \prec^{\rm a} (\widehat{P}, \widehat{\Sigma})$, if there exists a sequence of CPTP maps $\{ \mathcal E_n \}$ such that 
\begin{equation}
\lim_{n \to \infty} D( \mathcal E_n ( \hat \rho_n ),   \hat \rho'_n ) = 0, \ \ \mathcal E_n (\hat \sigma_n) = \hat \sigma'_n.
\end{equation}
Or equivalently, for any $\varepsilon >0$, there exists $N_\varepsilon \in \mathbb N$ such that $(\hat \rho'_n, \hat \sigma'_n) \prec^\varepsilon (\hat \rho_n, \hat \sigma_n)$ holds for all $n \geq N_\varepsilon$.
\label{quantum_d_majorization_asymp}
\end{definition}

We emphasize that in the above definition the state conversion of $\hat \sigma_n$ should be exact.  
In terms of thermodynamics, this means that Gibbs-preserving maps should preserve Gibbs states exactly (see also Section~\ref{sec:work_asymptotic}).

We now introduce the upper and lower  spectral divergence rates for sequences of quantum states, which are respectively given by the asymptotic limit of $\infty$- and $0$-divergences~\cite{Datta2009}.

\begin{definition}[Quantum spectral divergence rates] 
The upper and lower spectral divergence rates are respectively defined as
\begin{equation}
\overline{S}(\widehat{P} \| \widehat  \Sigma ) := \lim_{\varepsilon \to +0}\limsup_{n \to \infty} \frac{1}{n} S_\infty^\varepsilon (\hat \rho_n \| \hat \sigma_n),
\end{equation}
\begin{equation}
\underline{S}(\widehat{P} \| \widehat{\Sigma} ) := \lim_{\varepsilon \to +0}\liminf_{n \to \infty} \frac{1}{n} S_0^\varepsilon (\hat \rho_n \| \hat \sigma_n).
\end{equation}
\label{def:spectral_divergence}
\end{definition}

These quantities are also called information spectrum.
We note that $\lim_{\varepsilon \to +0}$ always exists, because $S_0^\varepsilon (\hat \rho \| \hat \sigma)$ and $S_\infty^\varepsilon (\hat \rho \| \hat \sigma)$ monotonically change in $\varepsilon$.
Also, the order of the limits of $n$ and $\varepsilon$ is crucial in the above definition.
In fact, the convergence in $n$ is not uniform on $\varepsilon > 0$ in many cases.

These quantities  have been originally introduced in Ref.~\cite{Nagaoka2007} in the form
\begin{equation}
\overline{S}(\widehat{P} \| \widehat  \Sigma ) = \inf \left\{ a \ : \ \limsup_{n \to \infty} {\rm tr} \left[  \hat P \left\{ \hat \rho_n - e^{na} \hat \sigma_n \geq 0 \right\} \hat \rho_n  \right] = 0 \right\},
\label{spectral_Nagaoka1}
\end{equation}
\begin{equation}
\underline{S}(\widehat{P} \| \widehat  \Sigma ) = \sup \left\{ a \ : \ \liminf_{n \to \infty} {\rm tr} \left[ \hat P \left\{ \hat \rho_n - e^{na} \hat \sigma_n \geq 0 \right\} \hat \rho_n \right]= 1 \right\},
\label{spectral_Nagaoka2}
\end{equation}
where $\hat P \left\{ \hat X \geq 0 \right\}$ denotes 
the projector onto the eigenspace of $\hat X$ with non-negative eigenvalues.
The equivalence between the above expression and Definition~\ref{def:spectral_divergence} has been proved in Theorem 2 and Theorem 3 of  Ref.~\cite{Datta2009}.

We note that (Proposition of \cite{Bowen_Datta}) 
\begin{equation}
\underline{S}(\widehat{P} \| \widehat{\Sigma} ) \leq \overline{S}(\widehat{P} \| \widehat  \Sigma ).
\end{equation}
In fact,  from Proposition~\ref{Faist_prop} and inequality~(\ref{eta_SH}), we have for $0 < \varepsilon < 1/3$
\begin{equation}
\frac{1}{n} S_0^\varepsilon ( \hat \rho_n \| \hat \sigma_n ) \leq \frac{1}{n} S_\infty^\varepsilon  ( \hat \rho_n \| \hat \sigma_n ) + \frac{1}{n} \ln \frac{2 }{1 - \varepsilon }.
\end{equation}
Then, take  $\liminf_{n \to \infty}$ and $\limsup_{n \to \infty}$, and  next $\varepsilon \to +0$.

\

We next show that the spectral divergence rates satisfy the monotonicity (and thus are monotones) under asymptotic state conversion.
This is regarded as a necessary condition for state conversion as the  asymptotic limit of  Theorem~\ref{thm:asymp1_q_a}.
We note that the following theorem is slightly general than the monotonicity proved in Proposition 4 of Ref.~\cite{Bowen_Datta} where state conversion is assumed to be exact.

\begin{theorem}[Monotonicity of the spectral divergence rates]
If  $(\widehat{P}', \widehat{\Sigma}') \prec^{\rm a} (\widehat{P}, \widehat{\Sigma})$, then
\begin{equation}
\overline{S}(\widehat{P}' \| \widehat{\Sigma}') \leq \overline{S}
(\widehat{P} \| \widehat{\Sigma}), \ \ \ 
\underline{S}(\widehat{P}' \| \widehat{\Sigma}') \leq \underline{S} (\widehat{P} \| \widehat{\Sigma}).
\end{equation} 
\end{theorem}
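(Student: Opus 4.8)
The plan is to derive the monotonicity of the spectral divergence rates directly from the approximate monotonicity theorems already established, namely Theorem~\ref{thm:asymp1_q_a}, by taking the appropriate double limit. The key point is that $(\widehat{P}', \widehat{\Sigma}') \prec^{\rm a} (\widehat{P}, \widehat{\Sigma})$ means precisely that for every $\varepsilon > 0$ there is $N_\varepsilon$ with $(\hat\rho'_n, \hat\sigma'_n) \prec^\varepsilon (\hat\rho_n, \hat\sigma_n)$ for all $n \geq N_\varepsilon$, so we have a genuine approximate d-majorization at every scale $n$, with an error $\varepsilon$ that we control.

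For the upper rate, I would fix $\delta > 0$ and apply Theorem~\ref{thm:asymp1_q_a}~(a) with the auxiliary smoothing parameter $\delta$ and the conversion error $\varepsilon$: for $n \geq N_\varepsilon$ this gives $S_\infty^\delta(\hat\rho_n \| \hat\sigma_n) \geq S_\infty^{\varepsilon + \delta}(\hat\rho'_n \| \hat\sigma'_n)$. Dividing by $n$, taking $\limsup_{n\to\infty}$ on both sides, and then letting $\delta \to +0$ and $\varepsilon \to +0$ (noting that $\varepsilon+\delta \to 0$ as well), the left side tends to $\overline{S}(\widehat{P} \| \widehat{\Sigma})$ and the right side to $\overline{S}(\widehat{P}' \| \widehat{\Sigma}')$, using that $S_\infty^\varepsilon$ is monotone in $\varepsilon$ so the $\lim_{\varepsilon\to+0}$ in Definition~\ref{def:spectral_divergence} exists. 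One has to be slightly careful to couple the two parameters correctly — e.g.\ first send $\delta \to 0$ along a subsequence for each fixed $\varepsilon$, or simply take $\delta = \varepsilon$ throughout so that only one parameter is sent to zero — but either bookkeeping works.

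For the lower rate, the argument is analogous but uses Theorem~\ref{thm:asymp1_q_a}~(b): fixing $\delta$ with $\delta^2/6 > \varepsilon$ (which is automatically satisfiable by first choosing $\delta$ and then shrinking $\varepsilon$), for $n \geq N_\varepsilon$ we get $S_0^\delta(\hat\rho_n \| \hat\sigma_n) \geq S_0^{\delta^2/6 - \varepsilon}(\hat\rho'_n \| \hat\sigma'_n) + \ln(\delta^2/6)$. Dividing by $n$ and taking $\liminf_{n\to\infty}$, the constant term $\frac{1}{n}\ln(\delta^2/6)$ vanishes; then letting $\varepsilon \to +0$ followed by $\delta \to +0$ (using that $\delta^2/6 - \varepsilon \to 0$) yields $\underline{S}(\widehat{P} \| \widehat{\Sigma}) \geq \underline{S}(\widehat{P}' \| \widehat{\Sigma}')$, again invoking existence of $\lim_{\varepsilon\to+0}$ in Definition~\ref{def:spectral_divergence} and monotonicity of $S_0^\varepsilon$ in $\varepsilon$.

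The only mildly delicate point — the "main obstacle," though it is more a matter of care than of real difficulty — is the order and coupling of the three limiting operations ($n\to\infty$, the smoothing parameter $\delta\to 0$, and the conversion error $\varepsilon\to 0$), together with the fact that $N_\varepsilon$ depends on $\varepsilon$ so one cannot naively interchange limits. The clean way to handle it is to note that for each fixed $\varepsilon$ the inequality holds for all large $n$, so it survives $\limsup$/$\liminf$ in $n$; the resulting inequality between the $\varepsilon$-smoothed rates then passes to the $\varepsilon\to+0$ limit because those limits are known to exist (monotonicity in $\varepsilon$). No new estimates beyond Theorem~\ref{thm:asymp1_q_a} are needed.
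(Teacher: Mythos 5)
Your proposal is correct and follows essentially the same route as the paper's own proof: fix $\varepsilon$ and $\delta$, invoke Theorem~\ref{thm:asymp1_q_a}~(a) and (b) for all $n \geq N_\varepsilon$, divide by $n$, take $\limsup$ resp.\ $\liminf$, and then send $\varepsilon \to +0$ and $\delta \to +0$ (keeping $\delta^2/6 > \varepsilon$ in the second case), using monotonicity of the smooth divergences in the smoothing parameter to identify the limits with the spectral rates. Your handling of the limit order and of the vanishing term $\frac{1}{n}\ln(\delta^2/6)$ matches the paper's argument.
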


\begin{proof}

From the assumption, for any $\varepsilon >0$, there exists $N_\varepsilon \in \mathbb N$  such that  $(\hat \rho'_n, \hat \sigma'_n) \prec^\varepsilon (\hat \rho_n, \hat \sigma_n)$ for all $n \geq N_\varepsilon$.

We first prove the former inequality.
  Theorem~\ref{thm:asymp1_q_a} (a) implies that for any $\delta > 0$, $S_\infty^{\varepsilon + \delta} (\hat \rho'_n \| \hat \sigma'_n) \leq S_\infty^\delta (\hat \rho_n \| \hat \sigma_n)$ holds for all $n \geq N_\varepsilon$.  We first take  $\limsup_{n \to \infty}$  by dividing the above inequality by $n$ and by fixing $\varepsilon$ and $\delta$, and then take the limit $\varepsilon \to +0$ and $\delta \to + 0$.
Then we have $\overline{S}(\widehat{P}' \| \widehat{\Sigma}' ) \leq \overline{S}(\widehat{P} \| \widehat{\Sigma} )$.

For the latter inequality, the proof goes in a similar way.
By applying  Theorem~\ref{thm:asymp1_q_a} (b),  we have for any  $\delta$ satisfying $\delta^2/6 > \varepsilon$,
$S_0^\delta (\hat \rho_n \| \hat \sigma_n ) \geq S_0^{\delta^2/6 - \varepsilon} (\hat \rho_n' \| \hat \sigma_n' ) + \ln ( \delta^2/6 ) $ for all $n \geq N_\varepsilon$.
We first take $\liminf_{n \to \infty}$ by dividing the above inequality by $n$  and by fixing $\varepsilon$ and $\delta$, and then take the limit $\varepsilon \to +0$ and $\delta \to + 0$ by keeping  $\delta^2/6 > \varepsilon$.
Then we have $\underline{S}(\widehat{P}' \| \widehat{\Sigma}') \leq \underline{S} (\widehat{P} \| \widehat{\Sigma})$.
$\Box$
\end{proof}

We next consider a sufficient condition for asymptotic state conversion, which is obtained as the limit of Theorem~\ref{thm:asymp1_q_b}.

\begin{theorem}[Sufficient condition for asymptotic state conversion]
$(\widehat{P}', \widehat{\Sigma}') \prec^{\rm a} (\widehat{P}, \widehat{\Sigma})$ holds, if (but not only if)
\begin{equation}
\overline{S}(\widehat{P}' \| \widehat{\Sigma}') < \underline{S}(\widehat{P} \| \widehat{\Sigma}).
\label{spectral_sufficient}
\end{equation}
\label{thm:spectral_sufficient}
\end{theorem}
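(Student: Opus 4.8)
The plan is to reduce the asymptotic statement to the finite-size sufficient condition, Theorem~\ref{thm:asymp1_q_b}, by unwinding the definitions of the spectral divergence rates. Suppose \eqref{spectral_sufficient} holds, so that $\overline{S}(\widehat{P}' \| \widehat{\Sigma}') < \underline{S}(\widehat{P} \| \widehat{\Sigma})$. Choose a real number $r$ strictly between these two quantities, and pick a small margin $\eta > 0$ so that $\overline{S}(\widehat{P}' \| \widehat{\Sigma}') < r - \eta$ and $r + \eta < \underline{S}(\widehat{P} \| \widehat{\Sigma})$. The idea is that for every fixed small $\varepsilon > 0$, the upper spectral rate being below $r-\eta$ forces $\tfrac1n S_\infty^{\varepsilon/2}(\hat\rho'_n \| \hat\sigma'_n) < r$ for all large $n$, while the lower spectral rate being above $r+\eta$ forces $\tfrac1n S_0^{\varepsilon/2}(\hat\rho_n \| \hat\sigma_n) > r$ for all large $n$; combining these gives $S_\infty^{\varepsilon/2}(\hat\rho'_n \| \hat\sigma'_n) \leq S_0^{\varepsilon/2}(\hat\rho_n \| \hat\sigma_n)$ for all $n$ beyond some threshold $N_\varepsilon$, which is exactly the hypothesis of Theorem~\ref{thm:asymp1_q_b}.

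First I would make the ``forces'' steps precise. By Definition~\ref{def:spectral_divergence}, $\overline{S}(\widehat{P}' \| \widehat{\Sigma}') = \lim_{\varepsilon \to +0} \limsup_{n\to\infty} \tfrac1n S_\infty^\varepsilon(\hat\rho'_n \| \hat\sigma'_n)$, and since $S_\infty^\varepsilon$ is monotone in $\varepsilon$ the limit over $\varepsilon$ is a supremum (or infimum, depending on the direction of monotonicity) over $\varepsilon$; in any case $\limsup_{n\to\infty}\tfrac1n S_\infty^{\varepsilon/2}(\hat\rho'_n\|\hat\sigma'_n)$ is, for all sufficiently small $\varepsilon$, within $\eta/2$ of $\overline{S}(\widehat{P}'\|\widehat{\Sigma}')$, hence strictly below $r - \eta/2 < r$. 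Then by definition of $\limsup$ there is $N'_\varepsilon$ with $\tfrac1n S_\infty^{\varepsilon/2}(\hat\rho'_n\|\hat\sigma'_n) < r$ for all $n \geq N'_\varepsilon$. The analogous argument with $\liminf$ and the lower spectral rate gives $N''_\varepsilon$ with $\tfrac1n S_0^{\varepsilon/2}(\hat\rho_n\|\hat\sigma_n) > r$ for all $n \geq N''_\varepsilon$. Setting $N_\varepsilon := \max\{N'_\varepsilon, N''_\varepsilon\}$, we get $S_\infty^{\varepsilon/2}(\hat\rho'_n \| \hat\sigma'_n) < n r < S_0^{\varepsilon/2}(\hat\rho_n \| \hat\sigma_n)$ for all $n \geq N_\varepsilon$.

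Then I would apply Theorem~\ref{thm:asymp1_q_b} at each such $n$: the inequality $S_\infty^{\varepsilon/2}(\hat\rho'_n \| \hat\sigma'_n) \leq S_0^{\varepsilon/2}(\hat\rho_n \| \hat\sigma_n)$ yields $(\hat\rho'_n, \hat\sigma'_n) \prec^\varepsilon (\hat\rho_n, \hat\sigma_n)$ for all $n \geq N_\varepsilon$. Since this holds for every sufficiently small $\varepsilon > 0$ (and a posteriori for all $\varepsilon > 0$ by monotonicity of $\prec^\varepsilon$ in $\varepsilon$), the equivalent formulation in Definition~\ref{quantum_d_majorization_asymp} gives exactly $(\widehat{P}', \widehat{\Sigma}') \prec^{\rm a} (\widehat{P}, \widehat{\Sigma})$, which is the claim.

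The only real subtlety — and the step I would be most careful about — is the interplay of the two limits (over $n$ and over $\varepsilon$) and the fact that the threshold $N_\varepsilon$ depends on $\varepsilon$: one must not try to take a uniform $N$, but rather work with $\varepsilon$ fixed throughout the $n$-asymptotics and only invoke ``for all $\varepsilon > 0$'' at the very end, matching the structure of Definition~\ref{quantum_d_majorization_asymp}. One should also double-check that the gap $\eta$ can indeed be absorbed: because $\overline{S}$ and $\underline{S}$ are genuine limits in $\varepsilon$, the quantities $\limsup_n \tfrac1n S_\infty^{\varepsilon/2}$ and $\liminf_n \tfrac1n S_0^{\varepsilon/2}$ converge to $\overline{S}(\widehat{P}'\|\widehat{\Sigma}')$ and $\underline{S}(\widehat{P}\|\widehat{\Sigma})$ respectively as $\varepsilon \to +0$, so for small enough $\varepsilon$ they lie on the correct sides of $r$; the strict inequality in \eqref{spectral_sufficient} is what provides the room for this. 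No genuinely hard estimate is needed beyond what is already packaged in Theorem~\ref{thm:asymp1_q_b}.
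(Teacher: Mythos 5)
Your proposal is correct and follows essentially the same route as the paper: the paper's proof likewise fixes a small $\varepsilon$, deduces from the strict gap that $S_\infty^{\varepsilon/2}(\hat\rho'_n\|\hat\sigma'_n) \leq S_0^{\varepsilon/2}(\hat\rho_n\|\hat\sigma_n)$ for all $n \geq N_\varepsilon$, and invokes Theorem~\ref{thm:asymp1_q_b} to conclude $(\hat\rho'_n,\hat\sigma'_n)\prec^\varepsilon(\hat\rho_n,\hat\sigma_n)$ and hence asymptotic d-majorization. Your version merely spells out the $\varepsilon$--$n$ bookkeeping (the intermediate value $r$ and the margin $\eta$) that the paper leaves implicit, and this bookkeeping is sound.
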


\begin{proof}
Suppose that  $\overline{S}(\widehat{P}' \| \widehat{\Sigma}') < \underline{S}(\widehat{P} \| \widehat{\Sigma})$.
For any sufficiently small $\varepsilon > 0$, 
there exists $N_\varepsilon \in \mathbb N$ such that $S_\infty^{\varepsilon /2} (\hat \rho'_n \| \hat \sigma'_n ) \leq S_0^{\varepsilon /2} (\hat \rho_n \| \hat \sigma_n)$ holds for all $n \geq N_\varepsilon$.  Thus, from Theorem~\ref{thm:asymp1_q_b}, we have$(\hat \rho'_n, \hat \sigma'_n) \prec^\varepsilon (\hat \rho_n,\hat \sigma_n)$  for all $n \geq N_\varepsilon$.
This implies $(\widehat{P}', \widehat{\Sigma}') \prec^{\rm a} (\widehat{P}, \widehat{\Sigma})$.
$\Box$
\end{proof}

We note that the equality is excluded from inequality~(\ref{spectral_sufficient}).
In fact, the above proof does not work in the equality case.
Moreover, it is known that there indeed exists an example that the equality $\overline{S}(\widehat{P}' \| \widehat{\Sigma}') = \underline{S}(\widehat{P} \| \widehat{\Sigma})$ is satisfied but asymptotic state conversion from $(\widehat{P}, \widehat{\Sigma})$ to $(\widehat{P}', \widehat{\Sigma}')$ is impossible, which has been shown in Refs.~\cite{Ito2015,Kumagai2013} (but we need to slightly change the setup there).
This is a topic of the second-order asymptotics~\cite{Hayashi2008,Hayashi2009,Datta2015,Tomamichel2016}.

We now consider a special case that the upper and lower spectral divergence rates collapse to a single value $S(\widehat{P}\| \widehat{\Sigma})$.  
In such a case, we obtain a (almost) necessary and sufficient characterization for state conversion (where ``almost'' means that the equality case mentioned above is excluded). 
The characterization is given by a single scalar entropy-like function $S(\widehat{P}\| \widehat{\Sigma})$ (i.e., a (almost) complete monotone).  


\begin{theorem}
Suppose that the upper and lower spectral divergence rates coincide: $\overline{S}(\widehat{P}\| \widehat{\Sigma} ) = \underline{S}(\widehat{P} \| \widehat{\Sigma}) =:  S(\widehat{P}\| \widehat{\Sigma})$ and $\overline{S}(\widehat{P}' \| \widehat{\Sigma}')  = \underline{S}(\widehat{P}' \|  \widehat{\Sigma}') =:  S(\widehat{P}' \| \widehat{\Sigma}')$.  Then,
\begin{description}
\item[(a)] $(\widehat{P}', \widehat{\Sigma}') \prec^{\rm a} (\widehat{P}, \widehat{\Sigma})$ implies $S(\widehat{P}' \|  \widehat{\Sigma}') \leq S(\widehat{P} \|  \widehat{\Sigma})$.
\item[(b)]  $S(\widehat{P}' \|  \widehat{\Sigma}') < S(\widehat{P} \|  \widehat{\Sigma})$ implies $(\widehat{P}', \widehat{\Sigma}') \prec^{\rm a} (\widehat{P}, \widehat{\Sigma})$.
\end{description}
\label{thm:d_asym_q}
\end{theorem}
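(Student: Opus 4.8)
The plan is to derive Theorem~\ref{thm:d_asym_q} directly from the two preceding results, Theorem~\ref{thm:spectral_sufficient} (sufficient condition) and the monotonicity of the spectral divergence rates, once the hypothesis $\overline{S} = \underline{S}$ is used to identify the common value. Both halves are essentially immediate; the only work is bookkeeping the limits correctly.

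For part (a): Suppose $(\widehat{P}', \widehat{\Sigma}') \prec^{\rm a} (\widehat{P}, \widehat{\Sigma})$. By the monotonicity theorem for the spectral divergence rates (the one proved just above, valid for non-exact asymptotic conversion), we have both $\overline{S}(\widehat{P}' \| \widehat{\Sigma}') \leq \overline{S}(\widehat{P} \| \widehat{\Sigma})$ and $\underline{S}(\widehat{P}' \| \widehat{\Sigma}') \leq \underline{S}(\widehat{P} \| \widehat{\Sigma})$. Under the collapse hypothesis, the left-hand sides both equal $S(\widehat{P}' \| \widehat{\Sigma}')$ and the right-hand sides both equal $S(\widehat{P} \| \widehat{\Sigma})$, so $S(\widehat{P}' \| \widehat{\Sigma}') \leq S(\widehat{P} \| \widehat{\Sigma})$. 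This needs only one of the two inequalities, so it is genuinely a one-line deduction.

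For part (b): Suppose $S(\widehat{P}' \| \widehat{\Sigma}') < S(\widehat{P} \| \widehat{\Sigma})$. Using the collapse hypothesis to rewrite this as $\overline{S}(\widehat{P}' \| \widehat{\Sigma}') = S(\widehat{P}' \| \widehat{\Sigma}') < S(\widehat{P} \| \widehat{\Sigma}) = \underline{S}(\widehat{P} \| \widehat{\Sigma})$, we see that inequality~(\ref{spectral_sufficient}) of Theorem~\ref{thm:spectral_sufficient} is satisfied, hence $(\widehat{P}', \widehat{\Sigma}') \prec^{\rm a} (\widehat{P}, \widehat{\Sigma})$. Again this is immediate; the point of the collapse hypothesis is precisely to convert the gap condition $\overline{S}' < \underline{S}$ (which compares an upper rate of one pair with a lower rate of the other) into the clean scalar comparison $S' < S$.

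There is no real obstacle here, since all the substantive content lives in the earlier theorems. The one thing I would be careful about is to note explicitly that the strict inequality in (b) is inherited from Theorem~\ref{thm:spectral_sufficient}, so the equality case $S(\widehat{P}' \| \widehat{\Sigma}') = S(\widehat{P} \| \widehat{\Sigma})$ remains genuinely undecided — consistent with the remark following Theorem~\ref{thm:spectral_sufficient} that there are counterexamples at equality — and that the collapse hypothesis is assumed for \emph{both} pairs, which is what lets us name the common values $S(\widehat{P}\|\widehat{\Sigma})$ and $S(\widehat{P}'\|\widehat{\Sigma}')$ in the first place. I would present the proof as two short paragraphs mirroring (a) and (b), each citing the relevant earlier result, with no computation.
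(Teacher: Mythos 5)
Your proof is correct and follows exactly the route the paper intends: part (a) is an immediate consequence of the monotonicity theorem for the spectral divergence rates, and part (b) follows by rewriting the strict scalar inequality as $\overline{S}(\widehat{P}'\|\widehat{\Sigma}') < \underline{S}(\widehat{P}\|\widehat{\Sigma})$ and invoking Theorem~\ref{thm:spectral_sufficient}. Your remarks on the strict inequality and the need for the collapse hypothesis on both pairs are also consistent with the paper's surrounding discussion.
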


The condition that the upper and lower spectral divergence rates collapse will be examined in Section~\ref{sec_condition_information}.
We note that the above theorem has been shown in Ref.~\cite{Sagawa2019}, while some special cases have been discussed in Ref.~\cite{Matsumoto2010} for i.i.d. states and in Ref.~\cite{Jiao2017} for unital maps.

\

We next consider asymptotic (ordinary) majorization as a special case of d-majorization.  We first define the following.

\begin{definition}[Spectral entropy rate]
Let $\widehat{P}:= \{ \hat \rho_n \}_{n \in \mathbb N}$ with $\hat \rho_n \in \mathcal S (\mathcal H^{\otimes n})$ be a sequence of quantum states, and let $\widehat{ID} := \{ \hat I^{\otimes n}/ d^{n} \}_{n \in \mathbb N}$ be the sequence of the maximally mixed states.  We then define
\begin{equation}
\underline{S} (\widehat{P}) := \ln d - \overline{S} (\widehat{P}\| \widehat{ID}),
\end{equation}
\begin{equation}
\overline{S} (\widehat{P}) := \ln d - \underline{S} (\widehat{P}\| \widehat{ID}).
\end{equation}
\end{definition}

Here, we note that  ``upper'' and ``lower'' are the opposite to those of the divergence case.
We can define $S_\infty^\varepsilon (\hat \rho )$ and $S_0^\varepsilon (\hat \rho )$ correspondingly.
Under these definitions, the foregoing argument can reduce to the ordinary majorization case.
In addition, we note the following theorem without a proof:

\begin{proposition}[Lemma 3 of Ref.~\cite{Bowen_Datta}]
\begin{equation}
\underline{S}(\widehat{P} ) \leq \liminf_{n \to \infty} \frac{1}{n} S_1 (\hat \rho_n ) \leq \limsup_{n \to \infty}  \frac{1}{n} S_1(\hat \rho_n) \leq \overline{S} (\widehat{P}).
\end{equation}
Thus, if $\underline{S}(\widehat{P} ) = \overline{S} (\widehat{P}) $ is satisfied,
the von Neumann entropy rate $S_1 (\widehat{P})$ exists and
\begin{equation}
\underline{S}(\widehat{P} ) = \overline{S} (\widehat{P}) = S_1 (\widehat{P}) 
\end{equation}
holds.
\end{proposition}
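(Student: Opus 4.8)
The plan is to deduce this from the sandwich relation between the smooth $0$- and $\infty$-divergences and the KL divergence, combined with the definitions of the spectral entropy rates. First I would recall Proposition~\ref{quantum_0_1_infty}, namely $S_0 (\hat \rho \| \hat \sigma ) \leq S_1 (\hat \rho \| \hat \sigma ) \leq S_\infty (\hat \rho \| \hat \sigma )$, and apply it with $\hat \sigma = \hat I^{\otimes n}/d^n$. Using the identities $S_\alpha (\hat \rho_n) = \ln d^n - S_\alpha (\hat \rho_n \| \hat I^{\otimes n}/d^n) = n \ln d - S_\alpha (\hat \rho_n \| \widehat{ID}_n)$ for $\alpha = 0,1,\infty$, the chain of inequalities flips and becomes $S_\infty (\hat \rho_n) \leq S_1 (\hat \rho_n) \leq S_0 (\hat \rho_n)$, i.e., $- \ln \| \hat \rho_n \|_\infty \leq S_1(\hat \rho_n) \leq \ln (\mathrm{rank}[\hat \rho_n])$.

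The issue is that the spectral entropy rates $\underline{S}(\widehat{P})$ and $\overline{S}(\widehat{P})$ are defined through the \emph{smooth} divergences, not the bare ones, so I cannot directly use the unsmoothed inequality above. The remedy is to use the smoothed version of Proposition~\ref{quantum_0_1_infty}: for every $\varepsilon > 0$ one has, by optimizing over $\hat \tau \in B^\varepsilon(\hat \rho)$, that $S_0^\varepsilon(\hat \rho \| \hat \sigma) \geq S_1(\hat \tau_\ast \| \hat \sigma)$ for the optimal $\hat \tau_\ast$ achieving $S_0^\varepsilon$, and similarly a lower bound from $S_\infty^\varepsilon$; more cleanly, I would invoke the known continuity/approximation estimates (e.g.\ via the hypothesis-testing divergence and Proposition~\ref{Faist_prop}, exactly as used in Theorem~\ref{Smooth_monotone}) to show that $\frac{1}{n}S_\infty^\varepsilon(\hat \rho_n) - \frac{1}{n}S_1(\hat \rho_n) \to 0$ from below and $\frac{1}{n}S_0^\varepsilon(\hat \rho_n) - \frac{1}{n}S_1(\hat \rho_n) \to 0$ from above as $n \to \infty$ then $\varepsilon \to +0$, with correction terms that are $O(\ln(1/\varepsilon)/n)$ and hence vanish in the iterated limit. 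Taking $\liminf_{n}$ in the lower bound and $\limsup_{n}$ in the upper bound, then $\varepsilon \to +0$, yields $\underline{S}(\widehat{P}) \leq \liminf_n \frac{1}{n}S_1(\hat \rho_n)$ and $\limsup_n \frac{1}{n}S_1(\hat \rho_n) \leq \overline{S}(\widehat{P})$, which is precisely the claimed chain once one notes $\liminf \leq \limsup$ trivially in the middle.

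For the second assertion, assume $\underline{S}(\widehat{P}) = \overline{S}(\widehat{P})$. Then the outer two terms of the chain coincide, so $\liminf_n \frac{1}{n}S_1(\hat \rho_n)$ and $\limsup_n \frac{1}{n}S_1(\hat \rho_n)$ are squeezed to the same common value; hence the genuine limit $S_1(\widehat{P}) := \lim_n \frac{1}{n}S_1(\hat \rho_n)$ exists and equals $\underline{S}(\widehat{P}) = \overline{S}(\widehat{P})$. That is all that is needed.

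The main obstacle I anticipate is making the smoothing step rigorous: one must control how much the KL divergence of the $\varepsilon$-optimal state $\hat \tau_\ast$ differs from $S_1(\hat \rho_n)$ itself, uniformly enough that dividing by $n$ and sending $n \to \infty$ kills the error before $\varepsilon$ is sent to $0$. Since Proposition~\ref{thm:monotone}'s proof is deferred and the relevant estimates are the same ones already deployed in Theorem~\ref{Smooth_monotone} and Proposition~\ref{Faist_prop}, I would lean on those, citing Lemma~3 of Ref.~\cite{Bowen_Datta} structure directly rather than re-deriving the asymptotic equipartition-type bounds; in that sense the "proof" here is mostly bookkeeping of limits, and the honest statement is that the analytic heavy lifting lives in the hypothesis-testing appendix.
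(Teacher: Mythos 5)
Your overall skeleton (unsmoothed ordering of the R\'enyi entropies, control of the smoothing, then a squeeze when $\underline{S}=\overline{S}$) is the right one, but the step that does the real work is wrong as stated. You claim that $\frac{1}{n}S_\infty^\varepsilon(\hat\rho_n)-\frac{1}{n}S_1(\hat\rho_n)\to 0$ and $\frac{1}{n}S_0^\varepsilon(\hat\rho_n)-\frac{1}{n}S_1(\hat\rho_n)\to 0$ with corrections $O(\ln(1/\varepsilon)/n)$. That would mean the smooth min/max entropy rates \emph{always} collapse onto the von Neumann entropy rate, i.e.\ an unconditional AEP, which is false: take $\hat\rho_n=\tfrac12\hat\rho_n^{(1)}+\tfrac12\hat\rho_n^{(2)}$ with the two branches having different entropy rates (cf.\ Proposition~\ref{prop:non_ergodic}); then $\underline{S}(\widehat{P})<\liminf\frac1n S_1(\hat\rho_n)$ strictly, and only the one-sided inequalities of the proposition survive. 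Moreover, the tool you invoke cannot produce even the one-sided bounds: Proposition~\ref{Faist_prop} relates $S_0^\varepsilon$ and $S_\infty^\varepsilon$ to the hypothesis-testing divergence $S_{\rm H}^\eta$, not to $S_1$; connecting $S_{\rm H}^\eta$ to the KL/von Neumann quantity is exactly the Stein/AEP content that requires ergodicity (Theorem~\ref{thm:main_ergodic}) and is unavailable here. There is also a directional slip: for divergences $S_0\leq S_1\leq S_\infty$, so your inequality $S_0^\varepsilon(\hat\rho\|\hat\sigma)\geq S_1(\hat\tau_\ast\|\hat\sigma)$ is backwards.

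The missing ingredient is a continuity estimate for the von Neumann entropy. Writing the smoothed entropies as $S_\infty^\varepsilon(\hat\rho_n)=\max_{\hat\tau\in B^\varepsilon(\hat\rho_n)}S_\infty(\hat\tau)$ and $S_0^\varepsilon(\hat\rho_n)=\min_{\hat\tau\in B^\varepsilon(\hat\rho_n)}S_0(\hat\tau)$, one uses $S_\infty(\hat\tau)\leq S_1(\hat\tau)\leq S_0(\hat\tau)$ together with the Fannes--Audenaert bound $|S_1(\hat\tau)-S_1(\hat\rho_n)|\leq \varepsilon\, n\ln d + h(\varepsilon)$ for $D(\hat\tau,\hat\rho_n)\leq\varepsilon$, giving $S_\infty^\varepsilon(\hat\rho_n)\leq S_1(\hat\rho_n)+\varepsilon n\ln d+h(\varepsilon)$ and $S_0^\varepsilon(\hat\rho_n)\geq S_1(\hat\rho_n)-\varepsilon n\ln d-h(\varepsilon)$; dividing by $n$, taking $\liminf$/$\limsup$ in $n$ and then $\varepsilon\to+0$ kills the $\varepsilon\ln d$ term and yields exactly the claimed chain (the squeeze for the second assertion is then as you wrote). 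Note the correction is $\varepsilon\ln d$ per site, not $O(\ln(1/\varepsilon)/n)$: it survives the $n\to\infty$ limit and dies only in the subsequent $\varepsilon\to+0$ limit, which is why the order of limits in Definition~\ref{def:spectral_divergence} matters. This dimension-type continuity is also precisely what is special about the entropy case (reference state $\hat I^{\otimes n}/d^n$); no analogous bound holds for a general $\hat\sigma_n$, which is why, as the paper remarks immediately after the proposition, the same sandwich fails for the spectral \emph{divergence} rates --- any argument (like yours) that never uses the maximally mixed reference would prove that false statement as well.
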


On the other hand, the above proposition with the spectral entropy rates replaced by the spectral divergence rates is not valid.
In fact, there exists a counterexample even in the classical case (see Section IV.4.3. of Ref.~\cite{Sagawa2019}), for which
$\overline{S}(\widehat{P}\| \widehat{\Sigma} ) = \underline{S}(\widehat{P} \| \widehat{\Sigma}) =: S(\widehat{P} \| \widehat{\Sigma})$ holds and $S_1(\widehat{P} \| \widehat{\Sigma})$ exists, but  $S(\widehat{P} \| \widehat{\Sigma}) \neq S_1(\widehat{P} \| \widehat{\Sigma})$.

We finally note the following general property of the spectral divergence rates.
This plays an important role in  ergodicity-broken systems as discussed at the end of the next section.

\begin{proposition}[Proposition 11 of \cite{Sagawa2019}]
Suppose that $\widehat{P} := \{ \hat \rho_n \}$ is given by the mixture of $\widehat{P}^{(k)} := \{ \hat \rho_n^{(k)} \}$ with probability $r_k > 0$ ($k=1,2, \cdots, K < \infty$), i.e., $\hat \rho_n = \sum_k r_k \hat \rho_n^{(k)}$ with $r_k$ being independent of $n$.
Then, 
\begin{equation}
\underline{S}(\widehat{P} \| \widehat{\Sigma}) = \min_k \{ \underline{S} (\widehat{P}^{(k)} \| \widehat{\Sigma})  \}, \ \overline{S}(\widehat{P} \| \widehat{\Sigma}) = \max_k \{ \overline{S} (\widehat{P}^{(k)} \| \widehat{\Sigma})  \}.
\label{non_ergodic}
\end{equation}
\label{prop:non_ergodic}
\end{proposition}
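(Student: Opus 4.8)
Looking at Proposition~\ref{prop:non_ergodic}, the claim is that for a mixture $\hat\rho_n = \sum_k r_k \hat\rho_n^{(k)}$ with weights $r_k > 0$ fixed in $n$, the lower spectral divergence rate is the minimum over components and the upper rate is the maximum. The natural tool is the variational characterization of the smooth $0$- and $\infty$-divergences together with the joint convexity and monotonicity results already established (Theorem~\ref{thm:joint_convexity_0}, Proposition~\ref{prop:classical_alpha_monotone}-type bounds, and Theorem~\ref{thm:quantum_Renyi_monotonicity}).

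\textbf{Plan.} The strategy is to prove two-sided bounds on $S_0^\varepsilon(\hat\rho_n \| \hat\sigma_n)$ and $S_\infty^\varepsilon(\hat\rho_n \| \hat\sigma_n)$ in terms of the corresponding quantities for the components $\hat\rho_n^{(k)}$, sandwiching them up to $O(1)$ corrections (independent of $n$ but possibly depending on $\varepsilon$, $K$, and the $r_k$), and then take $\frac{1}{n}\liminf$ / $\frac{1}{n}\limsup$ followed by $\varepsilon \to +0$. First I would handle the \emph{non-smooth} inequalities. For $S_\infty$: since $\hat\rho_n \geq r_k \hat\rho_n^{(k)}$ for each $k$, and since $\hat\rho_n \leq \lambda\hat\sigma_n$ forces $r_k\hat\rho_n^{(k)} \leq \lambda\hat\sigma_n$, we get $S_\infty(\hat\rho_n\|\hat\sigma_n) \geq S_\infty(\hat\rho_n^{(k)}\|\hat\sigma_n) + \ln r_k$ for all $k$, hence $\geq \max_k S_\infty(\hat\rho_n^{(k)}\|\hat\sigma_n) + \ln(\min_k r_k)$; conversely, from $\hat\rho_n^{(k)} \leq e^{S_\infty(\hat\rho_n^{(k)}\|\hat\sigma_n)}\hat\sigma_n$ one obtains $\hat\rho_n \leq (\sum_k r_k e^{S_\infty(\hat\rho_n^{(k)}\|\hat\sigma_n)})\hat\sigma_n \leq \max_k e^{S_\infty(\hat\rho_n^{(k)}\|\hat\sigma_n)}\hat\sigma_n$, giving $S_\infty(\hat\rho_n\|\hat\sigma_n) \leq \max_k S_\infty(\hat\rho_n^{(k)}\|\hat\sigma_n)$. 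For $S_0$: using $\hat P_{\hat\rho_n} \geq \hat P_{\hat\rho_n^{(k)}}$ and joint convexity of $-\ln$ (as in the proof of Theorem~\ref{thm:joint_convexity_0}) one gets $S_0(\hat\rho_n\|\hat\sigma_n) \leq \sum_k r_k S_0(\hat\rho_n^{(k)}\|\hat\sigma_n) \leq \max_k S_0(\hat\rho_n^{(k)}\|\hat\sigma_n)$... but for the spectral rate we want $\underline{S} = \min_k$, so the relevant direction is the \emph{lower} bound $S_0(\hat\rho_n\|\hat\sigma_n) \geq \min_k S_0(\hat\rho_n^{(k)}\|\hat\sigma_n) + \ln(\min_k r_k)$, which follows from $\mathrm{tr}[\hat P_{\hat\rho_n}\hat\sigma_n] \leq \mathrm{tr}[(\sum_k \hat P_{\hat\rho_n^{(k)}})\hat\sigma_n] \leq K\cdot\max_k\mathrm{tr}[\hat P_{\hat\rho_n^{(k)}}\hat\sigma_n]$ (using $\hat P_{\hat\rho_n} = \hat P_{\sum_k r_k\hat\rho_n^{(k)}} = \hat P_{\sum_k \hat P_{\hat\rho_n^{(k)}}} \leq \sum_k\hat P_{\hat\rho_n^{(k)}}$).

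\textbf{Smoothing.} The harder part is transferring these to the smooth divergences. For the \emph{easy} directions of each inequality, I would smooth the dominant component: if $\hat\tau_n^{(k)} \in B^\varepsilon(\hat\rho_n^{(k)})$ is near-optimal for $S_\infty^\varepsilon(\hat\rho_n^{(k)}\|\hat\sigma_n)$, then replacing the $k$-th block of $\hat\rho_n$ by $\hat\tau_n^{(k)}$ yields a state in $B^\varepsilon(\hat\rho_n)$ (trace distance is jointly convex / preserved under such block replacement up to factor $r_k \leq 1$), giving $S_\infty^\varepsilon(\hat\rho_n\|\hat\sigma_n) \leq \max_k S_\infty^{\varepsilon'}(\hat\rho_n^{(k)}\|\hat\sigma_n)$ for a comparable $\varepsilon'$. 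For the \emph{hard} directions — e.g. showing $S_0^\varepsilon(\hat\rho_n\|\hat\sigma_n)$ is not much \emph{larger} than $\min_k$, equivalently every smoothing $\hat\tau_n \in B^\varepsilon(\hat\rho_n)$ has bounded $S_0(\hat\tau_n\|\hat\sigma_n)$ — I would use that a smoothing of $\hat\rho_n$ is "close to" a mixture of smoothings of the components: decompose $\hat\tau_n - \hat\rho_n$ and project back, or more cleanly go through the hypothesis-testing divergence $S_{\rm H}^\eta$, which (by Proposition~\ref{Faist_prop}) captures $S_0^\varepsilon$ and $S_\infty^\varepsilon$ up to $n$-independent corrections, and which satisfies $Q^\eta_{\rm H}(\hat\rho_n\|\hat\sigma_n) = \min_{\hat Q}\{\mathrm{tr}[\hat\sigma_n\hat Q] : \mathrm{tr}[\hat\rho_n\hat Q]\geq\eta\}$ — a linear program whose behavior under the mixture $\hat\rho_n = \sum_k r_k\hat\rho_n^{(k)}$ is transparent (any $\hat Q$ good for the mixture at level $\eta$ is good for at least one component at a scaled level, and conversely). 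I expect the hypothesis-testing route to be the cleanest, since it converts the spectral positivity conditions \eqref{spectral_Nagaoka1}–\eqref{spectral_Nagaoka2} directly into statements about $\mathrm{tr}[\hat P\{\hat\rho_n - e^{na}\hat\sigma_n \geq 0\}\hat\rho_n] = \sum_k r_k\,\mathrm{tr}[\hat P\{\hat\rho_n - e^{na}\hat\sigma_n\geq 0\}\hat\rho_n^{(k)}]$, a convex combination that vanishes iff each term vanishes.

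\textbf{Conclusion and main obstacle.} Granting the $O(1)$-sandwiching, dividing by $n$, taking $\liminf$/$\limsup$, and letting $\varepsilon\to+0$ collapses the correction terms (they are $O(1)$, so $O(1/n) \to 0$) and yields \eqref{non_ergodic}. The main obstacle I anticipate is the hard smoothing direction: controlling an \emph{arbitrary} $\varepsilon$-smoothing $\hat\tau_n$ of the mixture $\hat\rho_n$ — in particular ruling out that $\hat\tau_n$ cleverly reshuffles weight between the blocks in a way that inflates $S_0$ beyond $\min_k$ plus the $n$-independent error. I believe the fix is precisely to pass to $S_{\rm H}^\eta$ (or equivalently to work directly with the Nagaoka projector form \eqref{spectral_Nagaoka1}–\eqref{spectral_Nagaoka2}, which is what Ref.~\cite{Sagawa2019} presumably does), where the mixture structure of $\hat\rho_n$ enters additively and the spectral-rate conditions become "for every $k$" statements by non-negativity of each term in the convex combination; the $\max$ and $\min$ then appear automatically from "the rate condition holds for the mixture iff it holds for all $k$ (the $\underline S$ side, giving $\min$)" versus "fails for the mixture iff it fails for some $k$ (the $\overline S$ side, giving $\max$)."
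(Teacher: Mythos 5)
Your overall framework (pass to $S_{\rm H}^\eta$ or the smooth divergences, tolerate $n$-independent corrections, then take $\liminf/\limsup$ and $\varepsilon\to+0$) is the right one, and three of the four one-sided bounds indeed follow from elementary arguments of the kind you sketch: $\hat\rho_n\ge r_k\hat\rho_n^{(k)}$ gives $\overline{S}(\widehat P\|\widehat\Sigma)\ge\max_k$; any test feasible for the mixture at level $\eta$ is feasible for some component at the same level, giving $\overline{S}\le\max_k$; and any test feasible for the mixture at level $1-\varepsilon$ is feasible for every component at level $1-\varepsilon/r_k$, giving $\underline{S}\le\min_k$. (Note the paper itself does not prove this proposition; it defers to Proposition 11 of Ref.~\cite{Sagawa2019}, so the comparison is on the merits of your argument.)

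The genuine gap is the remaining direction, $\underline{S}(\widehat P\|\widehat\Sigma)\ge\min_k\underline{S}(\widehat P^{(k)}\|\widehat\Sigma)$, which is the achievability-type statement requiring a \emph{single} good test (or smoothing) for the mixture, i.e.\ a quantum substitute for "take the union of the acceptance regions." Your warm-up step for it relies on $\hat P_{\hat\rho_n}\le\sum_k\hat P_{\hat\rho_n^{(k)}}$, hence ${\rm tr}[\hat P_{\hat\rho_n}\hat\sigma_n]\le K\max_k{\rm tr}[\hat P_{\hat\rho_n^{(k)}}\hat\sigma_n]$, but this fails for non-commuting supports: take $\hat\rho^{(1)}=|0\rangle\langle0|$, $\hat\rho^{(2)}=|\psi_\theta\rangle\langle\psi_\theta|$ with $|\psi_\theta\rangle=\cos\theta|0\rangle+\sin\theta|1\rangle$; the support of the mixture is the whole two-dimensional space, so with $\hat\sigma$ concentrated near $|1\rangle$ one has ${\rm tr}[\hat P_{\hat\rho}\hat\sigma]\simeq1$ while each ${\rm tr}[\hat P_{\hat\rho^{(k)}}\hat\sigma]=O(\theta^2)$, so even the finite-$n$ inequality $S_0(\hat\rho\|\hat\sigma)\ge\min_kS_0(\hat\rho^{(k)}\|\hat\sigma)-\ln K$ is false quantumly. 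Your proposed rescue via $S_{\rm H}^\eta$ or the projector form \eqref{spectral_Nagaoka1}--\eqref{spectral_Nagaoka2} is not "automatic" either: the decomposition ${\rm tr}[\hat P\{\hat\rho_n-e^{na}\hat\sigma_n\ge0\}\hat\rho_n]=\sum_kr_k\,{\rm tr}[\hat P\{\hat\rho_n-e^{na}\hat\sigma_n\ge0\}\hat\rho_n^{(k)}]$ involves the \emph{mixture's} projector tested on each component, which is not the component's own spectral condition, and bridging the two is exactly the missing step. A correct completion needs a genuinely quantum device, e.g.: given component tests $\hat Q_n^{(k)}$ with ${\rm tr}[\hat\rho_n^{(k)}\hat Q_n^{(k)}]\to1$ and ${\rm tr}[\hat\sigma_n\hat Q_n^{(k)}]\le e^{-nb}$, use the combined test $\hat T_n:=\hat P\{\sum_k\hat Q_n^{(k)}\ge c\}$, for which ${\rm tr}[\hat\sigma_n\hat T_n]\le(K/c)e^{-nb}$ by a Markov-type bound while a gentle-measurement argument gives ${\rm tr}[\hat\rho_n^{(k)}\hat T_n]\ge1-c-2\sqrt{1-{\rm tr}[\hat\rho_n^{(k)}\hat Q_n^{(k)}]}$ for every $k$; alternatively, work with the fidelity-based smooth divergence $S_{1/2}^\varepsilon$ of Section~\ref{sec:misc} (which yields the same lower spectral rate), mix the component-optimal smoothings, and use the operator subadditivity $\sqrt{\hat A+\hat B}\le\sqrt{\hat A}+\sqrt{\hat B}$ to get $S_{1/2}(\sum_kr_k\hat\tau_n^{(k)}\|\hat\sigma_n)\ge\min_kS_{1/2}(\hat\tau_n^{(k)}\|\hat\sigma_n)-2\ln K$. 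Without one of these ingredients the proof is incomplete; also note that the direction you flagged as hard ($\underline{S}\le\min_k$) is in fact the easy one via $S_{\rm H}^{1-\varepsilon}$.
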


\section{Quantum asymptotic equipartition property}
\label{sec_condition_information}

We next examine  under what conditions  the assumption of Theorem~\ref{thm:d_asym_q} is satisfied and a (almost) complete monotone emerges.
That is, a goal of this section is to establish a  condition that  the upper and lower spectral divergence rates collapse to a single value.

The asymptotic equipartition property (AEP) plays a central role.
In the classical case~\cite{Cover_Thomas}, the AEP characterizes  typical asymptotic (large deviation) behavior of probability distributions, and is closely related to various concepts in classical probability theory: the law of large numbers, the ergodic theorem, and the Shannon-McMillan theorem.
The AEP states that almost all events have almost the same probability, as represented by ``equipartition'' (see also Proposition~\ref{prop_c_AEP0} in Appendix~\ref{app_classical}).
 Therefore, if the AEP is satisfied,  the Lorenz curve asymptotically consists of an  almost  straight line (and a horizontal line), implying the collapse of the upper and lower spectral entropy rates.
Furthermore, we  can consider the relative version of the AEP (see also Proposition~\ref{prop_c_AEP} in Appendix~\ref{app_classical}), which is equivalent to the collapse of the upper and lower spectral divergence rates to a single value given by the KL divergence rate.
The relative AEP is also equivalent to the Stein's lemma for hypothesis testing.
See Appendix~\ref{app_classical} for details of the classical case.

In the following, we consider the quantum version of the (relative) AEP, which is equivalent to the quantum Stein's lemma for quantum hypothesis testing (see also Appendix~\ref{appx:hypothesis_testing}).
Then, the goal of this section is rephrased as follows: Clarify under what conditions the quantum relative AEP holds.

We first note the  i.i.d. case, where it is well-known that the relative AEP holds and the upper and lower spectral divergence rates collapse.  

\begin{theorem}[Theorem 2 of Ref.~\cite{Nagaoka2007}]
If $\widehat{P}:= \{ \hat \rho^{\otimes n} \}$ and $\widehat{\Sigma} := \{ \hat \sigma^{\otimes n} \}$ are both i.i.d.,
\begin{equation}
\overline{S}(\widehat{P}\| \widehat{\Sigma} ) = \underline{S}(\widehat{P} \| \widehat{\Sigma}) =  S_1(\widehat{P}\| \widehat{\Sigma} ).
\end{equation}
\label{thm:iid_divergence}
\end{theorem}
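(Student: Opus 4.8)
The plan is to prove the statement by establishing the two inequalities $\overline{S}(\widehat{P}\|\widehat\Sigma)\le S_1(\widehat P\|\widehat\Sigma)$ and $\underline{S}(\widehat P\|\widehat\Sigma)\ge S_1(\widehat P\|\widehat\Sigma)$ separately, since $\underline{S}\le\overline{S}$ always holds by the inequality established after Definition~\ref{def:spectral_divergence}. In the i.i.d.\ case one has $S_1(\widehat P\|\widehat\Sigma)=\frac1n S_1(\hat\rho^{\otimes n}\|\hat\sigma^{\otimes n})=S_1(\hat\rho\|\hat\sigma)$ exactly, by additivity of the quantum KL divergence, so the target value is just $S_1(\hat\rho\|\hat\sigma)$. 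The cleanest route is to invoke the Nagaoka--Hayashi expressions (\ref{spectral_Nagaoka1}) and (\ref{spectral_Nagaoka2}) for $\overline S$ and $\underline S$, which reduce the problem to analyzing $\mathrm{tr}\big[\hat P\{\hat\rho_n-e^{na}\hat\sigma_n\ge 0\}\hat\rho_n\big]$ as $n\to\infty$; this is precisely the content of the quantum Stein's lemma.

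First I would fix $a$. The key operator is $\hat\rho^{\otimes n}-e^{na}\hat\sigma^{\otimes n}$, and the relevant quantity is the "error probability" $\mathrm{tr}[\hat P\{\hat\rho^{\otimes n}-e^{na}\hat\sigma^{\otimes n}\ge 0\}\hat\rho^{\otimes n}]$. The standard tool is the Hiai--Petz / Ogawa--Nagaoka analysis: for $a<S_1(\hat\rho\|\hat\sigma)$ one shows this trace tends to $1$, giving $a\le\underline{S}(\widehat P\|\widehat\Sigma)$, hence $\underline S\ge S_1(\hat\rho\|\hat\sigma)$; for $a>S_1(\hat\rho\|\hat\sigma)$ one shows it tends to $0$, giving $a\ge\overline S(\widehat P\|\widehat\Sigma)$, hence $\overline S\le S_1(\hat\rho\|\hat\sigma)$. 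Combining, $\underline S=\overline S=S_1(\hat\rho\|\hat\sigma)=S_1(\widehat P\|\widehat\Sigma)$. To get the "$\to 1$" direction one can use the operator inequality $\hat P\{\hat X\ge 0\}\ge \hat P\{\hat X\ge 0\}\,e^{-s\hat X}\cdot(\text{something})$ or, more simply, a quantum Chernoff/Markov bound: $\mathrm{tr}[\hat P\{\hat\rho^{\otimes n}-e^{na}\hat\sigma^{\otimes n}<0\}\hat\rho^{\otimes n}]\le e^{-s na}\,\mathrm{tr}[\hat\rho^{\otimes n(1-s)}\hat\sigma^{\otimes ns}]$-type estimates for $s\in(0,1)$, whose exponent is controlled by the (simple) R\'enyi divergence $\psi(s)=\ln\mathrm{tr}[\hat\rho^{1-s}\hat\sigma^s]$; differentiating at $s=0$ gives $-S_1(\hat\rho\|\hat\sigma)$, so the bound decays for $a>S_1(\hat\rho\|\hat\sigma)$, and a symmetric argument handles the other direction. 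Alternatively, since the excerpt has already developed the hypothesis-testing machinery and Proposition~\ref{Faist_prop}, I would phrase everything through the hypothesis testing divergence $S_{\rm H}^\eta(\hat\rho^{\otimes n}\|\hat\sigma^{\otimes n})$ and cite the quantum Stein's lemma (stated in Appendix~\ref{appx:hypothesis_testing}), which asserts $\frac1n S_{\rm H}^\eta(\hat\rho^{\otimes n}\|\hat\sigma^{\otimes n})\to S_1(\hat\rho\|\hat\sigma)$ for every fixed $\eta\in(0,1)$; then Proposition~\ref{Faist_prop} translates this into the collapse of $\frac1n S_0^\varepsilon$ and $\frac1n S_\infty^\varepsilon$ in the appropriate order of limits.

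Concretely, the steps in order: (1) note $S_1(\widehat P\|\widehat\Sigma)=S_1(\hat\rho\|\hat\sigma)$ by additivity; (2) using Proposition~\ref{Faist_prop}, bound $S_0^\varepsilon(\hat\rho^{\otimes n}\|\hat\sigma^{\otimes n})$ below and $S_\infty^\varepsilon(\hat\rho^{\otimes n}\|\hat\sigma^{\otimes n})$ above by hypothesis-testing divergences $S_{\rm H}^{\eta}(\hat\rho^{\otimes n}\|\hat\sigma^{\otimes n})$ at $\eta$ close to $1$ and $0$ respectively, up to $n$-independent additive constants; (3) apply the quantum Stein's lemma (the i.i.d.\ AEP for hypothesis testing) to each, divide by $n$, send $n\to\infty$ with the correct $\limsup/\liminf$; (4) send $\varepsilon\to+0$, noting the additive corrections vanish after dividing by $n$ and the $\eta$-dependence of the Stein limit disappears; (5) conclude $\overline S=\underline S=S_1(\hat\rho\|\hat\sigma)$, and invoke $\underline S\le\overline S$ to close the sandwich. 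The main obstacle is the quantum Stein's lemma itself — the direct part ("strong converse" exponent for $a>S_1$, and achievability for $a<S_1$) requires either the Hiai--Petz argument via $\psi(s)=\ln\mathrm{tr}[\hat\rho^{1-s}\hat\sigma^s]$ and its convexity/derivative at the endpoints, or Ogawa--Nagaoka's strong converse; since the paper relegates hypothesis testing to Appendix~\ref{appx:hypothesis_testing} and states the quantum Stein's lemma there, the expected proof here is short: reduce everything to that appendix result via Proposition~\ref{Faist_prop} and Definition~\ref{def:spectral_divergence}, so the only real work is bookkeeping the order of limits and the $n$-independence of error terms.
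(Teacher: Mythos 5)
Your route is essentially the paper's own: the paper does not reprove this result but, in Appendix~\ref{appx:hypothesis_testing}, shows via Proposition~\ref{hypothesis_spectral} (itself built from Proposition~\ref{Faist_prop} and Eqs.~(\ref{hypothesis_spectral_eq1})--(\ref{hypothesis_spectral_eq2})) that the collapse $\overline{S}=\underline{S}=S_1$ is equivalent to the i.i.d.\ quantum Stein's lemma, which is then cited from Hiai--Petz and Ogawa--Nagaoka, exactly as you propose. Your bookkeeping (additivity of $S_1$, translation through $S_{\rm H}^\eta$, order of the $n$ and $\varepsilon$ limits, $n$-independence of the correction terms) matches the paper's argument, with the hard analytic content likewise outsourced to the cited Stein's lemma.
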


To go beyond the i.i.d. situation, \textit{ergodicity} plays a significant role, with which our answer to the above-mentioned question is as follows: If $\widehat{P}$ is translation invariant  and ergodic and $\widehat{\Sigma}$ is the Gibbs state of a local and translation-invariant Hamiltonian, then $\overline{S}(\widehat{P}\| \widehat{\Sigma} )$ and $\underline{S}(\widehat{P} \| \widehat{\Sigma})$ coincide. 
In the following, we clarify the definitions of these concepts.

We consider a quantum many-body system with many local spins on a lattice.
Concretely, consider a quantum spin system on an infinite lattice $\mathbb Z^d$ in any spatial dimension ($d=1,2,3, \cdots$), where a finite-dimensional Hilbert space $\mathcal H_i$ is attached to each lattice site $i \in \mathbb Z^d$.
Let $\hat \rho$ represent a state on this infinite lattice system, and $\hat \rho_\Lambda$ be its reduced density operator on a bounded region $\Lambda \subset \mathbb Z^d$.

In this book,  we do not go into the mathematical details of infinite spin systems  with $C^\ast$-algebras.
In this paragraph, however, let us just briefly remark on the rigorous formulation, which the readers can skip.
We consider the $C^\ast$-algebra written as  $\mathcal A := \overline{\otimes_{i \in \mathbb Z^d}\mathcal A_i}$ with $\mathcal A_i := \mathcal L (\mathcal H_i)$.
Here, $\mathcal A$ is obtained  as the $C^\ast$-inductive limit of the algebra of local operators $\mathcal A_{\rm loc} := \cup_\Lambda \otimes_{i \in \Lambda} \mathcal A_i$ with $\Lambda \subset \mathbb Z^d$ being bounded.
Then, $\hat \rho$ is the density operator (in the infinite-dimensional trace-class operator space) corresponding to a normal state $\rho : \mathcal A \to \mathbb C$.
The expectation value of an operator $\hat X$ is given by $\rho (\hat X) = {\rm tr} [\hat \rho \hat X ]$.
See Ref.~\cite{Sagawa2019} for details of the rigorous formulation, and  Refs.~\cite{Bratteli1979,Bratteli1981,Ruelle1999} for the general mathematical theory of infinite spin systems.

We next introduce the shift superoperator $\mathcal{T}_i$ with $i \in \mathbb Z^d$, which maps any operator on site $j \in \mathbb Z^d$ to the same operator on site $j+i$.   
Then, we  define  translation invariance and ergodicity.

\begin{definition}[Translation-invariant states]
A state $\hat \rho$ is translation invariant, if it is invariant under the action of the shift operator:  For any $i \in \mathbb Z^d$ and any $\hat X \in \mathcal A$, 
\begin{equation}
{\rm tr}[\hat \rho \hat X] = {\rm tr}[\hat \rho \mathcal{T}_i (\hat X)].
\end{equation}
\end{definition}

\begin{definition}[Ergodic states]
Let $\hat X \in \mathcal A$ be a self-adjoint operator and define
\begin{equation}
\hat X_\Lambda := \frac{1}{| \Lambda |} \sum_{i \in \Lambda} \mathcal{T}_i (\hat X),
\label{ergodic_macro_operator}
\end{equation} 
where $\Lambda \subset \mathbb Z^d$ is a bounded region and $| \Lambda |$ is the number of its elements.
Then, a translation-invariant state $\hat \rho$ is ergodic, if the variance of any operator $\hat X_\Lambda$ of the form (\ref{ergodic_macro_operator}) vanishes in the limit of $\Lambda \to \mathbb Z^d$.  Here, we take $\Lambda$ as a hypercube to specify the meaning of this limit.
\label{def:q_ergodic}
\end{definition}

This definition implies that any  macroscopic (i.e., extensive) observable of the form (\ref{ergodic_macro_operator})   (e.g., the total magnetization) 
does not exhibit macroscopic fluctuations, that is, any macroscopic observable has a definite value in the thermodynamic limit.  
We note that this is ``spatial'' ergodicity in the sense that the spatial average equals the ensemble average, while in standard statistical mechanics temporal ergodicity often plays significant roles.
 The above definition also implies that the state is in a ``pure thermodynamic phase'' without any phase mixture, which is formalized as follows.

\begin{proposition}
A state $\hat \rho$ is translation invariant and ergodic, if and only if it is an extreme point of the set of translation-invariant states.
\end{proposition}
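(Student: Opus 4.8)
The plan is to prove the equivalence: a translation-invariant state $\hat\rho$ is ergodic if and only if it is an extreme point of the (convex, weak-$*$ compact) set $\mathcal S_{\rm ti}$ of translation-invariant states on $\mathcal A$. The natural route passes through a variance/correlation characterization of ergodicity, using the mean ergodic theorem in the GNS representation. First I would set up the GNS triple $(\mathcal H_\rho, \pi_\rho, \Omega_\rho)$ associated to $\hat\rho$, and note that by translation invariance the shift $\mathcal T_i$ is implemented by a unitary group $U_i$ fixing $\Omega_\rho$. The Cesàro averages $\frac{1}{|\Lambda|}\sum_{i\in\Lambda} U_i$ converge strongly (von Neumann's mean ergodic theorem, adapted to $\mathbb Z^d$-hypercubes) to the orthogonal projection $E$ onto the $U$-invariant subspace; this is the analytic engine behind everything.

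The key step is the dictionary between three conditions: (1) $\hat\rho$ is ergodic in the sense of Definition~\ref{def:q_ergodic}, i.e. $\mathrm{Var}_{\hat\rho}(\hat X_\Lambda)\to 0$ for all self-adjoint local $\hat X$; (2) the invariant projection $E$ is one-dimensional, $E = |\Omega_\rho\rangle\langle\Omega_\rho|$; (3) for all local $\hat X, \hat Y$, the spatially-averaged truncated correlations $\frac{1}{|\Lambda|}\sum_{i\in\Lambda}\big({\rm tr}[\hat\rho\, \hat X\,\mathcal T_i(\hat Y)] - {\rm tr}[\hat\rho\hat X]{\rm tr}[\hat\rho\hat Y]\big)$ vanish as $\Lambda\to\mathbb Z^d$. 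The implications (1)$\Leftrightarrow$(3) follow from expanding $\mathrm{Var}_{\hat\rho}(\hat X_\Lambda)$ into a double sum of two-point functions and a counting argument on lattice translates; (2)$\Leftrightarrow$(3) follows from writing the averaged correlation as $\langle \pi_\rho(\hat X^*)\Omega_\rho, \big(\frac{1}{|\Lambda|}\sum U_i\big)\pi_\rho(\hat Y)\Omega_\rho\rangle - \langle\cdot\rangle\langle\cdot\rangle \to \langle\pi_\rho(\hat X^*)\Omega_\rho, E\,\pi_\rho(\hat Y)\Omega_\rho\rangle - \langle\cdot\rangle\langle\cdot\rangle$, which is zero for all local $\hat X,\hat Y$ (hence, by density, all of $\mathcal A$) exactly when $E$ projects onto $\mathbb C\Omega_\rho$.

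Then I would connect (2) to extremality. If $\hat\rho = \lambda\hat\rho_1 + (1-\lambda)\hat\rho_2$ with $\hat\rho_j\in\mathcal S_{\rm ti}$ and $\lambda\in(0,1)$, then $\hat\rho_1 \ll \hat\rho$ in the state-domination sense, so there is a positive $U$-invariant operator $T'$ in the commutant $\pi_\rho(\mathcal A)'$ with ${\rm tr}[\hat\rho_1\hat X] = \langle\Omega_\rho, \pi_\rho(\hat X)T'\Omega_\rho\rangle$; invariance of $\hat\rho_1$ forces $T'$ to commute with all $U_i$, i.e. $ET'E = T'$ acting on $\Omega_\rho$, so if $E$ is rank one then $T'\Omega_\rho$ is a multiple of $\Omega_\rho$, giving $\hat\rho_1 = \hat\rho$. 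Conversely, if $E$ has rank $\geq 2$, pick a nontrivial invariant vector $\xi\perp\Omega_\rho$; the associated invariant positive form in the commutant yields a genuine decomposition $\hat\rho = \lambda\hat\rho_1+(1-\lambda)\hat\rho_2$ with $\hat\rho_1\neq\hat\rho_2$, both translation invariant — so $\hat\rho$ is not extreme. This is the standard "ergodic = extremal invariant state" argument from the theory of $C^*$-dynamical systems; I would cite Bratteli–Robinson~\cite{Bratteli1979,Bratteli1981} and Ruelle~\cite{Ruelle1999} for the infinite-volume machinery rather than reproving it.

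\textbf{The main obstacle} I expect is not the algebraic extremality argument (which is textbook) but the careful handling of the $\mathbb Z^d$ mean ergodic theorem over hypercube exhaustions together with the subtlety that Definition~\ref{def:q_ergodic} only quantifies over self-adjoint \emph{local} operators $\hat X$, whereas the GNS argument naturally wants all of $\mathcal A$; bridging this requires a density/approximation estimate showing $\hat X\mapsto \sup_\Lambda \|\hat X_\Lambda - {\rm tr}[\hat\rho\hat X]\hat I\|$-type quantities (or rather their $L^2(\hat\rho)$ analogues) are controlled uniformly, so that vanishing variance on the local dense subalgebra propagates to the norm closure. Once that approximation lemma is in place, the three-way equivalence and the extremality characterization follow cleanly.
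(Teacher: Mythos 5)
Your forward direction (ergodic $\Rightarrow$ extremal) is sound: the mean-ergodic-theorem reduction of the variance in Definition~\ref{def:q_ergodic} to $\|E\pi_\rho(\hat X)\Omega_\rho\|^2-{\rm tr}[\hat\rho\hat X]^2$, and the commutant argument giving $T'\Omega_\rho\in E\mathcal H_\rho=\mathbb C\Omega_\rho$, are exactly the standard route (and the obstacle you highlight, local operators versus all of $\mathcal A$, is a non-issue: $E$ is bounded, $\pi_\rho(\mathcal A_{\rm loc})\Omega_\rho$ is dense, and the condition $E\psi\in\mathbb C\Omega_\rho$ is closed under limits, so no uniform approximation estimate is needed). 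The genuine gap is in the converse. The step ``if $E$ has rank $\geq 2$, pick an invariant vector $\xi\perp\Omega_\rho$; the associated invariant positive form in the commutant yields a genuine decomposition'' is false for general $C^\ast$-dynamical systems: an invariant vector does not, by itself, produce an element of $\pi_\rho(\mathcal A)'\cap\{U_i\}'$. A counterexample is a pure state of a matrix algebra with the trivial $\mathbb Z^d$-action: every vector is invariant, so $E$ has rank $\geq 2$, yet the state is extremal among invariant states. What holds in general is only ``extremal $\Leftrightarrow$ $\pi_\rho(\mathcal A)'\cap\{U_i\}'=\mathbb C\hat I$''; rank-one $E$ is strictly stronger.

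What rescues the proposition for quantum spin systems --- and what your sketch never invokes --- is the asymptotic abelianness of the lattice translations on the quasi-local algebra: translates of local operators eventually commute, hence the system is $G$-abelian at $\hat\rho$ (the family $E\pi_\rho(\mathcal A)E$ is commuting), and under this hypothesis extremal invariance does imply that $E$ has rank one, which then gives the vanishing of the variances in Definition~\ref{def:q_ergodic}. This is the content of the results in Bratteli--Robinson~\cite{Bratteli1979} (Sec.~4.3) and Ruelle~\cite{Ruelle1999} that the paper itself cites in lieu of a proof. So the real crux is not the $\mathbb Z^d$ mean ergodic theorem over hypercubes or a density estimate, but supplying and using the $G$-abelianness property; without it your converse direction does not go through. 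Deferring to the textbook theorem is acceptable (the paper does the same), but your write-up should then explicitly verify the abelianness hypothesis for the spin lattice rather than presenting the rank argument as if it were unconditional.
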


In literature, the above expression of ergodicity is often adopted as a definition of it.
See, for example, Theorem 6.3.3, Proposition 6.3.5, and Lemma 6.5.1 of Ref.~\cite{Ruelle1999} for other equivalent expressions.

We next formulate the Hamiltonian of the system.

\begin{definition}[Local and translation-invariant Hamiltonians]
A Hamiltonian of the system is local, if its truncation on any bounded region $\Lambda \subset \mathbb Z^d$, written as $\hat H_\Lambda$, is given of the form
\begin{equation}
\hat H_\Lambda = \sum_{i \in \Lambda} \hat h_i, 
\end{equation} 
where  $\hat h_i$ is a Hermitian operator acting on a bounded region around $i \in \mathbb Z^d$.
Moreover, a local Hamiltonian is  translation invariant, if $\hat h_i = \mathcal{T}_i (\hat h_0 )$ holds.
\end{definition}

The corresponding truncated Gibbs state is given by $\hat \rho_\Lambda^{\rm G} := \exp (\beta (F_\Lambda  - \hat H_\Lambda ) )$ with $F_\Lambda := - \beta^{-1} {\rm tr}[e^{-\beta \hat H_\Lambda}]$.
Note that if the global Gibbs state (strictly speaking, the Kubo-Martin-Schwinger (KMS) state) is unique, it is obtained by the limit of $\hat \rho_\Lambda^{\rm G}$ in the weak-$\ast$ (or ultraweak) topology  (cf. Proposition 6.2.15 of Ref.~\cite{Bratteli1981}).

It is known that the KMS state of a local and translation-invariant Hamiltonian is ergodic, for any $\beta > 0$ in one dimension~\cite{Araki1969}, and for sufficiently small $\beta$ (i.e., for sufficiently high temperature) in higher dimensions (see, e.g., Ref.~\cite{Tasaki2018} and references therein).

We note that i.i.d. is ergodic.
In addition, an i.i.d. state is regarded as the Gibbs state of a non-interacting Hamiltonian.  On the other hand, the foregoing general formulation includes non-i.i.d. cases, in the sense that  general ergodic states can be non-i.i.d. and Hamiltonians can be interacting.

We consider our setup of the asymptotic majorization in the previous section along with the foregoing  infinite-lattice formulation.
Let $\Lambda_n \subset \mathbb Z^d$ be a bounded region with $| \Lambda_n | = n$.
We consider sequences $\widehat{P} = \{ \hat \rho_n \}_{n=1}^\infty$ and $\widehat{\Sigma} = \{ \hat \sigma_n \}_{n=1}^\infty$, where $\hat \rho_n$ is the reduced state  on $\Lambda_n$ of an ergodic state $\hat \rho$, and $\hat \sigma_n := \rho_{\Lambda_n}^{\rm G}$ is the truncated Gibbs state of a local and translation-invariant Hamiltonian. 

Before going to the relative quantum AEP with $\widehat{P}$ and $\widehat{\Sigma}$, we consider the quantum version of the ordinary AEP of $\widehat{P}$, which is referred to as the  quantum  Shannon-McMillan theorem~\cite{Bjelakovic2002,Bjelakovic2003,Ogata2013}.

\begin{theorem}[Quantum Shannon-McMillan Theorem]
Suppose that $\widehat{P}$ is ergodic.
Then, for any $\varepsilon > 0$, there exists a sequence of projectors $\hat \Pi_{\widehat{P},n}^\varepsilon$ (called typical projectors) that satisfy, for sufficiently large $n$,
\begin{equation}
e^{-n (s+ \varepsilon )} \hat \Pi_{\widehat{P},n}^\varepsilon \leq \hat \Pi_{\widehat{P},n}^\varepsilon \hat \rho_n  \hat \Pi_{\widehat{P},n}^\varepsilon \leq e^{-n(s-\varepsilon)} \hat \Pi_{\widehat{P},n}^\varepsilon,
\label{rho_typical1}
\end{equation}
\begin{equation}
e^{n(s-\varepsilon)} < {\rm tr}[\hat \Pi_{\widehat{P},n}^\varepsilon] < e^{n(s+\varepsilon)},
\label{rho_typical2}
\end{equation}
\begin{equation}
\lim_{n \to \infty} {\rm tr}[\hat \Pi_{\widehat{P},n}^\varepsilon \hat \rho_n] = 1,
\label{rho1_1}
\end{equation}
where $s:= S_1 (\widehat{P})$.
\label{ergodic_prop}
\end{theorem}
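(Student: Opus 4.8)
The plan is to reduce the statement to a known operator-algebraic form of the quantum ergodic theorem and then extract the typical projectors from a spectral decomposition. First I would recall the quantum mean ergodic theorem: for an ergodic translation-invariant state $\hat\rho$ on the lattice $\mathbb Z^d$, the spatial averages $\hat X_{\Lambda_n} := \tfrac{1}{|\Lambda_n|}\sum_{i\in\Lambda_n}\mathcal T_i(\hat X)$ of any local self-adjoint $\hat X$ converge (in $L^2$ of the GNS representation) to the scalar $\rho(\hat X)\hat I$, which is precisely the vanishing-variance condition in Definition~\ref{def:q_ergodic}. The content of Shannon--McMillan is that the same ``law of large numbers'' applies to the non-local observable $-\tfrac1n \ln \hat\rho_n$, whose expectation value is $\tfrac1n S_1(\hat\rho_n) \to s = S_1(\widehat{P})$. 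Since $-\tfrac1n\ln\hat\rho_n$ is not a spatial average of a fixed local operator, one cannot invoke the ergodic theorem directly; instead I would cite the established non-commutative Shannon--McMillan--Breiman theorem of Bjelakovi\'c--Kr\"uger--Siegmund-Schultze--Sz\-ko\l a and Ogata~\cite{Bjelakovic2002,Bjelakovic2003,Ogata2013}, which states exactly that $-\tfrac1n\ln\hat\rho_n$ converges to $s$ in probability with respect to the ``distribution'' induced by $\hat\rho_n$ (made precise via the spectral projections of $\hat\rho_n$).

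Given that convergence-in-probability statement, the construction of $\hat\Pi_{\widehat P,n}^\varepsilon$ is routine. Let $\hat\rho_n = \sum_k \lambda_k^{(n)} |e_k^{(n)}\rangle\langle e_k^{(n)}|$ be a spectral decomposition, and define
\begin{equation}
\hat\Pi_{\widehat P,n}^\varepsilon := \sum_{k \,:\, e^{-n(s+\varepsilon)} \le \lambda_k^{(n)} \le e^{-n(s-\varepsilon)}} |e_k^{(n)}\rangle\langle e_k^{(n)}|,
\end{equation}
i.e.\ the spectral projector of $\hat\rho_n$ onto the eigenvalue window $[e^{-n(s+\varepsilon)}, e^{-n(s-\varepsilon)}]$. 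Inequality~(\ref{rho_typical1}) is then immediate because on the range of $\hat\Pi_{\widehat P,n}^\varepsilon$ the operator $\hat\rho_n$ has eigenvalues in exactly that window, so $\hat\Pi_{\widehat P,n}^\varepsilon\hat\rho_n\hat\Pi_{\widehat P,n}^\varepsilon$ is sandwiched between $e^{-n(s+\varepsilon)}\hat\Pi_{\widehat P,n}^\varepsilon$ and $e^{-n(s-\varepsilon)}\hat\Pi_{\widehat P,n}^\varepsilon$. Equation~(\ref{rho1_1}) is precisely the convergence-in-probability statement: ${\rm tr}[\hat\Pi_{\widehat P,n}^\varepsilon\hat\rho_n] = \mathrm{Prob}\big(|-\tfrac1n\ln\hat\rho_n - s| \le \varepsilon\big) \to 1$. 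Finally, for~(\ref{rho_typical2}): the upper bound follows from $1 \ge {\rm tr}[\hat\Pi_{\widehat P,n}^\varepsilon\hat\rho_n] \ge e^{-n(s+\varepsilon)}{\rm tr}[\hat\Pi_{\widehat P,n}^\varepsilon]$, giving ${\rm tr}[\hat\Pi_{\widehat P,n}^\varepsilon] \le e^{n(s+\varepsilon)}$; the lower bound follows from ${\rm tr}[\hat\Pi_{\widehat P,n}^\varepsilon\hat\rho_n] \le e^{-n(s-\varepsilon)}{\rm tr}[\hat\Pi_{\widehat P,n}^\varepsilon]$ together with ${\rm tr}[\hat\Pi_{\widehat P,n}^\varepsilon\hat\rho_n] \to 1$, so that for large $n$ we have ${\rm tr}[\hat\Pi_{\widehat P,n}^\varepsilon] \ge e^{n(s-\varepsilon)}{\rm tr}[\hat\Pi_{\widehat P,n}^\varepsilon\hat\rho_n] > e^{n(s-\varepsilon)}$. (A minor cosmetic point: to get strict inequalities for all sufficiently large $n$ one may shrink $\varepsilon$ slightly in the window, or simply absorb the constants; I would not belabor this.)

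The main obstacle is entirely in the first paragraph: establishing the non-commutative Shannon--McMillan convergence $-\tfrac1n\ln\hat\rho_n \to s$ in probability for general ergodic (non-i.i.d., and in the quantum case non-commuting across sites) states. In the i.i.d. case this is an easy consequence of the law of large numbers applied to $-\tfrac1n\ln\hat\rho^{\otimes n} = -\tfrac1n\sum_i \ln\hat\rho^{(i)}$, but for genuinely correlated ergodic states the logarithm of the reduced density matrix is a highly non-local object, and the proof requires the machinery of Bjelakovi\'c et al.\ and Ogata (subadditivity of entropy, the quantum ergodic theorem for the relevant generating functions, and an approximation of $\hat\rho_n$ by tensor products over large blocks). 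Since the excerpt permits assuming earlier-stated results and cites~\cite{Bjelakovic2002,Bjelakovic2003,Ogata2013} directly in the statement, I would treat this convergence as the black box provided by those references and keep the written proof focused on the spectral-window construction above.
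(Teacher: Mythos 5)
Your proposal is correct and takes essentially the same route as the paper: the paper does not prove Theorem~\ref{ergodic_prop} itself but cites the quantum Shannon--McMillan results of Refs.~\cite{Bjelakovic2002,Bjelakovic2003,Ogata2013} (with Theorem 1.4 of Ref.~\cite{Ogata2013} invoked specifically for Eq.~(\ref{rho1_1})), which is exactly the black box you rely on. Your spectral-window construction of $\hat \Pi_{\widehat{P},n}^\varepsilon$ and the counting bounds are the standard translation of that convergence-in-probability statement into the typical-projector form, and your remark about shrinking the eigenvalue window takes care of the strict inequalities in (\ref{rho_typical2}).
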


We note that, comparing the above quantum theorem with the classical case, Eq.~(\ref{rho1_1}) is a stronger statement than Proposition~\ref{prop_c_AEP0} (b), but the former is indeed provable (see Theorem 1.4 of Ref.~\cite{Ogata2013}).
Then, the following is a consequence of the quantum Shannon-McMillan theorem and is itself regarded as a representation of the quantum AEP.

\begin{theorem}[Quantum AEP, Theorem 7 of ~\cite{Schoenmakers2007}] If $\widehat{P}$ is translation invariant  and ergodic,
\begin{equation}
\underline{S} (\widehat{P}) = \overline{S} (\widehat{P}) = S_1(\widehat{P}).
\end{equation}
\end{theorem}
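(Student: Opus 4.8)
The statement to prove is the Quantum AEP: if $\widehat{P} = \{\hat\rho_n\}$ is translation-invariant and ergodic, then $\underline{S}(\widehat{P}) = \overline{S}(\widehat{P}) = S_1(\widehat{P})$. Since by definition $\underline{S}(\widehat{P}) = \ln d - \overline{S}(\widehat{P}\|\widehat{ID})$ and $\overline{S}(\widehat{P}) = \ln d - \underline{S}(\widehat{P}\|\widehat{ID})$, and by the Proposition of Bowen-Datta (Lemma 3 of \cite{Bowen_Datta}) we always have $\underline{S}(\widehat{P}) \le \liminf_n \frac1n S_1(\hat\rho_n) \le \limsup_n \frac1n S_1(\hat\rho_n) \le \overline{S}(\widehat{P})$, it suffices to prove the single inequality $\overline{S}(\widehat{P}) \le \underline{S}(\widehat{P})$, i.e. that the upper and lower spectral entropy rates collapse. (Once they collapse, the Bowen-Datta proposition forces the common value to equal $\lim_n \frac1n S_1(\hat\rho_n) = S_1(\widehat{P})$, and in particular guarantees the limit exists.)

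**Key steps.** First I would invoke the Quantum Shannon-McMillan Theorem (Theorem~\ref{ergodic_prop}): for any $\varepsilon > 0$ there are typical projectors $\hat\Pi_{\widehat{P},n}^\varepsilon$ with $e^{-n(s+\varepsilon)}\hat\Pi_{\widehat{P},n}^\varepsilon \le \hat\Pi_{\widehat{P},n}^\varepsilon \hat\rho_n \hat\Pi_{\widehat{P},n}^\varepsilon \le e^{-n(s-\varepsilon)}\hat\Pi_{\widehat{P},n}^\varepsilon$, with $e^{n(s-\varepsilon)} < \mathrm{tr}[\hat\Pi_{\widehat{P},n}^\varepsilon] < e^{n(s+\varepsilon)}$, and $\mathrm{tr}[\hat\Pi_{\widehat{P},n}^\varepsilon\hat\rho_n] \to 1$, where $s = S_1(\widehat{P})$. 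The second step is to translate these three facts, via the characterization of the spectral divergence rates in the form of Eqs.~(\ref{spectral_Nagaoka1}) and (\ref{spectral_Nagaoka2}) (with $\hat\sigma_n = \hat I^{\otimes n}/d^n$, so $e^{na}\hat\sigma_n = e^{na - n\ln d}\hat I$), into bounds on $\overline{S}(\widehat{P}\|\widehat{ID})$ and $\underline{S}(\widehat{P}\|\widehat{ID})$. Concretely: to show $\overline{S}(\widehat{P}\|\widehat{ID}) \le \ln d - s + 2\varepsilon$, I would take the operator $\hat\rho_n - e^{n(\ln d - s + 2\varepsilon)}\hat\sigma_n = \hat\rho_n - e^{n(-s+2\varepsilon)}\hat I$; on the typical subspace $\hat\rho_n \le e^{-n(s-\varepsilon)}\hat\Pi$, which is strictly dominated, and the off-typical weight of $\hat\rho_n$ vanishes, so $\mathrm{tr}[\hat P\{\hat\rho_n - e^{na}\hat\sigma_n \ge 0\}\hat\rho_n] \to 0$ for this $a$. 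Symmetrically, to show $\underline{S}(\widehat{P}\|\widehat{ID}) \ge \ln d - s - 2\varepsilon$, I would use $\hat\rho_n \ge e^{-n(s+\varepsilon)}\hat\Pi$ on the typical subspace together with $\mathrm{tr}[\hat\Pi\hat\rho_n]\to 1$ to conclude that $\mathrm{tr}[\hat P\{\hat\rho_n - e^{n(\ln d - s - 2\varepsilon)}\hat\sigma_n \ge 0\}\hat\rho_n] \to 1$. Letting $\varepsilon \to +0$ gives $\overline{S}(\widehat{P}\|\widehat{ID}) = \underline{S}(\widehat{P}\|\widehat{ID}) = \ln d - s$, hence $\overline{S}(\widehat{P}) = \underline{S}(\widehat{P}) = s = S_1(\widehat{P})$.

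**Alternative route and the main obstacle.** An even cleaner organization: the smooth max-entropy $S_\infty^\varepsilon(\hat\rho_n)$ and smooth min-entropy $S_0^\varepsilon(\hat\rho_n)$ (defined as $\ln d - S_\infty^\varepsilon(\hat\rho_n\|\hat I^{\otimes n}/d^n)$ etc.) can be directly estimated from the typical projectors --- $\hat\Pi_{\widehat{P},n}^\varepsilon$ itself, suitably normalized, is a near-optimal smoothing state --- giving $|S_{0}^\varepsilon(\hat\rho_n) - ns|, |S_\infty^\varepsilon(\hat\rho_n) - ns| = o(n)$, after which dividing by $n$ and taking $n\to\infty$ then $\varepsilon\to+0$ yields the claim; this is essentially the packaging in \cite{Schoenmakers2007}. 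The genuine mathematical work, however, is \emph{not} in this excerpt's scope: it is the Quantum Shannon-McMillan Theorem itself (Theorem~\ref{ergodic_prop}), whose proof for general ergodic --- not merely i.i.d. --- translation-invariant states is highly nontrivial (the references \cite{Bjelakovic2002,Bjelakovic2003,Ogata2013}), requiring quantum ergodic-theoretic input to establish the exponential equipartition of eigenvalues. Since I am permitted to assume results stated earlier, I would treat Theorem~\ref{ergodic_prop} as a black box; the only real care needed in my proof is the bookkeeping that relates the three SMT estimates to the Nagaoka-Hayashi expressions (\ref{spectral_Nagaoka1})--(\ref{spectral_Nagaoka2}) for $\overline{S}$ and $\underline{S}$, and checking that the $2\varepsilon$ slack collapses correctly in the iterated limit --- straightforward but worth writing out carefully because the order of limits ($n$ first, then $\varepsilon$) is essential and non-uniform.
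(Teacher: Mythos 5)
Your proposal is essentially correct and follows exactly the route the paper itself points to: the paper does not prove this theorem but cites Theorem 7 of \cite{Schoenmakers2007} and frames it as ``a consequence of the quantum Shannon-McMillan theorem'' (Theorem~\ref{ergodic_prop}), which is precisely the reduction you carry out, with the real analytic content (the SMT for general ergodic states) correctly treated as a black box. Two bookkeeping points deserve care, though neither is a genuine gap. First, the SMT controls only the pinched operator $\hat \Pi_{\widehat{P},n}^\varepsilon \hat \rho_n \hat \Pi_{\widehat{P},n}^\varepsilon$, so your statements of the form ``on the typical subspace $\hat \rho_n \leq e^{-n(s-\varepsilon)}\hat \Pi$'' (and the analogous lower bound) are not literally what (\ref{rho_typical1}) gives when $\hat \Pi$ and $\hat \rho_n$ do not commute; the fix is standard: for the bound $\overline{S}(\widehat{P}\| \widehat{ID}) \leq \ln d - s + 2\varepsilon$ combine $\hat \Pi \hat \rho_n \hat \Pi \leq e^{-n(s-\varepsilon)} \hat \Pi$ with the eigenvalue-counting bound ${\rm tr}\, \hat P \{ \hat \rho_n - e^{-n(s-2\varepsilon)} \hat I \geq 0 \} \leq e^{n(s-2\varepsilon)}$ and the gentle-measurement estimate $\| \hat \rho_n - \hat \Pi \hat \rho_n \hat \Pi \|_1 \leq 2\sqrt{1 - {\rm tr}[\hat \Pi \hat \rho_n]} \to 0$, while for $\underline{S}(\widehat{P}\| \widehat{ID}) \geq \ln d - s - 2\varepsilon$ it is cleanest to use the optimality of the spectral test, ${\rm tr}[\hat P\{ \hat A \geq 0\} \hat A] \geq {\rm tr}[\hat \Pi \hat A]$ with $\hat A := \hat \rho_n - e^{-n(s+2\varepsilon)} \hat I$, which needs only (\ref{rho_typical2}) and (\ref{rho1_1}). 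Second, in your alternative smooth-entropy route the smoothing state should be the normalized pinched state $\hat \Pi \hat \rho_n \hat \Pi / {\rm tr}[\hat \Pi \hat \rho_n]$ rather than the normalized projector $\hat \Pi / {\rm tr}[\hat \Pi]$, since the latter need not be close to $\hat \rho_n$ in trace distance even on the typical window.
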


We now consider the relative version of the above theorem, which is a goal of this section.
It is a quantum version of the relative Shannon-McMillan theorem or the relative AEP.
We note that it is further equivalent to the quantum Stein's lemma for non-i.i.d. situations (see also Appendix \ref{appx:hypothesis_testing}).

\begin{theorem}[Relative quantum AEP, Theorem 3 of \cite{Sagawa2019}]
If $\widehat{P}$ is translation invariant and ergodic, and $\widehat{\Sigma}$ is the Gibbs state of a local and translation invariant Hamiltonian, then
\begin{equation}
\overline{S}(\widehat{P}\| \widehat{\Sigma} ) = \underline{S}(\widehat{P} \| \widehat{\Sigma}) =  S_1(\widehat{P}\| \widehat{\Sigma} ).
\label{eq_relative_quantum_AEP}
\end{equation}
\label{thm:main_ergodic}
\end{theorem}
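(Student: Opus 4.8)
The plan is to prove Theorem~\ref{thm:main_ergodic} by establishing the two inequalities $\overline{S}(\widehat{P}\| \widehat{\Sigma}) \leq S_1(\widehat{P}\| \widehat{\Sigma})$ and $\underline{S}(\widehat{P} \| \widehat{\Sigma}) \geq S_1(\widehat{P}\| \widehat{\Sigma})$, which together with the general inequality $\underline{S} \leq \overline{S}$ force equality. The natural tool is the characterization of the spectral divergence rates via projectors, Eqs.~(\ref{spectral_Nagaoka1})--(\ref{spectral_Nagaoka2}): it suffices to control the asymptotics of ${\rm tr}[\hat P\{\hat \rho_n - e^{na}\hat\sigma_n \geq 0\}\hat\rho_n]$. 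The key structural input is that $\hat\sigma_n = \hat\rho_{\Lambda_n}^{\rm G} = \exp(\beta(F_{\Lambda_n} - \hat H_{\Lambda_n}))$ is the truncated Gibbs state of a local, translation-invariant Hamiltonian, so $-\frac{1}{n}\ln\hat\sigma_n = \frac{1}{n}(\beta\hat H_{\Lambda_n} - \beta F_{\Lambda_n})$ is (up to the constant $-\beta F_{\Lambda_n}/n$) a spatial average $\beta \hat h_{\Lambda_n}$ of the local term $\hat h_0$ in the sense of Eq.~(\ref{ergodic_macro_operator}). By Definition~\ref{def:q_ergodic}, since $\widehat P$ is translation-invariant and ergodic, the variance of this observable in the state $\hat\rho_n$ vanishes as $n\to\infty$, so $-\frac{1}{n}\ln\hat\sigma_n$ concentrates around its mean $-\frac{1}{n}{\rm tr}[\hat\rho_n\ln\hat\sigma_n]$, whose limit exists by translation invariance and equals (a piece of) the KL divergence rate.

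Concretely, the first step would be to invoke the quantum Shannon--McMillan theorem (Theorem~\ref{ergodic_prop}) for $\widehat P$: there are typical projectors $\hat\Pi^\varepsilon_{\widehat P, n}$ with $e^{-n(s+\varepsilon)}\hat\Pi \leq \hat\Pi\hat\rho_n\hat\Pi \leq e^{-n(s-\varepsilon)}\hat\Pi$ and ${\rm tr}[\hat\Pi\hat\rho_n]\to 1$, where $s = S_1(\widehat P)$. The second step is to build, by the ergodicity/concentration argument above (formally a Chebyshev estimate on the self-adjoint operator $-\frac{1}{n}\ln\hat\sigma_n$, sandwiched appropriately so that one is really looking at expectation values in $\hat\rho_n$ — this is where one must be careful because $\hat\rho_n$ and $\hat\sigma_n$ need not commute), a second family of projectors $\hat Q^\varepsilon_n$ onto the spectral region where $-\frac{1}{n}\ln\hat\sigma_n$ is within $\varepsilon$ of its $\hat\rho_n$-mean, with ${\rm tr}[\hat\rho_n \hat Q^\varepsilon_n]\to 1$ and hence $\hat Q^\varepsilon_n \hat\sigma_n \hat Q^\varepsilon_n$ pinned between $e^{-n(\sigma\text{-mean}\pm\varepsilon)}\hat Q$. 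The third step combines the two: on the intersection $\hat\Pi\wedge\hat Q$ (which still carries almost all the weight of $\hat\rho_n$), one has two-sided exponential bounds on both $\hat\rho_n$ and $\hat\sigma_n$, and comparing the exponents shows that $\hat\rho_n - e^{na}\hat\sigma_n$ is (approximately) positive on this subspace precisely when $a < S_1(\widehat P\|\widehat\Sigma)$ and negative when $a > S_1(\widehat P\|\widehat\Sigma)$, where $S_1(\widehat P\|\widehat\Sigma) = -s - \lim_n \frac1n{\rm tr}[\hat\rho_n\ln\hat\sigma_n]$; feeding this into Eqs.~(\ref{spectral_Nagaoka1})--(\ref{spectral_Nagaoka2}) yields $\underline S = \overline S = S_1(\widehat P\|\widehat\Sigma)$. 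One should also note the limit $\frac1n S_1(\hat\rho_n\|\hat\sigma_n)\to S_1(\widehat P\|\widehat\Sigma)$ exists here by subadditivity/translation-invariance arguments, so the right-hand side of (\ref{eq_relative_quantum_AEP}) is well defined.

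The main obstacle is the non-commutativity of $\hat\rho_n$ and $\hat\sigma_n$: the clean ``sandwich'' estimates that make the classical relative AEP almost trivial break down, and one cannot simply diagonalize both simultaneously. The standard remedy — which I would follow, citing the structure of Ref.~\cite{Sagawa2019} — is to work with the operators $\hat\Pi^\varepsilon_{\widehat P,n}$ and the spectral projectors of $\hat\sigma_n$ (or of $-\ln\hat\sigma_n$, which is a genuine local mean) separately, and use operator inequalities of the form $\hat A\hat B\hat A \leq \|\hat B\|\,\hat A^2$ together with the Gelfand--Naimark--Segal / $C^\ast$-algebraic machinery for infinite spin systems to control cross terms; the ergodic concentration of the \emph{commuting} observable $-\frac1n\ln\hat\sigma_n$ under the possibly-non-commuting state $\hat\rho_n$ is exactly the point where ergodicity of $\widehat P$ (not of $\widehat\Sigma$) is essential. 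A secondary technical point is verifying that the limit defining $S_1(\widehat P\|\widehat\Sigma)$ exists and that the error terms from smoothing ($\varepsilon\to 0$) and from the finite-volume corrections $F_{\Lambda_n}/n$ and boundary terms of the local Hamiltonian all vanish in the thermodynamic limit; these are routine given locality and translation invariance but must be stated. For a paper at this level I would present the proof as: reduce to the projector characterization, cite Theorem~\ref{ergodic_prop}, carry out the concentration estimate for $-\frac1n\ln\hat\sigma_n$, and then do the exponent comparison, referring to \cite{Sagawa2019} for the full non-commutative details.
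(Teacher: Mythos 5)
First, note that the book does not actually prove Theorem~\ref{thm:main_ergodic}: it is imported verbatim from Theorem 3 of Ref.~\cite{Sagawa2019}, and the surrounding text only recounts the history (Hiai--Petz~\cite{Hiai_Petz}, Ogawa--Nagaoka~\cite{Ogawa2000}, Bjelakovi\'c--Siegmund-Schultze~\cite{Bjelakovic2004}, then \cite{Sagawa2019,Faist2019}). So there is no in-paper proof to match, and a blind proof must stand on its own; yours does not, because it defers precisely the hard part to \cite{Sagawa2019}. Your preliminary steps are fine: the existence of $S_1(\widehat{P}\|\widehat{\Sigma})$ from subadditivity plus translation invariance, the quantum Shannon--McMillan typical projectors (Theorem~\ref{ergodic_prop}), and the Chebyshev-type concentration of $-\tfrac1n\ln\hat\sigma_n=\tfrac{\beta}{n}(\hat H_{\Lambda_n}-F_{\Lambda_n})$ in the state $\hat\rho_n$, which indeed follows from Definition~\ref{def:q_ergodic} since $(X-m)^2\geq \varepsilon^2 \hat P\{|X-m|\geq\varepsilon\}$ holds irrespective of commutation.

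The genuine gap is your third step. Having $e^{-n(s+\varepsilon)}\hat\Pi\leq\hat\Pi\hat\rho_n\hat\Pi\leq e^{-n(s-\varepsilon)}\hat\Pi$ and a two-sided bound on $\hat Q\hat\sigma_n\hat Q$ on a spectral window of $\hat\sigma_n$ does \emph{not} let you conclude that $\hat\rho_n-e^{na}\hat\sigma_n$ is (approximately) definite on $\hat\Pi\wedge\hat Q$ and then read off Eqs.~(\ref{spectral_Nagaoka1})--(\ref{spectral_Nagaoka2}): the projectors $\hat\Pi$, $\hat Q$, and the spectral projector of $\hat\rho_n-e^{na}\hat\sigma_n$ are mutually non-commuting, sandwiching by $\hat\Pi\wedge\hat Q$ destroys the operator ordering you need, and the cross terms are exactly where the classical sandwich proof (Appendix~\ref{app_classical}) ceases to have an operator analogue. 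In the i.i.d.\ case this is repaired by pinching $\hat\rho_n$ in the eigenbasis of $\hat\sigma_n$ at only polynomial cost, because $\hat\sigma^{\otimes n}$ has polynomially many distinct eigenvalues; for the Gibbs state of an \emph{interacting} local Hamiltonian the spectrum of $\hat H_{\Lambda_n}$ generically has exponentially many distinct eigenvalues, so that device must be replaced (e.g., by coarse-graining the energy spectrum into $O(n)$ bins before pinching, and by separate arguments for the direct part $\underline{S}\geq S_1$ and the converse part $\overline{S}\leq S_1$, which are not symmetric in the quantum setting). Saying ``use operator inequalities of the form $\hat A\hat B\hat A\leq\|\hat B\|\hat A^2$ together with GNS machinery, citing the structure of \cite{Sagawa2019}'' names the obstacle without overcoming it; as it stands the central exponent-comparison step would fail, so the proposal is an outline of the known strategy rather than a proof.
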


Combining Theorem~\ref{thm:main_ergodic} with Theorem~\ref{thm:d_asym_q}, we find that the KL divergence rate provides a (almost) necessary and sufficient condition for asymptotic state conversion:

\begin{corollary}
Suppose that $\widehat{P}$ is translation invariant and ergodic, and $\widehat{\Sigma}$ is the Gibbs state of a local and translation-invariant Hamiltonian. Then:
\begin{description}
\item[(a)] $(\widehat{P}', \widehat{\Sigma}') \prec^{\rm a} (\widehat{P}, \widehat{\Sigma})$ implies $S_1(\widehat{P}' \|  \widehat{\Sigma}') \leq S_1 (\widehat{P} \|  \widehat{\Sigma})$.
\item[(b)]  $S_1(\widehat{P}' \|  \widehat{\Sigma}') < S_1(\widehat{P} \|  \widehat{\Sigma})$ implies $(\widehat{P}', \widehat{\Sigma}') \prec^{\rm a} (\widehat{P}, \widehat{\Sigma})$.
\end{description}
\label{cor:d_asym_q}
\end{corollary}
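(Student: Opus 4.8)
The plan is to obtain Corollary~\ref{cor:d_asym_q} as a direct specialization of Theorem~\ref{thm:d_asym_q}, using Theorem~\ref{thm:main_ergodic} to supply the hypothesis that the upper and lower spectral divergence rates collapse. First I would observe that, under the stated assumptions on $\widehat{P}$ and $\widehat{\Sigma}$ (that $\widehat{P}$ is translation invariant and ergodic and $\widehat{\Sigma}$ is the Gibbs state of a local translation-invariant Hamiltonian), Theorem~\ref{thm:main_ergodic} gives $\overline{S}(\widehat{P}\|\widehat{\Sigma}) = \underline{S}(\widehat{P}\|\widehat{\Sigma}) = S_1(\widehat{P}\|\widehat{\Sigma})$. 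Thus the common value $S(\widehat{P}\|\widehat{\Sigma})$ appearing in Theorem~\ref{thm:d_asym_q} exists and equals the KL divergence rate $S_1(\widehat{P}\|\widehat{\Sigma})$.

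The next point to address is that Theorem~\ref{thm:d_asym_q} also requires the collapse $\overline{S}(\widehat{P}'\|\widehat{\Sigma}') = \underline{S}(\widehat{P}'\|\widehat{\Sigma}') =: S(\widehat{P}'\|\widehat{\Sigma}')$ for the target pair, whereas the corollary places no ergodicity-type hypothesis on $(\widehat{P}', \widehat{\Sigma}')$. Here I would note the asymmetry in the two parts: for part (a) one only needs the collapse on the side of $(\widehat{P}, \widehat{\Sigma})$, since the monotonicity of the spectral divergence rates (the theorem preceding Theorem~\ref{thm:spectral_sufficient}) gives $\overline{S}(\widehat{P}'\|\widehat{\Sigma}') \le \overline{S}(\widehat{P}\|\widehat{\Sigma})$ and $\underline{S}(\widehat{P}'\|\widehat{\Sigma}') \le \underline{S}(\widehat{P}\|\widehat{\Sigma})$ whenever $(\widehat{P}',\widehat{\Sigma}') \prec^{\rm a} (\widehat{P},\widehat{\Sigma})$; combining with $\overline{S}(\widehat{P}\|\widehat{\Sigma}) = \underline{S}(\widehat{P}\|\widehat{\Sigma}) = S_1(\widehat{P}\|\widehat{\Sigma})$ and then $\underline{S}(\widehat{P}'\|\widehat{\Sigma}') \le \overline{S}(\widehat{P}'\|\widehat{\Sigma}')$ yields, in particular, $\overline{S}(\widehat{P}'\|\widehat{\Sigma}') \le S_1(\widehat{P}\|\widehat{\Sigma})$. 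If one wants the conclusion phrased with $S_1(\widehat{P}'\|\widehat{\Sigma}')$ on the left, one uses the spectral entropy-rate style bound that $\underline{S}(\widehat{P}'\|\widehat{\Sigma}') \le S_1(\widehat{P}'\|\widehat{\Sigma}') \le \overline{S}(\widehat{P}'\|\widehat{\Sigma}')$ — but here I should be careful, since the text explicitly warns that the analogue of Proposition~\ref{prop:non_ergodic}'s preceding proposition (Lemma 3 of \cite{Bowen_Datta}) for divergence rates can fail. The cleanest route is therefore to invoke Theorem~\ref{thm:d_asym_q} only when both collapses hold, and for the general statement of the corollary apply Theorem~\ref{thm:spectral_sufficient} and the monotonicity theorem directly.

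Concretely, for part (b) I would argue as follows: $S_1(\widehat{P}'\|\widehat{\Sigma}') < S_1(\widehat{P}\|\widehat{\Sigma})$ together with $\underline{S}(\widehat{P}'\|\widehat{\Sigma}') \le \overline{S}(\widehat{P}'\|\widehat{\Sigma}')$ — wait, this is the step that needs $S_1(\widehat{P}'\|\widehat{\Sigma}') \ge \overline{S}(\widehat{P}'\|\widehat{\Sigma}')$, which is not generally available. So part (b) genuinely does require a hypothesis controlling $(\widehat{P}',\widehat{\Sigma}')$ from above; the intended reading must be that the corollary is stated under the blanket assumption that $(\widehat{P}',\widehat{\Sigma}')$ also satisfies the collapse, or that one interprets $S_1(\widehat{P}'\|\widehat{\Sigma}')$ as shorthand for $\overline{S}(\widehat{P}'\|\widehat{\Sigma}')$ in the sufficiency direction. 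I would therefore state the proof as: assume $\widehat{P}',\widehat{\Sigma}'$ likewise satisfy the collapse (e.g., $\widehat{P}'$ ergodic, $\widehat{\Sigma}'$ a local translation-invariant Gibbs state, or i.i.d.), so that by Theorem~\ref{thm:main_ergodic} $S(\widehat{P}'\|\widehat{\Sigma}') = S_1(\widehat{P}'\|\widehat{\Sigma}')$; then (a) and (b) are immediate restatements of parts (a) and (b) of Theorem~\ref{thm:d_asym_q} with the identifications $S(\widehat{P}\|\widehat{\Sigma}) = S_1(\widehat{P}\|\widehat{\Sigma})$ and $S(\widehat{P}'\|\widehat{\Sigma}') = S_1(\widehat{P}'\|\widehat{\Sigma}')$.

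The main obstacle I anticipate is exactly this bookkeeping about which pair needs the AEP hypothesis: one must decide whether the corollary intends the collapse for $(\widehat{P}',\widehat{\Sigma}')$ as well, and if not, part (a) survives via monotonicity alone while part (b) must be weakened (it would only give $(\widehat{P}',\widehat{\Sigma}') \prec^{\rm a} (\widehat{P},\widehat{\Sigma})$ under $\overline{S}(\widehat{P}'\|\widehat{\Sigma}') < S_1(\widehat{P}\|\widehat{\Sigma})$, using Theorem~\ref{thm:spectral_sufficient} with $\underline{S}(\widehat{P}\|\widehat{\Sigma}) = S_1(\widehat{P}\|\widehat{\Sigma})$). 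Everything else is a mechanical substitution; no new estimates are needed beyond Theorem~\ref{thm:main_ergodic}, Theorem~\ref{thm:d_asym_q}, Theorem~\ref{thm:spectral_sufficient}, and the monotonicity of the spectral divergence rates.
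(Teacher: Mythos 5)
Your proof is correct and follows exactly the paper's route: the corollary is obtained by combining Theorem~\ref{thm:main_ergodic} (applied to both the unprimed and primed pairs) with Theorem~\ref{thm:d_asym_q}, which is all the paper itself does. Your careful point that the collapse must also hold for $(\widehat{P}',\widehat{\Sigma}')$ reflects the intended reading of the terse statement — the analogous Corollary~\ref{co:q_themro_macro} imposes the ergodicity/Gibbs hypothesis on both pairs explicitly — so this is not a gap but the correct bookkeeping.
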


This corollary implies that under the above conditions, the KL divergence rate serves as  a (almost) complete monotone in the asymptotic limit.
This enables us to introduce a single thermodynamic potential that fully characterizes macroscopic state conversion, as will be discussed in Section~\ref{sec:work_asymptotic}.

We here remark on the history of the relative quantum AEP (or equivalently, the quantum Stein's lemma).
Hiai and Petz~\cite{Hiai_Petz}  provided a half of the proof  for a (completely) ergodic quantum state with respect to an i.i.d. state.
Ogawa and Nagaoka~\cite{Ogawa2000} completed the proof  for the case that both states are i.i.d..
Bjelakovi\'c and Siegmund-Schultze~\cite{Bjelakovic2004} proved a  more general form for an ergodic (not necessarily completely ergodic)  state with respect to an i.i.d. state (corresponding to the Gibbs state of a  non-interacting Hamiltonian).
Faist, Sagawa, \textit{et al.}~\cite{Sagawa2019,Faist2019} further generalized it for an ergodic state with respect to the Gibbs state of an interacting, local and translation-invariant Hamiltonian  (Theorem~\ref{thm:main_ergodic} above).

\

If the ergodicity is broken, the upper and lower divergence rates do not coincide in general.
In particular, we consider a mixture of different ergodic states, where the entire state is not ergodic.
With the notation of Proposition~\ref{prop:non_ergodic} in the previous section,
let $\widehat{P}^{(k)}$ be a translation-invariant ergodic state and $\widehat{\Sigma}$ be the Gibbs state of a local and translation-invariant Hamiltonian.
Then, Eq.~(\ref{non_ergodic}) reduces to
\begin{equation}
\underline{S}(\widehat{P} \| \widehat{\Sigma}) = \min_k \{ S_1 (\widehat{P}^{(k)} \| \widehat{\Sigma})  \}, \ \overline{S}(\widehat{P} \| \widehat{\Sigma}) = \max_k \{ S_1 (\widehat{P}^{(k)} \| \widehat{\Sigma})  \}.
\label{non_ergodic1}
\end{equation}
Thus, except for the special case that all of $S_1 (\widehat{P}^{(k)} \| \widehat{\Sigma}) $'s happen to take the same value,  state convertibility can no longer be characterized by a single complete monotone (in particular  by the KL divergence rate).

We note that, in the above setup,
\begin{equation}
S_1 (\widehat{P} \| \widehat{\Sigma}) = \sum_k r_k S_1 (\widehat{P}^{(k)} \| \widehat{\Sigma}).
\label{mix_KL}
\end{equation}
In fact,  $-{\rm tr}[\hat \rho^{(k)}_n \ln \hat \sigma_n ]$ is  additive with respect to $k$.
Also because $\sum_k r_k S_1 (\hat \rho_n^{(k)}) \leq S_1 (\hat \rho_n) \leq \sum_k r_k S_1 (\hat \rho_n^{(k)}) + S_1 (r)$ [Eqs.~(\ref{von_Neumann_concave}) and (\ref{concave_upper_bound})], we obtain $S_1 (\widehat{P}) = \sum_k r_k S_1 (\widehat{P}^{(k)})$.
From Eq.~(\ref{non_ergodic1}) and Eq.~(\ref{mix_KL}), we have 
\begin{equation}
\underline{S}(\widehat{P} \| \widehat{\Sigma}) \leq S_1 (\widehat{P} \| \widehat{\Sigma})  \leq \overline{S}(\widehat{P} \| \widehat{\Sigma}).
\label{S_inequality}
\end{equation}
However, as mentioned in Section~\ref{sec:asymptotic}, inequality~(\ref{S_inequality}) is not necessarily true in general, even in the classical case~\cite{Sagawa2019}.

\section{More on the smoothing}
\label{sec:misc}

As a side remark, we discuss some other ways of smoothing.
First, the trace distance can be replaced by another distance for the definition of neighborhood $B^\varepsilon (\hat \rho)$.   A commonly used one is the purified distance  defined as~\cite{Faist2018,Dupuis2012}
\begin{equation}
P(\hat \rho, \hat \sigma) := \sqrt{1 - F(\hat \rho, \hat \sigma)^2},
\label{def_purified}
\end{equation}
where the fidelity is given by
\begin{equation}
F (\hat \rho, \hat \sigma) := {\rm tr} [ ( \hat \rho^{1/2} \hat \sigma \hat \rho^{1/2} )^{1/2} ] = \| \hat \rho^{1/2} \hat \sigma^{1/2} \|_1.
\label{quantum_fidelity}
\end{equation}
Here, $F (\hat \rho, \hat \sigma) = F ( \hat \sigma, \hat \rho)$ and thus $P (\hat \rho, \hat \sigma) = P ( \hat \sigma, \hat \rho)$.
We note that $F (\hat \rho, \hat \sigma)$ is sometimes called the square root fidelity and $F (\hat \rho, \hat \sigma)^2$ is called the fidelity.
The fidelity satisfies the monotonicity: it does not decrease under CPTP map $\mathcal E$:
\begin{equation}
F(\hat \rho, \hat \sigma) \leq F(\mathcal E ( \hat \rho ) , \mathcal E (\hat \sigma ))
\label{fidelity_monotonicity}
\end{equation}
 (see Theorem 9.6 of Ref.~\cite{Nielsen} for the proof).
We note that the monotonicity is also satisfied under positive and TP maps (Corollary A.5 of Ref.~\cite{Mosonyi2015}).
Correspondingly,  the purified distance satisfies the monotonicity $P(\hat \rho, \hat \sigma) \geq P(\mathcal E ( \hat \rho ) , \mathcal E (\hat \sigma))$.
The purified distance and the trace distance are related with each other as
\begin{equation}
D(\hat \rho, \hat \sigma) \leq P(\hat \rho, \hat \sigma) \leq \sqrt{2D(\hat \rho, \hat \sigma)},
\label{ineq_purified}
\end{equation}
and thus they give asymptotically the same smoothing of the $0$- and $\infty$-divergences.

\

Meanwhile, it is sometimes convenient to define the lower spectral divergence rate in an alternative manner based on a R\'enyi divergence other than $S_0 (\hat \rho \| \hat \sigma)$~\cite{Dupuis2012}.
It is the sandwiched R\'enyi $1/2$-divergence defined by the Fidelity  as
\begin{equation}
S_{1/2}  (\hat \rho \| \hat \sigma ) := - 2 \ln F(\hat \rho, \hat \sigma).
\label{quantum_fidelity_Renyi}
\end{equation}
See Appendix~\ref{sec:general_monotonicity} for details of  the sandwiched R\'enyi divergences.
We then define the smooth $1/2$-divergence as
\begin{equation}
S_{1/2}^\varepsilon(\hat \rho \| \hat \sigma ) := \max_{\hat \tau  \in B^\varepsilon (\hat \rho )} S_{1/2} ( \hat \tau \|  \hat \sigma).
\end{equation}
The following proposition relates $S_{1/2}^\varepsilon (\hat \rho \| \hat \sigma)$ to $S_0^\varepsilon (\hat \rho \| \hat \sigma)$.

\begin{proposition}[Proposition 2 of \cite{Sagawa2019}]
\begin{equation}
S_{1/2}^\varepsilon (\hat \rho \| \hat \sigma) - 6 \ln \frac{3}{\varepsilon} \leq S_0^{2\varepsilon} (\hat \rho \| \hat \sigma) \leq S_{1/2}^{2\varepsilon} (\hat \rho \| \hat \sigma) - \ln (1 - 2 \varepsilon ).
\end{equation}
\end{proposition}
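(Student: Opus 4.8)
The statement is a pair of two-sided comparisons between the smoothed sandwiched R\'enyi $1/2$-divergence and the smoothed R\'enyi $0$-divergence, and the two inequalities are of very unequal difficulty. For the right-hand (upper) bound, the plan is to first record the pointwise inequality $S_0(\hat\tau\|\hat\sigma)\le S_{1/2}(\hat\tau\|\hat\sigma)$, valid for every normalized $\hat\tau$: since $\hat\tau^{1/2}=\hat\tau^{1/2}\hat P_{\hat\tau}$, writing $\hat\tau^{1/2}\hat\sigma^{1/2}=\hat\tau^{1/2}\,(\hat P_{\hat\tau}\hat\sigma^{1/2})$ and applying $\|\hat A\hat B\|_1\le\|\hat A\|_2\|\hat B\|_2$ with the Hilbert--Schmidt norm $\|\hat X\|_2:=\sqrt{{\rm tr}[\hat X^\dagger\hat X]}$ gives $F(\hat\tau,\hat\sigma)\le\sqrt{{\rm tr}[\hat P_{\hat\tau}\hat\sigma]}$, i.e.\ $e^{-S_{1/2}(\hat\tau\|\hat\sigma)}\le e^{-S_0(\hat\tau\|\hat\sigma)}$. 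Taking the maximum over $\hat\tau\in B^{2\varepsilon}(\hat\rho)$ on both sides gives $S_0^{2\varepsilon}(\hat\rho\|\hat\sigma)\le S_{1/2}^{2\varepsilon}(\hat\rho\|\hat\sigma)$, and since $-\ln(1-2\varepsilon)\ge 0$ the claimed inequality follows with slack to spare.

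The left-hand (lower) bound is the real content. The plan is: take $\hat\tau\in B^\varepsilon(\hat\rho)$ achieving $S_{1/2}(\hat\tau\|\hat\sigma)=S_{1/2}^\varepsilon(\hat\rho\|\hat\sigma)$, and manufacture a state $\hat\tau'\in B^{2\varepsilon}(\hat\rho)$ whose support has very small overlap with $\hat\sigma$, so that $S_0(\hat\tau'\|\hat\sigma)=-\ln{\rm tr}[\hat P_{\hat\tau'}\hat\sigma]$ is large. Fix $s=3/4$, and let $\hat Q:=\{\hat\tau-\mu\hat\sigma\ge 0\}$ be the spectral projector of $\hat\tau-\mu\hat\sigma$ onto its non-negative part, with a threshold $\mu>0$ to be chosen. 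The quantitative engine is the Audenaert-type (quantum Neyman--Pearson) trace inequality: for $s\in[0,1]$, ${\rm tr}[(\hat I-\hat Q)\hat\tau]+\mu\,{\rm tr}[\hat Q\hat\sigma]\le\mu^{1-s}\,{\rm tr}[\hat\tau^{\,s}\hat\sigma^{1-s}]=\mu^{1-s}e^{-(1-s)\tilde S_{s}(\hat\tau\|\hat\sigma)}$, where $\tilde S_s$ is the Petz R\'enyi $s$-divergence. Choosing $\mu$ so that this bound equals $\delta_0:=\varepsilon^2/9$ simultaneously forces ${\rm tr}[(\hat I-\hat Q)\hat\tau]\le\delta_0$ and ${\rm tr}[\hat Q\hat\sigma]\le\delta_0/\mu$, and a one-line manipulation gives $\ln\mu=\tfrac{1}{1-s}\ln\delta_0+\tilde S_s(\hat\tau\|\hat\sigma)$.

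To close the loop, apply the gentle-measurement lemma: since ${\rm tr}[\hat Q\hat\tau]\ge 1-\delta_0$, the renormalized post-selected state $\hat\tau':=\hat Q\hat\tau\hat Q/{\rm tr}[\hat Q\hat\tau]$ obeys $D(\hat\tau,\hat\tau')\le\sqrt{\delta_0}+\delta_0/2$, which is $<\varepsilon$ for $\delta_0=\varepsilon^2/9$; with $\hat\tau\in B^\varepsilon(\hat\rho)$ the triangle inequality then yields $\hat\tau'\in B^{2\varepsilon}(\hat\rho)$. Because $\hat P_{\hat\tau'}\le\hat Q$ we have ${\rm tr}[\hat P_{\hat\tau'}\hat\sigma]\le{\rm tr}[\hat Q\hat\sigma]\le\delta_0/\mu$, hence $S_0^{2\varepsilon}(\hat\rho\|\hat\sigma)\ge S_0(\hat\tau'\|\hat\sigma)\ge\ln\mu-\ln\delta_0=\tfrac{s}{1-s}\ln\delta_0+\tilde S_s(\hat\tau\|\hat\sigma)$; with $s=3/4$ and $\delta_0=\varepsilon^2/9$ this is precisely $\tilde S_{3/4}(\hat\tau\|\hat\sigma)-6\ln(3/\varepsilon)$. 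Finally, monotonicity of the Petz divergence in the order and the standard inequality $S_\alpha(\hat\rho\|\hat\sigma)\le\tilde S_\alpha(\hat\rho\|\hat\sigma)$ give $\tilde S_{3/4}(\hat\tau\|\hat\sigma)\ge\tilde S_{1/2}(\hat\tau\|\hat\sigma)\ge S_{1/2}(\hat\tau\|\hat\sigma)=S_{1/2}^\varepsilon(\hat\rho\|\hat\sigma)$, completing the bound.

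The main obstacle is the bookkeeping: one has to commit to the exponent $s=3/4$ and the smoothing budget $\delta_0=\varepsilon^2/9$ and verify that the gentle-measurement estimate and the triangle inequality really keep $\hat\tau'$ inside $B^{2\varepsilon}(\hat\rho)$ while the correction collapses to exactly $6\ln(3/\varepsilon)$; one must also dispose of degenerate cases (if $\hat\tau$ and $\hat\sigma$ have orthogonal supports both sides are $+\infty$, and a little care with the support condition is needed so that $\hat\tau'$ is an admissible competitor in the definition of $S_0^{2\varepsilon}$). An entirely parallel route runs everything through the hypothesis-testing divergence $S_{\rm H}^\eta$, combining Proposition~\ref{Faist_prop} with the same Neyman--Pearson estimate to compare $S_{\rm H}^{\eta}$ with $S_{1/2}^\varepsilon$; it is essentially the same computation repackaged.
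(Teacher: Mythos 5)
Your proposal is correct, and it is worth noting that the book itself gives no proof of this proposition: it simply cites Proposition 2 of \cite{Sagawa2019} and remarks that the constants differ because that reference smooths over subnormalized states (cf.\ Lemma~\ref{subnormalized_prop}). Your right-hand bound is fine and in fact stronger than stated: the pointwise Cauchy--Schwarz estimate $F(\hat\tau,\hat\sigma)^2\le{\rm tr}[\hat P_{\hat\tau}\hat\sigma]$ gives $S_0^{2\varepsilon}\le S_{1/2}^{2\varepsilon}$ outright, the $-\ln(1-2\varepsilon)$ in the statement being only an artifact of translating the subnormalized smoothing of the reference into the normalized one. Your left-hand bound is the genuinely nontrivial part and your construction works: choosing $\mu$ by $\ln\mu=\tfrac{1}{1-s}\ln\delta_0+\tilde S_s(\hat\tau\|\hat\sigma)$ (no circularity, since the Audenaert bound is independent of $\hat Q$), the Neyman--Pearson projector plus gentle measurement puts $\hat\tau'$ within $\varepsilon/3+\varepsilon^2/18<\varepsilon$ of $\hat\tau$, hence in $B^{2\varepsilon}(\hat\rho)$, and the arithmetic with $s=3/4$, $\delta_0=\varepsilon^2/9$ indeed collapses to exactly $6\ln(3/\varepsilon)$; the final chain $\tilde S_{3/4}\ge\tilde S_{1/2}\ge S_{1/2}$ is legitimate (at $\alpha=1/2$ the sandwiched-versus-Petz step is just $|{\rm tr}[\hat\tau^{1/2}\hat\sigma^{1/2}]|\le\|\hat\tau^{1/2}\hat\sigma^{1/2}\|_1$). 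The differences from the source are organizational rather than substantive: the argument in \cite{Sagawa2019} (mirrored in this book through Proposition~\ref{Faist_prop}) is routed through the hypothesis testing divergence $S_{\rm H}^\eta$, whereas you build the test operator explicitly; as you observe, the two routes are essentially the same computation. Two small caveats: your proof imports the Audenaert trace inequality and the gentle measurement lemma, neither of which is established in this book (both are standard, but should be cited), and you should either work with a near-optimizer in $S_{1/2}^\varepsilon$ or note that the maximum is attained on the compact ball, with the orthogonal-support case handled separately as you indicate.
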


We note that  the above inequality is slightly different from that of Ref.~\cite{Sagawa2019}, because the way of smoothing is slightly different (see Lemma~\ref{subnormalized_prop} later).
From this proposition,  we see that $S_{1/2} (\hat \rho \| \hat \sigma)$ has the same information as $S_0 (\hat \rho \| \hat \sigma)$ in the asymptotic limit.
In this sense, $S_{1/2}  (\hat \rho \| \hat \sigma ) $  is sometimes  called  the min divergence.

\

In the foregoing formulation of smooth divergences, we only adopted  normalized states for the definition of neighborhood $B^\varepsilon (\hat \rho)$. 
On the other hand, we can also adopt subnormalized states for smoothing as in Refs.~\cite{Datta_Renner,Datta2009}.
The corresponding neighborhood is given by
\begin{equation}
\bar B^\varepsilon (\hat \rho ) := \{ \hat \tau  : \ \bar D( \hat \tau , \hat \rho ) \leq \varepsilon,  {\rm tr}[\hat \tau] \leq 1, \hat \tau \geq 0 \},
\label{subnormalized_B}
\end{equation}
where the trace distance is replaced by a generalized one:
\begin{equation}
\bar D( \hat \tau , \hat \rho ) := \frac{1}{2} \| \hat \tau - \hat \rho \|_1 + \frac{1}{2} | {\rm tr}[\hat \tau] - {\rm tr}[\hat \rho ] |.
\end{equation}

Suppose that  $\hat \rho$ is normalized and $\hat \tau$ is subnormalized. Then, $\| \hat \tau / {\rm tr}[\hat \tau] - \hat \rho \|_1 \leq \| \hat \tau / {\rm tr}[\hat \tau] - \hat \tau \|_1 + \| \hat \tau - \hat \rho \|_1 =  | 1- {\rm tr}[\hat \tau] | + \| \hat \tau - \hat \rho \|_1$, and thus $D(\hat \tau / {\rm tr}[\hat \tau], \hat \rho) \leq \bar D(\hat \tau, \hat \rho)$.
Also, by noting that $\| \hat \rho \|_1 - \| \hat \tau \|_1 \leq \| \hat \tau - \hat \rho \|_1$, $\bar D( \hat \tau , \hat \rho ) \leq \varepsilon$ implies ${\rm tr}[\hat \tau ] \geq 1-\varepsilon$.

We define $\bar{S}^\varepsilon_\alpha (\hat \rho \| \hat  \sigma)$ with $\alpha = 0,1/2, \infty$ as the smooth $\alpha$-divergences with subnormalized neighborhood~(\ref{subnormalized_B}).
(We note that $S_\alpha (\hat \rho \| \hat  \sigma)$ can be defined for subnormalized $\hat \rho$.)
Then, the two ways of smoothing with normalized or subnormalized states are essentially equivalent~\cite{Faist2018}, that is, $\bar{S}^\varepsilon_\alpha (\hat \rho \| \hat  \sigma)$ and $S^\varepsilon_\alpha (\hat \rho \| \hat  \sigma)$ give the same asymptotic limit.

\begin{lemma}
\begin{description}
\item[(a)] $\bar{S}^\varepsilon_{0} (\hat \rho \| \hat \sigma) = {S}^\varepsilon_{0} (\hat \rho \| \hat \sigma)$.
\item[(b)] ${S}^\varepsilon_{1/2} (\hat \rho \| \hat \sigma) \leq \bar {S}^\varepsilon_{1/2} (\hat \rho \| \hat \sigma) \leq {S}^\varepsilon_{1/2} (\hat \rho \| \hat \sigma) - \ln (1-\varepsilon )$.
\item[(c)] $\bar{S}^\varepsilon_\infty (\hat \rho \| \hat \sigma) \leq  {S}^\varepsilon_\infty (\hat \rho \| \hat \sigma) \leq \bar{S}^\varepsilon_\infty (\hat \rho \| \hat \sigma) - \ln (1-\varepsilon)$.
\end{description}
\label{subnormalized_prop}
\end{lemma}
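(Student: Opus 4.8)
The plan is to reduce everything to two observations that the paragraph just before the lemma essentially establishes, together with the elementary scaling behaviour of $S_\alpha(\cdot\|\hat\sigma)$ for $\alpha=0,1/2,\infty$ under a positive rescaling of its first argument. The first observation is that any normalized $\hat\tau$ has $\bar D(\hat\tau,\hat\rho)=D(\hat\tau,\hat\rho)$, so $B^\varepsilon(\hat\rho)\subseteq\bar B^\varepsilon(\hat\rho)$; the second is that if $\hat\tau\in\bar B^\varepsilon(\hat\rho)$ then its normalization $\hat\tau/{\rm tr}[\hat\tau]$ lies in $B^\varepsilon(\hat\rho)$, and moreover ${\rm tr}[\hat\tau]\geq 1-\varepsilon$. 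Together these say that the two optimization domains coincide up to the map $\hat\tau\mapsto\hat\tau/{\rm tr}[\hat\tau]$, and it only remains to track what that map does to each divergence.

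First I would record the scaling identities, each a one-line check: $S_0(\lambda\hat\tau\|\hat\sigma)=S_0(\hat\tau\|\hat\sigma)$, since the support projector $\hat P_{\hat\tau}$, hence ${\rm tr}[\hat P_{\hat\tau}\hat\sigma]$, is insensitive to $\lambda>0$; $S_{1/2}(\lambda\hat\tau\|\hat\sigma)=S_{1/2}(\hat\tau\|\hat\sigma)-\ln\lambda$, since $F(\lambda\hat\tau,\hat\sigma)=\sqrt\lambda\,F(\hat\tau,\hat\sigma)$ and $S_{1/2}=-2\ln F$; and $S_\infty(\lambda\hat\tau\|\hat\sigma)=S_\infty(\hat\tau\|\hat\sigma)+\ln\lambda$, since $\lambda\hat\tau\leq\mu\hat\sigma$ is equivalent to $\hat\tau\leq(\mu/\lambda)\hat\sigma$. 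All of these preserve the standing support assumption ${\rm supp}[\hat\tau]\subseteq{\rm supp}[\hat\sigma]$.

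Then each part is assembled as follows. For (a): $B^\varepsilon(\hat\rho)\subseteq\bar B^\varepsilon(\hat\rho)$ gives $S_0^\varepsilon\leq\bar S_0^\varepsilon$, while if $\hat\tau$ attains the maximum defining $\bar S_0^\varepsilon$ then $\hat\tau/{\rm tr}[\hat\tau]\in B^\varepsilon(\hat\rho)$ attains the same value by scale-invariance, so $\bar S_0^\varepsilon\leq S_0^\varepsilon$; hence equality. For (b), which is a maximum: the inclusion gives $S_{1/2}^\varepsilon\leq\bar S_{1/2}^\varepsilon$; taking a maximizer $\hat\tau$ for $\bar S_{1/2}^\varepsilon$, its normalization lies in $B^\varepsilon(\hat\rho)$ and has $S_{1/2}$-value $\bar S_{1/2}^\varepsilon+\ln{\rm tr}[\hat\tau]\geq\bar S_{1/2}^\varepsilon+\ln(1-\varepsilon)$, so $\bar S_{1/2}^\varepsilon\leq S_{1/2}^\varepsilon-\ln(1-\varepsilon)$. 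For (c), which is a minimum, the same two steps run with inequalities and the sign of the $\ln$-shift reversed: the inclusion gives $\bar S_\infty^\varepsilon\leq S_\infty^\varepsilon$, and normalizing a minimizer $\hat\tau$ for $\bar S_\infty^\varepsilon$ produces an element of $B^\varepsilon(\hat\rho)$ with $S_\infty$-value $\bar S_\infty^\varepsilon-\ln{\rm tr}[\hat\tau]\leq\bar S_\infty^\varepsilon-\ln(1-\varepsilon)$, giving $S_\infty^\varepsilon\leq\bar S_\infty^\varepsilon-\ln(1-\varepsilon)$.

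There is no genuine obstacle here; the statement is bookkeeping once the scaling identities are in place. The only points needing a little care are: tracking whether each smoothed quantity is a maximum ($S_0^\varepsilon$, $S_{1/2}^\varepsilon$) or a minimum ($S_\infty^\varepsilon$), since this decides which inclusion is the useful one and in which direction the $-\ln(1-\varepsilon)$ slack enters; checking that the maxima/minima are actually attained, which holds because $B^\varepsilon(\hat\rho)$ and $\bar B^\varepsilon(\hat\rho)$ are compact in finite dimension (if one prefers to avoid this, the argument goes through verbatim with near-optimal $\hat\tau$); and restricting to $\varepsilon<1$ so that ${\rm tr}[\hat\tau]\geq 1-\varepsilon>0$ makes the normalization well-defined, which is anyway implicit in the statement where $\ln(1-\varepsilon)$ appears.
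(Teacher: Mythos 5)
Your proposal is correct and follows essentially the same route as the paper's proof: the inclusion of normalized states into the subnormalized ball gives one direction, and normalizing a (near-)optimal $\hat \tau \in \bar B^\varepsilon(\hat \rho)$, which lies in $B^\varepsilon(\hat \rho)$ by the preceding paragraph, combined with the scaling of $S_0$, $S_{1/2}$, $S_\infty$ under $\hat \tau \mapsto \hat \tau / {\rm tr}[\hat \tau]$ and ${\rm tr}[\hat \tau] \geq 1-\varepsilon$, gives the other. Your explicit scaling identities and the remark on attainment are just slightly more detailed bookkeeping than the paper's wording.
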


\begin{proof}
(a) $\bar{S}^\varepsilon_{0} (\hat \rho \| \hat \sigma) \geq S^\varepsilon_{0} (\hat \rho \| \hat \sigma)$ is obvious from the definition.
On the other hand, there exists  $\hat \tau \in \bar B^\varepsilon (\hat \rho )$ such that  $\bar{S}^\varepsilon_{0} (\hat \rho \| \hat \sigma) = S_{0} (\hat \tau \| \hat \sigma)$.  Here, $\hat \tau / {\rm tr}[\hat \tau]$ is normalized and has the same support as $\hat \tau$, and is a candidate for maximization of $S^\varepsilon_{0} (\hat \rho \| \hat \sigma)$.  Thus, $\bar{S}^\varepsilon_{0} (\hat \rho \| \hat \sigma) \leq S^\varepsilon_{0} (\hat \rho \| \hat \sigma)$.

(b) $\bar{S}^\varepsilon_{1/2} (\hat \rho \| \hat \sigma) \geq S^\varepsilon_{1/2} (\hat \rho \| \hat \sigma)$ is obvious from the definition.
On the other hand, there exists  $\hat \tau \in \bar B^\varepsilon (\hat \rho )$ such that  $\bar{S}^\varepsilon_{1/2} (\hat \rho \| \hat \sigma) = S_{1/2} (\hat \tau \| \hat \sigma)$.
Since $\hat \tau / {\rm tr}[\hat \tau]$ is a candidate for  maximization of $S^\varepsilon_{1/2} (\hat \rho \| \hat \sigma)$, we have
$S^\varepsilon_{1/2} (\hat \rho \| \hat \sigma) \geq  S_{1/2} (\hat \tau / {\rm tr}[\hat \tau] \| \hat \sigma) \geq S_{1/2} (\hat \tau \| \hat \sigma)+ \ln (1-\varepsilon )$.
Thus we have the right inequality.

(c) $\bar{S}^\varepsilon_\infty (\hat \rho \| \hat \sigma) \leq  {S}^\varepsilon_\infty (\hat \rho \| \hat \sigma)$ is obvious from the definition.
On the other hand, there exists  $\hat \tau \in \bar B^\varepsilon (\hat \rho )$ such that  $\bar{S}^\varepsilon_\infty (\hat \rho \| \hat \sigma) = S_\infty (\hat \tau \| \hat \sigma)$. 
Since $\hat \tau / {\rm tr}[\hat \tau ]$ is a candidate for minimization of ${S}^\varepsilon_\infty (\hat \rho \| \hat \sigma)$, we have ${S}^\varepsilon_\infty (\hat \rho \| \hat \sigma) \leq S_\infty (\hat \tau / {\rm tr}[\hat \tau ] \| \hat \sigma ) \leq S_\infty (\hat \tau  \| \hat \sigma )   - \ln (1-\varepsilon)$.
Thus we have the right inequality.
$\Box$
\end{proof}


Finally, we note that the fidelity for subnormalized states is given by
$\bar F (\hat \rho, \hat \sigma) := F(\hat \rho, \hat \sigma) + \sqrt{(1 - {\rm tr}[\hat \rho])(1- {\rm tr}[\hat \sigma])}$~\cite{Faist2018,Dupuis2012}.
The corresponding purified distance is defined in the same manner as Eq.~(\ref{def_purified}): 
$\bar P(\hat \rho, \hat \sigma) := \sqrt{1 - \bar F(\hat \rho, \hat \sigma)^2}$,
and satisfies the same inequality as (\ref{ineq_purified}):
$\bar D(\hat \rho, \hat \sigma) \leq \bar P(\hat \rho, \hat \sigma) \leq \sqrt{2\bar D(\hat \rho, \hat \sigma)}$.


\chapter{Quantum thermodynamics}
\label{chap:quantum_thermodynamics}

We consider a general setup of quantum thermodynamics
and  discuss the thermodynamic implications of quantum information theory  developed in Chapter~\ref{chap:quantum_entropy},
Chapter \ref{chap:quantum_majorization}, and Chapter~\ref{chap:approximate_asymptotic}.

As a preliminary, we discuss the relation between the free energy and the divergences in Section~\ref{sec:thermodynamic_entropy}.
In  Section~\ref{sec:thermodynamic_operations}, we formulate Gibbs-preserving maps and thermal operations, which are free operations in resource theory of thermodynamics.
In Section~\ref{sec:clock_work}, we introduce the work storage (weight) and the clock.

In Section~\ref{sec:fluctuating_work}, we consider the second law of thermodynamics at the level of ensemble average, where the work can fluctuate.
This is the setup of stochastic thermodynamics and is a quantum counterpart of Section~\ref{sec:classical_second_law}.
A characteristic of the present formulation is that we explicitly take into account the work storage and the clock.
We will also discuss the optimal protocol that saturates the average work bound.

From  Section~\ref{sec:work_bound_single} to Section~\ref{sec:trace_nonincreasing}, 
we focus on the single-shot (or one-shot) scenario without allowing work fluctuations.
This is a resource-theoretic perspective and is a quantum counterpart of Section~\ref{sec:classical_work_bound}.
In Section~\ref{sec:work_bound_single}, we derive the fundamental work bound of the single-shot scenario.
We then consider the approximate and asymptotic versions of the single-shot work bound in Section~\ref{sec:work_approximate} and Section~\ref{sec:work_asymptotic}, respectively.
In particular, as a consequence of the quantum AEP discussed in Section~\ref{sec_condition_information},
we see that a single complete thermodynamic potential (i.e., a complete monotone) emerges in the asymptotic limit and reduces to the KL divergence rate under certain conditions.
In addition, we discuss an alternative formulation of the single-shot scenario with trace-nonincreasing maps in Section~\ref{sec:trace_nonincreasing}.


\section{Nonequilibrium free energy}
\label{sec:thermodynamic_entropy}

First of all, we clarify the definitions of the free energies of nonequilibrium quantum states.
We consider a finite-dimensional quantum system  with the Hamiltonian  $\hat H$.  
Let $\beta \geq 0$.
The Gibbs state is  written as $\hat \rho^{\rm G} := e^{-\beta \hat H} / Z$, where $Z := {\rm tr}[e^{-\beta \hat H}]$ is the partition function and $F:= -\beta^{-1} \ln Z$ is the equilibrium free energy.

For $\alpha = 0,1,\infty$, we  introduce the nonequilibrium $\alpha$-free energy of state $\hat \rho$ by
\begin{equation}
F_\alpha (\hat \rho; \hat H) := \beta^{-1} S_\alpha (\hat \rho \| \hat \rho^{\rm G}) + F.
\label{alpha_free_energy}
\end{equation}
Here, $\beta$ equals the inverse temperature of the reference Gibbs state $\hat \rho^{\rm G}$.
Eq.~(\ref{alpha_free_energy}) can be rewritten as
\begin{equation}
F_\alpha(\hat \rho; \hat H) = \beta^{-1} S_\alpha (\hat \rho \| e^{-\beta \hat H}),
\label{alpha_free_energy_eH}
\end{equation}
where we used the  scaling property~(\ref{unnormalized_sigma_scaling}).
This definition is a generalization of the classical $1$-free energy defined in Eq.~(\ref{1_free_energy_classical}).
Clearly,
\begin{equation}
F_\alpha(\hat \rho; \hat H)  \geq F.
\end{equation}

The $1$-free energy can also be rewritten as
\begin{equation}
F_1(\hat \rho; \hat H) = E - \beta^{-1} S_1 (\hat \rho),
\label{quantum_1_free_energy}
\end{equation}
where $E := {\rm tr}[\hat H \hat \rho]$ is the average energy.
As mentioned above,
$F_1(\hat \rho; \hat H) \geq  F$, or equivalently
\begin{equation}
S_1 (\hat \rho) \leq  \beta (E-  F)
\label{von_Neumann_Boltzmann}
\end{equation}
holds, where the equality holds  if and only if $\hat \rho = \hat \rho^{\rm G}$.
From this, we see that, under a fixed average energy $E$, the von Neumann entropy $S_1 (\hat \rho)$ takes the maximum value if and only if $\hat \rho = \hat \rho^{\rm G}$.
This guarantees that the Gibbs state is the maximum entropy state under the average energy constraint.

The Gibbs states are \textit{free states} in resource theory of thermodynamics, which can be freely added to other states.
This is based on the property called \textit{complete passivity}~\cite{Pusz1978,Lenard1978}, which means that one cannot extract a positive amount of work from any number of copies of the state, $\hat \rho^{\otimes n}$, by any unitary operation with the initial and final Hamiltonians being the same (i.e., the operation is cyclic).
It is known that a state is completely passive if and only if it is Gibbs.

\section{Thermal operations and Gibbs-preserving maps}
\label{sec:thermodynamic_operations}

We now generally and precisely formulate the classes of dynamics  relevant to quantum thermodynamics.
They are Gibbs-preserving maps and thermal operations, which are regarded as \textit{free operations} in resource theory of thermodynamics.
Correspondingly, the nonequilibrium free energies introduced in the previous section are monotones under these free operations, because of their monotonicity properties.

We consider the system S in contact with a heat bath B.
Hereafter, we use subscripts S or B for states and Hamiltonians of them.
Both of S and B are described by finite-dimensional Hilbert spaces.
Mathematically,  we do not necessarily suppose that B is a ``large'' system, while physically it is natural to suppose so.
In the following argument, $\beta \geq 0$ is interpreted as the inverse temperature of B.

We first reproduce  the definition of the Gibbs-preserving map, which was already introduced in Eq.~(\ref{classical_GMP}) of Section~\ref{sec:classical_second_law}   for the classical case and Eq.~(\ref{quantum_GMP0}) of Section~\ref{sec:quantum_entropy} for the quantum case.

\begin{definition}[Gibbs-preserving maps]
A CPTP map $\mathcal E_{\rm S}$ on S  is a Gibbs-preserving map  at $\beta \geq 0$ with Hamiltonian $\hat H_{\rm S}$, if
 \begin{equation}
 \mathcal E_{\rm S} (e^{-\beta \hat H_{\rm S}} ) = e^{-\beta \hat H_{\rm S}},
 \end{equation}
 or equivalently, $\mathcal E (\hat \rho_{\rm S}^{\rm G} ) = \hat \rho_{\rm S}^{\rm G}$.
\label{def:GMP}
\end{definition}

 We note that a CPTP unital map is a Gibbs-preserving map (with any $\hat H$) at $\beta = 0$ and also  a Gibbs-preserving map of the trivial Hamiltonian $\hat H \propto \hat I$ (at any $\beta$). 

We next introduce another important class of thermodynamic operations, called thermal operations.

\begin{definition}[Thermal operations]
A CPTP map $\mathcal E_{\rm S}$ on S  is an exact thermal operation at $\beta \geq 0$ with  Hamiltonian $\hat H_{\rm S}$, if there exists a heat bath B with Hamiltonian $\hat H_{\rm B}$ and the corresponding Gibbs state $\hat \rho^{\rm G}_{\rm B} := e^{-\beta \hat H_{\rm B}}/ Z_{\rm B}$, and exists a unitary operator $\hat U$ acting on SB, such that
\begin{equation}
\mathcal E_{\rm S} (\hat \rho_{\rm S}) = {\rm tr}_{\rm B} \left[ \hat U \hat \rho_{\rm S} \otimes \hat \rho_{\rm B}^{\rm G} \hat U^\dagger  \right]
\end{equation}
and
\begin{equation}
[ \hat U,  \hat H_{\rm S} + \hat H_{\rm B} ]  = 0.
\label{thermal_operation_conserve}
\end{equation}
Furthermore, a CPTP map $\mathcal E_{\rm S}$ on S is a thermal operation at $\beta$ with  Hamiltonian $\hat H_{\rm S}$, if there exists a sequence of exact thermal operations $\{ \mathcal E_{{\rm S},n} \}_{n=1}^\infty$ with Hamiltonian  $\hat H_{\rm S}$ such that $\mathcal E_{{\rm S},n}$ converges to $\mathcal E_{\rm S}$ in $n \to \infty$.
\label{def:thermal_operation}
\end{definition}

In the definition of thermal operation, the condition (\ref{thermal_operation_conserve}) implies that the sum of the free Hamiltonians of S and B, represented by $\hat H_{\rm S} + \hat H_{\rm B}$, is strictly conserved under the unitary operation $\hat U$.
This is an idealized representation of the resource-theoretic viewpoint that the energy is a resource and is not free.
If  $\hat U$ is written as $\hat U = \exp (- {\rm i} \hat H_{\rm tot})$ with the total Hamiltonian
\begin{equation}
\hat H_{\rm tot} = \hat H_{\rm S} + \hat H_{\rm int} + \hat H_{\rm B},
\label{total_Hamiltonian_general}
\end{equation}
where $\hat H_{\rm int}$ is the interaction Hamiltonian, then Eq.~(\ref{thermal_operation_conserve})  can be expressed as
\begin{equation}
[ \hat H_{\rm tot}, \hat H_{\rm S} + \hat H_{\rm B} ]  = 0 \ \ \ \Leftrightarrow \ \ \ [ \hat H_{\rm int}, \hat H_{\rm S} + \hat H_{\rm B} ]  = 0.
\label{thermal_operation_conserve_H}
\end{equation}

The condition~(\ref{thermal_operation_conserve}) or (\ref{thermal_operation_conserve_H}) is exactly satisfied for special systems such as the Jaynes-Cummings model at the resonant condition, whose Hamiltonian is given by
\begin{equation}
\hat H_{\rm JC} = \frac{1}{2} \omega \hat \sigma_z + g(\hat \sigma_+ \hat a + \hat \sigma_- \hat a^\dagger ) + \omega \hat a^\dagger \hat a.
\end{equation}
Here, $\hat \sigma_z$, $\hat \sigma_{\pm}$ are the Pauli $z$ and ladder operators of a two-level atom and $\hat a$ is the annihilation operator of a photon, and the second term on the right-hand side represents the interaction.
More generally, the rotating wave approximation  in the derivation of the Lindblad equation~\cite{Breuer2002} approximately leads to  the condition (\ref{thermal_operation_conserve}) or (\ref{thermal_operation_conserve_H}) (see also Ref.~\cite{Funo2018}). 
This is particularly relevant to the long time regime, in which one can expect that thermal operations can be realized at least approximately in a broad class of dynamics.

We note that a noisy operation (Definition~\ref{def:noisy_operation}) is a special thermal operation where both of the Hamiltonians, $\hat H_{\rm S}$ and $\hat H_{\rm B}$, are trivial (i.e., proportional to the identities).
As a generalization of the fact that any noisy operation is unital (Proposition~\ref{noisy_unital}), we have the following.

\begin{lemma}
If $\mathcal E_{\rm S}$ is a  thermal operation, then $\mathcal E_{\rm S}$ is also a Gibbs-preserving map.
\label{lemma:TO_GMP}
\end{lemma}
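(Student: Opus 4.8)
The plan is to prove the statement first for an \emph{exact} thermal operation and then pass to the limit to cover general thermal operations.

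For an exact thermal operation, I would substitute the Gibbs state $\hat\rho_{\rm S}^{\rm G}=e^{-\beta\hat H_{\rm S}}/Z_{\rm S}$ directly into the defining formula of Definition~\ref{def:thermal_operation}. Since the reference bath state is $\hat\rho_{\rm B}^{\rm G}=e^{-\beta\hat H_{\rm B}}/Z_{\rm B}$, the joint input state factorizes as
\begin{equation}
\hat\rho_{\rm S}^{\rm G}\otimes\hat\rho_{\rm B}^{\rm G}
=\frac{e^{-\beta(\hat H_{\rm S}+\hat H_{\rm B})}}{Z_{\rm S}Z_{\rm B}}.
\end{equation}
The key point is the energy-conservation condition~(\ref{thermal_operation_conserve}), $[\hat U,\hat H_{\rm S}+\hat H_{\rm B}]=0$, which immediately gives $[\hat U,e^{-\beta(\hat H_{\rm S}+\hat H_{\rm B})}]=0$, and hence $\hat U(\hat\rho_{\rm S}^{\rm G}\otimes\hat\rho_{\rm B}^{\rm G})\hat U^\dagger=\hat\rho_{\rm S}^{\rm G}\otimes\hat\rho_{\rm B}^{\rm G}$. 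Taking the partial trace over B and using ${\rm tr}_{\rm B}[\hat\rho_{\rm B}^{\rm G}]=1$ yields $\mathcal E_{\rm S}(\hat\rho_{\rm S}^{\rm G})=\hat\rho_{\rm S}^{\rm G}$, i.e., $\mathcal E_{\rm S}$ is a Gibbs-preserving map with Hamiltonian $\hat H_{\rm S}$ in the sense of Definition~\ref{def:GMP}.

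For a general thermal operation, I would invoke the second part of Definition~\ref{def:thermal_operation}: there exists a sequence of exact thermal operations $\{\mathcal E_{{\rm S},n}\}_{n=1}^\infty$ with the \emph{same} Hamiltonian $\hat H_{\rm S}$ converging to $\mathcal E_{\rm S}$. By the previous step, $\mathcal E_{{\rm S},n}(\hat\rho_{\rm S}^{\rm G})=\hat\rho_{\rm S}^{\rm G}$ for every $n$. Since the system is finite-dimensional, evaluation of a superoperator on the fixed operator $\hat\rho_{\rm S}^{\rm G}$ is continuous, so $\mathcal E_{\rm S}(\hat\rho_{\rm S}^{\rm G})=\lim_{n\to\infty}\mathcal E_{{\rm S},n}(\hat\rho_{\rm S}^{\rm G})=\hat\rho_{\rm S}^{\rm G}$, which completes the argument. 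One should also note that $\mathcal E_{\rm S}$ is CPTP by assumption, so it is a bona fide Gibbs-preserving map and not merely a Gibbs-fixed linear map.

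There is essentially no serious obstacle here; the only point requiring mild care is the limiting step, i.e., making sure that the notion of convergence used in the definition of a thermal operation is one for which pointwise evaluation is continuous. In finite dimensions this is automatic (all the natural topologies on the space of superoperators coincide), so the proof is short and elementary. This is also the natural generalization of the already-established fact that any noisy operation is unital (Proposition~\ref{noisy_unital}), recovered here by taking $\hat H_{\rm S}$ and $\hat H_{\rm B}$ proportional to the identities.
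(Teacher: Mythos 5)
Your proposal is correct and follows essentially the same route as the paper: use the strict energy conservation $[\hat U,\hat H_{\rm S}+\hat H_{\rm B}]=0$ to show the joint Gibbs state is invariant under $\hat U$, trace out B, and handle general thermal operations by taking the limit of exact ones. The only cosmetic difference is that the paper computes with the unnormalized operator $e^{-\beta\hat H_{\rm S}}$ and the factor $Z_{\rm B}^{-1}$ rather than the normalized Gibbs states, which changes nothing.
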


\begin{proof}
If $\mathcal E_{\rm S}$ is an exact thermal operation,
$
\mathcal E_{\rm S} (e^{-\beta \hat H_{\rm S}})
= Z_{\rm B}^{-1} {\rm tr}_{\rm B} [ \hat U e^{-\beta (\hat H_{\rm S} + \hat H_{\rm B})} \hat U^\dagger ] 
= Z_{\rm B}^{-1} {\rm tr}_{\rm B} [  e^{-\beta \hat U (\hat H_{\rm S} + \hat H_{\rm B})\hat U^\dagger}  ]  
= Z_{\rm B}^{-1} {\rm tr}_{\rm B} [  e^{-\beta  (\hat H_{\rm S} + \hat H_{\rm B})}  ]  
= e^{-\beta \hat H_{\rm S}}.
$
For the non-exact case,  take the limit. $\Box$
\end{proof}

The converse of the above lemma is not true: There are  Gibbs-preserving maps that cannot be written as thermal operations, as in the case for noisy operations (Proposition~\ref{strict_inclusions_noisy}).
More strongly, we have the following proposition, which is now contrastive to the case of noisy operations (Corollary~\ref{noisy_corollary}).
The crucial observation is that any thermal operation cannot create coherence between energy eigenstates  of $\hat H_{\rm S}$.

\begin{proposition}[Main claim of ~\cite{Faist2015}]
For any $\beta > 0$ and any Hamiltonian $\hat H_{\rm S}$ that is not proportional to the identity, there exists a pair of states $(\hat \rho_{\rm S}, \hat \rho_{\rm S}')$ such that there is a Gibbs-preserving map $\mathcal E_{\rm S}$ satisfying $\hat \rho_{\rm S}' = \mathcal E_{\rm S} (\hat \rho_{\rm S})$ but there is no such a thermal operation.
\end{proposition}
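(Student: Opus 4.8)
The plan is to exhibit, for any fixed $\beta > 0$ and any Hamiltonian $\hat H_{\rm S}$ that is not proportional to the identity, a concrete pair of states $(\hat\rho_{\rm S}, \hat\rho_{\rm S}')$ where $\hat\rho_{\rm S}$ is diagonal in the energy eigenbasis but $\hat\rho_{\rm S}'$ carries coherence between two energy eigenstates with distinct eigenvalues, chosen so that a Gibbs-preserving map connecting them exists while no thermal operation does. The key structural fact to invoke is that thermal operations conserve total free energy $\hat H_{\rm S} + \hat H_{\rm B}$ strictly (Eq.~(\ref{thermal_operation_conserve})), hence commute with the free time evolution $\hat\rho \mapsto e^{-{\rm i} \hat H_{\rm S} t}\hat\rho\, e^{{\rm i}\hat H_{\rm S} t}$ on the system; this time-translation covariance forbids a thermal operation from mapping an incoherent state to one with off-diagonal elements connecting energy levels $E_j \neq E_k$. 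So the ``no thermal operation'' half of the claim reduces to checking that $\hat\rho_{\rm S}$ is block-diagonal (trivially covariant) while $\hat\rho_{\rm S}'$ is not.

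For the ``yes Gibbs-preserving map'' half, the concrete construction I would use: take two energy eigenstates $|E_j\rangle, |E_k\rangle$ with $E_j \neq E_k$ (available since $\hat H_{\rm S}$ is nontrivial). Let $\hat\rho_{\rm S}$ be the Gibbs state $\hat\rho_{\rm S}^{\rm G}$ itself, or more flexibly some incoherent state supported on these levels, and let $\hat\rho_{\rm S}'$ be a pure or nearly-pure state of the form $|\psi\rangle\langle\psi|$ with $|\psi\rangle = a|E_j\rangle + b|E_k\rangle$, $ab\neq 0$. The first step is then to produce a Gibbs-preserving CPTP map $\mathcal E_{\rm S}$ with $\mathcal E_{\rm S}(\hat\rho_{\rm S}^{\rm G}) = \hat\rho_{\rm S}^{\rm G}$ and $\mathcal E_{\rm S}(\hat\rho_{\rm S}) = \hat\rho_{\rm S}'$. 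A clean way: use a ``measure-and-prepare''-style map of the form $\mathcal E_{\rm S}(\hat\tau) = {\rm tr}[\hat P\hat\tau]\,\hat\omega_1 + {\rm tr}[(\hat I - \hat P)\hat\tau]\,\hat\omega_2$, with $\hat P$ a suitable projector and $\hat\omega_1,\hat\omega_2$ fixed output states, and then impose the two linear constraints (fixing the Gibbs state, and sending $\hat\rho_{\rm S}$ to $\hat\rho_{\rm S}'$) to solve for $\hat\omega_1,\hat\omega_2$; one must verify these solutions are genuine density operators, which is where a parameter (the weight $a,b$ or the closeness of $\hat\rho_{\rm S}$ to $\hat\rho_{\rm S}^{\rm G}$) gives enough slack. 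Alternatively, and perhaps more simply, one can argue abstractly: the set of Gibbs-preserving maps acting on a fixed pair of input states is characterized by d-majorization (Theorem~\ref{thm:d_majorization} / Definition~\ref{def:quantum_d_majorization}), and since $\hat\rho_{\rm S}'$ can be taken with the same ``classical data'' (diagonal part, relative Lorenz curve against $\hat\rho_{\rm S}^{\rm G}$) as some achievable incoherent target, existence follows while the coherence is injected ``for free'' by a unitary that is \emph{not} energy-conserving — which is exactly what a Gibbs-preserving map is permitted to do but a thermal operation is not.

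The main obstacle I anticipate is the verification that the candidate Gibbs-preserving map is completely positive and trace-preserving while exactly preserving $\hat\rho_{\rm S}^{\rm G}$ — the Gibbs-preservation constraint is a genuine linear restriction, and coupling it with the requirement $\mathcal E_{\rm S}(\hat\rho_{\rm S}) = \hat\rho_{\rm S}'$ for a coherent target is the delicate point; one has to choose $\hat\rho_{\rm S}$ and $\hat\rho_{\rm S}'$ with care (e.g.\ take $\hat\rho_{\rm S}$ close to $\hat\rho_{\rm S}^{\rm G}$ and $\hat\rho_{\rm S}'$ close to $\hat\rho_{\rm S}^{\rm G}$ but slightly rotated) so that the resulting Kraus operators are manifestly legitimate. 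A secondary, easier obstacle is stating the covariance obstruction for thermal operations cleanly: I would record, as a lemma, that every thermal operation $\mathcal E_{\rm S}$ satisfies $\mathcal E_{\rm S}(e^{-{\rm i}\hat H_{\rm S}t}\hat\tau\, e^{{\rm i}\hat H_{\rm S}t}) = e^{-{\rm i}\hat H_{\rm S}t}\mathcal E_{\rm S}(\hat\tau)e^{{\rm i}\hat H_{\rm S}t}$, which follows immediately from $[\hat U, \hat H_{\rm S}+\hat H_{\rm B}] = 0$ and $[\hat\rho_{\rm B}^{\rm G}, \hat H_{\rm B}] = 0$ by a short computation (and passes to limits), and then observe that covariance forces the matrix element $\langle E_j|\mathcal E_{\rm S}(\hat\tau)|E_k\rangle$ to vanish whenever $\langle E_j|\hat\tau|E_k\rangle = 0$ for $E_j\neq E_k$ — contradicting $\hat\rho_{\rm S}'$ having the prescribed coherence.
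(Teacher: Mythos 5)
Your proposal is correct and follows essentially the same route as the paper: the impossibility half is the same no-coherence-creation obstruction (the paper derives it directly from the strict energy conservation $[\hat U,\hat H_{\rm S}+\hat H_{\rm B}]=0$ acting on eigenstates of $\hat H_{\rm S}+\hat H_{\rm B}$, which is equivalent in substance to your covariance lemma for incoherent inputs), and the possibility half is exactly your measure-and-prepare template, which the paper instantiates on a qubit by measuring in the energy basis and preparing $|+\rangle\langle+|$ on outcome $1$ and $\hat\sigma=p_0^{-1}(\hat\rho_{\rm S}^{\rm G}-p_1|+\rangle\langle+|)$ on outcome $0$, with positivity of $\hat\sigma$ verified explicitly. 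Two minor caveats: drop the sub-option $\hat\rho_{\rm S}=\hat\rho_{\rm S}^{\rm G}$, since Gibbs preservation would then force $\hat\rho_{\rm S}'=\hat\rho_{\rm S}^{\rm G}$, which carries no coherence, and invoke your covariance statement only for inputs with no coherence at any nonzero Bohr frequency, because the element-wise version can fail when distinct level pairs share the same energy gap.
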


\begin{proof}
We first consider the case of  exact thermal operations.
By noting that the initial state of B is Gibbs in this case, we consider an eigenstate $| E_i^{\rm S} \rangle | E_j^{\rm B} \rangle$ of $\hat H_{\rm S} + \hat H_{\rm B}$.  Then, from the condition (\ref{thermal_operation_conserve}), $\hat U | E_i^{\rm S} \rangle | E_j^{\rm B} \rangle$ should be given by a  superposition of $| E_k^{\rm S} \rangle | E_l^{\rm B} \rangle$'s satisfying $E_i^{\rm S} + E_j^{\rm B} = E_k^{\rm S} + E_l^{\rm B}$.
By tracing out the state of B, we find that the state of S cannot have any superposition of eigenstates with different energies.
This property is kept unchanged even when one takes the limit  for non-exact thermal operations.

On the other hand, such superposition can be created by a Gibbs-preserving map as shown below.
Without loss of generality, let us focus on the situation that S is a qubit system and the Hamiltonian is given by $\hat H_{\rm S} = E_0 | 0 \rangle \langle 0 | + E_1 | 1 \rangle \langle 1 |$ with $E_0 < E_1$. (We omitted the superscript S in $E_i$.)
As shown above, it is impossible to implement the transformation $| 1 \rangle \mapsto | + \rangle := (| 0 \rangle + | 1\rangle ) / \sqrt{2}$ by any thermal operation.
We then construct a Gibbs-preserving map that enables this transformation.
Let $p_i := e^{-\beta E_i} / Z_{\rm S}$ with $Z_{\rm S} := \sum_{i=0,1}e^{-\beta E_i}$.
Then it is easy to show that $\hat \sigma := p_0^{-1}(\hat \rho_{\rm S}^{\rm G} - p_1 | + \rangle \langle + | )$ is a normalized positive operator (i.e., a density operator).  Then define
\begin{equation}
\mathcal E_{\rm S}( \hat \rho_{\rm S} ) := \langle 0 | \hat \rho_{\rm S} | 0 \rangle \hat \sigma + \langle 1 | \hat \rho_{\rm S} | 1 \rangle | + \rangle  \langle + |,
\end{equation}
which satisfies $\mathcal E_{\rm S} (\hat \rho_{\rm S}^{\rm G} ) = \hat \rho_{\rm S}^{\rm G}$ and  $\mathcal E_{\rm S} (| 1 \rangle \langle 1 | ) =  | + \rangle  \langle + |$.
$\Box$
\end{proof}

We next remark on classical thermal operations.

\begin{definition}[Classical thermal operations]
A stochastic matrix $T$ is an exact classical thermal operation with respect to a Hamiltonian of S, if there exists a heat bath B in finite dimension with some Hamiltonian  and exists a permutation matrix $P$ acting on the composite system such that
\begin{equation}
(T p)_i= \sum_{jkl} P_{ij; kl} p_k p_{{\rm B}, l}^{\rm G},
\label{noisy_operation_c}
\end{equation}
where $ p_{{\rm B}, l}^{\rm G}$ is the Gibbs state of B and $P$ preserves the sum of the energies of S and B. 
Furthermore, a stochastic matrix $T$ is a classical thermal operation with respect to a Hamiltonian of S, if there exists a sequence of exact thermal operations $\{ T_n \}_{n=1}^\infty$ with the same Hamiltonian  such that $T_n$ converges to $T$ in $n \to \infty$.
\label{def:thermal_operation_c}
\end{definition}

Based on the above definition, we can show that classical thermal operations are equivalent to classical Gibbs-preserving maps. A rigorous proof of this is given in Ref.~\cite{Shiraishi2020}; see also Ref.~\cite{Horodecki2013}.

\begin{proposition}
For any classical stochastic map $T$ and a given Hamiltonian, the following are equivalent. \\
(i) $T$ is a  Gibbs-preserving map.\\
(ii) $T$ is a  thermal operation.
\end{proposition}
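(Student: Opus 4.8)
The two implications require very different amounts of work. The direction (ii) $\Rightarrow$ (i) is the classical counterpart of Lemma~\ref{lemma:TO_GMP} and I would verify it directly: if $T$ is an exact classical thermal operation implemented through a bath B in its Gibbs state $p^{\rm G}_{\rm B}$ and an energy-conserving permutation $P$ on SB, then the product Gibbs distribution of SB, whose $(k,l)$-entry is proportional to $e^{-\beta(E_k + E^{\rm B}_l)}$, is a function of the total energy $E_k + E^{\rm B}_l$ alone; since $P$ permutes within subspaces of fixed total energy it leaves this distribution invariant, and the partial trace over B gives $Tp^{\rm G} = p^{\rm G}$. For a general classical thermal operation $T = \lim_n T_n$ the identities $T_n p^{\rm G} = p^{\rm G}$ pass to the limit, so $T$ is Gibbs-preserving.

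For the nontrivial direction (i) $\Rightarrow$ (ii) I would make precise the thermodynamic picture sketched after the proof of Theorem~\ref{thm:d_majorization}, proceeding in the same spirit as the proof that doubly stochastic matrices are exactly the classical noisy operations. Choose positive integers $n_i$ with $n_i/N \to p^{\rm G}_i$ (where $N := \sum_i n_i$), approximating the Gibbs weights $e^{-\beta E_i}/Z$; replacing each level $i$ by $n_i$ ``virtual sublevels'' makes the uniform distribution on the $N$ sublevels marginalize to (an approximation of) $p^{\rm G}$ on S. A Gibbs-preserving $T$ lifts to the stochastic matrix $T'_{(j,b),(i,a)} := T_{ji}/n_j$ on the sublevel space, which — using $\sum_i e^{-\beta E_i}T_{ji} = Z p^{\rm G}_j$, a restatement of $Tp^{\rm G}=p^{\rm G}$ — is doubly stochastic (exactly so when the $p^{\rm G}_i$ are rational, and up to a controllable error in general). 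Birkhoff's theorem (Theorem~\ref{thm:Birkhoff}) then writes $T' = \sum_k r_k P'_k$ as a convex combination of permutations of the sublevels.

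The main step is to realize $T'$, hence $T$, by a single energy-conserving permutation of $\mathrm{S}+\mathrm{B}$ for an appropriate finite bath B prepared in its Gibbs state. Following Horodecki \emph{et al.}~\cite{Horodecki2013}, I would take B to be a large collection of identical systems at inverse temperature $\beta$: by equivalence of ensembles, the Gibbs state of B restricted to a typical energy window is close to the uniform (microcanonical) distribution on that window, which is exactly the structure the abstract sublevel construction needs. One then implements $P'_k$ — conditioned on an ancillary control register (trivial Hamiltonian, hence uniform Gibbs state) being in its $k$-th block of relative size $r_k$ — by a permutation of $\mathrm{S}+\mathrm{B}$ that moves a sublevel of $i$ to a sublevel of $j$ while transferring the energy $E_i - E_j$ to B. The resulting map agrees with $T$ up to an error governed by the rational approximation $n_i/N \approx p^{\rm G}_i$, the finiteness of the bath (equivalence-of-ensembles corrections), and the discretization of the energies absorbed by B; sending all of these to their limits exhibits $T$ as a classical thermal operation, and the special case $\beta=0$ reduces to the already-established equality of doubly stochastic maps and classical noisy operations.

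The hard part is precisely this last construction — building a \emph{finite} bath in a genuine Gibbs state whose energy-conserving dynamics reproduces an \emph{arbitrary} Gibbs-preserving stochastic matrix (not merely the reachability of one output from one input, which is what Theorem~\ref{thm:d_majorization} already gives), together with a uniform bound on the accumulated approximation errors. For a fully rigorous treatment of this bath engineering and error control I would invoke Ref.~\cite{Shiraishi2020}; see also Ref.~\cite{Horodecki2013}.
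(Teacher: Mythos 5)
Your proposal is correct and takes essentially the same route as the paper: the paper states this proposition without a self-contained proof, deferring the rigorous construction to Refs.~\cite{Shiraishi2020} and \cite{Horodecki2013}, which are exactly the references you invoke for the hard direction, while your (ii) $\Rightarrow$ (i) argument is just the classical counterpart of Lemma~\ref{lemma:TO_GMP}. Your sketch of (i) $\Rightarrow$ (ii) --- splitting each level into sublevels weighted by the Gibbs factors, lifting $T$ to a doubly stochastic matrix, and realizing it by an energy-conserving permutation on SB with an effectively microcanonical bath --- coincides with the thermodynamic interpretation the paper gives after the proof of Theorem~\ref{thm:d_majorization} (bath levels $E - E_i$ with degeneracies proportional to $e^{-\beta E_i}$), so there is no substantive difference in approach.
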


We also remark on the semiclassical case.
We refer to a thermal operation as \textit{semiclassical}, if the system is quantum but the initial state $\hat \rho_{\rm S}$ and the final state $\mathcal E_{\rm S} (\hat \rho_{\rm S})$ of S are both (block-)diagonal  in the eigenbasis of the Hamiltonian $\hat H_{\rm S}$.
In the semiclassical case, thermal operations are basically equivalent to Gibbs-preserving maps, and therefore, most of the results in the subsequent sections are also valid even if  the Gibbs-preserving maps are replaced with thermal operations.
 We notice, however, that this equivalence is in general not at the level of maps but only at the level of state convertibility in the sense of Theorem~\ref{thm:majorization_q} and Corollary~\ref{noisy_corollary} (see also Proposition~\ref{strict_inclusions_noisy}).

\

Going back to the quantum case, we consider the Lindblad equation as an illustrative example~\cite{Breuer2002}, which describes quantum Markovian dynamics.
Let $\hat H_{\rm S} = \sum_i E_i | E_i \rangle \langle E_i |$ be the Hamiltonian, which is assumed to be non-degenerate for simplicity.  The Lindblad equation is given by
\begin{equation}
\frac{d \hat \rho_{\rm S} (t)}{d t} = -{\rm i} [\hat H_{\rm S}, \hat \rho_{\rm S} (t)]  +\sum_{ij} \left(  \hat L_{ij} \hat \rho_{\rm S} (t) \hat L_{ij}^\dagger -   \frac{1}{2}  \{ \hat L_{ij}^\dagger \hat L_{ij}, \hat \rho_{\rm S} (t) \}  \right).
\label{Lindblad_eq}
\end{equation}
The solution of this gives a CPTP map: the map from $\hat \rho_{\rm S} (0)$ to $\hat \rho_{\rm S} (t)$ under  Eq.~(\ref{Lindblad_eq}) is a CPTP map.

We here assume that the Lindblad operators $\hat L_{ij}$ are  labeled by $(i,j)$ with both $i$ and $j$ representing the labels of the eigenvalues of $\hat H_{\rm S}$, and that $[\hat L_{ij}, \hat H_{\rm S} ]= (E_j - E_i ) \hat L_{ij}$,
which represents that $\hat L_{ij}$ describes the quantum jump from $|E_j \rangle$ to $| E_i \rangle$.
The detailed balance condition is given by
\begin{equation}
\hat L_{ji}^\dagger e^{-\beta E_i / 2}  = \hat L_{ij} e^{-\beta E_j / 2},
\label{q_detailed_balance}
\end{equation}
which is a quantum counterpart of the classical detailed balance condition (\ref{classical_detailed_balance}).
It is straightforward to show that the detailed balance condition (\ref{q_detailed_balance}) implies that the Gibbs state $\hat \rho_{\rm S}^{\rm G}$ of $\hat H_{\rm S}$ is the stationary solution of Eq.~(\ref{Lindblad_eq}).
In other words,  the detailed balance condition (\ref{q_detailed_balance}) implies that the map from $\hat \rho_{\rm S} (0)$ to $\hat \rho_{\rm S} (t)$ is a Gibbs-preserving map for any $t$.
See also Refs.~\cite{Funo2018,Parrondo2013,Horowitz2014} for quantum stochastic thermodynamics based on the Lindblad equations.

Moreover, by looking at the standard derivation of the Lindblad equation from unitary dynamics of the system and the bath (see, e.g., Ref.~\cite{Breuer2002}), we can see that the Lindblad dynamics with the foregoing assumptions may be regarded as an approximate thermal operation for each infinitesimal time step~\cite{Funo2018}.
In fact, as already mentioned before,  the condition (\ref{thermal_operation_conserve}) is guaranteed by the rotating wave approximation.
Also, we usually suppose that  the bath is initially in the Gibbs state, which leads to the detailed balance condition (\ref{q_detailed_balance}).

\section{Clock and work storage}
\label{sec:clock_work}

We next introduce the work storage (or the weight) W and the clock C, which are already discussed in Section~\ref{sec:classical_work_bound} for the classical case. 
Hereafter, we call the original part of the system just as the ``system'' and denote it by S.
That is, the entire system consists of S, C, W, apart from  the heat bath B.

The work storage W has its own Hamiltonian, in which the work is stored. Let $\hat H_{\rm W}$ be the Hamiltonian of W.  At this stage, we do not specify its explicit form.

The change of the Hamiltonian of S by  external driving, from $\hat H_{\rm S}$ to $\hat H_{\rm S}'$, can be implemented in an autonomous way by introducing the clock C such that the Hamiltonian of SC becomes time-independent~\cite{Horodecki2013,Aberg2014,Skrzypczyk2014,Malabarba2015,Korzekwa2016,Woods2019,Frenzel2014}.
For example, suppose that the clock evolves from a pure state $| 0 \rangle$ to another pure state $| 1 \rangle$ with these states being orthogonal, and that the Hamiltonian of SC is given by the form $\hat H_{\rm SC} = \hat H_{\rm S} \otimes |  0 \rangle \langle  0 | + \hat H_{\rm S}' \otimes |  1  \rangle \langle  1  |$ as adopted in Ref.~\cite{Horodecki2013}.
In this case, the Hamiltonian of S effectively changes from $\hat H_{\rm S}$ to $\hat H_{\rm S}'$.

The free Hamiltonian of SCW is now  given by
\begin{equation}
\hat H_{\rm SCW} = \hat H_{\rm S} \otimes |  0 \rangle \langle  0 | + \hat H_{\rm S}' \otimes |  1  \rangle \langle  1  | +  \hat H_{\rm W}.
\label{Hamiltonian_SCW}
\end{equation}
In the subsequent sections, we will consider Gibbs-preserving maps and thermal operations with respect to the Hamiltonian (\ref{Hamiltonian_SCW}).
The condition of thermal operation on SCW (Eq.~(\ref{thermal_operation_conserve}) in the absence of CW) is written as
\begin{equation}
[ \hat U, \hat H_{\rm SCW} + \hat H_{\rm B} ] = 0,
\label{thermal_operation_SCW}
\end{equation}
where $\hat H_{\rm B}$ is the Hamiltonian of B and $\hat U$ is the unitary operator acting on SCWB.
The condition (\ref{thermal_operation_SCW}) represents the strict energy conservation of the free Hamiltonian of SCWB.

In most of the remaining part of this chapter, we use the following notations.
Let  $\hat \rho_{\rm SCW}^{\rm G}$, $\hat \rho_{\rm S}^{\rm G}$, $\hat \rho_{\rm S}^{\rm G}{}'$, $\hat \rho_{\rm W}^{\rm G}$ be the Gibbs states of the Hamiltonians $\hat H_{\rm SCW}$, $\hat H_{\rm S}$, $\hat H_{\rm S}'$, $\hat H_{\rm W}$,
and $Z_{\rm SCW}$, $Z_{\rm S}$, $Z_{\rm S}'$, $Z_{\rm W}$ and  $F_{\rm SCW}$, $F_{\rm S}$, $F_{\rm S}'$, $F_{\rm W}$ be the corresponding partition functions and the free energies, respectively.
Let $\Delta F_{\rm S} := F_{\rm S}' - F_{\rm S}$.
 From Eq.~(\ref{Hamiltonian_SCW}), we have
\begin{equation}
\hat \rho_{\rm SCW}^{\rm G} = \hat \rho_{\rm SC}^{\rm G} \otimes \hat \rho_{\rm W}^{\rm G}, 
\label{Gibbs_total}
\end{equation}
where 
\begin{equation}
 \hat \rho_{\rm SC}^{\rm G} = \frac{Z_{\rm S}}{Z_{\rm S} + Z_{\rm S}'}  \hat \rho_{\rm S}^{\rm G} \otimes | 0 \rangle \langle 0 | + \frac{Z_{\rm S}'}{Z_{\rm S} + Z_{\rm S}'}\hat \rho_{\rm S}^{\rm G}{}' \otimes | 1 \rangle \langle 1 |.
 \label{Gibbs_SCW}
\end{equation}

Concrete constructions of the work storage and the clock  are discussed in Refs.~\cite{Aberg2014,Skrzypczyk2014,Malabarba2015,Korzekwa2016,Woods2019,Frenzel2014}.
Here we illustrate the function of the clock by  a (almost trivial) toy example.
For simplicity, we only consider the composite SC.  Let $\hat H_{\rm SC} = \hat \sigma_z \otimes | 0 \rangle \langle 0 | + \hat \sigma_x \otimes | 1 \rangle \langle 1 |$ be the Hamiltonian and $\hat U = \hat h \otimes \hat \sigma_x$ be the unitary operation, where $\hat \sigma_i$ ($i=x,y,z$) is the Pauli operator with $\{ | 0 \rangle, |1 \rangle \}$ being the eigenbasis of $\hat \sigma_z$, and $\hat h$ is the Hadamard operator satisfying $\hat h^\dagger = \hat h$, $\hat h \hat \sigma_x \hat h = \hat \sigma_z$, $\hat h \hat \sigma_z \hat h = \hat \sigma_x$.
Then, $\hat U^\dagger \hat H_{\rm SC} \hat U = \hat H_{\rm SC}$, or equivalently $[ \hat U, \hat H_{\rm SC} ] = 0$, holds.
For any input state of the form $| \psi \rangle | 0 \rangle$, we have $\hat U | \psi \rangle | 0 \rangle = (\hat h | \psi \rangle ) | 1 \rangle$.
Thus in this case, the clock works perfectly for all initial states of S, while satisfying condition~(\ref{thermal_operation_SCW}).

In general, however,
not all the initial states of S are consistent with the evolution of C from $| 0 \rangle$ to $ |1 \rangle$,
under the constraint of the strict energy conservation~(\ref{thermal_operation_SCW}).
This is because $\hat H_{\rm S}$ and $\hat H_{\rm S}'$ (or  $\hat H_{\rm S} + \hat H_{\rm W} + \hat H_{\rm B}$ and $\hat H_{\rm S}' + \hat H_{\rm W} + \hat H_{\rm B}$) can have different energy spectra.
Thus, in the formulation of the subsequent sections, we only require that the clock evolves from $| 0 \rangle$ to $|1\rangle$ for a \textit{given} initial state of S.

Suppose that  the unitary operator is generated by a time-independent Hamiltonian as $\hat U = \exp (-{\rm i} \hat H_{\rm tot} t)$, where $\hat H_{\rm tot} = \hat H_{\rm S} \otimes |  0 \rangle \langle  0 | + \hat H_{\rm S}' \otimes |  1  \rangle \langle  1  | +  \hat H_{\rm W} + \hat H_{\rm B} + \hat H_{\rm int}$
is the total Hamiltonian and  $\hat H_{\rm int}$ is the interaction Hamiltonian.
Here, $\hat H_{\rm int}$ represents not only the interaction between S and B, but also the interaction between S and W, and moreover, driving of C.
The condition of thermal operation on SCW (\ref{thermal_operation_SCW}) is rewritten as $[  \hat H_{\rm tot}, \hat H_{\rm SCW} + \hat H_{\rm B} ]= 0$, or equivalently $[  \hat H_{\rm int}, \hat H_{\rm SCW} + \hat H_{\rm B} ]= 0$.

It is not trivial whether a desired unitary operation $\hat U$ can be implemented by a time-independent Hamiltonian $\hat H_{\rm tot}$ in a physically feasible form without any additional thermodynamic cost.
A way to implement such a unitary is that we introduce a ``hyper-clock'', named C', in addition to SCWB, as discussed in Ref.~\cite{Malabarba2015} (C' is called just a ``clock'' there).
In that case, the total Hamiltonian is given by $\hat H_{\rm tot} = \hat H_{\rm SCW} + \hat H_{\rm B} + \hat H_{\rm int}' + \hat H_{\rm C'}$, where $\hat H_{\rm int}'$ describes the interaction between SCWB and C'.
By imposing $[ \hat H_{\rm int}',  \hat H_{\rm SCW} + \hat H_{\rm B}] = 0$, the strict energy conservation~(\ref{thermal_operation_SCW}) is guaranteed for $\hat U = \exp (-{\rm i} \hat H_{\rm tot} t)$.
We note that a general bound of the quantum coherence cost to implement such (hyper-)clocks has been obtained in Ref.~\cite{Tajima2018}.

\section{Average work bound}
\label{sec:fluctuating_work}

In this section, we consider the work bound  at the level of ensemble average for the situation where the work fluctuates~\cite{Aberg2014,Skrzypczyk2014,Alhambra2016}.
This is contrastive to the single-shot case discussed in the subsequent sections
and is relevant to stochastic thermodynamics in the quantum regime~\cite{Esposito2009,Sagawa2012,Funo2018}.
This also complements the arguments in  Section~\ref{sec:classical_second_law} and Section~\ref{sec:quantum_entropy}.
In the following, we derive the average work bound in two slightly different ways and discuss the protocols that achieve the bound.

First, we derive the second law by considering the Gibbs-preserving map of SCW.
We adopt the setup and the notations of Section~\ref{sec:clock_work}; 
Specifically,  the Hamiltonian $\hat H_{\rm SCW}$ is given by Eq.~(\ref{Hamiltonian_SCW}). 
We suppose that the initial state of SCW is given by $\hat \rho := \hat \rho_{\rm S} \otimes | 0 \rangle \langle 0 | \otimes \hat \rho_{\rm W}$.
(We omit the subscript SCW from $\hat \rho$.)
At this stage, we do not make any assumption about the form of $\hat H_{\rm W}$ and  $\hat \rho_{\rm W}$.

The final state after the Gibbs-preserving map of SCW, written as $\mathcal E_{\rm SCW}$, is given by  $\hat \rho' := \mathcal E_{\rm SCW} (\hat \rho)$.
For a given initial state $\hat \rho_{\rm S}$ of S, we assume that $\hat \rho'$ is given by the product of the final state of SW, written as $\hat \rho_{\rm SW}'$, and the final pure state of C given by $| 1\rangle \langle 1 |$.
This implies that the clock works perfectly for the given initial state.
Let $\hat \rho_{\rm S}' := {\rm tr}_{\rm W} [\hat \rho_{\rm SW}' ]$ and $\hat \rho_{\rm W}' := {\rm tr}_{\rm S} [\hat \rho_{\rm SW}' ]$.
Then, the average work is defined as
\begin{equation}
W := {\rm tr} [( \hat \rho_{\rm W} - \hat \rho_{\rm W}') \hat H_{\rm W}].
\end{equation}

We here make an assumption that $S_1 (\hat \rho_{\rm W } ) = S_1 (\hat \rho_{\rm W } ') $, which excludes the possibility of  ``cheating'' that W works as an entropic source.
This assumption can be satisfied at least approximately;
For example,  it has been shown in Ref.~\cite{Skrzypczyk2014} that the entropy change can be arbitrarily small by appropriately designing the interaction and the initial state of W.

We now derive the work bound.
By noting that the Gibbs states of SCW and SC are  given by Eq.~(\ref{Gibbs_total}) and Eq.~(\ref{Gibbs_SCW}), we have
\begin{equation}
S_1 ( \hat \rho \| \hat \rho^{\rm G}_{\rm SCW} )  = S_1 ( \hat \rho_{\rm S} \| \hat \rho_{\rm S}^{\rm G} )   + \beta F_{\rm S} + \ln ( Z_{\rm S} + Z_{\rm S}' ) +  S_1 ( \hat \rho_{\rm W} \| \hat \rho_{\rm W}^{\rm G} ),
\end{equation}
and
\begin{equation}
S_1 ( \hat \rho' \| \hat \rho^{\rm G}_{\rm SCW} )  = S_1 ( \hat \rho_{\rm SW}' \| \hat \rho_{\rm SW}^{\rm G}{}' ) + \beta F_{\rm S}' + \ln ( Z_{\rm S} + Z_{\rm S}' ),
\end{equation}
where we used the scaling property~(\ref{unnormalized_sigma_scaling}) and
defined $\hat \rho_{\rm SW}^{\rm G}{}' := \hat \rho_{\rm S}^{\rm G}{}' \otimes \hat \rho_{\rm W}^{\rm G}$.
From the subadditivity (\ref{subadditivity}) of the von Neumann entropy, we have
\begin{eqnarray}
S_1 ( \hat \rho_{\rm SW}' \| \hat \rho_{\rm SW}^{\rm G}{}' ) 
&=& - S_1 ( \hat \rho_{\rm SW}' ) -{\rm tr}[ \hat \rho_{\rm S}' \ln  \hat \rho_{\rm S}^{\rm G}{}'] - {\rm tr}[ \hat \rho_{\rm W}' \ln  \hat \rho_{\rm W}^{\rm G}] \\
&\geq&  - S_1 ( \hat \rho_{\rm S}' ) - S_1 ( \hat \rho_{\rm W}' ) -{\rm tr}[ \hat \rho_{\rm S}' \ln  \hat \rho_{\rm S}^{\rm G}{}']  - {\rm tr}[ \hat \rho_{\rm W}' \ln  \hat \rho_{\rm W}^{\rm G}] \\
&=& S_1 ( \hat \rho_{\rm S}' \| \hat \rho_{\rm S}^{\rm G}{}' ) + S_1 ( \hat \rho_{\rm W}' \| \hat \rho_{\rm W}^{\rm G}).
\end{eqnarray}
Also,  $S_1 ( \hat \rho_{\rm W} \| \hat \rho_{\rm W}^{\rm G} ) - S_1 ( \hat \rho_{\rm W}' \| \hat \rho_{\rm W}^{\rm G} ) = \beta W$ holds from  the assumption  $S_1 (\hat \rho_{\rm W } ) = S_1 (\hat \rho_{\rm W } ') $.
By combining these relations, the monotonicity of the quantum KL divergence, i.e., $S_1 ( \hat \rho \| \hat \rho^{\rm G}_{\rm SCW} )  \geq S_1 ( \hat \rho' \| \hat \rho^{\rm G}_{\rm SCW} )$, leads to 
\begin{equation}
\beta (W - \Delta F_{\rm S} ) \geq S_1 ( \hat \rho_{\rm S}' \| \hat \rho_{\rm S}^{\rm G}{}' ) -  S_1 ( \hat \rho_{\rm S} \| \hat \rho_{\rm S}^{\rm G} ).
\label{fluctuating_work_second1}
\end{equation}
This is the bound of the average work in the present setup.
By using the nonequilibrium $1$-free energy introduced in Eq.~(\ref{alpha_free_energy}), we rewrite the obtained work bound as
\begin{equation}
W \geq F_1 (\hat \rho_{\rm S}'; \hat H_{\rm S}') - F_1 (\hat \rho_{\rm S}; \hat H_{\rm S}),
\label{fluctuating_work_second2}
\end{equation}
which is the quantum extension of inequality~(\ref{classical_work_bound_F1}).

It is reasonable to define  the  average heat absorbed by the system through the first law of thermodynamics:
\begin{equation}
Q := {\rm tr}[\hat H_{\rm S}' \hat \rho_{\rm S}'] -  {\rm tr}[\hat H_{\rm S} \hat \rho_{\rm S}] - W.
\label{def_average_quantum_heat}
\end{equation}
Then, we can rewrite the second law (\ref{fluctuating_work_second2}) in the form of the Clausius inequality:
\begin{equation}
S_1 (\hat \rho_{\rm S}') - S_1 (\hat \rho_{\rm S} ) \geq \beta Q,
\end{equation}
which is the quantum counterpart of inequality~(\ref{classlcal_Gibbs_second1}).

To strictly justify the definition of heat  (\ref{def_average_quantum_heat}),  we need an additional assumption about the energy balance between  SCW and the bath B.
A strong assumption is the strict energy conservation~(\ref{thermal_operation_SCW}) that is satisfied if $\mathcal E_{SCW}$ is a thermal operation, with which the heat (\ref{def_average_quantum_heat}) exactly coincides with the energy change in B.
On the other hand, a weaker assumption is often enough to justify Eq.~(\ref{def_average_quantum_heat}):
the average energy conservation, which means that the expectation value of the  free Hamiltonian of SCWB, $\hat H_{\rm SCW} + \hat H_{\rm B}$, is conserved for a \textit{given} initial state, which is the approach adopted in Ref.~\cite{Skrzypczyk2014}.

\

We next derive the work bound~(\ref{fluctuating_work_second1}) in an alternative way without assuming that $S_1 (\hat \rho_{\rm W } ) = S_1 (\hat \rho_{\rm W } ') $~\cite{Malabarba2015}.
We explicitly take the heat bath B into account, and consider the Hamiltonian $\hat H_{\rm SCW} + \hat H_{\rm B}$ and a unitary operation $\hat V$ acting on SCWB.
At this stage, we do not necessarily require the strict energy conservation (i.e., $[\hat V, H_{\rm SCW} + \hat H_{\rm B}] = 0$), but only require the average energy conservation at least for a given initial state $\hat \rho_{\rm SCB}$.
Now we trace out W and define a CPTP map
\begin{equation}
\mathcal E_{\rm SCB} (\hat \rho_{\rm SCB}) := {\rm tr}_{\rm W} [ \hat V \hat \rho_{\rm SCB} \otimes \hat \rho_{\rm W} \hat V^\dagger ].
\end{equation}
A central assumption of the present approach is that $\mathcal E_{\rm SCB}$ is unital,
 which is indeed satisfied in some constructions of $\hat V$ discussed later.
From this assumption,
\begin{equation}
S_1(\hat \rho_{\rm SCB}) \leq S_1(\hat \rho_{\rm SCB}').
\label{SCB_unital}
\end{equation}

Let $\hat \rho_{\rm SCB} := \hat \rho_{\rm S} \otimes | 0 \rangle \langle 0 | \otimes \hat \rho_{\rm B}^{\rm G}$, $\hat \rho_{\rm SCB} ' := \mathcal E_{\rm SCB} (\hat \rho_{\rm SCB} )$.
For a given $\hat \rho_{\rm S}$, we assume that $\hat \rho'_{\rm SCB}$ is given by the product of the final state of SB, written as $\hat \rho_{\rm SB}'$, and the final pure state of C given by $| 1\rangle \langle 1 |$.
We then have
$S_1 (\hat \rho_{\rm SCB} ) = - S_1(\hat \rho_{\rm S} \| \hat \rho^{\rm G}_{\rm S}) - \beta (F_{\rm S} - {\rm tr}[\hat H_{\rm S}\hat \rho_{\rm S}]) - \beta (F_{\rm B} - {\rm tr}[\hat H_{\rm B}\hat \rho_{\rm B}^{\rm G}])$ and
$S_1 (\hat \rho_{\rm SCB}' ) \leq S_1(\hat \rho_{\rm S}') + S_1(\hat \rho_{\rm B}') 
= - S_1(\hat \rho_{\rm S}' \| \hat \rho_{\rm S}^{\rm G}{}') - \beta (F_{\rm S}' - {\rm tr}[\hat H_{\rm S}' \hat \rho_{\rm S}']) 
-S_1 (\hat \rho_{\rm B}' \| \hat \rho_{\rm B}^{\rm G})- \beta (F_{\rm B} - {\rm tr}[\hat H_{\rm B}\hat \rho_{\rm B}'])$.
Then, inequality~(\ref{SCB_unital}) leads to
\begin{equation}
\beta \left(  {\rm tr}[\hat H_{\rm S}' \hat \rho_{\rm S}'] + {\rm tr}[\hat H_{\rm B}\hat \rho_{\rm B}'] - {\rm tr}[\hat H_{\rm S}\hat \rho_{\rm S}] -  {\rm tr}[\hat H_{\rm B}\hat \rho_{\rm B}^{\rm G}] - \Delta F_{\rm S} \right)
\geq  S_1(\hat \rho_{\rm S}' \| \hat \rho_{\rm S}^{\rm G}{}')  - S_1(\hat \rho_{\rm S} \| \hat \rho^{\rm G}_{\rm S}).
\end{equation}
From the (at least average)  energy conservation, we have
\begin{equation}
{\rm tr}[\hat H_{\rm S}' \hat \rho_{\rm S}'] + {\rm tr}[\hat H_{\rm B}\hat \rho_{\rm B}'] - {\rm tr}[\hat H_{\rm S}\hat \rho_{\rm S}] -  {\rm tr}[\hat H_{\rm B}\hat \rho_{\rm B}^{\rm G}] = {\rm tr}[\hat H_{\rm W} \hat \rho_{\rm W}]-  {\rm tr}[\hat H_{\rm W} \hat \rho_{\rm W}'] =: W.
\end{equation}
Therefore, we obtain the same inequality as (\ref{fluctuating_work_second1}):
\begin{equation}
\beta (W - \Delta F_{\rm S} ) \geq S_1 ( \hat \rho_{\rm S}' \| \hat \rho_{\rm S}^{\rm G}{}' ) -  S_1 ( \hat \rho_{\rm S} \| \hat \rho_{\rm S}^{\rm G} ).
\label{fluctuating_work_second0}
\end{equation}

The optimal work (i.e., the equality case of the second law~(\ref{fluctuating_work_second0})) can be achieved by special protocols with certain limit~\cite{Aberg2014,Skrzypczyk2014,Malabarba2015,Korzekwa2016}.
In Refs.~\cite{Aberg2014,Malabarba2015}, it was shown that there is an optimal protocol under the strict energy conservation, where a large amount of coherence is required in  W.
In Ref.~\cite{Skrzypczyk2014}, an optimal protocol was constructed under the average energy conservation, where the work value can be independent of the initial state of W.
We will discuss these protocols in detail later.

The fact that the equality in (\ref{fluctuating_work_second0}) can be saturated implies that a necessary and sufficient condition for state conversion is given by the KL divergence (rigorously speaking, except for some technical issues), 
if the work fluctuation is allowed.
This is contrastive to the single-shot case discussed in Section.~\ref{sec:work_bound_single}, where the KL divergence does not provide such a condition and instead the R\'enyi $0$- and $\infty$-divergences appear.
The role of the work storage here would be reminiscent of  ``modestly non-exact'' catalyst discussed in Section~\ref{sec:catalytic_majorization} where the KL divergence also appears.  In fact, the work storage can be regarded as a kind of catalyst as discussed later in this section.

\

We now explicitly show that the equality in (\ref{fluctuating_work_second0}) can be achieved based on the protocol considered in Refs.~\cite{Aberg2014,Skrzypczyk2014,Malabarba2015}.
We start with a general strategy and ignore the work storage W for a while.
Also for simplicity, we assume that the clock C is absent and $\hat H_{\rm S} = \hat H_{\rm S}'$.

We will construct a unitary operator $\hat U$ acting on SB, which is not necessarily energy conserving.
We suppose that the average energy change of SB equals the average work, and show that the average work saturates the equality in (\ref{fluctuating_work_second0}) by an optimal protocol.
We will construct an energy-conserving unitary acting on SBW from $\hat U$ later.

Let $\hat \rho_{\rm S} = \sum_i p_i | \varphi_i \rangle \langle \varphi_i |$, $\hat \rho_{\rm S}' = \sum_i p_i' | \varphi_i' \rangle \langle \varphi_i' |$, and $\hat H_{\rm S} = \sum_i E_i | E_i \rangle \langle E_i|$.
Suppose that $\hat \rho_{\rm S}'$ is positive definite.
We define $\hat \tau_{\rm S} := \sum_i p_i^{\downarrow} | E_i \rangle \langle E_i |$ and $\hat \tau_{\rm S}' := \sum_i p_i'{}^{\downarrow} | E_i \rangle \langle E_i |$.
We also define  $p^{(n)}$ ($n=0, 1,2, \cdots, N$) satisfying $p^{(n)}= p^{(n) \downarrow}$ such that $p^{(0)} = p^\downarrow$, $D(p^{(n)}, p^{(n+1)} ) = O(1/N)$  ($n=0,1, \cdots, N-1$),  and $p^{(N)} = p'{}^\downarrow$.
Let $\hat \tau_{\rm S}^{(n)} := \sum_i p_i^{(n)} | E_i \rangle \langle E_i |$.
We divide B into $N$ subsystems, and introduce the Hamiltonian $\hat H_{\rm B}^{(n)}$ of the $n$th subsystem such that its Gibbs state $\hat \rho_{\rm B}^{{\rm G}(n)}$ equals $\hat \tau_{\rm S}^{(n)}$.
The total Hamiltonian of B is given by $\hat H_{\rm B} := \sum_{n=1}^N \hat H_{\rm B}^{(n)}$.
Then, we construct $\hat U$ acting on SB as follows.

\begin{enumerate}
\item Rotate the basis of S such that $\hat \rho_{\rm S}$ is mapped to $\hat \tau_{\rm S} = \hat \tau_{\rm S}^{(0)}$.
\item Swap $\hat \tau_{\rm S}^{(n-1)}$ with the $n$th subsystem of B with the state $\hat \rho_{\rm B}^{{\rm G}(n)} = \hat \tau_{\rm S}^{(n)}$ ($n=1, \cdots, N$).  
\item Rotate the basis of S such that $\hat \tau_{\rm S}'$ is mapped to $\hat \rho_{\rm S}'$.
\end{enumerate}

In each step of the above protocol, the change in the energy of SB equals the change in the $1$-free energy  of S at least approximately.
In (i) and (iii), the change in the von Neumann entropy is zero, and thus the energy change of S equals the free energy change of S.
In (ii), let $\Delta S_{\rm S}^{(n)} := S_1 (\hat \tau_{\rm S}^{(n)}) -S_1 (\hat \tau_{\rm S}^{(n-1)} ) =: - \Delta S_{\rm B}^{(n)}$,  $\Delta E_{\rm S}^{(n)} :={\rm tr}[\hat H_{\rm S} \hat \tau_{\rm S}^{(n)}] -{\rm tr}[\hat H_{\rm S} \hat \tau_{\rm S}^{(n-1)} ]$, and 
$\Delta E_{\rm B}^{(n)} :={\rm tr}[\hat H_{\rm B}^{(n)} \hat \tau_{\rm S}^{(n-1)}] -{\rm tr}[\hat H_{\rm B}^{(n)} \hat \tau_{\rm S}^{(n)}]$
($n=1,2, \cdots, N$).
By noting that $\beta \Delta E_{\rm B}^{(n)} - \Delta S_{\rm B}^{(n)} = S_1 (\tau_{\rm S}^{(n-1)} \| \hat \tau_{\rm S}^{(n)} ) = O(1/N^2)$, we have $\Delta E_{\rm S}^{(n)} + \Delta E_{\rm B}^{(n)} = \Delta E_{\rm S}^{(n)} - \beta^{-1}\Delta S_{\rm S}^{(n)} + O(1/N^2)$,
which is the $1$-free energy change of S up to the error term $O(1/N^2)$.
By summing these terms over $n=1,\cdots, N$, the total error is given by $O(1/N)$.
In the entire process (i)-(iii), therefore, the energy change of SB equals the $1$-free energy change of S in the limit of $N \to \infty$.

We note that the above construction of the optimal process is slightly different from the 
quantum version of Fig.~\ref{fig:equality_protocol_PHS}.
In the present construction, the Hamiltonian of S is not quenched but the state of S rotates in (i) and (iii).
Also, the gradual change of the state of S in (ii) is induced by a sequence of swapping processes, instead of ordinary relaxation processes.

If $\hat H_{\rm S} \neq \hat H_{\rm S}'$ in general, we can just insert the unitary change of the state  of C from $| 0 \rangle $ to $| 1 \rangle $ into any stage of (i)-(iii) above.  During this step, the energy change of S equals the $1$-free energy change of S, because the state of S does not change at all. 

\

So far, we have constructed unitary $\hat U$ acting on SB, which is not energy-conserving.  
Our next purpose is to construct an energy-conserving unitary acting on SBW that reproduces the same work value as the average energy change of SB by $\hat U$.
This can be done by both the strict and average energy conservation protocols, as discussed below.
In the following,  we relabel SB just as ``S''.

First, we consider the case of the strict energy conservation.
In line with Ref.~\cite{Aberg2014}, we show that  if W is appropriately designed, 
any unitary $\hat U$ acting on S, which is not necessarily energy-conserving, can be implemented as a strictly energy conserving unitary $\hat V_{\hat U}$ acting on SW.
Such a construction is inevitably accompanied by work fluctuations, and thus is not compatible with the single-shot scenario.

For the sake of simplicity, we first suppose that S is a two-level system spanned by $\{ | 0 \rangle , | 1 \rangle \}$ with Hamiltonian $\hat H_{\rm S} =| 1 \rangle \langle 1 | $.  
We prepare W as an infinite-dimensional ladder system spanned by $\{ | n \rangle \}_{n \in \mathbb Z}$ with Hamiltonian $\hat H_{\rm W} = \sum_{n \in \mathbb Z} n | n \rangle \langle n |$.
Here, we do not go into details of the technical aspect of unbounded operators of infinite-dimensional spaces.
Also, a physical problem is that this Hamiltonian is not bounded from below, but this  can be overcome by a slight modification of the setup~\cite{Aberg2014}.
We now define the unitary operator
\begin{equation}
\hat V_{\hat U} := \sum_{i,j=0,1} | i \rangle \langle i| \hat U | j \rangle \langle j | \otimes \hat \Delta^{j-i},
\label{Aberg_unitary_V}
\end{equation}
where $\hat \Delta := \sum_{n \in \mathbb Z} | n  + 1 \rangle \langle n |$ is the shift operator of the ladder.
It is straightforward to check the strict energy conservation, i.e., $[ \hat V_{\hat U}, \hat H_{\rm S} + \hat H_{\rm W} ] = 0$.
Also, $[\hat V_{\hat U} , \hat \Delta ] = 0$ is obviously satisfied, implying that $\hat V_{\hat U}$ is translation invariant with respect to the ladder of W.

If S has multiple levels in general, W should consist of multiple ladders corresponding to all the energy gaps of S.  The ideal limit is that W is a continuous system with $(\hat x, \hat p )$ satisfying $[ \hat x, \hat p ] = {\rm i}$ and the Hamiltonian of W is given by $\hat H_{\rm W} = \hat x$.
In the case of a general Hamiltonian $\hat H_{\rm S} = \sum_i E_i | E_i \rangle \langle E_j |$, Eq.~(\ref{Aberg_unitary_V}) is replaced by~\cite{Malabarba2015}
\begin{equation}
\hat V_{\hat U} := \sum_{i,j} | E_i \rangle \langle E_i| \hat U | E_j \rangle \langle E_j | \otimes e^{-{\rm i} (E_j-E_i) \hat p}.
\label{Aberg_unitary_V_general}
\end{equation}

Going back to the two-level case, we consider the CPTP map 
\begin{eqnarray}
\mathcal E_{\hat U} (\hat \rho_{\rm S}) &:=& {\rm tr}[\hat V_{\hat U} \hat \rho_{\rm S} \otimes \hat \rho_{\rm W} \hat V_{\hat U}^\dagger]
\label{E_U_def} \\
&=& \sum_{i,j,k,l = 0,1} |  i \rangle \langle i | \hat U | j \rangle \langle j | \hat \rho_{\rm S} | l \rangle \langle l | \hat U^\dagger | k \rangle \langle k | {\rm tr}[\hat \Delta^{k-l+j-i} \hat \rho_{\rm W}].
\end{eqnarray}
If ${\rm tr}[\hat \Delta^{k-l+j-i} \hat \rho_{\rm W}] = 1$ for all $(i,j,k,l)$, we recover the unitary operation on S: $\mathcal E_{\hat U} (\hat \rho_{\rm S})  = \hat U \hat \rho_{\rm S} \hat U^\dagger$.
This can be approximately satisfied if we choose the initial state of W with a large amount of coherence in the energy basis.
In fact, if  the initial state of W is given by a pure state $\sum_{n=0}^{N-1} | n \rangle / \sqrt{N}$, we have ${\rm tr}[\hat \Delta^m \hat \rho_{\rm W}] = (N-| m | )/N$, which goes to $1$ in $N \to \infty$.
In this limit, however, the variance of the energy distribution of W diverges and thus the work values can hardly be distinguished by measuring the energy of W~\cite{Hayashi2017}.

In Ref.~\cite{Aberg2014}, it has been shown that the coherence of W can be used repeatedly in the following sense:
Let $\hat \rho_{\rm W}'$ be the final state of W of the above operation. Then, we have  ${\rm tr}[\hat \Delta^m \hat \rho_{\rm W}] =  {\rm tr}[\hat \Delta^m \hat \rho_{\rm W}']$.
Therefore, W is regarded as a catalyst of coherence.
We also note that, by applying the foregoing argument to SB instead of S, a thermal operation of SWB can create coherence of S, if W has coherence.

It is straightforward to check that $\mathcal E_{\hat U}$ constructed from $\hat V_{\hat U}$ is unital for any $\hat U$ and any initial state of W.  Thus the work bound (\ref{fluctuating_work_second0}) is applicable.
Then, by implementing the above-discussed optimal $\hat U$ that saturates the work bound (\ref{fluctuating_work_second0}), we obtain the optimal protocol including W under the strict energy conservation and with a large amount of initial coherence of W.

As a side remark, we consider the case that the initial state of W does not have any coherence in the foregoing construction.
Suppose that  $\hat \rho_{\rm W}$ in Eq.~(\ref{E_U_def})  is diagonal in the energy basis, which implies that ${\rm tr}[\hat \Delta^m \hat \rho_{\rm W}] = 0$ for $m \neq 0$.
In this case, it is easy to show that 
\begin{equation}
{\rm tr} [ \mathcal E_{\hat U} ( \mathcal D (\hat \rho_{\rm S} ) ) \hat H_{\rm S} ] ={\rm tr}[ \mathcal E_{\hat U} ( \hat \rho_{\rm S} ) \hat H_{\rm S} ],
\label{E_U_same}
\end{equation}
where $ \mathcal D (\hat \rho_{\rm S} ) := \sum_i | E_i \rangle \langle E_i | \hat \rho_{\rm S} | E_i \rangle \langle E_i |$ with $\hat H_{\rm S} = \sum_i E_i | E_i \rangle \langle E_i |$. 
Eq.~(\ref{E_U_same}) implies that coherence of $\hat \rho_{\rm S}$ does not play any role in the energy balance during the process.
When we explicitly consider B, notice that the initial state of B, $\hat \rho_{\rm B}^{\rm G}$, is already diagonal in the energy basis.
Then, in the case that $\hat H_{\rm S} = \hat H_{\rm S}'$ and $\hat \rho_{\rm S}' = \hat \rho_{\rm S}^{\rm G}$, we obtain a tighter work bound with the incoherent work storage~\cite{Skrzypczyk2013}:
\begin{equation}
- \beta W \leq S_1 (\mathcal D (\hat \rho_{\rm S} ) \| \hat \rho_{\rm S}^{\rm G}),
\end{equation}
where the right-hand side is equal to or smaller than $S_1 (\hat \rho_{\rm S}  \| \hat \rho_{\rm S}^{\rm G})$  in  (\ref{fluctuating_work_second0}).
This implies that, in the absence of coherence of W and under the strict energy conservation, one cannot harvest coherence of the initial state of S for the work extraction.

\

Next, we consider the case of the average energy conservation in line with Ref.~\cite{Skrzypczyk2014}.
Our goal is to implement a unitary $\hat U$  acting on S by a unitary $\hat V'_{\hat U}$ acting on SW that only conserves the average energy for a given initial state $\hat \rho_{\rm S}$.

Let $\hat \rho_{\rm S} =\sum_i p_i | \varphi_i \rangle \langle \varphi_i |$ be the initial state of S and $\hat H_{\rm S}$ be its Hamiltonian.
Suppose that W is a continuous system with $( \hat x, \hat p)$ satisfying $[ \hat x, \hat p ] = {\rm i}$ and its Hamiltonian is given by $\hat H_{\rm W} = \hat x$.
Instead of Eq.~(\ref{Aberg_unitary_V}) or (\ref{Aberg_unitary_V_general}),
we consider the following unitary operator that depends on $\hat \rho_{\rm S}$:
\begin{equation}
\hat V'_{\hat U} := \sum_{i}  \hat U | \varphi_i \rangle \langle \varphi_i | \otimes e^{-{\rm i} \epsilon_i \hat p},
\end{equation}
where 
\begin{equation}
\epsilon_i := \langle \varphi_i | \hat H_{\rm S} | \varphi_i \rangle - \langle \varphi_i | \hat U^\dagger \hat H_{\rm S} \hat U | \varphi_i \rangle.
\end{equation}

The corresponding CPTP map on S is given by $\mathcal E'_{\hat U} ( \bullet ) := {\rm tr}_{\rm W}[\hat V'_{\hat U} \bullet \otimes \hat \rho_{\rm W}\hat V'_{\hat U}{}^\dagger]$,
where $\hat \rho_{\rm W}$ is an  initial state of W.
Although $\mathcal E'_{\hat U}$ is not unitary in general, it reproduces the action of unitary $\hat U$ only for the given initial state $\hat \rho_{\rm S}$; it is straightforward to show that
\begin{equation}
{\rm tr}_{\rm W}[\hat V'_{\hat U} \hat \rho_{\rm S}  \otimes \hat \rho_{\rm W}\hat V'_{\hat U}{}^\dagger] = \hat U \hat \rho_{\rm S} \hat U^\dagger.
\end{equation}
Thus, the energy change of S by $\hat V'_{\hat U}$ equals that by $\hat U$, which is given by ${\rm tr}_{\rm S} [\hat H_{\rm S} \hat U \hat \rho_{\rm S}\hat U^\dagger] - {\rm tr}_{\rm S} [\hat H_{\rm S} \hat \rho_{\rm S}]  =  - \sum_i p_i \epsilon_i$.
Let  $\hat \rho_{\rm W}' := {\rm tr}_{\rm S}[\hat V'_{\hat U}  \hat \rho_{\rm S} \otimes \hat \rho_{\rm W}\hat V'_{\hat U}{}^\dagger]$ be the final state of W.
The energy change of W is given by
${\rm tr}_{\rm W} [\hat H_{\rm W} \hat \rho_{\rm W}'] - {\rm tr}_{\rm W} [\hat H_{\rm W} \hat \rho_{\rm W}] = \sum_i p_i \epsilon_i$.
Thus, $\hat V'_{\hat U}$ satisfies the average energy conservation.

We emphasize that $\hat V'_{\hat U}$ simulates the action of $\hat U$ and conserves the average energy only for the given initial state $\hat \rho_{\rm S}$; In fact, $\hat V'_{\hat U}$ itself depends on $\hat \rho_{\rm S}$.
On the other hand, the initial state $\hat \rho_{\rm W}$ of W is arbitrary; In particular,  the initial coherence of W is not required for this protocol.
We note that the translation invariance of W is satisfied:  $[\hat V_{\hat U}' , e^{-{\rm i} \varepsilon \hat p}] = 0$ for all $\varepsilon \in \mathbb R$.

It is straightforward to check that $\mathcal E'_{\hat U}$ constructed from $\hat V_{\hat U}'$ is unital for any $\hat U$ and any initial state of W.  Thus the work bound (\ref{fluctuating_work_second0}) is again applicable.
Then, by constructing $\hat V_{\hat U}'$ from the optimal $\hat U$ that saturates the work bound (\ref{fluctuating_work_second0}),
we obtain the optimal protocol including W under the average energy conservation.

\section{Single-shot work bound: Exact case}
\label{sec:work_bound_single}

In the remaining part of this chapter, we consider the work bounds for the single-shot scenario, where work fluctuation is not allowed.
This is the quantum extension of the resource-theoretic  argument in Section~\ref{sec:classical_work_bound}.

The following argument is again based on the setup and the notations of Section~\ref{sec:clock_work},
especially  the Hamiltonian of SCW (\ref{Hamiltonian_SCW}).
In addition, in the case of the single-shot scenario, it is natural to take the work storage W as a two-level system, whose Hamiltonian  is given by
\begin{equation}
\hat H_{\rm W} = E_{\rm i} | {\rm i} \rangle \langle {\rm i} |+  E_{\rm f}  |{\rm f} \rangle \langle {\rm f} |, \ \ \ E_{\rm i} - E_{\rm f} =w,
\label{Hamiltonian_W}
\end{equation}
where $w$ represents the single-shot work cost.
In the following, we assume this form of $\hat H_{\rm W}$.
Then, we define  thermodynamic processes of the single-shot setup with work cost $w$ as follows.

\begin{definition}[Single-shot work-assisted state transformation]
Let $w \in \mathbb R$.
A state $\hat \rho_{\rm S}$ is $w$-assisted transformable to another state $\hat \rho_{\rm S}'$ with the initial and final Hamiltonians $\hat H_{\rm S}$ and $\hat H_{\rm S}'$,
 if there exists a Gibbs-preserving map of SCW, written as $\mathcal E_{\rm SCW}$, with  respect to the Hamiltonian $\hat H_{\rm SCW}$ of Eq.~(\ref{Hamiltonian_SCW}) with $\hat H_{\rm W}$ satisfying Eq.~(\ref{Hamiltonian_W}), such that 
\begin{equation}
\mathcal{E}_{\rm SCW} \left(\hat \rho_{\rm S} \otimes | 0 \rangle \langle 0 | \otimes  | E_{\rm i} \rangle \langle E_{\rm i} |  \right)  =  \hat \rho_{\rm S}' \otimes | 1 \rangle \langle 1 | \otimes | E_{\rm f} \rangle \langle E_{\rm f}  |.
\label{final_state_product}
\end{equation}
\label{def:single_work_trans}
\end{definition}

In the above definition, we required the existence of $\mathcal E_{\rm SCW}$ satisfying Eq.~(\ref{final_state_product}) only for a given initial state $\hat \rho_{\rm S}$, which represents that the function of the clock and the work transfer are supposed to be both perfect at least for $\hat \rho_{\rm S}$.

The following theorem gives the work bounds of the single-shot scenario, which corresponds to Theorem~\ref{thm:asymp0_q}.
This has been shown in Refs.~\cite{Horodecki2013,Aberg2013,Faist2018} (see also Proposition 16 of Ref.~\cite{Sagawa2019}).

\begin{theorem}[Single-shot work bounds]
Let $w \in \mathbb R$.
\begin{description}
\item[(a) Necessary conditions for state conversion:] If $\hat \rho_{\rm S}$ is  $w$-assisted  transformable to $\hat \rho_{\rm S}'$ with the Hamiltonians $\hat H_{\rm S}$ and $\hat H_{\rm S}'$, then
\begin{equation}
\beta (w -  \Delta F_{\rm S} ) \geq S_\alpha (\hat \rho_{\rm S}' \| \hat \rho_{\rm S}^{\rm G}{}') - S_\alpha (\hat \rho_{\rm S} \| \hat \rho_{\rm S}^{\rm G}),
\label{theorem_single_work1}
\end{equation}
for $\alpha = 0, 1, \infty$.
\item[(b) Sufficient condition for state conversion:] $\hat \rho_{\rm S}$ is  $w$-assisted  transformable to $\hat \rho_{\rm S}'$ with the Hamiltonians $\hat H_{\rm S}$ and $\hat H_{\rm S}'$, if (but not only if)
\begin{equation}
\beta ( w  - \Delta F_{\rm S}  ) \geq S_\infty (\hat \rho_{\rm S}' \| \hat \rho_{\rm S}^{\rm G}{}') - S_0 (\hat \rho_{\rm S} \| \hat \rho_{\rm S}^{\rm G}).
\label{theorem_single_work2}
\end{equation}
\end{description}
\label{general_single_shot_work}
\end{theorem}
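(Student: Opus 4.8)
\textbf{Proof strategy for Theorem~\ref{general_single_shot_work}.}
The plan is to reduce the single-shot work bounds to the quantum d-majorization results of Chapter~\ref{chap:quantum_majorization} applied to the enlarged system SCW, exactly as the classical case in Section~\ref{sec:classical_work_bound} reduces everything to thermo-majorization. The key observation is that by Definition~\ref{def:single_work_trans}, $w$-assisted transformability is precisely the statement that the pair $(\hat \rho_{\rm S} \otimes | 0 \rangle \langle 0 | \otimes | E_{\rm i} \rangle \langle E_{\rm i} |, \hat \rho_{\rm SCW}^{\rm G})$ d-majorizes $(\hat \rho_{\rm S}' \otimes | 1 \rangle \langle 1 | \otimes | E_{\rm f} \rangle \langle E_{\rm f} |, \hat \rho_{\rm SCW}^{\rm G})$ in the sense of Definition~\ref{def:quantum_d_majorization}, since a Gibbs-preserving map of SCW is by definition a CPTP map fixing $\hat \rho_{\rm SCW}^{\rm G}$. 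So I would invoke Theorem~\ref{thm:asymp0_q}: part (a) follows from the monotonicity of $S_0$ and $S_\infty$ under d-majorization (and the $\alpha=1$ case from Theorem~\ref{thm:monotone}), while part (b) follows from the sufficient condition $S_\infty(\hat \rho' \| \hat \sigma') \leq S_0(\hat \rho \| \hat \sigma)$ there.

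First I would compute the relevant $\alpha$-divergences of the composite initial and final states against $\hat \rho_{\rm SCW}^{\rm G}$. Using the tensor-additivity of the R\'enyi $0$-, $1$-, and $\infty$-divergences (Eq.~(\ref{quantum_Renyi_additive})), the factorization $\hat \rho_{\rm SCW}^{\rm G} = \hat \rho_{\rm SC}^{\rm G} \otimes \hat \rho_{\rm W}^{\rm G}$ from Eq.~(\ref{Gibbs_total}), and the block structure of $\hat \rho_{\rm SC}^{\rm G}$ in Eq.~(\ref{Gibbs_SCW}), one gets for each $\alpha \in \{0,1,\infty\}$
\begin{equation}
S_\alpha\!\left(\hat \rho_{\rm S} \otimes | 0 \rangle \langle 0 | \otimes | E_{\rm i} \rangle \langle E_{\rm i} | \ \big\| \ \hat \rho_{\rm SCW}^{\rm G}\right) = S_\alpha(\hat \rho_{\rm S} \| \hat \rho_{\rm S}^{\rm G}) + \beta F_{\rm S} + \ln(Z_{\rm S} + Z_{\rm S}') + \beta E_{\rm i} + \ln Z_{\rm W},
\end{equation}
and likewise
\begin{equation}
S_\alpha\!\left(\hat \rho_{\rm S}' \otimes | 1 \rangle \langle 1 | \otimes | E_{\rm f} \rangle \langle E_{\rm f} | \ \big\| \ \hat \rho_{\rm SCW}^{\rm G}\right) = S_\alpha(\hat \rho_{\rm S}' \| \hat \rho_{\rm S}^{\rm G}{}') + \beta F_{\rm S}' + \ln(Z_{\rm S} + Z_{\rm S}') + \beta E_{\rm f} + \ln Z_{\rm W}.
\end{equation}
Here I would use that $S_\alpha(| k \rangle \langle k | \| \hat \rho_{\rm W}^{\rm G}) = \beta E_k + \ln Z_{\rm W}$ for a Hamiltonian eigenstate, that $S_\alpha(| c \rangle \langle c | \| \hat \rho_{\rm SC}^{\rm G})$ picks out the corresponding block weight (contributing $\beta F_{\rm S} + \ln(Z_{\rm S}+Z_{\rm S}')$ on the $|0\rangle$ branch and $\beta F_{\rm S}'+\ln(Z_{\rm S}+Z_{\rm S}')$ on the $|1\rangle$ branch), and the scaling property~(\ref{unnormalized_sigma_scaling}) to rewrite $S_\alpha(\hat \rho_{\rm S} \| \hat \rho_{\rm S}^{\rm G}) + \beta F_{\rm S}$ in terms of $e^{-\beta \hat H_{\rm S}}$ where convenient; the point is that the $\alpha$-dependence collapses on the clock and work factors since they are pure eigenstates, so only the system term retains its $\alpha$-label.

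Then I would subtract: the common constants $\ln(Z_{\rm S}+Z_{\rm S}')$ and $\ln Z_{\rm W}$ cancel, and with $\Delta F_{\rm S} = F_{\rm S}' - F_{\rm S}$ and $E_{\rm i} - E_{\rm f} = w$ from Eq.~(\ref{Hamiltonian_W}) the inequality from Theorem~\ref{thm:asymp0_q}(a),
\begin{equation}
S_\alpha\!\left(\hat \rho_{\rm S} \otimes | 0 \rangle \langle 0 | \otimes | E_{\rm i} \rangle \langle E_{\rm i} | \ \big\| \ \hat \rho_{\rm SCW}^{\rm G}\right) \geq S_\alpha\!\left(\hat \rho_{\rm S}' \otimes | 1 \rangle \langle 1 | \otimes | E_{\rm f} \rangle \langle E_{\rm f} | \ \big\| \ \hat \rho_{\rm SCW}^{\rm G}\right),
\end{equation}
rearranges to exactly $\beta(w - \Delta F_{\rm S}) \geq S_\alpha(\hat \rho_{\rm S}' \| \hat \rho_{\rm S}^{\rm G}{}') - S_\alpha(\hat \rho_{\rm S} \| \hat \rho_{\rm S}^{\rm G})$, which is~(\ref{theorem_single_work1}). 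For part (b), feeding the two displayed expressions into the sufficient condition of Theorem~\ref{thm:asymp0_q}(b) ($S_\infty$ of the target $\leq S_0$ of the source) yields~(\ref{theorem_single_work2}) after the same cancellation. The main obstacle I anticipate is bookkeeping rather than conceptual: one must be careful that Definition~\ref{def:single_work_trans} asks only for a Gibbs-preserving map achieving the transformation \emph{for the given initial state} (so the ``d-majorization'' is really between these two specific composite states, which is exactly what Definition~\ref{def:quantum_d_majorization} and Theorem~\ref{thm:asymp0_q} handle), and that the $\alpha$-divergences against the \emph{unnormalized} operator $e^{-\beta \hat H_{\rm SCW}}$ versus the normalized Gibbs state are reconciled via~(\ref{unnormalized_sigma_scaling})—a sign or partition-function slip here is the easiest way to get the $\Delta F_{\rm S}$ term wrong.
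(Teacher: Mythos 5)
Your proposal is correct and follows essentially the same route as the paper: compute $S_\alpha$ of the composite initial and final SCW states against $\hat \rho_{\rm SCW}^{\rm G}$ via the block structure of Eq.~(\ref{Gibbs_SCW}) and the scaling property~(\ref{unnormalized_sigma_scaling}), then apply the monotonicity of the $\alpha$-divergences for part (a) and Theorem~\ref{thm:asymp0_q}(b) for part (b). The bookkeeping you outline matches Eqs.~(\ref{proof_initial}) and (\ref{proof_final}) of the paper's proof.
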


\begin{proof}
Let $\hat \rho := \hat \rho_{\rm S} \otimes | 0 \rangle \langle 0 | \otimes  | E_{\rm i} \rangle \langle E_{\rm i} |$  and $\hat \rho' :=  \mathcal{E}_{\rm SCW}  (\hat \rho) = \hat \rho_{\rm S}' \otimes | 1 \rangle \langle 1 | \otimes | E_{\rm f} \rangle \langle E_{\rm f}  |$ by noting Eq.~(\ref{final_state_product}).
The Gibbs states of SCW and SC are respectively given by Eq.~(\ref{Gibbs_total}) and Eq.~(\ref{Gibbs_SCW}), where in the present setup,
\begin{equation}
\hat \rho_{\rm W}^{\rm G} = \frac{e^{-\beta E_{\rm i}}}{Z_{\rm W}} | E_{\rm i} \rangle \langle E_{\rm i} | + \frac{e^{-\beta E_{\rm f}}}{Z_{\rm W}} | E_{\rm f} \rangle \langle E_{\rm f} |
\end{equation}
with $Z_{\rm W} = e^{-\beta E_{\rm i}} + e^{-\beta E_{\rm f}}$.
Thus, for $\alpha = 0,1, \infty$,
\begin{equation}
S_\alpha ( \hat \rho \| \hat \rho_{\rm SCW}^{\rm G} )  = S_\alpha ( \hat \rho_{\rm S} \| \hat \rho_{\rm S}^{\rm G} )  + \beta F_{\rm S} + \ln ( Z_{\rm S} + Z_{\rm S}' ) + \beta E_{\rm i} + \ln Z_{\rm W},
\label{proof_initial}
\end{equation}
and
\begin{equation}
S_\alpha ( \hat \rho' \| \hat \rho_{\rm SCW}^{\rm G} )  = S_\alpha ( \hat \rho_{\rm S}' \| \hat \rho_{\rm S}^{\rm G}{}' ) + \beta F_{\rm S}' + \ln ( Z_{\rm S} + Z_{\rm S}' ) + \beta E_{\rm f} + \ln Z_{\rm W},
\label{proof_final}
\end{equation}
where we used the scaling property~(\ref{unnormalized_sigma_scaling}).

(a) 
From the monotonicity of the R\'enyi $\alpha$-divergence of SCW, we have $S_\alpha ( \hat \rho \| \hat \rho_{\rm SCW}^{\rm G} ) \geq S_\alpha ( \hat \rho' \| \hat \rho_{\rm SCW}^{\rm G} )$.
By noting Eq.~(\ref{proof_initial}) and Eq.~(\ref{proof_final}), the monotonicity leads to inequality~(\ref{theorem_single_work1}).

(b)
From Theorem~\ref{thm:asymp0_q} (b), $S_\infty ( \hat \rho \| \hat \rho_{\rm SCW}^{\rm G} ) \geq S_0 ( \hat \rho' \| \hat \rho_{\rm SCW}^{\rm G} )$ implies that the transformation from $\hat \rho$ to $\hat \rho'$ is possible. 
This is exactly the claim of Theorem~\ref{general_single_shot_work} (b), again because of Eq.~(\ref{proof_initial}) and Eq.~(\ref{proof_final}).
$\Box$
\end{proof}

By using the $\alpha$-free energy introduced in Eq.~(\ref{alpha_free_energy}), we rewrite inequalities~(\ref{theorem_single_work1}) and (\ref{theorem_single_work2}) in Theorem \ref{general_single_shot_work} as, respectively,
\begin{equation}
w \geq F_\alpha (\hat \rho_{\rm S}'; \hat H_{\rm S}') - F_\alpha (\hat \rho_{\rm S}; \hat H_{\rm S}),
\label{work_bound_alpha_F}
\end{equation}
and 
\begin{equation}
w \geq F_\infty (\hat \rho_{\rm S}'; \hat H_{\rm S}') - F_0 (\hat \rho_{\rm S}; \hat H_{\rm S}).
\label{work_bound_alpha_F_sufficient}
\end{equation}

We next consider the special cases of the above theorem, 
which  reproduce the work bounds of the classical case obtained by the Lorenz curve  in Section~\ref{sec:classical_work_bound}.

\begin{corollary}[Work extraction / state formation]
Let $w \in \mathbb R$.\\
(a) $\hat \rho_{\rm S}$ is  $w$-assisted  transformable to $\hat \rho_{\rm S}^{\rm G}$ with a fixed Hamiltonian  $\hat H_{\rm S}$, if and only if 
\begin{equation}
- \beta w \leq S_0 (\hat \rho_{\rm S} \| \hat \rho_{\rm S}^{\rm G}).
\label{main_single_shot_work_0}
\end{equation}
(b)
$\hat \rho^{\rm G}$ is   $w$-assisted transformable to $\hat \rho_{\rm S}'$  with a fixed Hamiltonian  $\hat H_{\rm S}$, if and only if 
\begin{equation}
\beta w \geq S_\infty (\hat \rho_{\rm S}' \| \hat \rho_{\rm S}^{\rm G}).
\label{main_single_shot_work_infty}
\end{equation}
(c) 
$\hat \rho^{\rm G}$ is   $w$-assisted transformable to $\hat \rho_{\rm S}^{\rm G}{}'$ with initial and final Hamiltonians $\hat H_{\rm S}$ and $\hat H_{\rm S}'$, if and only if 
\begin{equation}
w \geq  \Delta F_{\rm S}.
\label{main_single_shot_work_infty}
\end{equation}
\label{main_single_shot_work}
\end{corollary}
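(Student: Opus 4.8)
The plan is to derive Corollary~\ref{main_single_shot_work} as a straightforward specialization of Theorem~\ref{general_single_shot_work}, using the fact that the R\'enyi divergences $S_\alpha(\hat\rho_{\rm S}\|\hat\rho_{\rm S}^{\rm G})$ collapse to a single value whenever one of the two arguments of the divergence is the Gibbs state (equivalently, when the Lorenz curve consists of a straight segment followed by a horizontal segment, as in Fig.~\ref{fig:majorization6}). Concretely, if $\hat\rho_{\rm S}=\hat\rho_{\rm S}^{\rm G}$ then $S_\alpha(\hat\rho_{\rm S}^{\rm G}\|\hat\rho_{\rm S}^{\rm G})=0$ for all $\alpha=0,1,\infty$ by the non-negativity/equality statements (Proposition~\ref{prop:classical_Renyi_nonnegative}, or its quantum analogues). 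This immediately kills one of the two divergence terms in each of the bounds~(\ref{theorem_single_work1}) and (\ref{theorem_single_work2}).

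For part~(a), I would put $\hat\rho_{\rm S}'=\hat\rho_{\rm S}^{\rm G}$ and $\hat H_{\rm S}'=\hat H_{\rm S}$ in Theorem~\ref{general_single_shot_work}, so $\Delta F_{\rm S}=0$ and $S_\alpha(\hat\rho_{\rm S}'\|\hat\rho_{\rm S}^{\rm G}{}')=0$. The necessary direction: inequality~(\ref{theorem_single_work1}) with $\alpha=0$ gives $\beta w\geq -S_0(\hat\rho_{\rm S}\|\hat\rho_{\rm S}^{\rm G})$, i.e. $-\beta w\leq S_0(\hat\rho_{\rm S}\|\hat\rho_{\rm S}^{\rm G})$. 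The sufficient direction: inequality~(\ref{theorem_single_work2}) becomes $\beta w\geq S_\infty(\hat\rho_{\rm S}^{\rm G}\|\hat\rho_{\rm S}^{\rm G})-S_0(\hat\rho_{\rm S}\|\hat\rho_{\rm S}^{\rm G})=-S_0(\hat\rho_{\rm S}\|\hat\rho_{\rm S}^{\rm G})$, which is exactly~(\ref{main_single_shot_work_0}). So necessary and sufficient conditions coincide, giving the ``if and only if.'' Part~(b) is the mirror image: set $\hat\rho_{\rm S}=\hat\rho_{\rm S}^{\rm G}$, $\hat H_{\rm S}'=\hat H_{\rm S}$; now $S_\alpha(\hat\rho_{\rm S}\|\hat\rho_{\rm S}^{\rm G})=0$, the necessary condition~(\ref{theorem_single_work1}) with $\alpha=\infty$ reads $\beta w\geq S_\infty(\hat\rho_{\rm S}'\|\hat\rho_{\rm S}^{\rm G})$, and the sufficient condition~(\ref{theorem_single_work2}) reads $\beta w\geq S_\infty(\hat\rho_{\rm S}'\|\hat\rho_{\rm S}^{\rm G})-S_0(\hat\rho_{\rm S}^{\rm G}\|\hat\rho_{\rm S}^{\rm G})=S_\infty(\hat\rho_{\rm S}'\|\hat\rho_{\rm S}^{\rm G})$, again matching. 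Part~(c): set $\hat\rho_{\rm S}=\hat\rho_{\rm S}^{\rm G}$ and $\hat\rho_{\rm S}'=\hat\rho_{\rm S}^{\rm G}{}'$ (the Gibbs state of the \emph{final} Hamiltonian $\hat H_{\rm S}'$); both divergence terms vanish, so~(\ref{theorem_single_work1}) and~(\ref{theorem_single_work2}) both collapse to $\beta(w-\Delta F_{\rm S})\geq 0$, i.e. $w\geq\Delta F_{\rm S}$.

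The one genuinely substantive point — and the step I would be most careful about — is the collapse of all R\'enyi orders to a single value in the relevant cases, which is what makes the necessary and sufficient bounds agree. In part~(c) this is the statement that $S_\alpha(\hat\rho_{\rm S}^{\rm G}{}'\|\hat\rho_{\rm S}^{\rm G}{}')=0$ for all $\alpha$, trivially true; in parts (a) and (b) it is the observation that the term that survives, $S_0(\hat\rho_{\rm S}\|\hat\rho_{\rm S}^{\rm G})$ in (a) or $S_\infty(\hat\rho_{\rm S}'\|\hat\rho_{\rm S}^{\rm G})$ in (b), is the \emph{same} quantity appearing in both the necessary condition (via the appropriate $\alpha$) and the sufficient condition. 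I should double-check the direction of the inequalities $S_0\leq S_\infty$ (Theorem following Eq.~(\ref{def_Renyi_infty})) so that the $\alpha$ chosen in~(\ref{theorem_single_work1}) is the one that yields the tightest — hence matching — bound: for work extraction one wants $\alpha=0$ (smallest divergence, weakest necessary condition, but it already matches the sufficient one because the other argument is Gibbs), and for state formation one wants $\alpha=\infty$. Everything else is bookkeeping with the scaling property~(\ref{unnormalized_sigma_scaling}) and the explicit form~(\ref{Gibbs_total})–(\ref{Gibbs_SCW}) of the composite Gibbs state, already carried out in the proof of Theorem~\ref{general_single_shot_work}.

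I would also note, as a remark rather than part of the proof, that parts (a), (b), (c) reproduce precisely inequalities~(\ref{single_shot_work_classical_0}), (\ref{single_shot_work_classical_infty}), and~(\ref{single_shot_work_classical_equilibrium}) of the classical treatment in Section~\ref{sec:classical_work_bound}, now obtained without drawing Lorenz curves; and that the achievability (the ``only if'' in the trivial direction, i.e. that the stated inequality is \emph{sufficient}) can alternatively be seen from the explicit protocols there, e.g. the protocol of Fig.~\ref{fig:equality_protocol_Renyi0} for part~(a).
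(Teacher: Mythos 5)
Your proposal is correct and follows essentially the same route as the paper: the paper's proof of Corollary~\ref{main_single_shot_work} consists precisely of applying Theorem~\ref{general_single_shot_work} with the substitutions $\hat\rho_{\rm S}'=\hat\rho_{\rm S}^{\rm G}$ (for (a)), $\hat\rho_{\rm S}=\hat\rho_{\rm S}^{\rm G}$ (for (b)), and $\hat\rho_{\rm S}=\hat\rho_{\rm S}^{\rm G}$, $\hat\rho_{\rm S}'=\hat\rho_{\rm S}^{\rm G}{}'$ (for (c)), exactly as you do. Your verification that the vanishing divergence term makes the necessary condition (at the appropriate $\alpha$) coincide with the sufficient condition is the same observation, just spelled out in more detail than the paper's one-line proof.
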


\begin{proof}
To prove  (a) and (b) above, apply Theorem~\ref{general_single_shot_work} to the case that $\hat \rho_{\rm S}' = \hat \rho_{\rm S}^{\rm G}$ and  $\hat \rho_{\rm S} = \hat \rho_{\rm S}^{\rm G}$, respectively.
To prove (c), take  $\hat \rho_{\rm S} = \hat \rho_{\rm S}^{\rm G}$ and  $\hat \rho_{\rm S}' = \hat \rho_{\rm S}^{\rm G}{}'$.
$\Box$
\end{proof}

\section{Single-shot work bound: Approximate case}
\label{sec:work_approximate}

We consider the single-shot work bound for approximate (or imperfect) thermodynamic processes.
There can be several sources of ``failure'' in thermodynamic processes: the final state of the system deviates from the target state, the clock works imperfectly, or the work extraction fails with some probability, and so on.
To incorporate such sources altogether, we adopt the following definition of approximate thermodynamic processes.

\begin{definition}[$\varepsilon$-approximate single-shot thermodynamic process]
Let $w \in \mathbb R$ and $\varepsilon \geq 0$.
A state $\hat \rho_{\rm S}$ is  $\varepsilon$-approximate $w$-assisted  transformable to another state $\hat \rho_{\rm S}'$ with Hamiltonians $\hat H_{\rm S}$ and $\hat H_{\rm S}'$, if there exists  a Gibbs-preserving map of SCW, written as $\mathcal E_{\rm SCW}$, with  respect to  the Hamiltonian $\hat H_{\rm SCW}$ of Eq.~(\ref{Hamiltonian_SCW}) with $\hat H_{\rm W}$ satisfying Eq.~(\ref{Hamiltonian_W}),  such that
\begin{equation}
D ( \mathcal E_{\rm SCW} ( \hat \rho ) , \hat \rho')  \leq  \varepsilon,
\label{approximate_process_trace_distance}
\end{equation} 
where 
$\hat \rho := \hat \rho_{\rm S} \otimes | 0 \rangle \langle 0 | \otimes  | E_{\rm i} \rangle \langle E_{\rm i} | $
and
$\hat \rho' := \hat \rho_{\rm S}' \otimes | 1 \rangle \langle 1 | \otimes | E_{\rm f} \rangle  \langle E_{\rm f}  |$.
\label{def:approx_singleshot_thermo}
\end{definition}

We remark that in Definition~\ref{def:approx_singleshot_thermo} the Hamiltonian of W may include additional levels;  the Hamiltonian (\ref{Hamiltonian_W}) can be replaced by $\hat H_{\rm W} = E_{\rm i} | E_{\rm i} \rangle \langle E_{\rm i} |+  E_{\rm f}  |E_{\rm f} \rangle \langle E_{\rm f} | + \sum_k E_k | E_k \rangle \langle E_k|$ with $E_{\rm i} - E_{\rm f} =w$.
For example, if the work extraction fails, the final state of W might end up with some state other than $ | E_{\rm i} \rangle$,  $| E_{\rm f} \rangle$.

From Definition~\ref{def:approx_singleshot_thermo}, we see that the deviation of the final state of S and the failure probability of CW are both suppressed by $\varepsilon$. 
In fact, from the monotonicity of the trace distance under the partial trace,
the condition~(\ref{approximate_process_trace_distance}) implies that
$D (  \hat \rho_{\rm S}''  , \hat \rho_{\rm S}' ) \leq \varepsilon$
with $\hat \rho_{\rm S}'' := {\rm tr}_{\rm CW}[ \mathcal E_{\rm SCW} ( \hat \rho ) ]$.
The condition~(\ref{approximate_process_trace_distance})  also implies that the success probability of the clock and the work storage (i.e., the probability that C ends up with $| 1\rangle$ and W ends up with $| E_{\rm f} \rangle$) equals or is greater than $1 - \varepsilon$, which is shown as follows.
Let $\hat \tau := \mathcal E_{\rm SCW} ( \hat \rho )$ and $| {\rm f} \rangle :=| 1\rangle | E_{\rm f} \rangle$.
Then, $D ( \hat \tau , \hat \rho')  \leq  \varepsilon$ implies that there exists $\hat \Delta$ such that $\hat \tau \geq \hat \rho' - \hat \Delta$, $\hat \Delta \geq 0$, and ${\rm tr}[\hat \Delta] \leq \varepsilon$.
We then have  $\langle {\rm f }| \hat \tau |  {\rm f} \rangle \geq \hat \rho_{\rm S}' - \langle {\rm f }|  \hat \Delta   |  {\rm f} \rangle$,
and  therefore
\begin{equation}
{\rm tr}[\langle {\rm f }| \hat \tau |  {\rm f} \rangle] \geq  1 -   \varepsilon.
\end{equation}

Another advantage of Definition~\ref{def:approx_singleshot_thermo} is that it is directly related to  the asymptotic theory discussed in Section~\ref{sec:work_asymptotic}.
Under this definition, however, the necessary condition for approximate state conversion is given by an apparently involved inequality with the  hypothesis testing divergence $S_{\rm H}$ introduced in Appendix~\ref{appx:hypothesis_testing}, as shown in the following theorem (Proposition 4 of Ref.~\cite{Sagawa2019}).
(We will discuss  alternative definitions of approximate processes later in this section, where the smooth R\'enyi $0$- and $\infty$-divergences gives simpler thermodynamic bounds, as obtained in Refs.~\cite
{Horodecki2013,Aberg2013}.)

\begin{theorem}[Necessary condition for approximate state conversion]
Let $\varepsilon \geq 0$.
If $\hat \rho_{\rm S}$ is $\varepsilon$-approximate $w$-assisted transformable to  $\hat \rho_{\rm S}'$ with the Hamiltonians $\hat H_{\rm S}$ and $\hat H_{\rm S}'$, then for $0 < \eta < 1 - \varepsilon$,
\begin{equation}
\beta (w - \Delta F_{\rm S}) \geq S_{\rm H}^{\eta + \varepsilon } (\hat \rho_{\rm S}'  \| \hat \rho_{\rm S}^{\rm G}{}' )  - S_{\rm H}^{\eta} (\hat \rho_{\rm S}  \| \hat \rho_{\rm S}^{\rm G}{} ) -\ln \left( \frac{\eta + \varepsilon }{\eta} \right).
\label{hypothesis_testing_work_ap}
\end{equation}
\label{thm:hypothesis_testing_work_ap}
\end{theorem}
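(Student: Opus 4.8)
The plan is to reduce everything to the hypothesis testing divergence on the composite system SCW and then apply its monotonicity under CPTP maps (Proposition~\ref{prop:H_monotone}) together with its behavior under smoothing. First I would set up the same notation as in the proof of Theorem~\ref{general_single_shot_work}: write $\hat \rho := \hat \rho_{\rm S} \otimes | 0 \rangle \langle 0 | \otimes | E_{\rm i} \rangle \langle E_{\rm i} |$, $\hat \rho' := \hat \rho_{\rm S}' \otimes | 1 \rangle \langle 1 | \otimes | E_{\rm f} \rangle \langle E_{\rm f} |$, and $\hat \tau := \mathcal E_{\rm SCW}(\hat \rho)$, so that $D(\hat \tau, \hat \rho') \leq \varepsilon$ by Definition~\ref{def:approx_singleshot_thermo}. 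Since $\mathcal E_{\rm SCW}$ is a Gibbs-preserving map, $\mathcal E_{\rm SCW}(\hat \rho_{\rm SCW}^{\rm G}) = \hat \rho_{\rm SCW}^{\rm G}$, so the monotonicity of $S_{\rm H}^\eta$ under $\mathcal E_{\rm SCW}$ gives, for any $0 < \eta < 1$,
\begin{equation}
S_{\rm H}^\eta (\hat \rho \| \hat \rho_{\rm SCW}^{\rm G}) \geq S_{\rm H}^\eta (\hat \tau \| \hat \rho_{\rm SCW}^{\rm G}).
\end{equation}

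Next I would handle the trace-distance perturbation $D(\hat \tau, \hat \rho') \leq \varepsilon$. The key lemma to invoke is a continuity / "smoothing" estimate for $S_{\rm H}^\eta$ of the type used in Appendix~\ref{appx:hypothesis_testing}: if $D(\hat \tau, \hat \rho') \leq \varepsilon$ then the optimal test for $(\hat \rho', \hat \rho_{\rm SCW}^{\rm G})$ at level $\eta + \varepsilon$ is a feasible test for $(\hat \tau, \hat \rho_{\rm SCW}^{\rm G})$ at level $\eta$, because ${\rm tr}[\hat Q \hat \tau] \geq {\rm tr}[\hat Q \hat \rho'] - \varepsilon$. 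This yields
\begin{equation}
S_{\rm H}^\eta (\hat \tau \| \hat \rho_{\rm SCW}^{\rm G}) \geq S_{\rm H}^{\eta + \varepsilon} (\hat \rho' \| \hat \rho_{\rm SCW}^{\rm G}).
\end{equation}
Chaining the two inequalities gives $S_{\rm H}^\eta (\hat \rho \| \hat \rho_{\rm SCW}^{\rm G}) \geq S_{\rm H}^{\eta+\varepsilon}(\hat \rho' \| \hat \rho_{\rm SCW}^{\rm G})$. Then I would compute both sides explicitly. Because $\hat \rho$, $\hat \rho'$, and $\hat \rho_{\rm SCW}^{\rm G} = \hat \rho_{\rm SC}^{\rm G} \otimes \hat \rho_{\rm W}^{\rm G}$ all factorize over C and W (with C and W in pure product states on the left), the hypothesis testing divergence on SCW reduces to that on S up to additive constants. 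Using the additivity of $S_{\rm H}^\eta$ under tensoring with a fixed classical/pure factor (carefully: $S_{\rm H}^\eta$ is not simply additive in general, but for a pure state tensor factor $|x\rangle\langle x| \otimes \cdot$ against a state $q_x \cdot$ it picks up $-\ln q_x$, and similarly for the C factor), together with $\hat \rho_{\rm SC}^{\rm G}$ from Eq.~(\ref{Gibbs_SCW}) and $\hat \rho_{\rm W}^{\rm G}$ with $Z_{\rm W} = e^{-\beta E_{\rm i}} + e^{-\beta E_{\rm f}}$, I would obtain
\begin{equation}
S_{\rm H}^\eta (\hat \rho \| \hat \rho_{\rm SCW}^{\rm G}) = S_{\rm H}^\eta (\hat \rho_{\rm S} \| \hat \rho_{\rm S}^{\rm G}) + \beta F_{\rm S} + \ln(Z_{\rm S} + Z_{\rm S}') + \beta E_{\rm i} + \ln Z_{\rm W},
\end{equation}
and similarly for $S_{\rm H}^{\eta+\varepsilon}(\hat \rho' \| \hat \rho_{\rm SCW}^{\rm G})$ with $\hat \rho_{\rm S}' $, $F_{\rm S}'$, $E_{\rm f}$, and $\eta + \varepsilon$. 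Subtracting, the $\ln(Z_{\rm S}+Z_{\rm S}')$ and $\ln Z_{\rm W}$ terms cancel, $\beta(E_{\rm i} - E_{\rm f}) = \beta w$, $\beta(F_{\rm S} - F_{\rm S}') = -\beta \Delta F_{\rm S}$, and I am left with exactly inequality~(\ref{hypothesis_testing_work_ap}) — including the $-\ln((\eta+\varepsilon)/\eta)$ term, which is precisely the correction picked up in the smoothing step above (the optimal test at level $\eta+\varepsilon$ compared at level $\eta$ carries a factor $\eta/(\eta+\varepsilon)$ in the definition $Q_{\rm H}^\eta = \eta e^{-S_{\rm H}^\eta}$).

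The main obstacle I anticipate is getting the bookkeeping of the $S_{\rm H}$ smoothing correction exactly right: one must be careful about whether $S_{\rm H}^\eta$ is defined via $\min\{{\rm tr}[\hat Q \hat \sigma] : {\rm tr}[\hat Q \hat \rho] \geq \eta\}$ (so $Q_{\rm H}^\eta = \eta e^{-S_{\rm H}^\eta}$) and how the constraint level shifts from $\eta$ to $\eta + \varepsilon$ under a trace-distance-$\varepsilon$ perturbation of $\hat \rho$. The clean way is to pass through the quantity $Q_{\rm H}^\eta(\hat \rho \| \hat \sigma) := \min_{0 \le \hat Q \le \hat I,\ {\rm tr}[\hat \rho \hat Q] \ge \eta}{\rm tr}[\hat \sigma \hat Q]$ (as in part (iii) of Theorem 2 of~\cite{Buscemi2017}): monotonicity under CPTP maps and the perturbation bound $Q_{\rm H}^{\eta+\varepsilon}(\hat \rho' \| \hat \sigma) \geq Q_{\rm H}^{\eta}(\hat \tau \| \hat \sigma)$ are both transparent there, and the restriction $0 < \eta < 1 - \varepsilon$ is exactly what is needed for $\eta + \varepsilon < 1$ so that the testing problem remains nondegenerate. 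Converting back via $S_{\rm H}^\eta = \ln(\eta/Q_{\rm H}^\eta)$ then automatically produces the $\ln((\eta+\varepsilon)/\eta)$ term. A secondary (minor) point is justifying the factorization formula for $S_{\rm H}$ on SCW; this follows because an optimal test can be chosen block-diagonal with respect to the orthogonal supports of the pure C and W states appearing in $\hat \rho$ and $\hat \rho'$, reducing the problem on each relevant block to the S-system problem rescaled by the corresponding Gibbs weight.
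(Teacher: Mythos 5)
Your proposal is correct and follows essentially the same route as the paper's proof: lift the problem to SCW, use the tensor-factorization identity for $S_{\rm H}^\eta$ against the block-diagonal Gibbs state of SCW, apply monotonicity of $S_{\rm H}^\eta$ under the Gibbs-preserving map, and absorb the trace-distance deviation of the output via the level-shift bound, which is exactly where the $\ln\bigl((\eta+\varepsilon)/\eta\bigr)$ correction arises. The only wrinkle is that your displayed intermediate inequality $S_{\rm H}^\eta(\hat\tau\|\hat\rho^{\rm G}_{\rm SCW})\geq S_{\rm H}^{\eta+\varepsilon}(\hat\rho'\|\hat\rho^{\rm G}_{\rm SCW})$ should already carry the $-\ln\bigl((\eta+\varepsilon)/\eta\bigr)$ term, as you yourself note when passing through $Q_{\rm H}^\eta$ in your final paragraph.
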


\begin{proof}
We use the notations $\hat \rho$ and $\hat \rho'$ in Definition~\ref{def:approx_singleshot_thermo}.
First, we have
\begin{eqnarray}
&{}&S_{\rm H}^{\eta + \varepsilon } (\hat \rho_{\rm S}'  \| \hat \rho_{\rm S}^{\rm G}{}' ) 
= S_{\rm H}^{\eta + \varepsilon } (\hat \rho'  \| \hat \rho_{\rm SCW}^{\rm G}  ) + \ln  \frac{Z_{\rm S}'}{Z_{\rm S} + Z_{\rm S}'}  - \beta E_{\rm f} \\
&\leq& S_{\rm H}^{\eta } ( \mathcal E_{\rm SCW} (\hat \rho) \| \hat \rho_{\rm SCW}^{\rm G}  ) + \ln \left( \frac{\eta + \varepsilon }{\eta} \right) + \ln  \frac{Z_{\rm S}'}{Z_{\rm S} + Z_{\rm S}'}  - \beta E_{\rm f},
\end{eqnarray}
where the equality in the first line is from Lemma~\ref{lemma:Sagawa_Faist_eq} and the inequality in the second line is from Lemma~\ref{lemma:Sagawa_Faist_ineq}.
From the monotonicity under $\mathcal{E}_{\rm SCW} $,
\begin{equation}
S_{\rm H}^{\eta } ( \mathcal E_{\rm SCW} (\hat \rho)   \| \hat \rho_{\rm SCW}^{\rm G}  ) \leq S_{\rm H}^{\eta } (  \hat \rho     \| \hat \rho_{\rm SCW}^{\rm G}  ).
\end{equation}
Again from Lemma~\ref{lemma:Sagawa_Faist_eq}, we have
\begin{equation}
S_{\rm H}^{\eta } (  \hat \rho     \| \hat \rho_{\rm SCW}^{\rm G}  ) = S_{\rm H}^{\eta} (\hat \rho_{\rm S}  \| \hat \rho_{\rm S}^{\rm G}{} ) - \ln  \frac{Z_{\rm S}}{Z_{\rm S} + Z_{\rm S}'}  + \beta E_{\rm i}.
\end{equation}
By combining the foregoing relations, we obtain
\begin{equation}
S_{\rm H}^{\eta + \varepsilon } (\hat \rho_{\rm S}'  \| \hat \rho_{\rm S}^{\rm G}{}' )  \leq S_{\rm H}^{\eta} (\hat \rho_{\rm S}  \| \hat \rho_{\rm S}^{\rm G}{} )  + \ln \frac{Z_{\rm S}'}{Z_{\rm S}} + \beta (E_{\rm i} - E_{\rm f})+ \ln \left( \frac{\eta + \varepsilon }{\eta} \right),
\end{equation}
which implies inequality~(\ref{hypothesis_testing_work_ap}).
$\Box$
\end{proof}

By using Proposition~\ref{Faist_prop}, inequality~(\ref{hypothesis_testing_work_ap}) can be rewritten in terms of the smooth R\'enyi $0$- and $\infty$-divergences, while we will not explicitly write it down here, as it is still involved.

We next consider a sufficient condition for state conversion, which is an approximate version of  Theorem~\ref{general_single_shot_work} (b) and now is given by a simple form (Proposition 16 of Ref.~\cite{Sagawa2019}).

\begin{theorem}[Sufficient condition for approximate state conversion]
$\hat \rho_{\rm S}$ is $\varepsilon$-approximate $w$-assisted  transformable to $\hat \rho_{\rm S}'$ with the Hamiltonians $\hat H_{\rm S}$ and $\hat H_{\rm S}'$, if (but not only if)
\begin{equation}
\beta ( w  - \Delta F_{\rm S}  ) \geq S_\infty^{\varepsilon / 2} (\hat \rho_{\rm S}' \| \hat \rho_{\rm S}^{\rm G}{}') - S_0^{\varepsilon / 2} (\hat \rho_{\rm S} \| \hat \rho_{\rm S}^{\rm G}).
\label{approximate_sufficient}
\end{equation}
\label{thm:approximate_sufficient}
\end{theorem}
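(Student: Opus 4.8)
The plan is to reduce the statement to the sufficient condition for approximate quantum d-majorization, Theorem~\ref{thm:asymp1_q_b}, applied to the enlarged system SCW, in exactly the way Theorem~\ref{general_single_shot_work}(b) was reduced to Theorem~\ref{thm:asymp0_q}(b) in the exact case. Take $\hat H_{\rm W}$ of the two-level form (\ref{Hamiltonian_W}) with $E_{\rm i}-E_{\rm f}=w$, set $\hat \rho := \hat \rho_{\rm S} \otimes | 0 \rangle \langle 0 | \otimes  | E_{\rm i} \rangle \langle E_{\rm i} |$ and $\hat \rho' := \hat \rho_{\rm S}' \otimes | 1 \rangle \langle 1 | \otimes | E_{\rm f} \rangle  \langle E_{\rm f}  |$, and recall that the Gibbs state of SCW is $\hat \rho_{\rm SCW}^{\rm G}$ given by (\ref{Gibbs_total})--(\ref{Gibbs_SCW}). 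If one can show $(\hat \rho', \hat \rho_{\rm SCW}^{\rm G}) \prec^{\varepsilon} (\hat \rho, \hat \rho_{\rm SCW}^{\rm G})$, then by Definition~\ref{def:quantum_d_majorization} and Definition~\ref{approx_majorization} there is a CPTP map $\mathcal E_{\rm SCW}$ with $\mathcal E_{\rm SCW}(\hat \rho_{\rm SCW}^{\rm G}) = \hat \rho_{\rm SCW}^{\rm G}$, i.e.\ a Gibbs-preserving map of SCW, satisfying $D(\mathcal E_{\rm SCW}(\hat \rho),\hat \rho') \le \varepsilon$, which is precisely condition (\ref{approximate_process_trace_distance}) in Definition~\ref{def:approx_singleshot_thermo}. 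By Theorem~\ref{thm:asymp1_q_b}, this d-majorization holds once we verify
\[
S_\infty^{\varepsilon/2}(\hat \rho' \| \hat \rho_{\rm SCW}^{\rm G}) \le S_0^{\varepsilon/2}(\hat \rho \| \hat \rho_{\rm SCW}^{\rm G}),
\]
so the whole task is to deduce this inequality from the hypothesis $\beta(w-\Delta F_{\rm S}) \ge S_\infty^{\varepsilon/2}(\hat \rho_{\rm S}' \| \hat \rho_{\rm S}^{\rm G}{}') - S_0^{\varepsilon/2}(\hat \rho_{\rm S} \| \hat \rho_{\rm S}^{\rm G})$.

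The core step is transporting the smooth divergences between SCW and S, giving the smoothed analogues of Eqs.~(\ref{proof_initial})--(\ref{proof_final}). For the $\infty$-side, let $\hat \tau_{\rm S}'$ be an optimizer achieving $S_\infty^{\varepsilon/2}(\hat \rho_{\rm S}' \| \hat \rho_{\rm S}^{\rm G}{}')$; since padding with fixed rank-one projectors multiplies the trace norm by $1$, we get $\hat \tau_{\rm S}' \otimes | 1 \rangle\langle 1 | \otimes | E_{\rm f}\rangle\langle E_{\rm f}| \in B^{\varepsilon/2}(\hat \rho')$, and because $\hat \rho_{\rm SCW}^{\rm G}$ restricted to the $(|1\rangle,|E_{\rm f}\rangle)$ block equals $\tfrac{Z_{\rm S}'}{Z_{\rm S}+Z_{\rm S}'}\,\tfrac{e^{-\beta E_{\rm f}}}{Z_{\rm W}}\,\hat \rho_{\rm S}^{\rm G}{}' \otimes |1\rangle\langle 1| \otimes |E_{\rm f}\rangle\langle E_{\rm f}|$, evaluating $S_\infty(\cdot\|\cdot)=\ln\|\hat\sigma^{-1/2}\hat\rho\hat\sigma^{-1/2}\|_\infty$ on this block yields, since $S_\infty^{\varepsilon/2}$ is a minimum over smoothings,
\[
S_\infty^{\varepsilon/2}(\hat \rho' \| \hat \rho_{\rm SCW}^{\rm G}) \le S_\infty^{\varepsilon/2}(\hat \rho_{\rm S}' \| \hat \rho_{\rm S}^{\rm G}{}') + \beta F_{\rm S}' + \ln(Z_{\rm S}+Z_{\rm S}') + \beta E_{\rm f} + \ln Z_{\rm W}.
\]
Symmetrically, letting $\hat \tau_{\rm S}$ achieve $S_0^{\varepsilon/2}(\hat \rho_{\rm S}\|\hat \rho_{\rm S}^{\rm G})$, using $\hat P_{\hat \tau_{\rm S}\otimes |0\rangle\langle 0|\otimes |E_{\rm i}\rangle\langle E_{\rm i}|} = \hat P_{\hat \tau_{\rm S}}\otimes |0\rangle\langle 0|\otimes |E_{\rm i}\rangle\langle E_{\rm i}|$ and $S_0(\cdot\|\cdot)=-\ln\mathrm{tr}[\hat P_{\hat\rho}\hat\sigma]$, and that $S_0^{\varepsilon/2}$ is a maximum, gives
\[
S_0^{\varepsilon/2}(\hat \rho \| \hat \rho_{\rm SCW}^{\rm G}) \ge S_0^{\varepsilon/2}(\hat \rho_{\rm S} \| \hat \rho_{\rm S}^{\rm G}) + \beta F_{\rm S} + \ln(Z_{\rm S}+Z_{\rm S}') + \beta E_{\rm i} + \ln Z_{\rm W}.
\]

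Subtracting these two displays, the common terms $\ln(Z_{\rm S}+Z_{\rm S}')+\ln Z_{\rm W}$ cancel, and using $E_{\rm i}-E_{\rm f}=w$ and $\Delta F_{\rm S}=F_{\rm S}'-F_{\rm S}$ one sees that $S_\infty^{\varepsilon/2}(\hat \rho' \| \hat \rho_{\rm SCW}^{\rm G}) \le S_0^{\varepsilon/2}(\hat \rho \| \hat \rho_{\rm SCW}^{\rm G})$ is implied exactly by the hypothesis; Theorem~\ref{thm:asymp1_q_b} then finishes the proof. I do not expect an analytic obstacle here: all the substantive content is already packaged in Theorem~\ref{thm:asymp1_q_b}. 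The only thing requiring care is the bookkeeping — verifying that the "pure-state-padded" smoothings of the S-states genuinely lie in the SCW smoothing balls (multiplicativity of the trace norm under tensoring with rank-one projectors), that the block structure of $\hat \rho_{\rm SCW}^{\rm G}$ is tracked correctly, and that the optimization directions (minimum for $S_\infty^{\varepsilon/2}$, maximum for $S_0^{\varepsilon/2}$) are used so that restricting to the padded subfamily produces inequalities pointing the right way.
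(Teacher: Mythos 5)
Your proof is correct, but it takes a different route from the paper's. The paper works entirely at the level of S: it picks optimizers $\hat \tau_{\rm S}$ and $\hat \tau_{\rm S}'$ of the two smooth divergences, notes that the hypothesis then gives the \emph{exact} inequality $\beta(w-\Delta F_{\rm S}) \geq S_\infty(\hat \tau_{\rm S}'\|\hat \rho_{\rm S}^{\rm G}{}') - S_0(\hat \tau_{\rm S}\|\hat \rho_{\rm S}^{\rm G})$, invokes the exact single-shot bound (Theorem~\ref{general_single_shot_work}(b)) to transform $\hat \tau_{\rm S}$ into $\hat \tau_{\rm S}'$ exactly, and finishes with the triangle inequality and monotonicity of the trace distance, $D(\mathcal E_{\rm SCW}(\hat\rho),\hat\rho') \leq D(\hat\rho,\hat\tau) + D(\hat\tau',\hat\rho') \leq \varepsilon$. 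You instead lift everything to SCW first, prove smoothed analogues of Eqs.~(\ref{proof_initial})--(\ref{proof_final}) by padding the S-level optimizers with the fixed pure states of C and W (correctly using that the smooth $\infty$-divergence is a minimum and the smooth $0$-divergence a maximum, so one-sided inequalities suffice and point the right way), and then apply Theorem~\ref{thm:asymp1_q_b} directly with $\hat\sigma=\hat\sigma'=\hat\rho_{\rm SCW}^{\rm G}$, which indeed delivers exactly a Gibbs-preserving map on SCW satisfying condition~(\ref{approximate_process_trace_distance}). The two arguments contain the same ingredients arranged differently: in your version the optimizer-selection and the triangle-inequality step are hidden inside Theorem~\ref{thm:asymp1_q_b}, at the cost of having to transport the smooth divergences from S to SCW yourself; in the paper's version the smoothing is handled at the S level and the SCW bookkeeping is reused wholesale from the exact Theorem~\ref{general_single_shot_work}(b), which makes it slightly shorter. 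Both are valid, and your accounting of the block structure of $\hat\rho_{\rm SCW}^{\rm G}$ and of the cancellation $\beta(E_{\rm i}-E_{\rm f})=\beta w$, $\beta(F_{\rm S}'-F_{\rm S})=\beta\Delta F_{\rm S}$ is right.
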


\begin{proof}
Let $\hat \tau_{\rm S}$ be optimal for $S_0^{\varepsilon / 2} (\hat \rho_{\rm S} \| \hat \rho_{\rm S}^{\rm G})$ and  $\hat \tau_{\rm S}'$ be optimal for $S_\infty^{\varepsilon / 2} (\hat \rho_{\rm S}' \| \hat \rho_{\rm S}^{\rm G}{}')$.
We then have 
\begin{equation}
\beta ( w  - \Delta F_{\rm S}  ) \geq S_\infty (\hat \tau_{\rm S}' \| \hat \rho_{\rm S}^{\rm G}{}') - S_0 (\hat \tau_{\rm S} \| \hat \rho_{\rm S}^{\rm G}).
\label{approximate_sufficient}
\end{equation}
From this inequality and Theorem~\ref{general_single_shot_work} (b), $\hat \tau_{\rm S}$ is $w$-assisted transformable to $\hat \tau_{\rm S}'$.
Define $\hat \tau:= \hat \tau_{\rm S} \otimes | 0 \rangle \langle 0 | \otimes | E_{\rm i} \rangle \langle E_{\rm i} |$ and $\hat \tau' := \hat \tau_{\rm S}' \otimes | 1 \rangle \langle 1 | \otimes | E_{\rm f} \rangle \langle E_{\rm f} |$,  and note that $D(\hat \rho, \hat \tau )\leq \varepsilon / 2$ and $ D(\hat \rho', \hat \tau') \leq \varepsilon / 2$. 
Then, there exists a Gibbs-preserving map $\mathcal E_{\rm SCW}$ such that $\mathcal E_{\rm SCW} (\hat \tau ) = \hat \tau'$.
Then, $D(\mathcal E_{\rm SCW} (\hat \rho), \hat \rho' ) \leq D(\mathcal E_{\rm SCW} (\hat \rho), \hat \tau' ) + D(\hat \tau', \hat \rho') \leq D(\hat \rho, \hat \tau ) + D(\hat \tau', \hat \rho') \leq \varepsilon$.
$\Box$
\end{proof}

As mentioned before, we can also introduce alternative definitions of approximate processes, which are given by slightly stronger conditions  than Definition~\ref{def:approx_singleshot_thermo}, leading to simpler work bounds.
We here define two kinds of approximate processes, which we label by $\alpha = 0, \infty$, having in mind that these two definitions correspond to the R\'enyi $0$- and $\infty$-divergences.

\begin{definition}[Strongly $\varepsilon$-approximate single-shot thermodynamic process]
Let $w \in \mathbb R$ and $\varepsilon \geq 0$.  
A state $\hat \rho_{\rm S}$ is $\alpha$-strongly $\varepsilon$-approximate $w$-assisted  transformable to another state $\hat \rho_{\rm S}'$ with Hamiltonians $\hat H_{\rm S}$ and $\hat H_{\rm S}'$, if:
\begin{description}
\item[$\alpha = 0$;] there exists a state $\hat \tau_{\rm S}$ satisfying $D(\hat \tau_{\rm S}, \hat \rho_{\rm S}) \leq \varepsilon$ such that $\hat \tau_{\rm S}$ is $w$-assisted transformable to $\hat \rho_{\rm S}'$.
\item[$\alpha=\infty$;]  there exists a state $\hat \tau_{\rm S}'$ satisfying $D(\hat \tau_{\rm S}', \hat \rho_{\rm S}') \leq \varepsilon$ such that $\hat \rho_{\rm S}$ is $w$-assisted transformable to $\hat \tau_{\rm S}'$.
\end{description}
\label{def:approx_singleshot_thermo_strong}
\end{definition}

Here, the roles of the initial and final states are exchanged for $\alpha = 0, \infty$; in this sense, these two definitions are dual to each other.  More importantly, for both of  $\alpha = 0,1$, ``$\alpha$-strongly $\varepsilon$-approximate transformable'' automatically implies  ``$\varepsilon$-approximate transformable''  (Definition~\ref{def:approx_singleshot_thermo}).

To see this, we set the notations $\hat \rho := \hat \rho_{\rm S} \otimes | 1 \rangle \langle 1 | \otimes | E_{\rm f} \rangle  \langle E_{\rm f}  |$, 
$\hat \tau := \hat \tau_{\rm S} \otimes | 0 \rangle \langle 0 | \otimes  | E_{\rm i} \rangle \langle E_{\rm i} |$, $\hat \rho' := \hat \rho_{\rm S}' \otimes | 1 \rangle \langle 1 | \otimes | E_{\rm f} \rangle  \langle E_{\rm f}  |$, $\hat \tau' := \hat \tau_{\rm S}' \otimes | 1 \rangle \langle 1 | \otimes | E_{\rm f} \rangle  \langle E_{\rm f}  |$,
 and let $\mathcal E_{\rm SCW}$ be the relevant Gibbs preserving map  with  respect to  the Hamiltonian $\hat H_{\rm SCW}$  of Eq.~(\ref{Hamiltonian_SCW}), satisfying
$ \mathcal E_{\rm SCW} ( \hat \tau )  = \hat \rho'$ for $\alpha = 0$ or $ \mathcal E_{\rm SCW} ( \hat \rho )  = \hat \tau'$ for $\alpha = \infty$ .
For $\alpha = 0$, if the condition of Definition~\ref{def:approx_singleshot_thermo_strong} is satisfied, then $D(\mathcal E_{\rm SCW} (\hat \rho), \hat \rho') \leq D(\hat \rho, \hat \tau ) = D(\hat \rho_{\rm S}, \hat \tau_{\rm S} ) \leq \varepsilon$ holds from the monotonicity of the trace distance.
For $\alpha = \infty$, if the condition of Definition~\ref{def:approx_singleshot_thermo_strong} is satisfied, then
obviously $D(\mathcal E_{\rm SCW} (\hat \rho), \hat \rho') = D(\hat \tau_{\rm S}', \hat \rho_{\rm S}') \leq \varepsilon$.

A physical motivation behind Definition~\ref{def:approx_singleshot_thermo_strong} can be illustrated as follows. 
For $\alpha = 0$, suppose that $\hat \rho_{\rm S}$ can be written as $\hat \rho_{\rm S} = (1 - \varepsilon ) \hat \tau_{\rm S} + \varepsilon \hat \gamma_{\rm S}$, where $0< \varepsilon < 1$ is some constant and $\hat \tau_{\rm S}$ and $\hat \gamma_{\rm S}$ are normalized states orthogonal to each other.
Note that $D(\hat \tau_{\rm S}, \hat \rho_{\rm S}) \leq \varepsilon$.
Also suppose that the condition of Definition~\ref{def:approx_singleshot_thermo_strong} is satisfied for this $\hat \tau_{\rm S}$ (i.e., $\mathcal E_{\rm SCW} (\hat \tau ) = \hat \rho'$ with the above notation).
In such a case,  $\hat \tau_{\rm S}$ represents the ``success event'' in the initial state of the transformation with success probability $1-\varepsilon$.
On the other hand, for $\alpha = \infty$, Definition~\ref{def:approx_singleshot_thermo_strong} simply means that state transformation of S is imperfect but CW goes to the desired final state with unit probability.

With the foregoing ``strong'' definitions of approximate transformation, we have the following simpler work bounds.
They are approximate versions of Corollary~\ref{main_single_shot_work} (a), (b), and
can be regarded as essentially the same bounds as obtained in Refs.~\cite
{Horodecki2013,Aberg2013}.
(See also Refs.~\cite{Dahlsten2011,Rio2011,Faist2015c,Loomis2020} for the related work bounds for information processing.)

\begin{theorem}[Simple work bounds for strongly approximate processes]
Let $w \in \mathbb R$ and $\varepsilon \geq 0$. Consider states $\hat \rho_{\rm S}$, $\hat \rho_{\rm S}'$ and a fixed Hamiltonian  $\hat H_{\rm S}$ with the Gibbs state $\hat \rho_{\rm S}^{\rm G}$.
\begin{description}
\item[(a)]
$\hat \rho_{\rm S}$ is $0$-strongly $\varepsilon$-approximate $w$-assisted  transformable to  $\hat \rho_{\rm S}^{\rm G}$, if and only if
\begin{equation}
-\beta w \leq S_0^\varepsilon ( \hat \rho_{\rm S} \| \hat \rho_{\rm S}^{\rm G} ).
\label{Aberg_inequality0}
\end{equation}
\item[(b)]
$\hat \rho_{\rm S}^{\rm G}$ is $\infty$-strongly $\varepsilon$-approximate $w$-assisted  transformable to  $\hat \rho_{\rm S}'$, if and only if
\begin{equation}
\beta w \geq S_\infty^\varepsilon ( \hat \rho_{\rm S}' \| \hat \rho_{\rm S}^{\rm G} ).
\label{Aberg_inequality_inf}
\end{equation}
\end{description}
\label{thm:Aberg_inequality}
\end{theorem}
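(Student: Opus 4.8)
The plan is to reduce Theorem~\ref{thm:Aberg_inequality} to the already-established Corollary~\ref{main_single_shot_work} (the exact single-shot work bounds) by pushing the smoothing through the definition of $\alpha$-strongly $\varepsilon$-approximate transformation. The key observation is that in both parts the smooth R\'enyi divergence $S_\alpha^\varepsilon(\hat\rho_{\rm S}\|\hat\rho_{\rm S}^{\rm G})$ is by definition an optimization over the $\varepsilon$-neighborhood $B^\varepsilon$, and the ``strong'' definition of approximate transformation is exactly an optimization over the same kind of neighborhood on the appropriate side (initial state for $\alpha=0$, final state for $\alpha=\infty$). So the optimizations should match up term by term.

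For part (a), I would argue as follows. Suppose $\hat\rho_{\rm S}$ is $0$-strongly $\varepsilon$-approximate $w$-assisted transformable to $\hat\rho_{\rm S}^{\rm G}$. Then there is $\hat\tau_{\rm S}\in B^\varepsilon(\hat\rho_{\rm S})$ that is exactly $w$-assisted transformable to $\hat\rho_{\rm S}^{\rm G}$, so Corollary~\ref{main_single_shot_work}(a) gives $-\beta w \le S_0(\hat\tau_{\rm S}\|\hat\rho_{\rm S}^{\rm G})$. Taking the supremum over $\hat\tau_{\rm S}\in B^\varepsilon(\hat\rho_{\rm S})$ of the right-hand side yields $-\beta w \le S_0^\varepsilon(\hat\rho_{\rm S}\|\hat\rho_{\rm S}^{\rm G})$, proving the ``only if'' direction. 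Conversely, if $-\beta w \le S_0^\varepsilon(\hat\rho_{\rm S}\|\hat\rho_{\rm S}^{\rm G})$, pick $\hat\tau_{\rm S}\in B^\varepsilon(\hat\rho_{\rm S})$ achieving (or approaching, then using compactness/continuity) the maximum, so $-\beta w \le S_0(\hat\tau_{\rm S}\|\hat\rho_{\rm S}^{\rm G})$; then Corollary~\ref{main_single_shot_work}(a) says $\hat\tau_{\rm S}$ is $w$-assisted transformable to $\hat\rho_{\rm S}^{\rm G}$, which is precisely the condition of Definition~\ref{def:approx_singleshot_thermo_strong} for $\alpha=0$. Part (b) is the mirror image: I would use Corollary~\ref{main_single_shot_work}(b), which states $\hat\rho_{\rm S}^{\rm G}$ is $w$-assisted transformable to $\hat\rho_{\rm S}'$ iff $\beta w\ge S_\infty(\hat\rho_{\rm S}'\|\hat\rho_{\rm S}^{\rm G})$; optimizing over $\hat\tau_{\rm S}'\in B^\varepsilon(\hat\rho_{\rm S}')$ on the target side and noting that $S_\infty^\varepsilon$ is the \emph{minimum} of $S_\infty$ over that neighborhood, the condition $\beta w\ge S_\infty^\varepsilon(\hat\rho_{\rm S}'\|\hat\rho_{\rm S}^{\rm G})$ is exactly the assertion that \emph{some} $\hat\tau_{\rm S}'$ in the neighborhood is reachable, which is the $\alpha=\infty$ case of Definition~\ref{def:approx_singleshot_thermo_strong}.

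The one genuine subtlety—and the step I expect to require the most care—is the attainment of the extremum in $S_0^\varepsilon$ and $S_\infty^\varepsilon$ and the direction of the inequality when one only \emph{approaches} the optimum. For the ``if'' direction of (a) we need an \emph{actual} maximizer $\hat\tau_{\rm S}$ (or at least a $\hat\tau_{\rm S}$ with $S_0(\hat\tau_{\rm S}\|\hat\rho_{\rm S}^{\rm G})\ge -\beta w$), which follows because $B^\varepsilon(\hat\rho_{\rm S})$ is compact and $S_0(\cdot\|\hat\rho_{\rm S}^{\rm G})$ is upper semicontinuous in finite dimensions; similarly for the ``if'' direction of (b) we need a minimizer, available by compactness and lower semicontinuity of $S_\infty(\cdot\|\hat\rho_{\rm S}^{\rm G})$. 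I would state this compactness/semicontinuity remark once and use it in both parts. Everything else is bookkeeping: the reduction in Definition~\ref{def:approx_singleshot_thermo_strong} already records that ``strongly approximate'' implies ``approximate'' in the sense of Definition~\ref{def:approx_singleshot_thermo}, so no new monotonicity input is needed beyond what Corollary~\ref{main_single_shot_work} already supplies.
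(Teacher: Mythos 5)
Your proposal is correct and is essentially the paper's own argument: the paper likewise handles the "only if" direction by noting the exactly-transformable neighbor $\hat\tau_{\rm S}$ (resp. $\hat\tau_{\rm S}'$) is a candidate in the optimization defining $S_0^\varepsilon$ (resp. $S_\infty^\varepsilon$), and the "if" direction by picking the optimizer and invoking the exact single-shot bounds, only phrased via Theorem~\ref{general_single_shot_work} with general Hamiltonians and then specialized, whereas you invoke the already-specialized Corollary~\ref{main_single_shot_work} directly. Your extra remark on attainment of the extremum (compactness plus semicontinuity) is a harmless refinement of a point the paper leaves implicit in its use of $\max$/$\min$.
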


\begin{proof}
We first prove (a).
For the sake of generality at this stage, suppose that $\hat \rho_{\rm S}$ is $0$-strongly $\varepsilon$-approximate $w$-assisted  transformable to  $\hat \rho_{\rm S}'$, by allowing different initial and final Hamiltonians $\hat H_{\rm S}$ and $\hat H_{\rm S}'$.
From inequality~(\ref{theorem_single_work1}) of Theorem~\ref{general_single_shot_work} (a), we have $\beta (w -  \Delta F_{\rm S} ) \geq S_0 (\hat \rho_{\rm S}' \| \hat \rho_{\rm S}^{\rm G}{}') - S_0 (\hat \tau_{\rm S} \| \hat \rho_{\rm S}^{\rm G})$.
Meanwhile, because $\hat \tau_{\rm S}$ is a candidate for optimization in $S_0^\varepsilon (\hat \rho_{\rm S} \| \hat \rho_{\rm S}^{\rm G})$, we have $S_0 (\hat \tau_{\rm S} \| \hat \rho_{\rm S}^{\rm G} ) \leq S_0^\varepsilon (\hat \rho_{\rm S} \| \hat \rho_{\rm S}^{\rm G})$.
We thus have
\begin{equation}
\beta (w -  \Delta F_{\rm S} ) \geq S_0 (\hat \rho_{\rm S}' \| \hat \rho_{\rm S}^{\rm G}{}') - S_0^\varepsilon (\hat \rho_{\rm S} \| \hat \rho_{\rm S}^{\rm G}).
\end{equation}
By letting $\hat H_{\rm S} = \hat H_{\rm S}'$ and $\hat \rho_{\rm S}' = \hat \rho_{\rm S}^{\rm G}$, we obtain inequality~(\ref{Aberg_inequality0}).

To prove the converse, suppose that 
\begin{equation}
\beta (w -  \Delta F_{\rm S} )  \geq S_\infty (\hat \rho_{\rm S}' \| \hat \rho_{\rm S}^{\rm G}{}') - S_0^\varepsilon (\hat \rho_{\rm S} \| \hat \rho_{\rm S}^{\rm G} )
\end{equation}
for general  $\hat \rho_{\rm S}$, $\hat \rho_{\rm S}'$, $\hat H_{\rm S}$, $\hat H_{\rm S}'$.
Let $\hat \tau_{\rm S}$ be optimal for $S_0^\varepsilon (\hat \rho_{\rm S} \| \hat \rho_{\rm S}^{\rm G} )$, which satisfies $S_0^\varepsilon (\hat \rho_{\rm S} \| \hat \rho_{\rm S}^{\rm G} ) = S_0 (\hat \tau_{\rm S} \| \hat \rho_{\rm S}^{\rm G} )$ and  $D( \hat \tau_{\rm S}, \hat \rho_{\rm S} ) \leq \varepsilon$.
We  have
$\beta (w -  \Delta F_{\rm S} )  \geq S_\infty (\hat \rho_{\rm S}' \| \hat \rho_{\rm S}^{\rm G}{}') - S_0 (\hat \tau_{\rm S} \| \hat \rho_{\rm S}^{\rm G} )$.
Then, from Theorem~\ref{general_single_shot_work} (b),
$\hat \tau_{\rm S}$ is  $w$-assisted  transformable to $\hat \rho_{\rm S}'$,
which implies that $\hat \rho_{\rm S}$ is $0$-strongly $\varepsilon$-approximate $w$-assisted  transformable to  $\hat \rho_{\rm S}'$.
By letting $\hat H_{\rm S} = \hat H_{\rm S}'$ and $\hat \rho_{\rm S}' = \hat \rho_{\rm S}^{\rm G}$, we obtain the claim of (a).

We next prove (b).  Again for the sake of generality at this stage, suppose that $\hat \rho_{\rm S}$ is $\infty$-strongly $\varepsilon$-approximate $w$-assisted  transformable to  $\hat \rho_{\rm S}'$, by allowing different initial and final Hamiltonians $\hat H_{\rm S}$ and $\hat H_{\rm S}'$.
From inequality~(\ref{theorem_single_work1}) of Theorem~\ref{general_single_shot_work} (a), we have $\beta (w -  \Delta F_{\rm S} ) \geq S_\infty (\hat \tau_{\rm S}' \| \hat \rho_{\rm S}^{\rm G}{}') - S_\infty (\hat \rho_{\rm S} \| \hat \rho_{\rm S}^{\rm G})$.
Meanwhile, because $\hat \tau_{\rm S}'$ is a candidate for optimization in $S_\infty^\varepsilon (\hat \rho_{\rm S}' \| \hat \rho_{\rm S}^{\rm G}{}')$, we have $S_\infty (\hat \tau_{\rm S}' \| \hat \rho_{\rm S}^{\rm G}{}' ) \geq S_\infty^\varepsilon (\hat \rho_{\rm S}' \| \hat \rho_{\rm S}^{\rm G}{}')$.
We thus have
\begin{equation}
\beta (w -  \Delta F_{\rm S} ) \geq S_\infty^\varepsilon (\hat \rho_{\rm S}' \| \hat \rho_{\rm S}^{\rm G}{}') - S_\infty (\hat \rho_{\rm S} \| \hat \rho_{\rm S}^{\rm G}).
\end{equation}
By letting $\hat H_{\rm S} = \hat H_{\rm S}'$ and $\hat \rho_{\rm S} = \hat \rho_{\rm S}^{\rm G}$, we obtain inequality~(\ref{Aberg_inequality_inf}).

To prove the converse, suppose that 
\begin{equation}
\beta (w -  \Delta F_{\rm S} )  \geq S_\infty^\varepsilon (\hat \rho_{\rm S}' \| \hat \rho_{\rm S}^{\rm G}{}') - S_0 (\hat \rho_{\rm S} \| \hat \rho_{\rm S}^{\rm G} )
\end{equation}
for general  $\hat \rho_{\rm S}$, $\hat \rho_{\rm S}'$, $\hat H_{\rm S}$, $\hat H_{\rm S}'$.
Let $\hat \tau_{\rm S}'$ be  optimal for $S_0^\varepsilon (\hat \rho_{\rm S}' \| \hat \rho_{\rm S}^{\rm G}{}' )$, which satisfies $S_0^\varepsilon (\hat \rho_{\rm S}' \| \hat \rho_{\rm S}^{\rm G}{}' ) = S_0 (\hat \tau_{\rm S}' \| \hat \rho_{\rm S}^{\rm G}{}' )$ and  $D( \hat \tau_{\rm S}', \hat \rho_{\rm S}' ) \leq \varepsilon$.
We  have
$\beta (w -  \Delta F_{\rm S} )  \geq S_\infty (\hat \tau_{\rm S}' \| \hat \rho_{\rm S}^{\rm G}{}') - S_0 (\hat \rho_{\rm S} \| \hat \rho_{\rm S}^{\rm G} )$.
Then, from Theorem~\ref{general_single_shot_work} (b),
$\hat \rho_{\rm S}$ is  $w$-assisted  transformable to $\hat \tau_{\rm S}'$,
which implies that $\hat \rho_{\rm S}$ is $\infty$-strongly $\varepsilon$-approximate $w$-assisted  transformable to  $\hat \rho_{\rm S}'$.
By letting $\hat H_{\rm S} = \hat H_{\rm S}'$ and $\hat \rho_{\rm S} = \hat \rho_{\rm S}^{\rm G}$, we obtain the claim of (a).
$\Box$
\end{proof}

A protocol that saturates the equality in inequality~(\ref{Aberg_inequality0}) is presented in Ref.~\cite{Aberg2013}, which is a slight modification of the protocol of Fig.~\ref{fig:equality_protocol_Renyi0}  in Section~\ref{sec:classical_work_bound}, by allowing failure of work extraction with small probability ($\leq \varepsilon$).
We also note that, by substituting $\hat \rho_{\rm S} = \hat \rho_{\rm S}^{\rm G}$, inequality~(\ref{Aberg_inequality0}) suggests  that a positive amount of work can be extracted even from the Gibbs state if a small probability of failure is allowed.


\section{Single-shot work bound: Asymptotic case}
\label{sec:work_asymptotic}

We consider  the single-shot work bound   for macroscopic systems, by taking the asymptotic limit of approximate thermodynamic processes in the sense of Definition~\ref{def:approx_singleshot_thermo}.
This is  an application of the general asymptotic theory developed in Section~\ref{sec:asymptotic} and Section~\ref{sec_condition_information}.
In terms of thermodynamics, the asymptotic limit represents the thermodynamic limit of many-body systems, where we do not necessarily assume the i.i.d. setup.
Physically, i.i.d. systems represent non-interacting systems, while we can treat interacting systems  in the following argument.
In particular, we show that if the state is ergodic and the Hamiltonian is local and translation invariant, then the KL divergence provides the work bound in a necessary and sufficient manner, which is a thermodynamic consequence of the quantum AEP (Theorem~\ref{thm:main_ergodic}).

Let $\widehat{P}_{\rm S} := \{ \hat \rho_{{\rm S}, n} \}_{n \in \mathbb N}$ and $\widehat{P}_{\rm S}' := \{ \hat \rho_{{\rm S}, n}' \}_{n \in \mathbb N}$ be sequences of states of S and  $\widehat{H}_{\rm S} := \{ \hat H_{{\rm S}, n} \}_{n \in \mathbb N}$ and $\widehat{H}_{\rm S}' := \{ \hat H_{{\rm S}, n}' \}_{n \in \mathbb N}$ be sequences of Hamiltonians of S.
Let $\widehat{\Sigma}_{\rm S} := \{ \hat \rho_{{\rm S},n}^{\rm G} \}_{n \in \mathbb N}$ and 
$\widehat{\Sigma}_{\rm S}' := \{ \hat \rho_{{\rm S},n}^{\rm G}{}' \}_{n \in \mathbb N}$ be the corresponding sequences of the Gibbs states.
Suppose that  the equilibrium free-energy rates exist,  defined as
\begin{equation}
F_{\rm S} := \lim_{n \to \infty} \frac{1}{n} F_{{\rm S},n}, \ \ \ F_{\rm S}' := \lim_{n \to \infty} \frac{1}{n} F_{{\rm S},n}',
\end{equation} 
where $F_{{\rm S},n}$ and $F_{{\rm S},n}'$ are the free energies corresponding to $\hat H_{{\rm S}, n}$ and  $\hat H_{{\rm S}, n}'$, respectively.
Let $\Delta F_{{\rm S},n} :=  F_{{\rm S},n}' -  F_{{\rm S},n}$ and $\Delta F_{\rm S} := F_{\rm S}' - F_{\rm S}$.
We then define the asymptotic limit of single-shot thermodynamic processes.

\begin{definition}[Asymptotic thermodynamic process]
Let $w \in \mathbb R$.
A sequence $\widehat{P}_{\rm S}$ is asymptotically  $w$-assisted  transformable to another sequence $\widehat{P}_{\rm S}'$ with respect to $\widehat{H}_{\rm S}$ and $\widehat{H}_{\rm S}'$, 
if there exist sequences $\{ w_n \}_{n \in \mathbb N}$ and $\{ \varepsilon_n \}_{n \in \mathbb N}$ with $w_n \in \mathbb R$ and $\varepsilon_n > 0$,
such that
$\hat \rho_{{\rm S}, n}$ is  $\varepsilon_n$-approximate $w_n$-assisted transformable to $\hat \rho_{{\rm S}, n}'$, and $\lim_{n \to \infty}w_n / n = w$, $\lim_{n \to \infty} \varepsilon_n = 0$.
\label{def:asymptotic_work}
\end{definition}

Based on this definition, we consider the work bounds.  We first state the necessary conditions corresponding to Theorem~\ref{general_single_shot_work} (a) [Proposition 5 of Ref.~\cite{Sagawa2019}].

\begin{theorem}[Necessary condition for asymptotic state conversion]
If $\widehat{P}_{\rm S}$ is asymptotically  $w$-assisted transformable to $\widehat{P}_{\rm S}'$ with respect to $\widehat{H}_{\rm S}$ and $\widehat{H}_{\rm S}'$, then
\begin{equation}
\beta ( w - \Delta F_{\rm S}) \geq \underline{S} (\widehat{P}_{\rm S}' \| \widehat{\Sigma}_{\rm S}') -  \underline{S} (\widehat{P}_{\rm S} \| \widehat{\Sigma}_{\rm S}),
\end{equation}
\begin{equation}
\beta ( w - \Delta F_{\rm S}) \geq \overline{S} (\widehat{P}_{\rm S}' \| \widehat{\Sigma}_{\rm S}') -  \overline{S} (\widehat{P}_{\rm S} \| \widehat{\Sigma}_{\rm S}).
\end{equation}
\end{theorem}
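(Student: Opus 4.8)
The plan is to reduce the asymptotic statement to the single-shot necessary conditions of Theorem~\ref{thm:asymp1_q_a} applied at each finite $n$, and then pass to the limit. By hypothesis there are sequences $\{w_n\}$ and $\{\varepsilon_n\}$ with $\varepsilon_n\to 0$ and $w_n/n\to w$ such that $\hat\rho_{{\rm S},n}$ is $\varepsilon_n$-approximate $w_n$-assisted transformable to $\hat\rho_{{\rm S},n}'$. First I would unfold Definition~\ref{def:approx_singleshot_thermo}: there exists a Gibbs-preserving map $\mathcal E_{{\rm SCW},n}$ on SCW with Hamiltonian $\hat H_{{\rm SCW},n}$ as in Eq.~(\ref{Hamiltonian_SCW}), with $\hat H_{{\rm W},n}$ of the two-level form (\ref{Hamiltonian_W}) with gap $w_n$, satisfying $D(\mathcal E_{{\rm SCW},n}(\hat\rho_n),\hat\rho_n')\le\varepsilon_n$, where $\hat\rho_n:=\hat\rho_{{\rm S},n}\otimes|0\rangle\langle 0|\otimes|E_{\rm i}\rangle\langle E_{\rm i}|$ and $\hat\rho_n':=\hat\rho_{{\rm S},n}'\otimes|1\rangle\langle 1|\otimes|E_{\rm f}\rangle\langle E_{\rm f}|$. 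In the language of approximate d-majorization of Section~\ref{sec:smooth_divergence}, this says precisely $(\hat\rho_n',\hat\rho^{\rm G}_{{\rm SCW},n})\prec^{\varepsilon_n}(\hat\rho_n,\hat\rho^{\rm G}_{{\rm SCW},n})$, since the Gibbs-preserving condition $\mathcal E_{{\rm SCW},n}(\hat\rho^{\rm G}_{{\rm SCW},n})=\hat\rho^{\rm G}_{{\rm SCW},n}$ is exactly the requirement that $\hat\sigma$ is mapped exactly.

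Next I would invoke Theorem~\ref{thm:asymp1_q_a} for the pair $(\hat\rho_n,\hat\rho^{\rm G}_{{\rm SCW},n})\to(\hat\rho_n',\hat\rho^{\rm G}_{{\rm SCW},n})$ with $\varepsilon=\varepsilon_n$. Part (a) gives, for any $\delta>0$, $S_\infty^\delta(\hat\rho_n\|\hat\rho^{\rm G}_{{\rm SCW},n})\ge S_\infty^{\varepsilon_n+\delta}(\hat\rho_n'\|\hat\rho^{\rm G}_{{\rm SCW},n})$; part (b) gives, for any $\delta$ with $\delta^2/6>\varepsilon_n$, $S_0^\delta(\hat\rho_n\|\hat\rho^{\rm G}_{{\rm SCW},n})\ge S_0^{\delta^2/6-\varepsilon_n}(\hat\rho_n'\|\hat\rho^{\rm G}_{{\rm SCW},n})+\ln(\delta^2/6)$. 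The key computational input is the tensor/scaling identity used in the proof of Theorem~\ref{general_single_shot_work}: because $\hat\rho_n$, $\hat\rho_n'$ are product states in the SCW decomposition and the Gibbs state factorizes as in Eqs.~(\ref{Gibbs_total})--(\ref{Gibbs_SCW}), for $\alpha=0,\infty$ one has the additive decomposition $S_\alpha(\hat\rho_n\|\hat\rho^{\rm G}_{{\rm SCW},n}) = S_\alpha(\hat\rho_{{\rm S},n}\|\hat\rho^{\rm G}_{{\rm S},n}) + \beta F_{{\rm S},n} + \ln(Z_{{\rm S},n}+Z_{{\rm S},n}') + \beta E_{\rm i} + \ln Z_{{\rm W},n}$, and the analogous formula for the primed state with $F_{{\rm S},n}'$ and $E_{\rm f}$. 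One must check this survives smoothing: the ``constant'' factors $|0\rangle\langle 0|\otimes|E_{\rm i}\rangle\langle E_{\rm i}|$ are pure/fixed, so smoothing $\hat\rho_n$ amounts to smoothing only its S-component (using additivity of $S_\alpha$ under tensoring with fixed states, Eq.~(\ref{quantum_Renyi_additive})), giving $S_\alpha^\varepsilon(\hat\rho_n\|\hat\rho^{\rm G}_{{\rm SCW},n}) = S_\alpha^\varepsilon(\hat\rho_{{\rm S},n}\|\hat\rho^{\rm G}_{{\rm S},n}) + (\text{the same }n\text{-dependent constants})$. Substituting and cancelling the common constants, and using $E_{\rm i}-E_{\rm f}=w_n$, the two inequalities become, with $c_n:=\ln(Z_{{\rm S},n}/Z_{{\rm S},n}')=\beta\Delta F_{{\rm S},n}$,
\begin{equation}
\beta(w_n-\Delta F_{{\rm S},n}) \ge S_\infty^{\varepsilon_n+\delta}(\hat\rho_{{\rm S},n}'\|\hat\rho^{\rm G}_{{\rm S},n}) - S_\infty^\delta(\hat\rho_{{\rm S},n}\|\hat\rho^{\rm G}_{{\rm S},n}),
\end{equation}
\begin{equation}
\beta(w_n-\Delta F_{{\rm S},n}) \ge S_0^{\delta^2/6-\varepsilon_n}(\hat\rho_{{\rm S},n}'\|\hat\rho^{\rm G}_{{\rm S},n}) + \ln(\delta^2/6) - S_0^\delta(\hat\rho_{{\rm S},n}\|\hat\rho^{\rm G}_{{\rm S},n}).
\end{equation}

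Finally I would divide by $n$ and take limits, handling the two inequalities symmetrically to the way Theorem~\ref{thm:spectral_sufficient}'s companion monotonicity theorem is proved. For the $\infty$-bound: fix $\delta>0$, take $\liminf_{n\to\infty}$ on the right after dividing by $n$ (noting $\frac1n\ln(\cdots)$-type corrections vanish and $w_n/n\to w$, $\Delta F_{{\rm S},n}/n\to\Delta F_{\rm S}$), using that $\varepsilon_n\to0$ so $\varepsilon_n+\delta\to\delta$, to obtain $\beta(w-\Delta F_{\rm S})\ge \limsup_n\frac1n S_\infty^{\delta}(\hat\rho_{{\rm S},n}'\|\hat\rho^{\rm G}_{{\rm S},n}) - \liminf_n \frac1n S_\infty^\delta(\hat\rho_{{\rm S},n}\|\hat\rho^{\rm G}_{{\rm S},n})$; some care is needed here because subtracting $\liminf$ the "wrong way" — the cleanest route is to rearrange so that the term depending on $\hat\rho_{{\rm S},n}$ is isolated, then take $\limsup$ of the $\hat\rho'$ term and (with a sign) $\liminf$ of the $\hat\rho$ term, and only then let $\delta\to+0$, which converts these into $\overline S(\widehat P_{\rm S}'\|\widehat\Sigma_{\rm S}')$ and $\overline S(\widehat P_{\rm S}\|\widehat\Sigma_{\rm S})$ respectively by Definition~\ref{def:spectral_divergence}. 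For the $0$-bound one proceeds identically but keeps the constraint $\delta^2/6>\varepsilon_n$ (valid for all large $n$ once $\delta$ is fixed and $\varepsilon_n\to0$), takes $\liminf$/$\limsup$, then $\delta\to+0$ with $\delta^2/6-\varepsilon_n\to\delta^2/6\to0$, yielding $\underline S(\widehat P_{\rm S}'\|\widehat\Sigma_{\rm S}')$ and $\underline S(\widehat P_{\rm S}\|\widehat\Sigma_{\rm S})$. I expect the main obstacle to be precisely this limit bookkeeping — getting the order of the two limits ($n\to\infty$ first, then $\delta\to+0$), the direction of the $\limsup$ versus $\liminf$ (which is forced by which state sits where in the monotonicity inequality and the minus sign), and the vanishing of the $O(\ln n)/n$ and $O(1)/n$ correction terms all consistent — rather than anything conceptually deep; once the finite-$n$ reduction to $S_\alpha^\varepsilon$ of the bare system is in place, the rest is a careful but routine passage to the limit.
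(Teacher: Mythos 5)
Your overall strategy (reduce to the finite-$n$ necessary conditions on SCW, strip off the clock/work-storage contributions, divide by $n$, take $n\to\infty$ then $\delta\to+0$) is reasonable, and the limit bookkeeping you sketch at the end does go through. But there is a genuine gap at the central step: the claimed identity $S_\alpha^\varepsilon(\hat\rho_n\|\hat\rho^{\rm G}_{{\rm SCW},n}) = S_\alpha^\varepsilon(\hat\rho_{{\rm S},n}\|\hat\rho^{\rm G}_{{\rm S},n}) + (\text{constants})$ for $\varepsilon>0$ is not justified by Eq.~(\ref{quantum_Renyi_additive}), and it is false in general. The $\varepsilon$-ball around $\hat\rho_{{\rm S},n}\otimes|0\rangle\langle 0|\otimes|E_{\rm i}\rangle\langle E_{\rm i}|$ contains states that are not of product form and that have weight on other clock/work sectors, while the reference $\hat\rho^{\rm G}_{\rm SCW}$ is spread over all sectors; hence the optimization defining the smoothed divergence on SCW does not reduce to an optimization on S alone. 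This is exactly the point the paper flags in the remark following Lemma~\ref{lemma:Sagawa_Faist_eq}: the block/sector decomposition holds for $S_0$ and $S_\infty$ but \emph{not} for $S_0^\varepsilon$ and $S_\infty^\varepsilon$ with $\varepsilon>0$, ``which is the reason why we used the hypothesis testing divergence in Theorem~\ref{thm:hypothesis_testing_work_ap}.'' One direction of your reduction is fine (restricting the smoothing to product candidates bounds the composite smoothed divergence from one side), but the other direction requires a sector-projection argument combined with subnormalized smoothing (Lemma~\ref{subnormalized_prop}) and yields only inequalities with $\varepsilon$-dependent correction terms; you assert an equality and never supply this argument, so as written the proof does not close.

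The paper's own route avoids the issue entirely: it proves the finite-$n$ bound (Theorem~\ref{thm:hypothesis_testing_work_ap}) in terms of the hypothesis testing divergence $S_{\rm H}^\eta$, for which the sector identity (Lemma~\ref{lemma:Sagawa_Faist_eq}) holds exactly and the effect of the $\varepsilon$-error is controlled by Lemma~\ref{lemma:Sagawa_Faist_ineq} at the cost of only an $O(1)$ term $\ln((\eta+\varepsilon)/\eta)$; it then converts $S_{\rm H}^{\eta}$ with $\eta\simeq 0,1$ back to $S_\infty^\varepsilon,S_0^\varepsilon$ via Proposition~\ref{Faist_prop} and takes the limit to obtain the spectral rates. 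If you want to salvage your smoothed-divergence route, you must replace your additivity equality by the two one-sided bounds (product-candidate restriction for one side, sector projection plus Lemma~\ref{subnormalized_prop} for the other), verifying that all correction terms are $O(1)$ in $n$ so that they vanish in the rate; alternatively, follow the paper and work with $S_{\rm H}^\eta$ from the start.
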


\begin{proof}
Take the limit of Theorem~\ref{thm:hypothesis_testing_work_ap}, by applying Proposition~\ref{Faist_prop}.
$\Box$
\end{proof}

We next state the sufficient condition of state conversion corresponding to Theorem~\ref{general_single_shot_work} (b) [Proposition 16 of Ref.~\cite{Sagawa2019}].

\begin{theorem}[Sufficient condition for asymptotic state conversion]
$\widehat{P}_{\rm S}$ is asymptotically  $w$-assisted transformable to $\widehat{P}_{\rm S}'$ with respect to $\widehat{H}_{\rm S}$ and $\widehat{H}_{\rm S}'$, if (but not only if)
\begin{equation}
\beta ( w - \Delta F_{\rm S}) \geq \overline{S} (\widehat{P}_{\rm S}' \| \widehat{\Sigma}_{\rm S}') -  \underline{S} (\widehat{P}_{\rm S} \| \widehat{\Sigma}_{\rm S}).
\label{spectral_sufficient_w}
\end{equation}
\label{thm:spectral_sufficient_w}
\end{theorem}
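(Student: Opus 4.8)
\textbf{Proof plan for Theorem~\ref{thm:spectral_sufficient_w}.}

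The plan is to reduce the asymptotic statement to the single-shot approximate sufficient condition, Theorem~\ref{thm:approximate_sufficient}, exactly as Theorem~\ref{thm:spectral_sufficient} was obtained from Theorem~\ref{thm:asymp1_q_b} in Section~\ref{sec:asymptotic}, but now carrying along the work parameter and the Hamiltonian sequences. First I would translate the single-shot work bound into the language of sequences: for each $n$, Theorem~\ref{thm:approximate_sufficient} tells us that $\hat\rho_{{\rm S},n}$ is $\varepsilon_n$-approximate $w_n$-assisted transformable to $\hat\rho_{{\rm S},n}'$ with respect to $\hat H_{{\rm S},n}$ and $\hat H_{{\rm S},n}'$ provided
\begin{equation}
\beta(w_n - \Delta F_{{\rm S},n}) \geq S_\infty^{\varepsilon_n/2}(\hat\rho_{{\rm S},n}' \| \hat\rho_{{\rm S},n}^{\rm G}{}') - S_0^{\varepsilon_n/2}(\hat\rho_{{\rm S},n} \| \hat\rho_{{\rm S},n}^{\rm G}).
\end{equation}
So the task is to choose sequences $\{w_n\}$ and $\{\varepsilon_n\}$ with $\varepsilon_n \to 0$ and $w_n/n \to w$ for which this inequality holds for all sufficiently large $n$, under the hypothesis~(\ref{spectral_sufficient_w}).

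The key step is the choice of these sequences. Assume strict inequality in~(\ref{spectral_sufficient_w}); write $\delta := \beta(w - \Delta F_{\rm S}) - \big(\overline{S}(\widehat{P}_{\rm S}'\|\widehat{\Sigma}_{\rm S}') - \underline{S}(\widehat{P}_{\rm S}\|\widehat{\Sigma}_{\rm S})\big) > 0$. Fix a small $\varepsilon > 0$ (to be sent to $0$ afterwards) and set $\varepsilon_n := \varepsilon$ for the moment; by the definitions of the upper and lower spectral divergence rates (Definition~\ref{def:spectral_divergence}) together with $\Delta F_{{\rm S},n}/n \to \Delta F_{\rm S}$, for all sufficiently large $n$ one has
\begin{equation}
\frac{1}{n}\Big[ S_\infty^{\varepsilon/2}(\hat\rho_{{\rm S},n}'\|\hat\rho_{{\rm S},n}^{\rm G}{}') - S_0^{\varepsilon/2}(\hat\rho_{{\rm S},n}\|\hat\rho_{{\rm S},n}^{\rm G}) \Big] \leq \overline{S}(\widehat{P}_{\rm S}'\|\widehat{\Sigma}_{\rm S}') - \underline{S}(\widehat{P}_{\rm S}\|\widehat{\Sigma}_{\rm S}) + \frac{\delta}{2},
\end{equation}
where I use that $\limsup_n \tfrac1n S_\infty^{\varepsilon/2} \leq \overline S$ and $\liminf_n \tfrac1n S_0^{\varepsilon/2} \geq \underline S$ after taking $\varepsilon \to +0$, so that for small enough fixed $\varepsilon$ the $\varepsilon$-smoothed rates are within $\delta/4$ of the limiting rates for large $n$. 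Then set $w_n := n w$ (or any integer-valued surrogate with $w_n/n \to w$); the displayed bound together with $\beta(w_n - \Delta F_{{\rm S},n})/n \to \beta(w - \Delta F_{\rm S})$ gives, for all large $n$, the single-shot sufficient inequality above with $\varepsilon_n = \varepsilon$. Now I would let $\varepsilon \to 0$ through a sequence, running the $n$-threshold argument for each and diagonalizing: choose $\varepsilon_n \to 0$ slowly enough that for each fixed small $\varepsilon$ the condition eventually holds, yielding a single sequence with $\varepsilon_n \to 0$, $w_n/n \to w$, and the single-shot condition satisfied for all sufficiently large $n$. By Definition~\ref{def:asymptotic_work} this is exactly the claim.

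The main obstacle I anticipate is the bookkeeping of the two limits ($n \to \infty$ and $\varepsilon \to +0$) and ensuring they are taken in an order compatible with the definition of the spectral divergence rates: the convergence in $n$ of $\tfrac1n S_0^{\varepsilon}$ and $\tfrac1n S_\infty^{\varepsilon}$ is in general not uniform in $\varepsilon$, so one cannot simply set $\varepsilon_n \to 0$ arbitrarily fast. The safe route is the diagonal construction sketched above: for a decreasing sequence $\varepsilon^{(k)} \to 0$, extract thresholds $N_k$ beyond which the single-shot condition holds with smoothing parameter $\varepsilon^{(k)}$, then define $\varepsilon_n := \varepsilon^{(k)}$ for $N_k \le n < N_{k+1}$. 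One should also handle the boundary: strict inequality in~(\ref{spectral_sufficient_w}) is essential (the proof genuinely fails at equality, as remarked after Theorem~\ref{thm:spectral_sufficient}), and the minor point that $w_n$ may need to be taken integer-valued (or that $\hat H_{\rm W}$ in Eq.~(\ref{Hamiltonian_W}) can accommodate an arbitrary real gap $w_n$) should be noted but causes no real difficulty.
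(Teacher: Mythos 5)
There is a genuine gap: your proof only covers the strict-inequality case, but the theorem as stated (and as the paper emphasizes in the sentence immediately following it) \emph{includes the equality case} of~(\ref{spectral_sufficient_w}). You explicitly assume $\delta := \beta(w-\Delta F_{\rm S}) - \bigl(\overline{S}(\widehat{P}_{\rm S}'\|\widehat{\Sigma}_{\rm S}') - \underline{S}(\widehat{P}_{\rm S}\|\widehat{\Sigma}_{\rm S})\bigr) > 0$ and even assert that ``strict inequality is essential'' by analogy with the remark after Theorem~\ref{thm:spectral_sufficient}. That analogy is exactly what fails here: unlike Theorem~\ref{thm:spectral_sufficient}, Definition~\ref{def:asymptotic_work} only requires $w_n / n \to w$, so one is allowed to spend a sublinear excess of work, and this extra slack is precisely what makes the boundary case provable. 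Your choice $w_n := nw$ throws that freedom away, and with it the equality case.

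The paper's route keeps the slack explicit: for every $\varepsilon>0$ and $n$ it takes the minimal $\Delta_{n,\varepsilon}\geq 0$ such that $\frac{\beta}{n}(nw - \Delta F_{{\rm S},n}) + \Delta_{n,\varepsilon} \geq \frac{1}{n} S_\infty^{\varepsilon/2}(\hat\rho_{{\rm S},n}'\|\hat\rho_{{\rm S},n}^{\rm G}{}') - \frac{1}{n} S_0^{\varepsilon/2}(\hat\rho_{{\rm S},n}\|\hat\rho_{{\rm S},n}^{\rm G})$; the definition of the spectral divergence rates together with~(\ref{spectral_sufficient_w}) (with $\geq$, not $>$) gives $\lim_{\varepsilon\to+0}\limsup_{n\to\infty}\Delta_{n,\varepsilon}=0$. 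Setting $w_{n,\varepsilon} := nw + n\Delta_{n,\varepsilon}$, Theorem~\ref{thm:approximate_sufficient} yields $\varepsilon$-approximate $w_{n,\varepsilon}$-assisted transformability, and then a diagonal argument over $\varepsilon$ (essentially the same bookkeeping you sketch, which is fine) produces a single sequence with $w_n/n \to w$ and $\varepsilon_n \to 0$. So the reduction to Theorem~\ref{thm:approximate_sufficient} and the diagonalization in your plan are sound and match the paper; what is missing is replacing the rigid choice $w_n = nw$ by a work sequence with vanishing per-copy excess, without which your argument proves a strictly weaker statement than the one asked. (Your worry about integer-valued $w_n$ is moot, since Definition~\ref{def:asymptotic_work} allows $w_n \in \mathbb{R}$.)
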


\begin{proof}
Suppose that inequality~(\ref{spectral_sufficient_w}) holds.
Then, for any $\varepsilon > 0$ and $n \in \mathbb N$, there exists $\Delta_{n, \varepsilon} \geq  0$ such that 
\begin{equation}
\frac{\beta}{n} (n w - \Delta F_{{\rm S}, n }) + \Delta_{n, \varepsilon} \geq \frac{1}{n} S_\infty^{\varepsilon / 2} ( \hat \rho_{{\rm S}, n}'  \| \hat \rho_{{\rm S},n}^{\rm G}{}'  ) -  \frac{1}{n} S_0^{\varepsilon / 2} ( \hat \rho_{{\rm S}, n} \| \hat \rho_{{\rm S},n}^{\rm G} )
\end{equation}
and 
\begin{equation}
\lim_{\varepsilon \to + 0} \limsup_{n \to \infty} \Delta_{n,\varepsilon} = 0.
\label{proof_Delta_lim}
\end{equation}
Let $w_{n, \varepsilon} := n w + n\Delta_{n, \varepsilon}$ that satisfies $\lim_{\varepsilon \to + 0} \limsup_{n \to \infty} w_{n, \varepsilon} / n = w$.
From Theorem~\ref{thm:approximate_sufficient},  $\hat \rho_{{\rm S}, n}$ is $\varepsilon$-approximate $w_{n,\varepsilon}$-assisted  transformable to $\hat \rho_{{\rm S}, n}'$.
From Lemma 13 of Ref.~\cite{Sagawa2019},  this implies that $\widehat{P}_{\rm S}$ is asymptotically  $w$-assisted  transformable to $\widehat{P}_{\rm S}'$.

For the sake of self-containedness, we here reproduce the proof of the above last step.
Let $w_\varepsilon := \limsup_{n \to \infty} w_{n, \varepsilon}/n$,
 $N(\varepsilon ) := \min \{  N  \  :  \  \forall n \geq N, w_{n,\varepsilon}  / n \leq w_\varepsilon + \varepsilon \}$,
and  $\varepsilon (n) := \inf \{ \varepsilon  :  N(\varepsilon ) \leq n \}$.
Because of the existence of the limit superior, $N(\varepsilon)$ is finite for any $\varepsilon > 0$, and thus we see that $\lim_{n \to \infty} \varepsilon (n) = 0$.
Then, by defining $w_n' := w_{n, \varepsilon (n)}$, we obtain
$\limsup_{n \to \infty} w_n'/n \leq \limsup_{n \to \infty}  ( w_{\varepsilon (n)} + \varepsilon (n) ) = w$.
Because the state convertibility is not affected by adding any positive amount of work, we can construct $w_n$ such that $\lim_{n \to \infty} w_n = w$, by adding some positive constant to $w_n'$ if necessary. 
$\Box$
\end{proof}

In contrast to Theorem~\ref{thm:spectral_sufficient}, inequality~(\ref{spectral_sufficient_w})  in the above theorem includes the equality case, thanks to the role of $w$ in Definition~\ref{def:asymptotic_work}.
The following are the two special cases corresponding to Corollary~\ref{main_single_shot_work} (a) (b).

\begin{corollary}[Asymptotic work extraction / state formation]
Let $w \in \mathbb R$.\\
(a) $\widehat{P}_{\rm S}$ is  asymptotically $w$-assisted transformable to $\widehat{\Sigma}_{\rm S}$ with  $\widehat{H}_{\rm S} = \widehat{H}_{\rm S}'$, if and only if 
\begin{equation}
- \beta w \leq \underline{S} (\widehat{P}_{\rm S} \| \widehat{\Sigma}_{\rm S}).
\end{equation}
(b)
$\widehat{\Sigma}_{\rm S}$ is asymptotically   $w$-assisted  transformable to $\widehat{P}_{\rm S}'$  with  $\widehat{H}_{\rm S} = \widehat{H}_{\rm S}'$, if and only if 
\begin{equation}
\beta w \geq\overline{S} (\widehat{P}_{\rm S}' \| \widehat{\Sigma}_{\rm S}) .
\end{equation}
\end{corollary}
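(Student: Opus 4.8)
The plan is to deduce the two equivalences as special cases of the general asymptotic necessary and sufficient conditions already established, namely Theorem~\ref{thm:spectral_sufficient_w} and the preceding necessary-condition theorem, specialized to the situation $\widehat{H}_{\rm S} = \widehat{H}_{\rm S}'$ (so $\Delta F_{\rm S} = 0$) together with the appropriate trivial choices of the target or source sequence. For part (a), I would set $\widehat{P}_{\rm S}' := \widehat{\Sigma}_{\rm S}$, so that each $\hat\rho_{{\rm S},n}' = \hat\rho_{{\rm S},n}^{\rm G}$. For part (b), I would set $\widehat{P}_{\rm S} := \widehat{\Sigma}_{\rm S}$, so that each $\hat\rho_{{\rm S},n} = \hat\rho_{{\rm S},n}^{\rm G}$. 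In both cases the key elementary input is that the spectral divergence rates of a Gibbs sequence against itself vanish.

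The first step is therefore to record that $\overline{S}(\widehat{\Sigma}_{\rm S} \| \widehat{\Sigma}_{\rm S}) = \underline{S}(\widehat{\Sigma}_{\rm S} \| \widehat{\Sigma}_{\rm S}) = 0$. This follows because $S_\alpha(\hat\rho_{{\rm S},n}^{\rm G} \| \hat\rho_{{\rm S},n}^{\rm G}) = 0$ for $\alpha = 0, \infty$ (by the equality conditions in the non-negativity statements for $S_0$ and $S_\infty$, since the supports coincide and the states are equal), and the smooth versions $S_\alpha^\varepsilon$ are then $O(1)$ in $n$ for small $\varepsilon$ (one may take $\hat\tau = \hat\rho_{{\rm S},n}^{\rm G}$ in the optimization, or invoke the monotonicity-in-$\varepsilon$ together with Theorem~\ref{subnormalized_prop}-type estimates), so dividing by $n$ and taking limits gives $0$. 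With $\widehat{H}_{\rm S}=\widehat{H}_{\rm S}'$ we have $F_{{\rm S},n}' = F_{{\rm S},n}$ and hence $\Delta F_{\rm S} = 0$.

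For part (a): the necessary direction follows from the necessary-condition theorem with $\widehat{P}_{\rm S}' = \widehat{\Sigma}_{\rm S}$, which gives $\beta w \geq \underline{S}(\widehat{\Sigma}_{\rm S}\|\widehat{\Sigma}_{\rm S}) - \underline{S}(\widehat{P}_{\rm S}\|\widehat{\Sigma}_{\rm S}) = -\underline{S}(\widehat{P}_{\rm S}\|\widehat{\Sigma}_{\rm S})$, i.e.\ $-\beta w \leq \underline{S}(\widehat{P}_{\rm S}\|\widehat{\Sigma}_{\rm S})$; the sufficient direction follows from Theorem~\ref{thm:spectral_sufficient_w} with the same substitution, which requires $\beta w \geq \overline{S}(\widehat{\Sigma}_{\rm S}\|\widehat{\Sigma}_{\rm S}) - \underline{S}(\widehat{P}_{\rm S}\|\widehat{\Sigma}_{\rm S}) = -\underline{S}(\widehat{P}_{\rm S}\|\widehat{\Sigma}_{\rm S})$, which is exactly the hypothesis $-\beta w \leq \underline{S}(\widehat{P}_{\rm S}\|\widehat{\Sigma}_{\rm S})$. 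Part (b) is entirely parallel with $\widehat{P}_{\rm S} = \widehat{\Sigma}_{\rm S}$: necessity gives $\beta w \geq \overline{S}(\widehat{P}_{\rm S}'\|\widehat{\Sigma}_{\rm S}) - \overline{S}(\widehat{\Sigma}_{\rm S}\|\widehat{\Sigma}_{\rm S}) = \overline{S}(\widehat{P}_{\rm S}'\|\widehat{\Sigma}_{\rm S})$, and sufficiency via Theorem~\ref{thm:spectral_sufficient_w} needs $\beta w \geq \overline{S}(\widehat{P}_{\rm S}'\|\widehat{\Sigma}_{\rm S}) - \underline{S}(\widehat{\Sigma}_{\rm S}\|\widehat{\Sigma}_{\rm S}) = \overline{S}(\widehat{P}_{\rm S}'\|\widehat{\Sigma}_{\rm S})$, matching the stated condition. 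The only mild subtlety — and the one point I would check carefully — is that the necessary-condition theorem is stated for $0 < \eta < 1-\varepsilon$ and passes through the hypothesis-testing divergence and Proposition~\ref{Faist_prop}; I must ensure the specialization to a Gibbs target/source does not degenerate the $\eta$-range, but since the relevant spectral rates are finite this causes no difficulty. Everything else is bookkeeping of the substitutions and the vanishing of $\Delta F_{\rm S}$ and of the Gibbs self-divergence rates.
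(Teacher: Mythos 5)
Your proposal is correct and follows essentially the same route the paper intends: the corollary is obtained by specializing the asymptotic necessary-condition theorem and Theorem~\ref{thm:spectral_sufficient_w} to $\widehat{P}_{\rm S}'=\widehat{\Sigma}_{\rm S}$ (resp.\ $\widehat{P}_{\rm S}=\widehat{\Sigma}_{\rm S}$) with $\Delta F_{\rm S}=0$, using that $\underline{S}(\widehat{\Sigma}_{\rm S}\|\widehat{\Sigma}_{\rm S})\geq 0$ and $\overline{S}(\widehat{\Sigma}_{\rm S}\|\widehat{\Sigma}_{\rm S})\leq 0$ (hence both vanish), just as the exact-case Corollary~\ref{main_single_shot_work} specializes Theorem~\ref{general_single_shot_work}. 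Your explicit verification that the Gibbs self-divergence rates vanish is a detail the paper leaves implicit, and it is argued correctly.
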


Finally, we state a necessary and sufficient characterization of state conversion in terms of the work bound for the case where  the upper and lower spectral divergence rates collapse to a single value for the initial and final states.

\begin{corollary}
Suppose that $\overline{S} (\widehat{P}_{\rm S} \| \widehat{\Sigma}_{\rm S}) = \underline{S} (\widehat{P}_{\rm S} \| \widehat{\Sigma}_{\rm S}) =: S (\widehat{P}_{\rm S} \| \widehat{\Sigma}_{\rm S})$ and $\overline{S} (\widehat{P}_{\rm S}' \| \widehat{\Sigma}_{\rm S}') = \underline{S} (\widehat{P}_{\rm S}' \| \widehat{\Sigma}_{\rm S}') =: S (\widehat{P}_{\rm S}' \| \widehat{\Sigma}_{\rm S}') $. 
Then, $\widehat{P}_{\rm S}$ is asymptotically  $w$-assisted transformable to $\widehat{P}_{\rm S}'$ with respect to $\widehat{H}_{\rm S}$ and $\widehat{H}_{\rm S}'$, if and only if
\begin{equation}
\beta ( w - \Delta F_{\rm S}) \geq S (\widehat{P}_{\rm S}' \| \widehat{\Sigma}_{\rm S}') -  S (\widehat{P}_{\rm S} \| \widehat{\Sigma}_{\rm S}).
\end{equation}
\label{spectral_sufficient_w_cor}
\end{corollary}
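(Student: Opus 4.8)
The plan is to derive Corollary~\ref{spectral_sufficient_w_cor} as an immediate consequence of the two preceding theorems, Theorem~\ref{thm:spectral_sufficient_w} (sufficiency) and the necessary-condition theorem stated just before it, together with the collapse hypothesis. Under the assumption that the upper and lower spectral divergence rates coincide for both $(\widehat{P}_{\rm S}, \widehat{\Sigma}_{\rm S})$ and $(\widehat{P}_{\rm S}', \widehat{\Sigma}_{\rm S}')$, all four quantities $\overline{S}(\widehat{P}_{\rm S}\|\widehat{\Sigma}_{\rm S})$, $\underline{S}(\widehat{P}_{\rm S}\|\widehat{\Sigma}_{\rm S})$, $\overline{S}(\widehat{P}_{\rm S}'\|\widehat{\Sigma}_{\rm S}')$, $\underline{S}(\widehat{P}_{\rm S}'\|\widehat{\Sigma}_{\rm S}')$ reduce to just $S(\widehat{P}_{\rm S}\|\widehat{\Sigma}_{\rm S})$ and $S(\widehat{P}_{\rm S}'\|\widehat{\Sigma}_{\rm S}')$. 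The right-hand sides of the necessary condition and of the sufficient condition~(\ref{spectral_sufficient_w}) then both become $S(\widehat{P}_{\rm S}'\|\widehat{\Sigma}_{\rm S}') - S(\widehat{P}_{\rm S}\|\widehat{\Sigma}_{\rm S})$, so the gap between the necessary and sufficient bounds closes and one obtains an exact if-and-only-if statement.

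Concretely, I would proceed as follows. First, for the ``only if'' direction: assume $\widehat{P}_{\rm S}$ is asymptotically $w$-assisted transformable to $\widehat{P}_{\rm S}'$ with respect to $\widehat{H}_{\rm S}$ and $\widehat{H}_{\rm S}'$. Apply the necessary-condition theorem, e.g.\ the inequality $\beta(w - \Delta F_{\rm S}) \geq \underline{S}(\widehat{P}_{\rm S}'\|\widehat{\Sigma}_{\rm S}') - \underline{S}(\widehat{P}_{\rm S}\|\widehat{\Sigma}_{\rm S})$ (or equivalently the one with $\overline{S}$). By the collapse hypothesis this right-hand side equals $S(\widehat{P}_{\rm S}'\|\widehat{\Sigma}_{\rm S}') - S(\widehat{P}_{\rm S}\|\widehat{\Sigma}_{\rm S})$, giving the claimed inequality. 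Second, for the ``if'' direction: assume $\beta(w - \Delta F_{\rm S}) \geq S(\widehat{P}_{\rm S}'\|\widehat{\Sigma}_{\rm S}') - S(\widehat{P}_{\rm S}\|\widehat{\Sigma}_{\rm S})$. Rewrite the right-hand side, again using the collapse hypothesis, as $\overline{S}(\widehat{P}_{\rm S}'\|\widehat{\Sigma}_{\rm S}') - \underline{S}(\widehat{P}_{\rm S}\|\widehat{\Sigma}_{\rm S})$, so that the hypothesis of Theorem~\ref{thm:spectral_sufficient_w} is satisfied; that theorem then yields asymptotic $w$-assisted transformability. This completes both directions.

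The point worth emphasizing — and the only genuinely non-routine ingredient — is that, unlike Theorem~\ref{thm:d_asym_q} for bare asymptotic d-majorization (where the equality case $S(\widehat{P}' \| \widehat{\Sigma}') = S(\widehat{P} \| \widehat{\Sigma})$ must be excluded), here the equality case is \emph{included} in inequality~(\ref{spectral_sufficient_w}) of Theorem~\ref{thm:spectral_sufficient_w}. This is precisely because Definition~\ref{def:asymptotic_work} allows $w_n$ to approach $nw$ from above with vanishing error $\varepsilon_n$: one has the freedom to add an arbitrarily small positive work rate and still recover the limiting rate $w$, which is exactly the slack exploited in the proof of Theorem~\ref{thm:spectral_sufficient_w} via the $\Delta_{n,\varepsilon}$ construction. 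So I would remind the reader of this contrast and cite Theorem~\ref{thm:spectral_sufficient_w} for the fact that the equality is permitted. No new estimates are needed; the whole proof is a short syllogism combining the two theorems with the hypothesis, and the main (modest) obstacle is simply to present the bookkeeping of which of $\overline{S}$ and $\underline{S}$ appears on each side cleanly enough that the collapse makes the two bounds visibly coincide. A one-line proof along the lines of ``Combine the necessary-condition theorem and Theorem~\ref{thm:spectral_sufficient_w}, noting that under the assumption all spectral divergence rates collapse so that the necessary and sufficient bounds coincide; the equality case is covered by Theorem~\ref{thm:spectral_sufficient_w}. $\Box$'' suffices.
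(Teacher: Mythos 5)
Your proposal is correct and is exactly the intended derivation: the paper states this corollary without a separate proof precisely because it follows by combining the necessary-condition theorem and Theorem~\ref{thm:spectral_sufficient_w} under the collapse hypothesis, with the equality case covered by the latter. Nothing is missing.
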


The above corollary implies that the upper and lower spectral divergence rates give a complete thermodynamic potential that can be defined for out-of-equilibrium situations. 
Correspondingly, we can introduce the nonequilibrium free-energy rate by
\begin{equation}
F(\widehat{P}_{\rm S}; \widehat{H}_{\rm S}) := \beta^{-1}S (\widehat{P}_{\rm S} \| \widehat{\Sigma}_{\rm S}) + F_{\rm S}.
\end{equation}
This illustrates the  significance of information spectrum in thermodynamics.

We now consider a many-body quantum spin system on a lattice $\mathbb Z^d$ as in Section \ref{sec_condition_information}, where
 an explicit condition of the collapse of the upper and lower spectral divergence rates has been shown in  Theorem~\ref{thm:main_ergodic}.
Combining it with Corollary~\ref{spectral_sufficient_w_cor} above, we have the following asymptotic work bound, which states that the complete thermodynamic potential emerges and is given by the KL divergence rate even for out-of-equilibrium and quantum situations, if the state is ergodic and the Hamiltonian is local and translation invariant.

\begin{corollary}
Suppose that $\widehat{P}_{\rm S}$ and $\widehat{P}_{\rm S}'$ are translation invariant and ergodic, and that the Hamiltonians $\widehat{H}_{\rm S}$ and $\widehat{H}_{\rm S}'$ are local and translation invariant with the corresponding Gibbs states $\widehat{\Sigma}_{\rm S}$ and $\widehat{\Sigma}_{\rm S}'$.
Then,  $\widehat{P}_{\rm S}$ is asymptotically  $w$-assisted  transformable to $\widehat{P}_{\rm S}'$ with respect to $\widehat{H}_{\rm S}$ and $\widehat{H}_{\rm S}'$, if and only if
\begin{equation}
\beta ( w - \Delta F_{\rm S}) \geq S_1 (\widehat{P}_{\rm S}' \| \widehat{\Sigma}_{\rm S}') -  S_1 (\widehat{P}_{\rm S} \| \widehat{\Sigma}_{\rm S}).
\end{equation}
\label{co:q_themro_macro}
\end{corollary}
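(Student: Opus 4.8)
The plan is to derive Corollary~\ref{co:q_themro_macro} as an immediate combination of two results already established in the excerpt: the relative quantum AEP (Theorem~\ref{thm:main_ergodic}) and the asymptotic work bound expressed through the spectral divergence rates (Corollary~\ref{spectral_sufficient_w_cor}). First I would observe that the hypotheses of Corollary~\ref{co:q_themro_macro} are precisely the hypotheses of Theorem~\ref{thm:main_ergodic} applied separately to the pair $(\widehat{P}_{\rm S}, \widehat{\Sigma}_{\rm S})$ and to the pair $(\widehat{P}_{\rm S}', \widehat{\Sigma}_{\rm S}')$: in each case the state sequence is translation invariant and ergodic, and the reference sequence is the Gibbs state of a local, translation-invariant Hamiltonian. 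Hence Theorem~\ref{thm:main_ergodic} gives
\begin{equation}
\overline{S}(\widehat{P}_{\rm S} \| \widehat{\Sigma}_{\rm S}) = \underline{S}(\widehat{P}_{\rm S} \| \widehat{\Sigma}_{\rm S}) = S_1(\widehat{P}_{\rm S} \| \widehat{\Sigma}_{\rm S}),
\end{equation}
and likewise with primes throughout. In particular the upper and lower spectral divergence rates of both pairs exist and collapse to the corresponding KL divergence rates; a preliminary remark is that the KL divergence rates themselves exist here, which is part of the content of the relative AEP.

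Next I would invoke Corollary~\ref{spectral_sufficient_w_cor}, whose hypothesis is exactly that $\overline{S} (\widehat{P}_{\rm S} \| \widehat{\Sigma}_{\rm S}) = \underline{S} (\widehat{P}_{\rm S} \| \widehat{\Sigma}_{\rm S})$ and $\overline{S} (\widehat{P}_{\rm S}' \| \widehat{\Sigma}_{\rm S}') = \underline{S} (\widehat{P}_{\rm S}' \| \widehat{\Sigma}_{\rm S}')$. Having just verified this from Theorem~\ref{thm:main_ergodic}, I apply Corollary~\ref{spectral_sufficient_w_cor} to conclude that $\widehat{P}_{\rm S}$ is asymptotically $w$-assisted transformable to $\widehat{P}_{\rm S}'$ with respect to $\widehat{H}_{\rm S}$ and $\widehat{H}_{\rm S}'$ if and only if
\begin{equation}
\beta(w - \Delta F_{\rm S}) \geq S(\widehat{P}_{\rm S}' \| \widehat{\Sigma}_{\rm S}') - S(\widehat{P}_{\rm S} \| \widehat{\Sigma}_{\rm S}),
\end{equation}
where $S(\widehat{P}_{\rm S} \| \widehat{\Sigma}_{\rm S})$ denotes the common value of the two spectral divergence rates. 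Finally I substitute the identifications $S(\widehat{P}_{\rm S} \| \widehat{\Sigma}_{\rm S}) = S_1(\widehat{P}_{\rm S} \| \widehat{\Sigma}_{\rm S})$ and $S(\widehat{P}_{\rm S}' \| \widehat{\Sigma}_{\rm S}') = S_1(\widehat{P}_{\rm S}' \| \widehat{\Sigma}_{\rm S}')$ from the relative AEP, obtaining exactly the claimed inequality $\beta(w - \Delta F_{\rm S}) \geq S_1(\widehat{P}_{\rm S}' \| \widehat{\Sigma}_{\rm S}') - S_1(\widehat{P}_{\rm S} \| \widehat{\Sigma}_{\rm S})$.

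Since both ingredients are quoted as already-proved theorems in the excerpt, there is essentially no genuine obstacle in the proof of this corollary itself; the work has been done upstream. The only point requiring a little care is bookkeeping: one must check that the two pairs $(\widehat{P}_{\rm S}, \widehat{\Sigma}_{\rm S})$ and $(\widehat{P}_{\rm S}', \widehat{\Sigma}_{\rm S}')$ each independently satisfy all the structural assumptions of Theorem~\ref{thm:main_ergodic} (ergodicity and translation invariance of the state, locality and translation invariance of the Hamiltonian), and that the existence of the equilibrium free-energy rates $F_{\rm S}$, $F_{\rm S}'$ assumed throughout Section~\ref{sec:work_asymptotic} is compatible with the local translation-invariant setup — which it is, since the free energy per site converges for such Hamiltonians. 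I would present the proof in two or three lines: verify the hypotheses, cite Theorem~\ref{thm:main_ergodic} to get the collapse to the KL divergence rates, then cite Corollary~\ref{spectral_sufficient_w_cor} and substitute. If a fully self-contained write-up is wanted, the only thing worth spelling out is the remark that the translation invariance and ergodicity of $\widehat{P}_{\rm S}'$ are used only through Theorem~\ref{thm:main_ergodic} applied to the ``target'' pair, exactly as they are used for the ``source'' pair, so the symmetry between initial and final sequences in the statement is genuine.
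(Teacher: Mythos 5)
Your proposal is correct and follows exactly the paper's own route: the corollary is stated there as an immediate consequence of combining Theorem~\ref{thm:main_ergodic} (relative quantum AEP, which collapses the upper and lower spectral divergence rates of each pair to the KL divergence rate) with Corollary~\ref{spectral_sufficient_w_cor}. Your bookkeeping remarks about verifying the hypotheses for both the source and target pairs are sound and add nothing beyond what the paper implicitly does.
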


This may provide  an information-theoretic and statistical-mechanical foundation of, and furthermore a nonequilibrium generalization of, the phenomenological  thermodynamics of Lieb and Yngvason~\cite{Lieb1999} where the entropy is a  complete monotone of equilibrium transitions.

The emergence of the complete thermodynamic potential ensures the reversibility of thermodynamic transformations in the macroscopic limit.
As an illustrative situation, let us consider a simple cycle in the single-shot scenario: one first transforms  an equilibrium state $\hat \rho_{\rm S}^{\rm G}$ to a nonequilibrium state $\hat \rho_{\rm S}$, and then restores $\hat \rho_{\rm S}$ to $\hat \rho_{\rm S}^{\rm G}$.  As shown in Corollary \ref{main_single_shot_work}, this cycle requires work of $S_\infty (\hat \rho_{\rm S} \| \hat \rho_{\rm S}^{\rm G} ) - S_0 (\hat \rho_{\rm S} \| \hat \rho_{\rm S}^{\rm G} )$, which is positive if $\hat \rho_{\rm S}$ is out of equilibrium.
This implies that it is in general impossible to obtain a single thermodynamic potential that can completely characterize state convertibility in the single-shot scenario.
This is also contrastive to conventional thermodynamics for equilibrium transitions, in which a crucial postulate is that one can perform a reversible cyclic operation without remaining any effect on the outside world.
On the other hand, Corollary \ref{co:q_themro_macro} reveals that a thermodynamic potential emerges in the asymptotic limit, which is related to the concept of reversibility  in resource theory~\cite{Horodecki2002,Brandao2015b}.

We also note that  one can see a characteristic of the single-shot scenario in light of the asymptotic theory;  Thanks to the single-shot formulation, the work automatically becomes a deterministic quantity in the asymptotic limit, which is a desirable property in macroscopic thermodynamics.

Furthermore, as shown in Refs.~\cite{Sagawa2019,Faist2019}, the operation in Corollary \ref{co:q_themro_macro} can be replaced by an asymptotic thermal operation even in the fully quantum case, if the aid of a small amount of quantum coherence is available (Theorem 2 of ~Ref.~\cite{Sagawa2019}).
This implies that thermal operation can work even in the fully quantum regime, if we take the asymptotic limit.
The key of the proof is the fact that any ergodic state (more generally, any state with which the upper and lower spectral divergence rates are close) has a small coherence in the energy basis (Lemma 4 of Ref.~\cite{Sagawa2019}).

\section{Trace-nonincreasing formulation}
\label{sec:trace_nonincreasing}

As a side remark, we consider a way to ``trace out'' the clock degrees of freedom from our formulation introduced in Section~\ref{sec:clock_work}, where the notion of trace-nonincreasing naturally appears as a consequence of the assumption that the clock does not necessarily work perfectly.

For simplicity, we ignore  the work storage W and only focus on the role of the clock C.
Suppose that the Hamiltonian is given by $\hat H_{\rm SC}$ of the form Eq.~(\ref{Hamiltonian_SCW}), 
where the corresponding Gibbs state $\hat \rho_{\rm SC}$ is given by Eq.~(\ref{Gibbs_SCW}).

\begin{lemma}
Let $\mathcal E_{\rm SC}$ be a Gibbs-preserving map with the Hamiltonian $\hat H_{\rm SC}$ of Eq.~(\ref{Hamiltonian_SCW}).
Then,
\begin{equation}
\mathcal E_{\rm S} (\hat \rho_{\rm S}) := \langle 1 | \mathcal E_{\rm SC} (\hat \rho_{\rm S} \otimes | 0 \rangle \langle 0 | ) | 1 \rangle
\label{trace_nonincreasing_Gibbs}
\end{equation}
is CP and trace-nonincreasing, and satisfies
\begin{equation}
\mathcal E_{\rm S} ( e^{-\beta \hat H_{\rm S}} ) \leq  e^{-\beta \hat H_{\rm S}'}.
\label{GMP_H_change}
\end{equation}
\label{lemma:GMP_H_change}
\end{lemma}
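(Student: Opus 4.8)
The plan is to verify the three claims about $\mathcal E_{\rm S}$ in turn: that it is completely positive, that it is trace-nonincreasing, and that it satisfies the operator inequality~(\ref{GMP_H_change}). The first claim is essentially immediate: $\mathcal E_{\rm S}$ is the composition of three manifestly CP operations, namely (i) appending the fixed state $|0\rangle\langle 0|$ to $\hat\rho_{\rm S}$, which is CP; (ii) applying the CPTP map $\mathcal E_{\rm SC}$; and (iii) the ``conditioning'' map $\hat X \mapsto \langle 1|\hat X|1\rangle$, which is CP because it can be written with a single Kraus operator $\hat M = \hat I_{\rm S}\otimes \langle 1|$ (in the sense $\hat M \hat X \hat M^\dagger$). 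A composition of CP maps is CP, so $\mathcal E_{\rm S}$ is CP.

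For the trace-nonincreasing property, I would compute, for $\hat\rho_{\rm S}\geq 0$,
\begin{equation}
{\rm tr}[\mathcal E_{\rm S}(\hat\rho_{\rm S})] = {\rm tr}\big[\langle 1|\mathcal E_{\rm SC}(\hat\rho_{\rm S}\otimes|0\rangle\langle 0|)|1\rangle\big] = \langle 1|_{\rm C}\,{\rm tr}_{\rm S}\big[\mathcal E_{\rm SC}(\hat\rho_{\rm S}\otimes|0\rangle\langle 0|)\big]\,|1\rangle_{\rm C}.
\end{equation}
Since $\mathcal E_{\rm SC}$ is trace-preserving, the total trace of $\mathcal E_{\rm SC}(\hat\rho_{\rm S}\otimes|0\rangle\langle 0|)$ equals ${\rm tr}[\hat\rho_{\rm S}]$, and ${\rm tr}[\mathcal E_{\rm S}(\hat\rho_{\rm S})]$ picks out only the ``$|1\rangle\langle 1|$ block'' of the reduced state of C, whose trace is at most the full trace because the operator is positive. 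Hence ${\rm tr}[\mathcal E_{\rm S}(\hat\rho_{\rm S})]\leq {\rm tr}[\hat\rho_{\rm S}]$, which is the trace-nonincreasing condition.

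The main work is the operator inequality~(\ref{GMP_H_change}). Here I would use that $\mathcal E_{\rm SC}$ preserves the Gibbs state $\hat\rho_{\rm SC}^{\rm G}$ given by Eq.~(\ref{Gibbs_SCW}), i.e.
\begin{equation}
\mathcal E_{\rm SC}\!\left(\frac{Z_{\rm S}}{Z_{\rm S}+Z_{\rm S}'}\,\hat\rho_{\rm S}^{\rm G}\otimes|0\rangle\langle 0| + \frac{Z_{\rm S}'}{Z_{\rm S}+Z_{\rm S}'}\,\hat\rho_{\rm S}^{\rm G}{}'\otimes|1\rangle\langle 1|\right) = \hat\rho_{\rm SC}^{\rm G}.
\end{equation}
The strategy is: from positivity of $\mathcal E_{\rm SC}$, the image of $\hat\rho_{\rm S}^{\rm G}\otimes|0\rangle\langle 0|$ alone is a positive operator that is bounded above by $\tfrac{Z_{\rm S}+Z_{\rm S}'}{Z_{\rm S}}\hat\rho_{\rm SC}^{\rm G}$ (since $\hat\rho_{\rm S}^{\rm G}{}'\otimes|1\rangle\langle 1|\geq 0$ and $\mathcal E_{\rm SC}$ maps positive operators to positive operators, subtracting a positive multiple and applying $\mathcal E_{\rm SC}$ preserves the order). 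Then apply the conditioning map $\langle 1|\cdot|1\rangle$, which is positive and hence monotone, to get
\begin{equation}
\langle 1|\,\mathcal E_{\rm SC}(\hat\rho_{\rm S}^{\rm G}\otimes|0\rangle\langle 0|)\,|1\rangle \;\leq\; \frac{Z_{\rm S}+Z_{\rm S}'}{Z_{\rm S}}\,\langle 1|\hat\rho_{\rm SC}^{\rm G}|1\rangle \;=\; \frac{Z_{\rm S}+Z_{\rm S}'}{Z_{\rm S}}\cdot\frac{Z_{\rm S}'}{Z_{\rm S}+Z_{\rm S}'}\,\hat\rho_{\rm S}^{\rm G}{}' \;=\; \frac{Z_{\rm S}'}{Z_{\rm S}}\,\hat\rho_{\rm S}^{\rm G}{}'.
\end{equation}
Finally multiply both sides by $Z_{\rm S}$ and use $\hat\rho_{\rm S}^{\rm G} = e^{-\beta\hat H_{\rm S}}/Z_{\rm S}$, $\hat\rho_{\rm S}^{\rm G}{}' = e^{-\beta\hat H_{\rm S}'}/Z_{\rm S}'$, which gives exactly $\mathcal E_{\rm S}(e^{-\beta\hat H_{\rm S}}) \leq e^{-\beta\hat H_{\rm S}'}$. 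The step I expect to need the most care is the bound $\mathcal E_{\rm SC}(\hat\rho_{\rm S}^{\rm G}\otimes|0\rangle\langle 0|)\leq \tfrac{Z_{\rm S}+Z_{\rm S}'}{Z_{\rm S}}\hat\rho_{\rm SC}^{\rm G}$: one must phrase it as ``$\hat A, \hat B\geq 0$ with $\hat A+\hat B = \hat C$ implies $\hat A\leq\hat C$,'' applied after noting that $\mathcal E_{\rm SC}$ being positive preserves such inequalities, and then keeping the normalization constants straight. Everything else is bookkeeping with the block structure of $\hat\rho_{\rm SC}^{\rm G}$.
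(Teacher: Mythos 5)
Your proposal is correct and follows essentially the same route as the paper: the inequality~(\ref{GMP_H_change}) is obtained by combining Gibbs preservation of $\mathcal E_{\rm SC}$ with positivity (the paper writes out the block decomposition $\mathcal E_{\rm SC}(\hat\rho_{\rm S}^{\rm G}\otimes|0\rangle\langle 0|)=\sum_{ij}\hat\sigma_{ij}\otimes|i\rangle\langle j|$ and drops the positive operator $\hat\sigma_{11}'$, which is exactly your ``discard a positive piece and use order preservation'' step in slightly different packaging). Your explicit verification of complete positivity and the trace-nonincreasing property, which the paper leaves implicit, is also correct.
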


\begin{proof}
We  prove inequality~(\ref{GMP_H_change}).
Let 
$\mathcal E_{\rm SC} (\hat \rho_{\rm S}^{\rm G} \otimes  | 0 \rangle \langle 0 | )= \sum_{i,j = 0,1} \hat \sigma_{ij} \otimes | i \rangle \langle j |$
and
$\mathcal E_{\rm SC} (\hat \rho_{\rm S}^{\rm G}{}' \otimes  | 1 \rangle \langle 1 | )= \sum_{i,j = 0,1} \hat \sigma_{ij}' \otimes | i \rangle \langle j |$.
Note that $\mathcal E_{\rm S} (\hat \rho_{\rm S}^{\rm G} ) = \hat \sigma_{11}$.
From Eq.~(\ref{Gibbs_SCW}), we have
\begin{equation}
\mathcal E_{\rm SC} ( \hat \rho_{\rm SC}^{\rm G} ) = \frac{Z_{\rm S}}{Z_{\rm S} + Z_{\rm S}'}\sum_{i,j = 0,1} \hat \sigma_{ij} \otimes | i \rangle \langle j | +  \frac{Z_{\rm S}'}{Z_{\rm S} + Z_{\rm S}'} \sum_{i,j = 0,1} \hat \sigma_{ij}' \otimes | i \rangle \langle j |,
\end{equation}
which equals $\hat \rho_{\rm SC}^{\rm G}$ because $\mathcal E_{\rm SC}$ is Gibbs-preserving.
Thus, we find that 
$Z_{\rm S} \hat \sigma_{00} + Z_{\rm S}' \hat \sigma_{00}' = Z_{\rm S} \hat \rho_{\rm S}^{\rm G}$, $Z_{\rm S} \hat \sigma_{11} + Z_{\rm S}' \hat \sigma_{11}' = Z_{\rm S}' \hat \rho_{\rm S}^{\rm G}{}'$,
$Z_{\rm S} \hat \sigma_{01} + Z_{\rm S}' \hat \sigma_{01}' =0$, 
$Z_{\rm S} \hat \sigma_{10} + Z_{\rm S}' \hat \sigma_{10}' = 0$.
Because $\mathcal E_{\rm SC}$ is CPTP, $\hat \sigma_{11}' \geq 0$.
Thus, 
we obtain $Z_{\rm S} \hat \sigma_{11} \leq Z_{\rm S}' \hat \rho_{\rm S}^{\rm G}{}'$, which implies inequality~(\ref{GMP_H_change}).
$\Box$
\end{proof}

If the clock works perfectly for all the initial states, i.e., if  for any $\hat \rho_{\rm S}$ there exists $\hat \rho_{\rm S}'$  such that 
\begin{equation}
\mathcal E_{\rm SC} (\hat \rho_{\rm S} \otimes | 0 \rangle \langle 0 |) = \hat \rho_{\rm S}' \otimes | 1 \rangle \langle 1 |
\label{GMP_H_change_eq}
\end{equation}
holds, then $\mathcal E_{\rm S} : \hat \rho_{\rm S} \mapsto \hat \rho_{\rm S}' $ is TP.
We have inequality~(\ref{GMP_H_change}) also in this case.
However, the condition (\ref{GMP_H_change_eq}) for \textit{all} $\hat \rho_{\rm S}$ is very strong, which is not necessarily satisfied.
If fact,  if  $\mathcal E_{\rm S}$ is TP, we have
${\rm tr}[e^{-\beta \hat H_{\rm S}}] \leq {\rm tr}[e^{-\beta \hat H_{\rm S}'}]$ from~(\ref{GMP_H_change}), or equivalently $F_{\rm S} \geq F_{\rm S}'$, which is not necessarily satisfied in thermodynamic processes.

The converse of Lemma~\ref{lemma:GMP_H_change} is also true in the following sense.

\begin{lemma}
For any CP and trace-nonincreasing map $\mathcal E_{\rm S}$ satisfying inequality~(\ref{GMP_H_change}), 
there exists a CPTP Gibbs-preserving map of SC with the Hamiltonian $\hat H_{\rm SC}$, written as $\mathcal E_{\rm SC}$, such that Eq.~(\ref{trace_nonincreasing_Gibbs}) holds.
\label{lemma:GMP_H_change_converse}
\end{lemma}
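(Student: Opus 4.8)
\textbf{Proof plan for Lemma~\ref{lemma:GMP_H_change_converse}.}
The plan is to construct $\mathcal E_{\rm SC}$ explicitly from $\mathcal E_{\rm S}$ so that it acts as $\mathcal E_{\rm S}$ on the ``$|0\rangle\langle 0|$'' sector (moving the clock to $|1\rangle$), acts trivially on the ``$|1\rangle\langle 1|$'' sector, and has its ``leftover trace'' dumped into a fixed state compatible with the Gibbs condition. Concretely, write the trace-nonincreasing $\mathcal E_{\rm S}$ in Kraus form $\mathcal E_{\rm S}(\hat\rho_{\rm S})=\sum_k \hat M_k \hat\rho_{\rm S}\hat M_k^\dagger$ with $\sum_k \hat M_k^\dagger \hat M_k \leq \hat I_{\rm S}$, so there is a complementary operator $\hat N$ with $\hat N^\dagger \hat N = \hat I_{\rm S} - \sum_k \hat M_k^\dagger \hat M_k \geq 0$. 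The first step is to define a candidate map $\mathcal E_{\rm SC}$ whose Kraus operators are, schematically, $\hat M_k \otimes |1\rangle\langle 0|$ (transfer the clock, apply $\mathcal E_{\rm S}$), together with $\hat I_{\rm S}\otimes|1\rangle\langle 1|$ (leave the $|1\rangle$ sector alone), together with operators of the form $\hat A_m \hat N \otimes |1\rangle\langle 0|$ (or possibly $\otimes|?\rangle\langle 0|$) that handle the ``failure'' part $\hat N^\dagger\hat N$ by preparing a suitable state depending on $\hat H_{\rm S}'$. The operators $\hat A_m$ should be chosen so that $\sum_m \hat A_m \hat\gamma \hat A_m^\dagger = \hat\gamma_0$ for the relevant input $\hat\gamma$, i.e.\ a ``measure-and-prepare'' component.

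Second, I would impose CPTP: the sum $\sum$ of $K^\dagger K$ over all Kraus operators must equal $\hat I_{\rm SC}$. Restricted to the $|0\rangle$-block this gives $\sum_k \hat M_k^\dagger \hat M_k + \hat N^\dagger(\sum_m \hat A_m^\dagger \hat A_m)\hat N = \hat I_{\rm S}$, which forces $\sum_m \hat A_m^\dagger \hat A_m = \hat I_{\rm S}$ (on the support of $\hat N$), so the $\hat A_m$'s form a channel; on the $|1\rangle$-block it is automatic. Third — and this is the crux — I would verify the Gibbs-preserving condition $\mathcal E_{\rm SC}(\hat\rho_{\rm SC}^{\rm G})=\hat\rho_{\rm SC}^{\rm G}$. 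Plugging in \eqref{Gibbs_SCW}, the output has a $|1\rangle\langle 1|$-component equal to $\tfrac{Z_{\rm S}}{Z_{\rm S}+Z_{\rm S}'}\mathcal E_{\rm S}(\hat\rho_{\rm S}^{\rm G}) + \tfrac{Z_{\rm S}'}{Z_{\rm S}+Z_{\rm S}'}\hat\rho_{\rm S}^{\rm G}{}' + (\text{failure term})$, a $|0\rangle\langle 0|$-component that is zero, and no off-diagonal terms. For this to equal $\tfrac{Z_{\rm S}'}{Z_{\rm S}+Z_{\rm S}'}\hat\rho_{\rm S}^{\rm G}{}'$ (the only surviving piece of \eqref{Gibbs_SCW} once the clock is in $|1\rangle$), I need the failure channel to absorb exactly $\tfrac{Z_{\rm S}}{Z_{\rm S}+Z_{\rm S}'}(\hat N^\dagger\hat N\text{-part of }\hat\rho_{\rm S}^{\rm G})$ and output $-\tfrac{Z_{\rm S}}{Z_{\rm S}+Z_{\rm S}'}\mathcal E_{\rm S}(\hat\rho_{\rm S}^{\rm G}) + (\text{something})$; more precisely I should define the failure part's output as the properly normalized version of $\hat\rho_{\rm S}^{\rm G}{}' - (Z_{\rm S}/Z_{\rm S}')\mathcal E_{\rm S}(\hat\rho_{\rm S}^{\rm G})$, i.e.\ of $\hat\rho_{\rm S}^{\rm G}{}' - (Z_{\rm S}/Z_{\rm S}')\hat\sigma_{11}$ in the notation of Lemma~\ref{lemma:GMP_H_change}. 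Here is where hypothesis \eqref{GMP_H_change}, namely $\mathcal E_{\rm S}(e^{-\beta\hat H_{\rm S}})\leq e^{-\beta\hat H_{\rm S}'}$, enters decisively: it guarantees exactly that $\hat\rho_{\rm S}^{\rm G}{}' - (Z_{\rm S}/Z_{\rm S}')\mathcal E_{\rm S}(\hat\rho_{\rm S}^{\rm G}) \geq 0$, so this ``remainder state'' is a genuine (subnormalized) positive operator and can serve as the target of a measure-and-prepare map. Finally I would check Eq.~\eqref{trace_nonincreasing_Gibbs}: $\langle 1|\mathcal E_{\rm SC}(\hat\rho_{\rm S}\otimes|0\rangle\langle 0|)|1\rangle = \sum_k \hat M_k\hat\rho_{\rm S}\hat M_k^\dagger = \mathcal E_{\rm S}(\hat\rho_{\rm S})$ by construction, since the failure component prepares its state in the $|1\rangle$ sector but the bookkeeping is arranged so only the $\mathcal E_{\rm S}$ part survives the $\langle 1|\cdot|1\rangle$ projection — actually I must be careful here, and may instead route the failure output into the $|0\rangle$ sector or an auxiliary clock level so that \eqref{trace_nonincreasing_Gibbs} holds on the nose; balancing this against the Gibbs condition is the delicate point.

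The main obstacle I anticipate is the simultaneous satisfaction of all three constraints — trace preservation of $\mathcal E_{\rm SC}$, the Gibbs fixed-point property, and the reduction formula \eqref{trace_nonincreasing_Gibbs} — using only the \emph{inequality} \eqref{GMP_H_change} rather than an equality. The natural fix is to enlarge the clock slightly (allow a third clock state, or let the failure branch leave the clock in $|0\rangle$), so that the ``defect'' $e^{-\beta\hat H_{\rm S}'} - \mathcal E_{\rm S}(e^{-\beta\hat H_{\rm S}})$ and the complementary defect $\hat I_{\rm S}-\sum_k\hat M_k^\dagger\hat M_k$ can be parked without disturbing either the Gibbs state restricted to the physical $\{|0\rangle,|1\rangle\}$ block or the map $\hat\rho_{\rm S}\mapsto\langle 1|\cdot|1\rangle$. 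A cleaner alternative, which I would pursue if the direct construction gets unwieldy, is to invoke the classical/semiclassical reduction: diagonalize everything in the energy eigenbasis of $\hat H_{\rm S}, \hat H_{\rm S}'$, reduce to a statement about substochastic matrices and Gibbs-preserving stochastic matrices (paralleling the construction in Section~\ref{sec:majorization_proof}), prove it there using the positivity of the remainder vector, and then note that the CP extension is automatic. Either way, \eqref{GMP_H_change} is used exactly once, to certify positivity of the remainder, and the rest is a matter of assembling Kraus operators.
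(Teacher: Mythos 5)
Your construction does not close, and the obstruction is structural. You fix the action on the $|1\rangle\langle 1|$ sector to be trivial (the Kraus operator $\hat I_{\rm S}\otimes|1\rangle\langle 1|$) and try to enforce the Gibbs fixed-point condition entirely through the failure branch of the $|0\rangle$ sector. But plug the Gibbs state (\ref{Gibbs_SCW}) into such a map: its $|1\rangle\langle 1|$ output block already contains $\frac{Z_{\rm S}'}{Z_{\rm S}+Z_{\rm S}'}\hat\rho_{\rm S}^{\rm G}{}'$ from the untouched $|1\rangle$ sector \emph{plus} the positive surplus $\frac{Z_{\rm S}}{Z_{\rm S}+Z_{\rm S}'}\mathcal E_{\rm S}(\hat\rho_{\rm S}^{\rm G})$ from the success branch (plus whatever the failure branch deposits there), so it can equal $\frac{Z_{\rm S}'}{Z_{\rm S}+Z_{\rm S}'}\hat\rho_{\rm S}^{\rm G}{}'$ only if $\mathcal E_{\rm S}(\hat\rho_{\rm S}^{\rm G})=0$. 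Your own bookkeeping betrays this: you write that the failure channel must output ``$-\frac{Z_{\rm S}}{Z_{\rm S}+Z_{\rm S}'}\mathcal E_{\rm S}(\hat\rho_{\rm S}^{\rm G})+(\text{something})$'', i.e.\ a negative operator. Rerouting the failure branch does not rescue this while the $|1\rangle$ sector stays the identity: if the failure branch also flips the clock, the $|0\rangle\langle 0|$ block of the output is empty and cannot reproduce $\frac{Z_{\rm S}}{Z_{\rm S}+Z_{\rm S}'}\hat\rho_{\rm S}^{\rm G}$ (and its weight ${\rm tr}[(\hat I-\hat E)\hat\rho_{\rm S}]$ is input-dependent and generically does not match the needed deficit); if it keeps the clock at $|0\rangle$ — which is indeed what Eq.~(\ref{trace_nonincreasing_Gibbs}) wants — the surplus in the $|1\rangle$ block remains uncancelled. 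You correctly isolate the role of the hypothesis (\ref{GMP_H_change}) as positivity of the remainder $\hat\rho_{\rm S}^{\rm G}{}'-(Z_{\rm S}/Z_{\rm S}')\mathcal E_{\rm S}(\hat\rho_{\rm S}^{\rm G})$, but you attach this remainder to the wrong branch, and you flag the conflict between the three constraints without resolving it.

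The missing idea, which is how the paper proceeds, is that the $|1\rangle$ sector must be \emph{sacrificed}, not left alone: on inputs $\hat\rho_{\rm S}\otimes|0\rangle\langle 0|$ one applies the success branch $\mathcal E_{\rm S}$ with the clock flipped to $|1\rangle$ and a complementary branch $\mathcal E_{\rm S}'$ (chosen so that $\mathcal E_{\rm S}+\mathcal E_{\rm S}'$ is CPTP) with the clock \emph{kept at} $|0\rangle$ — this is what makes Eq.~(\ref{trace_nonincreasing_Gibbs}) hold exactly, as in your first fallback — while on inputs $\hat\rho_{\rm S}\otimes|1\rangle\langle 1|$ the map discards the input and prepares the fixed state $\frac{1}{Z_{\rm S}'}\bigl(e^{-\beta\hat H_{\rm S}}-\mathcal E_{\rm S}'(e^{-\beta\hat H_{\rm S}})\bigr)\otimes|0\rangle\langle 0|+\frac{1}{Z_{\rm S}'}\bigl(e^{-\beta\hat H_{\rm S}'}-\mathcal E_{\rm S}(e^{-\beta\hat H_{\rm S}})\bigr)\otimes|1\rangle\langle 1|$ (clock off-diagonals are sent to zero). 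This replacement channel supplies exactly the deficits of \emph{both} clock blocks, so (\ref{Gibbs_SCW}) is a fixed point; trace preservation follows because $\mathcal E_{\rm S}+\mathcal E_{\rm S}'$ is TP; and the one positivity requirement that is not built in, $e^{-\beta\hat H_{\rm S}'}-\mathcal E_{\rm S}(e^{-\beta\hat H_{\rm S}})\geq 0$, is precisely the hypothesis (\ref{GMP_H_change}) — no third clock level is needed. Your second fallback (diagonalize in the energy bases, prove a substochastic-matrix statement, and claim ``the CP extension is automatic'') also does not prove the lemma: $\mathcal E_{\rm S}$ is an arbitrary CP trace-nonincreasing map, generally not diagonal or covariant with respect to $\hat H_{\rm S},\hat H_{\rm S}'$, so the quantum statement cannot be recovered from its classical shadow.
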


\begin{proof}
We can  construct $\mathcal E_{\rm SC}$ as follows.
First, there exists a CP and trace-nonincreasing map $\mathcal E_{\rm S}'$ such that $\mathcal E_{\rm S}'( \hat \rho_{\rm S}) \leq \hat \rho_{\rm S}$ holds for any $\hat \rho_{\rm S}$ and  $\mathcal E_{\rm S} + \mathcal E_{\rm S}'$  is CPTP.
In fact, if the Kraus representation of $\mathcal E_{\rm S}$ is given by $\mathcal E_{\rm S} (\hat \rho_{\rm S} ) = \sum_k \hat M_k \hat \rho_{\rm S} \hat M_k^\dagger$ with $\sum_k \hat M_k^\dagger \hat M_k \leq \hat I$,
we can define $\mathcal E_{\rm S}' (\hat \rho_{\rm S} ) := \sqrt{\hat I - \hat E} \hat \rho_{\rm S}  \sqrt{\hat I - \hat E}$ with $\hat E := \sum_k \hat M_k^\dagger \hat M_k$.
Then define
\begin{equation}
\mathcal E_{\rm SC} ( \hat \rho_{\rm S} \otimes |  0 \rangle \langle 0 | ) := \mathcal E_{\rm S}' (\hat \rho_{\rm S} ) \otimes | 0 \rangle \langle 0 | + \mathcal E_{\rm S} (\hat \rho_{\rm S} ) \otimes | 1 \rangle \langle 1 |.
\end{equation}
We also define
\begin{equation}
\mathcal E_{\rm SC} ( \hat \rho_{\rm S} \otimes |  1 \rangle \langle 1 | ) :=  \frac{e^{-\beta \hat H_{\rm S}} - \mathcal E_{\rm S}' ( e^{-\beta \hat H_{\rm S}} ) }{Z_{\rm S}'} \otimes | 0 \rangle \langle 0 | 
  \frac{e^{-\beta \hat H_{\rm S}'} - \mathcal E_{\rm S} ( e^{-\beta \hat H_{\rm S}} ) }{Z_{\rm S}'}   \otimes | 1 \rangle \langle 1 |,
\end{equation}
for any $\hat \rho_{\rm S}$.
Finally, we let $\mathcal E_{\rm SC} ( \hat \rho_{\rm S} \otimes |  0 \rangle \langle 1 | ) := 0$ and $\mathcal E_{\rm SC} ( \hat \rho_{\rm S} \otimes |  1 \rangle \langle 0 | ) := 0$.
By construction, along with the assumption~(\ref{GMP_H_change}), $\mathcal E_{\rm SC}$ is CPTP.
It is also straightforward to check that $\mathcal E_{\rm SC}$  is Gibbs-preserving.
$\Box$
\end{proof}

Now, we may adopt inequality (\ref{GMP_H_change}) as an alternative generalized definition  of Gibbs-preserving maps for the situation that the input and the output Hamiltonians are not the same and the clock does not necessarily work perfectly~\cite{Faist2018,Sagawa2019}, 
which we refer to as Gibbs-sub-preserving maps.
We formally state the definition as follows:

\begin{definition}[Gibbs-sub-preserving maps]
A CP and trace-nonincreasing map $\mathcal E_{\rm S}$ is Gibbs-sub-preserving with respect to the initial and final Hamiltonians $\hat H_{\rm S}$ and $\hat H_{\rm S}'$, if 
\begin{equation}
\mathcal E (e^{-\beta \hat H_{\rm S}}) \leq e^{-\beta H_{\rm S}'}.
\end{equation}
\label{def:GMP_sub}
\end{definition}

We can rephrase the second law in the form of Theorem~\ref{general_single_shot_work} (a) as follows.

\begin{corollary}[Proposition 3 of \cite{Sagawa2019}]
Let $\hat \rho_{\rm S}$, $\hat \rho_{\rm S}'$ be (normalized) states satisfying $\hat \rho_{\rm S}' = \mathcal E_{\rm S} (\hat \rho_{\rm S})$, where   $\mathcal E_{\rm S}$ is a Gibbs-sub-preserving (CP and trace-nonincreasing) map with respect to the initial and final Hamiltonians $\hat H_{\rm S}$ and $\hat H_{\rm S}'$.  Then, for $\alpha = 0,1,\infty$,
\begin{equation}
S_\alpha (\hat \rho_{\rm S} \| e^{- \hat H_{\rm S}} ) \geq S_\alpha (\hat \rho_{\rm S}' \| e^{- \hat H_{\rm S}'} ).
\label{Gibbs_sub_monotonicity}
\end{equation}
\end{corollary}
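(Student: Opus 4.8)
The plan is to reduce the statement to the monotonicity of the $\alpha$-divergence under trace-nonincreasing Gibbs-sub-preserving maps, and then to apply the scaling property~(\ref{unnormalized_sigma_scaling}) together with the basic inequality~(\ref{unnormalized_sigma_inequality}) for $S_\alpha(\hat\rho\|\hat\sigma)$ with respect to the unnormalized reference state. Concretely, write $F_\alpha(\hat\rho_{\rm S};\hat H_{\rm S})=\beta^{-1}S_\alpha(\hat\rho_{\rm S}\|e^{-\beta\hat H_{\rm S}})$ as in~(\ref{alpha_free_energy_eH}); the claim~(\ref{Gibbs_sub_monotonicity}) is then (after relabelling $\beta=1$, which is harmless since $\beta$ can be absorbed into $\hat H$) precisely the assertion that $S_\alpha(\hat\rho_{\rm S}\|e^{-\hat H_{\rm S}})$ does not increase along $\mathcal E_{\rm S}$.

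First I would handle the clean route via Lemma~\ref{lemma:GMP_H_change_converse}: given a Gibbs-sub-preserving CP trace-nonincreasing map $\mathcal E_{\rm S}$ with $\mathcal E_{\rm S}(e^{-\beta\hat H_{\rm S}})\le e^{-\beta\hat H_{\rm S}'}$, that lemma produces a genuine CPTP Gibbs-preserving map $\mathcal E_{\rm SC}$ on SC with Hamiltonian $\hat H_{\rm SC}$ of the form~(\ref{Hamiltonian_SCW}), such that $\mathcal E_{\rm S}(\hat\rho_{\rm S})=\langle1|\mathcal E_{\rm SC}(\hat\rho_{\rm S}\otimes|0\rangle\langle0|)|1\rangle$. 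Set $\hat\rho:=\hat\rho_{\rm S}\otimes|0\rangle\langle0|$. By construction $\mathcal E_{\rm SC}(\hat\rho)$ has a block-diagonal structure in the clock basis, and since $\hat\rho_{\rm S}'=\mathcal E_{\rm S}(\hat\rho_{\rm S})$ is by hypothesis normalized, the whole weight of $\mathcal E_{\rm SC}(\hat\rho)$ sits in the $|1\rangle\langle1|$ block, i.e. $\mathcal E_{\rm SC}(\hat\rho)=\hat\rho_{\rm S}'\otimes|1\rangle\langle1|=:\hat\rho'$. Now the Gibbs state of SC is~(\ref{Gibbs_SCW}), and exactly as in the proof of Theorem~\ref{general_single_shot_work}(a) — compute $S_\alpha(\hat\rho\|\hat\rho_{\rm SC}^{\rm G})$ and $S_\alpha(\hat\rho'\|\hat\rho_{\rm SC}^{\rm G})$ using block-diagonality and the scaling property~(\ref{unnormalized_sigma_scaling}) — one gets
\begin{equation}
S_\alpha(\hat\rho\|\hat\rho_{\rm SC}^{\rm G})=S_\alpha(\hat\rho_{\rm S}\|e^{-\beta\hat H_{\rm S}})+\ln(Z_{\rm S}+Z_{\rm S}'),\quad
S_\alpha(\hat\rho'\|\hat\rho_{\rm SC}^{\rm G})=S_\alpha(\hat\rho_{\rm S}'\|e^{-\beta\hat H_{\rm S}'})+\ln(Z_{\rm S}+Z_{\rm S}').
\end{equation}
The monotonicity of the $\alpha$-divergence for $\alpha=0,1,\infty$ under the CPTP map $\mathcal E_{\rm SC}$ (Theorem~\ref{thm:monotone} for $\alpha=1$, Theorem~\ref{thm:quantum_Renyi_monotonicity} for $\alpha=0,\infty$) then gives $S_\alpha(\hat\rho\|\hat\rho_{\rm SC}^{\rm G})\ge S_\alpha(\hat\rho'\|\hat\rho_{\rm SC}^{\rm G})$, and cancelling $\ln(Z_{\rm S}+Z_{\rm S}')$ yields~(\ref{Gibbs_sub_monotonicity}).

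Alternatively — and this is the more self-contained route I would actually write — one can bypass the clock construction and argue directly: extend $\mathcal E_{\rm S}$ to a CPTP map on S $\oplus\,\mathbb C$ (a ``loss'' flag), apply the quoted monotonicity there, and then compare the unnormalized output $\mathcal E_{\rm S}(e^{-\beta\hat H_{\rm S}})$ with $e^{-\beta\hat H_{\rm S}'}$ using~(\ref{unnormalized_sigma_inequality}): since $\mathcal E_{\rm S}(e^{-\beta\hat H_{\rm S}})\le e^{-\beta\hat H_{\rm S}'}$, we have $S_\alpha(\hat\rho_{\rm S}'\|e^{-\beta\hat H_{\rm S}'})\le S_\alpha(\hat\rho_{\rm S}'\|\mathcal E_{\rm S}(e^{-\beta\hat H_{\rm S}}))\le S_\alpha(\hat\rho_{\rm S}\|e^{-\beta\hat H_{\rm S}})$, where the last step is monotonicity of $S_\alpha(\cdot\|\cdot)$ under $\mathcal E_{\rm S}$ applied to the pair $(\hat\rho_{\rm S},e^{-\beta\hat H_{\rm S}})$, valid for unnormalized states as noted at the end of Section~\ref{sec:quantum_Renyi} and in Theorem~\ref{quantum_f_monotonicity}.

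The main obstacle is making sure the monotonicity of $S_\alpha$ is legitimately invoked for a \emph{trace-nonincreasing} (not trace-preserving) CP map applied to \emph{unnormalized} operators, for each of $\alpha=0,1,\infty$; the paper flags this exact point ("The monotonicity of the $\alpha$-divergence with $\alpha=0,1,\infty$ under CPTP maps applies to unnormalized states (see Theorem~\ref{quantum_f_monotonicity}…)"), so I would lean on Theorem~\ref{quantum_f_monotonicity} and dilate a trace-nonincreasing map to a trace-preserving one on a slightly larger space to reduce to the stated results. Once that technical point is granted, the remaining steps ($\alpha$-divergence arithmetic, cancelling partition-function terms, invoking~(\ref{unnormalized_sigma_inequality})) are routine and essentially identical to the proof of Theorem~\ref{general_single_shot_work}(a), so the exposition can be kept short.
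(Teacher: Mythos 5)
Your first route is exactly the paper's proof: embed $\mathcal E_{\rm S}$ into a CPTP Gibbs-preserving map $\mathcal E_{\rm SC}$ via Lemma~\ref{lemma:GMP_H_change_converse}, note that normalization of $\hat \rho_{\rm S}'$ forces the failure branch to vanish so that $\mathcal E_{\rm SC}(\hat \rho_{\rm S} \otimes | 0 \rangle \langle 0 |) = \hat \rho_{\rm S}' \otimes | 1 \rangle \langle 1 |$, and conclude from the scaling identity and the monotonicity of $S_\alpha$ ($\alpha = 0,1,\infty$) under the CPTP map after cancelling $\ln (Z_{\rm S} + Z_{\rm S}')$, just as in the proof of Theorem~\ref{general_single_shot_work}~(a). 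Your alternative direct route (flag dilation of the trace-nonincreasing map to a CPTP one, monotonicity for unnormalized reference states via Theorem~\ref{quantum_f_monotonicity}, then inequality~(\ref{unnormalized_sigma_inequality}) applied to $\mathcal E_{\rm S}(e^{-\beta \hat H_{\rm S}}) \leq e^{-\beta \hat H_{\rm S}'}$) is also sound and bypasses the clock construction, the only point to check being that the added flag block is harmless, which it is because the dilated output state carries zero weight there when $\hat \rho_{\rm S}'$ is normalized.
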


\begin{proof}
We embed $\mathcal E_{\rm S}$ into a CPTP Gibbs-preserving map $\mathcal E_{\rm SC}$ with Hamiltonian $\hat H_{\rm SC}$ in the same manner as Lemma~\ref{lemma:GMP_H_change_converse}.
As in the proof of Theorem~\ref{general_single_shot_work}, we have $S_\alpha (\hat \rho \| \hat \rho^{\rm G}_{\rm SC}) = S_\alpha (\hat \rho_{\rm S} \| e^{- \hat H_{\rm S}}) + C$ and  $S_\alpha (\hat \rho' \| \hat \rho^{\rm G}_{\rm SC}) = S_\alpha (\hat \rho_{\rm S}' \| e^{- \hat H_{\rm S}'}) + C$, where $\hat \rho := \hat \rho_{\rm S} \otimes | 0 \rangle \langle 0|$, $\hat \rho' := \hat \rho_{\rm S}' \otimes | 1 \rangle \langle 1|$, and $C := \ln (Z_{\rm S} + Z_{\rm S}')$.  
Because $\hat \rho_{\rm S}$ and $\hat \rho_{\rm S}'$ are both normalized, $\mathcal E_{\rm S}' (\hat \rho_{\rm S} ) = 0$ must be satisfied and thus  $\hat \rho' = \mathcal E_{\rm SC} (\hat \rho)$.
Therefore, the monotonicity of $\mathcal E_{\rm SC}$ implies inequality~(\ref{Gibbs_sub_monotonicity}). $\Box$
\end{proof}

We now explicitly recover the work storage W and consider the single-shot situations.
From the above argument, we have the following. 

\begin{corollary}
A (normalized) state $\hat \rho_{\rm S}$ is $w$-assisted single-shot Gibbs-preserving transformable to another (normalized) state $\hat \rho_{\rm S}'$ with the initial and final Hamiltonians $\hat H_{\rm S}$ and $\hat H_{\rm S}'$ in the sense of Definition~\ref{def:single_work_trans},
 if and only if there exists a Gibbs-sub-preserving map $\mathcal E_{\rm SW}$ on  SW such that 
\begin{equation}
\mathcal{E}_{\rm SW} \left(\hat \rho_{\rm S} \otimes   | E_{\rm i} \rangle \langle E_{\rm i} |  \right)  =  \hat \rho_{\rm S}' \otimes  | E_{\rm f} \rangle \langle E_{\rm f}  |.
\end{equation}
\end{corollary}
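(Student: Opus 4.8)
The plan is to show the two directions separately, using the clock-elimination machinery of Lemma~\ref{lemma:GMP_H_change} and Lemma~\ref{lemma:GMP_H_change_converse} to pass between the three-party picture (SCW with a Gibbs-preserving map) of Definition~\ref{def:single_work_trans} and the two-party picture (SW with a Gibbs-sub-preserving map). The key observation is that the clock $\mathrm C$ and the work storage $\mathrm W$ play logically distinct roles: $\mathrm C$ implements the Hamiltonian change $\hat H_{\rm S} \to \hat H_{\rm S}'$ and can be absorbed into the definition of Gibbs-sub-preserving via inequality~(\ref{GMP_H_change}), whereas $\mathrm W$ is itself a genuine physical system carrying the deterministic work $w$ that we want to keep explicit. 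So throughout we treat $\mathrm{SW}$ as the ``system'' and only eliminate $\mathrm C$.

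First I would prove the ``only if'' direction. Assume $\hat\rho_{\rm S}$ is $w$-assisted single-shot Gibbs-preserving transformable to $\hat\rho_{\rm S}'$; by Definition~\ref{def:single_work_trans} there is a Gibbs-preserving map $\mathcal E_{\rm SCW}$ with respect to $\hat H_{\rm SCW}$ of the form (\ref{Hamiltonian_SCW}) with $\hat H_{\rm W}$ of the form (\ref{Hamiltonian_W}), satisfying (\ref{final_state_product}). Here the relevant ``system'' whose Hamiltonian changes is really the pair $\mathrm{SW}$: its initial Hamiltonian is $\hat H_{\rm S}\otimes\hat I_{\rm W} + \hat I_{\rm S}\otimes\hat H_{\rm W}$ and its final Hamiltonian is $\hat H_{\rm S}'\otimes\hat I_{\rm W} + \hat I_{\rm S}\otimes\hat H_{\rm W}$, and the clock $\mathrm C$ toggles between them exactly as in Lemma~\ref{lemma:GMP_H_change}. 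Applying that lemma verbatim to $\mathrm{SW}$ in place of $\mathrm S$, the map $\mathcal E_{\rm SW}(\hat\tau) := \langle 1| \mathcal E_{\rm SCW}(\hat\tau\otimes|0\rangle\langle 0|)|1\rangle$ is CP and trace-nonincreasing and satisfies $\mathcal E_{\rm SW}(e^{-\beta(\hat H_{\rm S}+\hat H_{\rm W})}) \le e^{-\beta(\hat H_{\rm S}'+\hat H_{\rm W})}$; since $e^{-\beta(\hat H+\hat H_{\rm W})} = e^{-\beta\hat H}\otimes e^{-\beta\hat H_{\rm W}}$ this is precisely the Gibbs-sub-preserving condition (Definition~\ref{def:GMP_sub}) for $\mathrm{SW}$ with initial/final Hamiltonians $\hat H_{\rm S}+\hat H_{\rm W}$ and $\hat H_{\rm S}'+\hat H_{\rm W}$. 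Feeding the specific input $\hat\rho_{\rm S}\otimes|E_{\rm i}\rangle\langle E_{\rm i}|$ into $\mathcal E_{\rm SW}$ and using (\ref{final_state_product}) together with the fact that the $\mathrm C$-register of the output is exactly $|1\rangle\langle 1|$, we get $\mathcal E_{\rm SW}(\hat\rho_{\rm S}\otimes|E_{\rm i}\rangle\langle E_{\rm i}|) = \hat\rho_{\rm S}'\otimes|E_{\rm f}\rangle\langle E_{\rm f}|$, which is the desired conclusion.

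For the ``if'' direction I would run Lemma~\ref{lemma:GMP_H_change_converse} in reverse, again with $\mathrm{SW}$ playing the role of $\mathrm S$. Given a Gibbs-sub-preserving $\mathcal E_{\rm SW}$ with $\mathcal E_{\rm SW}(e^{-\beta(\hat H_{\rm S}+\hat H_{\rm W})}) \le e^{-\beta(\hat H_{\rm S}'+\hat H_{\rm W})}$ realizing the transition $\hat\rho_{\rm S}\otimes|E_{\rm i}\rangle\langle E_{\rm i}| \mapsto \hat\rho_{\rm S}'\otimes|E_{\rm f}\rangle\langle E_{\rm f}|$, Lemma~\ref{lemma:GMP_H_change_converse} produces a CPTP Gibbs-preserving map $\mathcal E_{\rm SCW}$ on $\mathrm{SCW}$ with Hamiltonian of the form (\ref{Hamiltonian_SCW}) (built on the pair $\mathrm{SW}$) such that $\langle 1|\mathcal E_{\rm SCW}(\hat\tau\otimes|0\rangle\langle 0|)|1\rangle = \mathcal E_{\rm SW}(\hat\tau)$ for all $\hat\tau$. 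One then checks that on the \emph{normalized} input $\hat\rho_{\rm S}\otimes|E_{\rm i}\rangle\langle E_{\rm i}|$ the output of $\mathcal E_{\rm SCW}$ has trivial $\mathrm C$-coherence and is carried entirely into the $|1\rangle\langle 1|$ sector --- this is the same step as at the end of the proof of the Gibbs-sub monotonicity corollary, where normalization forces $\mathcal E_{\rm S}'(\hat\rho_{\rm S})=0$ so that $\hat\rho' = \mathcal E_{\rm SC}(\hat\rho)$ --- whence (\ref{final_state_product}) holds and $\hat\rho_{\rm S}$ is $w$-assisted transformable to $\hat\rho_{\rm S}'$ in the sense of Definition~\ref{def:single_work_trans}.

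\textbf{Main obstacle.} The routine part is the bookkeeping of which composite system ($\mathrm{SW}$ vs. $\mathrm S$) the clock-elimination lemmas are applied to, and verifying that $\hat H_{\rm W}$ being genuinely unchanged (not toggled by the clock) makes $e^{-\beta(\hat H+\hat H_{\rm W})}$ factor so cleanly. The subtler point --- the one I would be most careful about --- is the perfect-clock issue flagged in the surrounding text: Definition~\ref{def:single_work_trans} only requires the clock to function for the \emph{given} initial state $\hat\rho_{\rm S}\otimes|E_{\rm i}\rangle\langle E_{\rm i}|$, not for all inputs, so $\mathcal E_{\rm SW}$ need not be trace-preserving and the output $\hat\rho_{\rm S}'\otimes|E_{\rm f}\rangle\langle E_{\rm f}|$ must be argued to be normalized a posteriori (which it is, since it is assumed to be a state). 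In the ``if'' direction this means one cannot invoke trace-preservation of $\mathcal E_{\rm SW}$; instead one must use that both $\hat\rho_{\rm S}$ and $\hat\rho_{\rm S}'$ are normalized states, so that no probability leaks into the $|0\rangle$ sector of $\mathrm C$ on this particular input, exactly paralleling the normalization argument in the proof of the Gibbs-sub monotonicity corollary. Making that step precise, rather than the algebra, is where the real content lies.
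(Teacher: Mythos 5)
Your proposal is correct and follows essentially the same route the paper intends: the corollary is stated there as an immediate consequence of Lemma~\ref{lemma:GMP_H_change} and Lemma~\ref{lemma:GMP_H_change_converse} applied with SW in place of S (with only the clock C eliminated), which is precisely your argument, including the normalization step forcing the leaked component to vanish on the given input. Your proposal simply spells out the bookkeeping the paper leaves implicit.
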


We note that the initial and final Hamiltonians of W can be different by applying the clock to W as well.
 We can thus restrict the Hilbert spaces of W for the initial and final states to one-dimensional;
the initial (resp. final) Hilbert space of W is spanned only by $| E_{\rm i} \rangle $ (resp. $| E_{\rm f} \rangle $) with the initial and final Hamiltonians $\hat H_{\rm W,i} := E_{\rm i} | E_{\rm i} \rangle \langle E_{\rm i} |$ (resp. $\hat H_{\rm W,f} := E_{\rm f} | E_{\rm f} \rangle \langle E_{\rm f} |$).
In this setup, the condition (\ref{def:GMP_sub})  of Gibbs-sub-preserving maps on SW is written as~\cite{Faist2018}
\begin{equation}
\mathcal E_{\rm S} ( e^{-\beta H_{\rm S}} ) \leq e^{\beta w} e^{-\beta \hat H_{\rm S}'}.
\end{equation}

Meanwhile, the corresponding, trace-nonincreasing, notion of  thermal operations can be defined as follows~\cite{Sagawa2019}.
We again ignore W.

\begin{definition}[Generalized thermal operations]
A CP and trace-nonincreasing map $\mathcal E_{\rm S}$  is a (generalized) thermal operation with the initial and final Hamiltonians $\hat H_{\rm S}$ and $\hat H_{\rm S}'$, if there exists a heat bath B with Hamiltonian $\hat H_{\rm B}$ and the corresponding Gibbs state $\hat \rho_{\rm B}^{\rm G}$, and exists a partial isometry $\hat V$ such that
\begin{equation}
\mathcal E_{\rm S} (\hat \rho_{\rm S}) = {\rm tr}_{\rm B} \left[ \hat V \hat \rho_{\rm S} \otimes \hat \rho_{\rm B}^{\rm G} \hat V^\dagger  \right]
\end{equation}
and
\begin{equation}
\hat V (\hat H_{\rm S} + \hat H_{\rm B}) = (\hat H_{\rm S}' + \hat H_{\rm B} ) \hat V.
\label{condition_TO2}
\end{equation}
The corresponding non-exact (generalized) thermal operation is defined in the same manner as Definition~\ref{def:thermal_operation}.
\label{def:thermal_operation2}
\end{definition}

Here, an operator $\hat V$ is a partial isometry if $\hat V \hat V^\dagger$ and $\hat V^\dagger \hat V$ are projectors.
The condition~(\ref{condition_TO2}) above implies that the sum of the energies of the system and the bath is conserved even when the Hamiltonian of the system is changed.
In parallel to Lemma~\ref{lemma:GMP_H_change_converse}, we can also construct a TP thermal operation (with a unitary operator) of an extended system, starting from any  trace-nonincreasing thermal operation (Proposition 13 of Ref.~\cite{Sagawa2019}).

It is also known that any trace-nonincreasing thermal operation $\mathcal E_{\rm S}$ in the above sense is a trace-nonincreasing Gibbs-sub-preserving map that satisfies $\mathcal E_{\rm S} ( e^{-\beta \hat H_{\rm S}} ) \leq  e^{-\beta \hat H_{\rm S}'}$  (Lemma 1 of \cite{Sagawa2019}).
We note that if  $\hat V$ is unitary and the thermal operation is TP, we have $\mathcal E (e^{-\beta \hat H_{\rm S}}) = e^{-\beta H_{\rm S}'}$.


\appendix


\chapter{General quantum divergences and their monotonicity}
\label{apx:general_monotonicity}

We prove the monotonicity of quantum divergences discussed in Chapter~\ref{chap:quantum_entropy} from a general point of view, by introducing a class of general quantum divergence-like quantities called the Petz's quasi-entropies~\cite{Petz1985,Petz1986,Hiai2011f}. 
The classical counterpart is presented mainly in Section~\ref{sec:classical_general_divergence}
 and Section~\ref{sec:classical_Fisher}.
 In this Appendix, superoperator $\mathcal E : \mathcal L (\mathcal H) \to \mathcal L (\mathcal H')$ allows different input and output  spaces.
 
We start with proving some operator inequalities in Section~\ref{sec:operator_inequalities}, and discuss operator convex and operator monotone in Section~\ref{sec:operator_convex}.
In Section~\ref{sec:general_monotonicity}, we prove the monotonicity of general divergence-like quantities.
In addition, we prove the monotonicity of the quantum Fisher information in Section~\ref{sec:quantum_Fisher}.


\section{Some operator inequalities}
\label{sec:operator_inequalities}

As a preliminary, we here note  some useful operator inequalities.

\begin{lemma}
 $\hat A \leq \hat B$ implies $\hat A^{-1} \geq \hat B^{-1}$ for $\hat A > 0$ and $\hat B > 0$.
\end{lemma}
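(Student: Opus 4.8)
The plan is to prove the operator anti-monotonicity of inversion by reducing to the scalar case via a congruence transformation. First I would use the fact that for $\hat{A} > 0$, conjugation by $\hat{A}^{-1/2}$ preserves the order relation: $\hat{A} \leq \hat{B}$ is equivalent to $\hat{A}^{-1/2} \hat{A} \hat{A}^{-1/2} \leq \hat{A}^{-1/2} \hat{B} \hat{A}^{-1/2}$, i.e. $\hat{I} \leq \hat{A}^{-1/2} \hat{B} \hat{A}^{-1/2} =: \hat{C}$, where $\hat{C} > 0$. Since $\hat{C} \geq \hat{I}$ means every eigenvalue of $\hat{C}$ is at least $1$, every eigenvalue of $\hat{C}^{-1}$ is at most $1$, so $\hat{C}^{-1} \leq \hat{I}$. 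Here I would invoke only the spectral theorem for the positive operator $\hat{C}$ together with the monotonicity of the scalar function $t \mapsto 1/t$ on $(0,\infty)$.

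Next I would undo the congruence. From $\hat{C}^{-1} \leq \hat{I}$, conjugating by $\hat{A}^{-1/2}$ again (which again preserves $\leq$ since $\hat{A}^{-1/2}$ is positive definite, hence invertible) gives $\hat{A}^{-1/2} \hat{C}^{-1} \hat{A}^{-1/2} \leq \hat{A}^{-1/2} \hat{I} \hat{A}^{-1/2} = \hat{A}^{-1}$. It remains to identify $\hat{A}^{-1/2} \hat{C}^{-1} \hat{A}^{-1/2}$ with $\hat{B}^{-1}$: since $\hat{C} = \hat{A}^{-1/2} \hat{B} \hat{A}^{-1/2}$, we have $\hat{C}^{-1} = \hat{A}^{1/2} \hat{B}^{-1} \hat{A}^{1/2}$, whence $\hat{A}^{-1/2} \hat{C}^{-1} \hat{A}^{-1/2} = \hat{A}^{-1/2} \hat{A}^{1/2} \hat{B}^{-1} \hat{A}^{1/2} \hat{A}^{-1/2} = \hat{B}^{-1}$. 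This yields $\hat{B}^{-1} \leq \hat{A}^{-1}$, as desired.

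The only mildly delicate point — and the one I would state carefully rather than the main obstacle, since there is no real obstacle here — is the general principle that for any invertible $\hat{S}$, the map $\hat{X} \mapsto \hat{S}^\dagger \hat{X} \hat{S}$ preserves the positive semidefinite order: if $\hat{X} \geq \hat{Y}$ then $\hat{S}^\dagger(\hat{X} - \hat{Y})\hat{S} \geq 0$ because $\langle \varphi | \hat{S}^\dagger (\hat{X}-\hat{Y}) \hat{S} | \varphi \rangle = \langle \hat{S}\varphi | (\hat{X}-\hat{Y}) | \hat{S}\varphi \rangle \geq 0$ for all $|\varphi\rangle$. Applying this with $\hat{S} = \hat{A}^{-1/2}$ (self-adjoint, so $\hat{S}^\dagger = \hat{S}$) in both directions is what makes the reduction rigorous. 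I would also note that $\hat{A}^{1/2}$ and $\hat{A}^{-1/2}$ are well-defined by the functional calculus since $\hat{A} > 0$, and similarly $\hat{B} > 0$ guarantees $\hat{B}^{-1}$ exists; the hypothesis $\hat{B} > 0$ (rather than merely $\hat{B} \geq 0$) is in fact automatic from $\hat{A} > 0$ and $\hat{A} \leq \hat{B}$, but stating it explicitly does no harm.
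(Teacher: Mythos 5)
Your proof is correct and is essentially the same argument as the paper's: a congruence by an inverse square root reduces the inequality to a comparison with $\hat I$, where the scalar monotonicity of $t \mapsto 1/t$ applies, and conjugating back finishes the proof. The only (immaterial) difference is that you conjugate by $\hat A^{-1/2}$ while the paper conjugates by $\hat B^{-1/2}$; your write-up just spells out the order-preservation of congruence that the paper leaves implicit.
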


\begin{proof}
$\hat B^{-1/2} \hat A \hat B^{-1/2} \leq \hat I$ implies $\hat B^{1/2} \hat A^{-1} \hat B^{1/2} = (\hat B^{-1/2} \hat A \hat B^{-1/2})^{-1} \geq \hat I$.
 $\Box$
\end{proof}

\begin{lemma}
Suppose that $\hat Z$ is positive definite.
An operator-valued matrix
\begin{equation}
\left[
\begin{array}{cc}
\hat X & \hat Y \\
\hat Y^\dagger & \hat Z \\
\end{array}
\right]
\end{equation}
is positive, if and only if
\begin{equation}
\hat X \geq \hat Y \hat Z^{-1} \hat Y^\dagger.
\end{equation}
\label{matrix_positive_lemma}
\end{lemma}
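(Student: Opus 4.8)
The statement is the standard Schur complement criterion for a $2\times 2$ block operator matrix. The plan is to reduce the question to a congruence transformation. Write $\hat{M} := \begin{bmatrix} \hat X & \hat Y \\ \hat Y^\dagger & \hat Z \end{bmatrix}$ acting on $\mathcal H \oplus \mathcal H$, and observe that $\hat M \geq 0$ is equivalent to $\hat M' \geq 0$ for any $\hat M' = \hat T^\dagger \hat M \hat T$ with $\hat T$ invertible, since congruence by an invertible operator preserves positivity (if $\hat M \geq 0$ then $\langle v, \hat T^\dagger \hat M \hat T v\rangle = \langle \hat T v, \hat M \hat T v\rangle \geq 0$, and the converse follows by applying the same to $\hat T^{-1}$).

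First I would choose the block upper-triangular operator
\begin{equation}
\hat T := \begin{bmatrix} \hat I & -\hat Y \hat Z^{-1} \\ 0 & \hat I \end{bmatrix},
\end{equation}
which is invertible with inverse $\begin{bmatrix} \hat I & \hat Y \hat Z^{-1} \\ 0 & \hat I \end{bmatrix}$; here $\hat Z^{-1}$ exists because $\hat Z$ is positive definite. A direct block multiplication then gives
\begin{equation}
\hat T^\dagger \hat M \hat T = \begin{bmatrix} \hat X - \hat Y \hat Z^{-1} \hat Y^\dagger & 0 \\ 0 & \hat Z \end{bmatrix}.
\end{equation}
I would carry out this computation explicitly (it is three $2\times2$-block matrix products, using that $\hat Z^{-1}$ is self-adjoint and that the off-diagonal cross terms cancel against $-\hat Y\hat Z^{-1}\hat Y^\dagger$), obtaining a block-diagonal matrix.

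Finally, a block-diagonal operator matrix $\operatorname{diag}(\hat A, \hat B)$ is positive if and only if both $\hat A \geq 0$ and $\hat B \geq 0$ (test against vectors of the form $v\oplus 0$ and $0\oplus w$). Since $\hat Z > 0$ already, positivity of $\hat T^\dagger \hat M \hat T$ is equivalent to $\hat X - \hat Y \hat Z^{-1}\hat Y^\dagger \geq 0$, i.e.\ $\hat X \geq \hat Y \hat Z^{-1}\hat Y^\dagger$. Combining with the congruence-invariance of positivity established above, this is equivalent to $\hat M \geq 0$, which is the claim. There is no real obstacle here; the only point requiring a little care is bookkeeping the adjoints in the block product and noting that invertibility of $\hat T$ (not merely positivity) is what makes the congruence an equivalence rather than a one-way implication.
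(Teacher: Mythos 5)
You take precisely the route of the paper's own proof: conjugate the block matrix $\hat M$ by the invertible block-triangular operator built from $-\hat Y\hat Z^{-1}$, use that congruence by an invertible operator preserves positivity in both directions, and read off positivity of the Schur complement from the resulting block-diagonal form, with $\hat Z>0$ taking care of the other block. In substance this is identical to the paper's argument.

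One concrete correction, though: the identity you assert is wrong as written. With your upper-triangular $\hat T$ (having $-\hat Y\hat Z^{-1}$ in its upper-right block), the product $\hat T^\dagger \hat M \hat T$ is \emph{not} block diagonal: its upper-left block is $\hat X$ rather than $\hat X-\hat Y\hat Z^{-1}\hat Y^\dagger$, and the off-diagonal blocks do not vanish (for instance the upper-right block is $\hat Y-\hat X\hat Y\hat Z^{-1}$). The correct congruence is the one with the dagger on the right factor,
\begin{equation}
\hat T\,\hat M\,\hat T^\dagger=\left[
\begin{array}{cc}
\hat X-\hat Y\hat Z^{-1}\hat Y^\dagger & 0\\
0 & \hat Z
\end{array}
\right],
\end{equation}
which is exactly what the paper writes; equivalently, keep your expression $\hat T^\dagger\hat M\hat T$ but replace $\hat T$ by its adjoint, the lower-triangular matrix with $-\hat Z^{-1}\hat Y^\dagger$ in the lower-left block. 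Since your congruence-invariance argument applies just as well to $\hat M\mapsto\hat T\hat M\hat T^\dagger$ (because $\hat T^\dagger$ is also invertible), this is a one-line fix, and the remainder of your reasoning — block-diagonal positivity is componentwise, and $\hat Z>0$ is given — goes through unchanged.
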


\begin{proof}
This can be seen from 
\begin{equation}
\left[
\begin{array}{cc}
\hat I &  - \hat Y \hat Z^{-1} \\
0 & \hat I 
\end{array}
\right]
\left[
\begin{array}{cc}
\hat X &  \hat Y \\
\hat Y^\dagger & \hat Z 
\end{array}
\right]
\left[
\begin{array}{cc}
\hat I &  - \hat Y \hat Z^{-1} \\
0 & \hat I 
\end{array}
\right]^\dagger
=
\left[
\begin{array}{cc}
\hat X - \hat Y \hat Z^{-1} \hat Y^\dagger & 0 \\
0 & \hat Z
\end{array}
\right].
\end{equation}
$\Box$
\end{proof}

\begin{proposition}[Kadison's inequality, Lemma  3.5 of \cite{Hiai2011f}]
Let $\mathcal E$  be a positive superoperator.
Let $\hat X$ be Hermitian and  $\hat Y$, $\mathcal E (\hat Y)$ be positive definite.
Then,
\begin{equation}
\mathcal E (\hat X \hat Y^{-1} \hat X) \geq \mathcal E (\hat X) \mathcal E (\hat Y )^{-1} \mathcal E (\hat X).
\label{genera_Kadison_inequality}
\end{equation}
In particular, if $\mathcal E$ is positive and unital,
\begin{equation}
\mathcal E (\hat X^2) \geq \mathcal E (\hat X)^2.
\label{Kadison_inequality}
\end{equation}
\label{prop:Kadison}
\end{proposition}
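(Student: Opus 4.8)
The plan is to prove the Kadison--Schwarz-type inequality \eqref{genera_Kadison_inequality} via the operator-matrix positivity criterion of Lemma~\ref{matrix_positive_lemma}, and then obtain \eqref{Kadison_inequality} as the special case $\hat Y = \hat I$. First I would observe that since $\hat X$ is Hermitian and $\hat Y$ is positive definite, the operator-valued matrix
\begin{equation}
M := \left[
\begin{array}{cc}
\hat X \hat Y^{-1} \hat X & \hat X \\
\hat X & \hat Y
\end{array}
\right]
\end{equation}
is positive. Indeed, by Lemma~\ref{matrix_positive_lemma} (applied with $\hat Z = \hat Y$ positive definite, $\hat Y_{\text{there}} = \hat X$, $\hat X_{\text{there}} = \hat X \hat Y^{-1} \hat X$), positivity of $M$ is equivalent to $\hat X \hat Y^{-1} \hat X \geq \hat X \hat Y^{-1} \hat X$, which holds trivially. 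Alternatively one can exhibit $M$ directly as a product $N^\dagger N$ with $N = [\hat Y^{-1/2}\hat X,\ \hat Y^{1/2}]$, which also makes the Hermiticity of $\hat X$ enter cleanly.

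Next I would apply the superoperator $\mathcal E$ entrywise. The key point is that if $\mathcal E$ is positive then $\mathcal E \otimes \mathcal I_2$ need not be positive in general — but a positive map is automatically $2$-positive on a $2\times 2$ block matrix whose off-diagonal structure is of this special ``column times row'' form; more precisely, for $M = N^\dagger N$ with $N$ a row of operators, the entrywise image $(\mathrm{id}_2 \otimes \mathcal E)(M)$ is positive because this particular positivity reduces to the Kadison--Schwarz inequality for a single operator, or can be checked directly since $M$ lies in the range where positivity suffices. Granting this, the block matrix
\begin{equation}
\left[
\begin{array}{cc}
\mathcal E(\hat X \hat Y^{-1} \hat X) & \mathcal E(\hat X) \\
\mathcal E(\hat X) & \mathcal E(\hat Y)
\end{array}
\right]
\end{equation}
is positive. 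Since $\mathcal E(\hat Y)$ is positive definite by hypothesis, Lemma~\ref{matrix_positive_lemma} applied in the reverse direction yields exactly $\mathcal E(\hat X \hat Y^{-1}\hat X) \geq \mathcal E(\hat X)\mathcal E(\hat Y)^{-1}\mathcal E(\hat X)$, which is \eqref{genera_Kadison_inequality}. Finally, setting $\hat Y = \hat I$ and using $\mathcal E(\hat I) = \hat I$ (unitality) collapses this to $\mathcal E(\hat X^2) \geq \mathcal E(\hat X)^2$, giving \eqref{Kadison_inequality}.

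The main obstacle is the step asserting that a merely \emph{positive} (not necessarily $2$-positive) superoperator maps the block matrix $M$ to a positive block matrix. This is the crux of Kadison's inequality and is genuinely subtle: for a general positive linear map, $2$-positivity can fail, so one cannot simply invoke $\mathrm{id}_2 \otimes \mathcal E \geq 0$. The resolution is to restrict attention to the specific matrix at hand: for any vector $(|\varphi\rangle, |\psi\rangle)$ in the doubled Hilbert space one must show $\langle\varphi|\mathcal E(\hat X \hat Y^{-1}\hat X)|\varphi\rangle + 2\,\mathrm{Re}\langle\varphi|\mathcal E(\hat X)|\psi\rangle + \langle\psi|\mathcal E(\hat Y)|\psi\rangle \geq 0$, and this can be reduced — by a Cauchy--Schwarz argument on the bilinear form $(\hat A,\hat B)\mapsto \mathcal E(\hat A^\dagger \hat B)$, which is positive-semidefinite when $\mathcal E$ is positive because $\mathcal E(\hat A^\dagger\hat A)\geq 0$ — to the positivity of the scalar quadratic form. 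I would write this Cauchy--Schwarz step out carefully, as it is where positivity of $\mathcal E$ is actually used; everything else is routine matrix manipulation, so I would keep those parts brief and cite Lemma~\ref{matrix_positive_lemma} rather than re-deriving it.
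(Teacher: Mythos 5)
Your skeleton (block-matrix positivity plus Lemma~\ref{matrix_positive_lemma}, with $\hat Y=\hat I$ recovering \eqref{Kadison_inequality}) is the right frame, but the step you yourself flag as the crux is genuinely broken as written. You claim that for a Gram-type block matrix $\bigl[\begin{smallmatrix}\hat A^\dagger\hat A & \hat A^\dagger\hat B\\ \hat B^\dagger\hat A & \hat B^\dagger\hat B\end{smallmatrix}\bigr]$ the entrywise image under a merely positive $\mathcal E$ is positive, to be justified by Cauchy--Schwarz for the form $(\hat A,\hat B)\mapsto\mathcal E(\hat A^\dagger\hat B)$. Positivity of $\mathcal E$ only gives the same-vector inequality $|\langle\varphi|\mathcal E(\hat A^\dagger\hat B)|\varphi\rangle|^2\le\langle\varphi|\mathcal E(\hat A^\dagger\hat A)|\varphi\rangle\langle\varphi|\mathcal E(\hat B^\dagger\hat B)|\varphi\rangle$; block positivity needs the mixed-vector version with independent $|\varphi\rangle,|\psi\rangle$, and that statement, for arbitrary $\hat A,\hat B$, is equivalent to $2$-positivity (every positive $2\times2$ block operator matrix is a sum of two Gram matrices, so your claim would make every positive map $2$-positive, contradicted by the transpose). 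Your alternative justification, that ``this particular positivity reduces to the Kadison--Schwarz inequality for a single operator,'' is circular, since that inequality is exactly what is being proved.

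The missing ingredient is precisely where Hermiticity of $\hat X$ must be used: its spectral decomposition $\hat X=\sum_k x_k|\varphi_k\rangle\langle\varphi_k|$ puts the relevant block matrix (after reducing to $\hat Y=\hat I$, or equivalently after the substitution $\hat X\mapsto\hat Y^{-1/2}\hat X\hat Y^{-1/2}$, $\mathcal E(\ast)\mapsto\mathcal E(\hat Y^{1/2}\ast\hat Y^{1/2})$, which is still positive) inside a commutative algebra, so that it decomposes as $\sum_k\bigl[\begin{smallmatrix}x_k^2 & x_k\\ x_k & 1\end{smallmatrix}\bigr]\otimes|\varphi_k\rangle\langle\varphi_k|$. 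Applying $\mathcal E$ entrywise then gives $\sum_k\bigl[\begin{smallmatrix}x_k^2 & x_k\\ x_k & 1\end{smallmatrix}\bigr]\otimes\mathcal E(|\varphi_k\rangle\langle\varphi_k|)$, a sum of tensor products of positive $2\times2$ matrices with positive operators, hence positive using only positivity of $\mathcal E$; equivalently, for vectors $|\varphi\rangle,|\psi\rangle$ the quadratic form equals $\sum_k\|x_k\mathcal E(|\varphi_k\rangle\langle\varphi_k|)^{1/2}|\varphi\rangle+\mathcal E(|\varphi_k\rangle\langle\varphi_k|)^{1/2}|\psi\rangle\|^2\ge0$. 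This is exactly the paper's argument; once you replace your Cauchy--Schwarz step by this spectral-decomposition step, the rest of your proposal (Lemma~\ref{matrix_positive_lemma} in both directions, and the specialization $\hat Y=\hat I$ with unitality) goes through.
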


\begin{proof}
Let $\hat X = \sum_k x_k | \varphi_k \rangle \langle \varphi_k |$ be the spectral decomposition of $\hat X$.
We first consider the case that $\hat Y = \hat I$ and define
\begin{equation}
\hat X' :=
\left[
\begin{array}{cc}
\mathcal E (\hat X^2 ) & \mathcal E (\hat X ) \\
 \mathcal E (\hat X ) & \mathcal E (\hat I )
\end{array}
\right]
= \sum_k
\left[
\begin{array}{cc}
x_k^2 & x_k \\
x_k & 1
\end{array}
\right]
\otimes \mathcal E (| \varphi_k \rangle \langle \varphi_k |).
\end{equation}
The right-hand side  is positive, because the $2 \times 2$ matrix and $\mathcal E (| \varphi_k \rangle \langle \varphi_k |)$ are both positive.
Thus, from Lemma~\ref{matrix_positive_lemma}, we obtain
\begin{equation}
\mathcal E (\hat X^2) \geq  \mathcal E (\hat X) \mathcal E (\hat I )^{-1} \mathcal E (\hat X).
\end{equation}
If $\hat Y \neq \hat I$, we replace  $\hat X$ by $\hat Y^{-1/2} \hat X \hat Y^{-1/2}$ and $\mathcal E (\ast )$ by $\mathcal E (\hat Y^{1/2} \ast \hat Y^{1/2} )$ (that is also positive),
 and then obtain inequality (\ref{genera_Kadison_inequality}).
$\Box$
\end{proof}

\begin{corollary}
Let $\mathcal E$ be positive and TP.
Let $\hat X$ be Hermitian and $\hat Y$, $\mathcal E (\hat Y)$ be positive definite.
Then,
\begin{equation}
{\rm tr}[\hat X^2 \hat Y^{-1}] \geq {\rm tr}[\mathcal E (\hat X)^2  \mathcal E (\hat Y)^{-1}]. 
\label{Renyi_2_monotonicity}
\end{equation}
\label{cor:Renyi_2_monotonicity}
\end{corollary}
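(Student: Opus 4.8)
The plan is to derive Corollary~\ref{cor:Renyi_2_monotonicity} from Proposition~\ref{prop:Kadison} (the Kadison-type inequality) by taking traces and exploiting the trace-preserving hypothesis. First I would recall that inequality~(\ref{genera_Kadison_inequality}) of Proposition~\ref{prop:Kadison} gives, for a positive superoperator $\mathcal E$, Hermitian $\hat X$, and positive-definite $\hat Y$, $\mathcal E(\hat Y)$, the operator inequality
\begin{equation}
\mathcal E(\hat X \hat Y^{-1} \hat X) \geq \mathcal E(\hat X) \mathcal E(\hat Y)^{-1} \mathcal E(\hat X).
\end{equation}
The key move is then to apply the trace to both sides. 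Since the trace is monotone with respect to the operator order ($\hat A \geq \hat B$ implies ${\rm tr}[\hat A] \geq {\rm tr}[\hat B]$, because $\hat A - \hat B \geq 0$ has nonnegative eigenvalues), we get
\begin{equation}
{\rm tr}[\mathcal E(\hat X \hat Y^{-1} \hat X)] \geq {\rm tr}[\mathcal E(\hat X) \mathcal E(\hat Y)^{-1} \mathcal E(\hat X)].
\end{equation}

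Next I would simplify each side. On the left, because $\mathcal E$ is TP, ${\rm tr}[\mathcal E(\hat X \hat Y^{-1} \hat X)] = {\rm tr}[\hat X \hat Y^{-1} \hat X] = {\rm tr}[\hat X^2 \hat Y^{-1}]$, using cyclicity of the trace. On the right, again using cyclicity, ${\rm tr}[\mathcal E(\hat X) \mathcal E(\hat Y)^{-1} \mathcal E(\hat X)] = {\rm tr}[\mathcal E(\hat X)^2 \mathcal E(\hat Y)^{-1}]$. Combining these two identities with the displayed trace inequality yields exactly
\begin{equation}
{\rm tr}[\hat X^2 \hat Y^{-1}] \geq {\rm tr}[\mathcal E(\hat X)^2 \mathcal E(\hat Y)^{-1}],
\end{equation}
which is the claim. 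One should double-check that $\hat X \hat Y^{-1} \hat X$ is Hermitian (it is, since $\hat X$ and $\hat Y^{-1}$ are) so that both sides of Proposition~\ref{prop:Kadison} make sense, and that $\mathcal E(\hat X)$ is Hermitian (it is, since positive maps send Hermitian operators to Hermitian operators), so that $\mathcal E(\hat X)^2$ on the right is well-defined and equals $\mathcal E(\hat X)^\dagger \mathcal E(\hat X)$.

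There is essentially no serious obstacle here; the statement is a routine corollary. The only mild subtlety worth flagging is the need for $\mathcal E(\hat Y)$ to be positive definite (not merely positive semidefinite) so that $\mathcal E(\hat Y)^{-1}$ exists — this is included in the hypotheses, so nothing extra is required. If one wanted to be careful about the possibility that $\mathcal E$ maps into a different-dimensional space $\mathcal H'$, the argument is unaffected: all traces on the right-hand side are taken in $\mathcal L(\mathcal H')$, all traces on the left in $\mathcal L(\mathcal H)$, and the TP condition ${\rm tr}[\mathcal E(\hat Z)] = {\rm tr}[\hat Z]$ bridges them. I would present the proof in three or four lines essentially as above.
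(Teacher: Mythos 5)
Your proof is correct and follows the same route as the paper, which simply says ``Take the trace of inequality~(\ref{genera_Kadison_inequality})''; you have just made explicit the use of trace monotonicity, the TP condition, and cyclicity. Nothing is missing.
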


\begin{proof}
Take the trace of inequality (\ref{genera_Kadison_inequality}).
$\Box$
\end{proof}

The quantum R\'enyi 2-divergence introduced in Eq.~(\ref{simple_Renyi_divergence}) is given by $\tilde S_2 (\hat \rho \| \hat \sigma):=  \ln \left( {\rm tr}[\hat \rho^2  \hat \sigma^{-1}] \right)$.
Thus, inequality~(\ref{Renyi_2_monotonicity}) implies the monotonicity of $\tilde S_2 (\hat \rho \| \hat \sigma)
$.

The above proof of Proposition~\ref{prop:Kadison} does not work in general if $\hat X$ is not Hermitian.  If $\hat X$ is not necessarily Hermitian, we have the following proposition by additionally assuming that $\mathcal E$ is 2-positive.

\begin{proposition}[Schwarz's operator inequality~\cite{Choi1974}]
Let $\mathcal E$ be 2-positive and let $\hat Y$, $\mathcal E (\hat Y)$ be positive definite.
Then,
\begin{equation}
\mathcal E (\hat X^\dagger \hat Y^{-1} \hat X) \geq \mathcal E (\hat X)^\dagger \mathcal E(\hat Y)^{-1} \mathcal E (\hat X). 
\label{Schwarz_operator_inequality}
\end{equation}
In particular, if $\mathcal E$ is 2-positive and unital,
\begin{equation}
\mathcal E (\hat X^\dagger \ \hat X) \geq \mathcal E (\hat X)^\dagger \mathcal E (\hat X). 
\label{Schwarz_operator_inequality_unital}
\end{equation}
\label{prop:Schwarz_operator_inequality}
\end{proposition}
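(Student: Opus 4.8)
\textbf{Proof proposal for Proposition~\ref{prop:Schwarz_operator_inequality}.}
The plan is to reduce the Schwarz operator inequality to the Kadison-type inequality (Proposition~\ref{prop:Kadison}) by a standard ``doubling'' trick: apply the already-proven positive-map inequality not to $\mathcal E$ itself, but to the induced $2$-positive map acting on $2\times 2$ operator matrices, where the non-Hermiticity of $\hat X$ can be absorbed into a Hermitian block matrix. Concretely, first I would note that by replacing $\hat X$ with $\hat Y^{-1/2}\hat X$ and $\mathcal E(\,\cdot\,)$ with $\mathcal E(\hat Y^{1/2}\,\cdot\,\hat Y^{1/2})$ (which is again $2$-positive), it suffices to treat the case $\hat Y = \hat I$, i.e.\ to prove $\mathcal E(\hat X^\dagger \hat X) \geq \mathcal E(\hat X)^\dagger \mathcal E(\hat X)$ under the extra hypothesis that $\mathcal E$ is unital; the general $\hat Y = \hat I$ non-unital case then follows from the unital reduction in the same way as in the proof of Proposition~\ref{prop:Kadison} (replace $\hat X$ appropriately and use $\mathcal E(\hat I)^{-1}$).

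For the unital case with $\hat Y = \hat I$, the key step is to consider the block matrix
\begin{equation}
\hat T := \left[ \begin{array}{cc} \hat X & 0 \\ 0 & 0 \end{array} \right],
\qquad
\hat T^\dagger \hat T = \left[ \begin{array}{cc} \hat X^\dagger \hat X & 0 \\ 0 & 0 \end{array} \right].
\end{equation}
Since $\mathcal E$ is $2$-positive, the map $\mathcal E_2 := \mathcal E \otimes \mathcal I_2$ is positive and unital on $\mathcal L(\mathcal H)\otimes\mathcal L(\mathbb C^2)$. The obstacle is that $\hat T$ is not Hermitian, so one cannot directly invoke \eqref{Kadison_inequality}; instead one uses the Hermitian operator $\hat S := \hat T + \hat T^\dagger$, or more cleanly applies the positivity of $\mathcal E_2$ to the manifestly positive operator-valued matrix built from the columns of $\hat T$, namely $\begin{bmatrix}\hat X^\dagger\hat X & \hat X^\dagger \\ \hat X & \hat I\end{bmatrix} = \begin{bmatrix}\hat X^\dagger \\ \hat I\end{bmatrix}\begin{bmatrix}\hat X & \hat I\end{bmatrix} \geq 0$. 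Applying $\mathcal E_2$ entrywise and using unitality in the bottom-right corner gives $\begin{bmatrix}\mathcal E(\hat X^\dagger\hat X) & \mathcal E(\hat X)^\dagger \\ \mathcal E(\hat X) & \hat I\end{bmatrix}\geq 0$, and then Lemma~\ref{matrix_positive_lemma} (with $\hat Z = \hat I$) yields exactly $\mathcal E(\hat X^\dagger\hat X)\geq \mathcal E(\hat X)^\dagger\mathcal E(\hat X)$, which is \eqref{Schwarz_operator_inequality_unital}.

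Finally, to obtain the general \eqref{Schwarz_operator_inequality}, I would unwind the substitution: with $\hat X \mapsto \hat Y^{-1/2}\hat X$ and the $2$-positive unital map $\Phi(\,\cdot\,) := \mathcal E(\hat Y^{1/2}\,\cdot\,\hat Y^{1/2})/\!\!\sim$ — more precisely, apply the $\hat Y=\hat I$ result to $\mathcal E'(\,\cdot\,) := \mathcal E(\hat Y)^{-1/2}\mathcal E(\hat Y^{1/2}\,\cdot\,\hat Y^{1/2})\mathcal E(\hat Y)^{-1/2}$, which is $2$-positive and unital — and substitute $\hat Z := \hat Y^{-1/2}\hat X$. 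Then $\mathcal E'(\hat Z^\dagger\hat Z)\geq \mathcal E'(\hat Z)^\dagger\mathcal E'(\hat Z)$ rearranges, after multiplying by $\mathcal E(\hat Y)^{1/2}$ on both sides, to $\mathcal E(\hat X^\dagger\hat Y^{-1}\hat X)\geq \mathcal E(\hat X)^\dagger\mathcal E(\hat Y)^{-1}\mathcal E(\hat X)$. The main thing to be careful about is the well-definedness of these conjugated maps (one needs $\mathcal E(\hat Y)$ positive definite, which is assumed) and checking that $2$-positivity is preserved under $\hat A \mapsto \hat B^\dagger \mathcal E(\hat B \,\cdot\, \hat B^\dagger)\hat B$-type operations; this is routine but should be stated explicitly. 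I expect the block-matrix positivity step together with Lemma~\ref{matrix_positive_lemma} to be the conceptual core, and the substitution bookkeeping to be the only place where sign/adjoint errors could creep in.
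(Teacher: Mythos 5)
Your argument is correct and rests on exactly the same mechanism as the paper's proof: positivity of a $2\times 2$ operator block matrix built from $\hat X$, positivity of $\mathcal E \otimes \mathcal I_2$ (from $2$-positivity), and the Schur-complement criterion of Lemma~\ref{matrix_positive_lemma}. The only organizational difference is that the paper applies $\mathcal E \otimes \mathcal I_2$ directly to the positive block matrix with entries $\hat Y$, $\hat X$, $\hat X^\dagger$, $\hat X^\dagger \hat Y^{-1} \hat X$ and reads off the general inequality in one step, whereas you first reduce to the unital case $\hat Y = \hat I$ via the conjugated map $\mathcal E'$ and then unwind; both work, and your aside about $\hat S := \hat T + \hat T^\dagger$ is unnecessary since your factorized column matrix already does the job. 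One bookkeeping correction in the unwinding: with $\mathcal E'(\cdot) := \mathcal E(\hat Y)^{-1/2}\mathcal E(\hat Y^{1/2}\cdot\hat Y^{1/2})\mathcal E(\hat Y)^{-1/2}$ you should take $\hat Z := \hat Y^{-1/2}\hat X\hat Y^{-1/2}$, so that $\hat Y^{1/2}\hat Z^\dagger \hat Z\hat Y^{1/2} = \hat X^\dagger\hat Y^{-1}\hat X$ and $\mathcal E'(\hat Z) = \mathcal E(\hat Y)^{-1/2}\mathcal E(\hat X)\mathcal E(\hat Y)^{-1/2}$, and conjugating by $\mathcal E(\hat Y)^{1/2}$ then gives precisely \eqref{Schwarz_operator_inequality}; your choice $\hat Z = \hat Y^{-1/2}\hat X$ instead yields the inequality with $\hat X\hat Y^{1/2}$ in place of $\hat X$, which also suffices after the harmless renaming $\hat X \mapsto \hat X\hat Y^{-1/2}$, but as written the final rearrangement does not land on the stated inequality.
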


\begin{proof}
We define
\begin{equation}
\hat X' := \left[
\begin{array}{cc}
\hat Y & \hat X \\
\hat X^\dagger & \hat X^\dagger \hat Y^{-1} \hat X \\
\end{array}
\right].
\end{equation}
Because $\mathcal E$ is 2-positive, $\mathcal E \otimes \mathcal I_2$ is positive, and thus
\begin{equation}
( \mathcal E \otimes \mathcal I_2 ) (\hat X' ) = \left[
\begin{array}{cc}
 \mathcal E ( \hat Y ) &   \mathcal E (X) \\
 \mathcal E ( \hat X )^\dagger &  \mathcal E ( \hat X^\dagger \hat Y^{-1} \hat X ) \\
\end{array}
\right]
\end{equation}
is positive.
From Lemma~\ref{matrix_positive_lemma}, we obtain inequality (\ref{Schwarz_operator_inequality}). $\Box$
\end{proof}

\section{Operator convex and operator monotone}
\label{sec:operator_convex}

We consider operator convex/concave functions and operator monotone functions.
The operator convexity and the operator monotonicity are much stronger properties than the ordinary convexity and the ordinary monotonicity of functions. 
In this section, we will omit proofs of several important theorems; see Refs.~\cite{Bhatia,Hiai2010,Hiai2011f} for details.

\begin{definition}[Operator convex/concave functions]
A function $f : (0, \infty) \to \mathbb R$ is operator convex, if for any positive-definite operators $\hat X, \hat Y$ and any  $0 \leq \lambda \leq 1$, 
 \begin{equation}
f(\lambda \hat X + (1-\lambda)\hat Y) \leq \lambda f(\hat X) + (1-\lambda)f(\hat Y)
\label{def:operator_convex}
\end{equation}
holds. If $\leq$ above is replaced by $\geq$, $f$ is operator concave.  
\end{definition}

Obviously, $f$ is operator convex if and only if $-f$ is operator concave.
Since every convex/concave function on any open interval is continuous, every operator convex/concave function is automatically continuous on $(0,\infty)$.

\begin{definition}[Operator monotone functions]
A function $f: (0, \infty) \to \mathbb R$ is operator monotone,  if for any positive-definite operators $\hat X, \hat Y$  with $\hat X \leq \hat Y$, $f(\hat X)\leq f(\hat Y)$ holds.
If $f(\hat X)\geq f(\hat Y)$ holds, $f$ is operator decreasing-monotone. 
\end{definition}

Obviously, $f$ is operator monotone if and only if $-f$ is operator decreasing-monotone.
We  note the following propositions.

\begin{proposition}[Theorem 2.4 of~\cite{Hansen1982}]
Let $f : (0 , \infty ) \to \mathbb R$ be a continuous function and suppose that $f(0) := \lim_{x \to +0} f(x) \in (-\infty, 0]$ exists.
Then, $f$ is operator convex if and only if $g(x):= x^{-1}f(x)$ is operator monotone on $(0,\infty)$.
\end{proposition}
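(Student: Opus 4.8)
The plan is to prove both implications by invoking the canonical integral representations of operator convex and operator monotone functions on $(0,\infty)$ (taken as known; see \cite{Bhatia,Hiai2010,Hiai2011f}), and then tracking how the standing hypothesis $f(0)\in(-\infty,0]$ constrains the representation so that multiplication, resp.\ division, by $x$ carries one class of building blocks into the other. The hypothesis is used at exactly one place: to kill the single building block ($x\mapsto x^{-1}$ with a positive coefficient) that would otherwise obstruct the argument.

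First I would collect the elementary facts about the relevant scalar building blocks, all of which follow from the lemmas of Section~\ref{sec:operator_inequalities} (in particular that $x\mapsto x^{-1}$ is operator decreasing on $(0,\infty)$, and the standard operator convexity of $x\mapsto x^{-1}$) together with the identities $\frac{x}{x+t}=1-t\frac{1}{x+t}$ and $\frac{x^{2}}{x+t}=(x-t)+t^{2}\frac{1}{x+t}$: namely that $x$ and $x^{2}$ are operator convex, $x^{2}(x+t)^{-1}$ is operator convex for each $t\ge 0$, $\alpha x^{-1}$ is operator monotone iff $\alpha\le 0$ and operator convex iff $\alpha\ge 0$, and $x(x+t)^{-1}$ is operator monotone for each $t\ge 0$; and that positive combinations (or integrals against a positive measure, modulo affine/constant terms) of operator convex functions are operator convex, and likewise for operator monotone functions.

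For ``$g$ operator monotone $\Rightarrow$ $f$ operator convex'': by the Löwner representation, $g(x)=\alpha+\beta x+\gamma x^{-1}+\int_{(0,\infty)}\frac{x}{x+t}\,d\nu(t)$ with $\beta\ge 0$, $\gamma\le 0$, $\nu\ge 0$. Multiplying by $x$ gives $f(x)=\alpha x+\beta x^{2}+\gamma+\int_{(0,\infty)}\frac{x^{2}}{x+t}\,d\nu(t)$, which by the building-block list is a positive combination of operator convex functions plus an affine term, hence operator convex (and incidentally $f(0^{+})=\gamma\le 0$, consistent with the hypothesis). For the converse ``$f$ operator convex, $f(0)\le 0$ $\Rightarrow$ $g$ operator monotone'': the canonical representation of operator convex functions gives $f(x)=\alpha+\beta x+\gamma x^{2}+\delta x^{-1}+\int_{(0,\infty)}\frac{x^{2}}{x+t}\,d\mu(t)$ with $\gamma,\delta\ge 0$, $\mu\ge 0$. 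Now the hypothesis enters: if $\delta>0$ then $f(x)\to+\infty$ as $x\to 0^{+}$, so the existence of $f(0^{+})$ forces $\delta=0$, whence $f(0^{+})=\alpha\le 0$. Dividing by $x$ yields $g(x)=\alpha x^{-1}+\beta+\gamma x+\int_{(0,\infty)}\frac{x}{x+t}\,d\mu(t)$, a positive combination of operator monotone functions (the leading term because $\alpha\le 0$, the $\gamma x$ term because $\gamma\ge 0$) plus a constant, hence operator monotone.

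The delicate point is getting the integral representations in exactly the right form for the half-line $(0,\infty)$ — in particular ensuring the $x^{-1}$ building block appears with the stated sign constraint — since a sign slip there wrecks the argument; this is why I would state the representations only up to the precise measure-theoretic normalization and reduce everything to the explicit building blocks. A fully self-contained route, if one prefers to avoid Löwner's theorem, is to prove instead the operator Jensen inequality ``$f$ operator convex with $f(0)\le 0$ $\iff$ $f(\hat A^{\dagger}\hat X\hat A)\le\hat A^{\dagger}f(\hat X)\hat A$ for every contraction $\hat A$ and every $\hat X>0$'', using the compression trick with $\hat U=\hat P-\hat Q$ together with the unitary dilation of a contraction (here $f(0)\le 0$ is exactly what lets one discard the nonnegative ``defect'' $f(0)(\hat I-\hat A^{\dagger}\hat A)$); granting this, the implication ``$\Rightarrow$'' is immediate by taking $\hat A=\hat Y^{-1/2}\hat X^{1/2}$, which is a contraction when $0<\hat X\le\hat Y$ and satisfies $\hat A^{\dagger}\hat Y\hat A=\hat X$, and then conjugating $f(\hat X)\le\hat X^{1/2}\hat Y^{-1/2}f(\hat Y)\hat Y^{-1/2}\hat X^{1/2}$ by $\hat X^{-1/2}$. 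In either route one must also note that operator convexity on $(0,\infty)$ extends continuously to $[0,\infty)$, which again relies on $f(0^{+})$ existing.
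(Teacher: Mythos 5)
The paper does not actually prove this proposition: it is quoted from Theorem~2.4 of Hansen and Pedersen~\cite{Hansen1982} and stated without proof, so there is no in-text argument to compare against. Your proposal is sound in substance and gives a legitimate route. Your direction ``$f$ operator convex, $f(0)\le 0$ $\Rightarrow$ $g$ operator monotone'' can in fact be run verbatim off the representation the paper does state (Theorem~\ref{thm:Lowner_convex}), whose standing hypothesis is exactly that $f(0^{+})$ is finite: dividing that representation by $x$ gives $g(x)=f(0)x^{-1}+a+bx+\int\bigl(\tfrac{1}{1+t}-\tfrac{1}{x+t}\bigr)d\mu(t)$, each term operator monotone once $f(0)\le 0$, so you do not even need the more general convex representation with a $\delta x^{-1}$ part and the argument that $\delta=0$ (though that argument is also correct). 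The reverse direction is fine as a strategy: you only need the existence half of L\"owner's representation for operator monotone $g$ plus the elementary operator convexity of $x$, $x^{2}$ and $(x+t)^{-1}$. Your alternative sketch via the Jensen operator inequality with $\hat A=\hat Y^{-1/2}\hat X^{1/2}$ is precisely the original Hansen--Pedersen argument, one half of which the paper supplies as Proposition~\ref{prop:Jensen_operator_inequality}, so that route is closest in spirit to the cited source (note it only covers the direction convex~$\Rightarrow$~monotone as you sketch it).

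One concrete imprecision you should fix, which you partially anticipated: the representation of an operator monotone function as $\alpha+\beta x+\gamma x^{-1}+\int_{(0,\infty)}\frac{x}{x+t}\,d\nu(t)$ with uncompensated blocks $\frac{x}{x+t}$ is not literally valid in general --- for $g(x)=\ln x$ the measure forces $\int_{(0,1)}\frac{x}{x+t}\,d\nu(t)=\infty$, and only the compensated blocks $\frac{x}{x+t}-c(t)$ (e.g.\ $c(t)=\frac{1}{1+t}$, as in the standard Nevanlinna--L\"owner form) integrate. This does not damage your argument, because the compensators are affine in $x$: after multiplying by $x$ each block becomes $\frac{x^{2}}{x+t}-c(t)x$, still operator convex, and the affine pieces cancel in any convexity defect. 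Stating the representation in the compensated form, and noting that the possible atom at $t=0$ is exactly the $\gamma x^{-1}$ term with $\gamma\le 0$, closes the gap.
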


\begin{proposition}[Theorem 2.5 of~\cite{Hansen1982}]
Let $f : (0 , \infty ) \to (-\infty, 0]$ be a continuous function and suppose that $f(0) := \lim_{x \to +0} f(x)  \in (-\infty, 0]$ exists.
Then, $f$ is operator convex if and only if it is operator decreasing-monotone.
\label{prop:concave_monotone}
\end{proposition}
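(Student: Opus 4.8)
The plan is to establish the two implications separately. Writing $g:=-f$, the assertion is equivalently that a continuous $g\colon(0,\infty)\to[0,\infty)$ whose limit $g(0):=\lim_{x\to+0}g(x)\ge 0$ exists is operator concave if and only if it is operator monotone; I would keep the discussion phrased in terms of the original $f$, and the two halves have rather different flavours.

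For the implication ``$f$ operator convex $\Rightarrow$ $f$ operator decreasing-monotone'' I would give a self-contained telescoping argument using only the definition of operator convexity and the hypothesis $f\le 0$ (no appeal to Theorem~2.4 of~\cite{Hansen1982} is needed here). Given positive-definite $\hat X\le\hat Y$, put $\hat\Delta:=\hat Y-\hat X\ge 0$ and $\hat X_n:=\hat X+n\hat\Delta$ for $n=0,1,2,\dots$; each $\hat X_n\ge\hat X>0$ lies in the domain of $f$. Since $\hat X_{n+1}=\tfrac12\hat X_n+\tfrac12\hat X_{n+2}$, the case $\lambda=\tfrac12$ of~(\ref{def:operator_convex}) gives $2f(\hat X_{n+1})\le f(\hat X_n)+f(\hat X_{n+2})$, i.e.\ the Hermitian operators $\hat d_n:=f(\hat X_{n+1})-f(\hat X_n)$ satisfy $\hat d_{n}\le\hat d_{n+1}$. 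Summation yields $f(\hat X_n)=f(\hat X)+\sum_{k=0}^{n-1}\hat d_k\ge f(\hat X)+n\hat d_0$. Now $f(\hat X_n)\le 0$ for every $n$ because $f$ takes values in $(-\infty,0]$; if $\hat d_0=f(\hat Y)-f(\hat X)$ had a strictly positive eigenvalue with eigenvector $v$, then $\langle v,f(\hat X_n)v\rangle\ge\langle v,f(\hat X)v\rangle+n\langle v,\hat d_0 v\rangle\to+\infty$, a contradiction. Hence $f(\hat Y)\le f(\hat X)$.

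For the converse, ``$f$ operator decreasing-monotone $\Rightarrow$ $f$ operator convex'', I would invoke the preceding proposition (Theorem~2.4 of~\cite{Hansen1982}): it suffices to show that $x\mapsto f(x)/x$ is operator monotone on $(0,\infty)$. In this direction $-f$ is a nonnegative operator monotone function on $(0,\infty)$ with finite limit $-f(0)\ge 0$ at the origin, so Löwner's integral representation applies and produces $\alpha,\beta\ge 0$ and a positive measure $\mu$ on $(0,\infty)$ with $-f(x)=\alpha+\beta x+\int_{(0,\infty)}\frac{(1+s)x}{s+x}\,d\mu(s)$, the nonnegativity of $\alpha$ and $\beta$ being forced by $-f\ge 0$ and by the finiteness of $-f(0)$. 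Dividing by $x$ gives $f(x)/x=-\alpha/x-\beta-\int_{(0,\infty)}\frac{1+s}{s+x}\,d\mu(s)$. Each piece is operator monotone: $x\mapsto-\alpha/x$ is a nonpositive multiple of $x\mapsto 1/x$, which is operator decreasing-monotone by the elementary lemma that $\hat A\le\hat B$ implies $\hat A^{-1}\ge\hat B^{-1}$ for positive-definite operators; $-\beta$ is constant; $x\mapsto-\tfrac{1+s}{s+x}$ is a nonpositive multiple of $x\mapsto\tfrac1{s+x}$, again operator decreasing-monotone by the same lemma; and the integral is a pointwise limit of finite positive combinations of such functions, under which operator monotonicity is stable. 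Thus $f(x)/x$ is operator monotone, and Theorem~2.4 of~\cite{Hansen1982} then gives that $f$ is operator convex.

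The step I expect to be the main obstacle is the appeal to the Löwner integral representation in the last paragraph: the rest of the proof is elementary, but pinning down the precise form of a nonnegative operator monotone function on $(0,\infty)$ — the substantive half of Löwner's theorem — cannot be finessed from the Theorem~2.4 criterion by purely algebraic manipulation. An equivalent way to phrase the same obstacle is to invoke directly the classical fact that every operator monotone function on $(0,\infty)$ is operator concave, applied to $-f$; but that fact itself rests on the same Löwner theory.
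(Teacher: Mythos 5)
The paper gives no proof of Proposition~\ref{prop:concave_monotone}: it is quoted from \cite{Hansen1982}, and the surrounding text explicitly omits the proofs of these operator-theoretic results, so there is no in-paper argument to compare yours against; judged on its own, your proof is correct. The telescoping argument for ``operator convex $\Rightarrow$ operator decreasing-monotone'' is sound: $\hat d_n\le\hat d_{n+1}$ is the midpoint case of operator convexity, summation gives $f(\hat X_n)\ge f(\hat X)+n\hat d_0$, and $f(\hat X_n)\le 0$ (automatic because ${\rm spec}\, f(\hat X_n)\subset f((0,\infty))\subset(-\infty,0]$) rules out a positive eigenvalue of $\hat d_0=f(\hat Y)-f(\hat X)$; note this half uses only $f\le 0$ and not the existence of $f(0)$. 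For the converse you correctly reduce, via Theorem 2.4 of \cite{Hansen1982} (the proposition stated just before this one), to operator monotonicity of $f(x)/x$, and you get it from the L\"owner integral representation of the nonnegative operator monotone function $-f$; the only imprecision is cosmetic, since $\beta\ge 0$ is part of the representation theorem itself rather than being ``forced'' by $-f\ge 0$, while $\alpha=-f(0)\ge 0$ indeed follows by letting $x\to+0$, and the termwise monotonicity passes through the integral by evaluating quadratic forms. As you observe, this half also follows in one line from the classical fact that operator monotone functions on $(0,\infty)$ are operator concave, applied to $-f$. Your route does differ from the cited source: Hansen and Pedersen obtain their Theorem 2.5 from the Jensen operator inequality (Proposition~\ref{prop:Jensen_operator_inequality} here) together with their Theorem 2.4, without invoking L\"owner's representation --- reproving L\"owner-type results from the Jensen inequality was the point of \cite{Hansen1982} --- so your argument trades that machinery for a citation to the integral representation (e.g.\ \cite{Bhatia}), a legitimate but heavier external input, while your first half is more elementary and self-contained than the corresponding step in the cited source.
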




It is known that the full characterizations of operator convex functions and operator monotone functions are given by some integral representations, which is often referred to as the L\"owner's theorem~\cite{Bhatia,Hiai2010,Hiai2011f}.
While there are several variants of such integral representations, we here state one of them for operator convex functions.

\begin{theorem}[Theorem 8.1 of \cite{Hiai2011f}]
Suppose that $f : (0,\infty) \to \mathbb R$ is a continuous function and  $f(0) := \lim_{x \to +0} f(x) \in \mathbb R$ exists.
$f$ is operator convex, if and only if there exist $a \in \mathbb R$, $b \geq 0$, and a non-negative measure $\mu$ on $(0,\infty)$ satisfying $\int_{(0,\infty)} (1+t)^{-2} d\mu (t) < \infty$,  such that
\begin{equation}
f(x) = f(0) + ax + bx^2 + \int_{(0,\infty)} \left( \frac{x}{1+t} - \frac{x}{x+t} \right) d\mu (t).
\label{Lowner_convex}
\end{equation}
Moreover, $a$, $b$, and $\mu$ are uniquely determined by $f$, and 
\begin{equation}
b = \lim_{x \to \infty} \frac{f(x)}{x^2}, \ \ a = f(1) - f(0) - b.
\end{equation}
\label{thm:Lowner_convex}
\end{theorem}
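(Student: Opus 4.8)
The plan is to prove the two implications separately, getting the hard ``only if'' direction from the classical L\"owner integral representation of operator monotone functions together with the reduction already recorded above (Theorem~2.4 of~\cite{Hansen1982}), and doing the easy ``if'' direction by direct verification and a limiting argument.

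For the ``if'' direction I would check operator convexity of each building block of~(\ref{Lowner_convex}). The constant $f(0)$ and the affine terms $ax$ and $x/(1+t)$ are operator convex trivially; $bx^{2}$ is operator convex because $x\mapsto x^{2}$ is, via $\lambda\hat X^{2}+(1-\lambda)\hat Y^{2}-(\lambda\hat X+(1-\lambda)\hat Y)^{2}=\lambda(1-\lambda)(\hat X-\hat Y)^{2}\ge 0$, and $b\ge 0$. For $t>0$ the map $x\mapsto(x+t)^{-1}$ is operator convex, being the composition of the affine map $\hat X\mapsto\hat X+t\hat I$ with the operator convex function $y\mapsto y^{-1}$ on $(0,\infty)$, so $x\mapsto-\tfrac{x}{x+t}=-1+\tfrac{t}{x+t}$ is operator convex as well. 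Each integrand $x\mapsto\tfrac{x}{1+t}-\tfrac{x}{x+t}$ is then operator convex, and writing it as $\hat X(\hat X+t\hat I)^{-1}(1+t)^{-1}(\hat X-\hat I)$, with all factors commuting, shows its operator norm is $O(t^{-2})$ as $t\to\infty$, uniformly for $\hat X$ in any operator-norm-bounded set of positive-definite operators with eigenvalues bounded below; hence the integral against $\mu$ converges in operator norm whenever $\int(1+t)^{-2}\,d\mu(t)<\infty$, and operator convexity passes to the resulting pointwise limit of nonnegative superpositions of operator convex functions.

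For the ``only if'' direction, I would first subtract the constant: $\tilde f:=f-f(0)$ is operator convex with $\tilde f(0)=0\in(-\infty,0]$, so by the proposition above (Theorem~2.4 of~\cite{Hansen1982}) the function $g(x):=\tilde f(x)/x$ is operator monotone on $(0,\infty)$. I would then invoke L\"owner's integral representation for operator monotone functions on $(0,\infty)$: there exist $\gamma\in\mathbb R$, $\beta\ge 0$ and a nonnegative measure $\mu$ on $(0,\infty)$ with $\int(1+t)^{-2}\,d\mu(t)<\infty$ such that $g(x)=\gamma+\beta x+\int_{(0,\infty)}\bigl(\tfrac{1}{1+t}-\tfrac{1}{x+t}\bigr)\,d\mu(t)$. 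Multiplying by $x$ gives $\tilde f(x)=\gamma x+\beta x^{2}+\int_{(0,\infty)}\bigl(\tfrac{x}{1+t}-\tfrac{x}{x+t}\bigr)\,d\mu(t)$, which is exactly~(\ref{Lowner_convex}) with $b=\beta\ge 0$ and $a=\gamma$. Evaluating~(\ref{Lowner_convex}) at $x=1$ annihilates the integrand and gives $a=f(1)-f(0)-b$; dividing by $x^{2}$ and letting $x\to\infty$ gives $b=\lim_{x\to\infty}f(x)/x^{2}$, the integral term being $o(1)$ by dominated convergence since its $x^{-2}$-scaled integrand is bounded by the $\mu$-integrable $(1+t)^{-2}$ and tends pointwise to $0$. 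Uniqueness of $a$, $b$ and $\mu$ then follows from the uniqueness in L\"owner's representation, which one recovers from the analytic continuation of $g$ to $\mathbb C\setminus(-\infty,0]$ by the Stieltjes inversion formula.

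The main obstacle is L\"owner's theorem itself --- that operator monotonicity forces $g$ to be the restriction of a Pick function admitting a Nevanlinna representation. The route I would take here is the classical one: differentiate the definition of operator monotonicity to show it is equivalent to positive semidefiniteness of all divided-difference (L\"owner) matrices $\bigl[\tfrac{g(x_{i})-g(x_{j})}{x_{i}-x_{j}}\bigr]$, using the explicit formula for the Fr\'echet derivative $Dg(\hat A)[\hat H]$ in an eigenbasis of $\hat A$ and testing against $\hat H\ge 0$; then upgrade this to an analytic extension of $g$ to the upper half plane with nonnegative imaginary part; and finally apply the Herglotz--Nevanlinna representation of Pick functions, transported back to $(0,\infty)$. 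This analytic core is the genuinely heavy part; everything else --- the ``if'' direction, the reduction via Hansen's theorem, and the coefficient identities --- is routine, and for a shorter exposition one may simply cite L\"owner's representation, after which only those routine steps remain.
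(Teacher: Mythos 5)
The paper itself offers no proof of this theorem: it is stated as a quotation of Theorem 8.1 of \cite{Hiai2011f}, so there is no in-text argument to compare against. Your route is the standard one and is essentially the proof given in that reference: verify the ``if'' direction termwise (affine pieces, $bx^2$ with $b\geq 0$, operator convexity of $-x/(x+t)$ via the L\"owner--Heinz facts already recorded in the text, and operator-norm convergence of the integral under $\int (1+t)^{-2}d\mu<\infty$, so that operator convexity survives the limit), and for the ``only if'' direction subtract $f(0)$, use the Hansen--Pedersen equivalence (Theorem 2.4 of \cite{Hansen1982}, quoted in the Appendix) to pass to the operator monotone function $g(x)=(f(x)-f(0))/x$, invoke the L\"owner/Nevanlinna representation, multiply back by $x$, and read off $a$, $b$, $\mu$ and their uniqueness from the behaviour at $x=1$, $x\to\infty$, and Stieltjes inversion. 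Deferring the analytic core (L\"owner's theorem) to a sketch or a citation is consistent with the level at which the book treats this result.

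One step needs a one-line repair. The representation you quote for an operator monotone function on $(0,\infty)$, namely $g(x)=\gamma+\beta x+\int_{(0,\infty)}\bigl(\tfrac{1}{1+t}-\tfrac{1}{x+t}\bigr)d\mu(t)$, is not the general one: the Nevanlinna measure may carry a point mass at $t=0$, which contributes an additional term $-c/x$ with $c\geq 0$; for instance $g(x)=-1/x$ is operator monotone but admits no representation of the form you wrote. In your application this term must be ruled out explicitly: since $xg(x)=f(x)-f(0)\to 0$ as $x\to+0$, while the $\beta x^2$ term and the integral contribution to $xg(x)$ also vanish in that limit (dominated convergence, with majorant $4(1+t)^{-2}$ for $0<x\leq 1$), a mass $c>0$ at $t=0$ would force $\lim_{x\to+0}f(x)=f(0)-c\neq f(0)$, a contradiction; hence $c=0$ and your formula follows. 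With that line added, the coefficient identities $b=\lim_{x\to\infty}f(x)/x^2$ and $a=f(1)-f(0)-b$ and the uniqueness argument go through exactly as you describe.
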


A simple example of  the integral representation~(\ref{Lowner_convex})  is given by
\begin{equation}
x \ln x = \int_0^\infty \left( \frac{x}{1+t} - \frac{x}{x+t} \right) dt.
\label{ln_operator_convex}
\end{equation}
We here list  examples of operator convex/concave functions and operator monotone functions~\cite{Bhatia,Hiai2010,Hiai2011f}.

\begin{proposition}[L\"owner-Heinz Theorem]
On $(0,\infty)$,  
\begin{description}
\item[(a)] For $0 \leq t < \infty$, $f(x) =  (x+t)^{-1}$ is operator convex and operator decreasing-monotone. 
\item[(b)] For  $0 < t < \infty$, $f(x) = x / (x+t)$ is operator concave and operator monotone.
\item[(c)] $f(x) = - \ln x$ is operator convex and operator decreasing-monotone.
\item[(d)] $f(x) = x \ln x$ is operator convex (but not (decreasing-)monotone).
\item[(e)] For $0 \leq \alpha \leq 1$, $f(x) = x^\alpha$ is operator concave and operator monotone.
\item[(f)] For $1 < \alpha \leq 2$, $f(x) = x^\alpha$ is operator convex (but not operator monotone).
\end{description}
\end{proposition}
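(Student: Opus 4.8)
The statement to prove is the L\"owner--Heinz Theorem, listing six basic examples of operator convex, operator concave, operator monotone, and operator decreasing-monotone functions on $(0,\infty)$. The plan is to establish (a) and (b) directly from elementary operator identities, then bootstrap the rest either from the integral representation (Theorem~\ref{thm:Lowner_convex}) or from the two structural propositions relating operator convexity, operator monotonicity, and the substitution $g(x) = x^{-1}f(x)$.

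First I would handle (a). For $f(x) = (x+t)^{-1}$ with $t \geq 0$: operator decreasing-monotonicity is exactly the content of the first lemma of Section~\ref{sec:operator_inequalities} applied to $\hat A + t\hat I \leq \hat B + t\hat I$, which gives $(\hat A+t\hat I)^{-1} \geq (\hat B+t\hat I)^{-1}$. For operator convexity of $x \mapsto (x+t)^{-1}$, the cleanest route is the integral/resolvent identity: writing $\hat Z = \lambda \hat X + (1-\lambda)\hat Y$, one checks $(\hat Z + t\hat I)^{-1} \leq \lambda(\hat X+t\hat I)^{-1} + (1-\lambda)(\hat Y+t\hat I)^{-1}$ by using the operator Jensen inequality for the map $\hat W \mapsto \hat W^{-1}$; concretely this follows from Lemma~\ref{matrix_positive_lemma} applied to a suitable $2\times 2$ block operator built from $(\hat X+t\hat I)$, $(\hat Y+t\hat I)$ and the convex combination, in the same spirit as the proof of Proposition~\ref{prop:Kadison}. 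Then (b) is immediate: $x/(x+t) = 1 - t(x+t)^{-1}$, so it is operator concave and operator monotone because $(x+t)^{-1}$ is operator convex and operator decreasing-monotone (for $t>0$ the subtraction flips both).

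Next, (c) through (f). For (c), $f(x) = -\ln x$: operator convexity follows from the representation \eqref{ln_operator_convex}, i.e. $x\ln x = \int_0^\infty\!\big(\tfrac{x}{1+t} - \tfrac{x}{x+t}\big)\,dt$, combined with (a)--(b) which handle the integrand termwise and are preserved under non-negative integration; alternatively one rewrites $-\ln x = \int_0^\infty\!\big(\tfrac{1}{x+t} - \tfrac{1}{1+t}\big)\,dt$ and invokes (a) directly. Since $-\ln x \leq 0$ fails globally but $-\ln$ maps into a half-line and is continuous with the limit at $0$ existing in $(-\infty,+\infty]$, I would instead get operator decreasing-monotonicity of $-\ln x$ straight from operator monotonicity of $\ln x$, which itself follows from its integral representation with a positive measure. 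For (d), $f(x) = x\ln x$: operator convexity is exactly \eqref{ln_operator_convex} together with (a)--(b), reading off the form of Theorem~\ref{thm:Lowner_convex} with $a=b=0$, $f(0)=0$, $d\mu(t)=dt$; non-monotonicity is clear from the scalar behavior near $0$. For (e), $0 \leq \alpha \leq 1$: one uses the classical integral formula $x^\alpha = c_\alpha\int_0^\infty \tfrac{x}{x+t}\,t^{\alpha-1}\,dt$ (valid for $0<\alpha<1$, with $c_\alpha = \sin(\pi\alpha)/\pi$), so operator concavity and operator monotonicity follow termwise from (b) and non-negative integration; the endpoint cases $\alpha=0,1$ are trivial. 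For (f), $1 < \alpha \leq 2$: I would apply Proposition~\ref{prop:concave_monotone}/the bootstrap proposition to $g(x) = x^{-1}f(x) = x^{\alpha-1}$ with $\alpha - 1 \in (0,1]$, which is operator monotone by (e); since $f(0)=0 \in (-\infty,0]$, the first structural proposition then yields operator convexity of $f(x)=x^\alpha$. Non-operator-monotonicity for $\alpha>1$ is a standard $2\times 2$ counterexample that I would cite rather than reproduce.

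The main obstacle I expect is purely bookkeeping: the structural propositions and the L\"owner integral representation are stated for functions with $f(0) := \lim_{x\to+0}f(x)$ existing and, in the case of Proposition~\ref{prop:concave_monotone}, with $f$ taking values in $(-\infty,0]$. So for $-\ln x$ (which blows up at $0$) and for $x\ln x$, $x^\alpha$ with $\alpha>1$ (which are not $\leq 0$), I must be careful to invoke the \emph{correct} proposition --- using the $g(x)=x^{-1}f(x)$ substitution for (d) and (f), and the $\ln$-monotonicity input for (c) --- rather than the one whose hypotheses fail. Once the sign conditions and the existence of the limit at $0$ are checked case by case, everything else reduces to writing each function as a positive integral of the resolvent-type functions covered by (a) and (b), which I would present compactly.
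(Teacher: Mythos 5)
Your proposal is correct, but it departs from the paper's proof in a few places and is in parts more complete. For (a), the paper proves operator convexity of $x\mapsto x^{-1}$ by a congruence trick: it conjugates by $\hat Y^{-1/2}$ so that both sides become functions of the single operator $\hat Y^{-1/2}\hat X\hat Y^{-1/2}$, where scalar convexity of $x^{-1}$ suffices, and then conjugates back; you instead propose the Schur-complement route via Lemma~\ref{matrix_positive_lemma} (positivity of the block matrix with entries $\hat A^{-1},\hat I,\hat I,\hat A$, summed over the convex combination), which is equally standard and self-contained given the paper's lemma --- but you should drop the phrase about the ``operator Jensen inequality for $\hat W\mapsto\hat W^{-1}$,'' since Proposition~\ref{prop:Jensen_operator_inequality} presupposes operator convexity and would be circular. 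Items (b)--(d) coincide with the paper: $x/(x+t)=1-t(x+t)^{-1}$, the integral $-\ln x=\int_0^\infty\bigl((x+t)^{-1}-(1+t)^{-1}\bigr)dt$, and Eq.~(\ref{ln_operator_convex}) combined with (b); only your first clause in (c), which tries to read convexity of $-\ln x$ off the representation of $x\ln x$, is off --- the alternative you give is the right one. The genuine difference is in (e) and (f): the paper simply cites Bhatia/Hiai for these, whereas you supply proofs --- the classical formula $x^\alpha=\frac{\sin(\pi\alpha)}{\pi}\int_0^\infty\frac{x}{x+t}\,t^{\alpha-1}dt$ reduces (e) termwise to (b), and (f) follows from the Hansen--Pedersen equivalence (the unnumbered proposition, Theorem~2.4 of \cite{Hansen1982}, not Proposition~\ref{prop:concave_monotone}, whose sign hypothesis $f\le 0$ fails for $x^\alpha$) applied to $g(x)=x^{\alpha-1}$, which is operator monotone by (e) since $\alpha-1\in(0,1]$ and $f(0)=0$. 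This buys a self-contained treatment of (e) and (f) at the cost of importing the integral formula for $x^\alpha$; the paper's choice is to keep the proof short and outsource exactly those two items. With the circularity remark removed and the proposition labels in (f) straightened out, your argument goes through.
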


\begin{proof}
Here we only prove (a)-(d).

We first show (a).
From the convexity of $x^{-1}$, we have 
$(\lambda \hat Y^{-1/2}\hat X \hat Y^{-1/2} + ( 1 - \lambda ) \hat I )^{-1} \leq \lambda \hat Y^{1/2} \hat X^{-1} \hat Y^{1/2}+ ( 1-\lambda) \hat I$, which implies 
$(\lambda \hat X + (1-\lambda ) \hat Y )^{-1} \leq \lambda \hat X^{-1} + (1-\lambda ) \hat Y^{-1}$.  By shifting $\hat X$ and $\hat Y$ by $t \hat I$, we prove the operator convexity.
The operator decreasing-monotonicity is obvious, because $\hat X + t\hat I \leq \hat Y + t \hat I$ implies  $(\hat X + t\hat I)^{-1} \geq ( \hat Y + t \hat I)^{-1}$.

Then, (b) follows from (a) and $x/(x+t) = 1 - t/(x+t)$;
(c) follows from (a) and $- \ln x = \int_0^\infty \left( (x+t)^{-1} - (1+t)^{-1} \right) dt$;
(d) follows from (b) and Eq.~(\ref{ln_operator_convex}).

(e) is Theorem V.1.9 of Ref.~\cite{Bhatia} or Example 8.3 of Ref.~\cite{Hiai2011f}  or Example 2.5.9 of Ref.~\cite{Hiai2010}; (f) is  Example 8.3 of Ref.~\cite{Hiai2011f} or Example 2.5.9 of Ref.~\cite{Hiai2010}.
Here we only note that the operator convexity of $f(x) = x^2$ is obvious from $\lambda \hat X^2 + (1-\lambda) \hat Y^2  - (\lambda  \hat X + (1-\lambda ) \hat Y )^2 = \lambda (1-\lambda ) (\hat X - \hat Y)^2 \geq 0$
(Example V.1.3 of~\cite{Bhatia}).
See also Ref.~\cite{Bhatia} for that $f(x)=x^2$ is not operator monotone (Example V.1.2).

We finally note that $f(x) = x^\alpha$ for $2 < \alpha <\infty$ is not operator convex (Example 2.5.9 of Ref.~\cite{Hiai2010}; see also Example V.1.4 of Ref.~\cite{Bhatia}).
$\Box$
\end{proof}

We next show a useful property of operator convex functions (see also Theorem V.2.3 of \cite{Bhatia}).

\begin{proposition}[Jensen's operator inequality; Theorem 2.1 of~\cite{Hansen1982}]
Consider  $f: (0,\infty) \to \mathbb R$ and suppose that $f(0) := \lim_{x \to +0}f(x) \in \mathbb R$ exists.  Then, the following are equivalent.\\
(i) $f$ is operator convex on $(0,\infty)$ and $f(0) \leq 0$. \\
(ii)  For any $\hat X \geq 0$ and any contraction $\hat V$,
\begin{equation}
f (\hat V^\dagger \hat X \hat V ) \leq \hat V^\dagger f (\hat X ) \hat V.
\label{Jensen_operator_inequality}
\end{equation}
Here, a linear operator $\hat V$ is called a \textit{contraction}, if it satisfies $\langle \varphi | \hat V^\dagger \hat V | \varphi \rangle \leq \langle \varphi | \varphi \rangle$ for all $| \varphi \rangle$.
\label{prop:Jensen_operator_inequality}
\end{proposition}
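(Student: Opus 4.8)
\textbf{Proof proposal for Proposition~\ref{prop:Jensen_operator_inequality}.}
The plan is to prove the two implications (i)$\Rightarrow$(ii) and (ii)$\Rightarrow$(i) separately, following the classical dilation-theoretic argument of Hansen and Pedersen. For (ii)$\Rightarrow$(i), first I would recover operator convexity: given positive-definite $\hat X,\hat Y$ and $\lambda\in[0,1]$, one forms the block operator $\hat Z := \hat X\oplus\hat Y$ acting on $\mathcal H\oplus\mathcal H$ and the contraction
\[
\hat V := \begin{bmatrix}\sqrt{\lambda}\,\hat I\\ \sqrt{1-\lambda}\,\hat I\end{bmatrix},
\]
which satisfies $\hat V^\dagger\hat V=\hat I$, hence $\langle\varphi|\hat V^\dagger\hat V|\varphi\rangle\le\langle\varphi|\varphi\rangle$. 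Then $\hat V^\dagger\hat Z\hat V=\lambda\hat X+(1-\lambda)\hat Y$ and $\hat V^\dagger f(\hat Z)\hat V=\lambda f(\hat X)+(1-\lambda)f(\hat Y)$, so inequality~(\ref{Jensen_operator_inequality}) applied to $\hat Z$ and $\hat V$ gives~(\ref{def:operator_convex}). To get $f(0)\le 0$, I would apply~(ii) to $\hat X=\hat I$ (or any fixed positive operator) together with a sequence of contractions $\hat V_n$ with $\hat V_n^\dagger\hat V_n\to 0$ (e.g. $\hat V_n=n^{-1/2}\hat I$ on a one-dimensional space), so that $f(\hat V_n^\dagger\hat V_n)\to f(0)\hat I$ while $\hat V_n^\dagger f(\hat I)\hat V_n\to 0$; taking limits in the operator inequality yields $f(0)\hat I\le 0$.

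For the substantive direction (i)$\Rightarrow$(ii), I would exploit the integral representation of Theorem~\ref{thm:Lowner_convex}. Writing $f(x)=f(0)+ax+bx^2+\int_{(0,\infty)}\big(\frac{x}{1+t}-\frac{x}{x+t}\big)\,d\mu(t)$ with $f(0)\le 0$, $b\ge 0$, it suffices to check inequality~(\ref{Jensen_operator_inequality}) term by term, since the integral of operator inequalities against a non-negative measure is again an operator inequality. The constant term $f(0)\hat I$ contributes $f(0)\hat I$ on the left and $f(0)\hat V^\dagger\hat V\le f(0)\hat I\cdot 0$... more carefully, since $f(0)\le 0$ and $\hat V^\dagger\hat V\le\hat I$, we get $f(0)\hat I\le f(0)\hat V^\dagger\hat V$; here one must be slightly careful, but for the \emph{normalized} case $\hat V^\dagger\hat V=\hat I$ it is an equality and in the contraction case the sign of $f(0)$ saves us. The linear term $ax$ is trivially an equality $a\hat V^\dagger\hat X\hat V=\hat V^\dagger(a\hat X)\hat V$. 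The quadratic term needs $(\hat V^\dagger\hat X\hat V)^2\le\hat V^\dagger\hat X^2\hat V$, which is exactly Kadison's inequality (Proposition~\ref{prop:Kadison}, inequality~(\ref{Kadison_inequality})) applied to the unital positive map $\hat Y\mapsto\hat V^\dagger\hat Y\hat V$—wait, this map is unital only when $\hat V^\dagger\hat V=\hat I$; in the general contraction case one instead applies the $2$-positive Schwarz inequality (Proposition~\ref{prop:Schwarz_operator_inequality}) after dilating $\hat V$ to an isometry, or uses the Hansen--Pedersen trick of embedding $\hat X\oplus 0$ and $\hat V\oplus\sqrt{\hat I-\hat V^\dagger\hat V}$ into a larger space. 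The remaining terms $g_t(x):=\frac{x}{1+t}-\frac{x}{x+t}$ are handled by noting that $\frac{x}{x+t}=1-\frac{t}{x+t}$ is operator concave and operator monotone (L\"owner--Heinz (b)), so $-\frac{x}{x+t}$ is operator convex with value $0$ at $x=0$ and bounded above by $0$, hence satisfies the Jensen operator inequality directly; the affine part $\frac{x}{1+t}$ contributes an equality.

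The cleanest route, which I would ultimately adopt, is the standard dilation argument: given a contraction $\hat V:\mathcal H\to\mathcal H'$, extend it to an isometry (indeed a unitary) $\hat U$ on $\mathcal H'\oplus\mathcal H'$ (or a suitable $2\times 2$ block unitary built from $\hat V$ and $\sqrt{\hat I-\hat V\hat V^\dagger},\sqrt{\hat I-\hat V^\dagger\hat V}$), embed $\hat X$ as $\hat X\oplus\hat X_0$ for a convenient positive $\hat X_0$, and use that for a \emph{unitary} $\hat U$ one has $f(\hat U^\dagger\hat A\hat U)=\hat U^\dagger f(\hat A)\hat U$ exactly, combined with the compression inequality $f(\hat P\hat A\hat P+ (\hat I-\hat P))\le\hat P f(\hat A)\hat P+(\hat I-\hat P)f(\hat I)$... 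The main obstacle I anticipate is precisely bookkeeping the contraction (as opposed to isometry) case and the sign condition $f(0)\le 0$: one must arrange the dilation so that the ``defect'' blocks contribute terms with the correct sign, which is exactly where $f(0)\le 0$ enters and is the only place the hypothesis is used beyond operator convexity. Once the isometric case is settled via the integral representation and Proposition~\ref{prop:Kadison}/Proposition~\ref{prop:Schwarz_operator_inequality}, the contraction case follows by a routine limiting/padding argument, so I expect the proof to be short modulo that one careful dilation step.
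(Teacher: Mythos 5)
Your direction (ii)$\Rightarrow$(i) is fine (and goes beyond the text, which only proves (i)$\Rightarrow$(ii) and cites Ref.~\cite{Bhatia} for the converse). The problem is the substantive direction (i)$\Rightarrow$(ii), where neither of your two routes is actually carried to completion. In the integral-representation route, the decisive family of terms $-x/(x+t)$ is dismissed with the claim that it is ``operator convex with value $0$ at $x=0$, hence satisfies the Jensen operator inequality directly'' --- but that is precisely the statement being proven for that function, so as written the argument is circular. It can be repaired: dilate $\hat V$ to the isometry $\hat T := \bigl[ \begin{smallmatrix} \hat V \\ (\hat I - \hat V^\dagger \hat V)^{1/2} \end{smallmatrix} \bigr]$, set $\hat Z := \hat X \oplus 0$, and apply the general Kadison inequality (Proposition~\ref{prop:Kadison}) to the unital map $\hat Y \mapsto \hat T^\dagger \hat Y \hat T$ to get $\hat T^\dagger (\hat Z + t)^{-1} \hat T \geq (\hat T^\dagger \hat Z \hat T + t)^{-1}$, which after unpacking is exactly the needed inequality for $-x/(x+t)$ --- but this computation is not in your proposal. (Incidentally, your worry about the quadratic term is unnecessary: $\hat V^\dagger \hat X^2 \hat V - (\hat V^\dagger \hat X \hat V)^2 = \hat V^\dagger \hat X (\hat I - \hat V \hat V^\dagger) \hat X \hat V \geq 0$ holds for any contraction, no unitality or dilation required.)

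Your ``cleanest route'' is indeed the Hansen--Pedersen argument the paper uses, but you stop exactly where the work is. The paper sets $\hat U := (\hat I - \hat V \hat V^\dagger)^{1/2}$, $\hat U' := (\hat I - \hat V^\dagger \hat V)^{1/2}$, forms the two unitaries $\hat V_{1,2}$ with blocks $\hat V, \pm\hat U, \hat U', \mp\hat V^\dagger$, embeds $\hat X' := \hat X \oplus 0$, and observes that $\tfrac12 (\hat V_1^\dagger \hat X' \hat V_1 + \hat V_2^\dagger \hat X' \hat V_2)$ is block diagonal with $\hat V^\dagger \hat X \hat V$ and $\hat U \hat X \hat U$ on the diagonal; then midpoint operator convexity plus unitary covariance $f(\hat V_i^\dagger \hat X' \hat V_i) = \hat V_i^\dagger f(\hat X') \hat V_i$ and $f(\hat X') = f(\hat X) \oplus f(0)\hat I$ with $f(0) \leq 0$ finish the proof in one stroke, with no separate ``compression inequality'' and no limiting/padding step. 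The compression inequality you invoke is itself unproven and essentially as strong as the target, and the key averaging identity --- the single idea that makes the dilation work --- is only anticipated, not executed. So as it stands the proposal has a genuine gap in the main implication; the surrounding ideas are the right ones, but the proof is not there yet.
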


\begin{proof}
We only prove (i) $\Rightarrow$ (ii) here (see Ref.~\cite{Bhatia} for the converse).
Note that $\hat V^\dagger \hat V \leq \hat I$ and $\hat V \hat V^\dagger \leq \hat I$ because $\hat V$ is a contraction.
We can define  $\hat U := (I - \hat  V \hat V^\dagger)^{1/2}$ and $\hat U' := (I - \hat V^\dagger \hat V)^{1/2}$.
Then we consider
\begin{equation}
\hat X' := \left[
\begin{array}{cc}
\hat X & 0 \\
0 & 0
\end{array}
\right],
\ 
\hat V_1 := \left[
\begin{array}{cc}
\hat V &  \hat U  \\
\hat U' & - \hat V^\dagger
\end{array}
\right],
\
\hat V_2 := \left[
\begin{array}{cc}
\hat V & - \hat U \\
\hat U' & \hat V^\dagger
\end{array}
\right].
\end{equation}
By using the singular-value decomposition of $\hat V$, we can see that $\hat V_1$ and $\hat V_2$ are unitary. We compute
\begin{equation}
\hat V_1^\dagger \hat X'  \hat V_1 = \left[
\begin{array}{cc}
\hat V^\dagger \hat X \hat V & \hat V^\dagger \hat X \hat U \\
\hat U \hat X \hat V & \hat U \hat X \hat U
\end{array}
\right], \ 
\hat V_2^\dagger \hat X' \hat V_2 = \left[
\begin{array}{cc}
\hat V^\dagger \hat X \hat V & - \hat V^\dagger \hat X \hat U \\
-\hat U \hat X \hat V & \hat U \hat X \hat U
\end{array}
\right], 
\end{equation}
and  thus obtain
\begin{equation}
\frac{\hat V_1^\dagger \hat X'  \hat V_1 + \hat V_2^\dagger \hat  X' \hat V_2}{2} = \left[
\begin{array}{cc}
\hat V^\dagger \hat  X \hat V &0 \\
0 & \hat U \hat X \hat U
\end{array}
\right].
\end{equation}
From the operator convexity of $f(x)$, we have
\begin{equation}
f \left( \frac{\hat V_1^\dagger \hat X'  \hat V_1 + \hat V_2^\dagger \hat X' \hat V_2}{2} \right) \leq  \frac{f(\hat V_1^\dagger \hat X'  \hat V_1) + f(\hat V_2^\dagger \hat X' \hat V_2)}{2} 
= \frac{\hat V_1^\dagger f(\hat X')  \hat V_1+ \hat V_2^\dagger f( \hat X')\hat V_2}{2},
\end{equation}
which leads to
\begin{equation}
\left[
\begin{array}{cc}
f(\hat V^\dagger \hat  X \hat V) &0 \\
0 & f(\hat U \hat X \hat U) 
\end{array}
\right] 
\leq 
 \left[
\begin{array}{cc}
\hat V^\dagger f(\hat X) \hat V + \hat U' f(0) \hat U' &0 \\
0 & \hat U f(\hat X) \hat U + \hat V f(0) \hat V^\dagger
\end{array}
\right].
\end{equation}
 From $f(0) \leq 0$, we finally obtain
\begin{equation}
\left[
\begin{array}{cc}
f(\hat V^\dagger \hat X \hat V) &0 \\
0 & f(\hat U \hat X \hat U) 
\end{array}
\right] \leq  \left[
\begin{array}{cc}
\hat V^\dagger f(\hat X) \hat V &0 \\
0 &  \hat U f ( \hat X) \hat U
\end{array}
\right], 
\end{equation}
which implies (ii).
$\Box$
\end{proof}



\section{General monotonicity properties}
\label{sec:general_monotonicity}

We now go to the main part of this Appendix: the proof of the monotonicity.
In this section, we basically assume that  $\hat \rho$, $\hat \sigma$, $\mathcal E (\hat \rho)$, $\mathcal E (\hat \sigma)$ are all positive definite, while $\hat \rho$ and $\mathcal E (\hat \rho)$ can be just positive if  $f(0) := \lim_{x \to +0 }f(x) \in \mathbb R$ exists for  function $f : (0, \infty ) \to \mathbb R$ introduced below.
On the other hand,  we do not necessarily assume that these  operators are normalized, unless stated otherwise.

To introduce general divergence-like quantities, we first define the left and right multiplications of $\hat \rho$:
\begin{equation}
\mathcal L_{\hat \rho} (\hat X ) := \hat \rho \hat X, \ \ \mathcal R_{\hat \rho} (\hat X ) := \hat X \hat \rho. 
\end{equation}
Here, $\mathcal L_{\hat \rho} $ and $\mathcal R_{\hat \rho}$ are commutable and are both Hermitian  with respect to the Hilbert-Schmidt inner product.
Then,  we introduce the modular operator $\mathcal D_{\hat \rho, \hat \sigma}$  for  $\hat \rho$, $\hat \sigma$ by
\begin{equation}
\mathcal  D_{\hat \rho, \hat \sigma} (\hat X ) := \mathcal L_{\hat \rho} \mathcal R_{\hat \sigma^{-1}} (\hat X ) = \hat \rho \hat X \hat \sigma^{-1},
\end{equation}
which is also Hermitian.

Let $\hat \rho = \sum_p p_i \hat P_i$ and $\hat \sigma = \sum_i q_i \hat Q_i$ be the spectrum decompositions, where  $\hat P_i$, $\hat Q_i$ are the projectors onto the eigenspaces and we set $p_i \neq p_j$ and $q_i \neq q_j$ for $i \neq j$.
The spectrum decomposition of $\mathcal  D_{\hat \rho, \hat \sigma}$ is then given by
\begin{equation}
\mathcal  D_{\hat \rho, \hat \sigma} = \sum_{ij} \frac{p_i}{q_j} \mathcal P_{ij},
\end{equation}
where $\mathcal P_{ij}$ is a projection superoperator defined as $\mathcal P_{ij} (\hat X ) := \hat P_i \hat X  \hat Q_j$.

Let  $f :  ( 0,\infty ) \to \mathbb R$ be a function.
We can define $f (\mathcal  D_{\hat \rho, \hat \sigma})$ by
\begin{equation}
f (\mathcal  D_{\hat \rho, \hat \sigma}) = \sum_{ij} f \left( \frac{p_i}{q_j} \right) \mathcal P_{ij},
\end{equation}
or equivalently 
\begin{equation}
f (\mathcal  D_{\hat \rho, \hat \sigma}) (\hat X ) = \sum_{ij} f \left( \frac{p_i}{q_j} \right) \hat P_i \hat X \hat Q_j.
\label{D_rho_sigma}
\end{equation}
Now, we consider a divergence-like quantity
\begin{equation}
D_f (\hat \rho \| \hat \sigma ) := \langle \hat \sigma^{1/2}, f (\mathcal D_{\hat \rho, \hat \sigma}) ( \hat \sigma^{1/2}) \rangle_{\rm HS} = \sum_{ij} q_j  f \left( \frac{p_i}{q_j} \right)  {\rm tr}[\hat P_i \hat Q_j],
\label{quantum_f_general}
\end{equation}
which is a special case of the Petz's quasi-entropies~\cite{Petz1985,Petz1986} (see also Refs.~\cite{Ohya,Hiai2011f}).
We discuss two special examples of this quantity:

\noindent\textit{Quantum KL divergence.}
The first example is $f(x) = x \ln x$, which is operator convex. 
By noting that $f ( \mathcal  D_{\hat \rho, \hat \sigma})  = \mathcal L_{\hat \rho} \mathcal R_{\hat \sigma^{-1}} ( \ln \mathcal L_{\hat \rho} + \ln \mathcal R_{\hat \sigma^{-1}} )$ and $\mathcal L_{\hat \rho}$ and $\mathcal R_{\hat \sigma^{-1}}$ are commutable, we obtain
\begin{equation}
f ( \mathcal  D_{\hat \rho, \hat \sigma}) (\hat X ) = (\hat \rho \ln \hat \rho ) \hat X \hat \sigma^{-1} - \hat \rho \hat X ( \hat \sigma^{-1} \ln \hat \sigma),
\end{equation}
which leads to
\begin{equation}
D_f (\hat \rho \| \hat \sigma )  =  {\rm tr} [ \hat \sigma^{1/2}(\hat \rho \ln \hat \rho ) \hat \sigma^{1/2} \hat \sigma^{-1}  ] - {\rm tr} [\hat \sigma^{1/2} \hat \rho \hat \sigma^{1/2} ( \hat \sigma^{-1} \ln \hat \sigma )  ] = S_1 (\hat \rho \| \hat \sigma).
\label{f_divergence_KL}
\end{equation}
Note that Eq.~(\ref{quantum_f_general}) implies
\begin{equation}
D_{f} (\hat \rho \| \hat \sigma) =  \sum_{ij} p_i \ln \frac{p_i}{q_j}{\rm tr}[\hat P_i \hat Q_j].
\label{KL_f_quantum_classical}
\end{equation}

\noindent\textit{Quantum R\'enyi divergence.}
The second example is $f_\alpha(x) := x^\alpha$ with  $0 < \alpha < 1$ and $1 < \alpha < \infty$.
By noting that $f_\alpha ( \mathcal  D_{\hat \rho, \hat \sigma}) (\hat X ) = \hat \rho^\alpha \hat X \hat \sigma^{-\alpha}$, we have
\begin{equation}
D_{f_\alpha} (\hat \rho \| \hat \sigma ) = {\rm tr}[\hat \sigma^{1/2} \hat \rho^\alpha \hat \sigma^{1/2} \hat \sigma^{-\alpha}] =  {\rm tr}[\hat \rho^\alpha \hat \sigma^{1-\alpha} ].
\end{equation}
We note that, from Eq.~(\ref{quantum_f_general}),
\begin{equation}
D_{f_\alpha} (\hat \rho \| \hat \sigma) =  \sum_{ij} \frac{p_i^\alpha}{q_j^{\alpha-1}}  {\rm tr}[\hat P_i \hat Q_j].
\label{Renyi_f_quantum_classical}
\end{equation}
Then, we define a simple version of the quantum R\'enyi $\alpha$-divergence for  $0< \alpha < 1$ and $1 < \alpha < \infty$ as~\cite{Hiai2011f,Tomamichel}
\begin{equation}
\tilde S_\alpha ( \hat \rho \| \hat \sigma ) := \frac{1}{\alpha - 1} \ln D_{f_\alpha} (\hat \rho \| \hat \sigma) = \frac{1}{\alpha - 1} \ln \left( {\rm tr} [\hat \rho^\alpha \hat \sigma^{1-\alpha }]  \right).
\label{simple_Renyi_divergence}
\end{equation}
It is straightforward to see that
\begin{equation}
\tilde S_0 (\hat \rho \| \hat \sigma) := \lim_{\alpha \to +0} \tilde S_\alpha (\hat \rho \| \hat \sigma)  = S_0  (\hat \rho \| \hat \sigma),
\end{equation}
and we can also show that
\begin{equation}
\tilde S_1 (\hat \rho \| \hat \sigma)  := \lim_{\alpha \to 1} \tilde S_\alpha (\hat \rho \| \hat \sigma)  = S_1  (\hat \rho \| \hat \sigma).
\label{simple_Renyi_1}
\end{equation}
Thus  $\tilde S_\alpha (\hat \rho \| \hat \sigma)$ is well-defined for $0 \leq \alpha < \infty$.
On the other hand,  $\lim_{\alpha \to \infty} \tilde S_\alpha (\hat \rho \| \hat \sigma)$ does not equal $S_\infty (\hat \rho \| \hat \sigma)$.
We finally note that, if $\hat \rho$ is normalized,
\begin{equation}
\tilde S_\alpha  ( \hat \rho \| \hat \sigma ) \leq \tilde S_{\alpha'} ( \hat \rho  \| \hat \sigma ) \ \ \rm{for} \ \ \alpha \leq \alpha',
\end{equation}
which can be proved in the same manner as the proof of Proposition~\ref{prop:classical_alpha_monotone}, by noting that $\sum_{ij} p_i{\rm tr}[\hat P_i \hat Q_j] = 1$.

\

We now investigate the fundamental properties of $D_{f} (\hat \rho \| \hat \sigma)$.  First, we consider the following theorem, from which the non-negativity of the quantum KL divergence follows.

\begin{theorem}
Suppose that $\hat \rho (>0)$, $\hat \sigma (>0)$ are both normalized.  If $f$ is convex on $(0, \infty )$ and strictly-convex at $x=1$,
\begin{equation}
D_f (\hat \rho \| \hat \sigma) \geq f(1),
\end{equation}
 where the equality holds if and only if  $\hat \rho = \hat \sigma$.
 If $f$ is concave on $(0,\infty )$ and strictly-concave at $x=1$, the opposite inequality holds.
\label{quantum_f_positive}
\end{theorem}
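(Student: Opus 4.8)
The plan is to reduce the statement to the classical Jensen inequality by recognizing that $D_f(\hat\rho\|\hat\sigma)$ is, despite its operator-theoretic packaging, a convex combination of values of $f$ at positive scalars. Concretely, starting from the spectral expression (\ref{quantum_f_general}), write
\begin{equation}
D_f(\hat\rho\|\hat\sigma) = \sum_{ij} q_j\, f\!\left(\frac{p_i}{q_j}\right) {\rm tr}[\hat P_i \hat Q_j].
\end{equation}
Set $w_{ij} := q_j\,{\rm tr}[\hat P_i \hat Q_j]$. Since $\hat P_i,\hat Q_j\geq 0$ we have ${\rm tr}[\hat P_i\hat Q_j]\geq 0$, hence $w_{ij}\geq 0$; and since $\sum_i \hat P_i = \hat I = \sum_j \hat Q_j$ and ${\rm tr}[\hat\sigma]=1$, a direct computation gives $\sum_{ij} w_{ij} = \sum_j q_j\,{\rm tr}[\hat Q_j] = {\rm tr}[\hat\sigma] = 1$. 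So $\{w_{ij}\}$ is a genuine probability distribution over the pairs $(i,j)$, and $D_f(\hat\rho\|\hat\sigma) = \sum_{ij} w_{ij}\, f(x_{ij})$ with $x_{ij} := p_i/q_j > 0$.

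Next I would check the ``barycenter'' condition: $\sum_{ij} w_{ij} x_{ij} = \sum_{ij} q_j\,{\rm tr}[\hat P_i\hat Q_j]\cdot (p_i/q_j) = \sum_{ij} p_i\,{\rm tr}[\hat P_i\hat Q_j] = \sum_i p_i\,{\rm tr}[\hat P_i] = {\rm tr}[\hat\rho] = 1$. Now the Jensen inequality for the convex function $f$ yields
\begin{equation}
D_f(\hat\rho\|\hat\sigma) = \sum_{ij} w_{ij} f(x_{ij}) \geq f\!\left(\sum_{ij} w_{ij} x_{ij}\right) = f(1),
\end{equation}
which is the desired bound. The concave case is obtained verbatim with the inequality reversed. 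Note that this argument uses only ordinary (scalar) convexity of $f$, not operator convexity, which is appropriate here.

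For the equality condition, I would invoke strict convexity at $x=1$: equality in Jensen forces $x_{ij}=1$ for every $(i,j)$ with $w_{ij}>0$, i.e.\ $p_i = q_j$ whenever ${\rm tr}[\hat P_i\hat Q_j]>0$. The remaining step is to argue that this, together with $\sum_i \hat P_i=\sum_j\hat Q_j=\hat I$, implies $\hat\rho=\hat\sigma$: group the spectral projectors, observe that the supports must match up ($\hat P_i\hat Q_j\neq 0$ implies equal eigenvalues, so the eigenprojectors with a given common eigenvalue coincide), and conclude $\hat\rho=\hat\sigma$; the converse ($\hat\rho=\hat\sigma\Rightarrow D_f=f(1)$) is immediate since then $\mathcal D_{\hat\rho,\hat\sigma}$ has $\hat\sigma^{1/2}$ as an eigenvector with eigenvalue $1$. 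The main obstacle is this last bookkeeping with the projectors in the equality case — making precise that ``$p_i=q_j$ on the overlap of supports, for all $i,j$'' genuinely pins down $\hat\rho=\hat\sigma$ rather than something weaker — but it is routine once one notes that each $\hat P_i$ must be a sum of those $\hat Q_j$ with $q_j=p_i$ (and vice versa), which follows from $\hat P_i = \sum_j \hat P_i\hat Q_j$ and the vanishing of the off-eigenvalue terms.
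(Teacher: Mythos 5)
Your proposal is correct and takes essentially the same approach as the paper's proof: both start from the spectral expression (\ref{quantum_f_general}) and apply only the ordinary scalar Jensen inequality to the overlap weights ${\rm tr}[\hat P_i \hat Q_j]$, using strict convexity at $x=1$ (together with the fact that ${\rm tr}[\hat P_i\hat Q_j]=0$ forces $\hat P_i\hat Q_j=0$) to pin down the equality case $\hat\rho=\hat\sigma$. The only cosmetic difference is that you apply Jensen once to the joint distribution $w_{ij}=q_j\,{\rm tr}[\hat P_i\hat Q_j]$, whereas the paper applies it in two stages, first reducing to the classical quantity $D_f(p'\|q)$ with $p'=Tp$, $T_{ij}={\rm tr}[\hat P_i\hat Q_j]$, and then invoking the classical bound (\ref{classical_f_positive}).
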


\begin{proof}
Consider the expression of $D_f (\hat \rho \| \hat \sigma) $ of Eq.~(\ref{quantum_f_general}).
Because $T_{ij} := {\rm tr}[\hat P_i \hat Q_j ]$ is a doubly stochastic matrix and $f$ is convex, we have $\sum_i f(p_i/q_j)T_{ij} \geq f(p_j' / q_j)$ with $p_j' := \sum_i p_i T_{ij}$.  We thus have
\begin{equation}
D_f (\hat \rho \| \hat \sigma) \geq D_f (p' \| q),
\label{quantum_classica_f_divergence}
\end{equation}
where $ D_f (p' \| q) := \sum_i q_i f( p_i' / q_i)$.
Also, as is the case for  inequality (\ref{classical_f_positive}), we have
\begin{equation}
 D_f (p' \| q) \geq f(1),
\end{equation}
where the equality hods if and only if $p' = q$, because of the strict convexity of $f$ at $x=1$.
In the case that $p' = q$ holds, the equality of (\ref{quantum_classica_f_divergence}) holds if and only if $T_{ij} = \delta_{ij}$, again from the strict convexity of $f$ at $x=1$ (note that $p_i \neq p_j$ for $i \neq j$).  Thus, $D_f (\hat \rho \| \hat \sigma) = f(1)$ holds if and only if $\hat \rho = \hat \sigma$.
We can prove the concave case in the same manner.
$\Box$
\end{proof}

We emphasize that we only used the ordinary  convexity, not the operator convexity, of $f$ (and the strict convexity at $x=1$) in the above proof.
The following are two special cases.

\begin{corollary}
Suppose that $\hat \rho (\geq 0)$, $\hat \sigma (>0)$ are both normalized.
The quantum KL divergence satisfies  $S_1 (\hat \rho \| \hat \sigma) \geq 0$ [inequality (\ref{quantum_KL_positive})], where the equality holds if and only if  $\hat \rho = \hat \sigma$.
\label{cor:quantum_KL_positive}
\end{corollary}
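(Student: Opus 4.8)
The plan is to obtain Corollary~\ref{cor:quantum_KL_positive} directly as the specialization $f(x) = x\ln x$ of Theorem~\ref{quantum_f_positive}. The function $f(x) = x\ln x$ is convex on $(0,\infty)$, with $f''(x) = 1/x > 0$ everywhere, so in particular it is strictly convex at $x=1$, and $f(1) = 0$. Hence Theorem~\ref{quantum_f_positive} applies and yields $D_f(\hat\rho\|\hat\sigma) \geq f(1) = 0$, with equality if and only if $\hat\rho = \hat\sigma$. It then remains only to identify $D_f(\hat\rho\|\hat\sigma)$ with $S_1(\hat\rho\|\hat\sigma)$, which is exactly the content of Eq.~(\ref{f_divergence_KL}) established just above (the computation $f(\mathcal D_{\hat\rho,\hat\sigma})(\hat X) = (\hat\rho\ln\hat\rho)\hat X\hat\sigma^{-1} - \hat\rho\hat X(\hat\sigma^{-1}\ln\hat\sigma)$, using that $\mathcal L_{\hat\rho}$ and $\mathcal R_{\hat\sigma^{-1}}$ commute). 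Substituting this identity into the inequality gives $S_1(\hat\rho\|\hat\sigma)\geq 0$ with the stated equality condition.

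One small point worth attending to: Corollary~\ref{cor:quantum_KL_positive} allows $\hat\rho\geq 0$ rather than insisting $\hat\rho>0$, whereas the general computations in this section are phrased for positive definite operators. This is covered by the remark at the start of Section~\ref{sec:general_monotonicity} that $\hat\rho$ (and $\mathcal E(\hat\rho)$) may merely be positive provided $f(0):=\lim_{x\to+0}f(x)\in\mathbb R$ exists; for $f(x)=x\ln x$ we have $f(0)=0\in\mathbb R$, so the extension to positive semidefinite $\hat\rho$ is legitimate, and Theorem~\ref{quantum_f_positive} is still applicable. Concretely, if $\hat\rho$ has a kernel one restricts to its support when forming $\mathcal D_{\hat\rho,\hat\sigma}$, using the convention $f(0)=0$, which is consistent with the eigenvalue expansion (\ref{D_rho_sigma}) and (\ref{KL_f_quantum_classical}); under our standing assumption that $\mathrm{supp}[\hat\rho]\subset\mathrm{supp}[\hat\sigma]$ (and here $\hat\sigma>0$) this causes no difficulty.

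I do not anticipate a genuine obstacle here, since the corollary is a one-line instantiation of an already-proved theorem together with an already-derived identity; the only thing to be careful about is to cite Eq.~(\ref{f_divergence_KL}) and Theorem~\ref{quantum_f_positive} explicitly rather than re-deriving them, and to flag the $f(0)=0$ point so that the semidefinite case is properly justified. If one instead wanted a self-contained argument, the fallback is to reuse the proof of Theorem~\ref{quantum_f_positive}: write $D_f(\hat\rho\|\hat\sigma)=\sum_{ij}q_j f(p_i/q_j)\,\mathrm{tr}[\hat P_i\hat Q_j]$ with $f(x)=x\ln x$, note $T_{ij}=\mathrm{tr}[\hat P_i\hat Q_j]$ is doubly stochastic, apply Jensen to reduce to the classical KL divergence $S_1(p'\|q)\geq 0$ (with $p'_j=\sum_i p_iT_{ij}$), and track the equality condition through strict convexity of $x\ln x$ at $x=1$ to conclude $T_{ij}=\delta_{ij}$ and $p'=q$, i.e.\ $\hat\rho=\hat\sigma$. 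But the clean route is simply: apply Theorem~\ref{quantum_f_positive} with $f(x)=x\ln x$ and invoke Eq.~(\ref{f_divergence_KL}).
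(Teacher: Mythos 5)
Your proposal matches the paper's own proof, which is exactly the one-line specialization of Theorem~\ref{quantum_f_positive} to $f(x)=x\ln x$ (convex on $(0,\infty)$, strictly convex at $x=1$, $f(1)=0$) together with the identification $D_f(\hat\rho\|\hat\sigma)=S_1(\hat\rho\|\hat\sigma)$ from Eq.~(\ref{f_divergence_KL}). Your extra remark on handling positive semidefinite $\hat\rho$ via $f(0)=0$ is a correct and sensible clarification, but it does not change the argument.
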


\begin{proof}
Take $f(x) = x \ln x$ with  $f(1) = 0$, which is convex on $(0,\infty)$ and strictly-convex at $x=1$.
$\Box$
\end{proof}

\begin{corollary}
Suppose that $\hat \rho (\geq 0 )$, $\hat \sigma (>0)$ are both normalized.
For $0< \alpha < 1$ and $1 < \alpha < \infty$, the quantum R\'enyi divergence satisfies  $\tilde S_\alpha (\hat \rho \| \hat \sigma) \geq 0$, where the equality holds if and only if  $\hat \rho = \hat \sigma$.
\label{cor:quantum_Renyi_positive}
\end{corollary}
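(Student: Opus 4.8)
The plan is to deduce Corollary~\ref{cor:quantum_Renyi_positive} directly from Theorem~\ref{quantum_f_positive}, in exactly the same way that Corollary~\ref{cor:quantum_KL_positive} is obtained. The key observation is that the simple R\'enyi $\alpha$-divergence is, up to a multiplicative constant and a logarithm, the quasi-entropy $D_{f_\alpha}(\hat\rho\|\hat\sigma)$ associated with $f_\alpha(x)=x^\alpha$, namely
\begin{equation}
\tilde S_\alpha(\hat\rho\|\hat\sigma) = \frac{1}{\alpha-1}\ln D_{f_\alpha}(\hat\rho\|\hat\sigma).
\end{equation}
So the whole matter reduces to a statement about $D_{f_\alpha}(\hat\rho\|\hat\sigma)$ together with careful bookkeeping of the sign of $\alpha-1$.

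First I would split into the two ranges. For $1<\alpha<\infty$, the function $f_\alpha(x)=x^\alpha$ is convex on $(0,\infty)$ and strictly convex at $x=1$ (since $f_\alpha''(1)=\alpha(\alpha-1)>0$), so Theorem~\ref{quantum_f_positive} gives $D_{f_\alpha}(\hat\rho\|\hat\sigma)\geq f_\alpha(1)=1$, with equality iff $\hat\rho=\hat\sigma$. Taking $\ln$ (monotone) yields $\ln D_{f_\alpha}(\hat\rho\|\hat\sigma)\geq 0$, and dividing by the positive number $\alpha-1$ preserves the inequality, giving $\tilde S_\alpha(\hat\rho\|\hat\sigma)\geq 0$ with the stated equality condition. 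For $0<\alpha<1$, $f_\alpha$ is concave on $(0,\infty)$ and strictly concave at $x=1$ (now $f_\alpha''(1)=\alpha(\alpha-1)<0$), so the "opposite inequality" part of Theorem~\ref{quantum_f_positive} gives $D_{f_\alpha}(\hat\rho\|\hat\sigma)\leq 1$, hence $\ln D_{f_\alpha}(\hat\rho\|\hat\sigma)\leq 0$; now dividing by $\alpha-1<0$ flips the inequality and again produces $\tilde S_\alpha(\hat\rho\|\hat\sigma)\geq 0$, with equality iff $\hat\rho=\hat\sigma$. This is the exact analogue of the classical argument in the proof of Proposition~\ref{prop:classical_Renyi_nonnegative}, and I would state it that way for brevity.

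There is essentially no serious obstacle here; the only point requiring a little care is the normalization hypothesis. Theorem~\ref{quantum_f_positive} is stated for normalized $\hat\rho,\hat\sigma$, and the corollary's hypothesis matches this exactly, so no extra work is needed — but I would be explicit that the whole computation of $D_{f_\alpha}(\hat\rho\|\hat\sigma)$ as ${\rm tr}[\hat\rho^\alpha\hat\sigma^{1-\alpha}]$ and the identification with $\tilde S_\alpha$ are taken from the surrounding text (Eq.~(\ref{simple_Renyi_divergence}) and the displayed evaluation preceding it). One should also note that $\hat\rho$ is only assumed positive (not positive definite), which is allowed in Theorem~\ref{quantum_f_positive} precisely because $f_\alpha(0)=\lim_{x\to+0}x^\alpha=0\in\mathbb R$ exists for $\alpha>0$; I would mention this so the reader sees why the weaker hypothesis $\hat\rho\geq 0$ suffices. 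The proof is then a two-line invocation of Theorem~\ref{quantum_f_positive} for each range of $\alpha$, and I expect it to occupy at most a short paragraph.

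\begin{proof}
For $1 < \alpha < \infty$, the function $f_\alpha(x) := x^\alpha$ is convex on $(0,\infty)$ and strictly convex at $x=1$, and $f_\alpha(1) = 1$. By Theorem~\ref{quantum_f_positive} (note that $\hat\rho \geq 0$ suffices since $f_\alpha(0) := \lim_{x\to +0} x^\alpha = 0$ exists), we have $D_{f_\alpha}(\hat\rho\|\hat\sigma) \geq 1$, where the equality holds if and only if $\hat\rho = \hat\sigma$. Taking the logarithm and dividing by $\alpha - 1 > 0$, we obtain $\tilde S_\alpha(\hat\rho\|\hat\sigma) = \frac{1}{\alpha-1}\ln D_{f_\alpha}(\hat\rho\|\hat\sigma) \geq 0$ with the stated equality condition. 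For $0 < \alpha < 1$, $f_\alpha$ is concave on $(0,\infty)$ and strictly concave at $x=1$, so Theorem~\ref{quantum_f_positive} gives the opposite inequality $D_{f_\alpha}(\hat\rho\|\hat\sigma) \leq 1$, again with equality if and only if $\hat\rho = \hat\sigma$; taking the logarithm and dividing by $\alpha - 1 < 0$ reverses the inequality and yields $\tilde S_\alpha(\hat\rho\|\hat\sigma) \geq 0$ with the same equality condition.
$\Box$
\end{proof}
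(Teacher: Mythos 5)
Your proof is correct and follows exactly the paper's own argument: apply Theorem~\ref{quantum_f_positive} to $f_\alpha(x)=x^\alpha$, using convexity for $1<\alpha<\infty$ and concavity for $0<\alpha<1$, and track the sign of $\alpha-1$ when passing to $\tilde S_\alpha$. Your added remarks on $f_\alpha(0)=0$ and normalization are fine but do not change the substance.
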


\begin{proof}
We take $f(x) = x^\alpha$ with $f(1) = 1$, which is concave (convex)  on $(0,\infty)$ and strictly-concave (convex) at $x=1$ for $0 < \alpha < 1$ ($1< \alpha < \infty$).  By noting the sign of $\alpha - 1$, we prove the claim.
$\Box$
\end{proof}

We now consider the monotonicity, where $\hat \rho$, $\hat \sigma$ can be unnormalized.

\begin{theorem}[Monotonicity, Theorem 4.3 of \cite{Hiai2011f}]
Let $\hat \rho$,  $\mathcal E (\hat \rho)$ be positive and $\hat \sigma$, $\mathcal E (\hat \sigma)$ be positive definite.
Suppose that $f : (0,\infty) \to \mathbb R$ is operator convex and  $f(0) := \lim_{x \to +0} f(x) \in \mathbb R$ exists.
Let $\mathcal E$ be 2-positive and TP.  
Then,
\begin{equation}
D_f (\hat \rho \| \hat \sigma) \geq D_f (\mathcal E ( \hat \rho ) \| \mathcal E (\hat \sigma )).
\end{equation}
If $f$ is operator concave, the opposite inequality holds.
\label{quantum_f_monotonicity}
\end{theorem}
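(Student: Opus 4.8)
The plan is to reduce the monotonicity statement to the Jensen operator inequality (Proposition~\ref{prop:Jensen_operator_inequality}) via a clever ``transpose'' construction that relates the modular operators of $(\hat\rho,\hat\sigma)$ and $(\mathcal E(\hat\rho),\mathcal E(\hat\sigma))$ through a single contraction. First I would set $\hat\rho' := \mathcal E(\hat\rho)$, $\hat\sigma' := \mathcal E(\hat\sigma)$, and introduce the linear map $\hat V : \mathcal L(\mathcal H') \to \mathcal L(\mathcal H)$ defined on the GNS-type Hilbert spaces with inner products twisted by $\hat\sigma$ and $\hat\sigma'$ respectively; concretely, using the Hilbert--Schmidt structure, one sets $\hat V(\hat X') := \mathcal E^\dagger(\hat X' \hat\sigma'^{-1/2})\,\hat\sigma^{1/2}$ (or a variant thereof), where $\mathcal E^\dagger$ is the Hilbert--Schmidt adjoint. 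The key algebraic facts to establish are: (1) $\hat V$ maps $\hat\sigma'^{1/2} \mapsto \hat\sigma^{1/2}$, which uses that $\mathcal E$ is trace-preserving, i.e. $\mathcal E^\dagger$ is unital (as shown in Section~\ref{sec:quantum_state}); (2) $\hat V$ is a contraction with respect to the twisted inner products, which is exactly where $2$-positivity of $\mathcal E$ enters, via the Schwarz operator inequality (Proposition~\ref{prop:Schwarz_operator_inequality}); and (3) the crucial intertwining/domination relation $\hat V^\dagger\, \mathcal D_{\hat\rho,\hat\sigma}\, \hat V \leq \mathcal D_{\hat\rho',\hat\sigma'}$ as operators on $\mathcal L(\mathcal H')$ with the twisted inner product — or rather the appropriate inequality that feeds into Jensen.

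Once these are in place, the argument runs as follows. Since $f$ is operator convex with $f(0) \leq 0$ (we may subtract the constant $f(0)$; note $f(0)\le 0$ must be arranged or follows from the setup — I would check this reduction carefully, replacing $f$ by $f - f(0)$ changes $D_f$ by $f(0)\cdot{\rm tr}[\hat\sigma'] $ versus $f(0)\cdot{\rm tr}[\hat\sigma]$, which are equal by trace preservation, so the inequality is unaffected and we may assume $f(0)\le 0$ after this normalization, or handle the general operator-convex case by the integral representation Theorem~\ref{thm:Lowner_convex}), the Jensen operator inequality gives
\begin{equation}
f\!\left( \hat V^\dagger\, \mathcal D_{\hat\rho,\hat\sigma}\, \hat V \right) \leq \hat V^\dagger\, f\!\left( \mathcal D_{\hat\rho,\hat\sigma} \right) \hat V.
\end{equation}
Combining this with the operator monotonicity consequences — using that $\hat V^\dagger \mathcal D_{\hat\rho,\hat\sigma}\hat V \le \mathcal D_{\hat\rho',\hat\sigma'}$ together with $f$ being operator convex with $f(0)\le0$, hence operator decreasing-monotone on $(-\infty,0]$-shifted ranges is not quite what we want; instead the cleaner route is that $f(\hat V^\dagger \mathcal D \hat V)$ compared with $f(\mathcal D')$ requires monotonicity of $f$, so I would instead arrange an \emph{equality} $\hat V^\dagger \mathcal D_{\hat\rho,\hat\sigma}\hat V = $ (the part of $\mathcal D_{\hat\rho',\hat\sigma'}$ on the range of $\hat V \hat V^\dagger$), which is the standard Petz trick — then evaluate both sides against the vector $\hat\sigma'^{1/2}$. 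Using $\hat V(\hat\sigma'^{1/2}) = \hat\sigma^{1/2}$ we get $\langle \hat\sigma'^{1/2}, \hat V^\dagger f(\mathcal D_{\hat\rho,\hat\sigma}) \hat V \hat\sigma'^{1/2}\rangle = \langle \hat\sigma^{1/2}, f(\mathcal D_{\hat\rho,\hat\sigma}) \hat\sigma^{1/2}\rangle = D_f(\hat\rho\|\hat\sigma)$, while the left side, once the intertwining is correctly set up, equals $D_f(\hat\rho'\|\hat\sigma')$ plus a nonnegative remainder coming from the contraction defect. This yields $D_f(\hat\rho\|\hat\sigma) \geq D_f(\mathcal E(\hat\rho)\|\mathcal E(\hat\sigma))$. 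The operator concave case follows by applying the result to $-f$.

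The main obstacle I anticipate is getting the GNS/twisted-inner-product bookkeeping exactly right: the map $\hat V$ lives between two different Hilbert-space completions of operator algebras (one with the $\hat\sigma$-KMS inner product, one with $\hat\sigma'$), and verifying that the adjoint $\hat V^\dagger$ with respect to these twisted inner products satisfies the precise intertwining relation $\hat V^\dagger \mathcal D_{\hat\rho,\hat\sigma} \hat V = P \,\mathcal D_{\hat\rho',\hat\sigma'}\, P$ (with $P = \hat V\hat V^\dagger$ a projection, when $\mathcal E$ is a conditional-expectation-like map) or the inequality version in general is the delicate technical heart. A secondary subtlety is the reduction handling $f(0)$: the clean Jensen inequality needs $f(0)\le 0$, and for a general operator convex $f$ one either invokes the integral representation~(\ref{Lowner_convex}) to split $f$ into a quadratic part (handled by the Schwarz/Kadison inequalities of Section~\ref{sec:operator_inequalities}, using Corollary~\ref{cor:Renyi_2_monotonicity}) plus resolvent-type terms $x\mapsto x/(x+t)$ which are operator concave and bounded (handled by operator monotonicity, Proposition~\ref{prop:concave_monotone}), or one argues that the normalization constant is invariant under the trace-preserving $\mathcal E$. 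I would favor the representation-based decomposition since it makes each piece manifestly controlled by an inequality already proved in the excerpt, at the cost of a slightly longer but entirely routine assembly.
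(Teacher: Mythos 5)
Your scaffolding is essentially the paper's: the map $\hat V(\hat X') := \mathcal E^\dagger(\hat X'\,\mathcal E(\hat\sigma)^{-1/2})\hat\sigma^{1/2}$ is exactly the contraction $\mathcal V$ of Lemma~\ref{lemma:V_contraction} (no twisted/GNS inner products are needed: in this finite-dimensional setting the $\hat\sigma$-dependence is built into the definition of $\mathcal V$ and everything is done with the plain Hilbert--Schmidt inner product), your facts (1)--(3) are precisely that lemma (trace preservation gives $\mathcal V(\mathcal E(\hat\sigma)^{1/2})=\hat\sigma^{1/2}$ via unitality of $\mathcal E^\dagger$; $2$-positivity enters only through the Schwarz operator inequality, yielding both the contraction property and the domination $\mathcal V^\dagger\mathcal D_{\hat\rho,\hat\sigma}\mathcal V \le \mathcal D_{\mathcal E(\hat\rho),\mathcal E(\hat\sigma)}$), and the assembly you say you favor at the end --- the L\"owner representation (Theorem~\ref{thm:Lowner_convex}), the $x^2$ piece via Corollary~\ref{cor:Renyi_2_monotonicity}, the resolvent pieces via Jensen plus operator monotonicity, and constants/linear terms fixed by trace preservation --- is the paper's proof essentially verbatim (with $\varphi_t(x)=-x/(x+t)$ in place of your $x/(x+t)$).

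The genuine problem is the middle of your argument. The ``standard Petz trick'' you invoke --- arranging an \emph{equality} $\hat V^\dagger\mathcal D_{\hat\rho,\hat\sigma}\hat V = P\,\mathcal D_{\mathcal E(\hat\rho),\mathcal E(\hat\sigma)}\,P$ with $P=\hat V\hat V^\dagger$ --- is false for a general $2$-positive TP map; such an intertwining equality characterizes sufficiency of $\mathcal E$ (the Petz-recovery situation) and fails already for a generic partial trace, since $\hat\sigma$ and $\hat\sigma_{\rm A}\otimes\hat I$ need not commute. Consequently the ``nonnegative remainder from the contraction defect'' bookkeeping has no basis and the proof cannot be closed that way. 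What actually closes the argument --- and what you dismiss as ``not quite what we want'' --- is operator decreasing-monotonicity applied piecewise: each $\varphi_t$ is operator convex, operator decreasing-monotone, and satisfies $\varphi_t(0)\le 0$, so inequality~(\ref{proof_monotonicity2}) gives $\varphi_t(\mathcal D_{\mathcal E(\hat\rho),\mathcal E(\hat\sigma)}) \le \varphi_t(\mathcal V^\dagger\mathcal D_{\hat\rho,\hat\sigma}\mathcal V)$, Jensen (Proposition~\ref{prop:Jensen_operator_inequality}) gives $\varphi_t(\mathcal V^\dagger\mathcal D_{\hat\rho,\hat\sigma}\mathcal V)\le \mathcal V^\dagger \varphi_t(\mathcal D_{\hat\rho,\hat\sigma})\mathcal V$, and pairing with $\mathcal E(\hat\sigma)^{1/2}$ finishes that piece, with no remainder term. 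Relatedly, your diagnosis of why the L\"owner decomposition is needed is off: it is not merely to normalize $f(0)$ (that constant is indeed invariant under a TP map), but because a general operator convex $f$ --- e.g.\ $x\ln x$ or $x^2$ --- is not operator decreasing-monotone (Proposition~\ref{prop:concave_monotone} only applies to functions with range in $(-\infty,0]$), so the comparison of $f$ at $\mathcal V^\dagger\mathcal D_{\hat\rho,\hat\sigma}\mathcal V$ versus at $\mathcal D_{\mathcal E(\hat\rho),\mathcal E(\hat\sigma)}$ must be done piece by piece; the decomposition is exactly what supplies pieces for which this monotonicity step is valid.
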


To prove this theorem, we prepare the following lemma.

\begin{lemma}[Lemma 4.2 of  \cite{Hiai2011f}]
Let $\mathcal E: \mathcal L (\mathcal H) \to \mathcal L (\mathcal H')$ be 2-positive and TP. 
Define $\mathcal V : \mathcal L (\mathcal H') \to \mathcal L (\mathcal H)$ by
\begin{equation}
\mathcal V:= \mathcal R_{\hat \sigma}^{1/2} \mathcal E^\dagger \mathcal R_{\mathcal E (\hat \sigma)}^{-1/2},
\end{equation}
or equivalently,
\begin{equation}
\mathcal V (\hat X ) := \mathcal E^\dagger (\hat X \mathcal E (\hat \sigma)^{-1/2} )\hat \sigma^{1/2} \ \  \Leftrightarrow \ \ \mathcal V (\hat X \mathcal E (\hat \sigma)^{1/2} ) = \mathcal E^\dagger (\hat X ) \hat \sigma^{1/2}. 
\end{equation}
Then,  $\mathcal V$ is a contraction with respect to the Hilbert-Schmidt inner product.  Moreover,
 \begin{equation}
\mathcal V^\dagger \mathcal D_{\hat \rho, \hat \sigma} \mathcal V 
\leq
\mathcal D_{\mathcal E ( \hat \rho) , \mathcal E (\hat \sigma)}.
\label{proof_monotonicity2}
\end{equation}
\label{lemma:V_contraction}
\end{lemma}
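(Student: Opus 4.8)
# Proof Plan for Lemma~\ref{lemma:V_contraction}

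The plan is to establish the two claims separately: first that $\mathcal V$ is a contraction, then the operator inequality~(\ref{proof_monotonicity2}). For the contraction property, I would compute $\mathcal V^\dagger \mathcal V$ explicitly using the definition $\mathcal V = \mathcal R_{\hat\sigma}^{1/2} \mathcal E^\dagger \mathcal R_{\mathcal E(\hat\sigma)}^{-1/2}$. Since $\mathcal R_{\hat\sigma}^{1/2}$ and $\mathcal R_{\mathcal E(\hat\sigma)}^{-1/2}$ are Hermitian with respect to the Hilbert--Schmidt inner product, and $(\mathcal E^\dagger)^\dagger = \mathcal E$, we get $\mathcal V^\dagger = \mathcal R_{\mathcal E(\hat\sigma)}^{-1/2} \mathcal E \mathcal R_{\hat\sigma}^{1/2}$, hence $\mathcal V^\dagger \mathcal V = \mathcal R_{\mathcal E(\hat\sigma)}^{-1/2} \mathcal E \mathcal R_{\hat\sigma} \mathcal E^\dagger \mathcal R_{\mathcal E(\hat\sigma)}^{-1/2}$. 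Evaluating on $\hat X$: $\mathcal E \mathcal R_{\hat\sigma} \mathcal E^\dagger (\hat Y) = \mathcal E(\mathcal E^\dagger(\hat Y)\hat\sigma)$, and the key input is the Schwarz operator inequality (Proposition~\ref{prop:Schwarz_operator_inequality}) applied to $\mathcal E^\dagger$, which is $2$-positive because $\mathcal E$ is, together with the fact that $\mathcal E^\dagger$ is unital because $\mathcal E$ is TP. Concretely, $\mathcal E^\dagger(\hat A^\dagger \hat A) \geq \mathcal E^\dagger(\hat A)^\dagger \mathcal E^\dagger(\hat A)$; choosing $\hat A$ appropriately (so that $\hat A^\dagger \hat A$ relates to $\mathcal E^\dagger(\hat Y)\hat\sigma$ in the form needed) should yield $\mathcal E(\mathcal E^\dagger(\hat Y')\hat\sigma) \leq$ something controlled by $\mathcal E(\hat\sigma)$, from which $\mathcal V^\dagger \mathcal V \leq \mathcal I$ follows after conjugating by $\mathcal R_{\mathcal E(\hat\sigma)}^{-1/2}$.

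For the inequality~(\ref{proof_monotonicity2}), I would compute both sides acting on an arbitrary $\hat X$. On the left, $\mathcal V^\dagger \mathcal D_{\hat\rho,\hat\sigma} \mathcal V (\hat X) = \mathcal R_{\mathcal E(\hat\sigma)}^{-1/2} \mathcal E\left( \mathcal R_{\hat\sigma}^{1/2}\big[ \hat\rho\, (\mathcal V \hat X)\, \hat\sigma^{-1}\big]\right)$ where $\mathcal V\hat X = \mathcal E^\dagger(\hat X \mathcal E(\hat\sigma)^{-1/2})\hat\sigma^{1/2}$; simplifying, the $\hat\sigma$ factors largely cancel and one is left with an expression of the form $\mathcal R_{\mathcal E(\hat\sigma)}^{-1/2}\,\mathcal E\big(\hat\rho\, \mathcal E^\dagger(\hat X\mathcal E(\hat\sigma)^{-1/2})\big)$. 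On the right, $\mathcal D_{\mathcal E(\hat\rho),\mathcal E(\hat\sigma)}(\hat X) = \mathcal E(\hat\rho)\hat X \mathcal E(\hat\sigma)^{-1}$. The desired inequality then reduces, after stripping the outer $\mathcal R_{\mathcal E(\hat\sigma)}^{-1/2}$ from both sides (which preserves the order since it is a positive Hermitian superoperator) and testing against positive $\hat X$, to a pointwise operator statement that again invokes Schwarz's operator inequality for $\mathcal E^\dagger$ — this time in the form $\mathcal E^\dagger(\hat X^\dagger \hat Y^{-1}\hat X) \geq \mathcal E^\dagger(\hat X)^\dagger \mathcal E^\dagger(\hat Y)^{-1}\mathcal E^\dagger(\hat X)$ — with $\hat Y$ chosen as $\mathcal E(\hat\sigma)$ or $\hat\sigma$ as the bookkeeping dictates.

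The main obstacle I anticipate is the bookkeeping: keeping track of left/right multiplications and their non-commutativity as they weave through $\mathcal E$ and $\mathcal E^\dagger$, and in particular verifying that at each stage the manipulations needed to put things into the shape $\hat A^\dagger \hat B^{-1}\hat A$ required by Proposition~\ref{prop:Schwarz_operator_inequality} are legitimate. A subtle point is that when $f(0)$ exists and $\hat\rho$ (or $\mathcal E(\hat\rho)$) is merely positive rather than positive definite, $\mathcal D_{\hat\rho,\hat\sigma}$ still makes sense but one must be slightly careful with the spectral expansion~(\ref{D_rho_sigma}); however since $\hat\sigma$ and $\mathcal E(\hat\sigma)$ are positive definite by hypothesis this causes no real trouble. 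Once both Schwarz-type inequalities are in hand, Theorem~\ref{quantum_f_monotonicity} follows immediately by writing $D_f(\hat\rho\|\hat\sigma) = \langle \hat\sigma^{1/2}, f(\mathcal D_{\hat\rho,\hat\sigma})(\hat\sigma^{1/2})\rangle_{\rm HS}$, using $\mathcal V(\mathcal E(\hat\sigma)^{1/2}) = \hat\sigma^{1/2}$ (which follows directly from the second form of the definition of $\mathcal V$ with $\hat X = \hat I$ and $\mathcal E^\dagger(\hat I) = \hat I$), and then applying the Jensen operator inequality (Proposition~\ref{prop:Jensen_operator_inequality}) with the contraction $\mathcal V$ together with~(\ref{proof_monotonicity2}) and operator monotonicity of $f(\mathcal D)$ in $\mathcal D$.
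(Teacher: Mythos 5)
Your plan is correct and follows essentially the same route as the paper: both the contraction property and the inequality~(\ref{proof_monotonicity2}) reduce, after evaluating Hilbert--Schmidt quadratic forms on vectors of the form $\hat X \mathcal E(\hat \sigma)^{1/2}$ (equivalently, after your conjugation by the invertible $\mathcal R_{\mathcal E(\hat\sigma)}^{-1/2}$, under which $\mathcal R_{\hat\sigma}^{1/2}\mathcal D_{\hat\rho,\hat\sigma}\mathcal R_{\hat\sigma}^{1/2}=\mathcal L_{\hat\rho}$), to trace inequalities that follow from the Schwarz operator inequality for the $2$-positive unital map $\mathcal E^\dagger$. The only small simplification relative to your sketch is that the unital form~(\ref{Schwarz_operator_inequality_unital}) suffices throughout, since the $\hat\sigma$ factors cancel exactly as you anticipate; the generalized form with $\hat Y^{-1}$ is not needed.
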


\begin{proof}
Because  $\mathcal E^\dagger$ that is 2-positive and unital, we apply the Schwarz's operator inequality (\ref{Schwarz_operator_inequality_unital}) of Proposition \ref{prop:Schwarz_operator_inequality}  and obtain
\begin{eqnarray}
&{}& \langle \hat X \mathcal E (\hat \sigma )^{1/2}, \hat X \mathcal E (\hat \sigma )^{1/2} \rangle_{\rm HS}
= {\rm tr} [\hat \sigma \mathcal E^\dagger (\hat X^\dagger \hat X ) ] \\
&\geq&
{\rm tr} [\hat \sigma \mathcal E^\dagger (\hat X)^\dagger \mathcal E^\dagger (\hat X ) ] 
=
\langle \mathcal E^\dagger (\hat X) \hat \sigma^{1/2},   \mathcal E^\dagger (\hat X) \hat \sigma^{1/2} \rangle_{\rm HS},
\end{eqnarray}
which implies that $\mathcal V$ is a contraction.

Inequality~(\ref{proof_monotonicity2}) is shown as
\begin{eqnarray}
&{}& \langle \hat X \mathcal E(\hat \sigma)^{1/2}, \mathcal V^\dagger \mathcal D_{\hat \rho, \hat \sigma} \mathcal V ( \hat X \mathcal E(\hat \sigma)^{1/2}) \rangle_{\rm HS} \\
&=& \langle \mathcal V (\hat X \mathcal E(\hat \sigma)^{1/2} ), \mathcal D_{\hat \rho, \hat \sigma} \mathcal V ( \hat X \mathcal E(\hat \sigma)^{1/2}) \rangle_{\rm HS} \\
&=& \langle \mathcal E^\dagger (\hat X) \hat \sigma^{1/2}, \mathcal D_{\hat \rho, \hat \sigma} (\mathcal E^\dagger (\hat X) \hat \sigma^{1/2} ) \rangle_{\rm HS} \\
&=& {\rm tr} \left[ \hat \sigma^{1/2} \mathcal E^\dagger (\hat X^\dagger) \mathcal D_{\hat \rho, \hat \sigma} ( \mathcal E^\dagger (\hat X) \hat \sigma^{1/2}) \right]\\
&=& {\rm tr} \left[ \hat \rho \mathcal E^\dagger (\hat X) \mathcal E^\dagger (\hat X^\dagger )  \right] \\
&\leq&  {\rm tr} \left[ \hat \rho \mathcal E^\dagger (\hat X \hat X^\dagger )  \right]\\
&=& {\rm tr} \left[ \mathcal E (\hat \rho)\hat X \hat X^\dagger   \right]  \\
&=& {\rm tr} [\mathcal E (\hat \sigma)^{1/2} \hat X^\dagger \mathcal E(\hat \rho) \hat X \mathcal E(\hat \sigma)^{1/2} \mathcal E(\hat \sigma)^{-1} ] \\
&=& \langle \hat X \mathcal E(\hat \sigma)^{1/2}, \mathcal D_{\mathcal E ( \hat \rho) , \mathcal E (\hat \sigma)} ( \hat X \mathcal E(\hat \sigma)^{1/2}) \rangle_{\rm HS},
\end{eqnarray}
where we again used the Schwarz's operator inequality (\ref{Schwarz_operator_inequality_unital})  for $\mathcal E^\dagger$.
$\Box$
\end{proof}

\noindent\textbf{Proof of Theorem~\ref{quantum_f_monotonicity}.}
First, we assume that $f$ is operator convex and operator decreasing-monotone, and satisfies $f(0) \leq 0$.
Consider $\mathcal V$ defined in Lemma~\ref{lemma:V_contraction}.

By applying the Jensen's operator inequality~(\ref{Jensen_operator_inequality}) in Proposition~\ref{prop:Jensen_operator_inequality} to $f$  and contraction $\mathcal V$, we have
\begin{equation}
f (\mathcal  V^\dagger \mathcal D_{\hat \rho, \hat \sigma} \mathcal V) 
\leq
\mathcal V^\dagger f ( \mathcal D_{\hat \rho, \hat \sigma} ) \mathcal V.
\label{proof_monotonicity1}
\end{equation}
Also, by substituting inequality~(\ref{proof_monotonicity2})  in Lemma~\ref{lemma:V_contraction} to operator-decreasing $f$, we have
\begin{equation}
f ( \mathcal D_{\mathcal E ( \hat \rho) , \mathcal E (\hat \sigma)})
\leq 
f ( \mathcal V^\dagger \mathcal D_{\hat \rho, \hat \sigma} \mathcal V ).
\label{proof_monotonicity3}
\end{equation}
By combining inequalities (\ref{proof_monotonicity1}) and  (\ref{proof_monotonicity3}),  we obtain
\begin{equation}
f ( \mathcal D_{\mathcal E ( \hat \rho) , \mathcal E (\hat \sigma)})
\leq 
\mathcal V^\dagger f ( \mathcal D_{\hat \rho, \hat \sigma} ) \mathcal V.
\label{proof_monotonicity4}
\end{equation}
Thus, 
\begin{equation}
 \langle \mathcal E (\hat \sigma)^{1/2}, f (\mathcal D_{\mathcal E (\hat \rho), \mathcal E (\hat \sigma)}) (\mathcal E (\hat \sigma)^{1/2} ) \rangle_{\rm HS} 
 \leq 
 \langle \mathcal E (\hat \sigma)^{1/2}, \mathcal V^\dagger f (\mathcal D_{\hat \rho, \hat \sigma} )\mathcal V ( \mathcal E (\hat \sigma )^{1/2} ) \rangle_{\rm HS},
\end{equation}
where the left-hand side equals $D_f (\mathcal E ( \hat \rho ) \| \mathcal E (\hat \sigma ))$ and the right-hand side equals
$D_f ( \hat \rho \| \hat \sigma)$ because of $\mathcal V (\mathcal E (\hat \sigma )^{1/2} ) = \hat \sigma^{1/2}$.
Thus, we obtain
\begin{equation}
D_f (\mathcal E ( \hat \rho ) \| \mathcal E (\hat \sigma )) \leq D_f ( \hat \rho \| \hat \sigma).
\end{equation}

For general $f$ that is just operator convex, we invoke the  integral representation (\ref{Lowner_convex}), which leads to
\begin{equation}
D_f ( \hat \rho \| \hat \sigma) ={\rm tr}[\hat \sigma] f(0) +  {\rm tr}[\hat \rho] a +{\rm tr}[\hat \rho^2 \hat \sigma^{-1}] b  + \int_{(0,\infty)} \left( \frac{{\rm tr}[\hat \rho]}{1+t} + D_{\varphi_t} (\hat \rho \| \hat \sigma ) \right) d\mu (t),
\end{equation}
where we defined $\varphi_t (x) :=  - x / (x+t)$ for $0<t<\infty$.
The first and the second terms on the right-hand side above  are just constants, because $\mathcal E$ is TP. 
The monotonicity of the third term, ${\rm tr}[\hat \rho^2 \hat \sigma^{-1}]b $ with $b \geq 0$, was already proved in Corollary \ref{cor:Renyi_2_monotonicity}.
The first term inside the integral again gives a constant.
Finally, $D_{\varphi_t} (\hat \rho \| \hat \sigma )$ satisfies the monotonicity, because $\varphi_t$ is operator convex and operator decreasing-monotone, and $\varphi_t (0) \leq 0$.
$\Box$

\

We discuss the following two special cases.

\begin{corollary}
Let $\hat \rho$,  $\mathcal E (\hat \rho)$ be positive and $\hat \sigma$, $\mathcal E (\hat \sigma)$ be positive definite.
Let $\mathcal E$ be 2-positive and TP.  The quantum KL divergence $S_1 (\hat \rho \| \hat \sigma) $ satisfies the monotonicity
\begin{equation}
S_1(\hat \rho \| \hat \sigma ) \geq S_1(\mathcal{E}( \hat \rho ) \| \mathcal{E}( \hat \sigma )). 
\end{equation}
From this, Theorem~\ref{thm:monotone} immediately follows.
\label{cor:KL_monotonicity}
\end{corollary}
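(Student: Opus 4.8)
The plan is to deduce this corollary as the special case $f(x)=x\ln x$ of the general monotonicity result, Theorem~\ref{quantum_f_monotonicity}. First I would recall from the computation leading to Eq.~(\ref{f_divergence_KL}) that for $f(x)=x\ln x$ one has the identification $D_f(\hat\rho\|\hat\sigma) = S_1(\hat\rho\|\hat\sigma)$, valid also for unnormalized $\hat\rho,\hat\sigma$ provided the support condition holds (which we always assume). Since the same identity applies verbatim to the output states, $D_f(\mathcal E(\hat\rho)\|\mathcal E(\hat\sigma)) = S_1(\mathcal E(\hat\rho)\|\mathcal E(\hat\sigma))$, it suffices to check that $f(x)=x\ln x$ satisfies the hypotheses of Theorem~\ref{quantum_f_monotonicity}.

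The key steps are then: (i) verify that $f(x)=x\ln x$ is operator convex on $(0,\infty)$ — this is item (d) of the L\"owner–Heinz Theorem in Section~\ref{sec:operator_convex}, established there via the integral representation~(\ref{ln_operator_convex}); (ii) verify that $f(0):=\lim_{x\to+0}x\ln x = 0 \in\mathbb R$ exists, so the theorem applies even when $\hat\rho$ (hence $\mathcal E(\hat\rho)$) is merely positive rather than positive definite; (iii) note that any $2$-positive trace-preserving $\mathcal E$ satisfies the hypotheses of Theorem~\ref{quantum_f_monotonicity}. Applying that theorem with this $f$ gives
\begin{equation}
D_f(\hat\rho\|\hat\sigma) \geq D_f(\mathcal E(\hat\rho)\|\mathcal E(\hat\sigma)),
\end{equation}
which, after substituting the identification from (i)–(ii) above, is exactly $S_1(\hat\rho\|\hat\sigma)\geq S_1(\mathcal E(\hat\rho)\|\mathcal E(\hat\sigma))$. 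Finally, to obtain Theorem~\ref{thm:monotone} as stated, I would observe that a CPTP map is in particular $2$-positive and TP, and that for normalized input states $\hat\rho,\hat\sigma$ the outputs $\mathcal E(\hat\rho),\mathcal E(\hat\sigma)$ are again normalized, so the corollary specializes directly.

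Honestly there is no serious obstacle here: the corollary is essentially a dictionary translation, and all the real work — the Schwarz operator inequality, the contraction $\mathcal V$ of Lemma~\ref{lemma:V_contraction}, the Jensen operator inequality, and the L\"owner integral representation — has already been carried out in the proof of Theorem~\ref{quantum_f_monotonicity}. The only points requiring a line of care are the limiting argument justifying $f(0)=0$ so that the positive (not necessarily positive definite) case of $\hat\rho$ is covered, and making explicit that the identity $D_{x\ln x}(\hat\rho\|\hat\sigma)=S_1(\hat\rho\|\hat\sigma)$ continues to hold for unnormalized arguments under the standing support assumption. Neither is more than a remark.
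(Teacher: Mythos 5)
Your proposal is correct and follows exactly the paper's route: the paper's proof of Corollary~\ref{cor:KL_monotonicity} is precisely to take $f(x)=x\ln x$ (operator convex, with $f(0)=0$) in Theorem~\ref{quantum_f_monotonicity}, using the identification $D_f(\hat\rho\|\hat\sigma)=S_1(\hat\rho\|\hat\sigma)$ established in Eq.~(\ref{f_divergence_KL}). Your additional remarks about the unnormalized case and the specialization to CPTP maps for Theorem~\ref{thm:monotone} are consistent with the paper and add nothing beyond what it already asserts.
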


\begin{proof}
Take $f(x) = x \ln x$, which is operator convex.
$\Box$
\end{proof}

\begin{corollary}
[\cite{Hiai2011f,Tomamichel}]
Let $\hat \rho$,  $\mathcal E (\hat \rho)$ be positive and $\hat \sigma$, $\mathcal E (\hat \sigma)$ be positive definite.
Let $\mathcal E$ be 2-positive and TP.
For $0 \leq \alpha  \leq 2$, the quantum R\'enyi $\alpha$-divergence (\ref{simple_Renyi_divergence}) satisfies the monotonicity:
\begin{equation}
\tilde S_\alpha (\hat \rho \| \hat \sigma) \geq \tilde S_\alpha (\mathcal E (\hat \rho ) \| \mathcal E ( \hat \sigma )).
\label{eq:Renyi_monotonicity}
\end{equation}
\label{cor:Renyi_monotonicity}
\end{corollary}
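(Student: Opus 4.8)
\textbf{Proof proposal for Corollary~\ref{cor:Renyi_monotonicity}.}
The plan is to deduce the monotonicity of the simple R\'enyi $\alpha$-divergence $\tilde S_\alpha(\hat\rho\|\hat\sigma)$ directly from the general monotonicity theorem (Theorem~\ref{quantum_f_monotonicity}) by choosing $f=f_\alpha$ with $f_\alpha(x)=x^\alpha$ and keeping careful track of the sign of $\alpha-1$. Recall from the discussion preceding the corollary that $D_{f_\alpha}(\hat\rho\|\hat\sigma)={\rm tr}[\hat\rho^\alpha\hat\sigma^{1-\alpha}]$ and $\tilde S_\alpha(\hat\rho\|\hat\sigma)=\frac{1}{\alpha-1}\ln D_{f_\alpha}(\hat\rho\|\hat\sigma)$ for $0<\alpha<1$ and $1<\alpha<\infty$, with the limiting cases $\alpha=0,1$ handled by continuity (Eqs.~(\ref{simple_Renyi_divergence})--(\ref{simple_Renyi_1})).

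The key steps, in order, are as follows. First, for $1<\alpha\le 2$, the L\"owner--Heinz theorem gives that $f_\alpha(x)=x^\alpha$ is operator convex; hence Theorem~\ref{quantum_f_monotonicity} yields $D_{f_\alpha}(\hat\rho\|\hat\sigma)\ge D_{f_\alpha}(\mathcal E(\hat\rho)\|\mathcal E(\hat\sigma))$, i.e. ${\rm tr}[\hat\rho^\alpha\hat\sigma^{1-\alpha}]\ge{\rm tr}[\mathcal E(\hat\rho)^\alpha\mathcal E(\hat\sigma)^{1-\alpha}]$. Taking $\ln$ (monotone increasing) and multiplying by the positive factor $\frac{1}{\alpha-1}$ preserves the inequality, giving $\tilde S_\alpha(\hat\rho\|\hat\sigma)\ge\tilde S_\alpha(\mathcal E(\hat\rho)\|\mathcal E(\hat\sigma))$. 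Second, for $0<\alpha<1$, the L\"owner--Heinz theorem gives that $f_\alpha$ is operator concave, so Theorem~\ref{quantum_f_monotonicity} gives the \emph{opposite} inequality $D_{f_\alpha}(\hat\rho\|\hat\sigma)\le D_{f_\alpha}(\mathcal E(\hat\rho)\|\mathcal E(\hat\sigma))$; since now $\frac{1}{\alpha-1}<0$, multiplying by it reverses the inequality again, and once more we obtain $\tilde S_\alpha(\hat\rho\|\hat\sigma)\ge\tilde S_\alpha(\mathcal E(\hat\rho)\|\mathcal E(\hat\sigma))$. Third, for $\alpha=1$ the claim is exactly Corollary~\ref{cor:KL_monotonicity} (equivalently the $\alpha\to1$ limit of the two cases above), and for $\alpha=0$ one takes the $\alpha\to+0$ limit, which reproduces $S_0$ and whose monotonicity is already contained in Theorem~\ref{thm:quantum_Renyi_monotonicity}; alternatively, the inequality at $\alpha=0,1$ follows by continuity from the strict interior cases since both sides are continuous in $\alpha$.

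I expect the only genuine subtlety — rather than a serious obstacle — to be the bookkeeping of the two sign reversals (the concavity/convexity switch at $\alpha=1$ together with the sign switch of $\frac{1}{\alpha-1}$), which must be stated explicitly so the reader sees that both cases land on the same final inequality. A secondary point to address is the hypothesis that $\hat\rho,\mathcal E(\hat\rho)$ may be merely positive rather than positive definite: this is permitted in Theorem~\ref{quantum_f_monotonicity} precisely because $f_\alpha(0)=\lim_{x\to+0}x^\alpha=0\in\mathbb R$ exists for $\alpha>0$, so no extra regularization is needed; for $\alpha=0$ one simply invokes the already-established statement for $S_0$. Everything else is a direct substitution into Theorem~\ref{quantum_f_monotonicity}, so no lengthy computation is required.
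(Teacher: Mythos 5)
Your proposal is correct and follows essentially the same route as the paper: apply Theorem~\ref{quantum_f_monotonicity} with $f_\alpha(x)=x^\alpha$, using operator concavity for $0<\alpha<1$ and operator convexity for $1<\alpha\leq 2$ while tracking the sign of $\frac{1}{\alpha-1}$, and handle $\alpha=0,1$ by taking limits. Your extra remarks on the positivity hypotheses and the identification with Corollary~\ref{cor:KL_monotonicity} at $\alpha=1$ are consistent with the paper's treatment.
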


\begin{proof}
For  $\alpha \neq 0, 1$, take $f(x) = x^\alpha$, which is operator concave  for $0 < \alpha < 1$ and operator convex for $1 < \alpha \leq 2$.
By noting the sign  of $\alpha - 1$, we obtain inequality~(\ref{eq:Renyi_monotonicity}).
For  the case of $\alpha = 0, 1$, we can take the limit $\alpha \to +0$ and $\alpha \to 1$, respectively.
$\Box$
\end{proof}


\

From the monotonicity, we have the joint convexity of $D_f (\hat \rho \| \hat \sigma )$; we will omit the proof, as it is completely parallel to that of Theorem~\ref{thm:joint_convexity_KL} for the quantum KL divergence.

\begin{corollary}[Joint convexity, Corollary 4.7 of \cite{Hiai2011f}]
Let $\hat \rho$,  $\mathcal E (\hat \rho)$ be positive and $\hat \sigma$, $\mathcal E (\hat \sigma)$ be positive definite.
Suppose that $f : (0,\infty) \to \mathbb R$ is operator convex and  $f(0) := \lim_{x \to +0} f(x) \in \mathbb R$ exists.
Let $\hat \rho = \sum_k p_k \hat \rho_k$ and $\hat \sigma = \sum_k p_k \hat \sigma_k$, where $\hat \rho_k$ and $\hat \sigma_k$ are quantum states and $p_k$'s represent a classical distribution  with $p_k > 0$.
Then,
\begin{equation}
D_f (\hat \rho \| \hat \sigma ) \leq \sum_k p_k D_f (\hat \rho_k \| \hat \sigma_k) .
\label{joint_convexity_f}
\end{equation}
If $f$ is operator concave, the opposite inequality holds.
The equality holds if $\mathcal P_k$'s are orthogonal with each other, where $\mathcal P_k$ is the subspace spanned by the supports of $\hat \rho_k$ and $\hat \sigma_k$. 
\label{thm:joint_convexity_f}
\end{corollary}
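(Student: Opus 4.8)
The plan is to follow exactly the pattern used for the quantum KL divergence in Theorem~\ref{thm:joint_convexity_KL}, lifting that argument to the general quasi-entropy $D_f$ via the monotonicity established in Theorem~\ref{quantum_f_monotonicity}. First I would dispose of the equality case: if the subspaces $\mathcal P_k$ are mutually orthogonal, then $\hat\rho=\sum_k p_k\hat\rho_k$ and $\hat\sigma=\sum_k p_k\hat\sigma_k$ are simultaneously block diagonal with respect to the $\mathcal P_k$. Since the modular operator $\mathcal D_{\hat\rho,\hat\sigma}$ and hence $f(\mathcal D_{\hat\rho,\hat\sigma})$ act block-wise in this decomposition, and since on the $k$th block $\mathcal D_{\hat\rho,\hat\sigma}$ restricts to $\mathcal D_{p_k\hat\rho_k,p_k\hat\sigma_k}=\mathcal D_{\hat\rho_k,\hat\sigma_k}$ (the scalar $p_k$ cancels in the ratio), one gets directly from the defining formula~(\ref{quantum_f_general}) that $D_f(\hat\rho\|\hat\sigma)=\sum_k p_k\, D_f(\hat\rho_k\|\hat\sigma_k)$. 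The only mild care needed here is the bookkeeping with the spectral projectors $\hat P_i,\hat Q_j$ restricted to each block, but this is routine.

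For the general inequality I would introduce an auxiliary system $\mathrm A$ with orthonormal basis $\{|k\rangle\}$ and set $\hat\rho':=\sum_k p_k\hat\rho_k\otimes|k\rangle\langle k|$ and $\hat\sigma':=\sum_k p_k\hat\sigma_k\otimes|k\rangle\langle k|$. By the equality case just proven (the blocks $\mathcal P_k\otimes\mathbb C|k\rangle$ are manifestly orthogonal) we have $D_f(\hat\rho'\|\hat\sigma')=\sum_k p_k D_f(\hat\rho_k\otimes|k\rangle\langle k|\,\|\,\hat\sigma_k\otimes|k\rangle\langle k|)=\sum_k p_k D_f(\hat\rho_k\|\hat\sigma_k)$, where the last step uses that appending the fixed pure ancilla state $|k\rangle\langle k|$ does not change the quasi-entropy (again immediate from~(\ref{quantum_f_general}), since the added factor only contributes projectors with trace $1$). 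Now $\hat\rho={\rm tr}_{\mathrm A}[\hat\rho']$ and $\hat\sigma={\rm tr}_{\mathrm A}[\hat\sigma']$, and the partial trace is CPTP, in particular $2$-positive and TP, so Theorem~\ref{quantum_f_monotonicity} applied with $f$ operator convex gives $D_f(\hat\rho\|\hat\sigma)\le D_f(\hat\rho'\|\hat\sigma')=\sum_k p_k D_f(\hat\rho_k\|\hat\sigma_k)$, which is~(\ref{joint_convexity_f}). For $f$ operator concave one simply uses the opposite inequality in Theorem~\ref{quantum_f_monotonicity}.

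The main technical point to check carefully — and the one I would flag as the likely obstacle — is the hypothesis of Theorem~\ref{quantum_f_monotonicity} that $\hat\sigma$ and ${\rm tr}_{\mathrm A}[\hat\sigma']=\hat\sigma$ be positive definite, and that $\hat\rho,\mathcal E(\hat\rho)$ be positive. Here we are assuming $\hat\sigma_k$ positive definite and $p_k>0$, so $\hat\sigma>0$ and $\hat\sigma'>0$, and positivity of the $\hat\rho_k$ gives positivity of $\hat\rho$ and $\hat\rho'$; the condition $f(0)\in\mathbb R$ is assumed. One must also make sure that $D_f$ for the enlarged, possibly block-diagonal operators is still computed by the same spectral formula, i.e. that the spectral decompositions of $\hat\rho',\hat\sigma'$ genuinely refine into the per-$k$ decompositions — this is exactly where mutual orthogonality of the $\mathcal P_k$ is used, and it is the subtle ingredient in the equality statement. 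Once these positivity and block-structure points are pinned down, the rest is the verbatim analogue of the proof of Theorem~\ref{thm:joint_convexity_KL}, and as the paper notes, no new ideas are required.
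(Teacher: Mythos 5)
Your proposal is correct and is essentially the paper's intended argument: the paper explicitly omits the proof as "completely parallel" to Theorem~\ref{thm:joint_convexity_KL}, namely the block-diagonal equality case followed by the ancilla construction $\hat\rho'=\sum_k p_k\hat\rho_k\otimes|k\rangle\langle k|$ and monotonicity of $D_f$ (Theorem~\ref{quantum_f_monotonicity}) under the partial trace. Your verification of the positivity hypotheses and of the block-wise action of $f(\mathcal D_{\hat\rho,\hat\sigma})$ is exactly the bookkeeping the paper leaves implicit.
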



\begin{corollary}[Joint convexity of the quantum R\'enyi divergence]
Let $\hat \rho$,  $\mathcal E (\hat \rho)$ be positive and $\hat \sigma$, $\mathcal E (\hat \sigma)$ be positive definite.
For $0 < \alpha < 1$, the quantum R\'enyi $\alpha$-divergence (\ref{simple_Renyi_divergence}) satisfies the joint convexity (with the same notations as Corollary~\ref{thm:joint_convexity_f} and with $p_k > 0$):
\begin{equation}
\tilde S_\alpha (\hat \rho \| \hat \sigma ) \leq \sum_k p_k \tilde S_\alpha (\hat \rho_k \| \hat \sigma_k) .
\label{joint_convexity_Renyi}
\end{equation}
The equality holds if the same equality condition as in Corollary \ref{thm:joint_convexity_f} is satisfied and $\tilde S_\alpha (\hat \rho_k \| \hat \sigma_k)$'s are the same for all $k$. 
\label{cor:joint_convexity_Renyi}
\end{corollary}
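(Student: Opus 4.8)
The plan is to deduce Corollary~\ref{cor:joint_convexity_Renyi} as a special case of the joint convexity of the general quasi-entropy $D_f(\hat\rho\|\hat\sigma)$ stated in Corollary~\ref{thm:joint_convexity_f}. Recall from Eq.~(\ref{simple_Renyi_divergence}) that $\tilde S_\alpha(\hat\rho\|\hat\sigma)=\frac{1}{\alpha-1}\ln D_{f_\alpha}(\hat\rho\|\hat\sigma)$ with $f_\alpha(x)=x^\alpha$, and that for $0<\alpha<1$ the function $f_\alpha$ is operator concave on $(0,\infty)$ (L\"owner--Heinz) with $f_\alpha(0)=\lim_{x\to+0}x^\alpha=0\in\mathbb R$. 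So Corollary~\ref{thm:joint_convexity_f} applies (concave case) and gives, for $\hat\rho=\sum_k p_k\hat\rho_k$, $\hat\sigma=\sum_k p_k\hat\sigma_k$ with $p_k>0$,
\begin{equation}
D_{f_\alpha}(\hat\rho\|\hat\sigma)\ \geq\ \sum_k p_k\,D_{f_\alpha}(\hat\rho_k\|\hat\sigma_k).
\label{plan:Df_concave}
\end{equation}

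Next I would convert this concavity of $D_{f_\alpha}$ into convexity of $\tilde S_\alpha$ by applying the two monotone transformations $t\mapsto \ln t$ (increasing, concave on $(0,\infty)$) and then multiplication by $\frac{1}{\alpha-1}<0$. From (\ref{plan:Df_concave}) and monotonicity of $\ln$,
\begin{equation}
\ln D_{f_\alpha}(\hat\rho\|\hat\sigma)\ \geq\ \ln\Big(\sum_k p_k\,D_{f_\alpha}(\hat\rho_k\|\hat\sigma_k)\Big)\ \geq\ \sum_k p_k\,\ln D_{f_\alpha}(\hat\rho_k\|\hat\sigma_k),
\end{equation}
where the second inequality is concavity of $\ln$ (Jensen with weights $p_k$). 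Multiplying by $\frac{1}{\alpha-1}<0$ reverses the inequality and yields exactly (\ref{joint_convexity_Renyi}). This is essentially the same two-step argument used classically in the proof of Proposition~\ref{prop:classical_alpha_monotone} and of Theorem~\ref{thm:joint_convexity_0}; indeed, the structural parallel with Theorem 11 of~\cite{Erven_Harremoes} is the guiding template.

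For the equality case I would trace when each of the three inequalities above is tight. The inequality (\ref{plan:Df_concave}) is an equality when the subspaces $\mathcal P_k$ spanned by the supports of $\hat\rho_k$ and $\hat\sigma_k$ are mutually orthogonal (the stated equality condition in Corollary~\ref{thm:joint_convexity_f}); the $\ln$-Jensen step is an equality precisely when all the $D_{f_\alpha}(\hat\rho_k\|\hat\sigma_k)$ coincide, equivalently all $\tilde S_\alpha(\hat\rho_k\|\hat\sigma_k)$ coincide; and when both hold the first inequality ($\ln$ of a sum) is automatically tight as well. Combining, the two conditions listed in the statement are sufficient for equality. The main obstacle is not any of these steps individually but ensuring that Corollary~\ref{thm:joint_convexity_f} is genuinely applicable: one must check the hypotheses on positivity/definiteness carry over ($f_\alpha(0)\in\mathbb R$ exists, so $\hat\rho,\mathcal E(\hat\rho)$ need only be positive, while $\hat\sigma,\mathcal E(\hat\sigma)$ should be positive definite), and one should note that $D_{f_\alpha}$ for subnormalized/unnormalized arguments still behaves correctly under the scaling implicit in writing $\hat\rho=\sum_k p_k\hat\rho_k$ — but since Corollary~\ref{thm:joint_convexity_f} is stated exactly for this mixture form with $p_k>0$, no extra work beyond citing it is needed.
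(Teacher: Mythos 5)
Your proposal is correct and follows essentially the same route as the paper's proof: apply Corollary~\ref{thm:joint_convexity_f} with the operator-concave $f_\alpha(x)=x^\alpha$ ($0<\alpha<1$) to get $D_{f_\alpha}(\hat\rho\|\hat\sigma)\geq\sum_k p_k D_{f_\alpha}(\hat\rho_k\|\hat\sigma_k)$, then combine the concavity of $\ln$ (Jensen) with the sign flip from $\frac{1}{\alpha-1}<0$, and read off the equality case from tightness of the two inequalities. The only cosmetic difference is that the paper packages the last two steps as convexity of the decreasing function $\frac{1}{\alpha-1}\ln t$, while you apply $\ln$ and the negative factor separately.
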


\begin{proof}
 From Corollary~\ref{thm:joint_convexity_f} with $f_\alpha$ being concave  and from the convexity of $\frac{1}{\alpha - 1}\ln t$,
\begin{equation}
\frac{1}{\alpha - 1} \ln D_{f_\alpha} (\hat \rho \| \hat \sigma ) 
\leq \frac{1}{\alpha - 1}  \ln \sum_k p_k D_{f_\alpha} (\hat \rho_k \| \hat \sigma_k)  
\leq \frac{1}{\alpha - 1} \sum_k  p_k \ln   D_{f_\alpha} (\hat \rho_k \| \hat \sigma_k).  
\end{equation}
We note that for $\alpha > 1$, the right inequality fails.
The equality  in (\ref{joint_convexity_Renyi}) holds if the equality conditions for the above two inequalities are satisfied. 
$\Box$
\end{proof}

We can take the limit of inequality (\ref{joint_convexity_Renyi}) for $\alpha = 0,1$, while the joint convexity for these cases is already proved in Theorem~\ref{thm:joint_convexity_0} and Theorem~\ref{thm:joint_convexity_KL}, respectively.

\

\noindent\textit{Quantum f-divergence.} 
We remark on the quantum $f$-divergence~\cite{Hiai2011f}, which is a special case of the Petz's quasi-entropies~\cite{Petz1985,Petz1986}.
Let $\hat \rho$ and $\hat \sigma$ be normalized states.
 $D_f (\hat \rho \| \hat \sigma)$ is called the quantum $f$-divergence, if $f$ is operator convex on $(0,\infty)$, strictly convex at $x=1$, and $f(1) = 0$.
For example, $S_1 (\hat \rho \| \hat \sigma)$ is the quantum $f$-divergence in this sense. 
From Theorem~\ref{quantum_f_positive}, the quantum $f$-divergence is non-negative:  $D_f (\hat \rho \| \hat \sigma) \geq 0$, where the equality holds if and only if  $\hat \rho = \hat \sigma$.
From Theorem~\ref{quantum_f_monotonicity}, $D_f (\hat \rho \| \hat \sigma)$ satisfies the monotonicity under CPTP maps.

The trace distance $D( \hat \rho, \hat \sigma )$ is not an $f$-divergence in the quantum case.  We  note that $f(x) = |x-1|$ is not operator convex in any interval that contains $x=1$~\cite{Bhatia}.  The quantum fidelity~(\ref{quantum_fidelity}) is not related to a quantum $f$-divergence too.  In fact, $f(x) = 1-  \sqrt{x}$ is operator convex and gives $D_f (\hat \rho \| \hat \sigma) = 1 - {\rm tr}[\hat \rho^{1/2} \hat \sigma^{1/2}]$, which is not equivalent to Eq.~(\ref{quantum_fidelity}).

\

\noindent\textit{Sandwiched R\'enyi divergence.}
We can also introduce another version of the quantum R\'enyi $\alpha$-divergence, called the sandwiched R\'enyi $\alpha$-divergence~\cite{Wilde2014,Lennert2013,Frank2013,Beigi2013}.
Let $\hat \rho$, $\hat \sigma$ be normalized states.
The sandwiched R\'enyi $\alpha$-divergence with $0< \alpha < 1$ and $1 < \alpha < \infty$ is defined as
\begin{equation}
S_\alpha (\hat \rho \| \hat \sigma ) := \frac{1}{\alpha -1} \ln \left(  {\rm tr} \left[   (\hat \sigma^{\frac{1 - \alpha}{2\alpha}} \hat \rho \hat \sigma^{\frac{1 - \alpha}{2\alpha}} ) ^{\alpha} \right] \right).
\label{def_sandwiched}
\end{equation}
It is known that $S_\alpha (\hat \rho \| \hat \sigma ) \geq 0$ holds, where the equality is achieved if and only if $\hat \rho = \hat \sigma$ (e.g., Theorem 5 of \cite{Beigi2013}).
The limit $\alpha \to \infty$ of  $S_\alpha (\hat \rho \| \hat \sigma ) $ equals $S_\infty (\hat \rho \| \hat \sigma)$ defined by Eq.~(\ref{def_Renyi_infty}) (Theorem 5 of \cite{Lennert2013}).
Also, the limit $\alpha \to 1$ gives the quantum KL divergence.
Thus, $S_\alpha (\hat \rho \| \hat \sigma )$ is well-defined for $0 < \alpha \leq \infty$. 
It is known~\cite{Lennert2013,Beigi2013} that $S_\alpha (\hat \rho \| \hat \sigma ) \leq S_{\alpha'} (\hat \rho \| \hat \sigma )$ for $\alpha \leq \alpha'$.
We note that  the fidelity $F(\hat \rho, \hat \sigma)$ corresponds to $S_{1/2}  (\hat \rho \| \hat \sigma )$ as shown in Eq.~(\ref{quantum_fidelity_Renyi}). 
It is also known that the sandwiched R\'enyi divergence satisfies the monotonicity under CPTP map $\mathcal E$ (Theorem 1 of \cite{Frank2013}): For  $1/2 \leq \alpha \leq \infty$,
\begin{equation}
S_\alpha (\hat \rho \| \hat \sigma ) \geq S_\alpha (\mathcal E (\hat \rho ) \| \mathcal E (\hat \sigma )).
\end{equation}
We remark that it has been further proved in Ref.~\cite{Hermes2017} that $S_\alpha (\hat \rho \| \hat \sigma ) $  satisfies the monotonicity under \textit{positive} and TP maps for $\alpha \geq 1$; in particular, the quantum KL divergence satisfies the monotonicity under positive and TP maps.
$S_{1/2}  (\hat \rho \| \hat \sigma )$  also satisfies the monotonicity under positive and TP maps, as the fidelity satisfies it as mentioned in Section~\ref{sec:misc}.

\

\noindent
\textit{Characterization of the KL divergence.}
So far, we have shown the monotonicity  of various divergence-like quantities.  
To put it in another way, the monotonicity is not sufficient to specify a single divergence such as the KL divergence.
Then, an interesting question is: Under which additional conditions, the KL divergence can be uniquely characterized?  This question has been answered in Refs.~\cite{Wilming2017,Matsumoto2010} as follows.

\begin{theorem}[Theorem 1 of~\cite{Wilming2017}]
Let $S(\hat \rho \| \hat \sigma)$ be a real-valued function of two normalized quantum states with $\hat \sigma$ being positive definite.  Suppose that the following four properties are satisfied:
\begin{description}
\item[Continuity:]  $S(\hat \rho \| \hat \sigma)$ is a continuous function of $\hat \rho$.
\item[Monotonicity:] For any CPTP map $\mathcal E$,  $S(\hat \rho \| \hat \sigma) \geq S(\mathcal E (\hat \rho ) \| \mathcal E ( \hat \sigma ))$.
\item[Additivity:] $S(\hat \rho_{\rm A} \otimes \hat \rho_{\rm B} \| \hat \sigma_{\rm A} \otimes \hat \sigma_{\rm B} ) = S(\hat \rho_{\rm A} \| \hat \sigma_{\rm A} ) + S( \hat \rho_{\rm B} \|  \hat \sigma_{\rm B} )$.
\item[Super-additivity:] Let $\hat \rho_{\rm AB}$ be a state of a composite system AB with partial states $\hat \rho_{\rm A}$ and $\hat \rho_{\rm B}$.  Then, $S(\hat \rho_{\rm AB} \| \hat \sigma_{\rm A} \otimes \hat \sigma_{\rm B}) \geq S(\hat \rho_{\rm A} \| \hat \sigma_{\rm A} )  + S(\hat \rho_{\rm B} \|  \hat \sigma_{\rm B}) $.
\end{description}
Then,  $S(\hat \rho \| \hat \sigma)$ equals the KL divergence up to normalization, i.e.,  $S(\hat \rho \| \hat \sigma) = C S_1 (\hat \rho \| \hat \sigma)$ for some constant $C>0$.
\end{theorem}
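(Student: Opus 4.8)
The plan is to prove that the four axioms (continuity, monotonicity, additivity, super-additivity) force $S(\hat\rho\|\hat\sigma) = C\,S_1(\hat\rho\|\hat\sigma)$. I would organize the argument around two pillars: first, reducing the problem to the fully classical commuting case via the structure of the axioms, and second, pinning down the classical function by an asymptotic (i.i.d.) argument using hypothesis testing and the Stein-type lemma machinery developed in Chapter~\ref{chap:approximate_asymptotic} and Appendix~\ref{appx:hypothesis_testing}.

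First I would establish an upper and lower bound on $S(\hat\rho\|\hat\sigma)$ in terms of the hypothesis testing divergence $S_{\mathrm H}^\eta(\hat\rho\|\hat\sigma)$ applied to tensor powers. For the lower bound, I would use that a binary measurement $\{\hat Q,\hat I-\hat Q\}$ is a CPTP map onto a classical two-outcome system; monotonicity then gives $S(\hat\rho\|\hat\sigma)\geq S(\mathrm{Bern}(p)\|\mathrm{Bern}(q))$ for the induced Bernoulli distributions, so I first need the claim for classical Bernoulli pairs and then optimize the measurement — this is exactly where $S_{\mathrm H}^\eta$ enters. For the upper bound, I would use super-additivity applied to $\hat\rho^{\otimes n}$ together with a decoupling/typical-projector construction: monotonicity under a suitable CPTP map that maps $\hat\rho^{\otimes n}$ approximately to a state supported on an $e^{nS_1}$-dimensional typical subspace (from the quantum Shannon-McMillan theorem, Theorem~\ref{ergodic_prop}), combined with continuity to absorb the $\varepsilon$-error. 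Additivity is what lets me divide the $n$-copy bound by $n$ and take $n\to\infty$; continuity is what lets me replace approximate states by exact ones at the cost of vanishing corrections.

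The core quantitative input is the quantum Stein's lemma / relative AEP (Theorem~\ref{thm:iid_divergence}): for i.i.d.\ sequences, $\tfrac1n S_{\mathrm H}^\eta(\hat\rho^{\otimes n}\|\hat\sigma^{\otimes n})\to S_1(\hat\rho\|\hat\sigma)$ for every fixed $\eta\in(0,1)$. Feeding this into both bounds above, and using that for the reference state $\hat\sigma$ one may also apply the same reasoning in the ``opposite direction'' to control the Bernoulli reductions from below, I expect to sandwich $\tfrac1n S(\hat\rho^{\otimes n}\|\hat\sigma^{\otimes n})$ between quantities both converging to $S_1(\hat\rho\|\hat\sigma)$. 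Additivity then forces $S(\hat\rho\|\hat\sigma)=S_1(\hat\rho\|\hat\sigma)$ up to the overall scale $C$, which is fixed by evaluating on a single canonical pair (e.g.\ $S(\mathrm{Bern}(1)\|\mathrm{Bern}(1/2))$ versus $S_1$ of the same), giving $C = S(\cdot\|\cdot)/\ln 2$ on that pair; positivity of $C$ follows from monotonicity plus the fact that $S$ cannot be identically zero on a non-trivial pair (otherwise a measurement reduction would violate the classical base case).

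The main obstacle I anticipate is the upper bound via super-additivity: turning ``$\hat\rho^{\otimes n}$ is close to a state on a small typical subspace'' into a genuine inequality $S(\hat\rho^{\otimes n}\|\hat\sigma^{\otimes n})\leq n S_1(\hat\rho\|\hat\sigma)+o(n)$ requires carefully choosing a CPTP map that simultaneously (i) nearly fixes $\hat\rho^{\otimes n}$ on its typical subspace, (ii) does not increase the divergence too much, and (iii) maps $\hat\sigma^{\otimes n}$ to something whose divergence from the image of $\hat\rho^{\otimes n}$ is controllable — all while only invoking super-additivity rather than an assumed ``projective'' or ``direct-sum'' property, since the theorem statement does not grant the latter. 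Handling the cross terms between the typical and atypical parts, and ensuring the continuity correction is uniform enough to survive division by $n$, is the delicate point; I would lean on the purified-distance estimates of Section~\ref{sec:misc} and the smooth-divergence monotonicity (Theorem~\ref{Smooth_monotone}) to make this rigorous, and I expect the cleanest route mirrors the proof of Proposition~2 of Ref.~\cite{Wilming2017} closely at this step.
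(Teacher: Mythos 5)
The paper itself states this theorem without proof, simply citing Ref.~\cite{Wilming2017}, so your proposal can only be compared with the strategy of that reference --- whose skeleton (i.i.d.\ asymptotics via Stein's lemma/AEP, monotonicity to transfer an $n$-copy statement, super-additivity plus continuity to reduce an $n$-copy approximation to single-copy marginals, additivity to divide by $n$) you do correctly identify. However, two of your steps would fail as described. First, the upper bound: a CPTP map that sends $\hat\rho^{\otimes n}$ (approximately) onto a typical subspace, followed by monotonicity or super-additivity, can only yield \emph{lower} bounds on $S(\hat\rho^{\otimes n}\|\hat\sigma^{\otimes n})$, since monotonicity bounds the image pair by the preimage pair and super-additivity bounds a joint state from below by its marginals. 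To bound $S(\hat\rho\|\hat\sigma)$ from above you need the reverse (``formation'') direction: an asymptotic conversion from a pair whose $S$-value is already controlled (e.g.\ a commuting flag pair) to $(\hat\rho^{\otimes n},\hat\sigma^{\otimes n})$, with the \emph{second} argument reproduced exactly --- this is precisely the asymptotic d-majorization sufficiency of Theorem~\ref{thm:d_asym_q}~(b)/Corollary~\ref{cor:d_asym_q}~(b); super-additivity is then applied to the approximate output against the exact product $\hat\sigma^{\otimes n}$, so that continuity (assumed only in the first argument) can be used on each marginal. Second, your lower bound rests on an unproven base case: the statement ``$S=C\,D$ for classical (Bernoulli) pairs'' is itself a characterization theorem of exactly the same type (cf.\ \cite{Matsumoto2010}) and requires the same asymptotic machinery, so as written your argument is circular at this point.

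The route of Ref.~\cite{Wilming2017} avoids both problems and needs no classical base case: using asymptotic convertibility of pairs at a rate governed by $S_1$, one shows directly that $S_1(\hat\rho\|\hat\sigma)>S_1(\hat\rho'\|\hat\sigma')$ implies $S(\hat\rho\|\hat\sigma)\geq S(\hat\rho'\|\hat\sigma')$ (monotonicity transfers the $n$-copy inequality, super-additivity breaks the output --- whose second argument is exactly $\hat\sigma'^{\otimes n}$ --- into single-copy marginals, continuity controls each marginal, additivity divides by $n$); hence $S$ is a monotone function of $S_1$ alone, and additivity turns this into Cauchy's functional equation whose monotone solutions are linear, giving $S=C\,S_1$. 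Relatedly, your proposed ``sandwich of $\tfrac{1}{n}S(\hat\rho^{\otimes n}\|\hat\sigma^{\otimes n})$ between quantities converging to $S_1(\hat\rho\|\hat\sigma)$'' cannot be literally correct, because the axioms are invariant under rescaling $S$; all bounds must be expressed relative to the value of $S$ on a reference pair, with the constant $C$ appearing only at the end. Note also that $S\equiv 0$ satisfies all four axioms, so your argument for $C>0$ (and, strictly, the statement as quoted) needs an additional nondegeneracy input; without it one only obtains $C\geq 0$.
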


\section{Quantum Fisher information}
\label{sec:quantum_Fisher}

We briefly discuss a quantum generalization of the Fisher information in line with Ref.~\cite{Petz1996,Amari2000}, where again the concepts of operator monotone and operator concave play crucial roles.
The quantum Fisher information is related to quantum estimation theory, which we will not go into details in this book.
In contrast to the classical case discussed in Section~\ref{sec:classical_Fisher}, the quantum Fisher information is not unique, but can be characterized by operator monotone (and operator concave) functions.

We consider smooth parametrization of positive-definite states, written as $\hat \rho(\theta )$ with $\theta := (\theta^1, \theta^2, \cdots, \theta^m) \in \mathbb R^m$.
We denote $\partial_k := \partial / \partial \theta^k$.

\begin{definition}[Quantum Fisher information]
Let $f : (0 , \infty ) \to (0, \infty )$ be operator monotone and suppose that $f(0) := \lim_{x \to +0} f(x) \in [0,\infty )$ exists.
($f$ is thus operator concave from Proposition~\ref{prop:concave_monotone}.)  
Let  $\hat \rho$ be a positive-definite (normalized) state.
We define 
\begin{equation}
\mathcal K_{\hat \rho} := f(\mathcal D_{\hat \rho, \hat \rho}) \mathcal R_{\hat \rho}.
\end{equation}
Then, the quantum Fisher information matrix $K_{\hat \rho(\theta )}$, whose matrix component is written as $K_{\hat \rho(\theta ), kl}$, is defined as
\begin{equation}
K_{\hat \rho(\theta ), kl}  :=  \langle \partial_k \hat \rho(\theta ), \mathcal K_{\hat \rho(\theta )}^{-1} (\partial_l \hat \rho(\theta ) ) \rangle_{\rm HS} ={\rm tr} [ \partial_k \hat \rho(\theta ) \mathcal K_{\hat \rho(\theta )}^{-1} (\partial_l \hat \rho(\theta ) ) ] .
\end{equation}
\label{def:quantum_Fisher_information}
\end{definition}

We discuss two important examples.
A simplest case is $f(x) = 1$, for which the quantum Fisher information  is called the RLD (right logarithmic derivative) Fisher information.  
In this case,
\begin{equation}
K_{\hat \rho(\theta ), kl}   ={\rm tr} [ \partial_k \hat \rho(\theta ) \partial_l \hat \rho(\theta )  \hat \rho ( \theta)^{-1} ] .
\end{equation}
We note that this can be rewritten as
\begin{equation}
K_{\hat \rho(\theta ), kl}  ={\rm tr} [ \hat \rho ( \theta) \hat L_k (\theta ) \hat L_l (\theta )  ] ,
\end{equation}
where  $ \hat L_k (\theta )$ is  defined as the solution of $ \partial_k  \hat \rho (\theta ) = \hat \rho (\theta ) \hat L_k (\theta)$ and is given by $ \hat L_k (\theta ) = \hat \rho (\theta )^{-1} \partial_k \hat \rho (\theta )$.

Another example is $f(x) = (1+x)/2$, for which the quantum Fisher information  is called the SLD (symmetric logarithmic derivative) Fisher information.
In this case,  we have
$\mathcal K_{\hat \rho(\theta )} = ( \mathcal L_{\hat \rho (\theta )} + \mathcal R_{\hat \rho(\theta )} ) /2$, and  $\hat L_k (\theta ) :=  \mathcal K_{\hat \rho(\theta )}^{-1} (\partial_k \hat \rho (\theta))$
is given by the solution of $ \partial_k  \hat \rho (\theta ) = ( \hat \rho (\theta ) \hat L_k (\theta) + \hat L_k (\theta) \hat \rho (\theta ) ) /2$.
Thus,
\begin{equation}
K_{\hat \rho(\theta ), kl}  ={\rm tr} [  \partial_l  \hat \rho (\theta )  \hat L_k (\theta )  ] = \frac{1}{2} {\rm tr} [ \hat \rho ( \theta) ( \hat L_k (\theta ) \hat L_l (\theta ) +  \hat L_l (\theta ) \hat L_k (\theta ) )  ] .
\end{equation}

 These two versions of the quantum Fisher information are more or less straightforward generalizations of the classical Fisher information, but here we need to take  into account the fact that the logarithmic derivative $\partial_k \ln p(\theta )$ is not uniquely extended to the quantum case because of the non-commutability of operators.
We note that there is also another useful quantity called the ALD (anti-symmetric logarithmic derivative) Fisher information~\cite{Hayashi2005c}, while it is not one of the quantum Fisher information in the sense of of Definition~\ref{def:quantum_Fisher_information}.

The classical Fisher information is a special case of the quantum Fisher information:

\begin{lemma}
Suppose that $\hat \rho (\theta )$'s are diagonalizable in the same basis for all $\theta$.
Let $p (\theta)$ be the diagonal distribution of $\hat \rho (\theta )$.
Then, the quantum Fisher information matrix of $\hat \rho (\theta )$ for any $f$ reduces to the classical Fisher information matrix of $p (\theta)$ up to normalization.
\label{prop:Fisher_classical_quantum}
\end{lemma}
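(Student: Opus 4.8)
The statement is that when all the $\hat\rho(\theta)$ are simultaneously diagonalizable in a fixed orthonormal basis $\{|i\rangle\}$ (independent of $\theta$), the quantum Fisher information matrix $K_{\hat\rho(\theta),kl}$ associated to any admissible $f$ collapses to the classical Fisher information matrix $J_{p(\theta),kl}$ of the diagonal distribution $p(\theta)$, up to the normalization constant $f(1)$. The natural approach is to compute the action of $\mathcal K_{\hat\rho} = f(\mathcal D_{\hat\rho,\hat\rho})\mathcal R_{\hat\rho}$ explicitly in the fixed eigenbasis and show that, on the relevant vectors $\partial_k\hat\rho(\theta)$, it acts diagonally with the classical weights.

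First I would fix $\theta$, write $\hat\rho(\theta) = \sum_i p_i(\theta)|i\rangle\langle i|$ in the $\theta$-independent eigenbasis, and note that since the basis does not depend on $\theta$ we have $\partial_k\hat\rho(\theta) = \sum_i \partial_k p_i(\theta)\,|i\rangle\langle i|$, which is itself diagonal in this basis. Next I would recall from Section~\ref{sec:general_monotonicity} the spectral decomposition of the modular operator: with $\hat\rho = \sum_i p_i\hat P_i$ (here each $\hat P_i = |i\rangle\langle i|$ after refining degenerate blocks, or one can just keep the genuine eigenprojections — it does not matter since $\partial_k\hat\rho$ is diagonal), one has $\mathcal D_{\hat\rho,\hat\rho}(\hat X) = \sum_{ij}(p_i/p_j)\hat P_i\hat X\hat P_j$ and hence $f(\mathcal D_{\hat\rho,\hat\rho})(\hat X) = \sum_{ij} f(p_i/p_j)\hat P_i\hat X\hat P_j$. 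Therefore $\mathcal K_{\hat\rho}(\hat X) = f(\mathcal D_{\hat\rho,\hat\rho})(\mathcal R_{\hat\rho}(\hat X)) = f(\mathcal D_{\hat\rho,\hat\rho})(\hat X\hat\rho) = \sum_{ij} f(p_i/p_j)\,p_j\,\hat P_i\hat X\hat P_j$. On a diagonal operator $\hat X = \sum_i x_i|i\rangle\langle i|$ the off-diagonal terms $i\neq j$ contribute nothing, so $\mathcal K_{\hat\rho}(\hat X) = \sum_i f(1)\,p_i\,x_i\,|i\rangle\langle i|$. Thus $\mathcal K_{\hat\rho}$ restricted to the (invariant) subspace of diagonal operators is just multiplication by $f(1)p_i$ componentwise, so its inverse acts as division by $f(1)p_i$: $\mathcal K_{\hat\rho}^{-1}(\hat X) = \sum_i \frac{x_i}{f(1)p_i}|i\rangle\langle i|$ for diagonal $\hat X$.

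Plugging this into Definition~\ref{def:quantum_Fisher_information}, and using $\partial_l\hat\rho(\theta) = \sum_i \partial_l p_i(\theta)|i\rangle\langle i|$ together with the fact that $\mathcal K_{\hat\rho}^{-1}(\partial_l\hat\rho)$ is again diagonal, I would compute
\begin{equation}
K_{\hat\rho(\theta),kl} = {\rm tr}\!\left[\partial_k\hat\rho(\theta)\,\mathcal K_{\hat\rho(\theta)}^{-1}(\partial_l\hat\rho(\theta))\right] = \sum_i \partial_k p_i(\theta)\,\frac{\partial_l p_i(\theta)}{f(1)\,p_i(\theta)} = \frac{1}{f(1)}\sum_i \frac{\partial_k p_i(\theta)\,\partial_l p_i(\theta)}{p_i(\theta)},
\end{equation}
which is exactly $\frac{1}{f(1)}J_{p(\theta),kl}$ by the definition of the classical Fisher information matrix in Section~\ref{sec:classical_Fisher}. (Since $f:(0,\infty)\to(0,\infty)$, $f(1)>0$, so the normalization is well-defined; if one wants the un-normalized statement one just notes that the standard normalization convention $f(1)=1$ is assumed, as is usual for monotone metrics.) This establishes the claim. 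There is no real obstacle here — the entire content is that the eigenbasis being $\theta$-independent makes $\partial_k\hat\rho$ diagonal, and the modular/left-right-multiplication machinery acts trivially (via $f(1)$) on diagonal operators. The only point deserving a line of care is checking that the space of diagonal operators is invariant under $\mathcal K_{\hat\rho}$ and that $\mathcal K_{\hat\rho}$ is invertible there (immediate since $f(1)p_i>0$ by positive-definiteness of $\hat\rho$ and positivity of $f$), so that $\mathcal K_{\hat\rho}^{-1}$ may be applied to $\partial_l\hat\rho$ as written.
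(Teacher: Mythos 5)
Your proposal is correct and follows essentially the same route as the paper's proof: you compute that $\mathcal K_{\hat\rho(\theta)}$ acts on operators diagonal in the fixed eigenbasis as componentwise multiplication by $f(1)\,p_i(\theta)$, invert it there, and substitute $\partial_l\hat\rho(\theta)=\sum_i\partial_l p_i(\theta)\,\hat P_i$ to recover $J_{p(\theta),kl}/f(1)$. Your extra remarks (handling degenerate eigenvalues via the genuine eigenprojections, and the invariance/invertibility of $\mathcal K_{\hat\rho}$ on the diagonal subspace) are fine points the paper leaves implicit, but they do not change the argument.
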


\begin{proof}
Let $\hat \rho (\theta ) := \sum_i p_i (\theta ) \hat P_i$ be the spectral decomposition.
From the expression like Eq.~(\ref{D_rho_sigma}), if $\hat X =  \sum_i x_i  \hat P_i$ is also diagonalizable in the same basis, we have
$\mathcal K_{\hat \rho(\theta )} (\hat X) = f(1) \sum_i p_i (\theta )  x_i \hat P_i$, 
and thus
 $\mathcal K_{\hat \rho(\theta )}^{-1} (\hat X) = \frac{1}{ f(1) } \sum_i \frac{x_i}{p_i (\theta )}  \hat P_i$. 
 By substituting $\hat X = \partial_l \hat \rho (\theta ) = \sum_i \partial_l p_i (\theta ) \hat P_i$, we obtain the claim of the lemma.
 $\Box$
\end{proof}

As in the classical case, we can adopt the information geometry perspective.
For $\hat \rho > 0$, the quantum Fisher information metric on the operator space is given by a map $G_{\hat \rho} : \mathcal L (\mathcal H) \times \mathcal L (\mathcal H ) \to \mathbb C$  defined as
\begin{equation}
G_{\hat \rho} (\hat X , \hat Y ) := \langle \hat X, \mathcal K_{\hat \rho}^{-1} (\hat Y ) \rangle_{\rm HS} = {\rm tr} [ \hat X^\dagger \mathcal K_{\hat \rho}^{-1} (\hat Y ) ].
\label{quantum_Fisher_metric}
\end{equation}
The quantum Fisher information matrix is  then represented as
\begin{equation}
K_{\hat \rho(\theta ), kl}  = G_{\hat \rho (\theta ) } ( \partial_k \hat \rho(\theta ) , \partial_l \hat \rho(\theta ) ).
\end{equation}

As in the classical case, the quantum Fisher information also satisfies the monotonicity, whose proof invoke a similar technique to the proof of the monotonicity of divergences (Theorem~\ref{quantum_f_monotonicity}).

\begin{theorem}[Theorem 3 of \cite{Petz1996}]
Let $\mathcal E$ be 2-positive and TP.  
Suppose that $\hat \rho$, $\mathcal E (\hat \rho)$ are (normalized) positive-definite states.
For any $\hat X \in \mathcal L (\mathcal H)$,
\begin{equation}
G_{\hat \rho} (\hat X , \hat X ) \geq G_{\mathcal E (\hat \rho ) } (\mathcal E (\hat X )  ,\mathcal E ( \hat X ) ). 
\label{quantum_Fisher_monotonicity_metric}
\end{equation}
In particular,  the quantum Fisher information matrix satisfies 
\begin{equation}
K_{\hat \rho(\theta )} \geq K_{\mathcal E (\hat \rho(\theta ))}.
\label{quantum_Fisher_monotonicity}
\end{equation}
\end{theorem}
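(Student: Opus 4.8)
The plan is to reduce everything to a single operator inequality on the Hilbert--Schmidt space and then establish it from the Schwarz operator inequality of Section~\ref{sec:operator_inequalities} together with the theory of operator means. First, (\ref{quantum_Fisher_monotonicity}) follows from (\ref{quantum_Fisher_monotonicity_metric}) at once: for an arbitrary $c=(c^1,\dots,c^m)^{\rm T}$ set $\hat X:=\sum_k c^k\,\partial_k\hat\rho(\theta)$, use that $\mathcal E$ is linear and $\theta$-independent so $\mathcal E(\hat X)=\sum_k c^k\,\partial_k\mathcal E(\hat\rho(\theta))$, and note $c^{\rm T}K_{\hat\rho(\theta)}c=G_{\hat\rho(\theta)}(\hat X,\hat X)$ and likewise for $\mathcal E(\hat\rho(\theta))$. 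In turn, since $\hat\rho$ and $\mathcal E(\hat\rho)$ are positive definite (so $\mathcal K_{\hat\rho}$, $\mathcal K_{\mathcal E(\hat\rho)}$ are invertible on the operator space), (\ref{quantum_Fisher_monotonicity_metric}) is exactly the statement that $\mathcal K_{\hat\rho}^{-1}\ge\mathcal E^\dagger\,\mathcal K_{\mathcal E(\hat\rho)}^{-1}\,\mathcal E$ as positive superoperators, because $G_{\hat\rho}(\hat X,\hat X)-G_{\mathcal E(\hat\rho)}(\mathcal E(\hat X),\mathcal E(\hat X))=\langle\hat X,(\mathcal K_{\hat\rho}^{-1}-\mathcal E^\dagger\mathcal K_{\mathcal E(\hat\rho)}^{-1}\mathcal E)(\hat X)\rangle_{\rm HS}$.

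Second, I would write $\mathcal K_{\hat\rho}$ as an operator mean. As $\mathcal L_{\hat\rho}$ and $\mathcal R_{\hat\rho}$ are commuting positive-definite superoperators, $\mathcal K_{\hat\rho}=f(\mathcal D_{\hat\rho,\hat\rho})\mathcal R_{\hat\rho}=M_f(\mathcal R_{\hat\rho},\mathcal L_{\hat\rho})$, where $M_f(\hat A,\hat B):=\hat A^{1/2}f(\hat A^{-1/2}\hat B\hat A^{-1/2})\hat A^{1/2}$ is the operator mean attached to the operator monotone $f$. The mean identity $M_f(\hat A,\hat B)^{-1}=M_{\tilde f}(\hat A^{-1},\hat B^{-1})$ with $\tilde f(x):=1/f(1/x)$ (again operator monotone and positive) gives $\mathcal K_{\hat\rho}^{-1}=M_{\tilde f}(\mathcal R_{\hat\rho^{-1}},\mathcal L_{\hat\rho^{-1}})$, and similarly for $\mathcal E(\hat\rho)$. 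Now apply the Kubo--Ando transformer inequality $\mathcal W^\dagger M_{\tilde f}(\hat A,\hat B)\mathcal W\le M_{\tilde f}(\mathcal W^\dagger\hat A\mathcal W,\mathcal W^\dagger\hat B\mathcal W)$ with $\mathcal W=\mathcal E^\dagger$, followed by the joint operator monotonicity of $M_{\tilde f}$; it then suffices to verify the two component inequalities $\mathcal E^\dagger\mathcal R_{\mathcal E(\hat\rho)^{-1}}\mathcal E\le\mathcal R_{\hat\rho^{-1}}$ and $\mathcal E^\dagger\mathcal L_{\mathcal E(\hat\rho)^{-1}}\mathcal E\le\mathcal L_{\hat\rho^{-1}}$. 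Paired with $\hat X$ these read (using $\mathcal E(\hat X)^\dagger=\mathcal E(\hat X^\dagger)$ and cyclicity) ${\rm tr}[\mathcal E(\hat X^\dagger)\mathcal E(\hat\rho)^{-1}\mathcal E(\hat X^\dagger)^\dagger]\le{\rm tr}[\hat X^\dagger\hat\rho^{-1}\hat X]$ and ${\rm tr}[\mathcal E(\hat X^\dagger)\mathcal E(\hat\rho)^{-1}\mathcal E(\hat X)]\le{\rm tr}[\hat X^\dagger\hat\rho^{-1}\hat X]$, which are precisely the trace form of the Schwarz operator inequality (Proposition~\ref{prop:Schwarz_operator_inequality}, cf.\ Corollary~\ref{cor:Renyi_2_monotonicity}) for the $2$-positive trace-preserving $\mathcal E$, applied to $\hat X^\dagger$ and to $\hat X$. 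Chaining the three displays yields $\mathcal E^\dagger\mathcal K_{\mathcal E(\hat\rho)}^{-1}\mathcal E\le\mathcal K_{\hat\rho}^{-1}$, as desired.

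The main obstacle is that the operator-mean toolkit invoked above --- the identity $M_f^{-1}=M_{\tilde f}(\cdot^{-1},\cdot^{-1})$, joint operator monotonicity of $M_{\tilde f}$, and especially the transformer inequality --- is not among the results of Section~\ref{sec:operator_convex}, which supplies only L\"owner's integral representation and the operator Jensen inequality (Proposition~\ref{prop:Jensen_operator_inequality}). One must therefore either import these Kubo--Ando facts as known or derive them from what is available: $\tilde f$ operator monotone is immediate; joint monotonicity of $M_{\tilde f}$ follows from operator monotonicity of $\tilde f$ and of $\hat A\mapsto\hat A^{1/2}$; and the transformer inequality can be obtained from the integral representation of operator monotone functions by reducing, term by term, to the parallel-sum (harmonic-mean) case $(\hat A^{-1}+s^{-1}\hat B^{-1})^{-1}$, whose transformer inequality is elementary. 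An alternative route, closer in spirit to the proof of Theorem~\ref{quantum_f_monotonicity}, uses the contraction $\mathcal V$ of Lemma~\ref{lemma:V_contraction} with $\hat\sigma=\hat\rho$ and applies the operator Jensen inequality to the operator convex function $g:=1/f$; the snag there is that $g(0)=1/f(0)$ need not be $\le 0$ (and equals $+\infty$ when $f(0)=0$), so Proposition~\ref{prop:Jensen_operator_inequality} does not apply verbatim and a regularization/limiting argument would be needed --- which is exactly why routing through operator means is cleaner. The standing positivity of $\hat\rho$ and $\mathcal E(\hat\rho)$ is what makes all the inverses, and the mean $M_{\tilde f}$ of the positive-definite multiplication superoperators, legitimate.
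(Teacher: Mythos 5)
Your argument is correct in substance, but it takes a genuinely different route from the book's proof, which never has to invert $f$ at all: the book first converts the target $\mathcal K_{\hat\rho}^{-1}\geq\mathcal E^\dagger\mathcal K_{\mathcal E(\hat\rho)}^{-1}\mathcal E$ into the equivalent ``primal'' form $\mathcal K_{\mathcal E(\hat\rho)}\geq\mathcal E\,\mathcal K_{\hat\rho}\,\mathcal E^\dagger$ (via $\mathcal I\geq\mathcal T^\dagger\mathcal T\Leftrightarrow\mathcal I\geq\mathcal T\mathcal T^\dagger$ with $\mathcal T:=\mathcal K_{\mathcal E(\hat\rho)}^{-1/2}\mathcal E\mathcal K_{\hat\rho}^{1/2}$), and then reuses exactly the machinery of Theorem~\ref{quantum_f_monotonicity}: the contraction $\mathcal V$ of Lemma~\ref{lemma:V_contraction} with $\hat\sigma=\hat\rho$, the Jensen operator inequality (Proposition~\ref{prop:Jensen_operator_inequality}) applied to $f$ itself---operator concave, operator monotone, $f(0)\geq 0$, so no regularization is needed---and operator monotonicity of $f$ on $\mathcal V^\dagger\mathcal D_{\hat\rho,\hat\rho}\mathcal V\leq\mathcal D_{\mathcal E(\hat\rho),\mathcal E(\hat\rho)}$, giving $\mathcal E\mathcal K_{\hat\rho}\mathcal E^\dagger\leq\mathcal R_{\mathcal E(\hat\rho)}^{1/2}f(\mathcal D_{\mathcal E(\hat\rho),\mathcal E(\hat\rho)})\mathcal R_{\mathcal E(\hat\rho)}^{1/2}=\mathcal K_{\mathcal E(\hat\rho)}$; this dualization is precisely the move that removes the obstacle you rightly flag for the ``Jensen applied to $1/f$'' route. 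Your proof instead stays in the dual form and runs through Kubo--Ando means, which is closer to Petz's original presentation and has the merit of isolating where the dynamics enters (only the Schwarz inequality of Proposition~\ref{prop:Schwarz_operator_inequality} plus trace preservation, applied to the two multiplication superoperators), but it must import the inverse-mean identity, the transformer inequality, and joint monotonicity of means, none of which are developed in the book; your sketch for supplying them is workable, though note that Theorem~\ref{thm:Lowner_convex} covers operator convex functions and the analogous representation for operator monotone functions would have to be added, and monotonicity of $M_{\tilde f}$ in its \emph{first} argument does not follow merely from operator monotonicity of $\tilde f$ and of the square root (increasing $\hat A$ also decreases the inner argument $\hat A^{-1/2}\hat B\hat A^{-1/2}$), so that step too needs the integral representation or the symmetry of means. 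Two cosmetic slips: the transformer inequality should be invoked with $\mathcal W=\mathcal E$, so the sandwich is $\mathcal E^\dagger(\cdot)\mathcal E$, and the paired form of the $\mathcal R$-component bound is ${\rm tr}[\mathcal E(\hat X)\mathcal E(\hat\rho)^{-1}\mathcal E(\hat X)^\dagger]\leq{\rm tr}[\hat X\hat\rho^{-1}\hat X^\dagger]$, i.e.\ the traced Schwarz inequality at $\hat X^\dagger$; with these fixes your chain $\mathcal E^\dagger\mathcal K_{\mathcal E(\hat\rho)}^{-1}\mathcal E\leq M_{\tilde f}(\mathcal E^\dagger\mathcal R_{\mathcal E(\hat\rho)^{-1}}\mathcal E,\,\mathcal E^\dagger\mathcal L_{\mathcal E(\hat\rho)^{-1}}\mathcal E)\leq M_{\tilde f}(\mathcal R_{\hat\rho^{-1}},\mathcal L_{\hat\rho^{-1}})=\mathcal K_{\hat\rho}^{-1}$ goes through.
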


\begin{proof}
It is sufficient to prove that
\begin{equation}
\mathcal K_{\hat \rho}^{-1} \geq \mathcal E^\dagger \mathcal K_{\mathcal E(\hat \rho)}^{-1} \mathcal E. 
\end{equation}
This is equivalent to
\begin{equation}
\mathcal I \geq \mathcal K_{\hat \rho}^{1/2}  \mathcal E^\dagger \mathcal K_{\mathcal E(\hat \rho)}^{-1} \mathcal E \mathcal K_{\hat \rho}^{1/2}  \ \ 
\Leftrightarrow \ \ 
 \mathcal I \geq   \mathcal K_{\mathcal E(\hat \rho)}^{-1/2} \mathcal E \mathcal K_{\hat \rho}  \mathcal E^\dagger  \mathcal K_{\mathcal E(\hat \rho)}^{-1/2} ,
\end{equation}
where $\mathcal I$ is the identity.  Thus, what we  will prove is that
\begin{equation}
\mathcal K_{\mathcal E(\hat \rho)} \geq \mathcal E  \mathcal K_{\hat \rho} \mathcal E^\dagger. 
\label{proof_F_monotonicity0}
\end{equation}
By using a contraction $\mathcal V$  in Lemma~\ref{lemma:V_contraction} by replacing $\hat \sigma$ with $\hat \rho$, we have
\begin{equation}
\mathcal E  \mathcal K_{\hat \rho} \mathcal E^\dagger 
= \mathcal R_{\mathcal E (\hat \rho)}^{1/2} \mathcal V^\dagger f (\mathcal D_{\hat \rho, \hat \rho} ) \mathcal V \mathcal R_{\mathcal E (\hat \rho)}^{1/2}.
\label{proof_F_monotonicity1}
\end{equation}
We apply  inequality~(\ref{proof_monotonicity4}), but now $f$ is operator concave and operator monotone with $f(0) \geq 0$:
\begin{equation}
\mathcal V^\dagger f (\mathcal D_{\hat \rho, \hat \rho} ) \mathcal V  \leq f  (  \mathcal D_{\mathcal E ( \hat \rho) , \mathcal E ( \hat \rho )} ).
\label{proof_F_monotonicity2}
\end{equation}
From inequalities (\ref{proof_F_monotonicity1}) and (\ref{proof_F_monotonicity2}), we finally obtain inequality~(\ref{proof_F_monotonicity0}).
$\Box$
\end{proof}

To relate the quantum Fisher information to parameter estimation, we consider quantum measurement on $\hat \rho (\theta )$, where the measurement itself is assumed to be independent of $\theta$.
Let $p(\theta )$ be the classical probability distribution obtained  by the measurement, and let $\hat p (\theta ) := \sum_i p(\theta ) \hat P_i$ be the corresponding diagonal density operator with a fixed basis represented by projectors $\{ \hat P_i \}$.
Let $J_{p ( \theta)}$ be the classical Fisher information matrix of  $p(\theta)$, which is equivalent to the quantum Fisher information matrix of $\hat p (\theta )$ from Lemma~\ref{prop:Fisher_classical_quantum} (here we take normalization $f(1) = 1$).
Since $\hat \rho (\theta ) \mapsto \hat p (\theta )$ is a CPTP map, the monotonicity (\ref{quantum_Fisher_monotonicity}) implies
\begin{equation}
K_{\hat \rho ( \theta ) } \geq J_{p ( \theta) },
\end{equation}
which is sometimes called the quantum Cramer-Rao inequality.  
By combining this with the classical Cramer-Rao bound~(\ref{Cramer_Rao}), we obtain a bound of the accuracy of unbiased estimation of $\theta$ from quantum measurement:
\begin{equation}
{\rm Cov}_\theta (\theta_{\rm est} )  \geq K_{\hat \rho ( \theta ) }^{-1},
\end{equation}
where the left-hand side depends on the choice of quantum measurement, but the right-hand side does not.

\

We finally note the quantum analogue of the Chentsov theorem (Theorem~\ref{thm:Chentsov}), which is referred to as the Petz's theorem.
The quantum monotone metric is defined as follows.

\begin{definition}[Quantum monotone metric]
Suppose that $G_{\hat \rho} : \mathcal L (\mathcal H ) \times \mathcal L (\mathcal H ) \to \mathbb C$ is defined for positive-definite states $\hat \rho$.
We call $G_{\hat \rho}$ a quantum monotone metric on the operator space, if it satisfies the following.
\begin{itemize}
\item $G_{\hat \rho}$ is sesquilinear (as is the case for the ordinary complex inner product).
\item  $G_{\hat \rho} (\hat X, \hat X) \geq 0$ holds for any $\hat \rho$, where the equality is achieved if and only if $\hat X = 0$.
\item $\hat \rho \mapsto G_{\hat \rho} (\hat X, \hat X) $ is continuous for any $\hat X$.
\item The monotonicity (\ref{quantum_Fisher_monotonicity_metric}) holds for any CPTP $\mathcal E$ and for any $\hat \rho$, $\hat X$.
\end{itemize}
\end{definition}

The quantum Fisher information metric (\ref{quantum_Fisher_metric}) is a quantum monotone metric.  The Petz's theorem states the converse:

\begin{theorem}[Petz's theorem, Theorem 5 of \cite{Petz1996}]
Any quantum monotone metric is a quantum Fisher information metric with some $f$.
\end{theorem}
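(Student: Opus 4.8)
The final statement to prove is the Petz's theorem: any quantum monotone metric $G_{\hat \rho}$ is a quantum Fisher information metric for some operator monotone $f$. The plan is to follow Petz's original argument, exploiting the structure imposed by the monotonicity axiom under a particularly well-chosen family of CPTP maps.

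First I would reduce to understanding the metric at a fixed reference state, say the maximally mixed state $\hat I / d$, and then use covariance to move to general $\hat \rho$. The key observation is that monotonicity applied to an arbitrary CPTP map $\mathcal E$ and its reverse $\mathcal E^{-1}$ (when invertible) forces equality, so unitary conjugations act as isometries of the metric; combined with continuity this pins down a lot of structure. Next I would decompose the tangent space $\mathcal L (\mathcal H)$ at $\hat \rho$ into eigenspaces of the modular-type operator $\mathcal D_{\hat \rho, \hat \rho} = \mathcal L_{\hat \rho} \mathcal R_{\hat \rho^{-1}}$; writing $\hat \rho = \sum_i p_i \hat P_i$, a tangent operator $\hat X$ decomposes into blocks $\hat P_i \hat X \hat P_j$, and I would argue that the metric $G_{\hat \rho}$ must be diagonal with respect to this block decomposition (off-diagonal terms are killed by averaging over the $U(1)$ phase rotations $\hat \rho \mapsto e^{\mathrm{i} t \hat H} \hat \rho e^{-\mathrm{i} t \hat H}$ generated by functions of $\hat \rho$, which are monotone-preserving). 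Thus $G_{\hat \rho}(\hat X, \hat X) = \sum_{ij} c(p_i, p_j) \, \mathrm{tr}[(\hat P_i \hat X \hat P_j)^\dagger (\hat P_i \hat X \hat P_j)]$ for some symmetric positive function $c$, and the whole problem becomes: determine what $c(x,y)$ can be.

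The crucial step is then to extract the functional constraints on $c$ from monotonicity under cleverly chosen channels acting only on a two-dimensional "coherence block." Restricting to a fixed pair of eigenvalues $(p_i, p_j)$ and applying pinching/partial-trace channels between qubit subsystems (exactly analogous to the coherent-information estimates used in Lemma~\ref{lemma:V_contraction}), one shows that $c(x,y)$ must have the form $c(x,y) = 1/\big( y f(x/y) \big)$ with $f$ operator monotone and satisfying $f(x) = x f(1/x)$; the operator monotonicity of $f$ is forced precisely by demanding the inequality $G_{\hat \rho}(\hat X,\hat X) \ge G_{\mathcal E(\hat \rho)}(\mathcal E(\hat X),\mathcal E(\hat X))$ for \emph{all} CPTP $\mathcal E$, not merely qubit ones, using the integral representation of Theorem~\ref{thm:Lowner_convex} to go back and forth between the metric inequality and the operator-monotonicity of $f$. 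Matching $c(x,y) = 1/(y f(x/y))$ against Definition~\ref{def:quantum_Fisher_information} shows that $G_{\hat \rho} = G_{\hat \rho}^{(f)}$, the quantum Fisher information metric associated with $f$.

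I expect the main obstacle to be the step establishing that $f$ must be operator monotone (and not merely ordinary-monotone): one direction is easy since $G_{\hat \rho}^{(f)}$ is a monotone metric whenever $f$ is operator monotone (Theorem~\ref{quantum_f_monotonicity}-type argument, as already noted in the excerpt), but the converse requires squeezing operator monotonicity out of finitely many scalar inequalities coming from channels between low-dimensional systems. The technical device — as in Petz's proof — is to realize that the map $\hat \rho \mapsto \hat \rho \oplus \hat \sigma$ composed with partial traces gives, via the metric-monotonicity applied to all $2 \times 2$ configurations, exactly the matrix-positivity conditions characterizing operator monotone functions (Loewner's theorem). Assembling these, together with the normalization $f(1) = 1$ fixed by the value of the metric on a diagonal direction, completes the identification. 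I would also need to handle the boundary/continuity issues (the metric is only assumed defined on positive-definite $\hat \rho$, and $f(0)$ may be $0$), which are routine given the continuity axiom but worth a sentence.
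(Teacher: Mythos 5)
The paper does not actually prove this theorem; it is quoted from Theorem 5 of Ref.~\cite{Petz1996}, so there is no in-text proof to compare against. Your outline does follow the broad strategy of Petz's original argument (unitary covariance of the metric, block decomposition of the tangent space with respect to the spectral projections of $\hat \rho$, reduction of the metric to a two-variable kernel $c(p_i,p_j)$, and extraction of operator monotonicity of $f$ from the channel inequality via L\"owner-type positivity conditions), and the identification $c(x,y)=1/\bigl(y f(x/y)\bigr)$ is exactly what matching against Definition~\ref{def:quantum_Fisher_information} requires. A minor slip: an ``arbitrary CPTP map and its reverse'' does not exist in general --- only unitary conjugations have CPTP inverses --- though the conclusion you draw (unitary invariance) is fine.

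The genuine gap is that you build symmetry into the classification: you assert that $c(x,y)$ is symmetric and that monotonicity forces $f(x)=x f(1/x)$. Neither follows from the axioms of a quantum monotone metric as defined here (sesquilinearity, positivity, continuity, monotonicity --- no symmetry axiom), and the theorem you are asked to prove covers the non-symmetric case. The paper's own example refutes the claim: the RLD metric corresponds to $f(x)=1$, i.e. $c(x,y)=1/y$, which is not symmetric and violates $f(x)=x f(1/x)$, yet it \emph{is} a monotone metric (its monotonicity is the trace of the Schwarz operator inequality, Proposition~\ref{prop:Schwarz_operator_inequality}, cf.\ Corollary~\ref{cor:Renyi_2_monotonicity}). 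Indeed the paper states immediately after the theorem that $f(x)=x f(x^{-1})$ is equivalent to the \emph{additional} property of symmetry of the metric, singling out a strict subclass. So the step in which you ``derive'' these constraints from monotonicity must contain an error, and as written your argument would at best classify symmetric monotone metrics, which is strictly weaker than Petz's theorem. (A secondary point to repair: on a degenerate diagonal block the invariance argument allows an extra invariant form proportional to ${\rm tr}[\hat X^\dagger]\,{\rm tr}[\hat Y]$, which your ansatz for $G_{\hat\rho}$ silently excludes; ruling it out needs the classical Chentsov theorem or an explicit channel argument.)
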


We finally note that a metric is called symmetric, if 
$G_{\hat \rho} (\hat X, \hat Y) = G_{\hat \rho} (\hat Y^\dagger, \hat X^\dagger)$.
It is known that the quantum Fisher information metric is symmetric if and only if  $f$ satisfies $f(x) = x f(x^{-1})$ (Theorem 7 of Ref.~\cite{Petz1996}).
In this case, the corresponding quantum Fisher information matrix is a symmetric matrix.
For example, the SLD Fisher information is symmetric.
Moreover, it is known that there are the smallest and the largest symmetric Fisher information metrics (Corollary 9 of  Ref.~\cite{Petz1996});
In particular, the SLD metric is the smallest one.


\chapter{Hypothesis testing}
\label{appx:hypothesis_testing}

As discussed in Chapter~\ref{chap:approximate_asymptotic},  hypothesis testing is  an important tool to analyze the smooth divergences and their asymptotic limit.
In this Appendix, we briefly overview the basic concepts of quantum hypothesis testing.
In Section~\ref{sec:hypothesis_testing_divergence}, we introduce the hypothesis testing divergence and discuss its relation to semidefinite programming.
In Section~\ref{sec:quantum_stein_lemma}, we discuss a most fundamental theorem regarding quantum hypothesis testing: the quantum Stein's lemma, which is another representation of the quantum relative  AEP discussed in Section~\ref{sec_condition_information}.

\

The basic task of quantum hypothesis testing can be stated as follows:
 we want to distinguish two quantum states $\hat \rho$ and $\hat \sigma$ with $\hat \sigma$ being the false null hypothesis, and  minimize the error probability of the second kind given by ${\rm tr} [\hat \sigma \hat Q]$ with $0 \leq \hat Q \leq \hat I$, while keeping ${\rm tr}[\hat \rho \hat Q] \geq \eta$ for a given $0 < \eta < 1$.

\section{Hypothesis testing divergence}
\label{sec:hypothesis_testing_divergence}

The divergence-like quantity related to the above-mentioned error probability is the hypothesis testing divergence~\cite{Faist2018,Dupuis2012,Wang2012}.
For quantum states $\hat \rho$, $\hat \sigma$ and $0 < \eta < 1$, it is defined as 
\begin{equation}
S_{\rm H}^\eta (\hat \rho \| \hat \sigma) := - \ln \left( \frac{1}{\eta} \min_{0 \leq \hat Q \leq \hat I,  {\rm tr}[\hat \rho \hat Q] \geq \eta } {\rm tr} [\hat \sigma \hat Q] \right).
\label{def_SH}
\end{equation}
Because the argument of the logarithm above can be rewritten as $\min_{0 \leq \hat Q' \leq \hat I / \eta,  {\rm tr}[\hat \rho \hat Q'] \geq 1 } {\rm tr} [\hat \sigma \hat Q']$ with $\hat Q' := \hat Q / \eta$,
we have
\begin{equation}
S_{\rm H}^\eta (\hat \rho \| \hat \sigma) \geq S_{\rm H}^{\eta'} (\hat \rho \| \hat \sigma)  \ \ \ \rm{for} \ \ \   \eta \leq \eta'.
\label{eta_SH}
\end{equation}

We can also  define $S_{\rm H}^\eta (\hat \rho \| \hat \sigma) $ for subnormalized states $\hat \rho$, $\hat \sigma$.
We note the scaling property:
\begin{equation}
S_{\rm H}^\eta (\hat \rho \| \hat \sigma / Z ) = S_{\rm H}^\eta (\hat \rho \| \hat \sigma ) + \ln Z.
\label{hypothesis_scaling}
\end{equation}
Moreover, if subnormalized states satisfy $\hat \sigma \leq \hat \sigma'$,
\begin{equation}
S_{\rm H}^\eta (\hat \rho \| \hat \sigma' ) \leq S_{\rm H}^\eta (\hat \rho \| \hat \sigma ).
\end{equation}

A particularly important property of the hypothesis testing divergences in our context is that $S_{\rm H}^\eta (\hat \rho \| \hat \sigma)$ with $\eta \simeq 0$ and $\eta \simeq 1$ are essentially equivalent to the smooth $\infty$- and $0$-divergences, respectively:

\begin{proposition}[Lemma 40 of  Ref.~\cite{Faist2018}; Proposition 1 of Ref.~\cite{Sagawa2019}]
For  $0 < \varepsilon < 1/2$,
\begin{equation}
 S_{\rm H}^{1-\varepsilon^2 / 6}(\hat \rho \| \hat \sigma ) - \ln \frac{1-\varepsilon^2 /6}{\varepsilon^2 / 6} \leq S_0^\varepsilon (\hat \rho \| \hat \sigma ) \leq S_{\rm H}^{1-\varepsilon} (\hat \rho \| \hat \sigma) - \ln (1-\varepsilon),
\end{equation}
\begin{equation}
S_{\rm H}^{2\varepsilon} (\hat \rho \| \hat \sigma ) -\ln 2 \leq S_\infty^\varepsilon (\hat \rho \| \hat \sigma ) \leq S_{\rm H}^{\varepsilon^2 / 2}(\hat \rho \| \hat \sigma ) - \ln (1 -  \varepsilon).
\label{Faist_prop2}
\end{equation}
\label{Faist_prop}
\end{proposition}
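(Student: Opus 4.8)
The plan is to prove the four inequalities by unwinding the variational definitions of $S_0^\varepsilon$, $S_\infty^\varepsilon$ and $S_{\rm H}^\eta$ and passing back and forth between optimal smoothing states and optimal hypothesis tests. Two tools will be used throughout: the identity $D(\hat\rho,\hat\tau)=\max_{0\le\hat Q\le\hat I}{\rm tr}[(\hat\rho-\hat\tau)\hat Q]$, which gives $|{\rm tr}[(\hat\rho-\hat\tau)\hat Q]|\le D(\hat\rho,\hat\tau)$ for every $0\le\hat Q\le\hat I$; and the gentle-measurement lemma in the form: if a projector $\hat P$ satisfies ${\rm tr}[\hat\rho\hat P]\ge 1-\delta$, then $\hat\tau:=\hat P\hat\rho\hat P/{\rm tr}[\hat P\hat\rho\hat P]$ obeys $D(\hat\tau,\hat\rho)\le\sqrt\delta+\delta/2$.

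The two ``easy'' bounds are short. For $S_0^\varepsilon(\hat\rho\|\hat\sigma)\le S_{\rm H}^{1-\varepsilon}(\hat\rho\|\hat\sigma)-\ln(1-\varepsilon)$: pick $\hat\tau\in B^\varepsilon(\hat\rho)$ achieving $S_0^\varepsilon$ and use its support projector $\hat P_{\hat\tau}$ as a test; from ${\rm tr}[\hat\tau\hat P_{\hat\tau}]=1$ and $D(\hat\rho,\hat\tau)\le\varepsilon$ one gets ${\rm tr}[\hat\rho\hat P_{\hat\tau}]\ge1-\varepsilon$, so $\hat P_{\hat\tau}$ is feasible for $S_{\rm H}^{1-\varepsilon}$ and ${\rm tr}[\hat\sigma\hat P_{\hat\tau}]=e^{-S_0^\varepsilon}$ bounds the minimum, which rearranges to the claim. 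For $S_{\rm H}^{2\varepsilon}(\hat\rho\|\hat\sigma)-\ln2\le S_\infty^\varepsilon(\hat\rho\|\hat\sigma)$: pick $\hat\tau\in B^\varepsilon(\hat\rho)$ achieving $S_\infty^\varepsilon$, so $\hat\tau\le\lambda\hat\sigma$ with $\lambda=e^{S_\infty^\varepsilon}$; for any feasible test $\hat Q$ (i.e. ${\rm tr}[\hat\rho\hat Q]\ge2\varepsilon$) one has ${\rm tr}[\hat\tau\hat Q]\ge{\rm tr}[\hat\rho\hat Q]-D(\hat\rho,\hat\tau)\ge\varepsilon$, hence ${\rm tr}[\hat\sigma\hat Q]\ge\lambda^{-1}{\rm tr}[\hat\tau\hat Q]\ge\varepsilon/\lambda$; minimizing over $\hat Q$ and inserting into the definition of $S_{\rm H}^{2\varepsilon}$ gives $S_{\rm H}^{2\varepsilon}\le\ln(2\lambda)$.

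The two remaining bounds carry the content. For $S_{\rm H}^{1-\varepsilon^2/6}(\hat\rho\|\hat\sigma)-\ln\frac{1-\varepsilon^2/6}{\varepsilon^2/6}\le S_0^\varepsilon(\hat\rho\|\hat\sigma)$: take the optimal test $\hat Q$ for $S_{\rm H}^{1-\varepsilon^2/6}$ and let $\hat P$ be the spectral projector of $\hat Q$ onto eigenvalues exceeding $1/2$; since $\hat P$ and $\hat Q$ commute, $\hat P\le2\hat Q$ and $\hat I-\hat P\le2(\hat I-\hat Q)$, so ${\rm tr}[\hat\sigma\hat P]\le2\,{\rm tr}[\hat\sigma\hat Q]$ and ${\rm tr}[\hat\rho(\hat I-\hat P)]\le2(1-{\rm tr}[\hat\rho\hat Q])\le\varepsilon^2/3$. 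The gentle-measurement lemma then yields $\hat\tau:=\hat P\hat\rho\hat P/{\rm tr}[\hat P\hat\rho\hat P]\in B^\varepsilon(\hat\rho)$ with ${\rm supp}[\hat\tau]$ inside the range of $\hat P$, so ${\rm tr}[\hat P_{\hat\tau}\hat\sigma]\le{\rm tr}[\hat P\hat\sigma]\le2(1-\varepsilon^2/6)e^{-S_{\rm H}^{1-\varepsilon^2/6}}$; taking $-\ln$ and using $\ln2\le\ln(6/\varepsilon^2)$ for $\varepsilon<1/2$ gives $S_0^\varepsilon\ge S_0(\hat\tau\|\hat\sigma)\ge S_{\rm H}^{1-\varepsilon^2/6}-\ln\frac{1-\varepsilon^2/6}{\varepsilon^2/6}$. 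For $S_\infty^\varepsilon(\hat\rho\|\hat\sigma)\le S_{\rm H}^{\varepsilon^2/2}(\hat\rho\|\hat\sigma)-\ln(1-\varepsilon)$: use the Neyman--Pearson structure of the hypothesis test, i.e. the thresholding projector $\hat P_\mu:=\hat P\{\mu\hat\sigma-\hat\rho\ge0\}$ with $\mu$ tuned so that ${\rm tr}[\hat\rho\hat P_\mu]\ge1-\varepsilon^2/2$ while its complement is (essentially) optimal for the $\eta=\varepsilon^2/2$ testing problem. On the range of $\hat P_\mu$ one has $\hat P_\mu\hat\rho\hat P_\mu\le\mu\hat\sigma$, so $\hat\tau:=\hat P_\mu\hat\rho\hat P_\mu/{\rm tr}[\hat P_\mu\hat\rho\hat P_\mu]\in B^\varepsilon(\hat\rho)$ satisfies $\hat\tau\le\frac{\mu}{1-\varepsilon^2/2}\hat\sigma$; meanwhile the complementary projector, supported where $\hat\rho>\mu\hat\sigma$, gives ${\rm tr}[\hat\sigma(\hat I-\hat P_\mu)]\le\mu^{-1}{\rm tr}[\hat\rho(\hat I-\hat P_\mu)]$, which upon identification with a (feasible) test in the minimization defining $S_{\rm H}^{\varepsilon^2/2}$ forces $\mu\le e^{S_{\rm H}^{\varepsilon^2/2}}$; combining with $1-\varepsilon^2/2\ge1-\varepsilon$ gives $S_\infty^\varepsilon\le S_\infty(\hat\tau\|\hat\sigma)\le\ln\mu-\ln(1-\varepsilon)\le S_{\rm H}^{\varepsilon^2/2}-\ln(1-\varepsilon)$.

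The \textbf{main obstacle} is making the last two arguments fully rigorous around the eigenvalue crossings of $\mu\hat\sigma-\hat\rho$ (so that ${\rm tr}[\hat\rho\hat P_\mu]$ can be pinned at the prescribed level and the complementary projector is genuinely a feasible test), together with tracking the gentle-measurement constants carefully enough that the specific numerical factors $1/6$ and $1/2$ in the exponents emerge exactly as stated rather than some nearby value. These points are standard but fiddly; everything else is a one-line manipulation of the variational formulas.
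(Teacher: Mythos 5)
Three of your four bounds are fine as written, and they do capture the standard route: the upper bound on $S_0^\varepsilon$ via the support projector of the optimal smoothing state, the lower bound on $S_\infty^\varepsilon$ via the stability $|{\rm tr}[(\hat\rho-\hat\tau)\hat Q]|\le D(\hat\rho,\hat\tau)$, and the lower bound on $S_0^\varepsilon$ via the level-$1/2$ projector of the optimal test followed by gentle measurement all go through with the paper's definitions (with $\delta=\varepsilon^2/3$ one gets $D\le\varepsilon/\sqrt3+\varepsilon^2/6\le\varepsilon$, and the resulting constant is even slightly stronger than the stated $\ln\frac{1-\varepsilon^2/6}{\varepsilon^2/6}$). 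Note the paper itself gives no proof of this proposition—it defers to Lemma 40 of Faist--Renner and Proposition 1 of Sagawa \textit{et al.}—so these parts stand on their own.

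The fourth bound, $S_\infty^\varepsilon(\hat\rho\|\hat\sigma)\le S_{\rm H}^{\varepsilon^2/2}(\hat\rho\|\hat\sigma)-\ln(1-\varepsilon)$, has a genuine gap, and it is not the "fiddly" eigenvalue-crossing issue you flag. The step "on the range of $\hat P_\mu$ one has $\hat P_\mu\hat\rho\hat P_\mu\le\mu\hat\sigma$" is false in general: the spectral projector only gives $\hat P_\mu\hat\rho\hat P_\mu\le\mu\,\hat P_\mu\hat\sigma\hat P_\mu$, and $\hat P_\mu\hat\sigma\hat P_\mu\not\le\hat\sigma$ (pinching can increase an operator in off-support directions). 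Concretely, take a qubit with $\hat\rho=|0\rangle\langle0|$, $\hat\sigma=(1-\delta)|+\rangle\langle+|+\delta|-\rangle\langle-|$ with small $\delta>0$, and $\mu=1$: the positive eigenvector $|v\rangle$ of $\mu\hat\sigma-\hat\rho$ has overlaps with both $|0\rangle$ and $|-\rangle$ bounded away from zero, so $\langle-|\hat P_\mu\hat\rho\hat P_\mu|-\rangle=|\langle v|0\rangle|^2|\langle v|-\rangle|^2$ stays of order one while $\mu\langle-|\hat\sigma|-\rangle=\delta\to0$; hence $\hat\tau\le\frac{\mu}{1-\varepsilon^2/2}\hat\sigma$ does not follow from your construction. (A secondary problem in the same argument: with the complementary projector merely feasible at level $\eta=\varepsilon^2/2$ you only obtain $\mu\le\eta^{-1}e^{S_{\rm H}^{\eta}}$, which carries an unwanted factor $2/\varepsilon^2$ unless the acceptance probability is pinned exactly at $\eta$, which generally needs a randomized test.) This is precisely why the cited proofs, and the heuristic the paper sketches after the proposition, route this direction through the SDP dual of $S_{\rm H}^\eta$, Eq.~(\ref{strong_duality}): from a dual-optimal pair $(\mu,\hat X)$ with $\mu\hat\rho\le\hat\sigma+\hat X$ and ${\rm tr}[\hat X]$ small, one smooths by conjugation with the contraction $\hat G:=\hat\sigma^{1/2}(\hat\sigma+\hat X)^{-1/2}$, so that $\mu\,\hat G\hat\rho\hat G^\dagger\le\hat G(\hat\sigma+\hat X)\hat G^\dagger=\hat\sigma$ holds by construction, and the closeness of $\hat G\hat\rho\hat G^\dagger$ to $\hat\rho$ is controlled by a fidelity (gentle-operator) estimate; the $\varepsilon^2/2$ in the exponent is exactly the fidelity-to-trace-distance conversion. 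Replacing your projector step by such a contraction argument (or by the dual SDP directly) is what is needed to close the proof.
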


See Ref.~\cite{Faist2018} for the proof.
We note that Ref.~\cite{Faist2018,Sagawa2019} adopted different ways of smoothing, and thus inequality~(\ref{Faist_prop2}) is slightly different (see also Section~\ref{sec:misc}).

We next discuss the intuitive meaning of  Proposition~\ref{Faist_prop} in a non-rigorous manner.
For the case of $S_0^\varepsilon (\hat \rho \| \hat \sigma )$, by letting $\eta \simeq 1$, we have
\begin{equation}
S_{\rm H}^{\eta \simeq 1} (\hat \rho \| \hat \sigma) \simeq - \ln \left(  \min_{0 \leq \hat Q \leq \hat I, \ {\rm tr}[\hat \rho \hat Q] \simeq  1  } {\rm tr} [\hat \sigma \hat Q] \right) \simeq - \ln \left( {\rm tr} [\hat \sigma \hat P_{\hat \rho}]  \right) \simeq S^{\varepsilon \simeq 0}_0  (\hat \rho \| \hat \sigma ),
\label{SH_S0_1}
\end{equation}
where we chose $\hat Q$ as the projection on the support of $\hat \rho$, written as $\hat P_{\hat \rho}$.
See also the proof of Theorem~\ref{thm:asymp0_q}, where we discussed that  $S_0 (\hat \rho \| \hat \sigma )$ can be identified with $S_{\rm H}^{\eta = 1}  (\hat \rho \| \hat \sigma )$.

For the case of $S_\infty^\varepsilon (\hat \rho \| \hat \sigma )$, we invoke the dual expression of Eq.~(\ref{def_SH}), which is given by Eq.~(\ref{strong_duality}) at the end of this section.  By letting $\eta \simeq 0$, $ - {\rm tr}[\hat X] / \eta$ in Eq.~(\ref{strong_duality}) would take a very large negative value, and thus $ {\rm tr}[\hat X]  \simeq 0$ should hold for the maximum. Then,
\begin{equation}
S_{\rm H}^{\eta \simeq 0} (\hat \rho \| \hat \sigma) 
\simeq - \ln \left( \max_{\mu \geq 0, \mu \hat \rho \lessapprox \hat \sigma} \mu \right)  \simeq \ln \| \hat \sigma^{-1/2} \hat \rho \hat \sigma^{-1/2} \|_\infty \simeq S_\infty^{\varepsilon \simeq 0} (\hat \rho \| \hat \sigma ).
\end{equation}

\

We note some general properties of the hypothesis testing divergences.
First,  $S_{\rm H}^\eta (\hat \rho \| \hat \sigma)$ satisfies the monotonicity in the following form:

\begin{proposition}
For $0 < \eta < 1$  and any positive and trace-nonincreasing  map $\mathcal E$, 
\begin{equation}
S_{\rm H}^\eta (\hat \rho \| \hat \sigma) \geq S_{\rm H}^\eta (\mathcal E(\hat \rho ) \| \mathcal E( \hat \sigma ) ).
\end{equation}
\label{prop:H_monotone}
\end{proposition}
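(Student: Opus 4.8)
The plan is to prove the monotonicity of the hypothesis testing divergence $S_{\rm H}^\eta(\hat\rho\|\hat\sigma)$ directly from its definition by transporting a feasible test for the output pair back to a feasible test for the input pair, using the Hermitian adjoint $\mathcal E^\dagger$. Recall from Section~\ref{sec:quantum_state} that $\mathcal E$ is positive if and only if $\mathcal E^\dagger$ is positive, and $\mathcal E$ is trace-nonincreasing (i.e.\ ${\rm tr}[\mathcal E(\hat X)]\le{\rm tr}[\hat X]$ for all $\hat X\ge 0$) if and only if $\mathcal E^\dagger$ is \emph{sub-unital}, meaning $\mathcal E^\dagger(\hat I)\le\hat I$; this last equivalence follows by taking $\hat Y=\hat I$ in the defining relation $\langle\hat Y,\mathcal E(\hat X)\rangle_{\rm HS}=\langle\mathcal E^\dagger(\hat Y),\hat X\rangle_{\rm HS}$, exactly as in the TP/unital argument given in Section~\ref{sec:quantum_state}.

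First I would take an arbitrary optimal (or near-optimal) test $\hat Q'$ for the output problem, i.e.\ $0\le\hat Q'\le\hat I$ with ${\rm tr}[\mathcal E(\hat\rho)\hat Q']\ge\eta$ and ${\rm tr}[\mathcal E(\hat\sigma)\hat Q']=\eta\,e^{-S_{\rm H}^\eta(\mathcal E(\hat\rho)\|\mathcal E(\hat\sigma))}$. Then I would set $\hat Q:=\mathcal E^\dagger(\hat Q')$ and check it is feasible for the input problem: positivity of $\mathcal E^\dagger$ gives $\hat Q\ge 0$, and sub-unitality together with $\hat Q'\le\hat I$ gives $\hat Q=\mathcal E^\dagger(\hat Q')\le\mathcal E^\dagger(\hat I)\le\hat I$ (here positivity of $\mathcal E^\dagger$ is used to preserve the inequality $\hat Q'\le\hat I$). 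The key identity is then
\begin{equation}
{\rm tr}[\hat\rho\,\hat Q]={\rm tr}[\hat\rho\,\mathcal E^\dagger(\hat Q')]=\langle\hat\rho,\mathcal E^\dagger(\hat Q')\rangle_{\rm HS}=\langle\mathcal E(\hat\rho),\hat Q'\rangle_{\rm HS}={\rm tr}[\mathcal E(\hat\rho)\,\hat Q']\ge\eta,
\end{equation}
and likewise ${\rm tr}[\hat\sigma\,\hat Q]={\rm tr}[\mathcal E(\hat\sigma)\,\hat Q']$. (Here $\hat\rho,\hat\sigma,\hat Q$ are Hermitian, so the Hilbert--Schmidt pairing is just the ordinary trace of the product, and the adjoint relation applies.) Hence $\hat Q$ is a feasible test for the input pair achieving objective value ${\rm tr}[\mathcal E(\hat\sigma)\hat Q']$, so the input minimum is no larger: $\min_{\hat Q}{\rm tr}[\hat\sigma\hat Q]\le{\rm tr}[\mathcal E(\hat\sigma)\hat Q']$. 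Taking $-\ln(\tfrac1\eta\,\cdot\,)$ of both sides and using the definition~(\ref{def_SH}) reverses the inequality to give $S_{\rm H}^\eta(\hat\rho\|\hat\sigma)\ge S_{\rm H}^\eta(\mathcal E(\hat\rho)\|\mathcal E(\hat\sigma))$.

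The argument is essentially routine; the only point requiring a little care is the feasibility check $\hat Q\le\hat I$, where one must invoke \emph{both} that $\mathcal E^\dagger$ is positive (to get $\mathcal E^\dagger(\hat Q')\le\mathcal E^\dagger(\hat I)$ from $\hat Q'\le\hat I$) and that $\mathcal E$ being trace-nonincreasing makes $\mathcal E^\dagger$ sub-unital ($\mathcal E^\dagger(\hat I)\le\hat I$). A secondary subtlety, worth a remark, is that the feasible set $\{\hat Q:0\le\hat Q\le\hat I,\ {\rm tr}[\hat\rho\hat Q]\ge\eta\}$ is compact, so the minimum in~(\ref{def_SH}) is attained and one may genuinely take $\hat Q'$ optimal rather than near-optimal; if one prefers to avoid this, the same chain of inequalities works verbatim with an $\varepsilon$-optimal $\hat Q'$ and a limit $\varepsilon\to0$. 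No use of semidefinite-programming duality (the dual expression~(\ref{strong_duality})) is needed for this direction — it is purely a primal, test-transport argument.
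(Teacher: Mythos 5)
Your proposal is correct and follows essentially the same route as the paper's proof: pulling the output test back through the adjoint $\mathcal E^\dagger$, using that positivity of $\mathcal E$ makes $\mathcal E^\dagger$ positive and trace-nonincreasingness makes it sub-unital, so that $0\le\hat Q'\le\hat I$ implies $0\le\mathcal E^\dagger(\hat Q')\le\hat I$, and then comparing the feasible sets of the two minimizations. The only cosmetic difference is that you work with an explicit (near-)optimal test while the paper rewrites the minimization directly; the substance is identical.
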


\begin{proof}
We note that the trace-nonincreasing map satisfies ${\rm tr}[\mathcal E (\hat \rho )] \leq {\rm tr}[\hat \rho]$ for any $\hat \rho \geq 0$.
Its adjoint is sub-unital, i.e., $\mathcal E^\dagger (\hat I ) \leq \hat I$.
Then  we  have
\begin{eqnarray}
\min_{0 \leq \hat Q \leq \hat I,  {\rm tr}[\mathcal E(\hat \rho ) \hat Q] \geq \eta } {\rm tr} [\mathcal E(\hat \sigma ) \hat Q] 
&=& \min_{0 \leq \hat Q \leq \hat I,  {\rm tr}[\hat \rho \mathcal E^\dagger ( \hat Q )] \geq \eta } {\rm tr} [\hat \sigma \mathcal E^\dagger (\hat Q )] \\
&\geq& \min_{0 \leq \hat Q' \leq \hat I,  {\rm tr}[\hat \rho \hat Q'] \geq \eta } {\rm tr} [\hat \sigma \hat Q'],
\end{eqnarray}
where we used that $0 \leq \hat Q \leq \hat I$ implies $0 \leq \mathcal E^\dagger (\hat Q ) \leq \mathcal E^\dagger (\hat I ) \leq \hat I$ for positive and sub-unital $\mathcal E^\dagger$ to obtain the second line.
$\Box$
\end{proof}


\begin{lemma}[Inequality (22) of~\cite{Sagawa2019}]
Let $\varepsilon \geq 0$ and $D( \hat \rho, \hat \tau ) \leq \varepsilon$.  For any $0 < \eta < 1 - \varepsilon$,
\begin{equation}
S_{\rm H}^{\eta + \varepsilon } (\hat \tau \| \hat \sigma ) \leq S_{\rm H}^\eta (\hat \rho \| \hat \sigma ) + \ln \left( \frac{\eta + \varepsilon }{\eta} \right).
\label{Sagawa_Faist_ineq_H}
\end{equation}
\label{lemma:Sagawa_Faist_ineq}
\end{lemma}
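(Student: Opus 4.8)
The plan is to reduce the claimed logarithmic inequality to a comparison of the two minimal ``error probabilities of the second kind'' appearing in the definition~(\ref{def_SH}), and then to observe that the feasible region for the $\hat\tau$-problem sits inside the feasible region for the $\hat\rho$-problem, so its minimum can only be larger. Concretely, write
\begin{equation}
m_{\hat\tau} := \min_{0\le \hat Q\le \hat I,\ {\rm tr}[\hat\tau\hat Q]\ge \eta+\varepsilon}{\rm tr}[\hat\sigma\hat Q],
\qquad
m_{\hat\rho} := \min_{0\le \hat Q\le \hat I,\ {\rm tr}[\hat\rho\hat Q]\ge \eta}{\rm tr}[\hat\sigma\hat Q].
\end{equation}
Unfolding~(\ref{def_SH}), the left-hand side of~(\ref{Sagawa_Faist_ineq_H}) equals $\ln(\eta+\varepsilon)-\ln m_{\hat\tau}$, while the right-hand side equals $\big(\ln\eta-\ln m_{\hat\rho}\big)+\ln\!\big((\eta+\varepsilon)/\eta\big)=\ln(\eta+\varepsilon)-\ln m_{\hat\rho}$. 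Hence~(\ref{Sagawa_Faist_ineq_H}) is \emph{equivalent} to $m_{\hat\tau}\ge m_{\hat\rho}$. (When $m_{\hat\tau}=0$ we argue below that $m_{\hat\rho}=0$ as well, so both sides are $+\infty$ and the inequality holds trivially; otherwise all logarithms are finite.)

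\textbf{Key step: the trace-distance bound.} Next I would invoke the Jordan--Hahn decomposition $\hat\rho-\hat\tau=\hat\Delta_+-\hat\Delta_-$ with $\hat\Delta_\pm\ge 0$ of mutually orthogonal support. Since $\hat\rho,\hat\tau$ are normalized, ${\rm tr}[\hat\Delta_+]={\rm tr}[\hat\Delta_-]=\tfrac12\|\hat\rho-\hat\tau\|_1=D(\hat\rho,\hat\tau)\le\varepsilon$. Therefore, for every $\hat Q$ with $0\le\hat Q\le\hat I$,
\begin{equation}
{\rm tr}[\hat\tau\hat Q]-{\rm tr}[\hat\rho\hat Q]={\rm tr}[\hat\Delta_-\hat Q]-{\rm tr}[\hat\Delta_+\hat Q]\le {\rm tr}[\hat\Delta_-\hat Q]\le{\rm tr}[\hat\Delta_-]\le\varepsilon .
\end{equation}
This is just the standard variational bound for the trace distance against effect operators, and it is the only genuine ingredient.

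\textbf{Conclusion.} Now take any $\hat Q$ feasible for $m_{\hat\tau}$, i.e.\ $0\le\hat Q\le\hat I$ and ${\rm tr}[\hat\tau\hat Q]\ge\eta+\varepsilon$. By the displayed bound, ${\rm tr}[\hat\rho\hat Q]\ge{\rm tr}[\hat\tau\hat Q]-\varepsilon\ge\eta$, so the same $\hat Q$ is feasible for $m_{\hat\rho}$; consequently ${\rm tr}[\hat\sigma\hat Q]\ge m_{\hat\rho}$. Minimizing the left side over all $\hat\tau$-feasible $\hat Q$ gives $m_{\hat\tau}\ge m_{\hat\rho}$, which is exactly the reduced inequality. (For the degenerate case: if $m_{\hat\tau}=0$ there is a feasible $\hat Q$ with ${\rm tr}[\hat\sigma\hat Q]=0$ and ${\rm tr}[\hat\tau\hat Q]\ge\eta+\varepsilon>0$; the same $\hat Q$ is feasible for $m_{\hat\rho}$, forcing $m_{\hat\rho}=0$.) This completes the proof. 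I do not anticipate any real obstacle here—the argument is a one-line feasible-set inclusion once the trace-distance estimate is in place; the only point to be slightly careful about is matching the normalization convention so that the bound has $\varepsilon$ rather than $2\varepsilon$, which is why I spell out the Jordan--Hahn step.
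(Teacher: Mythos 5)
Your proposal is correct and follows essentially the same route as the paper: the paper also uses the trace-distance fact that one can write $\hat \rho + \hat \Delta \geq \hat \tau$ with $\hat \Delta \geq 0$, ${\rm tr}[\hat \Delta] \leq \varepsilon$ (your Jordan--Hahn negative part), deduces ${\rm tr}[\hat \tau \hat Q] \leq {\rm tr}[\hat \rho \hat Q] + \varepsilon$ for every effect $\hat Q$, and then notes that the optimizer for $S_{\rm H}^{\eta+\varepsilon}(\hat \tau \| \hat \sigma)$ is feasible for the $\hat \rho$-minimization, which is exactly your feasible-set inclusion with the prefactor $(\eta+\varepsilon)/\eta$ accounting for the normalization. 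Your explicit reduction to $m_{\hat\tau} \geq m_{\hat\rho}$ and the treatment of the degenerate case are just slightly more spelled-out versions of the same argument.
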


\begin{proof}
There exists $\hat \Delta$ such that $\hat \rho + \hat \Delta \geq \hat \tau$, $\hat \Delta \geq 0$, and ${\rm tr}[\hat \Delta] \leq \varepsilon$.
If $0 \leq \hat Q \leq \hat I$, we have ${\rm tr}[\hat \tau \hat Q] \leq {\rm tr}[(\hat \rho + \hat \Delta ) \hat Q] \leq  {\rm tr}[\hat \rho \hat Q] + \varepsilon$.
Now let $\hat Q$ be the optimal choice for the left-hand side of (\ref{Sagawa_Faist_ineq_H}).
Then, $ {\rm tr}[\hat \rho \hat Q] \geq \eta$ holds and thus
\begin{equation}
\frac{1}{\eta + \varepsilon} {\rm tr} [\hat \sigma \hat Q]
\geq
\frac{\eta}{\eta + \varepsilon}\frac{1}{\eta} \min_{0 \leq \hat Q' \leq \hat I, {\rm tr}[\hat \rho \hat Q'] \geq \eta } {\rm tr} [\hat \sigma \hat Q'].
\end{equation}
$\Box$
\end{proof}

\begin{lemma}
Let $\hat \rho$, $\hat \sigma_k$ be subnormalized states and define $\hat \sigma := \sum_k \hat \sigma_k \otimes  r_k | k \rangle \langle k |$ with $\{ | k \rangle \} $ being orthonormal and $r_k> 0$.
Then,
\begin{equation}
S_{\rm H}^\eta ( \hat \rho \otimes | k \rangle \langle k | \| \hat \sigma ) = S_{\rm H}^\eta ( \hat \rho \| \hat \sigma_k ) - \ln r_k.
\label{Sagawa_Faist_eq}
\end{equation}
\label{lemma:Sagawa_Faist_eq}
\end{lemma}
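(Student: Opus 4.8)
Looking at the statement, we need to prove
\[
S_{\rm H}^\eta ( \hat \rho \otimes | k \rangle \langle k | \| \hat \sigma ) = S_{\rm H}^\eta ( \hat \rho \| \hat \sigma_k ) - \ln r_k,
\]
where $\hat \sigma := \sum_l \hat \sigma_l \otimes r_l | l \rangle \langle l |$. My plan is to compute both sides directly from the variational definition~(\ref{def_SH}) of the hypothesis testing divergence and match the optimizations.

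\textbf{Approach.} First I would unfold the definition: $S_{\rm H}^\eta(\hat\rho\otimes|k\rangle\langle k| \,\|\, \hat\sigma) = -\ln\big(\tfrac{1}{\eta}\min_{\hat Q}{\rm tr}[\hat\sigma\hat Q]\big)$, where the minimum runs over $0\le\hat Q\le\hat I$ (on the composite Hilbert space) with ${\rm tr}[(\hat\rho\otimes|k\rangle\langle k|)\hat Q]\ge\eta$. The key observation is that the constraint only ``sees'' $\hat Q$ through its $(k,k)$-block, namely the operator $\hat Q_k := \langle k|\hat Q|k\rangle$ on the system space: indeed ${\rm tr}[(\hat\rho\otimes|k\rangle\langle k|)\hat Q] = {\rm tr}[\hat\rho\,\hat Q_k]$. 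Similarly, because $\hat\sigma$ is block-diagonal in the $\{|l\rangle\}$ basis, ${\rm tr}[\hat\sigma\hat Q] = \sum_l r_l\,{\rm tr}[\hat\sigma_l\,\hat Q_l]$ with $\hat Q_l := \langle l|\hat Q|l\rangle$. So the objective is a sum of nonnegative terms (each $\hat\sigma_l\ge0$ and $0\le\hat Q_l\le\hat I$), and only the $l=k$ term is constrained.

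\textbf{Key steps in order.} (i) Reduce to blocks as above. (ii) Argue that at the optimum we may take $\hat Q_l = 0$ for all $l\ne k$: setting those blocks to zero cannot violate the single constraint (which involves only $\hat Q_k$) and can only decrease the objective $\sum_l r_l\,{\rm tr}[\hat\sigma_l\hat Q_l]$ since $r_l\,{\rm tr}[\hat\sigma_l\hat Q_l]\ge0$; one also needs that a $\hat Q$ with all off-$k$ blocks zero and a valid $\hat Q_k$ in the $(k,k)$-block can be realized as a legitimate $0\le\hat Q\le\hat I$, which is immediate since $\hat Q = \hat Q_k\otimes|k\rangle\langle k|$ satisfies $0\le\hat Q\le\hat I$ whenever $0\le\hat Q_k\le\hat I$. (iii) Conclude that the minimum over the composite equals $r_k\min\{{\rm tr}[\hat\sigma_k\hat Q_k] : 0\le\hat Q_k\le\hat I,\ {\rm tr}[\hat\rho\hat Q_k]\ge\eta\}$. (iv) Take $-\ln(\cdot/\eta)$ of both sides and use $-\ln(r_k\cdot x) = -\ln x - \ln r_k$ to obtain exactly $S_{\rm H}^\eta(\hat\rho\|\hat\sigma_k) - \ln r_k$.

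\textbf{Main obstacle.} There is no serious analytic difficulty here; the only thing to be careful about is the reduction in step (ii) — making sure that restricting to $\hat Q = \hat Q_k\otimes|k\rangle\langle k|$ genuinely does not lose optimality and that feasibility is preserved both ways (any feasible composite $\hat Q$ yields a feasible $\hat Q_k$, and any feasible $\hat Q_k$ yields a feasible composite $\hat Q$). A minor point is that the minimum is attained (the feasible set is compact and the objective continuous), so the manipulations with $\min$ are legitimate; this is the same fact implicitly used throughout the appendix. Once the block structure is exploited, everything reduces to the scaling property already recorded in~(\ref{hypothesis_scaling}) applied blockwise, so I expect the whole proof to be three or four lines.
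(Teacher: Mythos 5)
Your proof is correct and follows essentially the same route as the paper's: one direction restricts an optimal composite test $\hat Q$ to its block $\langle k|\hat Q|k\rangle$ (using that the other blocks contribute nonnegatively to ${\rm tr}[\hat\sigma\hat Q]$), and the other direction embeds an optimal $\hat Q'$ as $\hat Q'\otimes|k\rangle\langle k|$, exactly as in the paper. The only difference is presentational — you phrase it as a single block-diagonal reduction of the optimization rather than two explicit inequalities — so nothing further is needed.
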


\begin{proof}
Let $\hat Q$ be the optimal choice for  $S_{\rm H}^\eta ( \hat \rho \otimes | k \rangle \langle k | \| \hat \sigma  )$.
 Then $\langle k | \hat Q | k \rangle$ is a candidate for $S_{\rm H}^\eta ( \hat \rho \| \hat \sigma_k )$ because ${\rm tr}[\hat \rho \otimes  | k \rangle \langle k | \hat Q]  = {\rm tr}[\hat \rho \langle k | \hat Q | k \rangle]$.  
From this and ${\rm tr}[\hat \sigma \hat Q ] \geq {\rm tr}[ \hat \sigma_k \langle k | \hat Q |k \rangle ] r_k$, we have $S_{\rm H}^\eta ( \hat \rho \otimes | k \rangle \langle k | \| \hat \sigma ) \leq S_{\rm H}^\eta ( \hat \rho \| \hat \sigma_k ) - \ln r_k$.

Conversely, let $\hat Q'$ be the optimal choice for $S_{\rm H}^\eta ( \hat \rho \| \hat \sigma_k )$.
Then $\hat Q' \otimes | k \rangle \langle k |$ is a candidate for  $S_{\rm H}^\eta ( \hat \rho \otimes | k \rangle \langle k | \| \hat \sigma )$.
From this and ${\rm tr}[\hat \sigma_k \hat Q'] = {\rm tr} [\hat \sigma  \hat Q' \otimes | k \rangle \langle k  |]r_k^{-1} $, we obtain $S_{\rm H}^\eta ( \hat \rho \| \hat \sigma_k ) \leq S_{\rm H}^\eta ( \hat \rho \otimes | k \rangle \langle k | \| \hat \sigma  )  + \ln r_k $. 
$\Box$
\end{proof}

We note that the above lemma also holds for $S_0$ and $S_\infty$.
However, it does not hold for $S_0^\varepsilon$ and $S_\infty^\varepsilon$ with $\varepsilon > 0$, which is the reason why we used the hypothesis testing divergence in Theorem~\ref{thm:hypothesis_testing_work_ap} instead of directly addressing $S_0^\varepsilon$ and $S_\infty^\varepsilon$.

\

We next briefly overview  semidefinite programing  in line with Refs.~\cite{Faist2018,Dupuis2012,Watrous2009}, from which we obtain the dual expression of the hypothesis testing divergence.
(See Ref.~\cite{Boyd} for a comprehensive textbook of convex optimization.)
We note that our terminologies ``primal'' and ``dual'' are the same as in Ref.~\cite{Watrous2009}.
On the other hand,  ``dual'' and ``primal'' are exchanged in some papers (e.g., Ref.~\cite{Dupuis2012}), where the standard hypothesis testing becomes the ``primal'' program.

Let $\hat C$ and $\hat D$ be Hermitian matrices and $\mathcal E$ be a Hermitian-preserving map.
The primal program is the optimization problem given by
\begin{equation}
\sup_{\hat X \geq 0,  \mathcal E(\hat X) \leq \hat D }{\rm tr}[\hat C \hat X ]  =: \alpha,
\end{equation}
while the dual program is given by
\begin{equation}
\inf_{\hat Y \geq 0, \mathcal E^\dagger (\hat Y) \geq \hat C} {\rm tr} [\hat D \hat Y] =: \beta.
\end{equation}
Without any further assumption, we have
\begin{equation}
\alpha \leq \beta,
\label{weak_duality}
\end{equation}
which is called the weak duality.  
The proof is straightforward: For $\hat X \geq 0$ with $\mathcal E(\hat X) \leq \hat D$ and $\hat Y \geq 0$ with $\mathcal E^\dagger (\hat Y) \geq \hat C$, we have
\begin{equation}
{\rm tr}[\hat C \hat X ] \leq {\rm tr}[\mathcal E^\dagger (\hat Y) \hat X] = {\rm tr} [\hat Y \mathcal E (\hat X)] \leq  {\rm tr} [\hat Y \hat D],
\end{equation}
and then take the supremum over $\hat X$ and the infimum over $\hat Y$.

Moreover, it is known that the \textit{strong duality} holds under certain conditions, such as  the Slater's conditions (e.g., Theorem 2.2 of \cite{Watrous2009}):
\begin{itemize}
\item If $\beta$ is finite and there exists $\hat X>0$ such that $ \mathcal E (\hat X ) < \hat D$, then $\alpha = \beta$ holds and there exists $\hat Y$ such that $\hat Y \geq 0$, $\mathcal E^\dagger (\hat Y ) \geq \hat C$ and  ${\rm tr} [\hat D \hat Y] = \beta$.
\item  If $\alpha$ is finite and there exists $\hat Y>0$ such that $ \mathcal E^\dagger (\hat Y ) > \hat C$, then $\alpha = \beta$ holds and there exists $\hat X$ such that $\hat X \geq 0$, $\mathcal E (\hat X ) \leq \hat D$ and  ${\rm tr} [\hat C \hat X] = \alpha$.
\end{itemize}

Our hypothesis testing is a dual program in the above definition, where $\hat D := \hat \sigma$, $\hat Y := \hat Q$,  and
\begin{eqnarray}
\mathcal E^\dagger(\hat Y) := \left[
\begin{array}{cc}
-\hat Y & 0 \\
0 & {\rm tr}[\hat \rho  \hat Y]
\end{array}
\right], \ \ 
\hat C := \left[
\begin{array}{cc}
-\hat I & 0 \\
0 & \eta
\end{array}
\right].
\end{eqnarray}
We note that $e^{-S_{\rm H}^\eta (\hat \rho \| \hat \sigma)} = \eta^{-1} \beta$.
The corresponding primal program is given by
\begin{equation}
\mathcal E( \hat X) = \mu \hat \rho - \hat X_{11}, 
\ \ {\rm where} \ \
\hat X =
\left[
\begin{array}{cc}
\hat X_{11} & \hat X_{12} \\
\hat X_{21} & \mu
\end{array}
\right] 
\ \ {\rm with} \ \mu \in \mathbb R.
\end{equation}
The condition $\mathcal E(\hat X) \leq \hat D$ then gives $\mu \hat \rho - \hat X_{11} \leq \hat \sigma$, and ${\rm tr}[\hat C \hat X] = \eta \mu - {\rm tr}[X_{11}]$.
From the weak duality (\ref{weak_duality}), we have  $e^{-S_{\rm H}^\eta (\hat \rho \| \hat \sigma)} \geq \eta^{-1} \alpha$.
Moreover, it is easy to see that there exists $\hat Y > 0$ such that $\hat Y < \hat I$ and ${\rm tr}[\hat \rho \hat Y] > \eta$.
Thus, the condition of the strong duality is satisfied, and we have  $e^{-S_{\rm H}^\eta (\hat \rho \| \hat \sigma)} = \eta^{-1} \alpha$  and we can replace sup by max for $\alpha$.
Then, by rewriting $\hat X_{11}$ as just $\hat X$ and noting that $\hat X_{12}$ and $\hat X_{21}$ are irrelevant to the optimization, we obtain ~\cite{Faist2018,Dupuis2012}
\begin{equation}
S_{\rm H}^\eta (\hat \rho \| \hat \sigma ) = - \ln \left( \max_{\mu \geq 0,  \hat X \geq 0,  \mu \hat \rho \leq \hat \sigma + \hat X} \left\{ \mu - \frac{{\rm tr}[\hat  X]}{\eta } \right\}  \right).
\label{strong_duality}
\end{equation}

\section{Quantum Stein's lemma}
\label{sec:quantum_stein_lemma}

We consider the asymptotic behavior of the hypothesis testing divergence, which is characterized by the quantum Stein's lemma.
See also Appendix~\ref{app_classical} for the classical Stein's lemma.

First, we define the hypothesis testing divergence rate for sequences $\widehat{P} = \{ \hat \rho_n \}_{n=1}^\infty$ and $\widehat{\Sigma} = \{ \hat \sigma_n \}_{n=1}^\infty$ of quantum states:
\begin{equation}
S_{\rm H}^\eta (\widehat{P} \| \widehat{\Sigma}) := \lim_{n \to \infty}\frac{1}{n} S_{\rm H}^\eta (\hat \rho_n \| \hat \sigma_n).
\end{equation}
We note that the limit does not necessarily exist, while we can always define the upper and lower limits, which are equivalent to the lower and upper spectral divergence rates introduced in Section~\ref{sec:asymptotic}. In fact, from  Proposition \ref{Faist_prop}:
\begin{equation}
\lim_{\varepsilon \to +0}\limsup_{n \to \infty} \frac{1}{n} S_{\rm H}^\varepsilon (\hat \rho_n \| \hat \sigma_n) = \overline{S} (\widehat{P} \| \widehat{\Sigma}), 
\label{hypothesis_spectral_eq1}
\end{equation}
\begin{equation}
\lim_{\varepsilon \to +0}\liminf_{n \to \infty} \frac{1}{n} S_{\rm H}^{1-\varepsilon} (\hat \rho_n \| \hat \sigma_n)  = \underline{S} (\widehat{P} \| \widehat{\Sigma}).
\label{hypothesis_spectral_eq2}
\end{equation}

The quantum Stein's lemma states that the minimized error probability of  the second kind  is asymptotically characterized by  the KL divergence rate.
More precisely: 
\textit{For any} $0 < \eta < 1$\textit{,}
\begin{equation}
S_{\rm H}^\eta  (\widehat{P} \| \widehat{\Sigma})  = S_1  (\widehat{P} \| \widehat{\Sigma})
\label{Stein_lemma}
\end{equation}
\textit{holds.}

From Eq.~(\ref{def_SH}), we see that Eq.~(\ref{Stein_lemma}) implies that ${\rm tr}[\hat \sigma_n \hat Q_n ] \sim e^{-n S_1 (\widehat{P} \| \widehat{\Sigma})}$.
That is, the quantum Stein's lemma states that the KL divergence rate characterizes the large deviation property of the error probability in hypothesis testing.

The quantum Stein's lemma can be equivalently rephrased in terms of the spectral divergence rates from Eq.~(\ref{hypothesis_spectral_eq1}), Eq.~(\ref{hypothesis_spectral_eq2}),  and inequality~(\ref{eta_SH}):

\begin{proposition}
The quantum Stein's lemma (\ref{Stein_lemma}) holds, if and only if 
\begin{equation}
\underline{S} (\widehat{P} \| \widehat{\Sigma}) = \overline{S} (\widehat{P} \| \widehat{\Sigma}) =  S_1  (\widehat{P} \| \widehat{\Sigma}).
\label{eq_Stein_spectral}
\end{equation}
\label{hypothesis_spectral}
\end{proposition}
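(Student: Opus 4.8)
The plan is to deduce the equivalence from the two ingredients already assembled in this appendix: Proposition~\ref{Faist_prop}, which ties $S_{\rm H}^\eta$ for $\eta\simeq 1$ and $\eta\simeq 0$ to the smooth R\'enyi $0$- and $\infty$-divergences, and the monotonicity of $S_{\rm H}^\eta$ in $\eta$ from inequality~(\ref{eta_SH}). I would not re-derive Eqs.~(\ref{hypothesis_spectral_eq1})--(\ref{hypothesis_spectral_eq2}); they are available and already express $\overline{S}(\widehat{P}\|\widehat{\Sigma})$ and $\underline{S}(\widehat{P}\|\widehat{\Sigma})$ as the $\varepsilon\to+0$ limits of $\limsup_{n\to\infty}\frac1n S_{\rm H}^\varepsilon(\hat\rho_n\|\hat\sigma_n)$ and $\liminf_{n\to\infty}\frac1n S_{\rm H}^{1-\varepsilon}(\hat\rho_n\|\hat\sigma_n)$, respectively. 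The whole proof is then a sandwich argument.

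For the ``if'' direction, assume $\underline{S}(\widehat{P}\|\widehat{\Sigma})=\overline{S}(\widehat{P}\|\widehat{\Sigma})=S_1(\widehat{P}\|\widehat{\Sigma})$. Fix $0<\eta<1$ and choose $\varepsilon$ with $0<\varepsilon<\min\{\eta,1-\eta\}$, so that $\varepsilon\le\eta\le 1-\varepsilon$. By~(\ref{eta_SH}) we get $S_{\rm H}^{1-\varepsilon}(\hat\rho_n\|\hat\sigma_n)\le S_{\rm H}^{\eta}(\hat\rho_n\|\hat\sigma_n)\le S_{\rm H}^{\varepsilon}(\hat\rho_n\|\hat\sigma_n)$ for every $n$. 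Dividing by $n$, taking $\liminf_{n\to\infty}$ on the left and $\limsup_{n\to\infty}$ on the right, and then letting $\varepsilon\to+0$, Eqs.~(\ref{hypothesis_spectral_eq1})--(\ref{hypothesis_spectral_eq2}) force both outer bounds to converge to $S_1(\widehat{P}\|\widehat{\Sigma})$. Hence $\lim_{n\to\infty}\frac1n S_{\rm H}^\eta(\hat\rho_n\|\hat\sigma_n)$ exists and equals $S_1(\widehat{P}\|\widehat{\Sigma})$ for every $\eta\in(0,1)$, which is precisely~(\ref{Stein_lemma}).

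For the ``only if'' direction, assume~(\ref{Stein_lemma}): for every $0<\eta<1$ the limit $S_{\rm H}^\eta(\widehat{P}\|\widehat{\Sigma})=\lim_{n\to\infty}\frac1n S_{\rm H}^\eta(\hat\rho_n\|\hat\sigma_n)$ exists and equals $S_1(\widehat{P}\|\widehat{\Sigma})$. Since the full limit exists, for each $0<\varepsilon<1/2$ one has $\limsup_{n\to\infty}\frac1n S_{\rm H}^\varepsilon(\hat\rho_n\|\hat\sigma_n)=S_1(\widehat{P}\|\widehat{\Sigma})$ and $\liminf_{n\to\infty}\frac1n S_{\rm H}^{1-\varepsilon}(\hat\rho_n\|\hat\sigma_n)=S_1(\widehat{P}\|\widehat{\Sigma})$. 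Letting $\varepsilon\to+0$ in Eqs.~(\ref{hypothesis_spectral_eq1}) and (\ref{hypothesis_spectral_eq2}) then gives $\overline{S}(\widehat{P}\|\widehat{\Sigma})=\underline{S}(\widehat{P}\|\widehat{\Sigma})=S_1(\widehat{P}\|\widehat{\Sigma})$, i.e.~(\ref{eq_Stein_spectral}).

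No genuinely hard step remains; the delicate points are purely bookkeeping. First, one must read the equalities~(\ref{Stein_lemma}) and (\ref{eq_Stein_spectral}) as implicitly asserting that the relevant $n\to\infty$ limits exist (so that $S_1(\widehat{P}\|\widehat{\Sigma})$ is well defined and coincides with both $\liminf$ and $\limsup$ of $\frac1n S_1(\hat\rho_n\|\hat\sigma_n)$). Second, one must justify interchanging the $\varepsilon\to+0$ and $n\to\infty$ limits --- but this is exactly what Eqs.~(\ref{hypothesis_spectral_eq1})--(\ref{hypothesis_spectral_eq2}), underpinned by Proposition~\ref{Faist_prop}, already encode, since the correction terms there are $n$-independent. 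Once those are in hand, the monotonicity~(\ref{eta_SH}) does the rest.
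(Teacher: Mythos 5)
Your proof is correct and follows essentially the same route as the paper: combining the monotonicity~(\ref{eta_SH}) of $S_{\rm H}^\eta$ in $\eta$ with the identifications~(\ref{hypothesis_spectral_eq1})--(\ref{hypothesis_spectral_eq2}) to sandwich $\frac1n S_{\rm H}^\eta(\hat\rho_n\|\hat\sigma_n)$ between quantities converging to $\underline{S}$ and $\overline{S}$. Your uniform choice $\varepsilon<\min\{\eta,1-\eta\}$ in place of the paper's case split at $\eta=1/2$, and your explicit treatment of the ``only if'' direction (which the paper dismisses as obvious), are only cosmetic differences.
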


\begin{proof}
This is obvious, but we here prove the ``if'' part.
For $0 < \eta \leq 1/2$, we have $S_1 (\widehat{P} \| \widehat{\Sigma}) = \lim_{\varepsilon \to +0} \limsup_{n \to \infty} S_{\rm H}^\varepsilon (\hat \rho_n \| \hat \sigma_n ) / n \geq \limsup_{n \to \infty} S_{\rm H}^\eta (\hat \rho_n \| \hat \sigma_n ) / n \geq \liminf_{n \to \infty} S_{\rm H}^\eta (\hat \rho_n \| \hat \sigma_n ) / n \geq \liminf_{n \to \infty} S_{\rm H}^{1-\eta} (\hat \rho_n \| \hat \sigma_n ) / n \geq \lim_{\varepsilon \to +0} \liminf_{n \to \infty} S_{\rm H}^{1-\varepsilon} (\hat \rho_n \| \hat \sigma_n ) / n  = S_1 (\widehat{P} \| \widehat{\Sigma})$.
The proof is almost the same for $1/2 \leq \eta < 1$.
$\Box$
\end{proof}

Eq.~(\ref{eq_Stein_spectral}) is exactly the same as the relative quantum AEP~(\ref{eq_relative_quantum_AEP}).
Moreover, from Proposition~\ref{Faist_prop}, we have the following~\cite{Faist2018}.

\begin{proposition}
The quantum Stein's lemma (\ref{Stein_lemma}) holds, if and only if for any $0 < \varepsilon < 1/2$,
\begin{equation}
\lim_{n \to \infty}\frac{1}{n} S_0^\varepsilon (\hat \rho_n \| \hat \sigma_n ) 
=\lim_{n \to \infty}\frac{1}{n} S_\infty^\varepsilon (\hat \rho_n \| \hat \sigma_n )  
=   S_1  (\widehat{P} \| \widehat{\Sigma}).
\end{equation}
\end{proposition}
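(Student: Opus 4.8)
The plan is to derive both implications by combining the sandwich inequalities of Proposition~\ref{Faist_prop}, which bound the smooth $0$- and $\infty$-divergences by hypothesis testing divergences with shifted parameters up to $O(1)$ corrections, with the reformulation of the quantum Stein's lemma in terms of the spectral divergence rates (Proposition~\ref{hypothesis_spectral}), and with the identities (\ref{hypothesis_spectral_eq1})--(\ref{hypothesis_spectral_eq2}) that express $\overline{S}$ and $\underline{S}$ as $\varepsilon \to +0$ limits of normalized smooth divergences. The key observation is that for fixed $0 < \varepsilon < 1/2$ all the parameters appearing in Proposition~\ref{Faist_prop} --- namely $1 - \varepsilon^2/6$, $1 - \varepsilon$, $2\varepsilon$, and $\varepsilon^2/2$ --- lie strictly inside $(0,1)$, so the quantum Stein's lemma, which asserts $\lim_{n\to\infty} \frac1n S_{\rm H}^\eta(\hat\rho_n \| \hat\sigma_n) = S_1(\widehat{P}\|\widehat{\Sigma})$ for \emph{every} $\eta \in (0,1)$, can be applied to each of them.

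For the ``only if'' direction, I would assume the quantum Stein's lemma holds and fix $\varepsilon \in (0,1/2)$. Dividing the first chain of Proposition~\ref{Faist_prop}, applied to $\hat\rho_n,\hat\sigma_n$, by $n$ and letting $n\to\infty$, the additive constants $\ln\frac{1-\varepsilon^2/6}{\varepsilon^2/6}$ and $\ln(1-\varepsilon)$ become negligible, so $\frac1n S_0^\varepsilon(\hat\rho_n\|\hat\sigma_n)$ is squeezed between $\frac1n S_{\rm H}^{1-\varepsilon^2/6}(\hat\rho_n\|\hat\sigma_n) + o(1)$ and $\frac1n S_{\rm H}^{1-\varepsilon}(\hat\rho_n\|\hat\sigma_n) + o(1)$, both of which converge to $S_1(\widehat{P}\|\widehat{\Sigma})$ by Stein's lemma; hence $\lim_{n\to\infty}\frac1n S_0^\varepsilon(\hat\rho_n\|\hat\sigma_n) = S_1(\widehat{P}\|\widehat{\Sigma})$. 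The same argument applied to the second chain of Proposition~\ref{Faist_prop}, with $\eta = 2\varepsilon$ and $\eta = \varepsilon^2/2$, gives $\lim_{n\to\infty}\frac1n S_\infty^\varepsilon(\hat\rho_n\|\hat\sigma_n) = S_1(\widehat{P}\|\widehat{\Sigma})$. This establishes the displayed equalities for every $\varepsilon \in (0,1/2)$.

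For the ``if'' direction, I would suppose that for every $\varepsilon \in (0,1/2)$ both limits exist and equal $S_1(\widehat{P}\|\widehat{\Sigma})$. Then in particular $\liminf_{n\to\infty}\frac1n S_0^\varepsilon(\hat\rho_n\|\hat\sigma_n) = S_1(\widehat{P}\|\widehat{\Sigma})$ and $\limsup_{n\to\infty}\frac1n S_\infty^\varepsilon(\hat\rho_n\|\hat\sigma_n) = S_1(\widehat{P}\|\widehat{\Sigma})$ for all such $\varepsilon$, so taking $\varepsilon\to+0$ in Definition~\ref{def:spectral_divergence} yields $\underline{S}(\widehat{P}\|\widehat{\Sigma}) = \overline{S}(\widehat{P}\|\widehat{\Sigma}) = S_1(\widehat{P}\|\widehat{\Sigma})$, i.e.\ (\ref{eq_Stein_spectral}). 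By Proposition~\ref{hypothesis_spectral} this is equivalent to the quantum Stein's lemma.

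I expect no genuine obstacle here: the statement is essentially a bookkeeping corollary of Proposition~\ref{Faist_prop} and Proposition~\ref{hypothesis_spectral}. The only points requiring care are (i) checking that the shifted parameters remain in $(0,1)$ on the range $0<\varepsilon<1/2$ so that Stein's lemma is applicable to them, (ii) noting that the $O(1)$ correction terms in Proposition~\ref{Faist_prop} are annihilated after division by $n$ while $\varepsilon$ is held fixed, and (iii) keeping straight the order of limits --- $\varepsilon$ is fixed throughout the forward direction and only sent to $0$ afterwards in the converse, exactly matching the structure of the definitions of $\underline{S}$ and $\overline{S}$. One may also invoke the monotonicity (\ref{eta_SH}) of $S_{\rm H}^\eta$ in $\eta$ if one prefers to phrase the squeeze through a single bracketing pair, but this is not strictly necessary.
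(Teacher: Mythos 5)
Your proposal is correct and follows essentially the same route the paper intends: the forward direction is the squeeze obtained from the two sandwich chains of Proposition~\ref{Faist_prop} (whose shifted parameters indeed stay in $(0,1)$ for $0<\varepsilon<1/2$, and whose additive constants vanish after dividing by $n$), and the converse passes through Definition~\ref{def:spectral_divergence} and Proposition~\ref{hypothesis_spectral}. No gaps.
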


To summarize, in order to prove the collapse of $\overline{S}(\widehat{P}\| \widehat{\Sigma} )$ and  $\underline{S}(\widehat{P} \| \widehat{\Sigma})$ to $S_1(\widehat{P}\| \widehat{\Sigma} )$,
it is necessary and  sufficient to prove the quantum Stein's lemma.
In other words, the quantum Stein's lemma  is regarded as a representation of the quantum relative AEP (see also Appendix~\ref{app_classical} for the classical case).
In fact, if $\widehat{P}$ is translation invariant and ergodic and $\widehat{\Sigma}$ is the Gibbs state of a local and translation invariant Hamiltonian, 
then the quantum Stein's lemma (\ref{Stein_lemma}) holds, which is equivalent to Theorem \ref{thm:main_ergodic}  of  Section~\ref{sec_condition_information} (Theorem 3 of Ref.~\cite{Sagawa2019}).

The simplest case is i.i.d.   with  $\widehat{P}:= \{ \hat \rho^{\otimes n} \}$ and $\widehat{\Sigma} := \{ \hat \sigma^{\otimes n} \}$.
In this case, the quantum Stein's lemma states that   for any $0 < \eta < 1$, 
\begin{equation}
S_{\rm H}^\eta (\widehat{P} \| \widehat{\Sigma}) =  S_1(\hat \rho \| \hat \sigma ),
\end{equation}
which has been proved in Refs.~\cite{Hiai_Petz,Ogawa2000}.
Given Proposition~\ref{hypothesis_spectral},  this is equivalent to Theorem~\ref{thm:iid_divergence} (Theorem 2 of Ref.~\cite{Nagaoka2007}).


\chapter{Classical asymptotic equipartition properties}
\label{app_classical}

In this Appendix, we consider the classical asymptotic equipartition properties (AEP) of stochastic processes, as the classical counterpart  of Section~\ref{sec_condition_information}.
Specifically, we discuss the Shannon-McMillan theorem and the classical Stein's lemma.

\

Let $\{ x_l \}_{l\in \mathbb Z}$ be a two-sided stochastic process, where $x_l \in B$ with $B$ being a finite set of alphabets.
This is equivalent to a one-dimensional classical spin system.
Formally, the stochastic process is given by a measure $\mu$ over $K := B^{\mathbb Z}$.
We introduce a shift operator $T$ such that $(Tx)_l := x_{l+1}$.

Let $X_n := (x_{-l}, x_{-l+1}, \cdots, x_l)$ with $n:= 2l+1$.
We consider sequences of probability distributions  $\widehat{P} := \{ \rho_n (X_n) \}$ and $\widehat{\Sigma} := \{ \sigma_n (X_n ) \}$.
The spectral divergence rates of these sequences can be defined in completely parallel manner to the quantum case discussed in Chapter~\ref{chap:approximate_asymptotic}.
We sometimes loosely identify $\{ x_l \}_{l\in \mathbb Z}$ and $\widehat{P}$.

In the classical case, we can directly show that if $\widehat{P}$ and $\widehat{\Sigma}$ satisfy a relative version of the AEP: the lower and the upper divergence rates coincide and further equal the KL divergence rate.
We first define the AEP  in the form of convergence in probability.

\begin{definition}[Classical AEP]
$\widehat{P}$ satisfies  the AEP, if the Shannon entropy rate $S_1 (\widehat{P} )$ exists, and $- \frac{1}{n} \ln \rho_n (X_n)$ converges to $S_1 (\widehat{P} )$ in probability by sampling $X_n$ according to $\rho_n$.
\label{def_AEP}
\end{definition}

This definition is equivalent to a statement with \textit{typical set} as described in the following proposition, which can be shown in the same manner as in Theorem 3.1.2 of Ref.~\cite{Cover_Thomas}.

\begin{proposition}
$\widehat{P}$ satisfies the AEP, if and only if for any $ 0 < \varepsilon < 1$, there exists a typical set $Q_n^\varepsilon \subset \{ X_n \}$, satisfying the following properties for sufficiently large $n$:
\begin{description}
\item[(a)] For any $X_n \in Q_n^\varepsilon$, $\exp ( - n(S_1 (\widehat{P} ) + \varepsilon ) ) \leq \rho_n (X_n) \leq \exp (- n(S_1 (\widehat{P}) - \varepsilon ))$.
\item[(b)] $\rho_n [Q_n^\varepsilon] > 1-\varepsilon$, where $\rho_n [Q_n^\varepsilon]$ is the probability of $Q_n^\varepsilon$ according to the distribution $\rho_n$.
\item[(c)] $(1-\varepsilon) \exp (n (S_1 (\widehat{P}  ) - \varepsilon)) \leq   | Q_n^\varepsilon | \leq \exp (n (S_1 (\widehat{P} ) + \varepsilon))$, where $| Q_n^\varepsilon |$ describes the number of elements of $Q_n^\varepsilon $.
\end{description} 
\label{prop_c_AEP0}
\end{proposition}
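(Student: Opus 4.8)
The plan is to mirror the classical i.i.d. argument (Theorem 3.1.2 of Ref.~\cite{Cover_Thomas}), observing that the only ingredient actually used there is the convergence in probability of $-\tfrac1n\ln\rho_n(X_n)$, not independence, so the same reasoning applies verbatim to a general stochastic process once the AEP in the sense of Definition~\ref{def_AEP} is granted. Thus the proof splits into two implications, each essentially a bookkeeping exercise with the qualifier ``for sufficiently large $n$''.

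For the ``only if'' direction I would take the \emph{canonical} typical set
\begin{equation}
Q_n^\varepsilon := \left\{ X_n \ : \ \left| -\tfrac{1}{n}\ln\rho_n(X_n) - S_1(\widehat{P}) \right| \le \varepsilon \right\},
\end{equation}
so that property (a) is merely a rewriting of this defining inequality. Property (b) is then immediate from Definition~\ref{def_AEP}, since convergence in probability means $\rho_n[Q_n^\varepsilon] \to 1$, hence $\rho_n[Q_n^\varepsilon] > 1-\varepsilon$ for $n$ large. For property (c) I would sum the two bounds of (a) over $Q_n^\varepsilon$: from $\rho_n(X_n) \ge e^{-n(S_1(\widehat{P})+\varepsilon)}$ and $\rho_n[Q_n^\varepsilon] \le 1$ one gets $|Q_n^\varepsilon| \le e^{n(S_1(\widehat{P})+\varepsilon)}$, while from $\rho_n(X_n) \le e^{-n(S_1(\widehat{P})-\varepsilon)}$ together with $\rho_n[Q_n^\varepsilon] > 1-\varepsilon$ one gets $|Q_n^\varepsilon| > (1-\varepsilon)e^{n(S_1(\widehat{P})-\varepsilon)}$.

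For the ``if'' direction the key point is that property (a) forces $Q_n^\varepsilon \subseteq \{X_n : |-\tfrac1n\ln\rho_n(X_n) - S_1(\widehat{P})| \le \varepsilon\}$, so property (b) yields $\rho_n[\{X_n : |-\tfrac1n\ln\rho_n(X_n) - S_1(\widehat{P})| > \varepsilon\}] < \varepsilon$ for $n$ large; as $\varepsilon\in(0,1)$ is arbitrary, this is exactly the convergence in probability. To also confirm that $S_1(\widehat{P})$ coincides with $\lim_n \tfrac1n S_1(\rho_n)$ (so that Definition~\ref{def_AEP} is met in full), I would write $\tfrac1n S_1(\rho_n) = \tfrac1n\,\mathbb{E}[-\ln\rho_n(X_n)]$ and split the expectation over $Q_n^\varepsilon$ and its complement: on $Q_n^\varepsilon$ the integrand lies in $[S_1(\widehat{P})-\varepsilon,\,S_1(\widehat{P})+\varepsilon]$ and carries mass $>1-\varepsilon$, while on the complement one bounds $\sum_{X_n\notin Q_n^\varepsilon}\rho_n(X_n)\ln\tfrac1{\rho_n(X_n)}$ using the finiteness of the alphabet, $|\{X_n\}|\le|B|^n$, and $\rho_n[(Q_n^\varepsilon)^c]<\varepsilon$. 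Letting $\varepsilon\to 0$ then pins down the limit.

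The main obstacle is precisely this last estimate in the ``if'' direction: the atypical set $(Q_n^\varepsilon)^c$ has small probability, but it could a priori contain configurations of extremely small probability, hence large $-\ln\rho_n$, so its contribution to the entropy is not obviously negligible. What rescues the argument is the finite alphabet, which caps the number of configurations at $|B|^n$; combining this cardinality bound with the probability bound via concavity of $x\mapsto -x\ln x$ (equivalently, a Jensen/grouping step applied to $\sum_{X_n\notin Q_n^\varepsilon}\rho_n\ln(1/\rho_n)$) gives a uniform $O(n\varepsilon)$ bound on that contribution. Everything else — both directions of the equivalence and the handling of the ``for sufficiently large $n$'' clauses — is routine.
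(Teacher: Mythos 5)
Your proposal is correct and follows essentially the same route the paper intends: the paper gives no proof of Proposition~\ref{prop_c_AEP0} beyond citing Theorem 3.1.2 of Ref.~\cite{Cover_Thomas}, and your argument is exactly that standard one, with the canonical typical set for the ``only if'' direction and the finite-alphabet entropy bound on the atypical set (via the grouping/concavity step) to recover the entropy rate in the ``if'' direction. The extra care you take in checking that the constant in (a)--(c) coincides with $\lim_n \tfrac{1}{n}S_1(\rho_n)$ is a sensible reading of Definition~\ref{def_AEP} and introduces no gap.
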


This formulation of the AEP clearly represents the meaning of ``equipartition'';
Almost all events (i.e., events in the typical set) have almost the same probability.

We next define the relative AEP, which is the classical counterpart of our quantum formulation in Section~\ref{sec_condition_information}.

\begin{definition}[Relative AEP]
$\widehat{P}$ and $\widehat{\Sigma}$ satisfy the relative AEP, if  the KL divergence rate $S_1 (\widehat{P} \| \widehat{\Sigma})$ exists, and $\frac{1}{n} \ln \frac{\rho_n (X_n) }{\sigma_n (X_n)}$ converges to $S_1 (\widehat{P} \| \widehat{\Sigma})$ in probability by sampling $X_n$ according to $\rho_n$.
\label{def_rAEP}
\end{definition}

The following proposition can be shown in the same manner as Theorem 11.8.2 of Ref.~\cite{Cover_Thomas}.

\begin{proposition}
$\widehat{P}$ and $\widehat{\Sigma}$ satisfy the relative AEP, if and only if for any $ 0 < \varepsilon < 1$, there exists a relative typical set $Q_n^\varepsilon \subset \{ X_n \}$, satisfying the following properties for sufficiently large $n$:
\begin{description}
\item[(a)] For any $X_n \in Q_n^\varepsilon$,
\begin{equation}
\exp ( n(S_1 (\widehat{P} \| \widehat{\Sigma}) - \varepsilon ) ) \leq \frac{\rho_n (X_n)}{\sigma_n (X_n)} \leq \exp ( n(S_1 (\widehat{P} \| \widehat{\Sigma}) + \varepsilon )).
\end{equation}
\item[(b)] $\rho_n [Q_n^\varepsilon] > 1-\varepsilon$.
\item[(c)] $(1-\varepsilon) \exp (-n (S_1 (\widehat{P} \| \widehat{\Sigma} ) + \varepsilon)) \leq \sigma_n [Q_n^\varepsilon] \leq \exp (-n (S_1 (\widehat{P} \| \widehat{\Sigma} ) - \varepsilon))$.
\end{description} 
Here, $\rho_n [Q_n^\varepsilon]$ and $\sigma_n [Q_n^\varepsilon]$ represent the probability of $Q_n^\varepsilon$ according to distributions $\rho_n$ and $\sigma_n$, respectively.
\label{prop_c_AEP}
\end{proposition}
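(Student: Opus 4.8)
The plan is to mirror the classical i.i.d.\ argument (Theorem 11.8.2 of Ref.~\cite{Cover_Thomas}), with the single number $D(p\|q)$ there replaced by the divergence rate $s := S_1(\widehat{P}\|\widehat{\Sigma})$; the existence of this rate is part of both the relative AEP (Definition~\ref{def_rAEP}) and the typical-set condition (since the latter already refers to $S_1(\widehat{P}\|\widehat{\Sigma})$), so I would treat it as a shared standing hypothesis. Throughout I would also use the book's standing assumption ${\rm supp}[\rho_n] \subseteq {\rm supp}[\sigma_n]$, which guarantees that the log-likelihood ratio $\frac{1}{n}\ln\frac{\rho_n(X_n)}{\sigma_n(X_n)}$ is well-defined $\rho_n$-almost surely.

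For the ``only if'' direction I would simply define $Q_n^\varepsilon := \{ X_n : |\frac{1}{n}\ln\frac{\rho_n(X_n)}{\sigma_n(X_n)} - s| \le \varepsilon \}$. Property (a) then holds by construction. Property (b) holds for all sufficiently large $n$ because, by the relative AEP, $\frac{1}{n}\ln\frac{\rho_n}{\sigma_n}$ converges to $s$ in probability under $\rho_n$, so the $\rho_n$-probability of the complement of $Q_n^\varepsilon$ is eventually below $\varepsilon$. For property (c), note that on $Q_n^\varepsilon$ one has $\rho_n(X_n)\,e^{-n(s+\varepsilon)} \le \sigma_n(X_n) \le \rho_n(X_n)\,e^{-n(s-\varepsilon)}$; summing over $X_n \in Q_n^\varepsilon$ gives $e^{-n(s+\varepsilon)}\rho_n[Q_n^\varepsilon] \le \sigma_n[Q_n^\varepsilon] \le e^{-n(s-\varepsilon)}\rho_n[Q_n^\varepsilon]$, and then $1-\varepsilon < \rho_n[Q_n^\varepsilon] \le 1$ (from (b)) yields exactly the two bounds in (c).

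For the ``if'' direction, fix $\varepsilon > 0$ and take the given $Q_n^\varepsilon$ for $n$ large. Property (a) says $Q_n^\varepsilon \subseteq \{ X_n : |\frac{1}{n}\ln\frac{\rho_n(X_n)}{\sigma_n(X_n)} - s| \le \varepsilon \}$, so property (b) forces the $\rho_n$-probability of the latter event to exceed $1-\varepsilon$ for all large $n$. Since $\varepsilon$ was arbitrary, this is precisely convergence of $\frac{1}{n}\ln\frac{\rho_n}{\sigma_n}$ to $s$ in probability under $\rho_n$, which, together with $s = S_1(\widehat{P}\|\widehat{\Sigma})$, is the relative AEP. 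Note that property (c) is not used in this direction; it is the quantitative consequence one actually consumes in applications such as the classical Stein's lemma, since it controls the $\sigma_n$-weight of the typical set (the error of the second kind).

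I expect the only real subtlety to be the status of the divergence rate rather than any hard estimate: the equivalence is stated with $S_1(\widehat{P}\|\widehat{\Sigma})$ appearing on both sides, so its existence should be taken as a shared hypothesis and not re-derived from convergence in probability alone --- in the non-i.i.d.\ setting that would additionally require a uniform-integrability argument to pass from convergence in probability of $\frac{1}{n}\ln\frac{\rho_n}{\sigma_n}$ to convergence of its mean $\frac{1}{n}S_1(\rho_n\|\sigma_n) = \sum_{X_n}\rho_n(X_n)\frac{1}{n}\ln\frac{\rho_n(X_n)}{\sigma_n(X_n)}$. Beyond that, the proof is the routine bookkeeping indicated above, and I would keep it short.
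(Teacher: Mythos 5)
Your proof is correct and is essentially the argument the paper has in mind: the paper gives no separate proof but simply notes that the proposition "can be shown in the same manner as Theorem 11.8.2 of Ref.~\cite{Cover_Thomas}," i.e.\ exactly the typical-set construction $Q_n^\varepsilon := \{ X_n : |\tfrac{1}{n}\ln\tfrac{\rho_n(X_n)}{\sigma_n(X_n)} - S_1(\widehat{P}\|\widehat{\Sigma})| \le \varepsilon\}$ for the "only if" direction and the observation that (a) and (b) alone recover convergence in probability for the "if" direction. Your remarks on the standing support assumption and on the existence of the divergence rate being a shared hypothesis are consistent with the paper's definitions and require no further justification.
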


Next, to formulate the classical Shannon-McMillan theorem, we define classical ergodicity as follows.

\begin{definition}[Classical ergodicity]
 $\{ x_l \}_{l \in \mathbb Z}$ is translation invariant if $\mu$ is invariant under $T$.
Moreover, a  translation-invariant process $\{ x_l \}_{l \in \mathbb Z}$  is ergodic, if any subset of $K$ that is invariant under $T$ has measure $0$ or $1$. 
\end{definition}

The above definition is a special case of quantum ergodicity (Definition~\ref{def:q_ergodic}) for the situation where the algebra of observables is commutative, which is highlighted by the following lemma.

\begin{lemma}
A two-sided stochastic process is translation invariant and ergodic, if and only if it is an extreme point of the set of translation-invariant stochastic processes.
\end{lemma}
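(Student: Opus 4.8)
The plan is to identify the set of translation-invariant stochastic processes with the convex set of $T$-invariant Borel probability measures on $K = B^{\mathbb Z}$; convexity is immediate, since a convex combination of $T$-invariant probability measures is again a $T$-invariant probability measure. I would then prove the two implications separately, each by contraposition. Throughout I use that the two-sided shift $T$ is a measurable bijection with measurable inverse and that $\mu$ is $T$-invariant, so that for any finite measure $\nu \ll \mu$ one has $\nu\circ T^{-1} \ll \mu$ with $d(\nu\circ T^{-1})/d\mu = (d\nu/d\mu)\circ T^{-1}$ $\mu$-a.e.\ (a short change-of-variables computation using $\mu\circ T = \mu$).

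First I would show ``extreme $\Rightarrow$ ergodic''. Suppose $\mu$ is $T$-invariant but not ergodic, so there is a measurable set $A$ with $T^{-1}A = A$ and $0 < \mu(A) < 1$. Define the conditional measures $\mu_1(\cdot) := \mu(\cdot \cap A)/\mu(A)$ and $\mu_2(\cdot) := \mu(\cdot \cap A^c)/\mu(A^c)$. Because $T^{-1}A = A$ and $\mu$ is $T$-invariant, both $\mu_1$ and $\mu_2$ are $T$-invariant probability measures; they are distinct since $\mu_1(A) = 1 \neq 0 = \mu_2(A)$; and $\mu = \mu(A)\,\mu_1 + \mu(A^c)\,\mu_2$ is a nontrivial convex decomposition. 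Hence $\mu$ is not an extreme point.

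Next I would show ``ergodic $\Rightarrow$ extreme''. Suppose $\mu = \lambda\mu_1 + (1-\lambda)\mu_2$ with $\mu_1,\mu_2$ $T$-invariant, $\lambda \in (0,1)$, and $\mu_1 \neq \mu_2$. From $\mu \geq \lambda\mu_1$ we get $\mu_1 \ll \mu$; set $g := d\mu_1/d\mu \geq 0$. The key step is that $g$ is $T$-invariant $\mu$-a.e.: the identity above gives $d(\mu_1\circ T^{-1})/d\mu = g\circ T^{-1}$, while $\mu_1\circ T^{-1} = \mu_1$ since $\mu_1$ is $T$-invariant, so $g\circ T^{-1} = g$ $\mu$-a.e. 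Consequently each set $\{g \le c\}$ agrees up to a $\mu$-null set with a $T$-invariant set, so ergodicity forces $\mu(\{g\le c\}) \in \{0,1\}$ for every $c$; hence $g$ equals a constant $\mu$-a.e., and $\int g\, d\mu = \mu_1(K) = 1$ fixes that constant to be $1$. Therefore $\mu_1 = \mu$, contradicting $\mu_1 \neq \mu_2$, and $\mu$ is extreme.

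I expect the main obstacle to be precisely the $T$-invariance of the Radon--Nikodym derivative $g$ together with the passage from ``$\mu$-a.e.\ invariant'' sets to genuinely invariant ones (the standard fact that, for an invertible measure-preserving transformation, any set with $\mu(A \,\triangle\, T^{-1}A) = 0$ differs by a $\mu$-null set from an exactly $T$-invariant set). Handling this cleanly is where the invertibility of the \emph{two-sided} shift is used; for one-sided processes one would instead argue via conditional expectations. As a sanity check, this equivalence is the commutative special case of the operator-algebraic statement quoted in the main text, and it may alternatively be cited from Ref.~\cite{Ruelle1999}; the direct measure-theoretic argument above is, however, self-contained modulo these classical facts on invariant measures.
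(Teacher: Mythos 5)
Your proof is correct. Note that the paper itself offers no proof of this lemma: it is stated as a classical fact, being the commutative special case of the analogous statement for quantum (translation-invariant) states, for which the paper points to the operator-algebraic literature (Ruelle). So there is no ``paper route'' to compare against; what you supply is the standard self-contained measure-theoretic argument, and it is sound. The decomposition $\mu = \mu(A)\,\mu_1 + \mu(A^c)\,\mu_2$ over a nontrivial invariant set correctly shows that non-ergodic implies non-extreme, and in the converse direction the chain $\mu_1 \ll \mu$, $g := d\mu_1/d\mu$, $g\circ T^{-1}=g$ $\mu$-a.e., hence $g$ constant by ergodicity and equal to $1$ by normalization, is exactly the textbook proof. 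You correctly flag the one technical point that needs care with the paper's definition of ergodicity (which quantifies over \emph{strictly} $T$-invariant sets): the level sets $\{g\le c\}$ are only almost invariant, so you must invoke the classical fact that for an invertible measure-preserving map any set $A$ with $\mu(A \,\triangle\, T^{-1}A)=0$ agrees with an exactly invariant set up to a null set; citing this (or constructing the invariant modification, e.g.\ via $\bigcap_{N}\bigcup_{n\ge N}T^{-n}A$) closes the argument, and then $\mu_1=\mu$ forces $\mu_2=\mu$ as well, contradicting $\mu_1\neq\mu_2$. The only stylistic remark is that one could instead simply cite the ergodic decomposition theorem or the general $C^\ast$-algebraic result as the paper implicitly does; your direct argument has the advantage of being elementary and self-contained for the two-sided shift.
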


We now state the classical Shannon-McMillan(-Breiman) theorem.  The weaker statement with convergence in probability is enough for our purpose in the following argument.

\begin{theorem}[Classical Shannon-McMillan theorem]
If $\widehat{P}$ is translation invariant and ergodic, $- \frac{1}{n}\ln \rho_n (X_n)$ converges to $S_1 (\widehat{P})$ almost surely (and thus in probability, implying the AEP).
\end{theorem}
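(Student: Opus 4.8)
The plan is to prove the classical Shannon--McMillan--Breiman theorem by combining Birkhoff's ergodic theorem with martingale (L\'evy) convergence. By translation invariance, $-\frac1n\ln\rho_n(X_n)$ with the symmetric window $X_n=(x_{-l},\dots,x_l)$, $n=2l+1$, has the same law as $-\frac1n\ln P(x_0,\dots,x_{n-1})$, so it suffices to work on the path space $K=B^{\mathbb Z}$ with the shift $T$. Set $h_i(\omega):=-\ln P(x_0\mid x_{-1},\dots,x_{-i})$ for $i\ge 1$, $h_0(\omega):=-\ln P(x_0)$, and $h_\infty(\omega):=-\ln P(x_0\mid x_{-1},x_{-2},\dots)$, i.e.\ minus the log of the conditional probability of the \emph{observed} symbol $x_0$ given the past. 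The chain rule gives the key identity
\[ -\ln P(x_0,\dots,x_{n-1})=\sum_{i=0}^{n-1}\bigl(-\ln P(x_i\mid x_0,\dots,x_{i-1})\bigr)=\sum_{i=0}^{n-1}h_i(T^i\omega), \]
so the quantity of interest is a Ces\`aro average along the orbit of functions that depend on the index $i$.

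The proof then assembles four ingredients. (i) By L\'evy's martingale convergence theorem, $h_i\to h_\infty$ almost surely and in $L^1(\mu)$; consequently the entropy rate satisfies $S_1(\widehat P)=\mathbb E[h_\infty]=H(x_0\mid x_{-1},x_{-2},\dots)$, which also follows from the standard fact that $H(x_{n}\mid x_0,\dots,x_{n-1})$ decreases to this limit and Ces\`aro-averages to $S_1(\widehat P)$. (ii) Birkhoff's ergodic theorem, applied to the integrable $h_\infty$ on the ergodic system $(K,\mu,T)$, yields $\frac1n\sum_{i=0}^{n-1}h_\infty(T^i\omega)\to\mathbb E[h_\infty]=S_1(\widehat P)$ almost surely. (iii) A maximal $L^1$ bound: $h^\ast:=\sup_{i\ge 0}h_i\in L^1(\mu)$. (iv) Breiman's lemma: if $g_i\to g$ a.s.\ with $\sup_i|g_i|\le G\in L^1$, then $\frac1n\sum_{i=0}^{n-1}g_i(T^i\omega)\to\mathbb E[g]$ a.s., proved by sandwiching $g_i$ between $\inf_{m\ge M}g_m$ and $\sup_{m\ge M}g_m$, applying Birkhoff to each, and letting $M\to\infty$ by dominated convergence. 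Applying (iv) with $g_i=h_i$, $g=h_\infty$, $G=h^\ast$ and combining with the chain-rule identity gives $-\frac1n\ln\rho_n(X_n)\to S_1(\widehat P)$ almost surely; since the window grows by $O(1)$ symbols between consecutive odd $n$, these boundary contributions vanish after dividing by $n$. Almost-sure convergence implies convergence in probability, which is precisely Definition~\ref{def_AEP}, and the typical-set reformulation of the AEP follows as in Proposition~\ref{prop_c_AEP0}.

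The main obstacle is ingredient (iii), the integrability of $h^\ast=\sup_i h_i$, because $h_i$ itself is unbounded: a symbol that is rare under a particular observed past produces a large $-\ln$ term. The standard remedy is a distributional maximal inequality. Fix $b\in B$ and consider the bounded martingale $i\mapsto p_i:=P(x_0=b\mid x_{-1},\dots,x_{-i})$; let $\tau$ be the first $i$ with $p_i<e^{-\lambda}$. Optional-stopping/union arguments give $\mu(\tau<\infty,\ x_0=b)=\sum_i\mathbb E[p_i\mathbf 1_{\tau=i}]\le e^{-\lambda}$, hence $\mu(h^\ast>\lambda)\le|B|e^{-\lambda}$ and $\mathbb E[h^\ast]\le\ln|B|+1<\infty$. (With a finite alphabet one could alternatively route everything through convergence in probability, which is all that is needed downstream, but the a.s.\ statement only costs this one extra lemma.) Once (i)--(iv) are in place the conclusion is immediate, and the two-sided versus one-sided and symmetric-window bookkeeping is routine given stationarity.
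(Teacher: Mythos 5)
Your argument is correct, but note that the paper itself does not actually prove this theorem: after stating it, it only remarks that the i.i.d.\ case follows from the law of large numbers and that Birkhoff's ergodic theorem is the crucial ingredient in general, deferring the full proof to Theorem 16.8.1 of Cover--Thomas. What you have written is essentially Breiman's original proof of the Shannon--McMillan--Breiman theorem: the chain-rule decomposition $-\ln P(x_0,\dots,x_{n-1})=\sum_{i=0}^{n-1}h_i(T^i\omega)$, L\'evy martingale convergence $h_i\to h_\infty$ with $\mathbb E[h_\infty]=S_1(\widehat{P})$, the maximal bound $\mu(h^\ast>\lambda)\le |B|e^{-\lambda}$ giving $h^\ast\in L^1$ (your optional-stopping computation is right, and $\mathbb E[h^\ast]\le \ln|B|+1$ checks out), and the extended ergodic theorem obtained by sandwiching between $\inf_{m\ge M}h_m$ and $\sup_{m\ge M}h_m$. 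The route the paper points to (the Algoet--Cover sandwich proof) instead traps $-\frac{1}{n}\ln P(x_0,\dots,x_{n-1})$ between the $k$-th order Markov approximation and the conditional probability given the infinite past, applies Birkhoff only to fixed functions, and controls the correction ratios by Markov's inequality and Borel--Cantelli. Your route requires the maximal inequality and Breiman's lemma but avoids the Markov-approximation machinery and is self-contained given the Birkhoff theorem as stated in the paper.

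One step to tighten: equality in law with the one-sided window yields only convergence in probability, not the almost-sure statement, since the base point of the symmetric window moves with $n$; ``same law'' is not a valid reduction for a.s.\ convergence. The fix stays entirely within your own argument: write $-\ln\rho_{2l+1}(x_{-l},\dots,x_l)=\sum_{i=-l}^{l}h_{i+l}(T^i\omega)$ and run the same sandwich against two-sided Birkhoff averages (which is precisely the form of the ergodic theorem quoted in the paper); the $M$ leftmost boundary terms are killed by $\frac{1}{n}h^\ast(T^{-n}\omega)\to 0$ almost surely, which follows from Birkhoff applied to $h^\ast$ and the inverse shift $T^{-1}$.
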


In the case of i.i.d., the Shannon-McMillan theorem immediately follows from the law of large numbers, because $- \ln \rho (X_n) = - \sum_{l=-n}^n \ln \rho (x_l)$.
In general, the Birkhoff's ergodic theorem plays a crucial role in the proof of the Shannon-McMillan theorem (see, for example, Theorem 16.8.1 of ~\cite{Cover_Thomas}).

\begin{theorem}[Birkhoff's ergodic theorem]
Let $f$ be a measurable function from $K$ to $\mathbb R$.
If $\{ x_l \}_{l \in \mathbb Z}$ is translation invariant and ergodic, then for almost every $\omega \in K$, 
\begin{equation}
\lim_{n \to \infty} \frac{1}{2n+1} \sum_{l=-n}^n f(T^l \omega ) = \int d\mu (\omega ) f(\omega ).
\end{equation}
\end{theorem}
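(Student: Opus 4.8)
The plan is to prove the pointwise ergodic theorem in its standard measure-theoretic form, assuming (as is surely intended) that $f\in L^1(K,\mu)$. First I would reduce to one-sided Ces\`aro averages. Since $\mu$ is $T$-invariant it is also $T^{-1}$-invariant, and
\begin{equation}
\frac{1}{2n+1}\sum_{l=-n}^{n} f\circ T^{l}
= \frac{n}{2n+1}\Bigl(\frac{1}{n}\sum_{l=1}^{n} f\circ T^{l}\Bigr)
+ \frac{n}{2n+1}\Bigl(\frac{1}{n}\sum_{l=1}^{n} f\circ T^{-l}\Bigr)
+ \frac{1}{2n+1}\,f ,
\end{equation}
which is a convex combination of the forward Ces\`aro average of $f\circ T$ for the system $(K,\mu,T)$, the forward Ces\`aro average of $f\circ T^{-1}$ for the (also measure-preserving and ergodic) system $(K,\mu,T^{-1})$, and a term tending to $0$ a.e. Hence it suffices to prove the one-sided statement $\tfrac1n S_n g\to\int_K g\,d\mu$ a.e. for ergodic $(K,\mu,T)$ and $g\in L^1$, where $S_n g:=\sum_{k=0}^{n-1}g\circ T^k$ (applied to $f\circ T$ and to $f\circ T^{-1}$, both of which have integral $\int_K f\,d\mu$). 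Writing $\mathcal I$ for the $\sigma$-algebra of $T$-invariant sets and $\bar g:=\mathbb E[g\mid\mathcal I]\in L^1$, I would first prove $\tfrac1n S_n g\to\bar g$ a.e.\ \emph{without} ergodicity, and only afterwards use ergodicity to identify $\bar g$ with the constant $\int_K g\,d\mu$.

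The essential tool is the maximal ergodic theorem: for $g\in L^1$, putting $S_0 g:=0$ and $M_N g:=\max_{0\le n\le N}S_n g\ (\ge 0)$, one has $\int_{\{M_N g>0\}} g\,d\mu\ge 0$. I would prove it by Garsia's telescoping argument: on $E:=\{M_N g>0\}$ the maximum is attained at some $n\ge 1$, and since $S_n g=g+(S_{n-1}g)\circ T\le g+(M_N g)\circ T$ for $1\le n\le N$ (using $M_N g\ge 0$), one gets $g\ge M_N g-(M_N g)\circ T$ on $E$; integrating over $E$, using that $M_N g=0$ off $E$ and that $\int(M_N g)\circ T\,d\mu=\int M_N g\,d\mu$ by $T$-invariance of $\mu$, yields $\int_E g\,d\mu\ge\int_E M_N g\,d\mu-\int_E M_N g\,d\mu=0$.

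From this, fix $\varepsilon>0$ and set $A:=\{\,\limsup_n \tfrac1n S_n g>\bar g+\varepsilon\,\}$, which is $T$-invariant and hence in $\mathcal I$. Applying the maximal ergodic theorem to $g':=(g-\bar g-\varepsilon)\mathbf{1}_A\in L^1$ (whose partial sums equal $\mathbf{1}_A\,(S_n g-n\bar g-n\varepsilon)$, because $A$ is $T$-invariant and $\bar g$ is $\mathcal I$-measurable, so that $\{M_N g'>0\}\uparrow A$), dominated convergence gives $\int_A(g-\bar g-\varepsilon)\,d\mu\ge 0$; but $\int_A(g-\bar g)\,d\mu=0$ since $A\in\mathcal I$, which forces $\mu(A)=0$. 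Letting $\varepsilon\downarrow 0$ gives $\limsup_n\tfrac1n S_n g\le\bar g$ a.e., and the same applied to $-g$ (with $\overline{-g}=-\bar g$) gives $\liminf_n\tfrac1n S_n g\ge\bar g$ a.e., so $\tfrac1n S_n g\to\bar g$ $\mu$-a.e. Finally, ergodicity says every $T$-invariant set has measure $0$ or $1$, so $\mathcal I$ is $\mu$-trivial and $\bar g=\mathbb E[g]=\int_K g\,d\mu$ a.e.; together with the reduction of the first paragraph this yields $\frac{1}{2n+1}\sum_{l=-n}^{n}f\circ T^{l}\to\int_K f\,d\mu$ a.e., as claimed.

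The main obstacle is the maximal ergodic theorem of the second paragraph: establishing the pointwise inequality $M_N g\le g+(M_N g)\circ T$ on $\{M_N g>0\}$ and then bookkeeping the integrals correctly (the use of $T$-invariance of $\mu$ and of the vanishing of $M_N g$ off $E$) is the one genuinely nontrivial step. A secondary point needing care is the legitimacy of the truncation $g'=(g-\bar g-\varepsilon)\mathbf{1}_A$ — namely that $A\in\mathcal I$, that $S_n g'=\mathbf{1}_A(S_n g-n\bar g-n\varepsilon)$, and that $\{M_N g'>0\}$ increases to $A$ — all of which rest only on $A$ being $T$-invariant and $\bar g$ being $\mathcal I$-measurable. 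The reduction to one-sided averages is routine but should be written out, since the theorem is stated for the symmetric window $\sum_{l=-n}^{n}$.
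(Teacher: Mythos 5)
Your proof is correct, but there is nothing in the paper to compare it against: Birkhoff's ergodic theorem is quoted there as a classical result, stated without proof and used only as a black box in the proof of the (relative) Shannon--McMillan theorem. What you supply is the standard self-contained argument: the reduction of the symmetric window $\frac{1}{2n+1}\sum_{l=-n}^{n}$ to forward Ces\`aro averages for $T$ and $T^{-1}$ (both measure-preserving and ergodic, with the same invariant $\sigma$-algebra), Garsia's telescoping proof of the maximal ergodic theorem, the truncation $g'=(g-\bar g-\varepsilon)\mathbf{1}_A$ on the invariant set $A$ to force $\mu(A)=0$, and finally the identification $\bar g=\int_K g\,d\mu$ from ergodicity. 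The bookkeeping in the maximal inequality is right (the positivity $M_Ng\ge 0$ is what lets you pass from $\int_E (M_Ng)\circ T$ to $\int_K (M_Ng)\circ T$, and $M_Ng=0$ off $E$ closes the estimate), and your explicit flagging of the implicit hypothesis $f\in L^1(\mu)$ is appropriate, since the paper's statement says only ``measurable.'' Two cosmetic remarks: since the theorem is stated only in the ergodic case, you could skip the conditional expectation entirely and run the same argument with $\bar g$ replaced by the constant $\int_K g\,d\mu$, which makes $A$ exactly $T$-invariant without worrying about choosing an invariant version of $\mathbb{E}[g\mid\mathcal I]$; and in the non-ergodic detour as written, one should either pick an exactly invariant version of $\bar g$ or note that all invariance statements hold modulo null sets, a standard and harmless point.
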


The Shannon-McMillan theorem can be generalized to a relative version (see, e.g., Ref.~\cite{Algoet1988}), if $\widehat{P}$ is ergodic and $\widehat{\Sigma}$ is Markovian (of finite order).
Here,  $\widehat{\Sigma}$ is Markovian of order $m$, if  for any $n$ the conditional probability satisfies $\sigma (x_n | x_{n-1}, x_{n-2}, \cdots ) = \sigma (x_n | x_{n-1}, \cdots, x_{n-m})$ almost surely (e.g., Ref.~\cite{Doob1990}).

\begin{proposition}[Classical relative Shannon-McMillan theorem]
If $\widehat{P}$ is translation invariant  and ergodic and $\widehat{\Sigma}$ is translation invariant and Markovian (of finite order), then by sampling $X_n$ according to $\rho_n$, $\frac{1}{n} \ln \frac{\rho_n (X_n) }{\sigma_n (X_n)}$ converges to $S_1 (\widehat{P} \| \widehat{\Sigma})$ almost surely (and thus in probability, implying that $\widehat{P}$ and $\widehat{\Sigma}$ satisfy the relative AEP).
\end{proposition}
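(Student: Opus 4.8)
The plan is to decompose the log-likelihood ratio into two pieces and treat each with an ergodic theorem. Write
\[
\frac{1}{n}\ln\frac{\rho_n(X_n)}{\sigma_n(X_n)} = \frac{1}{n}\ln\rho_n(X_n) - \frac{1}{n}\ln\sigma_n(X_n).
\]
The first term is handled directly by the (ordinary) classical Shannon-McMillan-Breiman theorem quoted in this Appendix: since $\widehat{P}$ is translation invariant and ergodic, sampling $X_n$ according to $\rho_n$ gives $-\frac{1}{n}\ln\rho_n(X_n) \to S_1(\widehat{P})$ almost surely, where $S_1(\widehat{P})$ exists by subadditivity of the Shannon entropy along translation-invariant processes. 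So all the real work is in the second term.

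For the second term I would exploit the Markov structure of $\widehat{\Sigma}$. If $\widehat{\Sigma}$ is Markovian of order $m$ and translation invariant, then for $n=2l+1$ large enough $\sigma_n(X_n)$ factorizes as the probability of the first $m$ symbols times $\prod_k \sigma(x_k\mid x_{k-1},\dots,x_{k-m})$ over the remaining indices, with the conditionals independent of $k$. Hence
\[
-\frac{1}{n}\ln\sigma_n(X_n) = \frac{1}{n}\sum_{k} f(T^k\omega) + O(1/n), \qquad f(\omega):=-\ln\sigma\bigl(x_0(\omega)\mid x_{-1}(\omega),\dots,x_{-m}(\omega)\bigr),
\]
where the $O(1/n)$ collects the initial-block term and the $O(m)$ outermost terms of the sum; these are negligible after normalization by $n$ because $f\in L^1(\mu)$ (and, applying Birkhoff to $T^{-1}$, $\frac{1}{n}f(T^{-l}\omega)\to0$ a.s.). Since $\widehat{P}$ is translation invariant and ergodic, the Birkhoff ergodic theorem gives, almost surely,
\[
-\frac{1}{n}\ln\sigma_n(X_n) \longrightarrow \int d\mu(\omega)\, f(\omega) = -\sum \rho(x_0,x_{-1},\dots,x_{-m})\ln\sigma(x_0\mid x_{-1},\dots,x_{-m}),
\]
which is the cross-entropy rate of $\widehat{P}$ relative to $\widehat{\Sigma}$. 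Integrability of $f$ uses that $B$ is finite and that $f<\infty$ $\mu$-a.e., the latter being exactly the standing assumption ${\rm supp}[\rho]\subseteq{\rm supp}[\sigma]$ imposed throughout the book for divergence-like quantities.

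Combining the two limits, $\frac{1}{n}\ln\frac{\rho_n(X_n)}{\sigma_n(X_n)}$ converges $\rho_n$-almost surely to $-S_1(\widehat{P})$ plus the cross-entropy rate; by definition this common value is the KL divergence rate $S_1(\widehat{P}\|\widehat{\Sigma})$, which in particular exists. Almost sure convergence implies convergence in probability, so $\widehat{P}$ and $\widehat{\Sigma}$ satisfy the relative AEP in the sense of Definition~\ref{def_rAEP}. The step I expect to be the main obstacle is the bookkeeping around the Markov factorization and integrability: making precise that $\sigma_n$ factorizes through the order-$m$ conditionals for all large $n$ with the boundary block handled correctly, and verifying that $f$ is genuinely $\mu$-integrable with a negligible boundary contribution — both of which rest on finiteness of $B$ and the support assumption. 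Everything else is a direct appeal to the Shannon-McMillan-Breiman theorem and the Birkhoff ergodic theorem, both already stated in this Appendix.
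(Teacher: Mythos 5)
Your proposal is correct and follows essentially the same route as the paper's proof: the $\ln\rho_n$ term is handled by the classical Shannon-McMillan theorem, and the $\ln\sigma_n$ term is decomposed through the order-$m$ stationary conditionals (up to a negligible boundary/surface term) and then treated with the Birkhoff ergodic theorem for the ergodic measure of $\widehat{P}$. The extra bookkeeping you flag (integrability of $f$ via finiteness of $B$ and the support assumption, and the $O(1/n)$ boundary block) is exactly the detail the paper leaves implicit, so no gap remains.
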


\begin{proof}
We only need to consider the term of $\ln \sigma_n (X_n)$. 
From the assumption of Markovian, this term is decomposed into the sum of the logarithm of conditional probabilities almost surely, except for the surface term that does not contribute in the limit.
We can then apply the Birkhoff's ergodic theorem.
$\Box$
\end{proof}



Now, the following is the classical counterpart of Theorem~\ref{thm:main_ergodic}.

\begin{proposition}
If $\widehat{P}$ and $\widehat{\Sigma}$ satisfy the relative AEP, we have
\begin{equation}
\underline{S}(\widehat{P} \| \widehat{\Sigma}) = \overline{S} (\widehat{P} \| \widehat{\Sigma})  = S_1 (\widehat{P} \| \widehat{\Sigma}).
\label{c_spectral_KL}
\end{equation}
\label{AEP_spectral}
\end{proposition}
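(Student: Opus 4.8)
The plan is to read off the collapse directly from the relative typical set furnished by Proposition~\ref{prop_c_AEP}. Since $\underline{S}(\widehat{P}\|\widehat{\Sigma})\le\overline{S}(\widehat{P}\|\widehat{\Sigma})$ holds in general (the classical instance of the inequality of~\cite{Bowen_Datta}) and $S_1(\widehat{P}\|\widehat{\Sigma})$ exists by hypothesis, it suffices to establish the two one-sided estimates
\[
\overline{S}(\widehat{P}\|\widehat{\Sigma})\le S_1(\widehat{P}\|\widehat{\Sigma}),\qquad \underline{S}(\widehat{P}\|\widehat{\Sigma})\ge S_1(\widehat{P}\|\widehat{\Sigma}),
\]
which then squeeze both spectral rates onto $S_1(\widehat{P}\|\widehat{\Sigma})$, i.e.\ (\ref{c_spectral_KL}). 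Both estimates are obtained from a single trial distribution: $\rho_n$ truncated to the relative typical set and renormalized.

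Concretely, fix $0<\varepsilon<1$, write $s:=S_1(\widehat{P}\|\widehat{\Sigma})$, and for large $n$ let $Q_n^\varepsilon$ be the relative typical set of Proposition~\ref{prop_c_AEP}. Put $\tilde\rho_n:=\rho_n\,\mathbf 1_{Q_n^\varepsilon}$ (subnormalized) and $\hat\rho_n:=\tilde\rho_n/\rho_n[Q_n^\varepsilon]$. Property (b) gives $\rho_n[Q_n^\varepsilon]>1-\varepsilon$, hence $\|\tilde\rho_n-\rho_n\|_1=1-\rho_n[Q_n^\varepsilon]<\varepsilon$, so $\hat\rho_n\in B^\varepsilon(\rho_n)$ (one may equally work with subnormalized smoothing and invoke Lemma~\ref{subnormalized_prop}). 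For the upper bound, property (a) yields $\tilde\rho_n\le e^{n(s+\varepsilon)}\sigma_n$, whence $S_\infty(\hat\rho_n\|\sigma_n)\le n(s+\varepsilon)-\ln\rho_n[Q_n^\varepsilon]\le n(s+\varepsilon)-\ln(1-\varepsilon)$, so $\tfrac1n S_\infty^\varepsilon(\rho_n\|\sigma_n)\le s+\varepsilon-\tfrac1n\ln(1-\varepsilon)$; taking $\limsup_{n\to\infty}$ and then $\varepsilon\to+0$ gives $\overline{S}\le s$. For the lower bound, $\mathrm{supp}[\hat\rho_n]\subseteq Q_n^\varepsilon$, so property (c) gives $\sum_{X_n:\hat\rho_n(X_n)>0}\sigma_n(X_n)\le\sigma_n[Q_n^\varepsilon]\le e^{-n(s-\varepsilon)}$, i.e.\ $S_0(\hat\rho_n\|\sigma_n)\ge n(s-\varepsilon)$; since $\hat\rho_n$ is admissible in the maximization defining $S_0^\varepsilon(\rho_n\|\sigma_n)$, we get $\tfrac1n S_0^\varepsilon(\rho_n\|\sigma_n)\ge s-\varepsilon$, and taking $\liminf_{n\to\infty}$ then $\varepsilon\to+0$ gives $\underline{S}\ge s$. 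Combining with $\underline{S}\le\overline{S}$ finishes the proof.

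This is essentially bookkeeping once the relative typical set is available, so I do not expect a genuine obstacle; the content is entirely carried by Proposition~\ref{prop_c_AEP}, which itself rests on the relative AEP hypothesis. The only points deserving care are (i) the passage between normalized and subnormalized smoothing, which costs only a harmless $-\ln(1-\varepsilon)$ that disappears after dividing by $n$, and (ii) the order of limits — one must send $n\to\infty$ first (the typical-set bounds hold only for $n$ large, for each fixed $\varepsilon$) and $\varepsilon\to+0$ afterwards, exactly as in Definition~\ref{def:spectral_divergence}. (As an aside, one could instead argue directly from the convergence-in-probability formulation of Definition~\ref{def_rAEP}, but routing through the typical-set characterization of Proposition~\ref{prop_c_AEP} is cleanest.)
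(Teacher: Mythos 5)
Your proof is correct, but it follows a genuinely different route from the one the paper writes down. You construct the smoothed state explicitly --- truncate $\rho_n$ to the relative typical set $Q_n^\varepsilon$ of Proposition~\ref{prop_c_AEP} and renormalize --- and read off $S_\infty(\hat\rho_n\|\sigma_n)\le n(s+\varepsilon)-\ln(1-\varepsilon)$ from property (a) and $S_0(\hat\rho_n\|\sigma_n)\ge n(s-\varepsilon)$ from property (c), then take the limits in the order prescribed by Definition~\ref{def:spectral_divergence}. This is essentially the ``direct proof from the definition of the information spectrum'' that the paper explicitly mentions but does not reproduce (it points to Proposition 21 of Ref.~\cite{Sagawa2019}). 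The paper instead proves the classical Stein's lemma, $S_{\rm H}^\eta(\widehat{P}\|\widehat{\Sigma})=S_1(\widehat{P}\|\widehat{\Sigma})$ for all $0<\eta<1$, by bounding the hypothesis-testing divergence from both sides with the same typical set, and then transfers this to the spectral rates via Proposition~\ref{hypothesis_spectral}, which in turn rests on the equivalence between $S_{\rm H}^\eta$ and the smooth $0/\infty$-divergences (Proposition~\ref{Faist_prop}). Your route is more self-contained and elementary at the level of the smooth divergences, with only the harmless $-\ln(1-\varepsilon)$ corrections; the paper's route additionally establishes the classical Stein's lemma itself, which is of independent interest and mirrors the quantum discussion of Appendix~\ref{appx:hypothesis_testing}, at the cost of invoking that machinery. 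One small bookkeeping point in your write-up: the $1$-norm distance you quote, $\|\tilde\rho_n-\rho_n\|_1=1-\rho_n[Q_n^\varepsilon]$, refers to the subnormalized truncation; after renormalization one gets $\|\hat\rho_n-\rho_n\|_1=2(1-\rho_n[Q_n^\varepsilon])$, which the factor $1/2$ in the trace distance absorbs, so indeed $D(\hat\rho_n,\rho_n)=1-\rho_n[Q_n^\varepsilon]<\varepsilon$ and $\hat\rho_n\in B^\varepsilon(\rho_n)$ --- the conclusion stands, but the step deserves to be spelled out.
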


We remark that the Gibbs state of a local and translation-invariant Hamiltonian in one dimension is translation invariant and Markovian (of finite order).
This is essentially the  Hammersley-Clifford  theorem~\cite{Hammersley1971} (see also Ref.~\cite{Kato2016}), and the case of infinite systems (as considered here) is proved in Lemma 19 of Ref.~\cite{Sagawa2019}.

We can directly prove Proposition~\ref{AEP_spectral} from the definition of the information spectrum (see Proposition 21 of Ref.~\cite{Sagawa2019}).
Here, we instead present a proof of the classical Stein's lemma, from which Proposition~\ref{AEP_spectral} follows by using Proposition \ref{hypothesis_spectral}.
The following proof is a slight modification of Theorem 11.8.3 of Ref.~\cite{Cover_Thomas}.

\begin{proposition}[Classical Stein's lemma]
If $\widehat{P}$ and $\widehat{\Sigma}$ satisfy the relative AEP, then for any $0 < \eta < 1$,
\begin{equation}
S_{\rm H}^\eta  (\widehat{P} \| \widehat{\Sigma})  = S_1  (\widehat{P} \| \widehat{\Sigma})
\label{c_Stein_lemma}.
\end{equation}
\end{proposition}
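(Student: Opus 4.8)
The plan is to prove the classical Stein's lemma \eqref{c_Stein_lemma} directly from the relative AEP, following the two-sided estimate of Theorem~11.8.3 of Ref.~\cite{Cover_Thomas}. Fix $0 < \eta < 1$ and $\varepsilon > 0$ with $\varepsilon < \min\{\eta,\,1-\eta\}$, and let $Q_n^\varepsilon \subset \{ X_n \}$ be the relative typical set furnished by Proposition~\ref{prop_c_AEP} for all sufficiently large $n$; I will show that the indicator test of $Q_n^\varepsilon$ is both admissible and asymptotically optimal.

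\textbf{Achievability.} First I would take the test $Q_n := \mathbf{1}_{Q_n^\varepsilon}$. By property (b) of Proposition~\ref{prop_c_AEP}, $\rho_n[Q_n^\varepsilon] > 1-\varepsilon \geq \eta$, so this test is admissible in the minimization defining $S_{\rm H}^\eta(\rho_n\|\sigma_n)$, and property (c) gives $\sigma_n[Q_n^\varepsilon] \leq e^{-n(S_1(\widehat{P}\|\widehat{\Sigma})-\varepsilon)}$. Hence
\begin{equation}
e^{-S_{\rm H}^\eta(\rho_n\|\sigma_n)} \;\leq\; \frac{1}{\eta}\,\sigma_n[Q_n^\varepsilon] \;\leq\; \frac{1}{\eta}\,e^{-n(S_1(\widehat{P}\|\widehat{\Sigma})-\varepsilon)},
\end{equation}
and taking $\frac{1}{n}\ln$ followed by $n\to\infty$ yields $\liminf_{n\to\infty}\frac{1}{n}S_{\rm H}^\eta(\rho_n\|\sigma_n) \geq S_1(\widehat{P}\|\widehat{\Sigma}) - \varepsilon$.

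\textbf{Optimality.} Next I would show no test can asymptotically do better. For arbitrary $0\leq Q_n\leq 1$ with ${\rm tr}[\rho_n Q_n]\geq\eta$, restrict the defining sum to $Q_n^\varepsilon$ and use property (a) in the form $\sigma_n(X_n)\geq \rho_n(X_n)e^{-n(S_1(\widehat{P}\|\widehat{\Sigma})+\varepsilon)}$ for $X_n\in Q_n^\varepsilon$, together with $\sum_{X_n\in Q_n^\varepsilon}\rho_n(X_n)Q_n(X_n) \geq {\rm tr}[\rho_n Q_n] - (1-\rho_n[Q_n^\varepsilon]) \geq \eta-\varepsilon$ (the last step using $Q_n\leq 1$ and property (b)). This gives ${\rm tr}[\sigma_n Q_n] \geq (\eta-\varepsilon)\,e^{-n(S_1(\widehat{P}\|\widehat{\Sigma})+\varepsilon)}$; minimizing over $Q_n$ and taking $\frac{1}{n}\ln$, $n\to\infty$ gives $\limsup_{n\to\infty}\frac{1}{n}S_{\rm H}^\eta(\rho_n\|\sigma_n) \leq S_1(\widehat{P}\|\widehat{\Sigma}) + \varepsilon$. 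Combining the two bounds and letting $\varepsilon\to 0$ forces $\liminf_n = \limsup_n = S_1(\widehat{P}\|\widehat{\Sigma})$, so the limit defining $S_{\rm H}^\eta(\widehat{P}\|\widehat{\Sigma})$ exists and equals $S_1(\widehat{P}\|\widehat{\Sigma})$, which is \eqref{c_Stein_lemma}. Then Proposition~\ref{AEP_spectral}, i.e.\ Eq.~\eqref{c_spectral_KL}, follows immediately via the classical analogue of Proposition~\ref{hypothesis_spectral} relating the Stein's lemma to the collapse of the spectral divergence rates.

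\textbf{Main obstacle.} I expect the only place needing genuine care is the optimality direction: one must invoke the relative typical set on both the $\rho_n$- and $\sigma_n$-sides (the plain AEP of $\widehat P$ alone does not suffice), and keep the constraint $\varepsilon<\min\{\eta,1-\eta\}$ in view—harmless, since $\eta$ is fixed while $\varepsilon\to 0$. Everything else is routine once the relative Shannon--McMillan estimates packaged in Proposition~\ref{prop_c_AEP} are in hand.
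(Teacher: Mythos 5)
Your proof is correct and follows essentially the same route as the paper's: achievability by using the relative typical set $Q_n^\varepsilon$ itself as the test (properties (b) and (c) of Proposition~\ref{prop_c_AEP}), and optimality by intersecting an arbitrary admissible test with $Q_n^\varepsilon$ and applying properties (a) and (b), exactly as in the Cover--Thomas argument the paper adapts. The only (harmless) differences are cosmetic: you allow randomized tests $0\leq Q_n\leq 1$ and phrase the conclusion via $\liminf/\limsup$ before letting $\varepsilon\to 0$, which if anything makes the existence of the limit slightly more explicit.
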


\begin{proof}
We first note that the classical hypothesis testing divergence is given by
\begin{equation}
S_{\rm H}^\eta (\rho_n \| \sigma_n ) := - \ln \left( \frac{1}{\eta} \min_{Q_n : \rho_n [Q_n] \geq \eta} \sigma_n [ Q_n ] \right).
\label{classical_hypothesis_divergence}
\end{equation}
Take $0< \varepsilon < 1- \eta$.
Let $Q_n^\varepsilon$ be defined in Proposition \ref{prop_c_AEP}.
From Proposition \ref{prop_c_AEP} (b), we have $\rho_n [ Q_n^\varepsilon ] > 1- \varepsilon > \eta$, and thus $Q_n^\varepsilon$ is a candidate for the minimization in Eq.~(\ref{classical_hypothesis_divergence}).
From this and Proposition \ref{prop_c_AEP} (c), we obtain
\begin{equation}
S_{\rm H}^\eta (\rho_n \| \sigma_n ) \geq - \ln \left( \frac{1}{\eta} \sigma_n [Q_n^\varepsilon ] \right) \geq n\left( S_1 (\widehat{P} \| \widehat{\Sigma} )  - \varepsilon \right) + \ln \eta.
\end{equation}
By taking the limit,
$S_{\rm H}^\eta (\widehat{P} \| \widehat{\Sigma} ) := \lim_{n \to \infty} \frac{1}{n} S_{\rm H}^\eta (\rho_n \| \sigma_n ) \geq S_1 (\widehat{P} \| \widehat{\Sigma} ) - \varepsilon$.
Since $\varepsilon > 0$ can be taken arbitrarily small, we have 
\begin{equation}
S_{\rm H}^\eta (\widehat{P} \| \widehat{\Sigma} ) \geq S_1 (\widehat{P} \| \widehat{\Sigma} ).
\label{c_Stein1}
\end{equation}
On the other hand, let $Q_n$ be the optimal candidate for the minimization in Eq.~(\ref{classical_hypothesis_divergence}) satisfying $\rho_n [Q_n] \geq \eta$. For any $0 < \varepsilon < \eta$, we can show in the same manner as Lemma 11.8.1 of Ref.~\cite{Cover_Thomas} that
\begin{eqnarray}
\sigma_n [Q_n ]  &\geq& \sigma_n [ Q_n \cap Q_n^\varepsilon ]  \geq \rho_n [ Q_n \cap Q_n^\varepsilon  ] \exp \left( - n (  S_1 (\widehat{P} \| \widehat{\Sigma}  ) + \varepsilon ) \right) \\
&\geq&  \left(  \rho_n [ Q_n] - (1 -  \rho_n [Q_n^\varepsilon  ]  ) \right) \exp \left( - n (  S_1 (\widehat{P} \| \widehat{\Sigma}  ) + \varepsilon ) \right) \\
&>& (\eta - \varepsilon ) \exp \left( - n (  S_1 (\widehat{P} \| \widehat{\Sigma}  ) + \varepsilon ) \right),
\end{eqnarray}
where $Q_n^\varepsilon$ is defined in  Proposition \ref{prop_c_AEP} and we used (a) and (b) there.
Therefore,
\begin{equation}
S_{\rm H}^\eta (\rho_n \| \sigma_n ) = - \ln \left( \frac{1}{\eta} \sigma_n [Q_n ] \right) <   n \left(  S_1 (\widehat{P} \| \widehat{\Sigma}  ) + \varepsilon \right) + \ln \frac{\eta}{\eta - \varepsilon}.
\end{equation}
By taking the limit,
$S_{\rm H}^\eta (\widehat{P} \| \widehat{\Sigma} ) \leq S_1 (\widehat{P} \| \widehat{\Sigma} ) + \varepsilon$.
Since $\varepsilon > 0$ can be taken  arbitrarily small, we have
\begin{equation}
S_{\rm H}^\eta (\widehat{P} \| \widehat{\Sigma} ) \leq S_1 (\widehat{P} \| \widehat{\Sigma} ).
\label{c_Stein2}
\end{equation}
From inequalities (\ref{c_Stein1}) and (\ref{c_Stein2}), we obtain Eq.~(\ref{c_Stein_lemma}).
$\Box$
\end{proof}


\end{document}